\newtheorem{theorem}{Theorem}[chapter]
\newtheorem{definition}[theorem]{Definition}
\newtheorem{corollary}[theorem]{Corollary}
\newtheorem{lemma}[theorem]{Lemma}
\newtheorem{remark}[theorem]{Remark}
\newtheorem{example}[theorem]{Example}
\renewcommand{\theequation}{\arabic{chapter}.\arabic{equation}}
\renewcommand{\thefigure}{\arabic{chapter}.\arabic{figure}}
\def\qed{\hfill $\Box$}
\newcommand{\A}[4]{
  \left[
    \begin{array}{cc}     
        #1 & #2 \\        
        #3 & #4
    \end{array}
  \right]}
\newcommand{\AH}[2]{
  \left[
    \begin{array}{cc}    
        #1 & #2          
    \end{array}
  \right]}
\newcommand{\AV}[2]{
  \left[
    \begin{array}{c}     
        #1 \\            
        #2
    \end{array}
  \right]}
\newcommand{\AVl}[2]{
  \left[
    \begin{array}{l}     
        #1 \\            
        #2
    \end{array}
  \right]}
\newcommand{\dtf}[4]{
  \left[
\renewcommand{\arraystretch}{1.2}
  \begin{array}{c|c}
      #1 & #2 \\ \hline
      #3 & #4
  \end{array}
\renewcommand{\arraystretch}{1}
  \right]}
\newcommand{\nA}[4]{
    \begin{array}{cc}     
        #1 & #2 \\        
        #3 & #4
    \end{array}}
\newcommand{\nAV}[2]{
    \begin{array}{c}     
        #1 \\            
        #2
    \end{array}}
\newcommand{\nAH}[2]{
    \begin{array}{cc}    
        #1 & #2          
    \end{array}}
\newcommand{\B}[9]{
  \left[
    \begin{array}{ccc}    
        #1 & #2 & #3 \\   
        #4 & #5 & #6 \\   
        #7 & #8 & #9
    \end{array}
  \right]}
\newcommand{\BV}[3]{
  \left[
    \begin{array}{c}      
        #1 \\             
        #2 \\             
        #3 
    \end{array}
  \right]}
\newcommand{\BH}[3]{
  \left[
    \begin{array}{ccc}    
        #1 & #2 & #3 
    \end{array}
  \right]}
\newcommand{\kakkon}[2]{
  \left\{
    \begin{array}{l}    
        #1 \\   
        #2
    \end{array}
  \right.}
\newcommand{\maths}[1]{\mathcal{#1}}
\newcommand{\im}{{\rm i}}
\newcommand{\ex}{\, {\rm e}}
\newcommand{\uu}[1]{^{\, \scalebox{0.8}{$#1$}}}
\newcommand{\dd}[1]{_{\scalebox{0.8}{$#1$}}}
\newcommand{\urm}[1]{^{\scalebox{0.8}{\textrm{#1}}}}
\newcommand{\drm}[1]{_{\scalebox{0.8}{\textrm{#1}}}}
\newcommand{\mm}[1]{{\ooalign{\(#1\)\cr\hidewidth\(\diagdown\)\hidewidth\cr}}}
\newcommand{\trans}{^{\scalebox{0.62}{\textrm{T}}}}
\newcommand{\transinv}{^{-\scalebox{0.62}{\textrm{T}}}}
\newcommand{\inv}{^{\scalebox{0.65}{$-1$}}}
\newcommand{\ga}{\maths{A}} 
\newcommand{\gel}{A}  
\newcommand{\win}{\theta} 
\newcommand{\step}{\bm{1}} 
\newcommand{\lie}{G} 
\newcommand{\qu}{q} 
\newcommand{\toqu}{R} 
\newcommand{\pole}{\maths{P}}
\newcommand{\zero}{\maths{Z}}
\newcommand{\tf}{P}  
\newcommand{\pg}{Y} 
\newcommand{\ipg}{\im \pg} 
\newcommand{\sel}{K}  
\newcommand{\isel}{\im K}
\newcommand{\mas}{M}  
\newcommand{\ves}{v}  
\newcommand{\lag}{\maths{L}} 
\newcommand{\ham}{\maths{H}} 
\newcommand{\hil}{\mathscr{H}} 
\newcommand{\uni}{\mathrm{U}} 
\newcommand{\bm}[1]{\mbox{\boldmath$#1$}}
\newcommand{\ffrac}[2]{{\displaystyle \frac{#1}{#2}}}
\newcommand{\intt}{{\displaystyle \int}}
\newcommand{\mzero}[1]{\langle 0|{#1}|0\rangle}
\newcommand{\bra}[1]{\langle{#1}|}
\newcommand{\ket}[1]{|{#1}\rangle}
\newcommand{\dgg}{^{\dag}}
\newcommand{\ddgg}{^{\ddagger}}
\newcommand{\simm}{^{\scalebox{0.8}{$\sim$}}}
\newcommand{\nn}{\nonumber}
\newcommand{\Tr}{{\rm Tr} \ }
\newcommand{\diag}{\textrm{diag} }
\newcommand{\pTr}[1]{{\rm Tr}\dd{#1} \ }
\newcommand{\rea}{{\rm Re}}
\newcommand{\ima}{{\rm Im}}
\newcommand{\T}{{\rm T}}
\newcommand{\hc}{{\rm h.c.}}
\newcommand{\chain}[1]{\maths{C}(#1)}
\newcommand{\dchain}[1]{\widetilde{\maths{C}}(#1)}
\newcommand{\homo}{\maths{F}}
\newcommand{\dhomo}{\widetilde{\maths{F}}}
\newcommand{\be}[1]{\begin{enumerate}[#1]}
\newcommand{\bee}{\begin{enumerate}}
\newcommand{\bl}{\begin{list}{}{}}
\newcommand{\el}{\end{list}}
\newcommand{\ee}{\end{enumerate}}
\newcommand{\fsh}[1]{{\ooalign{\(#1\)\cr\hidewidth\(/\)\hidewidth\cr}}}
\newcommand{\wick}[2]{ \bm{\bigl[} #1  \, \bm{\bigl|} \,  #2 \bm{\bigr]} } 
\newcommand{\en}[1]{\small$ #1 $\normalsize}
  \def\widebar{\accentset{{\cc@style\underline{\mskip10mu}}}}
  \def\Widebar{\accentset{{\cc@style\underline{\mskip8mu}}}}
\begin{document}
\begin{titlepage}
\vspace*{18mm}

\hspace{7mm}
{\Large \textsf{Yanagisawa M.}} 

\vspace{5mm}
\begin{picture}(500,10)(60,0)
\put(77,5){\line(1,0){445}}
\end{picture}

\vspace{25mm}

\hspace{7mm}
\scalebox{1}[1.1]{
{\LARGE{\textbf{\textsf{Systems Theory of Classical and Quantum Fields}}}}
}

\vspace{3mm}
\hspace{79mm}
\scalebox{1}[1.1]{
{\LARGE{\textbf{\textsf{and}}}}
}

\vspace{3mm}
\hspace{8mm}
\scalebox{1}[1.1]{
{\LARGE{\textbf{\textsf{Applications to Quantum Computing and Control}}}}
}

\vspace{45mm}
\begin{figure}[H]
\hspace{-5mm}
\includegraphics[keepaspectratio,width=200mm]{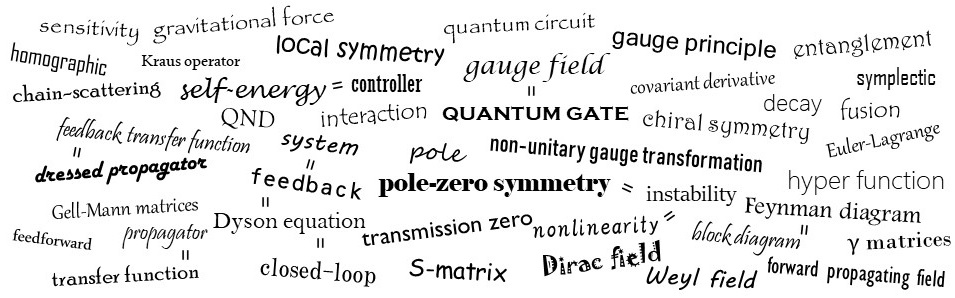} 
\end{figure}

\newpage

\thispagestyle{empty}
\vspace*{18mm}

\hspace{7mm}
{\Large \textsf{Yanagisawa M.}}\\

\hspace{7mm}
{\large \textsf{yanagi(at)cds.caltech.edu}}


\begin{textblock}{6.5}[0,0](9,17)
\noindent
Special thanks to Sachiko Yanagisawa and Akihito Osawa.
This book wouldn't have been possible without their support
during this difficult time of the COVID-19 pandemic.
\hspace{7mm} Aug. 2020

\end{textblock}

\end{titlepage}

\clearpage

\pagenumbering{roman}


\tableofcontents
\clearpage
\setcounter{page}{0}
\pagenumbering{arabic}

\pagestyle{fancy}
\renewcommand{\chaptermark}[1]{\markboth{\MakeUppercase{\chaptername}\ 
\thechapter:\ #1}{}}
\renewcommand{\sectionmark}[1]{\markright{\thesection\ \hspace{.8mm} #1}}

\chapter{Outline: Basic ideas and notation}
\label{chap:1}
\thispagestyle{fancy}

\section{Basic ideas}

We explore a field theoretical approach to 
quantum computing and control in this book.
Readers are not required to have a background of 
systems theory, field theory and related applications 
such as quantum computing beyond the elementary level.
Some basic preliminaries are reviewed in 
Chapters \ref{ctqm} - \ref{chap:gauge}.
In this chapter, 
we briefly explain the ideas of this book,
especially 
a relationship between quantum gates and gauge transformations.
Let us start with the following question:

\subsection{What is System?}
\label{sec:what}

The word \textit{system} is often used without a clear definition.
This sometimes causes confusion and misunderstanding,
especially when we talk about feedback.
Let us see a couple of examples
to think about this problem.
\index{system}

\begin{wrapfigure}[0]{r}[53mm]{49mm}
\centering
\vspace{-3mm}
\includegraphics[keepaspectratio,width=35mm]{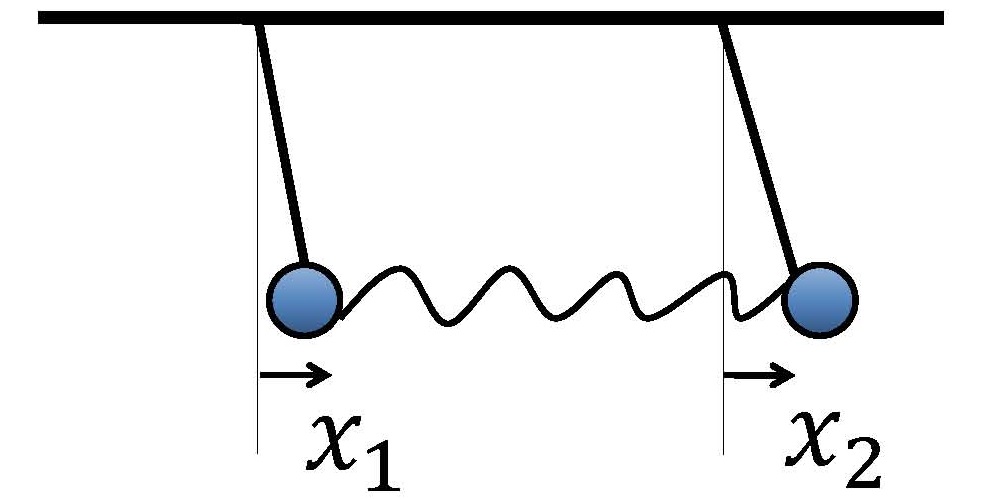} 
\caption{
\small
Two pendulums connected by a spring.
\normalsize
}
\label{fig-pendulums}
\end{wrapfigure}

Consider two identical pendulums connected by a spring
as in Figure \ref{fig-pendulums}.
What do we call a system in this case?
The definition of a system depends on
what we focus on for equations of motion.
We are usually interested in the behavior of the two weights,
so it is reasonable to regard each weight as a system
and write equations of motion for \en{x_1, x_2} as
\small\begin{align}
 \frac{d}{dt}  \AV{x_1}{\dot{x}_1}
&=
 \cdots, 
\hspace{9mm}
 \frac{d}{dt}  \AV{x_2}{\dot{x}_2}
=
 \cdots. 
\label{pen-el}
\end{align}\normalsize

There is another description of the coupled pendulums.
The two weights are efficiently described by new variables
\small\begin{align}
 x
&\equiv
 x_2+x_1,
\hspace{13mm}
 X 
\equiv
 x_2-x_1.
\end{align}\normalsize
In this case, 
(\ref{pen-el}) is decoupled as
\small\begin{align}
 \frac{d}{dt}\AV{x}{\dot{x}}
&=
 f(x,\dot{x}),
\hspace{9mm}
 \frac{d}{dt}\AV{X}{\dot{X}}
=
 F(X,\dot{X}).
\label{de-pen}
\end{align}\normalsize
It is well known that the eigenfrequency of \en{ x } 
is the same as a single pendulum, 
hence \en{ x } describes oscillations without coupling.
The effect of the spring appears in \en{ X }.
Obviously, 
we are more interested in the second equation of (\ref{de-pen}),
i.e.,
the interaction between the two weights.
In this case, 
the interaction is regarded as a system.

\newpage

What we call a system
varies with
what we focus on for equations of motion.
Basically the first and second descriptions are equivalent.
However, the second one is more useful
when we consider applications to quantum computing and feedback.
Next, we explore this perspective in more detail.

\subsection{What is System? (from a field theoretical perspective)}

Let us consider electron-electron scattering:
\vspace{0mm}
\begin{figure}[H]
\centering 
\includegraphics[keepaspectratio,width=45mm]{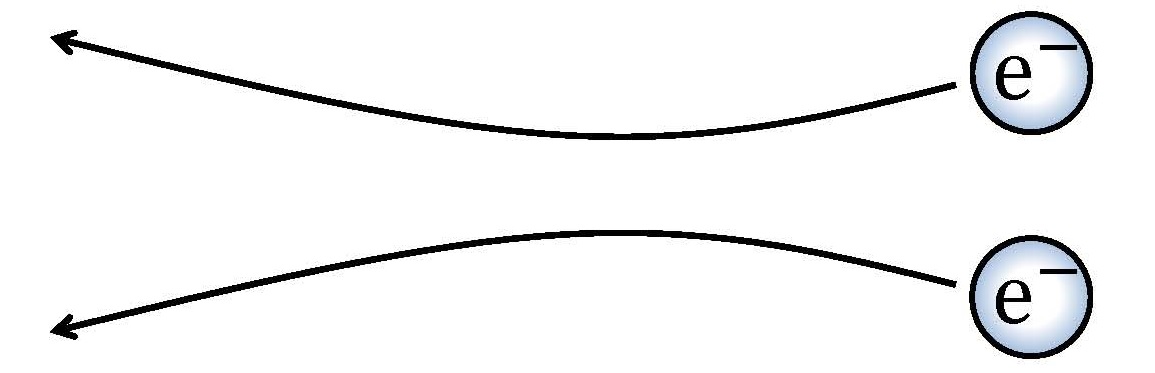} 
\end{figure}
\vspace{-1mm}

\noindent
What do we call a system in this picture?
Here two electrons propagating in free space come together, 
interact with each other 
and propagate away into free space again.
For weakly interacting electrons,
this scattering process is similar to the coupled pendulums.
If we follow the first interpretation of the coupled pendulums,
each electron is regarded as a system.

The second interpretation is different.
This can be easily seen from 
the corresponding Feynman diagram 
(the \textit{t}-channel process):\index{Feynman diagram}

\vspace{0mm}
\begin{figure}[H]
\centering
\includegraphics[keepaspectratio,width=55mm]{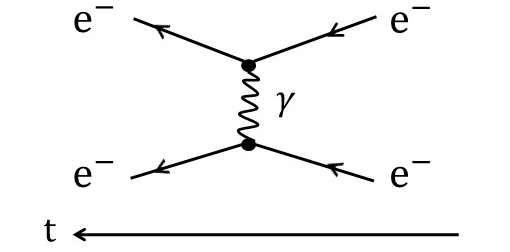}
\end{figure}
\vspace{-1mm}

\noindent
In this diagram, 
the interaction is mediated by the photon represented by the wiggly line,
which corresponds to the spring of the coupled pendulums.
A main purpose of drawing Feynman diagrams 
is to calculate a propagator from the incoming electrons 
to the outgoing ones.
Once we obtain the propagator,
the Feynman diagram can be expressed as follows:

\vspace{-2mm}
\begin{figure}[H]
\centering
\includegraphics[keepaspectratio,width=52mm]{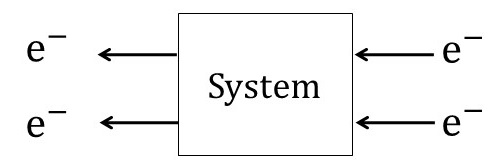}
\end{figure}
\vspace{0mm}

\noindent
In systems theory, 
this is called a block diagram\index{block diagram}
in which
the box represents a \textit{system}
and the two incoming and outgoing arrows are
the input and the output \textit{signals} of the system, 
respectively.

Now it is not difficult to tell 
what is a system in the scattering process
by comparing the Feynman and block diagrams.
A system is the interaction (the photon),
whereas the two electrons are (input/output) signals.\index{signal}
The difference between the system and signals is 
\textit{locality}.\index{locality}
The electrons are described by fields in free space.
On the other hand,
the interaction (the photon) results from 
a local symmetry of the electrons 
according to a gauge theory.
In this interpretation,
a system is defined by a local gauge transformation.
\index{gauge transformation (local)}

\subsection{Gauge theory and quantum gates}

The consideration above 
leads us to another application of the gauge theory.
For instance, 
quantum computing is implemented by successive applications
of \textit{local} operations at different points in spacetime.
This can be described by local gauge transformations.
As an example,
let us consider the concatenation of two QND (quantum non-demolition) 
SUM gates:\index{QND gate}\index{non-demolition}

\vspace{-1mm}
\begin{figure}[H]
\centering
\includegraphics[keepaspectratio,width=53mm]{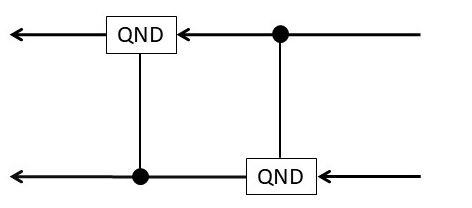}
\end{figure}
\vspace{-2mm}

\noindent
The two qubits/qumodes (the upper and lower lines)
correspond to the electrons 
in the example of electron-electron scattering.
According to the first interpretation,
each qubit is regarded as a system as follows:

\vspace{-1mm}
\begin{figure}[H]
\centering
\includegraphics[keepaspectratio,width=52mm]{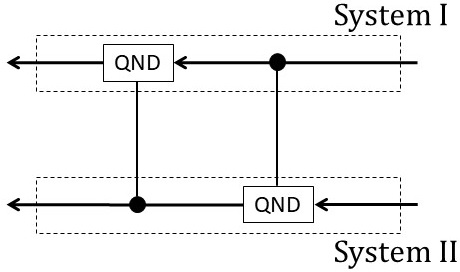}
\end{figure}
\vspace{-1mm}

\noindent
However, our approach is different.
The qubits/qumodes are signals and 
each QND gate is regarded as a system:

\vspace{0mm}
\begin{figure}[H]
\centering
\includegraphics[keepaspectratio,width=53mm]{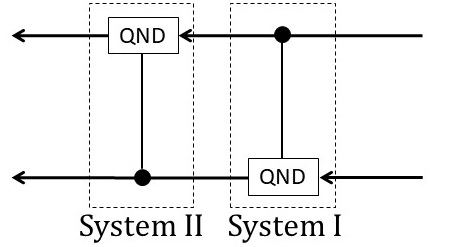}
\end{figure}
\vspace{-0mm}

This idea provides a new way of looking at quantum computing.
In a conventional approach,
a unitary operator (or a Hamiltonian)
is a starting point to describe a computational process.
In our approach, 
an inverse problem is considered,
i.e.,
given a computing process,
we regard it as a local gauge transformation
and 
derive its Lagrangian using a gauge theory.
The advantage of this approach is in 
the formulation of feedback and analysis of nonlinear dynamics,
which will be briefly explained in the rest of this chapter.

There are some issues to resolve
before proceeding on to feedback, 
though.
We always assume that gauge transformations are unitary
in the gauge theory.
This is true for fermions.
However, some bosonic quantum gates are not 
described by unitary gauge transformations,
even though underlying time evolution is unitary.
We need to extend a gauge theory
to non-unitary transformations
to calculate the Lagrangians of such quantum gates.
Next, 
we consider this problem shortly.

\subsection{Non-unitary gauge transformations}
\label{sec:1gq}

Non-unitary gauge transformations can be found 
in simple examples.\index{non-unitary gauge transformation}
Here we consider the QND gate again.\index{QND gate}
For bosons,
the input and output signals are described 
by conjugate observables \en{\{ \xi,\eta\} } 
satisfying canonical quantization
\small\begin{align}
\hspace{5mm}
\Xi(\qu) 
\equiv 
 [\xi,\eta] 
= 
 \im,
\hspace{10mm}
\left(
 \qu
\equiv
 \AV{\xi}{\eta}.
\right)
\label{inquantiz}
\end{align}\normalsize
The input-output relation of the QND gate is expressed as

\vspace{-2mm}
\begin{figure}[H]
\begin{minipage}{0.3\hsize}
\vspace{2mm}
\centering
\includegraphics[keepaspectratio,width=47mm]{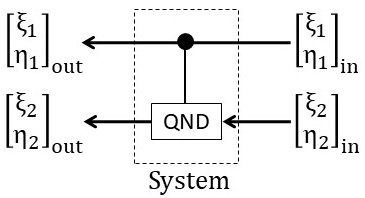}
\end{minipage}
\begin{minipage}{0.7\hsize}
\small\begin{align}
 \left[
\begin{array}{c}
 \xi_1  \\
 \eta_1 \\
 \xi_2  \\
 \eta_2
\end{array}
\right]\drm{out}
&=
\underbrace{
\left[
\begin{array}{c:c}
 I & \nA{0}{}{}{g} \\ \hdashline
 \nA{-g}{}{}{0} & I
\end{array}
\right]
}_{{{\displaystyle \equiv \tf}}}
 \left[
\begin{array}{c}
 \xi_1  \\
 \eta_1 \\
 \xi_2  \\
 \eta_2
\end{array}
\right]\drm{in},
\label{insq}
\end{align}\normalsize
\end{minipage}
\end{figure}
\vspace{-2mm}

\noindent
where \en{ g } is a real constant.
We regard \en{ \tf } as a local gauge transformation.
Obviously, 
\en{ \tf } is not unitary.
However, 
it should be described by unitary time evolution
as far as the QND gate operates on quantum fields.
Using the gauge theory,
we can find the interaction Lagrangian of this gate as
\small\begin{align}
 \lag
= 
 g \xi_1  \eta_2.
\label{sum-e0}
\end{align}\normalsize
This is Hermitian
and the resulting evolution operator 
\en{ \sim \ex\uu{\im\lag} } is unitary.
A rigorous formulation of this idea 
is found in Chapters \ref{chap:gauge} and \ref{chap:boundary}.

It is important to note that 
not all matrices are acceptable 
as non-unitary gauge transformations.
There must be some constraints on the matrix \en{ \tf }.
For instance,
canonical quantization (\ref{inquantiz})
has to be satisfied by both inputs and outputs
because they are traveling fields in free space.
Let us express (\ref{insq}) as
\en{ \qu\drm{out}=\tf \qu\drm{in} } by abuse of notation.
Then \en{ \tf } has to satisfy
\small\begin{align}
 \Xi(\qu\drm{out})
=
 \Xi(\tf \qu\drm{in})
=
 \Xi(\qu\drm{in}),
\label{psymplec}
\end{align}\normalsize
which means that the QND gate is 
not only a gauge transformation, but it is also 
a symplectic transformation.\index{symplectic}
If we define a matrix \en{ \lie }
via the Cayley transform as\index{Cayley transform}
\small\begin{align}
 \tf
&=
 \frac{1+\lie}{1-\lie},
\end{align}\normalsize
then \en{ \lie } satisfies the following two conditions:
\small\begin{align}
\hspace{40mm}
\kakkon{\Sigma\dd{z} \lie + \lie\dgg\Sigma\dd{z} = 0,}
       {\Sigma\dd{y} \lie + \lie\trans \Sigma\dd{y} = 0,}
\hspace{11mm}
 \Sigma\dd{i} \equiv \A{\sigma\dd{i}}{}{}{\sigma\dd{i}},
\end{align}\normalsize
where \en{ \sigma\dd{i} } are the Pauli matrices.
These two conditions guarantee that 
\en{ \lag } is Hermitian and \en{ \tf } is symplectic.
Conversely, 
if these conditions are satisfied,
any gate can be physically realizable.
Detailed analyses are found
in Chapters \ref{chap:gauge} and \ref{chap:sym}.

The formulation of the example above
depends on what differential equations 
are used to describe the signals \en{ \qu } 
because 
the Lagrangian (\ref{sum-e0}) is obtained through
a covariant derivative in the gauge theory.
This motivates us to introduce
forward and backward traveling fields.

\newpage

\subsection{Forward and backward traveling fields and stability}
\label{sec:1fb}

\begin{wrapfigure}[0]{r}[53mm]{49mm}
\centering
\vspace{-3mm}
\includegraphics[keepaspectratio,width=35mm]{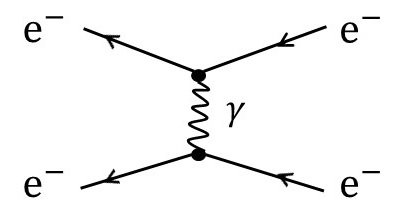} 
\caption{
\small
Feynman diagram of electron-electron scattering.
\normalsize
}
\label{fig-comparison1} 
\end{wrapfigure}

Let us get back to the electron-electron scattering.
In the Feynman diagram,
the electrons (the Dirac equation) are represented by arrowed lines,
whereas
the photon (Maxwell's equations)
is a wiggly line with no arrows,
as in Figure \ref{fig-comparison1}.
The arrows express the flow of particles in spacetime.
That corresponds to the directions of signals
in the case of quantum computing as in Figure \ref{fig-comparison2}.
Now we have a question:
If the computational signals are implemented by optical lasers,
how can we describe them?

\begin{wrapfigure}[0]{r}[53mm]{49mm}
\centering
\vspace{15mm}
\includegraphics[keepaspectratio,width=35mm]{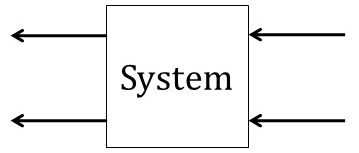} 
\caption{
\small
Block diagram.
\normalsize
}
\label{fig-comparison2} 
\end{wrapfigure}

The optical signals should be represented by arrowed lines
due to unidirectionality.
However, Maxwell's equations are not convenient 
to describe the input-output relation of such signals.
We need a simple and approximate 
model to describe unidirectionally propagating bosons.

To this end,
let us look at a relationship 
between the Dirac and wave equations:

\vspace{3mm}
\begin{picture}(300,60)(65,0)
\put(85,40){\dashbox{KG equation (second order)}}
\put(228,45){\vector(1,0){67}}
\put(300,40){\fbox{Dirac equation (first order)}}
\put(230,34){factorization}
\put(230,7){factorization}
\put(150,21){massless}
\put(140,32){\vector(0,-1){20}}
\put(330,21){massless}
\put(380,32){\vector(0,-1){20}}
\put(78,0){\fbox{Wave equation (second order)}} 
\put(228,3){\vector(1,0){67}}
\put(300,0){\doublebox{Weyl equation (first order)}}
\end{picture}
\vspace{6mm}

\noindent
Dirac derived his first order differential equation\index{Dirac equation}
by factoring the Klein-Gordon equation.\index{Klein-Gordon equation}
By analogy,
the optical signals might be described by 
the Weyl equation instead of (Maxwell's) wave equations.\index{Weyl equation}
However, the Weyl equation is for fermions.
If we use it for bosons,
we will confront the negative energy problem.

The Weyl equation is still a likely candidate as an approximation.
The negative energy problem occurs when we consider
contributions from all frequencies.
It can be avoided 
by limiting to effective frequencies.
This trick is sometimes used in 
a stochastic approach to quantum optics (Appendix \ref{app:qsde}).

As a simple example,
let us consider a one-dimensional wave equation
\small\begin{align}
 \Box \phi = 
 (\partial\dd{t}^2 - \partial\dd{z}^2)\phi =0,
\label{pre-kg}
\end{align}\normalsize
where the four-position is 
\en{ x=(t,0,0,z) }.
If we regard this as a Klein-Gordon equation,
the corresponding Weyl equation is given as
\small\begin{align}
 \A{\partial\dd{t} + \partial\dd{z}}{}
   {}{\partial\dd{t} - \partial\dd{z}}
 \AV{\phi\dd{+}}{\phi\dd{-}}
&=
 0.
\label{pre-dirac}
\end{align}\normalsize
It is reasonable to assume that 
this first order differential equation has a solution of the form
\en{ \phi\dd{\pm} \sim \ex\uu{\im \omega (t \mp z)} } 
in free space.
Hence 
\en{ \phi\dd{+} } and \en{ \phi\dd{-} } can be thought of as 
waves traveling forward and backward,
respectively.\index{forward traveling field}\index{backward traveling field}

It is important to note that 
there also exist solutions \en{ \phi\sim \ex\uu{a(t\pm z)} }
with real \en{ a }.
At a fixed point, say \en{ z=0 }, 
\en{ \phi } increases/decreases in time 
if \en{ a } is positive/negative.
This type of solutions are not practical to assume in free space, 
but they are relevant 
if \en{ \phi } is defined \textit{locally}.

This is one of reasons that we have considered 
the definition of a system in detail in Section \ref{sec:what}.
When \en{ \phi } is a signal in free space,
it describes traveling waves.
However, 
once \en{ \phi } is defined as a system,
it 
may exhibit decaying/amplifying
behavior.
This is actually related to feedback.
We discuss this issue next.

\subsection{Feedforward and feedback}
\label{sec:1ff}

The meaning of feedback is sensitive to 
how we define a system.\index{feedback}
To see this, 
let us consider the concatenation of two QND gates again:

\vspace{-1mm}
\begin{figure}[H]
\centering
\includegraphics[keepaspectratio,width=50mm]{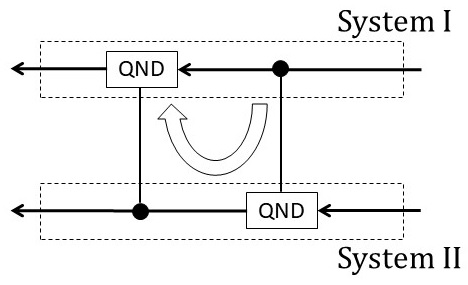}
\end{figure}
\vspace{-2mm}

\noindent
If we regard each traveling field 
as a system as in the diagram above,
the flow of information is expressed by the bold arrow:
The information of System I is transferred to System II 
at the first gate;
The information is fed back from System II to System I 
at the second gate.
This information flow can be thought of as feedback
for System I.

In our approach,
a system is defined in a different way.
The upper and lower solid lines are regarded as signals 
and 
the two QND gates are systems,
as in the diagram below.
In this case,
the two outputs of System I are fed into System II, but no
information is fed back from System II to System I.
This configuration is called feedforward.\index{feedforward}

\vspace{-2mm}
\begin{figure}[H]
\centering
\includegraphics[keepaspectratio,width=49mm]{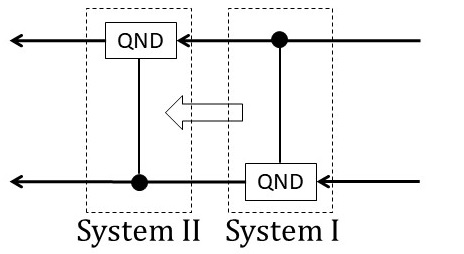}
\end{figure}
\vspace{-1mm}

The same setup is regarded either as feedback or as feedforward.
This sometimes causes confusion.
The definition of feedback depends critically on 
how we define a system.
Then how can we make feedback
in the setup of the second diagram?

Feedback is a process that violates causality 
in some sense.
(In reality,
no feedback violates causality because of a time delay.)
In the current case,
the signals propagates from right to left,
so a possible feedback process is made by 
sending a signal back against the flow of the signals,
as shown in the following diagram:

\vspace{0mm}
\begin{figure}[H]
\begin{minipage}{0.5\hsize}
\centering
\includegraphics[keepaspectratio,width=57mm]{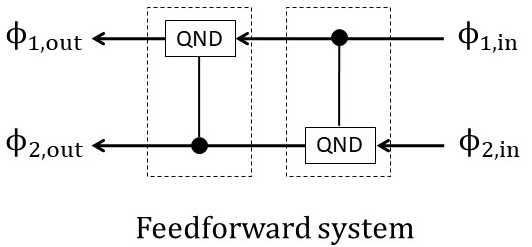}
\end{minipage}
\begin{minipage}{0.5\hsize}
\centering
\includegraphics[keepaspectratio,width=57mm]{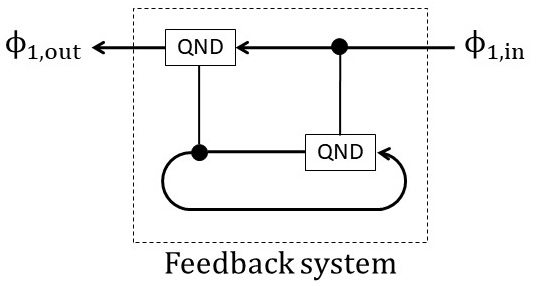}
\end{minipage}
\end{figure}
\vspace{-1mm}

\noindent
In the right diagram,
the information of one system is fed back to itself through 
a closed loop.
This loop is a \textit{locally} defined field 
that can be regarded as a new single system,
called a feedback system.
The transfer function of the feedback system is 
different from the feedforward system 
because it interacts with itself.
This feature is similar to the Dyson equation.
Next, 
we discuss a relationship 
between feedback and the Dyson equation.

\newpage

\subsection{Feedback and the Dyson equation}
\label{sec:1fd}

\begin{wrapfigure}[0]{r}[53mm]{49mm} 
\centering
\vspace{-7mm}
\includegraphics[keepaspectratio,width=33mm]{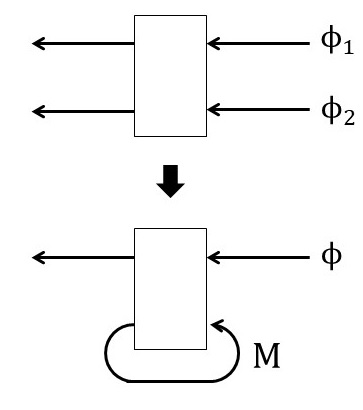}
\caption{
\small
(Upper) two-input and two-output quantum gate.
(Lower) feedback.
\normalsize
}
\label{fig-feedbackdyson-1}
\end{wrapfigure}

Feedback adds new degrees of freedom to a system.
To see this,
consider a two-input and two-output quantum gate.
If we form a closed loop \en{ \mas }
as in Figure \ref{fig-feedbackdyson-1},
how does it behave?
This feedback process can be examined 
by re-interpreting as in Figure \ref{fig-feedbackdyson-2}
where 
we prepare a free field \en{ \phi } 
and closed-loop field \en{ \mas } independently,
and then,
let them interact with each other through the gate.

Let \en{ \ipg } be the transfer function of \en{ \mas }
before it interacts with the free field \en{ \phi }.
(The imaginary unit \en{ \im } is a convention in physics.)
It is shown in Chapter \ref{chap:additional} that 
the feedback process is equivalent to 
a block diagram in Figure \ref{fig-nf}
where \en{ \sel } is self-energy determined 
by the gate.\index{self-energy}
In systems theory,
this is known as negative feedback,\index{negative feedback} 
and \en{ \sel } is called a controller.
The transfer function of the negative feedback is given by
\begin{subequations}
\label{nftfe}
\small\begin{align}
 z&= \ipg e,
\\ 
 e&= u-\isel z,
\\ 
\therefore 
 z&= \ffrac{\ipg}{1+\ipg \isel}u,
\label{nftf}
\end{align}\normalsize
\end{subequations}

\begin{wrapfigure}[0]{r}[53mm]{49mm} 
\centering
\vspace{-9mm}
\includegraphics[keepaspectratio,width=33mm]{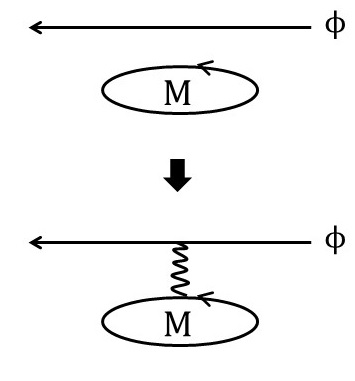}
\caption{
\small
Reinterpretation of the feedback of 
Figure \ref{fig-feedbackdyson-1}.
\normalsize}
\label{fig-feedbackdyson-2}
\end{wrapfigure}

\noindent
which indicates that after the interaction,
\en{ \ipg } transforms into
\small\begin{align}
 \ipg
\to
 \ipg^L 
&\equiv
 \frac{\ipg}{1+\ipg \isel}.
\label{dresstf}
\end{align}\normalsize
The RHS can be rewritten as 
\small\begin{align}
 \pg^L
&= 
 \pg + \pg \sel \pg^L.
\label{intdys}
\end{align}\normalsize
This is known as the Dyson equation,\index{Dyson equation}
and
\en{ \pg } and \en{ \pg^L } are called 
bare and dressed propagators,
respectively.\index{bare propagator}\index{dressed propagator}

\begin{wrapfigure}[0]{r}[53mm]{49mm} 
\centering
\vspace{15mm}
\includegraphics[keepaspectratio,width=45mm]{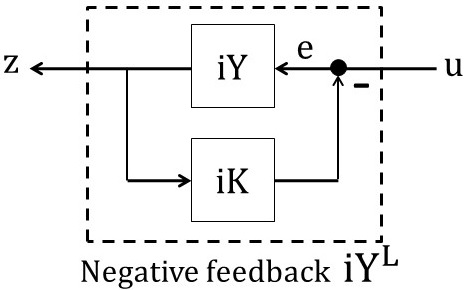}
\caption{
\small
Block diagram of negative feedback.
\normalsize
}
\label{fig-nf}
\end{wrapfigure}

Let us consider a simple case where
\en{ \mas } is a single-mode (\en{ \omega_0 }) closed loop 
of the forward traveling field.
The bare propagator is given via the Laplace transform 
as (Chapter \ref{chap:fp})
\small\begin{align}
 \ipg
&=
 \frac{1}{s-\im \omega_0}.
\label{freeg}
\end{align}\normalsize
For linear interactions,
the self-energy \en{ \isel } is constant.
If the gate is unitary,
the corresponding self-energy is 
a positive constant \en{ \isel = a>0 }
(Chapter \ref{sec:tfs}).
Then
the dressed propagator turns out to be
\small\begin{align}
 \ipg^L
&=
 \frac{1}{s-\im\omega_0 + a},
\label{gllinear}
\end{align}\normalsize
which is stable (dissipative),
i.e., 
\en{ \ipg^L \sim \ex\uu{-(a-\im\omega_0)t} } 
in the time domain.
The self-energy \en{ \isel } plays an important role
for the stability of \en{ \ipg^L }
because \en{ \isel=a } determines 
the singular point in (\ref{gllinear}).

For nonlinear interactions,
\en{ \isel } is no longer constant
and stability analysis is not straightforward,
though 
it is still possible to show 
some dynamical characteristics of \en{ \isel }
using system theoretical techniques.
The key is canonical quantization.
\en{ \isel } is influenced by 
the symplectic structure of quantum systems.
Next, we consider this problem.

\newpage

\subsection{Nonlinearity and instability}
\label{sec:1ni}

To see how the self-energy \en{ \isel } is 
influenced by canonical quantization,
we follow the same argument as Section \ref{sec:1gq},
in which we have discussed a relationship between 
acceptable non-unitary gauge transformations
and canonical quantization.

We have seen in Section \ref{sec:1fd} that 
\en{ \isel } can be regarded as a feedback controller.
In general,
a feedback system is expressed by the following block diagram:

\vspace{1mm}
\begin{figure}[H]
\centering
\includegraphics[keepaspectratio,width=60mm]{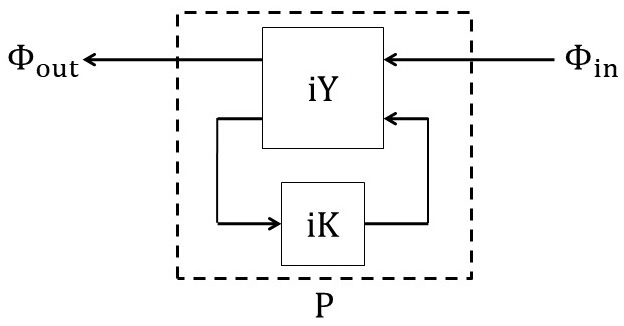}
\end{figure}
\vspace{-5mm}

\noindent
where 
\small\begin{align}
 \Phi\drm{out}
\equiv
 \AV{\phi\drm{out}}{\phi\drm{out}\dgg},
\qquad
 \Phi\drm{in}
\equiv
 \AV{\phi\drm{in}}{\phi\drm{in}\dgg}.
\end{align}\normalsize
The input \en{ \Phi\drm{in} } and 
the output \en{ \Phi\drm{out} } 
are signals in free space,
which means that 
they satisfy the same canonical quantization.
Let us define 
\small\begin{align}
 \Xi(\Phi) 
\equiv
 \phi\phi\dgg - \phi\dgg \phi. 
\end{align}\normalsize
Then the feedback system \en{ \tf } satisfies
\small\begin{align}
 \Xi(\Phi\drm{out})
= 
 \Xi(\tf\Phi\drm{in})=\Xi(\Phi\drm{in}).
\label{psymplec2}
\end{align}\normalsize
This condition is the same as (\ref{psymplec})
in which \en{ \tf } was a static matrix.
Now \en{ \tf } is a dynamical system.
In addition, 
it has a feedback structure.
Let us express \en{ \ipg } as
\small\begin{align}
\ipg
=
 \A{\ipg_{11}}{\ipg_{12}}
   {\ipg_{21}}{\ipg_{22}}.
\end{align}\normalsize
Then \en{ \tf } is given by
\small\begin{align}
 \tf
&=
 \ipg\dd{11}+\ipg\dd{12} \isel (1-\ipg\dd{22} \isel)\inv  \ipg\dd{21}.
\label{fbdig}
\end{align}\normalsize

Now we apply (\ref{psymplec2}) to this system.
For nonlinear interactions,
\en{ \isel } is a function of the complex variable \en{ s } 
in the frequency domain.
It is shown in Chapter \ref{chap:sym} that 
\en{ \isel } involves proper rational functions 
\en{ k\dd{\alpha}(s) \ (\alpha=1,2,3) }
such that 
\begin{subequations}
\small\begin{align}
 k_1(s) &= -k_2(-s),
\\
 k_3(s) &= k_3(-s). 
\end{align}\normalsize
\end{subequations}
Both indicate instability.
For example, 
if \en{ k_1(s) } is stable, 
\en{ k_2(s) } is inevitably unstable,
and \textit{vice versa}.
Also, 
if \en{ k_3\sim 1/(s^2-a^2) }, 
its time response is 
\en{ k_3(t)\sim \sinh (at) }.
Then the signals are exponentially amplified 
by the self-energy in the feedback loop.
We examine this type of instability 
using an example of third order nonlinearities
in Chapter \ref{chap:nonlinear}.

\newpage

\section{Structure of the book}

This book consists of three parts:
basics, and classical and quantum approaches.

\vspace{2mm}
\noindent
\textbf{[Part I] Basics}:
Chapter \ref{ctqm} - Chapter \ref{chap:gauge}

\begin{wrapfigure}[0]{r}[53mm]{49mm} 
\centering
\vspace{-10mm}
\includegraphics[keepaspectratio,width=48mm]{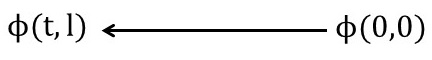}
\caption{
\small
Free field.
\normalsize
}
\label{fig-structure-1}
\end{wrapfigure}

\vspace{2mm}
\noindent
The basics of systems theory and field theory are reviewed.
In Chapter \ref{ctqm},
we introduce a transfer function (propagator) 
in a systems theoretical way.
A propagator is usually defined 
from one point to another in spacetime
as in Figure \ref{fig-structure-1}.

\begin{wrapfigure}[0]{r}[53mm]{49mm} 
\centering
\vspace{5mm}
\includegraphics[keepaspectratio,width=27mm]{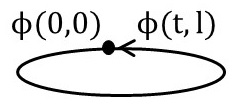}
\caption{
\small
Closed-loop field.
\normalsize
}
\label{fig-structure-2}
\end{wrapfigure}

A situation is different when we consider feedback.
A propagator is defined 
from one point to the same point in space
after traveling a closed loop,
as in Figure \ref{fig-structure-2}.
It is a function of only time
and the Laplace transform is convenient
to describe it in the frequency domain.
Then we can clearly see similarities between 
systems theory and field theory.

\vspace{2mm}
\noindent
\textbf{[Part II] Classical approach}:
Chapter \ref{chap:boundary} - Chapter \ref{chap:inter}

\vspace{2mm}
\noindent
This part is dedicated to the development of mathematical tools 
to show how dissipation/amplification appear in feedback systems.
It is summarized as a chart below.

As a first step,
we define a quantum gate with a transfer function,
i.e.,
an input-output relation.
For instance,
the transfer function is given by SU(2)
if the quantum gate operates as rotations 
in a two dimensional space such as beamsplitters.

\begin{wrapfigure}[17]{r}[42mm]{80mm}
\vspace{3mm}
\begin{picture}(260,170)(68,0)
\put(135,157){\fbox{
\begin{minipage}{40mm}
 Transfer function of a gate (a static matrix)
\end{minipage}
}}
      \put(87,132){Lagrange's method}
      \put(95,118){or a gauge theory}
       \put(202,132){Euler-Lagrange equation}
       \put(202,118){or \textit{S}-matrices}
   \put(185,143){\vector(0,-1){30}}
   \put(195,113){\vector(0,1){30}}
\put(145,100){\fbox{Lagrangian of the gate}}
       \put(200,75){Feedback}
   \put(190,93){\vector(0,-1){30}}
\put(144,50){\fbox{Lagrangian of a system}}
       \put(200,32){Euler-Lagrange equation}
       \put(200,18){or \textit{S}-matrices}
   \put(190,42){\vector(0,-1){30}}
\put(115,-7){\fbox{
\begin{minipage}{50mm}
Transfer function of the system (a proper rational matrix)
\end{minipage}
}}
\end{picture}
\end{wrapfigure}

In the second step, 
we derive the interaction Lagrangian of the gate
from the transfer function.
There are two methods for it:
Lagrange's method of undetermined multipliers and a gauge theory.
In Lagrange's method,
the transfer function is considered as boundary conditions 
between the input and the output.
In a gauge theoretical approach,
it is regarded as a local gauge transformation.

In the third step,
we define a system 
by connecting inputs and outputs across the quantum gate.
This is regarded as feedback, 
as discussed in Section \ref{sec:1ff}.
The Lagrangian of this feedback system is obtained by 
equating the input to the output in the Lagrangian of the gate.

In the last step,
the transfer function of the feedback system is derived
from the interaction Lagrangian obtained in the third step.
Classically, 
this is done by the Euler-Lagrange equation.
The same result can be obtained through \textit{S}-matrices, 
which is shown in Part III.

\vspace{2mm}
\noindent
\textbf{[Part III] Quantum approach}:
Chapter \ref{chap:fqf} - Chapter \ref{epilogue}
\vspace{2mm}

\noindent
In this part,
we derive the transfer functions of quantum gates and systems
using \textit{S}-matrices.
The relationship to the Dyson equation is also discussed.
The \textit{S}-matrix approach is useful to analyze 
the input-output relations of nonlinear systems.

We also consider pole-zero symmetry.
This is a property resulting from 
the symplectic structure of of quantum systems.
In systems theory, 
this is also known as allpass.\index{allpass}
It turns out from this symmetry that 
instability (amplification) inevitably appears
in nonlinear quantum systems.

Lastly,
our formulation is applied to different types of systems
such as spins, 
chiral symmetry breaking, 
modeling of thermal noise,
gravitational wave detection,
fusion and decay processes.

\section{Diagrams}

Due to the cross-disciplinary aspects of our approach,
some notation and diagrams may be unconventional.
First, 
we introduce \en{ \phi }, \en{ \psi } and \en{ \mas }
to represent fields:
\begin{subequations}
\label{defphim}
\small\begin{align}
  \phi(x)&: \ \mbox{\normalsize bosonic free fields,\small}
 \\
  \psi(x)&: \ \mbox{\normalsize fermionic free fields,\small}
\\
  \mas(x)&: \ \mbox{\normalsize closed-loop fields.\small}
\end{align}\normalsize
\end{subequations}

An input (output) is defined by a field 
at the initial (final) point in spacetime.
Denoted by \en { \tf } is a system in systems theory,
whereas \en{ \ipg } in field theory.
In most cases,
we consider two-input and two-output systems:
\small\begin{align}
 \phi\drm{out} 
&=
 \tf \phi\drm{in},
\quad \mbox{or} \quad
 \phi\drm{out} 
=
 \ipg \phi\drm{in},
\label{sysphyinout}
\end{align}\normalsize
where
\small\begin{align}
 \phi\drm{out}
=
 \AV{\phi_1}{\phi_2}\drm{out},
\qquad
 \phi\drm{in}
=
 \AV{\phi_1}{\phi_2}\drm{in}.
\end{align}\normalsize 
For non-unitary gates,
we need to consider both \en{ \phi } and \en{ \phi\dgg }.
In this case, 
we use
\small\begin{align}
 \Phi\drm{out}
=
\left[
\begin{array}{c}
 \phi_1 \\
 \phi_1\dgg \\ \hdashline
 \phi_2 \\
 \phi_2\dgg
 \end{array}
\right]\drm{out},
\qquad
 \Phi\drm{in}
=
\left[
\begin{array}{c}
 \phi_1 \\
 \phi_1\dgg \\ \hdashline
 \phi_2 \\
 \phi_2\dgg
 \end{array}
\right]\drm{in}.
\end{align}\normalsize
The following mean field of the input and the output often 
appears in our formulation:
\small\begin{align}
 \mm{\phi}
&=
 \frac{\phi\drm{in}+\phi\drm{out}}{2},
\qquad
 \mm{\Phi}
=
 \frac{\Phi\drm{in}+\Phi\drm{out}}{2}.
\end{align}\normalsize

We use three types of diagrams from systems theory, quantum
computing and field theory.
They are basically interchangeable, 
but we choose an appropriate one depending on the situation.

\vspace{1mm}

\begin{wrapfigure}[5]{r}[15mm]{55mm}
\centering
\vspace{-3mm}
\includegraphics[keepaspectratio,width=40mm]{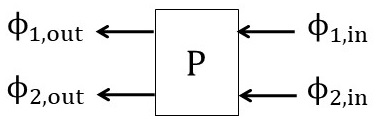}
\end{wrapfigure}

\noindent
[1. \textbf{Block diagram}]
In systems theory, a system is represented by a
block diagram.\index{block diagram}
For example, a two-input and two-output system is depicted as
in the figure on the right.
Note that the time axis is defined from right to left to be consistent with
the input-output relation (\ref{sysphyinout}).

\vspace{1mm}
\begin{wrapfigure}[4]{r}[15mm]{55mm}
\vspace{-7mm}
\centering
\hspace{-2mm}
\includegraphics[keepaspectratio,width=40mm]{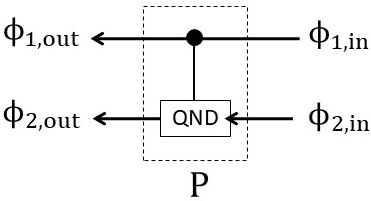}
\end{wrapfigure}

\noindent
[2. \textbf{Quantum circuit}]
In quantum computing, 
a gate is regarded as a system 
and represented by a transfer function \en{ \tf }
(the dashed box) as in the figure on the right.
Note that the time axis is defined from right to left
for consistency with the block diagram.

\vspace{1mm}

\begin{wrapfigure}[6]{r}[15mm]{55mm}
\centering
\vspace{-6mm}
\includegraphics[keepaspectratio,width=42mm]{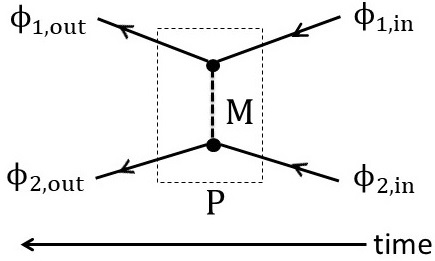}
\end{wrapfigure}

\noindent
[3. \textbf{Feynman diagram}]
In field theory, the Feynman diagram\index{Feynman diagram}
is commonly used to describe scattering processes.
The mediating field \en{ \mas } is regarded as a system
(the dashed box) as in the figure on the right.
Here again the time axis is defined from right to left for consistency.

\newpage

\section{List of symbols}

Here is the list of notation that we use throughout this book:



\begin{textblock}{10}[0,0](-2.1,2.7)

\small\begin{tabular}{rcl}

$x\uu{\mu}$ & & $=(t,\bm{x})$: Four-position. \\
      & &  In one dimension, $x=(t,0,0,z)$. \\

$\partial\dd{\mu}$ & & $=\ffrac{\partial}{\partial x\uu{\mu}} 
                      = \left(\partial\dd{t}, \nabla\right)$.\\

$\eta\uu{\mu\nu}$ & & $=\textrm{diag}(1,-1,-1,-1)$: \\
         & &  The Minkowski metric. \\

$p\uu{\mu}$ & & $=(E,\textbf{p}) = \im \left(\partial\dd{t},-\nabla\right)$: Four-momentum.\\

$p\dd{\mu}$ & & $=(E,-\textbf{p}) = \im \left(\partial\dd{t},\nabla\right) = \im \partial\dd{\mu}$. \\

$p\cdot x$ & & $=p^0x^0-\textbf{p}\cdot\bm{x}$. \\


\\

$s$ & & $=-\im p^0$: Complex frequency \\
 & &   of the Laplace transform.\\

$\dtf{A}{B}{C}{D}$ & & $=C(sI-A)\inv B+D$: \\
   & & Transfer function defined in (\ref{2.10}) \\

$\lambda(A)$ & & Eigenvalues of a matrix $A$. \\ 

$\pole$ & & Poles (Section \ref{sec:pz}). \\ 

$\zero$ & & Transmission zeros (Definition \ref{def:zero}), \\

$P\trans$ & & Transpose of a matrix $P$. \\ 

$P\simm(s)$ & & $=P\trans(-s)$ defined in (\ref{simm}). \\ 

%
%
%





\\

$\psi$ & & Dirac field or spin field. \\

$\mas$ & & Closed-loop field, \\
    & & see Sections \ref{sec:p=0}, \ref{sec:weyltf}, and \ref{sec:ltf}. \\

$\pi$ & & Canonical momenta.\\

$\sigma\uu{i}$ & & Pauli matrices (Section \ref{gmatrix}).\\

$\sigma\dd{\pm}$ & & Raising and lowering matrices. \\ 

$\gamma\uu{\mu}$ & & $=(\gamma^0,\bm{\gamma})$:
             $\gamma$ matrices (Section \ref{gmatrix}).\\

$\beta$ & & $=\gamma^0$ defined in (\ref{gamma}). \\

$\bm{\alpha}$ & & $=\beta\bm{\gamma}$ defined in (\ref{gamma}). \\

$\gamma^5$ & & $=\im \gamma^0 \gamma^1\gamma^2\gamma^3$: Chirality,\\
          & &  defined in (\ref{lrchiral}). \\

$\widebar{\psi}$ & & $=\psi\dgg\gamma^0$: Dirac adjoint. \\

$\fsh{\partial}$ & & $=\gamma\uu{\mu}\partial\dd{\mu}$: Feynman slash. \\

$\partial\dd{\pm}$ & & $=\partial\dd{t} \pm \partial\dd{z}$. \\

$\phi_F$ & & Forward traveling solution to  \\
         & &    the Weyl equation, see (\ref{wpm}). \\

$\phi_B$ & & Backward traveling solution to \\
         & &   the Weyl equation, see (\ref{wpm}). \\

$\Lambda\dd{\pm}$ & & Projection onto $\pm$energy subspaces, \\
        & &         defined in (\ref{projections}). \\

\end{tabular}\normalsize

\end{textblock}


\begin{picture}(10,350)(-230,180)
\put(0,480){\line(0,-1){550}}
\end{picture}


\begin{textblock}{10}[0,0](7.5,2.7)

\small\begin{tabular}{rcl}

$\phi$ & & Forward traveling field. \\

$\Phi$ & & =$\AVl{\phi}{\phi\dgg}$. \\

$\xi$ & & $=\ffrac{\phi+\phi\dgg}{\sqrt{2}}$: $x$-quadrature 
            defined in (\ref{quadrature}). \\

$\eta$ & & $=\ffrac{\phi-\phi\dgg}{\im\sqrt{2}}$: $y$-quadrature
             defined in (\ref{quadrature}). \\

$\qu$ & & $=\AV{\xi}{\eta}$.\\

$\mas\dd{\xi}$ & & $=\ffrac{M+M\dgg}{\sqrt{2}}$. \\

$\mas\dd{\eta}$ & & $=\ffrac{M-M\dgg}{\im\sqrt{2}}$. \\

$\phi\drm{in}$ & & Input field. \\
$\phi\drm{out}$ & & Output field. \\

$\mm{\phi}$ & & $=\ffrac{\phi\drm{in}+\phi\drm{out}}{2}$. \\


$\ga$ & & Gauge field.\\

$\lie$ & & Reactance matrix.\\

\\


$\lag$ & & Lagrangian density. \\

$L$ & & $=\int dz \lag$: Lagrangian. \\ 

$S^L$ & & \textit{S}-matrix corresponding to $L$. \\

${\rm T}$ & & Time-ordering operator. \\

$\step(x)$ & & Heaviside step function. \\

$\wick{A}{B}$ & & $= \mzero{\T AB}=\contraction{}{A}{\hspace{1mm}}{\hspace{0mm}} AB$: \\
 & & 
 Contraction defined in (\ref{def:contractiond}) and (\ref{def:contraction}).\\

$\wick{A}{B}^{L}$ & & $=\ffrac{\mzero{\T AB S^L }}{\mzero{S^L}}$: \\
     & &          Contraction under a Lagrangian $L$. \\

$A\ddgg$ & & Double dagger defined in (\ref{def:shadirac}) for \\
          & &    fermions and (\ref{def:sha}) for bosons.\\ 


$\ipg_{A|C}$  & &= $\wick{A}{C\ddgg}$: Transfer function from $C$ to $A$ \\
             & &    defined in (\ref{ediractf}) and (\ref{ptf-f}). \\

$\ipg_{A|C}^{L}$  & &= $\wick{A}{C\ddgg}^L$: 
                     Transfer function from $C$ to $A$  \\
             & &    under $L$ defined in (\ref{ptf}). \\

$\sel$  & & Self-energy, see Section \ref{sec:dyson}. \\

$g\ast \phi$ & & $=\int d\tau \ g(t-\tau)\phi(\tau)$ defined in (\ref{conv}). \\

$\phi\dgg\circ g$ & & $=\int d\tau \ \phi\dgg (t+\tau) g(\tau)$ 
                      defined in (\ref{conv}). \\

\end{tabular}\normalsize

\end{textblock}


\chapter*{}

\begin{center}
\begin{Large}
\begin{bfseries}
\textsf{Part I}
\\
\vspace{10mm}
\textsf{Basics of systems theory and field theory}
\end{bfseries}
\end{Large}
\end{center}

\chapter{Systems theory}
\label{ctqm}
\thispagestyle{fancy}

The basics of systems theory are introduced in this chapter.
Linear time-invariant systems can be described in two different ways:
state equations (in the time domain) and
transfer functions (in the frequency domain).
These representations are equivalent 
under suitable conditions,
but the transfer function representation is more convenient 
because the dynamical properties of a system
can be examined with poles and transmission zeros.
This helps us to investigate nonlinear interactions of quantum fields
through perturbation theory,
as will be seen in Chapter \ref{chap:sym}.

\section{State equation}
\label{sdtf}

\marginpar{\vspace{23mm}
\small
In terms of the Feynman diagram, 
$ \mas $ and $ \{\phi\drm{in},\phi\drm{out}\} $ 
correspond to internal and external lines, 
respectively.
In other words,
$ \mas $ is \textit{off shell},
whereas $ \{\phi\drm{in},\phi\drm{out}\} $ are \textit{on shell}.
\normalsize
 }

Linear time-invariant systems are described by
a \textit{state equation}\index{state equation (systems theory)}
\begin{subequations}
\label{2.1}
\small\begin{align}
 \dot{\mas}(t)
&= 
 A \mas(t) + B\phi\drm{in}(t), 
\label{2.1a} \\
 \phi\drm{out}(t) 
&= 
 C \mas(t) + D\phi\drm{in}(t),
\label{2.1b}
\end{align}\normalsize
\end{subequations}
where \en{ A,B,C } and \en{ D } are coefficient matrices, and
\small\begin{align}
\hspace{12mm}
 \mas:& \ \mbox{\normalsize state vector, \small}
\nn\\
 \phi\drm{in}:& \ \mbox{\normalsize input vector, \small}
\nn\\
 \phi\drm{out}:& \ \mbox{\normalsize output vector. \small}
\nn
\end{align}\normalsize

The input \en{ \phi\drm{in} } and 
the output \en{ \phi\drm{out} }
are accessible signals, 
whereas \en{ \mas } is a local variable 
that is not accessible directly.
Assume that \en{ \mas(0)=0 }.
Then \en{ \mas } is written as
\small\begin{align}
 \mas(t)
&=
 \int_0\uu{t} d\sigma \ \ex\uu{A(t-\sigma)} B\phi\drm{in}(\sigma).
\label{sol}
\end{align}\normalsize
A system is called 
\textit{controllable}\index{controllable (systems theory)}
if for any given vector \en{ \mas' }  and time \en{t'},
there exits \en{ \phi\drm{in} } 
that can achieve \en{ \mas(t')=\mas' }.
Such an input can be constructed 
if a Gramian matrix\index{Gramian matrix}
\small\begin{align}
 W\dd{c}(t)
&=
 \int_0\uu{t} \left(\ex\uu{-A\sigma }B\right) 
 \left(\ex\uu{-A\sigma}B\right)\trans d\sigma
\end{align}\normalsize
is nonsingular. 
In fact,
consider an input of the form
\small\begin{align}
 \phi\drm{in}(t)
&=
 \left(\ex\uu{- At}B\right)\trans W\dd{c}\inv (t') \ex\uu{- At'} \mas'.
\end{align}\normalsize
Then it follows from (\ref{sol}) that 
\small\begin{align}
 \mas(t')
&=
 \int_0\uu{t'} \ex\uu{A(t'-\sigma)} B 
 \left[ 
 \left( \ex\uu{- A\sigma}B \right)\trans 
 W\dd{c}\inv (t') 
 \ex\uu{- At'} \mas' \right] d\sigma
=
 \mas'.
\end{align}\normalsize
The nonsingularity of the Gramian matrix can be written 
as a rank condition.

\begin{lemma}
Assume that \en{ \mas } is an n-component vector.
A system (\ref{2.1}) is controllable if and only if
\small\begin{align}
 \mathrm{rank } \left[B \quad AB \quad \cdots \quad A^{n-1}B \right]
=
 n.
\end{align}\normalsize
This is also expressed as
\small\begin{align}
 \mathrm{rank } \left[\begin{array}{cc}\lambda I\!-\!A &
      B\end{array}\right]
=
 n, \quad \forall \lambda.
\end{align}\normalsize
\end{lemma}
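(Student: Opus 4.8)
The plan is to prove the two rank conditions are each equivalent to nonsingularity of the controllability Gramian $W_c(t')$ for every $t'>0$, which we have just shown implies controllability; the converse (controllability $\Rightarrow$ rank condition) is then obtained by contraposition.

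First I would establish the \emph{Kalman rank condition}. Let $\maths{R} \equiv \left[B \ \ AB \ \ \cdots \ \ A^{n-1}B\right]$ and let $\maths{C} \equiv \mathrm{Im}(\maths{R})$ be its column space. The key observation is the Cayley--Hamilton theorem: $A^n$ (and hence every $A^k$ for $k\geq n$) is a polynomial in $I, A, \ldots, A^{n-1}$, so $\mathrm{Im}(A^k B) \subseteq \maths{C}$ for all $k\geq 0$. Expanding $\ex^{-A\sigma}B = \sum_{k\geq 0}\frac{(-\sigma)^k}{k!}A^k B$ in a power series then shows $\ex^{-A\sigma}B$ has all columns in $\maths{C}$ for every $\sigma$, so $\mathrm{Im}\,W_c(t') \subseteq \maths{C}$; conversely, if some vector $w$ is orthogonal to $\mathrm{Im}\,W_c(t')$, then $0 = w\trans W_c(t') w = \int_0^{t'}\|(\ex^{-A\sigma}B)\trans w\|^2 d\sigma$ forces $w\trans \ex^{-A\sigma}B = 0$ identically in $\sigma$; differentiating $k$ times at $\sigma=0$ gives $w\trans A^k B = 0$ for all $k$, i.e. $w\perp\maths{C}$. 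Hence $\ker W_c(t') = \maths{C}^{\perp}$, so $W_c(t')$ is nonsingular iff $\maths{C}=\mathbb{R}^n$ iff $\mathrm{rank}\,\maths{R}=n$. Combined with the construction preceding the lemma (and the contrapositive for necessity), this yields the first equivalence.

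Next I would derive the \emph{Popov--Belevitch--Hautus (PBH) condition} from the Kalman condition. For the nontrivial direction, suppose $\mathrm{rank}\left[\lambda I - A \ \ B\right] < n$ for some $\lambda$; then there is a nonzero row vector $w\trans$ with $w\trans(\lambda I - A) = 0$ and $w\trans B = 0$, i.e. $w$ is a left eigenvector of $A$ with $w\trans A = \lambda w\trans$. Then $w\trans A^k B = \lambda^k w\trans B = 0$ for all $k$, so $w\trans\maths{R} = 0$ and $\mathrm{rank}\,\maths{R}<n$. Conversely, if $\mathrm{rank}\,\maths{R}<n$, pick nonzero $w$ with $w\trans A^k B = 0$ for all $k$; the subspace $V = \{v : v\trans A^k B = 0 \ \forall k\}$ is nonzero and $A\trans$-invariant, so $A\trans$ restricted to $V$ has an eigenvector $w_0$, giving $w_0\trans(\lambda_0 I - A) = 0$ and $w_0\trans B = 0$, hence the rank drop at $\lambda=\lambda_0$. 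Since the rank can only drop when $\lambda\in\lambda(A)$ (otherwise $\lambda I - A$ is already full rank), checking "$\forall\lambda$" is equivalent to checking eigenvalues, and the two conditions coincide.

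The main obstacle I anticipate is purely expository rather than mathematical: being careful with the direction of implications so that "controllable $\iff$ rank condition" is genuinely an iff. Sufficiency of the rank condition is the explicit Gramian construction already given in the text; necessity requires the contrapositive argument — if the rank is deficient, there is a nonzero left-annihilator $w$ with $w\trans\ex^{-A\sigma}B = 0$, whence from \eqref{sol} every reachable state $\mas(t)$ satisfies $w\trans\mas(t)=0$, so states with $w\trans\mas'\neq 0$ cannot be reached. The only other point demanding care is invoking Cayley--Hamilton cleanly to justify truncating the exponential series at degree $n-1$; everything else is linear algebra.
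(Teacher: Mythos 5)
Your proof is correct, and it is the natural completion of exactly the machinery the paper sets up: the paper states this lemma without proof, offering only the Gramian construction for sufficiency, and your argument closes the loop by identifying $\ker W_c(t')$ with the orthogonal complement of the column space of $\left[B \ \ AB \ \ \cdots \ \ A^{n-1}B\right]$ via Cayley--Hamilton, handling necessity by the left-annihilator contrapositive, and deriving the PBH test through left eigenvectors of $A$ (equivalently, an $A\trans$-invariant annihilating subspace). The only cosmetic point is to phrase the necessity step with $w\trans A^{k}B=0$ for all $k$ implying $w\trans \ex\uu{A(t-\sigma)}B=0$, matching the form of the reachable-state formula (\ref{sol}), and to note that the PBH eigenvector (hence $\lambda$) may be complex; neither affects the validity of the argument.
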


A dual notion of controllability is 
\textit{observability}.\index{observable (systems theory)}
If the initial condition \en{ \mas(0) } of a system
\begin{subequations}
 \label{2.21}
\small\begin{align}
 \dot{\mas}
&=
 A\mas,
\\ 
 \phi\drm{out}
&=
 C\mas,
\end{align}\normalsize
\end{subequations}
is uniquely determined from the output segment 
\en{ \{\phi\drm{out}(t) \mid 0\leq t<t'\} }
of arbitrary length,
the system is said to be 
\textit{observable}.
The observability of the system is associated with
a Gramian matrix\index{Gramian matrix}
\small\begin{align}
 W\dd{o}(t)
&=
 \int_0\uu{t} 
 \left( C \ex\uu{A\sigma} \right)\trans 
 \left(C \ex\uu{A\sigma} \right) d\sigma.
\end{align}\normalsize
If this is nonsingular,
the initial condition \en{ \mas(0) } 
is determined from the output as
\small\begin{align}
 \mas(0)
&=
 W\dd{o}\inv (t)\int_0\uu{t} 
 \left( C \ex\uu{A\sigma}\right)\trans 
 \phi\drm{out}(\sigma) d\sigma.
\end{align}\normalsize
Observability is also expressed as 
a rank condition on the matrices \en{A} and \en{C}:

\begin{lemma}
\label{lem:obs}
Assume that \en{ \mas } is an n-component vector.
A system (\ref{2.1}) is observable if and only if
\small\begin{align}
 \mathrm{rank }\left[ 
 \begin{array}{l}
 C \\ CA \\ \vdots \\ CA^{n-1}
 \end{array}
 \right]
=
 n.
\end{align}\normalsize
This is equivalent to a condition
\small\begin{align}
 \mathrm{rank }\left[
 \begin{array}{c}
 \lambda I-A \\ 
 C
 \end{array}
 \right]
=
 n, 
\quad \forall \lambda.
\end{align}\normalsize
\end{lemma}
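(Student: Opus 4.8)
The plan is to exploit the duality between observability and controllability, which lets me import the already-established controllability rank criterion (the preceding lemma) rather than redo the Gramian analysis from scratch. Concretely, the system $(\dot{\mas}=A\mas,\ \phi\drm{out}=C\mas)$ is observable if and only if the dual system $(\dot{z}=A\trans z + C\trans u)$ is controllable. This is because the observability Gramian $W\dd{o}(t)=\int_0^t (C\ex\uu{A\sigma})\trans(C\ex\uu{A\sigma})\,d\sigma$ is exactly the controllability Gramian of the pair $(A\trans,C\trans)$, as one sees by transposing inside the integrand. So the nonsingularity of $W\dd{o}(t)$ — which I would first show is equivalent to observability, mimicking the controllability argument already given in the text — translates directly into controllability of $(A\trans,C\trans)$.

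First I would establish that observability is equivalent to $W\dd{o}(t)$ being nonsingular for some (equivalently, all) $t>0$. The ``if'' direction is the explicit reconstruction formula $\mas(0)=W\dd{o}\inv(t)\int_0^t (C\ex\uu{A\sigma})\trans \phi\drm{out}(\sigma)\,d\sigma$ already displayed before the lemma. For ``only if'': if $W\dd{o}(t)$ is singular, pick $v\neq 0$ with $W\dd{o}(t)v=0$; then $v\trans W\dd{o}(t)v=\int_0^t \|C\ex\uu{A\sigma}v\|^2\,d\sigma=0$ forces $C\ex\uu{A\sigma}v\equiv 0$, so the initial states $\mas(0)=0$ and $\mas(0)=v$ produce identical outputs, contradicting observability.

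Next I would apply the dual lemma. Controllability of $(A\trans,C\trans)$ holds iff $\mathrm{rank}[\,C\trans\ \ A\trans C\trans\ \cdots\ (A\trans)^{n-1}C\trans\,]=n$; transposing this block matrix gives exactly $\mathrm{rank}[\,C;\ CA;\ \cdots;\ CA^{n-1}\,]=n$, which is the first claimed condition. Similarly, the PBH-type condition $\mathrm{rank}[\lambda I - A\trans\ \ C\trans]=n$ for all $\lambda$ transposes to $\mathrm{rank}\left[\begin{array}{c}\lambda I-A\\ C\end{array}\right]=n$ for all $\lambda$, the second claimed condition. Since transposition preserves rank, the equivalences are immediate once the dual statements are in hand.

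The main obstacle, such as it is, lies not in the algebra but in being careful about the logical chain: one must verify that ``observable'' (defined via unique recoverability of $\mas(0)$ from an output segment of \emph{arbitrary} length) matches the Gramian condition for a \emph{fixed} $t'$, which uses the fact that $C\ex\uu{A\sigma}v\equiv 0$ on any interval implies it vanishes identically (by real-analyticity of the matrix exponential, or equivalently by repeatedly differentiating at $\sigma=0$ to get $Cv=CAv=\cdots=0$). That analyticity/differentiation step is also precisely what makes the finite Kalman rank matrix ($n$ powers suffice) equivalent to the full infinite family $\{CA^k\}_{k\geq 0}$, via the Cayley--Hamilton theorem. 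I would state these two facts explicitly and then the rest is routine.
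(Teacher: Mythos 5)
Your proposal is correct. Note that the paper itself offers no proof of this lemma: it is stated as a standard systems-theory fact, with only the Gramian reconstruction formula for \en{ \mas(0) } displayed beforehand, so there is no paper argument to compare against. Your route --- (i) observability \en{ \Leftrightarrow } nonsingularity of \en{ W\dd{o} } (the ``only if'' via \en{ v\trans W\dd{o} v = \int_0^t \|C\ex\uu{A\sigma}v\|^2 d\sigma = 0 }), (ii) duality with the controllability lemma for the pair \en{ (A\trans, C\trans) }, and (iii) Cayley--Hamilton/analyticity to reduce the infinite family \en{ \{CA\uu{k}\} } to \en{ n } blocks --- is the natural completion of the paper's Gramian discussion and is sound. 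One cosmetic caveat: the paper's controllability Gramian is written with \en{ \ex\uu{-A\sigma} } while \en{ W\dd{o} } uses \en{ \ex\uu{A\sigma} }, so \en{ W\dd{o} } for \en{ (A,C) } is not literally the paper's \en{ W\dd{c} } for \en{ (A\trans,C\trans) } but its time-reflected counterpart; since \en{ C\ex\uu{A\sigma}v\equiv 0 } on an interval is equivalent to \en{ C\ex\uu{-A\sigma}v\equiv 0 } (both amount to \en{ CA\uu{k}v=0 } for all \en{ k }), this does not affect the argument, but it is worth a sentence if you invoke that lemma verbatim.
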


If a system is controllable and observable,
it is said to be 
\textit{minimal}\index{minimal (systems theory)} 
or \textit{irreducible}.\index{irreducible (systems theory)}
Minimality is essential to describe the system
in the frequency domain because it guarantees 
the equivalence of time and frequency domain expressions,
which is explained in the next section.

\section{Transfer function}
\label{sec:tf-sys}

The Laplace transform\index{Laplace transform} of
the state equation (\ref{2.1}) subject to the initial condition 
\en{ \mas(0)=0 } yields
\begin{subequations}
\small\begin{align}
 s\mas(s)
&= 
 A\mas(s)+B\phi\drm{in}(s),
\\ 
 \phi\drm{out}(s) 
&= 
 C\mas(s)+D\phi\drm{in}(s),
\end{align}\normalsize
\end{subequations}
where \en{ s } is a complex frequency.
Eliminating \en{ \mas(s) }, 
we get the following 
input-output relation:
\small\begin{align}
  \phi\drm{out}(s) 
= 
 \tf(s)\phi\drm{in}(s),
  \label{2.8}
\end{align}\normalsize
where
\small\begin{align}
  \tf(s) 
&=
 C(sI-A)\inv B+D
\equiv 
 \dtf{A}{B}{C}{D}.
\label{2.10}
\end{align}\normalsize
The matrix \en{ \tf(s) } is called 
a \textit{transfer function}\index{transfer function (systems theory)}.
It represents a transformation 
from the input \en{ \phi\drm{in}(s) }
to the output \en{ \phi\drm{out}(s) } 
in the frequency domain.
The input-output relation is depicted by 
the following block diagram\index{block diagram}:

\vspace{1mm}

\begin{figure}[H]
\centering
\includegraphics[keepaspectratio,width=60mm]{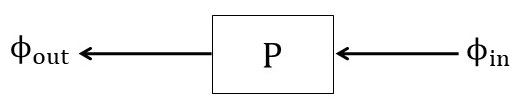}
\end{figure}

\vspace{1mm}

In (\ref{2.8}),
the state vector \en{ \mas } is completely invisible,
which means that 
the transfer function does not depend on 
how we choose \en{ \mas }.
For example,
let \en{ T } be a nonsingular matrix
and define a state vector \en{ \mu } as
\small\begin{align}
 \mu 
&= 
 T\mas.
\end{align}\normalsize
Then the state equation (\ref{2.1}) can be written as
\begin{subequations}
\small\begin{align}
 \dot{\mu}
&= 
 TAT\inv  \mu + TB\phi\drm{in},
\\
 \phi\drm{out}
&= 
 C T\inv \mu +D \phi\drm{in}.
\end{align}\normalsize
\end{subequations}
In the frequency domain,
this system is represented as
\small\begin{align}
 \phi\drm{out}
&=
 \dtf{TAT\inv }{TB}{CT\inv }{D} \phi\drm{in}.
\end{align}\normalsize
Compared to (\ref{2.10}), we have
\small\begin{align}
 \dtf{TAT\inv }{TB}{CT\inv }{D}
&=
 \dtf{A}{B}{C}{D},
\label{simtrans}
\end{align}\normalsize
which means that 
the transfer function is invariant under 
any similarity transformation.\index{similarity transformation}
This property is often used to simplify transfer functions.

\newpage

\begin{example}
\label{ex:similarity}
\rm
A transfer function 
\small\begin{align}
\hspace{10mm}
 \tf(s)
&=
 \frac{s-2|g|^2}{s+2|g|^2}
\hspace{10mm}
(g\in\mathbb{C})
\label{simexy}
\end{align}\normalsize
can be rewritten as
\small\begin{align}
 \tf(s)
&=
 2g\frac{1}{s+2|g|^2}(-2g\uu{*})+1
=
 \dtf{-2|g|^2}{-2g\uu{*}}{2g}{1}.
\end{align}\normalsize
The RHS is equivalent to a state equation 
\begin{subequations}
\label{exsys}
\small\begin{align}
 \dot{\mas}
&=
 -2|g|^2\mas-2g\uu{*} \phi\drm{in},
\\
 \phi\drm{out}
&= 
 2g\mas+\phi\drm{in}. 
\end{align}\normalsize
\end{subequations}
This is called a 
\textit{realization}\index{realization (systems theory)}
of the transfer function \en{\tf}.
Note that a realization is not unique.
For example, 
by taking \en{ T=2g } in (\ref{simtrans}),
\en{ \tf } is also represented as
\small\begin{align}
 \tf(s)
&=
 \dtf{-2|g|^2}{-4|g|^2}{1}{1},
\end{align}\normalsize
to which a state equation is written as
\begin{subequations}
\small\begin{align}
 \dot{\mas}&=-2|g|^2\mas-4|g|^2\phi\drm{in},
\\
 \phi\drm{out}&= \mas+\phi\drm{in},
\end{align}\normalsize
\end{subequations}
where the same symbol \en{ \mas } is used 
for the state vector, 
but it represents a different physical variable 
from (\ref{exsys}).
It is also beneficial to see that 
the following transfer functions are equivalent:
\small\begin{align}
 \dtf{\nA{-2|g|^2}{0}{0}{\alpha}}{\nAV{-2g\uu{*}}{\beta}}
          {\hspace{1mm} 2g \hspace{6mm} 0}{1}
&=
 \dtf{-2|g|^2}{-2g\uu{*}}{2g}{1},
\label{unobmin}
\end{align}\normalsize
for any constants \en{ \alpha,\beta }.
In the LHS,
the system has two modes \en{ \{-2|g|^2, \alpha\} }.
Note that \en{ \alpha } is an unobservable mode.
On the other hand, 
the RHS is controllable and observable, i.e., minimal.
This means that 
unobservable/uncontrollable modes are 
redundant degrees of freedom
and the system becomes minimal 
when we eliminate the redundancy.
\qed
\end{example}

\begin{lemma}
\label{lem:twomin}
Suppose that two minimal realizations 
have an identical transfer function
\small\begin{align}
 \dtf{A_1}{B_1}{C_1}{D_1}
&=
 \dtf{A_2}{B_2}{C_2}{D_2}.
\label{2.23}
\end{align}\normalsize
Then \en{ D_1=D_2 }, and 
there exists a nonsingular matrix \en{ T } such that 
\small\begin{align}
 \dtf{A_2}{B_2}{C_2}{D_2}
&= 
 \dtf{T\inv A_1T}{T\inv B_1}{C_1T}{D_1}.
\end{align}\normalsize
\end{lemma}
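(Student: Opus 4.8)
The plan is to prove this classical realization-theory result in two stages: first extract $D_1 = D_2$, then construct the similarity matrix $T$ using the controllability and observability structure that minimality provides.

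\textbf{Step 1: Equality of the feedthrough terms.} Since $\tf_i(s) = C_i(sI - A_i)\inv B_i + D_i$, and $C_i(sI-A_i)\inv B_i \to 0$ as $|s| \to \infty$, taking the limit $s \to \infty$ in (\ref{2.23}) yields $D_1 = \lim_{s\to\infty}\tf_1(s) = \lim_{s\to\infty}\tf_2(s) = D_2$. Subtracting the feedthrough terms, the two strictly proper parts agree: $C_1(sI-A_1)\inv B_1 = C_2(sI-A_2)\inv B_2$ for all $s$. Expanding both sides in a Laurent series around $s = \infty$, this is equivalent to the matching of all Markov parameters, $C_1 A_1^k B_1 = C_2 A_2^k B_2$ for every $k \geq 0$.

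\textbf{Step 2: Construction of $T$.} Suppose $\mas$ has $n$ components (the same $n$ for both realizations by minimality and the dimension-counting that the Markov-parameter identities force via the Hankel rank). Let $\mathcal{O}_i = [C_i\trans \ (C_iA_i)\trans \ \cdots \ (C_iA_i^{n-1})\trans]\trans$ be the observability matrices and $\mathcal{C}_i = [B_i \ A_iB_i \ \cdots \ A_i^{n-1}B_i]$ the controllability matrices; by Lemma \ref{lem:obs} and its controllability dual, each $\mathcal{O}_i$ has rank $n$ (hence a left inverse) and each $\mathcal{C}_i$ has rank $n$ (hence a right inverse). The Markov-parameter identities give the block-Hankel equality $\mathcal{O}_1\mathcal{C}_1 = \mathcal{O}_2\mathcal{C}_2$, and more generally $\mathcal{O}_1 A_1 \mathcal{C}_1 = \mathcal{O}_2 A_2 \mathcal{C}_2$. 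I would then \emph{define} $T = (\mathcal{O}_2\trans\mathcal{O}_2)\inv \mathcal{O}_2\trans \mathcal{O}_1$, i.e.\ $T = \mathcal{O}_2^{+}\mathcal{O}_1$ using the left inverse. One checks that $T$ is nonsingular because it also equals $\mathcal{C}_2\mathcal{C}_1^{+}$ (both expressions arise from $\mathcal{O}_1\mathcal{C}_1 = \mathcal{O}_2\mathcal{C}_2$ after applying the appropriate one-sided inverses), and $\mathcal{C}_2\mathcal{C}_1^{+}$ is surjective while $\mathcal{O}_2^{+}\mathcal{O}_1$ is injective, so $T$ is an $n \times n$ matrix that is both injective and surjective. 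Then $C_1 = C_2 T$ follows from reading off the first block row of $\mathcal{O}_1 = \mathcal{O}_2 T$; $B_1 = T\inv B_2$ follows from the first block column of $T\mathcal{C}_1 = \mathcal{C}_2$; and $A_1 = T\inv A_2 T$ follows from combining $\mathcal{O}_2 A_2 \mathcal{C}_2 = \mathcal{O}_1 A_1 \mathcal{C}_1 = \mathcal{O}_2 T A_1 T\inv \mathcal{C}_2$ with the one-sided invertibility of $\mathcal{O}_2$ and $\mathcal{C}_2$. This rearranges to exactly $\dtf{A_2}{B_2}{C_2}{D_2} = \dtf{T\inv A_1T}{T\inv B_1}{C_1T}{D_1}$.

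\textbf{Main obstacle.} The routine parts are the Laurent expansion and the block-matrix bookkeeping; the one genuinely delicate point is verifying that the candidate $T$ is well-defined independently of which one-sided inverse is used and that the two formulas $\mathcal{O}_2^{+}\mathcal{O}_1$ and $\mathcal{C}_2\mathcal{C}_1^{+}$ really coincide and give an invertible map — this is where both controllability \emph{and} observability are essential and where a careless argument can silently assume what it is trying to prove. I would handle this by noting $\mathcal{O}_1\mathcal{C}_1 = \mathcal{O}_2\mathcal{C}_2$ implies $\mathcal{O}_2^{+}\mathcal{O}_1\mathcal{C}_1 = \mathcal{C}_2$, and right-multiplying by $\mathcal{C}_1^{+}$ gives $\mathcal{O}_2^{+}\mathcal{O}_1 = \mathcal{C}_2\mathcal{C}_1^{+}$ \emph{once one knows} $\mathcal{C}_1\mathcal{C}_1^{+}$ acts as the identity on the relevant subspace — which holds precisely because $\mathcal{C}_1$ has full row rank $n$. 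A clean alternative, if one prefers to avoid pseudoinverses, is to first reduce to the shifted Hankel identities and invoke uniqueness of minimal realizations of the Hankel operator directly; but the pseudoinverse route above is the most self-contained given only the rank conditions already available from Lemma \ref{lem:obs}.
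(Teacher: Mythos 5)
The paper states Lemma \ref{lem:twomin} without proof: it is part of the Chapter \ref{ctqm} review of classical realization theory and is deferred to the control-theory references (Kimura; Zhou--Doyle--Glover), so there is no in-paper argument to compare against. Your proposal is the standard and correct Kalman state-space isomorphism proof: matching feedthroughs at $s\to\infty$, matching Markov parameters $C_1A_1^kB_1=C_2A_2^kB_2$ via the Laurent expansion, then building the similarity from the factored Hankel identities $\mathcal{O}_1\mathcal{C}_1=\mathcal{O}_2\mathcal{C}_2$ and $\mathcal{O}_1A_1\mathcal{C}_1=\mathcal{O}_2A_2\mathcal{C}_2$ using the full-rank factors guaranteed by minimality. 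Three small points to tighten. First, your $T=\mathcal{O}_2^{+}\mathcal{O}_1=\mathcal{C}_2\mathcal{C}_1^{+}$ satisfies $T\mathcal{C}_1=\mathcal{C}_2$ and $\mathcal{O}_1=\mathcal{O}_2T$, which yields $A_2=TA_1T\inv$, $B_2=TB_1$, $C_2=C_1T\inv$; this is the statement of the lemma with $T$ replaced by $T\inv$, so you should either rename or define $T=\mathcal{O}_1^{+}\mathcal{O}_2$ instead. Second, injectivity of $T$ is not immediate from the pseudoinverse formula itself; the clean route is to first establish $\mathcal{O}_1=\mathcal{O}_2T$ (right-multiply the Hankel identity by $\mathcal{C}_1^{+}$) and then use that $\mathcal{O}_1$ has full column rank, while surjectivity follows from $T\mathcal{C}_1=\mathcal{C}_2$ and full row rank of $\mathcal{C}_2$ — either one suffices since $T$ is square once the dimensions agree. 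Third, the claim that both minimal realizations have the same state dimension deserves one line: form the Hankel blocks with $N=\max(n_1,n_2)$ block rows and columns and invoke Cayley--Hamilton to see that each factored Hankel matrix still has rank $n_i$, so $n_1=n_2$. With these touch-ups your argument is complete and self-contained given Lemma \ref{lem:obs} and its controllability dual.
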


\newpage

\section{Poles and transmission zeros}
\label{sec:pz}

Consider a system \en{ \tf } with a minimal realization
\small\begin{align}
 \tf(s)
&=
 \dtf{A}{B}{C}{D}
=
 C(s-A)\inv B +D.
\label{pzy}
\end{align}\normalsize

\begin{wrapfigure}[0]{r}[53mm]{49mm}
\centering
\vspace{-5mm}
\includegraphics[keepaspectratio,width=47mm]{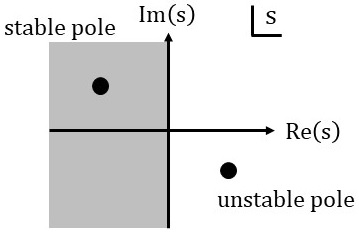}
\caption{
\small
Stable and unstable poles in the complex ($s$) plane.
\normalsize
}
\label{fig-polezero-1}
\end{wrapfigure}

\begin{definition}
\label{def:pole}
For a system \en{ \tf },
poles \en{ \pole(\tf) =\{p\in \mathbb{C}\} }\index{pole (systems theory)}
are defined as
\small\begin{align}
 \pole(\tf) 
&=
 \lambda(A),
\end{align}\normalsize
where \en{ \lambda(A) }
are the eigenvalues of the \en{ A }-matrix.
\end{definition}

\en{ p \in \lambda(A) }
determines the stability of the system.
Roughly speaking,
the system is expressed as
\small\begin{align}
 \tf(s)
\sim
 \frac{1}{s-A}
\quad
 \leftrightarrow
\quad
 \tf(t)\sim \ex\uu{At}.
\label{sysphycores}
\end{align}\normalsize
If the real part of \en{ \lambda(A) }
is negative \en{ \rea\lambda(A)<0},
the output is bounded for any bounded input.
Such a system (or \en{ A }-matrix) 
is said to be \textit{stable}.\index{stable (systems theory)}
A stable pole is a singular point 
in the left half of the complex (\en{ s }) plane,
as in Figure \ref{fig-polezero-1}.
A system is stable
when all poles are in the left half plane.
If there is a pole in the right half plane,
the system is unstable.

\begin{wrapfigure}[0]{r}[53mm]{49mm}
\centering
\vspace{-50mm}
\includegraphics[keepaspectratio,width=47mm]{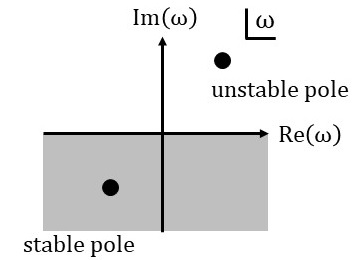}
\caption{
\small
Stable and unstable poles in the complex ($\omega$) plane.
\normalsize
}
\label{fig-polezero-2}
\end{wrapfigure}

\begin{remark}
In physics,
a transfer function is called a propagator,
and its frequency response is conventionally written in
the Fourier transform with a frequency \en{ \omega=\im s }.
For example, 
a propagator \en{ \pg } corresponding to (\ref{sysphycores}) 
is written as
\small\begin{align}
 \pg &= \frac{1}{\omega-\im A}.
\end{align}\normalsize
In this case, 
a stable pole \en{ p\in \lambda(A) \ \ (\rea(p)<0) } 
is in the lower half of the complex (\en{ \omega }) plane,
as in Figure \ref{fig-polezero-2}.
The transfer function \en{ \tf } and 
the propagator \en{ \pg } 
are related to each other as
\small\begin{align}
 \tf = \im \pg.
\end{align}\normalsize
\label{sec:rem-sys}
\end{remark}

Now let us introduce something opposite to the pole.
Suppose a system 
\small\begin{align}
 \tf(s)
=
 \frac{s-z}{s-p}.
\end{align}\normalsize
This diverges at the pole \en{ s=p }.
On the other hand,
it becomes zero at \en{ s=z }.
No information is transferred from the input to the output
at this point.
There are different ways to generalize this idea to matrices.
Here we introduce a \textit{transmission zero}.

\begin{definition}
\label{def:zero}
For a system of the form (\ref{pzy}),
\en{ \zero(\tf) \equiv \{z\in \mathbb{C}\} }
is said to be a transmission zero
\index{transmission zero (systems theory)}
if there exist vectors \en{ \xi } and \en{ \eta } such that
\small\begin{align}
 \AH{\xi\trans}{\eta\trans}\A{z-A}{-B}{-C}{-D}
&=
 0.
\label{zero}
\end{align}\normalsize 
\end{definition}

The meaning of a transmission zero is clear.
It follows from (\ref{zero}) that
\small\begin{align}
 \eta\trans \left[C(z-A)\inv B+D\right]&=0,
\quad \mbox{or} \quad
 \eta\trans \tf(z) =0.
\end{align}\normalsize
Then the output 
\en{ \phi\drm{out} = \tf\phi\drm{in} } satisfies
\small\begin{align}
 \eta\trans \phi\drm{out}(z)
&= 
 \eta\trans \tf(z)\phi\drm{in}(z) 
=
 0,
\end{align}\normalsize
which implies that 
at a transmission zero \en{ s=z },
no information is transferred from the input to the output 
along the vector \en{ \eta }.

\section{Circuits}
\label{sec:circuits}

Let us introduce algebraic operations to transfer functions.
Suppose that 
a transfer function \en{ \tf\dd{i} } has a realization
\small\begin{align}
 \tf\dd{i}(s)
=
 \dtf{A\dd{i}}{B\dd{i}}{C\dd{i}}{D\dd{i}}.
\end{align}\normalsize

\begin{wrapfigure}[0]{r}[53mm]{49mm}
\centering
\vspace{-5mm}
\includegraphics[keepaspectratio,width=35mm]{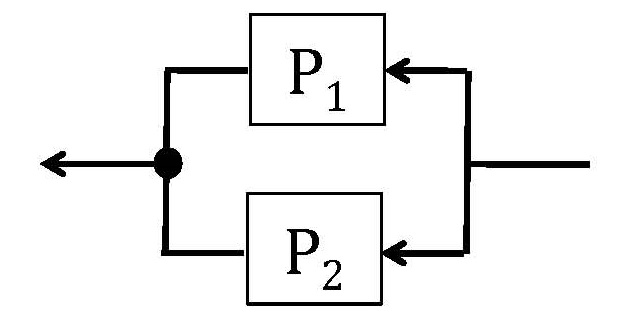}
\caption{
\small
Parallel connection.
\normalsize
}
\label{fig-connection1}
\end{wrapfigure}

The sum of two transfer functions, 
a parallel connection\index{parallel connection}
(Figure \ref{fig-connection1}), 
is given as
\small\begin{align}
\hspace{-5mm}
 \tf_1(s)+\tf_2(s)
= \left[
        \begin{array}{cc|c}
          A_1&0&B_1\\ 0&A_2&B_2\\ \hline C_1&C_2&D_1+D_2
        \end{array}
        \label{2.12}
      \right].
\end{align}\normalsize

\begin{wrapfigure}[0]{r}[53mm]{49mm}
\centering
\vspace{10mm}
\includegraphics[keepaspectratio,width=40mm]{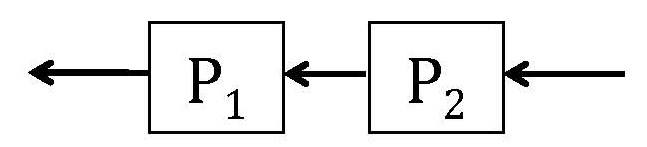}
\caption{
\small
Cascade connection.
\normalsize
}
\label{fig-connection2}
\end{wrapfigure}

\noindent
The product of two transfer functions,
a cascade connection\index{cascade connection}
(Figure \ref{fig-connection2}),
is given as
\small\begin{align}
  \tf_1(s)\tf_2(s) 
&= 
 \left[
        \begin{array}{cc|c}
         A_1&B_1C_2&B_1D_2\\ 0&A_2&B_2\\ \hline C_1&D_1C_2&D_1D_2
        \end{array}
      \right] 
 = 
 \left[
        \begin{array}{cc|c}
          A_2&0&B_2\\ B_1C_2&A_1&B_1D_2\\ \hline D_1C_2&C_1&D_1D_2
        \end{array}
      \right].
        \label{2.13}
\end{align}\normalsize

\noindent
If \en{ D } is invertible,
the inversion \en{ \tf\inv } 
is well defined and given by
\small\begin{align}
  \tf\inv (s)= \left[
        \begin{array}{c|c}
          A-BD\inv C & BD\inv  \\ \hline -D\inv C & D\inv 
        \end{array}
      \right].
        \label{2.15}
\end{align}\normalsize

\noindent
We also introduce a special type of operation 
that is often used in systems theory:
\small\begin{align}
 \tf\simm (s) 
&\equiv
 \tf\trans(-s)
=
 \dtf{-A\trans}{C\trans}{-B\trans}{D\trans}.
\label{simm}
\end{align}\normalsize

Basically, 
any circuit can be calculated from these operations.
For example,
\small\begin{align}
  [I-\tf_1(s)\tf_2(s)]\inv \tf_1(s) & 
= \left[
    \begin{array}{cc|c}
     A_1+B_1D_2VC_1 & B_1WC_2 & B_1W \\ B_2VC_{1} &
     A_2+B_2VD_1C_2 & B_2VD_1 \\ \hline VC_1 & VD_1C_2 &
     VD_1
    \end{array} 
  \right],
\label{2.14}
\end{align}\normalsize
where 
\en{ V \equiv (I- D_1D_2)\inv, \ W \equiv (I- D_2D_1)\inv }.
We have seen this form in the dressed propagator (\ref{dresstf}).

It is worth noting that these operations are algebraic,
and do not necessarily represent experimental implementations.
For example, 
quantum signals cannot be split 
as in Figure \ref{fig-connection1},
though the parallel connection is often used 
to calculate the Dyson equation in perturbation theory.

\chapter{Dirac field}
\label{chap:df}
\thispagestyle{fancy}

The Dirac equation is briefly reviewed here.
We first show how the Dirac equation is derived from 
the Klein-Gordon equation.
This helps us to understand a relationship between 
the Weyl equation and wave equations,
which leads to a forward traveling field 
in Chapter \ref{chap:fp}.
We also introduce a closed-loop field 
under periodic boundary conditions.
This is used for the formulation of feedback 
in Chapters \ref{chap:feedback} and \ref{sec:tfs}.

\section{The Dirac equation}
\label{sec:dweq}

Let us introduce basic notation first.
\en{ x\uu{\mu}=(t,\bm{x}) } is a four-position.\index{four-position}
If we consider a one-dimensional case, 
it is understood as \en{ x=(t,0,0,z) }.
A four-momentum is\index{four-momentum}
\small\begin{align}
 p\uu{\mu}=(E,\bm{p}) = (\im\partial\dd{t},-\im \nabla).
\end{align}\normalsize
The Minkowski metric\index{Minkowski metric}
is \en{ \eta\uu{\mu\nu}=(+,-,-,-) }.
The inner product of \en{ x } and \en{ p } is written as
\en{ p\cdot x=p^0x^0-\bm{p}\cdot\bm{x} }.

\subsection{The energy-momentum relation}

In special relativity theory,
the energy-momentum equation\index{energy-momentum equation}
is written as
\small\begin{align}
 p\uu{\mu}p\dd{\mu} - m^2
&=
0,
\label{eme}
\end{align}\normalsize
where \en{ m } is a mass parameter.
The Klein-Gordon equation\index{Klein-Gordon equation}
is defined from this relation:
\small\begin{align}
  (p\uu{\mu}p\dd{\mu} - m^2)\phi&=0.
\end{align}\normalsize

Let us rewrite the energy-momentum equation as
\small\begin{align}
 p\uu{\mu}p\dd{\mu} - m^2
&=
 (\gamma\uu{\mu}p\dd{\mu} - m)(\gamma\uu{\nu}p\dd{\nu} + m),
\label{kgf}
\end{align}\normalsize
where \en{ \gamma\uu{\mu} } are coefficients to be calculated.
Note that 
\small\begin{align}
 \mbox{RHS}
&=
 \frac{1}{2}
 \bigl\{\gamma\uu{\mu},\gamma\uu{\nu}\bigr\}
 p\dd{\mu}p\dd{\nu} - m^2.
\end{align}\normalsize
The factorization (\ref{kgf}) is relevant if 
\small\begin{align}
 \{\gamma\uu{\mu},\gamma\uu{\nu}\}&= 2\eta\uu{\mu\nu}.
\label{gamr}
\end{align}\normalsize
\en{ \gamma\uu{\mu} } need to be \en{ 4\times 4 } matrices 
to satisfy this relation.
They are called 
\textit{\en{ \gamma } matrices}\index{\en{ \gamma } matrices}.


\subsection{The Dirac equation}

The Dirac equation\index{Dirac equation}
is defined from the factorization (\ref{kgf}):
\small\begin{align}
 (\gamma\uu{\mu}p\dd{\mu} - m)\psi
 =0.
\end{align}\normalsize
Using 
\en{p\dd{\mu} = (E,-\bm{p}) = \im (\partial\dd{t},\nabla) = \im \partial\dd{\mu}},
this can be rewritten as
\begin{subequations}
\label{dirac1}
\small\begin{align}
 \bigl( \im\fsh{\partial} - m \bigr)\psi&=0,
\\
 \widebar{\psi}\bigl( \im\overleftarrow{\fsh{\partial}} + m \bigr) &=0.
\end{align}\normalsize
\end{subequations}
where 
we have introduced\index{Dirac adjoint}\index{Feynman slash}
\begin{subequations}
\small\begin{align}
\hspace{25mm}
 \widebar{\psi}
&\equiv
 \psi\dgg\gamma^0,
\hspace{5mm}(\mbox{\normalsize Dirac adjoint\small})
\\
\hspace{25mm}
 \fsh{\partial}
&\equiv 
 \gamma\uu{\mu}\partial\dd{\mu}.
\hspace{5mm}(\mbox{\normalsize Feynman slash\small})
\end{align}\normalsize
\end{subequations}

For further study,
let us express \en{ \gamma\uu{\mu} } as
\small\begin{align}
 \gamma\uu{\mu}
&=
 (\gamma^0,\bm{\gamma})
\equiv 
 (\beta,\beta\bm{\alpha}).
\label{gamma}
\end{align}\normalsize
Then (\ref{dirac1}) is written as
\small\begin{align}
\hspace{32mm}
 \im\partial\dd{t}\psi
=
 \ham \psi,
\hspace{7mm}
 \ham
\equiv
 \bm{\alpha}\cdot\bm{p}+\beta m.
\label{diracs}
\end{align}\normalsize
As will be seen in (\ref{dich}),
\en{ \ham } is a Hamiltonian.
The Dirac equation is therefore reduced to 
an eigenvalue problem \en{ \ham \psi = E\psi },
which will be considered in Section \ref{sec:freedirac}.

\subsection{Lagrangian and canonical quantization}

The Lagrangian (density)\index{Lagrangian (Dirac)}
of the Dirac field is given by
\small\begin{align}
 \lag
&=
 \widebar{\psi}\bigl(\im\fsh{\partial}-m\bigr)\psi,
\label{lagd}
\end{align}\normalsize
The Dirac equation is obtained 
from the Euler-Lagrange equation as
\begin{subequations}
\small\begin{align}
 \partial\dd{\mu}\frac{\partial \lag}{\partial(\partial\dd{\mu} \widebar{\psi})}
-
 \frac{\partial \lag}{\partial \widebar{\psi}} 
=
 0
\quad & \Rightarrow \quad 
 \bigl(\im\fsh{\partial}-m \bigr)\psi=0
\\ 
\partial\dd{\mu}\frac{\partial \lag}{\partial(\partial\dd{\mu} \psi)}
-
 \frac{\partial \lag}{\partial \psi} 
=
 0
\quad & \Rightarrow \quad 
 \widebar{\psi} \bigl( \im\overleftarrow{\fsh{\partial}} + m \bigr) = 0.
\end{align}\normalsize
\end{subequations}
The corresponding canonical momenta\index{canonical momentum (Dirac)} 
are defined as 
\begin{subequations}
\label{diracmo}
\small\begin{align}
 \pi_{\alpha}
&\equiv
 \frac{\partial\lag}{\partial (\partial\dd{t} \psi_{\alpha})}
=\im\psi_{\alpha}\dgg,
\\
 \widebar{\pi}_{\alpha}
&\equiv
 \frac{\partial\lag}{\partial (\partial\dd{t} \widebar{\psi}_{\alpha})}
=0.
\end{align}\normalsize
\end{subequations}
The Hamiltonian (density)\index{Hamiltonian (Dirac)}
of the Dirac field is then written as
\small\begin{align}
 \ham
=
 \pi(\partial\dd{t} \psi) - \lag
=
 \psi\dgg \bigl( -\im\bm{\alpha}\cdot\nabla + \beta m \bigr) \psi.
\label{dich}
\end{align}\normalsize

Canonical quantization\index{canonical quantization (Dirac)} 
is introduced to the Dirac field as
\small\begin{align}
 \{\pi_{\alpha}(t,\bm{x}),\psi_{\beta}(t,\bm{x}')\}
&=
 \im\delta_{\alpha\beta}\delta(\bm{x}-\bm{x}'),
\end{align}\normalsize
where \en{ \{\cdot,\cdot\} } is the anticommutator.
From (\ref{diracmo}), 
this turns out to be
\small\begin{align}
 \{\psi_{\alpha}\dgg(t,\bm{x}),\psi_{\beta}(t,\bm{x}')\}
&=
 \delta_{\alpha\beta}\delta(\bm{x}-\bm{x}').
\label{quant}
\end{align}\normalsize

\subsection{$\gamma$ matrices}
\label{gmatrix}

There are three well-known bases 
for the matrix representation of 
\en{ \gamma\uu{\mu} }.\index{\en{ \gamma } matrices}
As will be seen later, 
the Dirac equation can be simplified by choosing 
appropriate basis.

\marginpar{\vspace{+17mm}
\small
$\alpha\uu{i}\equiv\beta\gamma\uu{i}$,
see (\ref{gamma}).\normalsize
 }

Denoted by \en{ \sigma\uu{i} \ (i=1,2,3) } 
are the Pauli matrices.\index{Pauli matrix}
The Dirac basis\index{Dirac basis}
is useful to find free field solutions to the Dirac equation,
as will be seen in Section \ref{sec:freedirac}.
In this basis, 
\en{ \gamma } matrices are expressed as
\small\begin{align}
\hspace{-3mm}
 \gamma^0 = \beta = \A{1}{0}{0}{-1}\otimes I,
\quad
 \gamma\uu{i} =\A{0}{1}{-1}{0}\otimes \sigma\uu{i},
\quad
 \alpha\uu{i} = \A{0}{1}{1}{0}\otimes \sigma\uu{i}.
\label{diracbasis}
\end{align}\normalsize
The second choice is the Weyl basis\index{Weyl basis}
in which the Weyl equation is simplified,
as will be seen in Section \ref{rem:weyl}:
\small\begin{align}
\hspace{-3mm}
 \gamma^0 = \beta = \A{0}{1}{1}{0}\otimes I,
\quad
 \gamma\uu{i} =  \A{0}{1}{-1}{0}\otimes \sigma\uu{i},
\quad
 \alpha\uu{i} = \A{-1}{0}{0}{1}\otimes \sigma\uu{i}.
\label{weylbasis}
\end{align}\normalsize
The last one is the Majorana basis.\index{Majorana basis}
The advantage of this representation is that
the Dirac equation becomes real.
In this basis, \en{ \gamma } matrices are expressed as
\small\begin{align}
 \gamma^0=\A{0}{\sigma^2}{\sigma^2}{0},
\
 \gamma^1=\A{\im \sigma^3}{0}{0}{\im\sigma^3},
\
 \gamma^2=\A{0}{-\sigma^2}{\sigma^2}{0},
\
 \gamma^3=\A{-\im \sigma^1}{0}{0}{-\im\sigma^1}.
\end{align}\normalsize

\begin{remark}
\label{rem:pauli}
Following standard notation, 
we express the Pauli matrices\index{Pauli matrix} as
\small\begin{align}
 \sigma^1=\A{}{1}{1}{},
\qquad
 \sigma^2=\A{}{-\im}{\im}{},
\qquad
 \sigma^3=\A{1}{}{}{-1}.
\label{pauli}
\end{align}\normalsize
For later use,
we define 
raising and lowering matrices\index{raising matrix}\index{lowering matrix} 
as
\small\begin{align}
 \sigma\dd{+} =\ffrac{\sigma^1+\im\sigma^2}{2}=\A{}{1}{0}{},
\qquad
 \sigma\dd{-} = \ffrac{\sigma^1-\im\sigma^2}{2}=\A{}{0}{1}{}.
\label{rlop}
\end{align}\normalsize
\end{remark}

\subsection{The Weyl equation}
\label{rem:weyl}

The Weyl equation,\index{Weyl equation}
the Dirac equation for massless particles,
is written as
\small\begin{align}
 (\im\partial\dd{t} - \bm{\alpha}\cdot \bm{p})\psi
=
 \im(\partial\dd{t} + \bm{\alpha}\cdot \nabla)\psi
=
 0.
\label{weyleq}
\end{align}\normalsize
Let us introduce chirality \en{ \gamma_5 }\index{chirality} as
\small\begin{align}
 \gamma_5
&\equiv
 \im \gamma^0 \gamma^1 \gamma^2 \gamma^3,
\end{align}\normalsize
and projections \en{ L,R } as
\begin{subequations}
\small\begin{align}
 L
&\equiv
 \frac{1}{2}(I-\gamma_5),
\\
 R
&\equiv
 \frac{1}{2}(I+\gamma_5).
\label{lrchiral}
\end{align}\normalsize
\end{subequations}
Left- and right-chiral two-component Weyl spinors
\en{ \psi_L,\psi_R } are defined as 
\begin{subequations}
\small\begin{align}
\psi_L &\equiv L\psi,
\\
\psi_R &\equiv R\psi.
\end{align}\normalsize
\end{subequations}

In the Weyl basis (\ref{weylbasis}),
\small\begin{align}
 \gamma_5
=
 \A{-1}{}{}{1}\otimes I_{2},
\quad
 L
=
 \A{1}{}{}{0}\otimes I_{2},
\quad
 R
=
 \A{0}{}{}{1}\otimes I_{2},
\end{align}\normalsize
hence the Weyl field is expressed as
\small\begin{align}
 \psi
=
 \AV{\psi_L}{\psi_R}.
\end{align}\normalsize
Chirality is conserved in the Weyl equation
because (\ref{weyleq}) is written as
\small\begin{align}
 \A{\im(\partial\dd{t}-\bm{\sigma}\cdot\nabla)}{}
   {}{\im(\partial\dd{t}+\bm{\sigma}\cdot\nabla)}
 \AV{\psi_L}{\psi_R}
=
 0.
\label{weylmat}
\end{align}\normalsize
As the Dirac field satisfies the Klein-Gordon equation,
the left- and right-chiral Weyl spinors satisfy wave equations.
In fact, 
using \en{ \{\sigma\dd{i},\sigma\dd{j}\}=2\delta\dd{ij} },
we can express a wave equation as
\small\begin{align}
 -(\partial\dd{t}^2-\nabla^2)\psi_{L,R}
&=
 \Bigl[\im(\partial\dd{t}-\bm{\sigma}\cdot\nabla)\Bigr]
 \Bigl[\im(\partial\dd{t}+\bm{\sigma}\cdot\nabla)\Bigr]
 \psi_{L,R}
=0.
\label{lrwave}
\end{align}\normalsize

Let us consider a one-dimensional case \en{ x=(t,0,0,z) }.
The Weyl equation (\ref{weylmat}) is written as
\small\begin{align}
  \left[
    \begin{array}{cc:cc} 
        \im(\partial\dd{t}-\partial\dd{z}) & & & \\      
      & \im(\partial\dd{t}+\partial\dd{z}) & & \\ \hdashline
    & & \im(\partial\dd{t}+\partial\dd{z}) & \\
  & & & \im(\partial\dd{t}-\partial\dd{z})
    \end{array}
  \right]
  \left[
    \begin{array}{c} 
        \psi\dd{L(+)} \\   
        \psi\dd{L(-)} \\ \hdashline
        \psi\dd{R(+)} \\      
        \psi\dd{R(-)}
    \end{array}
  \right]
=0,
\label{chiralweyl}
\end{align}\normalsize
where \en{ (+) } and \en{ (-) } represent 
spin up and down along the \en{ z } axis, 
respectively.
Let us define 
\small\begin{align}
\partial\dd{\pm}
\equiv
 \partial\dd{t} \pm \partial\dd{z},
\quad
 \phi_F
\equiv
 \AV{\psi\dd{L(-)}}{\psi\dd{R(+)}},
\quad
 \phi_B
\equiv
 \AV{\psi\dd{L(+)}}{\psi\dd{R(-)}}.
\end{align}\normalsize
The Weyl equation is then rewritten as
\small\begin{align}
 \A{\im\partial\dd{+}}{}
   {}{\im\partial\dd{-}}
 \AV{\phi_F}{\phi_B}
&=0.
\label{wpm}
\end{align}\normalsize
\en{ \phi_{F,B} } are forward (backward) traveling waves.
They are actually solutions to the wave equation:
\small\begin{align}
 -(\partial\dd{t}^2-\partial\dd{z}^2)\phi_{F,B}
&=
(\im\partial\dd{-})(\im\partial\dd{+})\phi_{F,B} 
=
 0.
\end{align}\normalsize
In conclusion,
we have the following relationship:

\vspace{5mm}

\begin{picture}(300,70)(70,0)
\put(100,50){\fbox{Klein-Gordon equation}}
\put(225,55){\vector(1,0){60}}
\put(300,50){\fbox{Dirac equation}}
\put(227,45){factorization}
\put(227,11){factorization}
\put(175,26){ $m=0$ }
\put(170,40){\vector(0,-1){20}}
\put(300,26){ $m=0$ }
\put(340,40){\vector(0,-1){20}}
\put(138,0){\fbox{Wave equation}} 
\put(225,5){\vector(1,0){60}}
\put(300,0){\fbox{Weyl equation}}
\end{picture}
\vspace{3mm}

\newpage

\section{Plane wave solutions to the Dirac equation}
\label{sec:freedirac}

\subsection{Eigenvalues and eigenvectors}

Let us consider the Dirac equation (\ref{diracs})
in the Dirac basis (\ref{diracbasis}):
\small\begin{align}
 \im\partial\dd{t}\psi
&=
 (\bm{\alpha}\cdot\bm{p}+\beta m) \psi
=
 \A{m}{\bm{p}\cdot\bm{\sigma}}{\bm{p}\cdot\bm{\sigma}}{-m}
 \psi.
\end{align}\normalsize
Assume a plane wave solution of the form
\small\begin{align}
 \psi(x)
&=
 w \ex\uu{-\im p \cdot x},
\qquad
 \left(p\cdot x = Et-\bm{p}\cdot\bm{x}\right)
\end{align}\normalsize
where \en{ w } is a four-component vector.
The Dirac equation is now reduced to an eigenvalue problem
\small\begin{align}
 \A{m}{\bm{p}\cdot\bm{\sigma}}{\bm{p}\cdot\bm{\sigma}}{-m}
 w
=
 E w.
\end{align}\normalsize
This matrix has positive and negative eigenvalues 
\en{ E=\pm E\dd{\bm{p}} }, \  
\en{ \bigl(E\dd{\bm{p}}=\sqrt{\bm{p}^2+m^2}\ge 0 \bigr) }.
The corresponding eigenvectors are given by
\begin{subequations}
\label{eigenw}
 \small\begin{align}
&
  E = +E\dd{\bm{p}}: \quad
 w\uu{(+)}(\bm{p},s)
=
 \sqrt{\ffrac{E\dd{\bm{p}}+m}{2m}}
 \AV{1}
    {+\ffrac{\bm{p}\cdot\bm{\sigma}}{E\dd{\bm{p}}+m}}
 \otimes \chi\dd{s},
\label{uu}\\  &
 E = -E\dd{\bm{p}}: \quad
 w\uu{(-)}(\bm{p},s)
=
 \sqrt{\ffrac{E\dd{\bm{p}}+m}{2m}}
 \AV{-\ffrac{\bm{p}\cdot\bm{\sigma}}{E\dd{\bm{p}}+m}}
    {1}
 \otimes \chi\dd{s},
 \end{align}\normalsize
\end{subequations}
where the normalization has been chosen as
\en{ \widebar{w}\uu{(\pm)} w\uu{(\pm)} =\pm 1 },
and 
\en{ \chi\dd{s} } 
is a two-spinor\index{spinor} 
\small\begin{align}
 \ffrac{\bm{p}\cdot\bm{\sigma}}{|\bm{p}|}\chi\dd{s}
=
 s\chi\dd{s}, \quad (s=\pm 1.)
\end{align}\normalsize

Conventionally, the eigenvectors are redefined as
\begin{subequations}
\label{uv}
\small\begin{align}
&
 u(\bm{p},s)
\equiv
 w\uu{(+)}(+\bm{p},+s) 
=
 \sqrt{\ffrac{E\dd{\bm{p}}+m}{2m}}
 \AV{1}{\ffrac{\bm{p}\cdot\bm{\sigma}}{E\dd{\bm{p}}+m}}\otimes \chi\dd{s},
\\ &
v(\bm{p},s)
\equiv
 w\uu{(-)}(-\bm{p},-s)
=
 \sqrt{\ffrac{E\dd{\bm{p}}+m}{2m}}
 \AV{\ffrac{\bm{p}\cdot\bm{\sigma}}{E\dd{\bm{p}}+m}}{1}\otimes \chi\dd{-s}.
\end{align}\normalsize
\end{subequations}
These vectors satisfies
\begin{subequations}
\label{uvorth1}
\small\begin{align}
 u\dgg(\bm{p},s)u(\bm{p},s') &= \ffrac{E\dd{\bm{p}}}{m} \delta\dd{ss'},
&
 \widebar{u}(\bm{p},s)u(\bm{p},s') &= \delta\dd{ss'},
\\
 v\dgg(\bm{p},s)v(\bm{p},s') & =\ffrac{E\dd{\bm{p}}}{m} \delta\dd{ss'},
&
 \widebar{v}(\bm{p},s)v(\bm{p},s') &= \delta\dd{ss'},
\\
 u\dgg(\bm{p},s)v(-\bm{p},s') &= 0,
&
 \widebar{u}(\bm{p},s)v(\bm{p},s') &= 0.
\end{align}\normalsize
\end{subequations}

\if0
These vectors are \textit{not} normalized:
\small\begin{align}
\hspace{-2mm}
 u\dgg(\bm{p},s)u(\bm{p},s')=\ffrac{E\dd{\bm{p}}}{m} \delta\dd{ss'},
\quad
 v\dgg(\bm{p},s)v(\bm{p},s')=\ffrac{E\dd{\bm{p}}}{m} \delta\dd{ss'},
\quad
 u\dgg(\bm{p},s)v(-\bm{p},s')=0.
\label{uvorth1}
\end{align}\normalsize
However, they are orthonormalized in the following sense:
\small\begin{align}
 \widebar{u}(\bm{p},s)u(\bm{p},s')=\delta\dd{ss'},
\quad
 \widebar{v}(\bm{p},s)v(\bm{p},s')=\delta\dd{ss'},
\quad
 \widebar{u}(\bm{p},s)v(\bm{p},s')=0.
\label{uvorth2}
\end{align}\normalsize
\fi

Let us define projection operators 
onto positive and negative energy subspaces as
\begin{subequations}
\label{projections}
\small\begin{align}
&
 \Lambda\dd{+}
\equiv 
 +\sum\dd{s} u(\bm{p},s)\widebar{u}(\bm{p},s)
=
 \ffrac{1}{2m}
 \A{E\dd{\bm{p}}+m}{-\bm{p}\cdot\bm{\sigma}}
   {\bm{p}\cdot\bm{\sigma}}{-E\dd{\bm{p}}+m}
=
 \ffrac{\fsh{p}+m}{2m},
\\ &
 \Lambda\dd{-}
\equiv 
 -\sum\dd{s} v(\bm{p},s)\widebar{v}(\bm{p},s)
=
 \ffrac{1}{2m}
 \A{-E\dd{\bm{p}}+m}{\bm{p}\cdot\bm{\sigma}}
   {-\bm{p}\cdot\bm{\sigma}}{E\dd{\bm{p}}+m}
=
 \ffrac{-\fsh{p}+m}{2m}.
\end{align}\normalsize
\end{subequations}
Note that there is a minus sign 
in the definition of \en{ \Lambda\dd{-} }.
It is not difficult to see
\small\begin{align}
 & \Lambda^2\dd{\pm} = \Lambda\dd{\pm},
\qquad
 \Lambda\dd{\pm} \Lambda\dd{\mp} = 0,
\qquad
 \Lambda\dd{+} + \Lambda\dd{-} = I.
\end{align}\normalsize

\subsection{Free field solutions}
\label{sec:ffsdirac}

Eventually, 
positive and negative energy solutions are given as
\begin{subequations}
\small\begin{align}
&
 \psi\uu{(+)}(x)
=
 {\displaystyle \sum\dd{s}}
 \intt\ffrac{d^3p}{(2\pi)^{3/2}} \sqrt{\ffrac{m}{E\dd{\bm{p}}}}
 \, a(\bm{p},s) \, u(\bm{p},s) \,
 \ex\uu{-\im p\cdot x},
\\ &
 \psi\uu{(-)}(x)
=
 {\displaystyle \sum\dd{s}}
 \intt\ffrac{d^3p}{(2\pi)^{3/2}} \sqrt{\ffrac{m}{E\dd{\bm{p}}}}
 \, b\dgg(\bm{p},s) \, v(\bm{p},s) \,
 \ex\uu{\im p\cdot x},
\end{align}\normalsize
\end{subequations}
where 
\en{ \sqrt{m/E\dd{\bm{p}}} } comes from (\ref{uvorth1}).
This factor is not necessary, 
but it simplifies canonical quantization,
as will be seen in (\ref{quantab}).
A general solution is given as
\small\begin{align}
 \psi(x)
=
  \psi\uu{(+)}(x) + \psi\uu{(-)}(x).
\label{dsfree}
\end{align}\normalsize
It is easy to see that 
\en{ \Lambda\dd{\pm} } are projections 
onto positive and negative energy spaces:
\small\begin{align}
 \Lambda\dd{\pm} \psi
=
 \psi\uu{(\pm)}.
\end{align}\normalsize

\subsection{Creation and annihilation operators}

It follows from (\ref{uvorth1}) that 
\begin{subequations}
\label{abext}
\small\begin{align}
 a(\bm{p},s)
&=
 \sqrt{\frac{m}{E\dd{\bm{p}}}}
 \int \frac{d^3x}{(2\pi)^{3/2}}  \, 
 \ex\uu{\im p\cdot x} \, 
 u\dgg(\bm{p},s) \, 
 \psi(x),
\\
 b\dgg(\bm{p},s)
&=
 \sqrt{\frac{m}{E\dd{\bm{p}}}}
 \int \frac{d^3x}{(2\pi)^{3/2}}  \, 
 \ex\uu{-\im p\cdot x} \, 
 v\dgg(\bm{p},s) \, 
 \psi(x).
\end{align}\normalsize
\end{subequations}
Canonical quantization (\ref{quant}) is rewritten as
\small\begin{align}
 \{a(\bm{p},s),a\dgg(\bm{p}',s')\}
=
 \{b\dgg(\bm{p},s),b(\bm{p}',s')\}
=
 \delta\dd{ss'}\delta(\bm{p}-\bm{p}')&,
\label{quantab}
\end{align}\normalsize
and all other anticommutation relations are zero.
\en{ a\dgg, b\dgg } are regarded as the creation operators 
of particles and antiparticles\index{antiparticle}, 
respectively.
The Hamiltonian (\ref{dich}) can be expressed as
\small\begin{align}
 \ham
&=
 \sum\dd{s} \int d^3p \ E\dd{\bm{p}} 
\Bigl[
 a\dgg(\bm{p},s) a(\bm{p},s) + b\dgg(\bm{p},s)b(\bm{p},s)
\Bigr].
\end{align}\normalsize
The first and second terms represents the energy 
of particles and antiparticles, 
respectively.


\section{Transfer function of the Dirac field}
\label{sec:dfde}

Here we introduce transfer functions through two steps:
a contraction \en{ \wick{\cdot}{\cdot} } 
and 
a double dagger \en{ \ddagger }.
These symbols are unconventional, 
but they are convenient when we consider quantum gates.

\begin{definition}
For operators \en{ A } and \en{ B }, 
a contraction\index{contraction (Dirac)} is defined as
\begin{subequations}
\label{def:contractiond}
\small\begin{align}
 \wick{A(x_2)}{B(x_1)}\dd{\alpha\beta}
&\equiv
 \contraction{}{\phi}{\hspace{5mm}}{\hspace{0mm}}
 A\dd{2\alpha}B\dd{1\beta}
\\ &= 
 \mzero{\T \ A\dd{2\alpha} B\dd{1\beta}}
\\ &=
 \mzero{\step(t_2-t_1) \, A\dd{2\alpha} B\dd{1\beta}
       -\step(t_1-t_2) \, B\dd{1\beta}  A\dd{2\alpha}},
\end{align}\normalsize
\end{subequations}
where \en{ \T } is the time-ordering operator\index{time-ordering operator} 
and 
\en{ \step(t) } is the Heaviside step function.
We have also simplified the notation as 
\en{ A\dd{\alpha}(x_2)=A\dd{2\alpha} }.
\end{definition}

By definition, it follows that 
\small\begin{align}
-\wick{A_2}{B_1}\dd{\alpha\beta}
 &= 
 \wick{B_1}{A_2}\dd{\alpha\beta}.
\label{abba}
\end{align}\normalsize
It is also easy to show that 
for the Dirac field \en{ \psi } and an arbitrary operator \en{ B },
\begin{subequations}
\label{dicont}
\small\begin{align}
 (\im\fsh{\partial}-m) \wick{\psi(x)}{B(0)}
&=
 +\im\delta(t)\gamma^0\bra{0}\{\psi(x),B(0)\}\ket{0},
\label{dicont-1} \\
\wick{B(0)}{\widebar{\psi}(x)}  (\im\overleftarrow{\fsh{\partial}}+m)
&=
 -\im\delta(t)\bra{0}\{B(0),\widebar{\psi}(x)\}\ket{0}\gamma^0.
\end{align}\normalsize
 \end{subequations}

\begin{definition}
Given an operator \en{ A }, 
its double dagger \en{ A\ddgg }\index{$\ddagger$ (Dirac)} 
is defined as an operator satisfying
\small\begin{align}
 \{A\dd{\alpha}(x_2),A\ddgg\dd{\beta}(x_1)\}
&=
 \gamma^0\dd{\alpha\beta} \delta(\bm{x}_2-\bm{x}_1).
\label{def:shadirac}
\end{align}\normalsize
\end{definition}

Note that \en{ A\ddgg } does not necessarily exist for all \en{ A }.
Since \en{ \gamma^0 } is self-adjoint,
\small\begin{align}
 \{A\dd{\alpha},A\ddgg\dd{\beta}\}
&= 
 \gamma^0\dd{\beta\alpha}=\{A\ddgg\dd{\alpha},A\dd{\beta}\}.
\end{align}\normalsize
Hence, 
if \en{ A\ddgg } exists, 
it satisfies 
\small\begin{align}
 (A\ddgg)\ddgg&=A.
\end{align}\normalsize
For the Dirac field \en{ \psi },
\small\begin{align}
 \psi\ddgg &=\widebar{\psi},
\qquad
 (\widebar{\psi})\ddgg =\psi.
\end{align}\normalsize

\begin{definition}
\label{def:diractf}
A transfer function\index{transfer function (Dirac)} 
from \en{ B(x_1)=B_1 } to \en{ A(x_2)=A_2 } is defined as
\small\begin{align}
\ipg\dd{A|B}(x_2,x_1)
&\equiv
 \wick{A_2}{B\ddgg_1}.
\label{ediractf}
\end{align}\normalsize
\end{definition}

If \en{ A=B=\psi },
this is a Green's function of the Dirac equation:
\small\begin{align}
\hspace{23mm}
 (\im\fsh{\partial}-m) \pg\dd{\psi|\psi}(x,0) 
= 
 \delta^{(4)}(x).
\hspace{7mm}
\because (\ref{dicont-1})
\label{feygreen}
\end{align}\normalsize
\en{ \pg\dd{\psi|\psi} } is 
is known as the Feynman propagator.\index{Feynman propagator}

\begin{theorem}
Let \en{ q=(q^0,\bm{q}) } be a four-momentum.
The Feynman propagator is expressed in momentum space as
\begin{subequations}
\label{tfdf}
\small\begin{align}
 \ipg\dd{\psi|\psi}(q)
&=
 \bigl(-\im\fsh{q} +\im m + \epsilon \bigr)\inv .
\\
 \ipg\dd{\widebar{\psi}|\widebar{\psi}}(q)
&=
 \bigl(-\im\fsh{q} -\im m + \epsilon \bigr)\inv .
\end{align}\normalsize
\end{subequations}
\end{theorem}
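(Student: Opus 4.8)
First I would extract $\ipg\dd{\psi|\psi}(q)$ by Fourier-transforming the Green's-function identity (\ref{feygreen}). By Definition \ref{def:diractf}, $\ipg\dd{\psi|\psi}(x,0)=\wick{\psi(x)}{\widebar{\psi}(0)}$ (recall $\psi\ddgg=\widebar{\psi}$), so (\ref{dicont-1}) with $B=\widebar{\psi}$, together with (\ref{def:shadirac}), gives $(\im\fsh{\partial}-m)\,\ipg\dd{\psi|\psi}(x,0)=\im\,\delta^{(4)}(x)$. With the convention $\ipg\dd{\psi|\psi}(x,0)=\int\frac{d^4q}{(2\pi)^4}\,\ipg\dd{\psi|\psi}(q)\,\ex\uu{-\im q\cdot x}$ — under which $\partial\dd{\mu}\mapsto-\im q\dd{\mu}$ and hence $\im\fsh{\partial}\mapsto\fsh{q}$ — this becomes the algebraic equation $(\fsh{q}-m)\,\ipg\dd{\psi|\psi}(q)=\im$. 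I would then invert it by multiplying on the right by $\fsh{q}+m$ and using the Clifford relation (\ref{gamr}): $\fsh{q}\,\fsh{q}=\gamma\uu{\mu}\gamma\uu{\nu}q\dd{\mu}q\dd{\nu}=\tfrac{1}{2}\{\gamma\uu{\mu},\gamma\uu{\nu}\}q\dd{\mu}q\dd{\nu}=\eta\uu{\mu\nu}q\dd{\mu}q\dd{\nu}=q^2$, so that $(\fsh{q}-m)(\fsh{q}+m)=q^2-m^2$ is a scalar; hence $(\fsh{q}-m)\inv=(\fsh{q}+m)/(q^2-m^2)$ and $\ipg\dd{\psi|\psi}(q)=\im\,(\fsh{q}-m)\inv=(-\im\fsh{q}+\im m)\inv$. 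This is the claimed formula apart from the $\epsilon$, i.e., apart from the choice of contour around the mass shell $q^2=m^2$.

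That choice is where the real work lies, since the Green's-function equation alone is solved equally well by the retarded, advanced and Feynman kernels. The plan is to compute $\wick{\psi(x)}{\widebar{\psi}(0)}$ directly from the mode expansion (\ref{dsfree}) and the anticommutators (\ref{quantab}): for $t>0$ only the $\mzero{a\,a\dgg}$ channel contributes, and with $\sum\dd{s}u(\bm{p},s)\widebar{u}(\bm{p},s)=\Lambda\dd{+}=(\fsh{p}+m)/2m$ from (\ref{projections}) one gets $\step(t)\int\frac{d^3p}{(2\pi)^3\,2E\dd{\bm{p}}}(\fsh{p}+m)\,\ex\uu{-\im p\cdot x}$ at $p^0=E\dd{\bm{p}}$; for $t<0$ only the $\mzero{b\,b\dgg}$ channel contributes, and with $\sum\dd{s}v(\bm{p},s)\widebar{v}(\bm{p},s)=-\Lambda\dd{-}=(\fsh{p}-m)/2m$ one gets the $\step(-t)$ companion (now with $\fsh{p}-m$ and $\ex\uu{+\im p\cdot x}$). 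Reading the on-shell energy denominators as residues, this step-function form is exactly what one obtains by closing the $q^0$-contour of $\int\frac{d^4q}{(2\pi)^4}\frac{\im\,(\fsh{q}+m)}{q^2-m^2+\im\epsilon}\,\ex\uu{-\im q\cdot x}$ below for $t>0$ and above for $t<0$ (poles at $q^0=\pm(E\dd{\bm{p}}-\im\epsilon)$). Absorbing the overall $\im$ into the inverse and using $-\im\cdot\im\epsilon=\epsilon$, this gives precisely $\ipg\dd{\psi|\psi}(q)=(-\im\fsh{q}+\im m+\epsilon)\inv$, the first identity.

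The $\widebar{\psi}$ propagator $\ipg\dd{\widebar{\psi}|\widebar{\psi}}(x,0)=\wick{\widebar{\psi}(x)}{\psi(0)}$ (using $(\widebar{\psi})\ddgg=\psi$) would be handled the same way, now starting from the right-acting identity in (\ref{dicont}) in place of (\ref{dicont-1}); equivalently it follows from the first identity by the substitution $q\to-q$ together with the fermionic sign in (\ref{abba}), which turn $-\im\fsh{q}+\im m$ into $-\im\fsh{q}-\im m$. I expect the main obstacle to be the step that fixes $\epsilon$: the Fourier transform and the Clifford-algebra inversion are routine, but showing that the time-ordered contraction (\ref{def:contractiond}) selects the $+\im\epsilon$ (Feynman) contour rather than a retarded or advanced one genuinely requires the positive/negative-frequency mode sum above — or an equivalent causality argument — and a secondary care point is tracking the fermionic sign from (\ref{abba}), as well as the transpose on the Dirac indices, when carrying the result over to the $\widebar{\psi}$ case.
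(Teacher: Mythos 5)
Your treatment of the first identity is correct and rests on the same key computation as the paper: the mode expansion (\ref{dsfree}) with (\ref{quantab}) and the projections (\ref{projections}) gives the step-function form $\step(t)\,\Lambda\dd{+}\ex\uu{-\im p\cdot x}+\step(-t)\,\Lambda\dd{-}\ex\uu{\im p\cdot x}$, and it is this structure that selects the Feynman contour. The only difference is packaging: you invert the Fourier-transformed Green's-function identity (\ref{feygreen}) via the Clifford relation and then confirm the $+\im\epsilon$ prescription by closing the $q^0$ contour against the mode sum, whereas the paper runs the same calculation forward, inserting the integral representation (\ref{deltak}) of the step function and changing variables to reach (\ref{fprop2}); these are the same argument read in opposite directions, and your residue bookkeeping (poles at $q^0=\pm(E\dd{\bm{q}}-\im\epsilon)$, closing below for $t>0$ and above for $t<0$) is right.

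The soft spot is the second line. The shortcut ``$q\to-q$ plus the sign from (\ref{abba})'' does flip $+\im m$ to $-\im m$, but carried out as an identity it gives $\ipg\dd{\widebar{\psi}|\widebar{\psi}}(q)=-\bigl[\ipg\dd{\psi|\psi}(-q)\bigr]\trans=\bigl(-\im\fsh{q}\trans-\im m-\epsilon\bigr)\inv$: the transpose survives because (\ref{abba}) is element-wise, and the infinitesimal flips along with everything else, so the displayed $+\epsilon$ form is not literally what this substitution delivers. Since fixing the contour is, as you yourself stress, where the real work lies, the barred case should be settled by your primary route rather than the symmetry trick: repeat the mode-expansion computation for $\widebar{\psi}$ (or use the right-acting identity in (\ref{dicont})), where $\step(t)$ again multiplies the positive-frequency factor $\ex\uu{-\im p\cdot x}$ and one finds the numerator $\fsh{q}\trans-m$ over the Feynman denominator $q^2-m^2+\im\epsilon$ — in the compact notation $\bigl(-\im\fsh{q}\trans-\im m-\epsilon\bigr)\inv$, so the sign of $\epsilon$ and the Dirac-index arrangement in the barred line are exactly the details that need to be pinned down explicitly. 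Note the paper's own proof derives only the first line and asserts the second, so spelling this out costs you nothing relative to the paper and removes the one step of your argument that, as written, would fix the contour by an identity that actually reverses it.
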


\begin{proof}
Recall that 
the free Dirac field has been given in (\ref{dsfree}).
Using the projections (\ref{projections})
and 
anticommutation relation (\ref{quantab}),
we have
\begin{subequations}
\label{fprop1}
\small\begin{align}
 \ipg\dd{\psi|\psi}(x)
&=
 \int\frac{d^3p}{(2\pi)^3}\frac{m}{E\dd{\bm{p}}}
\left[
 \step(t)  \Lambda\dd{+} \ex\uu{-\im p\cdot x}
+\step(-t) \Lambda\dd{-} \ex\uu{\im p\cdot x}
\right]
\label{fprop1-1}\\ &=
 \int\frac{d^3p}{(2\pi)^3}
\left[\step(t)  \frac{\fsh{p}+m}{2E\dd{\bm{p}}} \ex\uu{-\im p\cdot x}
     -\step(-t) \frac{\fsh{p}-m}{2E\dd{\bm{p}}} \ex\uu{\im p\cdot x}
\right]
\\ &=
 \int\frac{d^3p dk}{(2\pi)^4\im}
\left[\frac{E\dd{\bm{p}}\gamma^0-\bm{p}\cdot\bm{\gamma}+m}{2E\dd{\bm{p}}}
      \frac{\ex\uu{-\im (E\dd{\bm{p}}-k)t+\im\bm{p}\cdot\bm{x}}}{k-\im \epsilon}
\right.\nonumber \\ &\hspace{25mm} \left.
     -\frac{E\dd{\bm{p}}\gamma^0-\bm{p}\cdot\bm{\gamma}-m}{2E\dd{\bm{p}}}
      \frac{\ex\uu{\im (E\dd{\bm{p}}-k)t-\im\bm{p}\cdot\bm{x}}}{k-\im \epsilon}
\right].
\label{fprop1-2}
\end{align}\normalsize
\end{subequations}
In the last line,
we have used an integral representation of the step function
\small\begin{align}
 \step(t)
=
 \lim\dd{\epsilon\to 0\dd{+}}
 \int \frac{dk}{2\pi\im}\frac{\ex\uu{\im kt}}{k-\im\epsilon}.
\label{deltak}
\end{align}\normalsize
To rewrite (\ref{fprop1-2}),
let us introduce the following change of variables:
\begin{subequations}
\small\begin{align}
 \mbox{\normalsize first term \small}&
 \kakkon{q^0\equiv E\dd{\bm{p}}-k,}{\bm{q}\equiv \bm{p},}
\quad
 \mbox{\normalsize second term \small}
 \kakkon{q^0\equiv -E\dd{\bm{p}}+k,}{\bm{q}\equiv -\bm{p}.}
\end{align}\normalsize
\end{subequations}
Note that \en{ E\dd{\bm{p}}=\sqrt{\bm{p}^2+m^2}=E\dd{\bm{q}} }.
Then we have 
\begin{subequations}
\label{fprop2}
\small\begin{align}
 \ipg\dd{\psi|\psi}(x)
&=
 \int\frac{d^4q}{(2\pi)^4\im} \frac{1}{2E\dd{\bm{q}}}
\left[
 \frac{E\dd{\bm{q}}\gamma^0-\bm{q}\cdot\bm{\gamma}+m}
      {E\dd{\bm{q}}-q^0-\im\epsilon}
-\frac{E\dd{\bm{q}}\gamma^0 +\bm{q}\cdot\bm{\gamma}-m}
      {E\dd{\bm{q}}+q^0-\im\epsilon}
\right]
 \ex\uu{-\im q\cdot x}
\\ &=
 \int\frac{d^4q}{(2\pi)^4}
\frac{\fsh{q}+m}
     {-\im(q^2 - m^2 +\im\epsilon)}
\ex\uu{-\im q\cdot x}
\\ &=
 \int\frac{d^4q}{(2\pi)^4}
\frac{1}
     {-\im\fsh{q} +\im m +\epsilon}
\ex\uu{-\im q\cdot x},
\end{align}\normalsize
\end{subequations}
which completes the assertion.
Note that in the second line, 
we have replaced \en{ 2E\dd{\bm{q}}\epsilon \to \epsilon }
because 
\en{ \epsilon\to 0\dd{+} }.
In the third line, 
we have used
\small\begin{align}
\hspace{20mm}
 \frac{\fsh{q}+z}{q^2-z^2}
=
 \frac{1}{\fsh{q}-z}, 
\qquad
 (z=m-\im \epsilon)
\end{align}\normalsize
where the RHS is understood as an inverse matrix.
\end{proof}

\newpage

\subsection{\en{ \bm{p}= 0 } and the Laplace transform}
\label{sec:p=0}

\begin{wrapfigure}[0]{r}[53mm]{49mm}
\centering
\vspace{10mm}
\includegraphics[keepaspectratio,width=40mm]{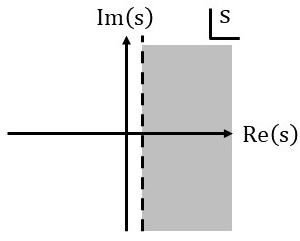}
\caption{
\small
Region of convergence for the (1,1)-element 
(causal function) of (\ref{fprop5}).
\normalsize
}
\label{fig-roc-1}
\end{wrapfigure}

As a special case of the Feynman propagator,
let us consider a Dirac field \en{ \mas } with zero momentum.
Setting \en{ \bm{p}=0 } in (\ref{fprop1-1}),
we have (in the Dirac basis)
\small\begin{align}
 \ipg\dd{\mas|\mas}(t)
&=
 \A{\step(t) \ex\uu{-\im m t}}{}
   {}{\step(-t) \ex\uu{ \im m t}}.
\label{fprop4}
\end{align}\normalsize
The positive and negative energy components are 
expressed by causal and anticausal functions, 
respectively.
Since this propagator is a function of time,
we use the two-sided (bilateral) 
Laplace transform\index{two-sided Laplace transform}
to get a frequency-domain expression:
\small\begin{align}
 \ipg\dd{\mas|\mas}(s)
&=
 \A{\ffrac{1}{s+\im m}}{}
   {}{\ffrac{1}{-s+\im m}},
\label{fprop5}
\end{align}\normalsize
where the region of convergence is \en{ \rea(s)>0 } (\en{ \rea(s)<0 })
for the causal (anticausal) function
(Figure \ref{fig-roc-1}).
This is also obtained 
by setting 
\en{ s\equiv -\im q^0 } and \en{ \bm{q}=0 } 
in (\ref{tfdf}):
\small\begin{align}
\ipg\dd{\mas|\mas}(s) 
&=
 [s\gamma^0 +\im m+\epsilon]\inv 
=
 \A{\ffrac{1}{s+\im m +\epsilon}}{}
 {}{\ffrac{1}{-s+\im m +\epsilon}}.
\label{fourier-1}
\end{align}\normalsize
In this expression, the positive and negative energy components are, 
respectively,
expressed by stable and unstable transfer functions
(Figure \ref{fig-roc-2}).

\begin{wrapfigure}[0]{r}[53mm]{49mm}
\centering
\vspace{-30mm}
\includegraphics[keepaspectratio,width=40mm]{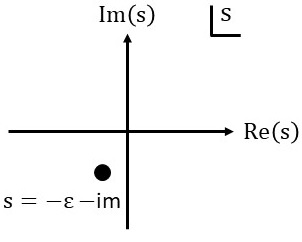}
\caption{
\small
Stable pole of the (1,1)-element 
of (\ref{fourier-1}).
\normalsize
}
\label{fig-roc-2}
\end{wrapfigure}

This result is also understood
from the perspective of Sato's hyperfunctions.\index{hyperfunction}
In the Dirac basis,
(\ref{feygreen}) is written for \en{ x=(t,0,0,0) } as
\small\begin{align}
 \A{\left(\im \partial\dd{t} - m \right) \ipg\uu{(+)}(t)}
 {}{}
 { \left(-\im \partial\dd{t} - m \right) \ipg\uu{(-)}(t)}
&=
 \im \delta(t).
\label{sato0}
\end{align}\normalsize
Let us consider the (1,1)-element. After the Fourier transform,
it is expressed as
\small\begin{align}
 (\omega - m) \ \ipg\uu{(+)}(\omega)
&=
 \im.
\label{sato1}
\end{align}\normalsize
In the hyperfunction approach,
we regard \en{ \omega } as a complex number 
and express an analytic function \en{ f(t) } as
\small\begin{align}
 f(t) 
\mapsto
 [f(\omega)\step\dd{\ima(\omega)>0}] 
=  
 [-f(\omega)\step\dd{\ima(\omega)<0}],
\end{align}\normalsize
where \en{ \step\dd{\ima(\omega)\lessgtr 0} } are 
indicator (step) functions.
The RHS of (\ref{sato1}) is then written as
\small\begin{align}
 [\im\step\dd{\ima(\omega)>0}]
=
 [-\im\step\dd{\ima(\omega)<0}].
\end{align}\normalsize
Accordingly, 
(\ref{sato1}) has two independent solutions 
\begin{subequations}
\label{sato4}
\small\begin{align}
 \ipg\drm{R}\uu{(+)}(\omega)
&=
 [\step\dd{\ima(\omega)>0} \ \frac{\im}{\omega - m}],
\\
 \ipg\drm{A}\uu{(+)}(\omega)
&=
 [\step\dd{\ima(\omega)<0} \ \frac{-\im}{\omega - m}].
\end{align}\normalsize
\end{subequations}
The inverse Fourier transform is defined as
\begin{subequations}
\small\begin{align}
 \ipg\drm{R}\uu{(+)}(t)
&=
 \int_{-\infty +\im\epsilon}^{\infty +\im\epsilon} 
 \frac{dt}{2\pi} \ex\uu{-\im\omega t}
 \frac{\im}{\omega-m}
=
 \step(t)\ex\uu{-\im m t},
\\
 \ipg\drm{A}\uu{(+)}(t)
&=
 \int_{-\infty -\im\epsilon}^{\infty -\im\epsilon} 
 \frac{dt}{2\pi} \ex\uu{-\im\omega t}
 \frac{-\im}{\omega-m}
=
 -\step(-t)\ex\uu{-\im m t},
\end{align}\normalsize
\end{subequations}
where 
\en{ \epsilon>0 }.
This means that 
\en{\ipg\drm{R}\uu{(+)}} and \en{\ipg\drm{A}\uu{(+)}} 
are retarded and advanced Green's functions, 
respectively.
\index{retarded Green's function}
\index{advanced Green's function}
It is also possible to rewrite (\ref{sato4}) as
\begin{subequations}
\small\begin{align}
 \ipg\drm{R}\uu{(+)}(s)
&=
 \frac{1}{s + \im m + \epsilon},
\\
 \ipg\drm{A}\uu{(+)}(s)
&=
 \frac{1}{-s - \im m +\epsilon},
\end{align}\normalsize
\end{subequations}
where \en{ s\equiv -\im\omega }.
Likewise, the (2,2)-element of (\ref{sato0}) has 
two solutions
\begin{subequations}
\small\begin{align}
 \ipg\drm{R}\uu{(-)}(s)
&=
 \frac{1}{s - \im m + \epsilon},
\\
 \ipg\drm{A}\uu{(-)}(s)
&=
 \frac{1}{-s + \im m +\epsilon}.
\end{align}\normalsize
\end{subequations}
Compared to (\ref{fourier-1}), we have
\small\begin{align}
 \ipg\dd{M|M}
=
 \A{\ipg\drm{R}\uu{(+)}}{}
   {}{\ipg\drm{A}\uu{(-)}}.
\end{align}\normalsize
The positive and negative energy components are 
the retarded and advanced Green's functions, 
respectively.

\subsection{Closed-loop Weyl field}
\label{sec:weyltf}

In Chapter \ref{chap:1},
we have distinguished systems and signals due to locality.
From this perspective,
the Dirac field \en{ \psi } with continuous momentum in free space 
is regarded as a signal,
whereas 
the zero-momentum field \en{ \mas } is a (single-mode) system.
Here we consider a discrete multi-mode system
using periodic boundary conditions
in one dimension \en{ x=(t,0,0,z) }.

For simplicity, we consider the Weyl field (\en{ m=0 }).
In the plane wave expansion,
periodic boundary conditions in a finite interval \en{ z\in[0,l] }
are written as
\small\begin{align}
 \exp\left[\im p z\right]
&=
 \exp\left[\im p(z+l)\right]
\quad  \Rightarrow  \quad
 p
=
 \frac{2\pi}{l}n.
\quad
 \left(n\in \mathbb{Z}\right)
\end{align}\normalsize
Then forward and backward traveling solutions to the Weyl equation (\ref{wpm}) 
are, respectively, given by
\begin{subequations}
\label{weylfex}
\small\begin{align}
 \phi_F(x)
&=
 \frac{1}{\sqrt{l}} \sum_{n=-\infty}^\infty
\left[
 A\dd{n}     u_F  \ex\uu{-\im \ffrac{2\pi}{l} n(t-z)}
+
 B\dgg\dd{n} v_F  \ex\uu{\im\ffrac{2\pi}{l}n(t-z)}
\right],
\label{weylfex-1}\\
 \phi_B(x)
&=
 \frac{1}{\sqrt{l}} \sum_{n=-\infty}^\infty
\left[
 C\dd{n}     u_B  \ex\uu{-\im\ffrac{2\pi}{l}n(t+z)}
+
 D\dgg\dd{n} v_B  \ex\uu{\im\ffrac{2\pi}{l}n(t+z)}
\right].
\label{weylfex-2}
\end{align}\normalsize
\end{subequations}
Using the orthonormality of the vectors \en{ u_{F(B)},v_{F(B)} },
one can readily show that 
\small\begin{align}
 \{A\dd{m},A\dd{n}\dgg\} 
&=
 \{B\dgg\dd{m},B\dd{n}\}
=
 \{C\dd{m},C\dd{n}\dgg\} 
=
 \{D\dgg\dd{m},D\dd{n}\}
=
 \delta\dd{mn}.
\label{weylfex-anti}
\end{align}\normalsize

\begin{wrapfigure}[0]{r}[53mm]{49mm}
\centering
\vspace{-0mm}
\includegraphics[keepaspectratio,width=45mm]{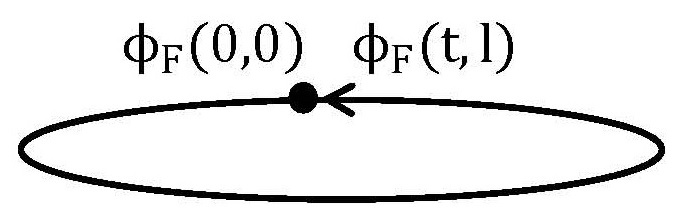}
\caption{
\small
Closed-loop field.
\normalsize
}
\label{fig-periodw}
\end{wrapfigure}

Now let us focus on the forward traveling solution \en{ \phi_F }.
We are interested in a transfer function 
from a fixed point in space, say \en{ z=0 },
to the same point after traveling in a closed loop 
under the periodic boundary conditions\index{closed loop (Weyl)}
as in Figure \ref{fig-periodw}.
Substituting (\ref{weylfex-1}) into 
the definition (\ref{ediractf}) yields
\small\begin{align}
 \ipg\dd{\phi\dd{F}|\phi\dd{F}}(t,l;0,0)
&=
 \sum_n \frac{1}{l}
 \left[ 
 \Lambda\dd{F+} \step(t) \ex\uu{-\im\ffrac{2\pi}{l}nt} 
+
 \Lambda\dd{F-} \step(-t)\ex\uu{ \im\ffrac{2\pi}{l}nt} 
 \right],
\label{wftf}
\end{align}\normalsize
where \en{ \Lambda\dd{F\pm} } are projections
onto positive and negative energy spaces.
The two-sided (bilateral) Laplace transform\index{two-sided Laplace transform}
yields
\begin{subequations}
\label{wftf-c}
\small\begin{align}
 \ipg\dd{\phi\dd{F}|\phi\dd{F}}(s)
&=
 \frac{1}{l}\sum_{n=-\infty}^{\infty}
 \left[
 \frac{\Lambda\dd{F+}}{ s+\im\ffrac{2\pi}{l}n}
+
 \frac{\Lambda\dd{F-}}{-s+\im\ffrac{2\pi}{l}n}
 \right]
\\ &=
 \frac{1}{2\im}
 \left[
 \Lambda\dd{F+} \cot \left(\pi\frac{ sl}{2\pi\im}\right)
+
 \Lambda\dd{F-} \cot \left(\pi\frac{-sl}{2\pi\im}\right)
 \right]
\\ &=
 \frac{\Lambda\dd{F+}-\Lambda\dd{F-}}{2} 
 \frac{1+\ex\uu{-sl}}{1-\ex\uu{-sl}},
\end{align}\normalsize
\end{subequations}
where we have used 
the partial fraction expansion of the cotangent function
in the second line.
Likewise, 
for the backward traveling solution \en{ \phi_B },
\small\begin{align}
 \ipg\dd{\phi\dd{B}|\phi\dd{B}}(s)
&=
 \frac{\Lambda\dd{B+}-\Lambda\dd{B-}}{2} 
 \frac{1+\ex\uu{-sl}}{1-\ex\uu{-sl}}.
\end{align}\normalsize

Note that the forward and backward traveling fields are independent 
because 
they belong to different subspaces as in (\ref{wpm}).
Hence the closed-loop transfer function 
is simply given by the sum of them:
\small\begin{align}
 \ipg\dd{\phi|\phi}(s)
&=
 \ipg\dd{\phi\dd{F}|\phi\dd{F}}(s) 
+  
 \ipg\dd{\phi\dd{B}|\phi\dd{B}}(s)
=
 \frac{\Lambda\dd{+}-\Lambda\dd{-}}{2} 
 \frac{1+\ex\uu{-sl}}{1-\ex\uu{-sl}}.
\label{periodweyltf}
\end{align}\normalsize

So far, 
we have considered an infinite number of modes in the closed loop.
For a finite number of modes,
the transfer function (\ref{wftf-c}) is modified as
\small\begin{align}
 \ipg\dd{\phi|\phi}(s)
&=
 \frac{1}{l}\sum\dd{n:\textrm{finite}}
 \left[
 \frac{\Lambda\dd{+}}{ s+\im\ffrac{2\pi}{l}n}
+
 \frac{\Lambda\dd{-}}{-s+\im\ffrac{2\pi}{l}n}
 \right].
\label{finiteweyl}
\end{align}\normalsize
In particular, 
if a specific mode \en{ n_0 } is chosen in the closed loop,
a single-mode transfer function is given by
\begin{subequations}
\label{singleweyl}
\small\begin{align}
 \ipg\dd{\phi|\phi}(s)
&=
 \frac{1}{l}
 \left[
 \frac{\Lambda\dd{+}}{ s+\im\ffrac{2\pi}{l}n_0}
+
 \frac{\Lambda\dd{-}}{-s+\im\ffrac{2\pi}{l}n_0}
 \right]
\\ &=
 \A{\ffrac{1}{sl+\im 2\pi n_0}}{}{}
   {\ffrac{1}{-sl+\im 2\pi n_0}},
\end{align}\normalsize
\end{subequations}
which is the same form as (\ref{fprop5}).
Furthermore, 
if \en{ n_0=0 },
\small\begin{align}
 \ipg\dd{\phi|\phi}(s)
&=
 (\Lambda\dd{+}-\Lambda\dd{-})
 \frac{1}{sl}
=
 \A{\ffrac{1}{sl}}{}
   {}{\ffrac{1}{-sl}}.
\end{align}\normalsize
This single-mode transfer function is also obtained 
from a limit \en{ l\to 0 } in (\ref{periodweyltf}):
\begin{subequations}
\label{periodweyll}
\small\begin{align}
 \ipg\dd{\phi|\phi}(s)
&=
 \frac{\Lambda\dd{+}-\Lambda\dd{-}}{2} 
 \frac{1+\ex\uu{-sl}}{1-\ex\uu{-sl}},
\\ &  \sim
 (\Lambda\dd{+}-\Lambda\dd{-})
 \frac{1}{sl},
\end{align}\normalsize
\end{subequations}
which means that 
the zero-momentum field \en{ \mas } of Section \ref{sec:p=0} 
is equivalent to a closed-loop field 
in a very small interval \en{ l\ll 1 }.

\chapter{Forward traveling field}
\label{chap:fp}
\thispagestyle{fancy}

In this chapter, 
we introduce (nonrelativistic) 
forward and backward traveling fields 
to describe massless particles propagating unidirectionally 
in one dimension \en{ x=(t,0,0,z) }.
These fields are used for quantum gates and circuits
in Chapters \ref{chap:boundary} and \ref{chap:circuit}.
We also introduce forward traveling closed-loop fields
in the same way as we have done for the Weyl field
in Section \ref{sec:weyltf}.
This is used for the formulation of 
feedback in Chapter \ref{chap:feedback}.

\section{Lagrangian of the forward traveling field}
\label{sec:fpfbasics}

In Section \ref{rem:weyl},
the one-dimensional Weyl equation
has been decomposed into the forward and backward traveling components,
\en{ \phi_F } and \en{ \phi_B }:
\begin{subequations}
 \small\begin{align}
 \im\partial\dd{+} \phi_{F}(x) &= 0,
\label{for-weyl} \\
 \im\partial\dd{-} \phi_{B}(x) &= 0,
 \end{align}\normalsize
\end{subequations}
where the four-position is \en{ x=(t,0,0,z) } and 
\en{ \partial\dd{\pm}\equiv \partial\dd{t} \pm \partial\dd{z} }.
Free field solutions are obtained from 
a continuum limit \en{ l\to \infty } in (\ref{weylfex}):
\begin{subequations}
\small\begin{align}
 \phi_F(x)
&=
 \int \frac{d\omega}{\sqrt{2\pi}}
 \left[
 a(\omega)     u_F \ex\uu{-\im\omega(t-z)}
+
 b\dgg(\omega) v_F \ex\uu{ \im\omega(t-z)}
 \right],
\label{weylf-1}
\\
 \phi_B(x)
&=
 \int \frac{d\omega}{\sqrt{2\pi}}
 \left[
 c(\omega)     u_B \ex\uu{-\im\omega(t+z)}
+
 d\dgg(\omega) v_B \ex\uu{ \im\omega(t+z)}
 \right].
\label{weylb-1}
\end{align}\normalsize
\end{subequations}
The first and second terms describe particles and antiparticles, 
respectively.

\marginpar{\vspace{33mm}
\footnotesize
Since we separate the positive and negative energy components,
forward and backward traveling fields are nonrelativistic.
\normalsize
}

In the Weyl equation, 
the forward traveling component \en{ \phi_F } is a two-component vector.
In an appropriate basis, this is expressed as
\small\begin{align}
 \phi_F(x)
&=
 \phi(x) u_F + \varphi\dgg(x) v_F
=
 \AV{\phi}{\varphi\dgg}.
\label{separate}
\end{align}\normalsize
where \en{ \phi } is a scalar field in the positive energy subspace,
satisfying (\ref{for-weyl}).
This is called 
a \textit{forward traveling field}.\index{forward traveling field}
(Likewise,
a \textit{backward traveling field}\index{backward traveling field} 
is defined by the positive energy component of \en{ \phi_B }.)
The Lagrangian density\index{Lagrangian (forward traveling)}
of \en{ \phi } is given by
\small\begin{align}
 \lag
&=
 \phi\dgg \im\partial\dd{+} \phi.
\label{forlag}
\end{align}\normalsize
It follows from the Euler-Lagrange equation that
\begin{subequations}
\label{elfp}
\small\begin{align}
 \partial\dd{+}
 \frac{\partial \lag}{\partial(\partial\dd{+} \phi\dgg)}
-\frac{\partial \lag}{\partial \phi\dgg} =0
\quad & \Rightarrow \quad 
 \im\partial\dd{+} \phi=0,
\\ 
\partial\dd{+}
 \frac{\partial \lag}{\partial(\partial\dd{+} \phi)}
-\frac{\partial \lag}{\partial \phi} =0
\quad & \Rightarrow \quad 
 \im \partial\dd{+} \phi\dgg  =0.
\end{align}\normalsize
\end{subequations}
Canonical momenta\index{canonical momentum (forward traveling)} 
corresponding to \en{ \phi } and \en{ \phi\dgg }
are, respectively, given as
\small\begin{align}
 \pi
&\equiv
 \frac{\partial \lag}{\partial(\partial\dd{t}\phi)}
=\im \phi\dgg,
\quad
 \pi\dgg
\equiv
 \frac{\partial \lag}{\partial(\partial\dd{t}\phi\dgg)}
=0.
\label{phimoment}
\end{align}\normalsize
We introduce 
canonical quantization\index{canonical quantization (forward traveling)}
to the forward traveling field as
\small\begin{align}
 [\phi(t,z),\pi(t,z')]
=
 [\phi(t,z),\im\phi\dgg(t,z')]
=
 \im \delta(z-z').
\label{fpcom}
\end{align}\normalsize

Note that there are different choices of Lagrangian
for the forward traveling field.
Sometimes it is convenient to use a symmetric form defined as
\small\begin{align}
 \lag
&=
 \frac{1}{2}\left[ 
 \phi\dgg (\im\partial\dd{+} \phi) 
-
 (\im\partial\dd{+}\phi\dgg) \phi\right]
=
 \frac{1}{2} \Phi\dgg \sigma\dd{z} \im \partial\dd{+} \Phi,
\label{lagfor}
\end{align}\normalsize
where 
\small\begin{align}
 \sigma\dd{z} = \A{1}{}{}{-1}, \quad
 \Phi \equiv \AVl{\phi}{\phi\dgg}.
\end{align}\normalsize
The corresponding canonical momenta are given by
\small\begin{align}
\Pi \equiv 
 \AH{\pi}{\pi\dgg}
=
 \frac{\partial \lag}{\partial(\partial\dd{t}\Phi)}
=
 \frac{\im}{2}\AH{\phi\dgg}{-\phi}.
\end{align}\normalsize

\section{Transfer function of the forward traveling field}
\label{sec:tf}

We introduce a transfer function in the same way as the Dirac field.

\begin{definition}
Given two operators \en{ A } and \en{ B }, 
a contraction\index{contraction (forward traveling)} is defined as
\small\begin{align}
 \wick{A(x_2)}{B(x_1)}&\equiv
 \contraction{}{\phi}{\hspace{3mm}}{\hspace{0mm}}
 A_2B_1
\nn\\ &=
  \mzero{\T \ A_2B_1}
\nn\\ &=
 \mzero{\step(t_2-t_1) \, A_2B_1
       +\step(t_1-t_2) \, B_1A_2}.
\label{def:contraction}
\end{align}\normalsize
\end{definition}

Note that the sign of the second term is different from the Dirac field 
because the forward traveling field is bosonic.
By definition, 
it follows that
\small\begin{align}
 \wick{A_2}{B_1}=\wick{B_1}{A_2}.
\label{prop:ordersym}
\end{align}\normalsize
For a forward traveling field \en{ \phi } 
and an arbitrary operator \en{ B },
\begin{subequations}
\label{forsha}
\small\begin{align}
\hspace{-15mm}
 \im\partial\dd{+}
  \wick{\phi(x)}{B(0)} 
&= 
 +\im \delta(t) \bra{0} \, [\phi(0,z),B(0,0)] \, \ket{0},
\label{forsha1} \\ 
 \im\partial\dd{+} 
 \wick{B(0)}{\phi(x)}
&= 
 -\im \delta(t) \bra{0} \, [B(0,0),\phi(0,z)] \, \ket{0}.
\label{forsha2}
\end{align}\normalsize
\end{subequations}

\begin{definition}
\label{def:dd}
Given an operator \en{ A }, 
its double dagger \en{ A\ddgg }\index{$\ddagger$ (forward traveling)} 
is defined as an operator satisfying
\small\begin{align}
 [A(t,z),A\ddgg(t,z')]&=\delta(z-z').
\label{def:sha}
\end{align}\normalsize
\end{definition}

Note that there does not necessarily exist \en{ A\ddgg }
for all \en{ A }.
If it exists,
we have
\small\begin{align}
 (A\ddgg)\ddgg &= -A,
\qquad
 (\im A)\ddgg = -\im A\ddgg.
\label{prop:sha2}
\end{align}\normalsize
For \en{ \phi },
the double dagger is similar to the Hermitian conjugate
\small\begin{align}
 \phi\ddgg&=\phi\dgg,
\qquad
(\phi\dgg)\ddgg = -\phi.
\label{phisha}
\end{align}\normalsize
However, 
for quadrature operators\index{quadrature operator} defined as
\small\begin{align}
 \xi &\equiv \frac{\phi\dgg+\phi}{\sqrt{2}},
\qquad
 \eta \equiv \im\frac{\phi\dgg-\phi}{\sqrt{2}},
\label{quadrature}
\end{align}\normalsize
the double dagger is different from the Hermitian conjugate:
\small\begin{align}
\xi\ddgg &= -\im \eta,
\qquad
 \eta\ddgg = \im \xi.
\end{align}\normalsize

\marginpar{\vspace{-55mm}
\footnotesize
The quadrature operators are also expressed as \\ \\
$\AVl{\xi}{\eta}= \frac{1}{\sqrt{2}}\A{1}{1}{-\im}{\im} \AVl{\phi}{\phi\dgg}$.
\normalsize
}

\marginpar{\vspace{-20mm}
\footnotesize
In general, $A\ddgg$ corresponds to a canonically conjugate pair of $A$,
rather than the Hermitian conjugate.
\normalsize
}

\begin{definition}
\label{def:tf}
A transfer function\index{transfer function (forward traveling)} 
from \en{ B(x_1)=B_1 } to \en{ A(x_2)=A_2 } is defined by
\small\begin{align}
 \ipg\dd{A|B}(x_2;x_1)
&\equiv  
 \wick{A_2}{B\ddgg_1}.
\label{ptf-f}
\end{align}\normalsize
\end{definition}

\marginpar{\vspace{-0mm}
\footnotesize
Note that 
$\pg\dd{\phi|\phi}$ and $\pg\dd{\phi\dgg|\phi\dgg}$
are retarded and advanced propagators,
respectively.
For the quadratures,\\
\\
$ \pg\dd{\xi|\xi}
=
 \pg\dd{\eta|\eta}
=
 \ffrac{\pg\dd{\phi|\phi} + \pg\dd{\phi\dgg|\phi\dgg}}{2},
$ \\
\\
which correspond to the Feynman propagator.
\normalsize
}

We have defined a transfer function in this way 
because it can simplify input-output relations.
For example, 
\small\begin{align}
 \AV{\xi}{\eta}\drm{out}
&=
 \A{\ipg\dd{\xi\drm{out} | \xi\drm{in} }}
   {\ipg\dd{\xi\drm{out} | \eta\drm{in} }}
   {\ipg\dd{\eta\drm{out} | \xi\drm{in} }}
   {\ipg\dd{\eta\drm{out} | \eta\drm{in} }}
 \AV{\xi}{\eta}\drm{in}.
\label{visinout}
\end{align}\normalsize
where 
the (1,1)-element is a probability amplitude 
from \en{ \xi\drm{in} } to \en{ \xi\drm{out} },
of which a correct form is given by a contraction 
between \en{ \xi\drm{out} } and \en{ -\im \eta\drm{in} } 
(not between \en{ \xi\drm{out} } and \en{ \xi\drm{in} }!):
\begin{subequations}
\small\begin{align}
 \ipg\dd{\xi\drm{out} | \xi\drm{in} }
&=
 \wick{\xi\drm{out} }{(\xi\drm{in})\ddgg}
\\ &=
 \wick{\xi\drm{out} }{-\im\eta\drm{in}}.
\end{align}\normalsize
\end{subequations}

For later use,
we consider transfer functions for two different situations:
a static transfer function and
closed-loop transfer function.
In both cases, the transfer functions are defined 
at the same point in space.

\subsection{Static transfer function}

Let us consider a transfer function defined as
\small\begin{align}
 \ipg\dd{\phi|\phi}(t)
&=
 \delta(t).
\end{align}\normalsize
In the hyperfunction approach,
it has two independent solutions:
\begin{subequations}
\label{fthrou}
\small\begin{align}
\mbox{\normalsize causal: \small} \
 \ipg\dd{\phi|\phi}(\omega)
&=
 [\step\dd{\ima(\omega)>0}],
\label{fthrou-1}\\
\mbox{\normalsize anticausal: \small} \
 \ipg\dd{\phi|\phi}(\omega)
&=
 [-\step\dd{\ima(\omega)<0}].
\label{fthrou-2}
\end{align}\normalsize
\end{subequations}
These correspond to the probability amplitudes 
of \en{ z \to z+0 } and \en{ z \to z-0 }, 
respectively.
The sign of the transfer function changes
if we reverse the direction of propagation of the field.

\newpage

\subsection{Closed-loop forward traveling field}
\label{sec:ltf}

\begin{wrapfigure}[0]{r}[53mm]{49mm}
\vspace{-0mm}
\centering
\includegraphics[keepaspectratio,width=47mm]{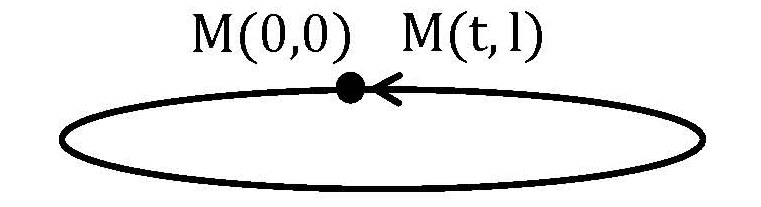}
\caption{
\small
Closed-loop field.
\normalsize
}
\label{fig-periodf}
\end{wrapfigure}

Let us consider a closed-loop field\index{closed loop (forward traveling)} 
\en{ \mas(t,z) } in the same way as Sections \ref{sec:weyltf}.
As shown in Figure \ref{fig-periodf},
this is a forward traveling field
that propagates in a finite interval \en{ z\in[0,l] }
under periodic boundary conditions
\small\begin{align}
 \mas(t,l)&=\mas(t,0).
\label{ftpbound}
\end{align}\normalsize
Here we calculate transfer functions
from \en{ z=0 } to the same point after traveling in a closed loop:
\begin{subequations}
\label{ftcltf}
\small\begin{align}
 \ipg\dd{\mas|\mas}(t)
&=
 \wick{\mas(t,l)}{\mas\dgg(0,0)},
\label{ftcltf-1}\\ 
 \ipg\dd{\mas\dgg|\mas\dgg}(t)
&=
 -\wick{\mas\dgg(t,l)}{\mas(0,0)}.
\label{ftcltf-2}
\end{align}\normalsize
\end{subequations}
Since \en{ \ipg\dd{\mas|\mas} } is a retarded Green's function,
it is given by the causal component 
of \en{ \ipg\dd{\phi\dd{F}|\phi\dd{F}} },
i.e., the coefficient of \en{ \Lambda\dd{F+} } in (\ref{wftf-c}):
\small\begin{align}
 \ipg\dd{\mas|\mas}(s)
&=
 \frac{1}{2} 
 \frac{1+\ex\uu{-sl}}{1-\ex\uu{-sl}}.
\end{align}\normalsize
On the other hand,
\en{ \ipg\dd{\mas\dgg|\mas\dgg} }
is given by the anticausal component
of \en{ \ipg\dd{\phi\dd{F}|\phi\dd{F}} }.
Noting the sign difference between bosons and fermions in the contraction,
we have
\small\begin{align}
\ipg\dd{\mas\dgg|\mas\dgg}(s)
&=
 \frac{1}{2} 
 \frac{1+\ex\uu{-sl}}{1-\ex\uu{-sl}}.
\end{align}\normalsize

Let us introduce a vector 
\small\begin{align}
 \bm{\mas}
&=
 \AV{\mas}{\mas\dgg}.
\end{align}\normalsize
A transfer function from \en{ \bm{\mas} } to \en{ \bm{\mas} } is defined as
\small\begin{align}
 \ipg\dd{\bm{\mas}|\bm{\mas}}
&\equiv
 \A{\ipg\dd{\mas|\mas}}{\ipg\dd{\mas|\mas\dgg}}
   {\ipg\dd{\mas\dgg|\mas}}{\ipg\dd{\mas\dgg|\mas\dgg}}.
\end{align}\normalsize
The off-diagonal elements are zero
because 
\en{\ipg\dd{\mas|\mas\dgg}=-\wick{\mas}{\mas} \sim \bra{0}\mas\mas\ket{0}=0}.
As a result, 
we get 
\small\begin{align}
 \ipg\dd{\bm{\mas}|\bm{\mas}}
=
 \A{\ffrac{1}{2}\ffrac{1+\ex\uu{-sl}}{1-\ex\uu{-sl}}}{} 
   {}{\ffrac{1}{2}\ffrac{1+\ex\uu{-sl}}{1-\ex\uu{-sl}}}.
\label{mmatrix}
\end{align}\normalsize
A single-mode transfer function is obtained 
in the same way as Sections \ref{sec:weyltf}:
\small\begin{align}
 \ipg\dd{\bm{\mas}|\bm{\mas}}
&=
 \A{\ffrac{1}{sl+2\pi \im n_0}}{}{}
   {\ffrac{1}{sl+2\pi \im n_0}}.
\label{singlen}
\end{align}\normalsize
Likewise, 
in a limit \en{ l\ll 1 },
\small\begin{align}
 \ipg\dd{\bm{\mas}|\bm{\mas}}
&=
 \A{\ffrac{1}{sl}}{}{}
   {\ffrac{1}{sl}}.
\end{align}\normalsize

\subsection{Asymmetry}
\label{lem:asymmetry}

The forward traveling field has unique asymmetry 
due to its unidirectionality.
Note that the two matrix elements of (\ref{mmatrix})
are the same function.
Each element is written as
\begin{subequations}
\label{asym1}
\small\begin{align}
\mbox{\normalsize (1,1)-element: \small} \hspace{5mm}
 \ipg\dd{\mas|\mas}(t)
&=
\hspace{2.5mm} \wick{\mas(t)}{\mas\dgg(0)}
\xrightarrow{LT}
\hspace{2.5mm} \wick{\mas}{\mas\dgg}(s),
\label{asym1-1}
\\
\mbox{\normalsize (2,2)-element: \small} \hspace{2mm}
 \ipg\dd{\mas\dgg|\mas\dgg}(t)
&=
 -\wick{\mas\dgg(t)}{\mas(0)}
\xrightarrow{LT}
 -\wick{\mas\dgg}{\mas}(s),
\label{asym1-2}
\end{align}\normalsize
\end{subequations}
where \en{ \xrightarrow{LT} } represents the Laplace transform.
As a result,
\small\begin{align}
 \wick{\mas}{\mas\dgg}
&=
 -\wick{\mas\dgg}{\mas}.
\label{asym}
\end{align}\normalsize

This asymmetry is obtained in a different way.
Note that (\ref{mmatrix}) is an odd function in \en{ s }.
Then (\ref{asym1-2}) is written as
\begin{subequations}
\small\begin{align}
 -\wick{\mas\dgg(t)}{\mas(0)}
=&
 -\wick{\mas(0)}{\mas\dgg(t)}
\qquad \because (\ref{prop:ordersym})
\\ \xrightarrow{LT} &
 -\wick{\mas}{\mas\dgg}(-s)
\\ =& \hspace{4.5mm}
 \wick{\mas}{\mas\dgg}(s),
\hspace{13mm} \because \mbox{odd in \en{ s} }
\end{align}\normalsize 
\end{subequations}
which is (\ref{asym1-1}).
We have used the time-reversal property of the Laplace transform
in the second line.

It is important to note that 
(\ref{asym1-1}) and (\ref{asym1-2}) correspond to
the causal and anticausal solutions,
as in (\ref{fthrou}) of the hyperfunction approach.
Hence 
the asymmetry (\ref{asym}) is 
regarded as 
a conversion rule between the causal and anticausal solutions,
rather than an equality.
In fact,
it is used to reverse the direction of arrowed edges
in Feynman diagrams (Section \ref{sec:feydiag}).
The LHS of (\ref{asym}) is expressed 
by an edge pointing in the forward (causal) direction in time,
whereas the RHS is backward (anticausal).
When we calculate a transfer function using a Feynman diagram,
we need to put all edges in the forward direction 
using (\ref{asym}).
Then we obtain a correct form of the transfer function.

\chapter{Gauge theory}
\label{chap:gauge}
\thispagestyle{fancy}

In this chapter, 
we briefly review a gauge theory.
It is one of the most successful methods in field theory.
In particular, 
Yang-Mills theory is essential for the unification of 
the electromagnetic and weak forces, quantum chromodynamics, 
and the theory of the strong force.
We apply the same idea to quantum computing 
by regarding quantum gates as local gauge transformations.
First we review global symmetries and Noether's theorem,
and then,
introduce a gauge theory through local symmetries.
We also consider an extension of the gauge theory 
to non-unitary transformations.
Gauge transformations are always unitary for fermions,
but some quantum gates are described 
by non-unitary transformations for bosons.

\section{Global symmetries and Noether's theorem}
\label{sec:gaugep}

If observables are invariant under unobservable field transformations,
the corresponding action or Lagrangian should be invariant as well.
For example, 
a global phase shift (gauge transformation)\index{gauge transformation (global)}
of a Dirac field \en{ \psi } is not detectable:
\small\begin{align}
\hspace{25mm}
 \psi \to \psi'= \ex\uu{\im \win}\psi 
\hspace{10mm}
 (\win: \mbox{real constant}.)
\label{gps}
\end{align}\normalsize
Obviously the Lagrangian of the Dirac field is invariant 
under this transformation:
\small\begin{align}
 \lag(\psi', \partial\dd{\mu}\psi')
=
 \widebar{\psi}\bigl(\im\fsh{\partial}-m \bigr) \psi
=
 \lag(\psi, \partial\dd{\mu}\psi).
\label{lagd-1}
\end{align}\normalsize
The invariance of the Lagrangian is called \textit{symmetry}.
\index{symmetry}
In this case, 
it is called a \textit{global symmetry}\index{symmetry (global)}
because \en{ \win } is constant and 
the transformation (\ref{gps}) is defined globally.

Conversely,
if a Lagrangian has such symmetry,
there exists a conserved current.
This is known as \textit{Noether's theorem}.\index{Noether's theorem}
To see this, 
assume that a field \en{ \psi } has an infinitesimal change
\small\begin{align}
 \psi
\to
 \psi'=\psi+\delta\psi,
\label{psichange}
\end{align}\normalsize
and define a \textit{Noether current}\index{Noether current} as
\small\begin{align}
 j\uu{\mu}
\equiv
 \frac{\delta\lag}{\delta(\partial\dd{\mu}\psi)} \delta\psi.
\end{align}\normalsize
Then
\small\begin{align}
 \partial\dd{\mu} j\uu{\mu}
&=
 \partial\dd{\mu}
 \left[
 \frac{\delta\lag}{\delta(\partial\dd{\mu}\psi)} 
\right]
 \delta \psi
+
 \left[
 \frac{\partial\lag}{\delta(\partial\dd{\mu}\psi)}
\right] 
 \partial\dd{\mu} (\delta\psi).
\label{dnc}
\end{align}\normalsize
In response to (\ref{psichange}),
the Lagrangian (density) changes as
\begin{subequations}
\label{noether1}
\small\begin{align}
 \delta \lag(\psi,\partial\dd{\mu}\psi)
&=
 \frac{\delta\lag}{\delta \psi}\delta \psi
+
 \left[
 \frac{\delta\lag}{\delta(\partial\dd{\mu}\psi)} 
 \right]
 \delta(\partial\dd{\mu}\psi)
\\ &=
 \frac{\delta\lag}{\delta \psi}\delta \psi
+
 \left[
 \frac{\delta\lag}{\delta(\partial\dd{\mu}\psi)} 
 \right]
 \partial\dd{\mu} (\delta\psi)
\\ &=
 \frac{\delta\lag}{\delta \psi}\delta \psi
- \partial\dd{\mu}
 \left[
 \frac{\delta\lag}{\delta(\partial\dd{\mu}\psi)} 
\right]
 \delta \psi
+
 \partial\dd{\mu} j\uu{\mu}
\quad 
\because (\ref{dnc})
\\ &=
 \left[
 \frac{\delta\lag}{\delta \psi}
-
 \partial\dd{\mu} \frac{\delta\lag}{\delta(\partial\dd{\mu}\psi)}
\right]
 \delta\psi
+
 \partial\dd{\mu}
 j\uu{\mu}.
\end{align}\normalsize
\end{subequations}
The first term is zero due to the Euler-Lagrange equation.
The global symmetry \en{ \delta\lag=0 } 
indicates that \en{ j\uu{\mu} } is a conserved current:
\small\begin{align}
 \partial\dd{\mu} j\uu{\mu}
&=
 0.
\label{noether3}
\end{align}\normalsize

\begin{example}
\rm
Let us consider an infinitesimal phase shift of the form (\ref{gps}):
\small\begin{align}
 \psi'
&= 
 \ex\uu{-\im \delta\win}\psi
\sim
 \psi + \delta \psi,
\qquad
 (\delta \psi\equiv -\im \delta\win \, \psi.)
\end{align}\normalsize
For the Dirac field,
the Noether current is written as
\small\begin{align}
 j\uu{\mu}
&=
 \widebar{\psi} \gamma\uu{\mu} \psi
=
 (\psi\dgg \psi, \psi\dgg \bm{\alpha}\psi)
\equiv
 (\rho,\bm{j}),
\end{align}\normalsize
where we have dropped \en{ \delta\win }.
In this case, 
(\ref{noether3}) is written as
\small\begin{align}
 \partial\dd{t} \rho + \nabla\cdot \bm{j} 
=
 0,
\end{align}\normalsize
which corresponds to a continuity equation.
\qed
\end{example}

\section{Local symmetries and a gauge principle}
\label{sec:gsym}

Now assume that \en{ \win=\win(x) }.
The Lagrangian \en{ \lag(\psi,\partial\dd{\mu}\psi) } 
is no longer invariant under the phase shift (\ref{gps})
because \en{ \partial\dd{\mu}\theta\not= 0 }. 
This influences some observables such as \en{ p\dd{\mu}=\im \partial\dd{\mu} },
which means that some kind of force is acting on the field \en{ \psi }.
On the other hand, the source of the force is also influenced
by \en{ \psi } as back action so that 
a total Lagrangian remains unchanged.
This is called a \textit{gauge principle}.\index{gauge principle}
In this case,
the invariance of the total Lagrangian 
is called a \textit{local symmetry}.\index{symmetry (local)}

We are interested in an interaction Lagrangian 
creating the force on \en{ \psi }.
To find it,
we consider multiple fields \en{ \{\psi_1,\psi_2,\cdots\} }
\small\begin{align}
 \psi
\equiv
 \BV{\psi_1}{\psi_2}{\vdots},
\hspace{14mm}
 \widebar{\psi}
\equiv
 \BH{\widebar{\psi}_1}{\widebar{\psi}_2}{\cdots},
\label{psivec}
\end{align}\normalsize
and a local gauge transformation\index{gauge transformation (local)}
\small\begin{align}
\hspace{-9mm}
\psi \to \psi'= \uni\psi,
\hspace{9mm}
 \widebar{\psi} \to \widebar{\psi}'= \widebar{\psi}\uni\dgg.
\end{align}\normalsize
\en{ \uni } is a unitary operator written as
\small\begin{align}
 \uni(x)=\exp[ \win\uu{a}(x) \lie\dd{a}],
\label{gauget}
\end{align}\normalsize
where
\en{ \win\uu{a} } is a (real) guage function\index{gauge function}
and \en{ \lie\dd{a} } is  a constant matrix satisfying 
\en{ \lie\dd{a}\dgg=-\lie\dd{a} }.

\vspace{-3mm}

\subsection{Interaction Lagrangian}
\label{sec:gauge-int}

To find the interaction Lagrangian,
we first note that 
the phase factor \en{ \partial\dd{\mu}\win\uu{a}\not= 0 }
is the reason for the change of the Lagrangian \en{ \lag }.
To make \en{ \lag } invariant,
we introduce phase modulation known as 
a \textit{gauge covariant derivative}\index{gauge covariant derivative}
\small\begin{align}
 D\dd{\mu}
&\equiv
 \partial\dd{\mu}  - \ga\dd{\mu},
\label{codev}
\end{align}\normalsize
where 
\en{ \ga\dd{\mu} } is a variable 
to cancel \en{ \partial\dd{\mu}\win\uu{a}\not= 0 } out.
This is called a \textit{gauge field}.\index{gauge field}
Note that 
as \en{ \psi\to \psi' },
the gauge field also transforms as \en{ \ga\dd{\mu}\to\ga\dd{\mu}' }
and hence \en{ D\dd{\mu}\to  D\dd{\mu}' }.
This transformation is determined to satisfy the gauge principle:
\small\begin{align}
\lag(\psi,D\dd{\mu}\psi)
= 
 \lag(\psi',D\dd{\mu}'\psi').
\label{delete}
\end{align}\normalsize

Assume that the gauge field transforms as
\small\begin{align}
 \ga\dd{\mu}'
&=
 \uni \ga\dd{\mu} \uni\dgg  
 + 
 (\partial\dd{\mu} \uni)\uni\dgg.
\label{gaugef}
\end{align}\normalsize
This is equivalent to 
\small\begin{align}
 \uni\dgg D\dd{\mu}' \uni 
&= 
 \uni\dgg (\partial\dd{\mu} - \ga\dd{\mu}') \uni
=
 D\dd{\mu},
\label{gtderiv}
\end{align}\normalsize
from which (\ref{delete}) follows.
As a result,
if there exists a gauge field \en{ \ga\dd{\mu} } satisfying (\ref{gaugef}),
the Lagrangian
\small\begin{align}
 \lag(\psi, D\dd{\mu}\psi)
&=
 \lag(\psi,\partial\dd{\mu}\psi) + \lag\urm{int}
\label{lagD}
\end{align}\normalsize
is locally symmetric.
Here
\small\begin{align}
 \lag\urm{int}
&=
 - \widebar{\psi} \im \gamma\uu{\mu} \ga\dd{\mu} \psi
\label{gaugeint}
\end{align}\normalsize
is regarded as an interaction Lagrangian
creating the force on \en{ \psi }.

\subsection{Gauge fixing and quantum gates}

We have introduced the gauge theory 
because it is useful to describe quantum gates.
For instance, if \en{ \lie } is a scalar, 
\en{ \psi } and \en{ \gel\dd{\mu} } transform as
\begin{subequations}
\small\begin{align}
 \psi' &= \uni \psi,
\hspace{16mm} (\uni = \ex\uu{\win\lie})
\label{exu1tf-1} \\
 \gel\dd{\mu}' &= \gel\dd{\mu} + \partial\dd{\mu} \win.
 \hspace{9mm}
 \because (\ref{gaugef})
\label{exu1tf-2}
\end{align}\normalsize
\end{subequations}
Then 
\en{ \gel\dd{\mu}=(\phi,\bm{\gel}) } is identified with
an electromagnetic four-potential. 
\en{ \gel\dd{\mu} } and  \en{ \gel'\dd{\mu} } 
are equivalent due to the gauge symmetry.
However,
we sometimes break this symmetry 
to simplify Maxwell's equations for \en{ \gel'\dd{\mu} }
by choosing a specific gauge parameter \en{ \win } 
(gauge fixing).\index{gauge fixing}
For example,
if we choose \en{ \win } so that 
\en{ \Delta \win =-\nabla\cdot\bm{\gel} },
then \en{ \gel\dd{\mu}' } satisfies \en{ \nabla\cdot \bm{\gel}'=0 }.
This is known as the Coulomb gauge.

Here we introduce a different interpretation for \en{ \win }.
We regard (\ref{exu1tf-1}) as 
the input-output relation of a quantum gate.
In this case,
\en{ \win } serves as a switch to turn the gate on and off,
i.e.,
the gate is off (\en{ \uni=I })
if \en{ \win=0 }
and 
the gate is on (\en{ \uni\not= I })
if \en{ \win\not= 0 }.
In this interpretation,
\en{ \gel\dd{\mu} } corresponds to the state of the gate
and 
(\ref{exu1tf-2}) describes 
how \en{ \gel\dd{\mu} } changes during the gate operation.
\en{ \partial\dd{\mu}\win } is regarded as 
a cost to produce the output \en{ \psi' }.
In Chapter \ref{chap:boundary},
we formulate quantum gates based on this interpretation.

\newpage

\subsection{Remarks on the gauge field}
\label{sec:rem-gf}

\begin{wrapfigure}[0]{r}[53mm]{49mm}
\centering
\includegraphics[keepaspectratio,width=49mm]{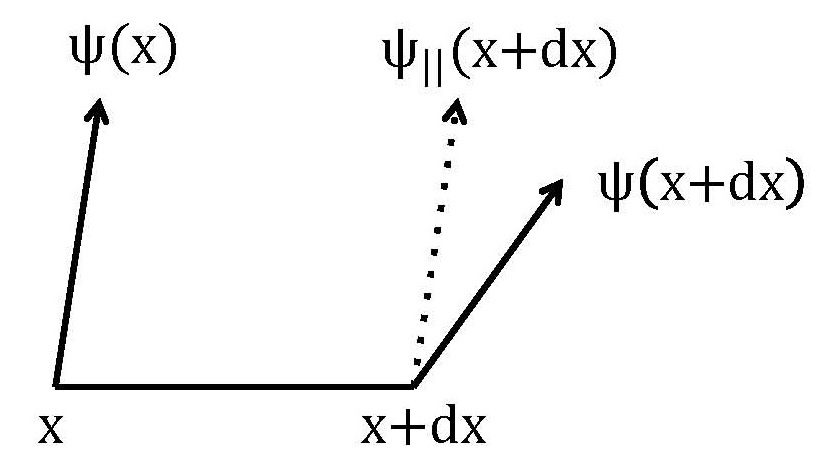}
\caption{
\small
Parallel translation of $\psi(x)$.
\normalsize
}
\label{fig-in-diffg-1}
\end{wrapfigure}

Let us briefly review a mathematical interpretation of the gauge field 
from a differential geometrical perspective.
A parallel translation is 
the key to understanding the gauge field.
We first note that 
\en{ \partial\dd{\mu} } is defined as
\small\begin{align}
 \psi(x+dx)
=
 \psi(x) + \partial\dd{\mu} \psi(x) dx\uu{\mu}.
\label{gd-1}
\end{align}\normalsize
Let \en{ \psi\dd{\parallel}(x+dx) } 
be a parallel translation of \en{ \psi(x) } 
to the point \en{ x+dx } 
as in Figure \ref{fig-in-diffg-1}.
Its differential is proportional to 
an infinitesimal rotation of the field, 
which is called a 
\textit{connection}\index{connection (differential geometry)}
in differential geometry, 
as
\small\begin{align}
 \psi\dd{\parallel}(x+dx)
=
 \psi(x) + \ga\dd{\mu} \psi(x) dx\uu{\mu}.
\label{gd-2}
\end{align}\normalsize
The rotation operator \en{ \ga\dd{\mu} } corresponds to the gauge field.
The gauge covariant derivative\index{gauge covariant derivative}
\en{ D\dd{\mu} = \partial\dd{\mu}-\ga\dd{\mu} }
is obtained from the difference between (\ref{gd-1}) and (\ref{gd-2}):
\small\begin{align}
 \psi(x+dx) -  \psi\dd{\parallel}(x+dx)
&=
 D\dd{\mu} \psi(x) dx\uu{\mu}.
\label{geo-cov}
\end{align}\normalsize

\begin{wrapfigure}[0]{r}[53mm]{49mm}
\centering
\includegraphics[keepaspectratio,width=45mm]{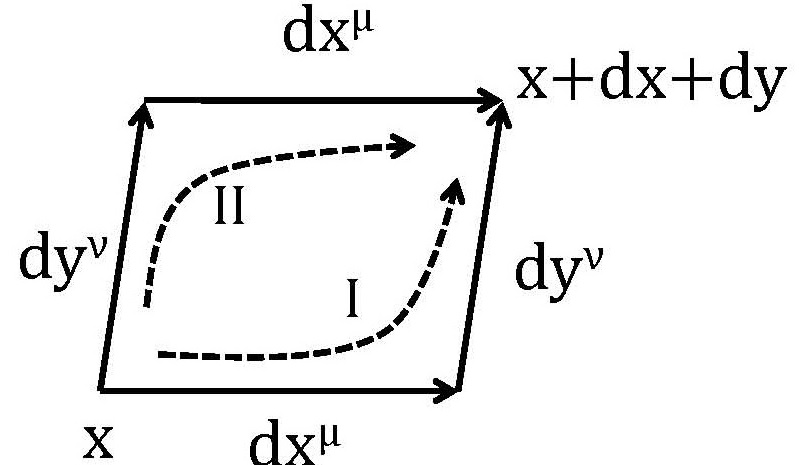}
\caption{
\small
Parallel translation along two paths I and II.
\normalsize
}
\label{fig-in-diffg-2}
\end{wrapfigure}

Next we consider a parallel translation of \en{ \psi } 
along the paths I and II in Figure \ref{fig-in-diffg-2}.
From (\ref{geo-cov}), 
we get
\begin{subequations}
\small\begin{align}
 \psi\urm{I}\dd{\parallel}(x+dx+dy)
&=
 \psi(x+dx+dy)
-D\dd{\mu}\psi(x+dy)dx\uu{\mu}
\\ &\hspace{30mm}
-D\dd{\nu}\psi(x+dx)dy\uu{\nu}
+D\dd{\nu}D\dd{\mu}\psi(x)dx\uu{\mu}dy\uu{\nu},
\nn\\
 \psi\urm{II}\dd{\parallel}(x+dx+dy)
&=
 \psi(x+dx+dy)
-D\dd{\mu}\psi(x+dy)dx\uu{\mu}
\\ &\hspace{30mm}
-D\dd{\nu}\psi(x+dx)dy\uu{\nu}
+D\dd{\mu}D\dd{\nu}\psi(x)dx\uu{\mu}dy\uu{\nu}.
\nn
\end{align}\normalsize
\end{subequations}
When we translate \en{ \psi } around the closed path,
its differential is given as
\small\begin{align}
 \Delta \psi(x)
&=
 \psi\urm{I}\dd{\parallel}(x+dx+dy) - \psi\urm{II}\dd{\parallel}(x+dx+dy)
=
 F\dd{\mu\nu} \psi(x) dx\uu{\mu}dy\uu{\nu},
\end{align}\normalsize
where \en{ F\dd{\mu\nu} } is called 
a \textit{field strength}\index{field strength}
or a \textit{curvature}\index{curvature (differential geometry)}
in differential geometry,
written as
\small\begin{align}
 F\dd{\mu\nu}
&\equiv 
 -[D\dd{\mu},D\dd{\nu}]
=
 \partial\dd{\mu} \ga\dd{\nu}
-
 \partial\dd{\nu} \ga\dd{\mu}
-
 [\ga\dd{\mu}, \ga\dd{\nu}].
\end{align}\normalsize

\marginpar{\vspace{10mm}
\footnotesize
In general, 
one can consider $\lag\urm{\gel}$ 
including higher order of $\partial\dd{\mu}$.
\normalsize
}

The Lagrangian of the gauge field \en{ \ga\dd{\mu} }
is defined from the field strength.
Up to second order of \en{ \partial\dd{\mu} },
it is written as
\small\begin{align}
 \lag\urm{\gel}
=
 -\frac{1}{2} \textrm{Tr} \left[ F\dd{\mu\nu} F\uu{\mu\nu} \right].
\label{lagw}
\end{align}\normalsize
This Lagrangian is also invariant under the gauge transformation.
To see this, 
note that the transformation of \en{ D\dd{\mu} } satisfies 
(\ref{gtderiv}).
The field strength transforms as
\small\begin{align}
 F'\dd{\mu\nu}
=
 \uni F\dd{\mu\nu} \uni\dgg,
\label{fstg}
\end{align}\normalsize
under which \en{ \lag\urm{\gel} } is invariant.
Finally
a symmetric Lagrangian of the entire process 
is given by 
\small\begin{align}
 \lag\urm{entire}
&=
 \lag(\psi,\partial\dd{\mu}\psi) 
+ 
 \lag\urm{int}
+ 
 \lag\urm{\gel}.
\end{align}\normalsize

\newpage

Historically
the significance of the gauge principle was recognized 
after Weyl discovered Maxwell's equations in \en{ \lag\urm{\gel} } 
for the abelian case of U(1).
Then
Yang and Mills 
found \en{ \lag\urm{\gel} } for non-abelian cases. 
Their ideas are used to unify 
the electromagnetic, weak, and strong interactions
as U(1) \en{ \times } SU(2) \en{ \times } SU(3)
in the Standard Model
where 
\en{ \lag\urm{\gel} } describes what kind of particles 
(the photon, weak bosons and gluons)
become force carriers for these three forces.

The situation is different 
when we apply the gauge theory to quantum gates.
The gauge field 
is 
what we tailor to control \en{ \psi }
and always implemented by the electromagnetic interaction.
In addition,
for instantaneous operations of quantum gates,
the gauge field is approximated to be static effectively.
Hence 
we focus on the interaction \en{ \lag\urm{int} } 
for the study of quantum gates
in what follows.
(The dynamics of the gauge field 
will be revisited in Chapter \ref{epilogue}.)

\section{Examples of local symmetries}
\label{sec:exlsym}

\subsection{U(1) symmetry}
\label{sec:u1sym}

The simplest example is U(1) symmetry.
It is defined by \en{ \lie=\im g \ (g\in \mathbb{R}) }
in (\ref{gauget}):
\small\begin{align}
 \uni(x)
&=
 \exp[\win(x) \im g].
\end{align}\normalsize
Let us express the gauge field as 
\en{ \ga\dd{\mu}=\gel\dd{\mu}\lie }.
From (\ref{gaugef}),
\en{ \gel\dd{\mu} } transforms as 
\small\begin{align}
 \gel\dd{\mu}'
&=
 \gel\dd{\mu} + \partial\dd{\mu}\win,
\label{u1w}
\end{align}\normalsize
which is the same transformation as an electromagnetic four-potential.
The interaction Lagrangian (\ref{gaugeint}) is written as
\small\begin{align}
 \lag\urm{int}
&= 
  g\widebar{\psi} \gamma\uu{\mu} \psi A\dd{\mu}.
\label{u1int}
\end{align}\normalsize
If the spin structure of \en{ \psi } is negligible,
only the scalar potential \en{ A_0 = \phi } is relevant
\small\begin{align}
 \lag\urm{int}
= 
  g\psi\dgg \psi \phi,
\label{u1yukawa}
\end{align}\normalsize
which is Yukawa's interaction.\index{Yukawa's interaction}
If the source of the potential is so heavy 
that \en{ \phi } is not influenced by the interaction,
it can be replaced with a classical potential \en{ V(\bm{x}) } as
\small\begin{align}
 \lag\urm{int}
= 
  g\psi\dgg \psi V.
\label{u1yukawa-1}
\end{align}\normalsize

\subsection{SU(2) symmetry}
\label{subsec:yangmills}

The second most-used example is SU(2).
Let us consider a local gauge transformation
defined by \en{ \lie\dd{a}=\im g T\dd{a} }:
\small\begin{align}
 \uni(x)
=
 \exp\left[\win\uu{a}(x) \im g T\dd{a} \right],
\label{su2ex}
\end{align}\normalsize
where \en{ \win\uu{a} \ (a=x,y,z) } are real functions and
\en{ g } is a real constant.
\en{ T\dd{a} = \sigma\dd{a}/2 }
with the Pauli matrices \en{ \sigma\dd{a} }
are the basis of su(2) and satisfy
\small\begin{align}
 [T\dd{a}, T\dd{b}]
&=
 \im \varepsilon\dd{abc} T\dd{c}.
\label{tcom}
\end{align}\normalsize
For simplicity, 
we assume that \en{ \uni } can be approximated as
\small\begin{align}
 \uni
\sim 
 1 + \win\uu{a} \im g T\dd{a}.
\end{align}\normalsize
Let us express the gauge field as
\small\begin{align}
\hspace{24mm}
  \ga\dd{\mu}
&=
 \gel\dd{\mu}\uu{a} \lie\dd{a}.
\hspace{10mm}
 (\gel\dd{\mu}\uu{a}\in \mathbb{R})
\label{gaugef-su2}
\end{align}\normalsize
Then (\ref{gaugef}) is written as
\small\begin{align}
 \gel\dd{\mu}^{\prime a}
&=
 \gel\dd{\mu}\uu{a}
+
 \partial\dd{\mu}\win\uu{a} 
+
 g\varepsilon\dd{bca}\gel\dd{\mu}\uu{b}\win\uu{c}.
\label{gaugef3}
\end{align}\normalsize
Compared to (\ref{u1w}),
the third term represents a non-abelian aspect of SU(2).
This gauge field is known as 
the \en{ W } and \en{ Z } bosons 
that mediate the weak interaction.
However, 
it can be described by 
the electromagnetic four-potential of U(1) 
for a special case,
as in the next example.

\begin{example}
\label{ex:su2gauge}
\rm
Let us consider a case where two Dirac fields 
\small\begin{align}
 \psi
&=
 \AV{\psi_1}{\psi_2}
\end{align}\normalsize
transform via the SU(2) gauge transformation
\small\begin{align}
 \psi'
=
 \ex\uu{\win \im 4g T\dd{y}} \psi.
\label{su2extheta}
\end{align}\normalsize
In this case, 
\small\begin{align}
\hspace{25mm}
 \lie\dd{a}
=
 \left(0, \im 4g T\dd{y}, 0\right), 
\qquad
 T\dd{y} = \frac{\sigma\dd{y}}{2}.
\label{bunkai-y}
\end{align}\normalsize
Only the \en{ y }-component involves in the transformation.
Let us redefine as \en{ A\dd{\mu} \equiv \gel\dd{\mu}\uu{y} }.
Then (\ref{gaugef3}) is written as
\small\begin{align}
 A\dd{\mu}'
&=
 A\dd{\mu}
+
 \partial\dd{\mu}\win,
\label{su2single}
\end{align}\normalsize
and the interaction Lagrangian (\ref{gaugeint}) is expressed as
\small\begin{align}
 \lag\urm{int} 
&=
 -2\im g
\left[
 \widebar{\psi}_1 \gamma\uu{\mu} \psi_2
-
 \widebar{\psi}_2 \gamma\uu{\mu} \psi_1
\right] A\dd{\mu}.
\label{su2gaugeintl}
\end{align}\normalsize
Under the classical limit \en{ A\dd{\mu}\to V(x) } 
introduced in Section \ref{sec:u1sym},
it is written as
\small\begin{align}
 \lag\urm{int} 
&=
 -2\im g
\left[
 \psi\dgg_1\psi_2
-
 \psi\dgg_2\psi_1
\right]V(x).
\label{su2gaugelag}
\end{align}\normalsize
This form is often seen as 
a rotating wave approximation\index{rotating wave approximation}
in electromagnetism.
\qed
\end{example}

\begin{remark}
\label{rem:u2u1}
The force carriers are determined by the number of the gauge parameters.
In the case of $\mathrm{SU(2)}$,
there are three parameters that are used to 
describe the weak interaction in the Standard Model.
However, 
the transformation (\ref{su2extheta}) is parameterized only by \en{ \win }.
This is essentially the same as $\mathrm{U(1)}$
even though \en{ T\dd{y}\in\mathrm{su(2)} }.
In fact, 
(\ref{su2single}) is the same form as (\ref{u1w}) for $\mathrm{U(1)}$.
This means that 
this $\mathrm{SU(2)}$ gauge transformation 
can be implemented by the electromagnetic interaction.
\end{remark}


\section{Non-unitary symmetry}
\label{subsec:nonunitary}

\subsection{Issues on non-unitary transformations}
\label{subsec:issues}

So far, 
we have considered local symmetries
only for unitary gauge transformations.
There are two reasons for it.
First,
the mass term is invariant only when \en{ \uni(x) } is unitary: 
\small\begin{align}
 m\widebar{\psi}\psi 
&\to
 m\widebar{\psi'}\psi'
=
 m \widebar{\psi}\uni\dgg(x) \uni(x) \psi.
\end{align}\normalsize
Second,
unitarity is required for the uncertainty relation.
For example, if we consider 
a non-unitary transformation\index{non-unitary gauge transformation}
\small\begin{align}
\hspace{10mm}
 \psi\to \psi'
&=
 \ex\uu{r}\psi, 
\hspace{10mm}
  (r\in\mathbb{R})
\end{align}\normalsize
the resulting field obviously 
violates the uncertainty relation.

However, these problems can be avoided
if we consider a transformation
\small\begin{align}
 \AVl{\psi}{\psi\dgg}
&\to
 \AVl{\psi'}{\psi^{\dagger \prime}}
=
 \A{\ex\uu{r}}{}{}{\ex\uu{-r}} \AVl{\psi}{\psi\dgg}.
\label{psisq}
\end{align}\normalsize
Note that 
this expression is irrelevant for spinors
because \en{ \psi } and \en{ \psi\dgg } are column and row vectors, 
respectively.
Let us consider a complex scalar field
such as the forward traveling field \en{ \{\phi, \phi\dgg\} }.
Then 
we can express (\ref{psisq}) with a non-unitary matrix \en{ V } as
\small\begin{align}
 \AVl{\phi}{\phi\dgg}
&\to
 \AVl{\phi'}{\phi^{\dagger \prime}}
=
 V \AVl{\phi}{\phi\dgg}
\label{psisq-2}
\end{align}\normalsize

It is worth noting that 
this is different from conventional gauge transformations.
In general,
gauge transformations are introduced to 
describe interactions between different particles.
However, 
\en{ \{\phi,\phi\dgg\} } are not different particles.
Rather, they are different phases of the same particles.
This situation is similar to chirality.
We regard (\ref{psisq-2}) as a transformation 
describing interactions between particles 
including internal degrees of freedom.

It is also important to note that not all matrices
are available for \en{ V }.
There must be some conditions for \en{ V }
to be physically meaningful transformations.
We consider this problem in this section.

\subsection{Non-unitary gauge condition and symmetry}
\label{subsec:covariant}

Let us consider 
\en{ N }-forward traveling fields \en{ \{\phi_1,\phi_2,\cdots,\phi_N\} }
denoted by
\small\begin{align}
 \Phi\dd{\alpha}
&\equiv
 \AV{\phi\dd{\alpha}}{\phi\dd{\alpha}\dgg},
\quad
 \Phi
\equiv
 \BV{\Phi_1}{\vdots}{\Phi_N}.
\end{align}\normalsize
Since both \en{ \phi\dd{\alpha} } and \en{ \phi\dd{\alpha}\dgg } are involved,
these are efficiently described by 
the symmetric form of the Lagrangian (\ref{lagfor}):
\small\begin{align}
\hspace{20mm}
 \lag(\Phi,\partial\dd{+}\Phi)
&=
 \frac{1}{2} \Phi\dgg \Sigma\dd{z} \im \partial\dd{+} \Phi,
\qquad
  \Sigma\dd{z}
\equiv
 \B{\sigma\dd{z}}{}{}
   {}{\ddots}{}
   {}{}{\sigma\dd{z}}.
\end{align}\normalsize


Let us consider a local transformation of the form
\small\begin{align}
\hspace{15mm}
 \Phi(x)
&\to
 \Phi'(x)
=
 V(x)\Phi(x),
\hspace{10mm}
 V(x)
=
 \exp[\win\uu{a}(x) \lie\dd{a}]
\label{nontrans}
\end{align}\normalsize
where \en{ \win\uu{a} } are real functions
and 
\en{ \lie\dd{a} } are \en{ 2N \times 2N } matrices.
In general,
\en{ \lie\dd{a}\dgg \not= -\lie\dd{a} },
so
\en{ V } is not necessarily unitary.
To find an interaction Lagrangian,
we follow the same argument as the unitary case.
Let us introduce a gauge field
\small\begin{align}
\hspace{20mm}
 \ga
=
 \gel\uu{a}\lie\dd{a}, \qquad (\gel\uu{a} \in \mathbb{R},)
\end{align}\normalsize
and a covariant derivative 
\small\begin{align}
 D\dd{+}
&\equiv
 \partial\dd{+} - \ga.
\label{nonunicov}
\end{align}\normalsize

A transformation \en{ \ga \to \ga' } follows from the gauge principle
\small\begin{align}
 \lag(\Phi,D\dd{+}\Phi)
&=
 \lag(\Phi',D\dd{+}'\Phi').
\label{nonunigp}
\end{align}\normalsize
This is satisfied if
\small\begin{align}
 D\dd{+} 
&=
 \Sigma\dd{z} V\dgg \Sigma\dd{z} D\dd{+}' V.
\label{nonunid}
\end{align}\normalsize
Since \en{ \Sigma\dd{z}^2=I }, we have
\small\begin{align}
 \Sigma\dd{z} V\dgg \Sigma\dd{z} 
=
 \exp\left[
 \win\uu{a} \Sigma\dd{z} \lie\dd{a}\dgg \Sigma\dd{z}
 \right]
\equiv
 V\uu{\circ}.
\end{align}\normalsize
Then (\ref{nonunid}) is written as
\en{ D\dd{+} = V\uu{\circ} D\dd{+}' V }
and hence
\small\begin{align}
 \partial\dd{+} - \ga
=
 V\uu{\circ} V \partial\dd{+}
+
 V\uu{\circ} (\partial\dd{+} V)
-
 V\uu{\circ} \ga' V.
\label{nonunid2}
\end{align}\normalsize
Comparing the coefficients of \en{ \partial\dd{+} }, 
we get \en{ V\uu{\circ} V= I }.
This is satisfied if 
\small\begin{align}
 \Sigma\dd{z} \lie\dd{a} + \lie\dd{a}\dgg \Sigma\dd{z} 
&=
 0,
\label{nonunicond}
\end{align}\normalsize
which is called 
a \textit{non-unitary gauge condition}\index{non-unitary gauge condition}.
From (\ref{nonunid2}), 
\en{ \ga } transforms as
\small\begin{align}
 \ga
\to
 \ga'
=
 V \ga V\inv
+
 (\partial\dd{+} V)V\inv,
\label{gaugefn}
\end{align}\normalsize
which is the same form as the unitary case (\ref{gaugef}).
In a simple case where 
\en{ \lie\dd{a} } is single-component \en{ \lie\dd{a}=(\lie,0,\cdots) },
the gauge field is expressed as \en{ \ga=\gel\lie }
and (\ref{gaugefn}) is reduced to the same form as U(1):
\small\begin{align}
 \gel^{\prime}
&=
 \gel + \partial\dd{+} \win.
\label{nonuniw}
\end{align}\normalsize

Finally, 
a symmetric Lagrangian under the non-unitary transformation is given as
\small\begin{align}
 \lag(\Phi,D\dd{+} \Phi)
&=
 \lag(\Phi, \partial\dd{+} \Phi) +  \lag\urm{int},
\end{align}\normalsize
where \en{  \lag\urm{int} } is an interaction Lagrangian
\small\begin{align}
  \lag\urm{int}
&=
- \frac{1}{2} \Phi\dgg \im \Sigma\dd{z}  \ga \Phi.
\label{nonintl}
\end{align}\normalsize

Let us show that this Lagrangian is Hermitian.
Note that \en{ \gel\uu{a}\in \mathbb{R} }.
The non-unitary gauge condition (\ref{nonunicond}) turns out to be
\small\begin{align}
 \Sigma\dd{z} \ga +\ga\dgg \Sigma\dd{z}
&=
 0,
\end{align}\normalsize
from which it follows that
\small\begin{align}
  (\lag\urm{int})\dgg 
&=
 \frac{1}{2}\Phi\dgg \im \ga\dgg \Sigma\dd{z} \Phi
=
 -\frac{1}{2}\Phi\dgg \im \Sigma\dd{z} \ga \Phi
= 
  \lag\urm{int}.
\end{align}\normalsize
As a result,
the corresponding time evolution operator 
\en{ S \sim \exp(\im  L\urm{int}) }
is unitary.
This contains a solution to the issue raised in Section \ref{subsec:issues},
i.e.,
the uncertainty principle is not violated
under the non-unitary gauge transformations.
Unitarity is not required for the gauge transformations,
yet 
the resulting time evolution ends up being unitary.
This is a consequence of the gauge principle.

\begin{remark}
\label{rem:nonunialag}
The Lagrangian of the gauge field
is given in the same form as the unitary case (\ref{lagw})
because 
the transformation of the gauge field (\ref{gaugefn})
is the same as the unitary case.
\end{remark}

The non-unitary gauge condition (\ref{nonunicond})
is concerned with $\mathrm{su(N,N)}$.
For \en{ N=1 },
it is written as
\small\begin{align}
\hspace{20mm}
 \lie = g\uu{x} \sigma\dd{x} + g\uu{y} \sigma\dd{y} + \im D,
\hspace{5mm}
 (g\uu{x},g\uu{y}\in \mathbb{R})
\label{nonusin}
\end{align}\normalsize
where \en{ D } is an arbitrary real diagonal matrix.
Obviously \en{ \lie\dgg\not= -\lie }
unless \en{ g\uu{x}=g\uu{y} = 0 }.
If \en{ \Tr D=0 }, then \en{ \lie \in } $\mathrm{su(1,1)}$.
This corresponds to 
a degenerate parametric amplifier,\index{degenerate parametric amplifier}
as will be seen in the next example.

\begin{example}
\label{ex:nonuni1}
\rm
As an example of (\ref{nonusin}),
let us consider 
\small\begin{align}
\hspace{10mm}
 \lie
&=
 2\A{}{g}
    {g\uu{*}}{},
\hspace{10mm}
 \bigl(g\in \mathbb{C}.\bigr)
\end{align}\normalsize
Note that \en{ \lie\dgg \not= -\lie }.
The corresponding gauge transformation is given as
\small\begin{align}
 V(x)
&=
 \ex\uu{\win(x)\lie}
=
 \A{\cosh(2|g|\win)}{\ffrac{g}{|g|}\sinh(2|g|\win)}
   {\ffrac{g\uu{*}}{|g|}\sinh(2|g|\win)}{\cosh(2|g|\win)},
\label{exlint}
\end{align}\normalsize
which is not unitary.
(This form is known as the Bogoliubov transformation.)\index{Bogoliubov transformation}
In this case, the interaction Lagrangian (\ref{nonintl}) is written as
\small\begin{align}
 \lag\urm{int}
&=
 \im \gel \bigl( g\uu{*}\phi\phi - g\phi\dgg\phi\dgg \bigr).
\label{nonsu22}
\end{align}\normalsize
This form is known as a degenerate parametric amplifier 
in quantum optics.
As shown in (\ref{nonuniw}),
the gauge field transforms as
\small\begin{align}
 \gel
\to
 \gel^{\prime}
&=
 \gel + \partial\dd{+} \win.
\end{align}\normalsize
Note that (\ref{nonsu22}) is not symmetric under 
a global U(1) transformation
\en{ \phi\to \ex\uu{\im\alpha}\phi },
which means that particle number is not conserved
due to Noether's theorem.
However, 
if we regard \en{ gA } as a pump field
and introduce \en{ gA\to \ex\uu{\im 2\alpha}gA },
the global U(1) symmetry is recovered.
Then 
the first term of (\ref{nonsu22}) describes 
a process where the pump field splits to two forward traveling fields,
and the second term is its reverse process.
This example will be revisited in Chapter \ref{chap:boundary}
where the roles of \en{ \gel } and \en{ \win } become clearer.
\qed
\end{example}

\marginpar{\vspace{-30mm}
\footnotesize
This U(1) symmetry breaking 
is also the key to understanding the BCS theory
where 
the Bogoliubov transformation 
is used to diagonalize a Hamiltonian.
\normalsize
}

\chapter*{}

\begin{center}
\begin{Large}
\begin{bfseries}
\textsf{Part II}
\\
\vspace{10mm}
\textsf{Classical formulation of quantum gates and systems}
\end{bfseries}
\end{Large}
\end{center}

\chapter{Quantum gates}
\label{chap:boundary}
\thispagestyle{fancy}

In this chapter, 
we introduce operations on the forward traveling and Dirac fields,
called quantum gates.\index{quantum gate}
In a conventional setting,
given the Lagrangian of a quantum gate,
we calculate equations of motion to find its input-output relation.
Here we consider an inverse problem.
A quantum gate is defined by an input-output relation 
from which we calculate the corresponding interaction Lagrangian.
We show two methods for it:
Lagrange's method of undetermined multipliers
and the gauge theory.
In Lagrange's method,
the input-output relation is regarded as a boundary condition.
On the other hand, in the gauge theoretical approach,
the input-output relation is regarded as a local gauge transformation.
In either case, we obtain the same interaction Lagrangian.
Various types of examples are also shown for 
unitary and non-unitary quantum gates.

\section{Quantum gates for forward traveling fields}

This section illustrates a general procedure to calculate 
the interaction Lagrangians of unitary and non-unitary quantum gates
on the forward traveling field \en{ \phi }.
Note that unitary gates are described by only \en{ \phi } as
\small\begin{align}
\hspace{22mm}
 \phi 
\to 
 \phi'
=
 \uni(x)\phi,
\qquad
 \uni\uni\dgg=I.
\label{qgdef}
\end{align}\normalsize
On the other hand,
both \en{ \phi } and \en{ \phi\dgg } are necessary
for non-unitary gates.
Let us introduce a vector form
\small\begin{align}
 \Phi
\equiv
 \AVl{\phi}{\phi\dgg}.
\end{align}\normalsize
Then non-unitary gates are expressed as
\small\begin{align}
 \Phi
\to
 \Phi'
=
 V(x) \Phi,
\label{qgdef2}
\end{align}\normalsize
where 
\en{ V } satisfies 
the non-unitary gauge condition (\ref{nonunicond}).
Most of useful and interesting quantum gates are
non-unitary,
as will be seen in Section \ref{sec:exgate}.

There are two methods for the calculation 
of the interaction Lagrangians:
Lagrange's method of undetermined multipliers
and the gauge theory.
The non-unitary case includes the unitary one.
However,
the unitary case is simple and instructive 
to illustrate the procedure.
We consider the unitary case first.

\newpage

\subsection{Unitary gates: Lagrange's method}
\label{subsec:passive}
\index{unitary gate}

\begin{wrapfigure}[0]{r}[53mm]{49mm}
\centering
\includegraphics[keepaspectratio,width=45mm]{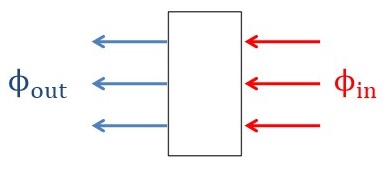}
\caption{
\small
Input and output fields.
\normalsize
}
\label{fig-ugateinout}
\end{wrapfigure}

Let us consider multiple forward traveling fields
\en{ \{\phi_1,\phi_2, \cdots\} }.
These fields are regarded as an input
before entering the gate,
and an output after leaving the gate,
as in Figure \ref{fig-ugateinout}.
In general,
the gate is spatially distributed.
The input and the output are expressed as
\begin{subequations}
\small\begin{align}
 \phi\drm{in}
&\equiv
 \BV{\phi\drm{1,in}}{\phi\drm{2,in}}{\vdots}
\, \ =
 \BV{\phi_1}{\phi_2}{\vdots}\drm{in},
\\
 \phi\drm{out}
&\equiv
 \BV{\phi\drm{1,out}}{\phi\drm{2,out}}{\vdots}
=
 \BV{\phi_1}{\phi_2}{\vdots}\drm{out}.
\end{align}\normalsize
\end{subequations}
For these fields, Lagrangians are written as
\begin{subequations}
\small\begin{align}
 \lag\drm{in}
&=
  \phi\drm{in}\dgg \im\partial\dd{+} \phi\drm{in},
\\
 \lag\drm{out} 
&=
  \phi\drm{out}\dgg \im\partial\dd{+} \phi\drm{out}.
\end{align}\normalsize
\end{subequations}

\begin{wrapfigure}[0]{r}[53mm]{49mm}
\centering
\vspace{-30mm}
\includegraphics[keepaspectratio,width=45mm]{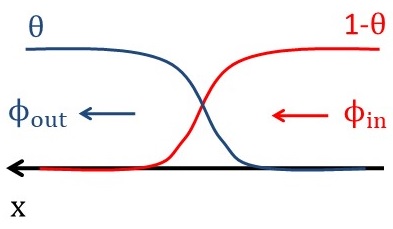}
\caption{
\small
The input $\phi_{\textrm{in}}$ 
and the output $\phi_{\textrm{out}}$ 
are defined 
in the domains of the weight functions $1-\win$ and $\win$,
respectively.
\normalsize
}
\label{fig-weight-0}
\end{wrapfigure}

To combine these Lagrangians,
we introduce a weight function \en{ \win(x)=\win(t,z) }.
\index{weight function (gauge theory)}
For example, 
the input and output can be weighted 
by \en{ \win } as in Figure \ref{fig-weight-0}.
We usually assume that this function satisfies
\small\begin{align}
& 0\le \win(x) \le 1.
\end{align}\normalsize
\en{ 1-\win } and \en{ \win }, respectively,
define the domains of the input and the output so that
\begin{subequations}
\small\begin{align}
 \lim_{z\to \infty}\win(x) &= 1, 
\\
 \lim_{z\to -\infty}\win(x) &= 0. 
\end{align}\normalsize
\end{subequations}
Then both the input and the output are described by
\small\begin{align}
 \lag^{\win} 
&=
 \win  \lag\drm{out} 
+ 
 (1-\win) \lag\drm{in}. 
\label{fpdlag}
\end{align}\normalsize

Note that \en{ \win } and \en{ 1-\win } are partially overlapping.
In this domain,
the input and the output are mixed
and we need to specify 
how they are associated with each other.
This defines a quantum gate.
Here we consider an input-output relation
\small\begin{align}
\hspace{20mm}
 \phi\drm{out}(x)
&=
 \tf \phi\drm{in}(x), 
\qquad
 \tf \tf\dgg = 1,
\label{qgtf0}
\end{align}\normalsize
where \en{ \tf} is a unitary (transfer function) matrix.
This is expressed via 
the Cayley transform\index{Cayley transform}
as
\small\begin{align}
\hspace{20mm}
 \tf
&=
 \frac{1+\lie}{1-\lie},
\qquad \lie\dgg=-\lie.
\label{qgtf1}
\end{align}\normalsize
Then the input-output relation is rewritten as
\small\begin{align}
 (\phi\drm{in} - \phi\drm{out}) 
+ 
 \lie(\phi\drm{in} + \phi\drm{out})
&=0.
\label{gbconp}
\end{align}\normalsize

We regard this as a constraint on \en{ \lag^{\win} }
and apply Lagrange's method of undetermined multipliers
\small\begin{align}
 L
&=
 \int dz \
\Bigl[
 \lag^{\win}
+ 
 \bm{a}\trans
\Bigl\{ 
 (\phi\drm{in}-\phi\drm{out}) + \lie (\phi\drm{in}+\phi\drm{out})
\Bigr\}
\Bigr],
\label{l-multi}
\end{align}\normalsize
where \en{ \bm{a} } is a Lagrange multiplier.
The Euler-Lagrange equation 
yields
\begin{subequations}
\small\begin{align}
 \phi\drm{out}\dgg \ \im \partial\dd{+} \win
+
 \bm{a}\trans(1-\lie)
&=0,
\\
 \phi\drm{in}\dgg \ \im \partial\dd{+} \win
+
 \bm{a}\trans (1+\lie) 
&=0.
\end{align}\normalsize
\end{subequations}
There are two choices for \en{ \bm{a} }.
Note that 
if \en{ \lie=0 }, 
the gate does not operate 
and 
the input-output relation is
\en{ \phi\drm{out} = \phi\drm{in} } in (\ref{l-multi}).
This means that 
we need to choose a multiplier 
that is independent of \en{ \lie }:
\small\begin{align}
 \bm{a}\trans
&=
 - \mm{\phi}\dgg \im \partial\dd{+} \win,
\end{align}\normalsize
where
\small\begin{align}
 \mm{\phi} 
\equiv 
 \frac{\phi\drm{in}+\phi\drm{out}}{2}.
\end{align}\normalsize

The resulting Lagrangian is written as the sum of 
a free field component \en{ \lag\urm{f} } (without \en{ \lie }) and 
an interaction component \en{ \lag\urm{int} } (with \en{ \lie })
as
\small\begin{align}
 L
&=
 \int dz \ \bigl[\lag\urm{f} + \lag\urm{int} \bigr],
\end{align}\normalsize
where 
\begin{subequations}
\label{unilag} 
\small\begin{align}
 \lag\urm{f}
&=
 \lag^{\win}
-
 \mm{\phi}\dgg (\im \partial\dd{+} \win) 
  (\phi\drm{in} - \phi\drm{out}),
\label{unilag-1}\\
 \lag\urm{int}
&=
 - 2
 \mm{\phi}\dgg (\im \partial\dd{+} \win) \, \lie \, \mm{\phi}.
\label{unilag-2}
\end{align}\normalsize
\end{subequations}

\begin{remark}
\label{rem:freeindep}
Since 
\en{ \{\phi_1,\phi_2,\cdots\} } do not interact with each other in free space,
the free field Lagrangian (\ref{unilag-1}) can be expressed as
\small\begin{align}
 \lag^{\mathrm{f}}
&=
 \lag_1^{\mathrm{f}} +  \lag_2^{\mathrm{f}} + \cdots.
\end{align}\normalsize
\end{remark}

\begin{wrapfigure}[0]{r}[53mm]{49mm}
\centering
\vspace{-0mm}
\includegraphics[keepaspectratio,width=45mm]{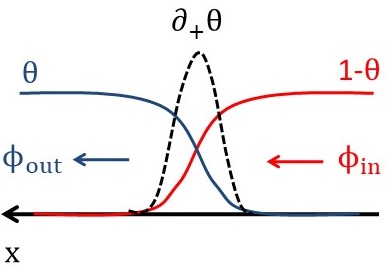}
\caption{
\small
The quantum gate is defined in a domain where $\partial_+\win\not=0$.
\normalsize
}
\label{fig-weight-1}
\end{wrapfigure}

Let us see how the input-output relation is reproduced from this Lagrangian.
The Euler-Lagrange equation yields
\small\begin{align}
 \win \im\partial\dd{+}\phi\drm{out}
+
 (1-\win)\im\partial\dd{+}\phi\drm{in}
-
 (\im\partial\dd{+}\win)  \Bigl[
 (\phi\drm{in}-\phi\drm{out}) 
+ 
 \lie (\phi\drm{in}+\phi\drm{out})
\Bigr]
=0.
\label{xboundary}
\end{align}\normalsize
This shows that the gate is effective
only in a domain where \en{ \partial\dd{+}\win(x)\not= 0 }
as depicted in Figure \ref{fig-weight-1}.
As an example,
consider a lumped gate for which 
\en{ \win(x) } is a step function
\begin{subequations}
\label{unistep}
\small\begin{align}
 \win(x) &= \step(z),
\\
 \partial\dd{+} \win(x) &= \delta(z).
\end{align}\normalsize
\end{subequations}
Then (\ref{xboundary}) reads
\begin{subequations}
\label{lamstep}
\small\begin{align}
\hspace{20mm}
  z < 0: & \quad \im\partial\dd{+}\phi\drm{in}=0,
\\
  z = 0: & \quad  \phi\drm{out} =  \tf \phi\drm{in},
\qquad \tf \equiv \frac{1+\lie}{1-\lie},
\\
  z > 0: & \quad \im\partial\dd{+}\phi\drm{out}=0.
\end{align}\normalsize
\end{subequations}
The local gauge transformation is reconstructed at \en{ z=0 }, 
as expected.

\newpage

\subsection{Unitary gates: Gauge theoretical approach}
\label{subsec:uniLgauge}
\index{unitary gate}

\begin{wrapfigure}[0]{r}[53mm]{49mm}
\centering
\vspace{25mm}
\includegraphics[keepaspectratio,width=45mm]{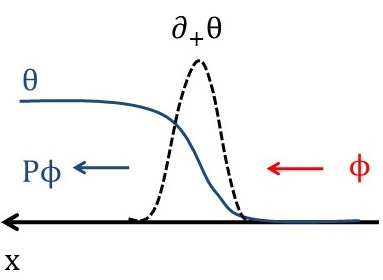}
\caption{
\small
Quantum gate with the input $\phi$ and the output $\tf \phi$.
\normalsize
}
\label{fig-weightG}
\end{wrapfigure}

In the gauge theoretical approach,
a quantum gate is defined by a local gauge transformation.
Let us rewrite the unitary transfer function \en{ \tf } as
\small\begin{align}
 \tf
&=
 \frac{1+\lie}{1-\lie}
\sim
 1+2\lie
\sim
 \exp\bigl[2\lie\bigr],
\label{tfpg}
\end{align}\normalsize
where \en{ \lie } serves as a Lie-algebra associated with \en{ \tf }.
The input-output relation is written as 
\small\begin{align}
 \phi(x) &\to \phi'(x) = \tf\phi(x)  = \exp\bigl[2\lie \bigr]\phi(x).
\label{tfpg-g}
\end{align}\normalsize
We regard this as a gauge transformation
and apply the gauge theory to find an interaction Lagrangian.
However, it is not straightforward 
because (\ref{tfpg-g}) defined \textit{globally}.

To define the gate as a \textit{local} transformation,
we introduce a weight function \en{ \win } 
as in Figure \ref{fig-weightG}.
The input-output relation is redefined as
\small\begin{align}
 \phi(x) 
&\to 
 \phi'(x) 
= 
 \exp\bigl[ \win(x) 2 \lie  \bigr]\phi(x).
\label{gategauge}
\end{align}\normalsize
In this expression,
\en{ \phi'=\phi } in \en{ \{x|\win(x)=0\} }, 
which is the input coming into the gate.
On the other hand, 
\en{ \phi'= \tf \phi } in \en{ \{x|\win(x)=1\} }, 
which is the output leaving the gate.
Hence 
(\ref{gategauge}) is well defined as a local gauge transformation.

Now let us apply the gauge theory.
The free field Lagrangian of the forward traveling field 
is given by
\small\begin{align}
 \lag(\phi,\partial\dd{+}\phi)
&=
 \phi\dgg \im\partial\dd{+}\phi.
\end{align}\normalsize
For the local transformation (\ref{gategauge}),
the covariant derivative is written as 
\small\begin{align}
 D\dd{+}
&=
 \partial\dd{+} - 2 \gel \lie,
\end{align}\normalsize
where \en{ \gel } is a gauge field that transforms as
\small\begin{align}
 \gel \to \gel'
=
 \gel + \partial\dd{+}\win.
\label{gelchange}
\end{align}\normalsize
A locally symmetric Lagrangian is given by
\small\begin{align}
 \lag(\phi,D\dd{+}\phi)
&=
 \lag(\phi,\partial\dd{+}\phi) 
- 
  2 \phi\dgg \im \gel \lie  \phi.
\label{gateLintgauge}
\end{align}\normalsize

This is consistent with 
the result of Lagrange's method (\ref{unilag}).
Note that 
the input and the output are not explicitly labeled 
in the gauge theoretical treatment.
Setting \en{ \phi\drm{in}=\phi\drm{out}=\phi } in (\ref{unilag}),
we get
\small\begin{align}
 \lag
&=
 \lag(\phi,\partial\dd{+}\phi) 
 - 
 2 \phi\dgg (\im \partial\dd{+} \win) \lie \phi.
\end{align}\normalsize
This is the same as (\ref{gateLintgauge}),
except that 
the gauge field \en{ \gel } is replaced to \en{ \partial\dd{+} \win }.
This means that
the quantum gate does nothing if \en{ \gel'=\gel }.
The gate is turned on 
only when 
there is a `gap' \en{ \partial\dd{+}\win=\gel'-\gel\not= 0 }
in the gauge field.

In fact, 
both the input \en{ \phi } and the output \en{ \phi' } are solutions 
to the same differential equation. 
The difference between \en{ \phi } and \en{ \phi' } is a gauge.
When \en{ \phi } goes through the gate,
its gauge is modulated by the amount of \en{ \partial\dd{+}\win }.
Then \en{ \phi } transforms into \en{ \phi' }.
This is similar to magnetic monopoles and the Berry phase 
(Appendix \ref{app:berry})
for which the gap of the gauge field results in a magnetic flux.

\subsection{Non-unitary gates: Lagrange's method}
\label{subsec:nonuniL}
\index{non-unitary gate}

For a non-unitary gate,
we introduce the following notation:
\small\begin{align}
 \Phi
&\equiv
 \AV{\phi}{\ \phi\dgg},
\hspace{7mm}
 \Phi\drm{in}
\equiv
 \BV{\Phi_1}{\Phi_2}{\vdots}\drm{in},
\hspace{7mm}
 \Phi\drm{out}
\equiv
 \BV{\Phi_1}{\Phi_2}{\vdots}\drm{out}.
\end{align}\normalsize
These fields are described by
the symmetric form of the Lagrangian (\ref{lagfor}):
\begin{subequations}
\small\begin{align}
 \lag\drm{in}
&=
  \frac{1}{2} \Phi\drm{in}\dgg \Sigma\dd{z} \im\partial\dd{+} \Phi\drm{in},
\\
 \lag\drm{out}
&=
  \frac{1}{2} \Phi\drm{out}\dgg \Sigma\dd{z} \im\partial\dd{+} \Phi\drm{out},
\end{align}\normalsize
\end{subequations}
where
\small\begin{align}
 \Sigma\dd{z}
&\equiv
 \B{\sigma\dd{z}}{}{}
   {}{\sigma\dd{z}}{}
   {}{}{\ddots}.
\end{align}\normalsize
The inputs \en{ \Phi\drm{in} } and the outputs \en{ \Phi\drm{out} }
are distinguished by a weight function \en{ \win(x) },
as in the unitary case:
\small\begin{align}
 \lag^{\win}
&=
 \win     \lag\drm{out} 
+ 
 (1-\win) \lag\drm{in} 
\label{fpdlaga}
\end{align}\normalsize

Assume that 
the input-output relation of the non-unitary gate is given by
\small\begin{align}
 (\Phi\drm{in} - \Phi\drm{out}) + \lie (\Phi\drm{in} + \Phi\drm{out})
&=0,
\label{gbcona}
\end{align}\normalsize
where \en{ \lie } is a complex matrix satisfying 
the non-unitary gauge condition (\ref{nonunicond}).
The interaction Lagrangian of the non-unitary gate
is obtained in the same way as the unitary case.
Let us consider a Lagrangian
\small\begin{align}
 L
&=
 \int dz 
\Bigl[
 \lag^{\win}
+ 
 \bm{a}\trans
\Bigl\{ 
 (\Phi\drm{in}-\Phi\drm{out}) + \lie (\Phi\drm{in}+\Phi\drm{out})
\Bigr\}
\Bigr],
\end{align}\normalsize
where \en{ \bm{a} } is a Lagrange multiplier 
determined by the Euler-Lagrange equation as
\small\begin{align}
 \bm{a}\trans
&=
 - \frac{1}{2} \mm{\Phi}\dgg  \Sigma\dd{z} \im \partial\dd{+} \win,
\end{align}\normalsize
where
\small\begin{align}
 \mm{\Phi} \equiv \frac{\Phi\drm{in}+\Phi\drm{out}}{2}.
\end{align}\normalsize
The resulting free field Lagrangian \en{ \lag\urm{f} }
and interaction Lagrangian \en{ \lag\urm{int} } 
are given by
\begin{subequations}
\label{nonunilag}
\small\begin{align}
 \lag\urm{f}
&=
 \lag^{\win}
-
 \frac{1}{2} \mm{\Phi}\dgg \Sigma\dd{z} 
 (\im\partial\dd{+} \win) (\Phi\drm{in} - \Phi\drm{out}),
\\
 \lag\urm{int}
&=
 - 
 \mm{\Phi}\dgg \Sigma\dd{z} 
 (\im \partial\dd{+} \win) \, 
 \lie \, \mm{\Phi}.
\end{align}\normalsize
\end{subequations}

\subsection{Non-unitary gates: Gauge theoretical approach}
\label{subsec:nonuniG}
\index{non-unitary gate}

In the gauge theoretical approach, 
we start with a Lagrangian
\small\begin{align}
 \lag(\Phi,\partial\dd{+}\Phi)
&=
 \frac{1}{2} \Phi\dgg \Sigma\dd{z} \im\partial\dd{+} \Phi.
\end{align}\normalsize
Let us rewrite the transfer function as
\small\begin{align}
 P
&=
 \frac{1+\lie}{1-\lie}
\sim
 \exp\bigl[2\lie\bigr].
\end{align}\normalsize
A basic procedure is the same as the unitary case
(Section \ref{subsec:uniLgauge}).
We define a local gauge transformation 
by introducing a weight function \en{ \win(x) } as 
\small\begin{align}
 \Phi(x)
&\to
 \Phi'(x)
=
 \exp\bigl[ \win(x)  2 \lie  \bigr] \Phi(x).
\end{align}\normalsize
The corresponding covariant derivative is given by
\small\begin{align}
 D\dd{+}
&=
 \partial\dd{+} - 2 \gel \lie,
\end{align}\normalsize
where \en{ \gel } transforms as
\small\begin{align}
  \gel \to \gel' = \gel + \partial\dd{+} \win.
\end{align}\normalsize
The Lagrangian of the gate is then given by
\small\begin{align}
 \lag(\Phi,D\dd{+}\Phi)
&=
 \lag(\Phi,\partial\dd{+}\Phi)
- 
 \Phi\dgg \Sigma\dd{z}  \im \gel \lie \Phi,
\label{nonintgauge}
\end{align}\normalsize
which is the same form as 
the result of Lagrange's method (\ref{nonunilag}), 
as explained in the unitary case.

\section{Some remarks on $\lie$}
\label{sec:remarks}

\subsection{How to find $\lie$}

The interaction Lagrangian is determined by \en{ \lie },
and it is an important task to find \en{ \lie }
for given quantum gates.
We have considered the two expressions of the transfer function:
\begin{subequations}
\small\begin{align}
\hspace{-20mm}
\mbox{\normalsize Lie algebra: \small} \ 
 \tf &= \ex\uu{2\lie},
\\
\hspace{-20mm}
\mbox{\normalsize Cayley transform: \small} \
 \tf &= \frac{1+\lie}{1-\lie}. 
\end{align}\normalsize
\end{subequations}
These are approximately equivalent.
In general, 
they are different, especially when we consider multiple gates.
For example,
suppose that two gates with \en{ \lie_1 } and \en{ \lie_2 }
are connected in a concatenated way.
In the Lie algebraic expression,
it is written as
\small\begin{align}
 \ex\uu{2\lie}
=
 \ex\uu{2\lie_1}
 \ex\uu{2\lie_2}.
\label{gcase}
\end{align}\normalsize
Up to second order, \en{ \lie } is obtained as
\small\begin{align}
 \lie
\sim
 \lie_1 + \lie_2 + [\lie_1,\lie_2].
\label{ggaugelie}
\end{align}\normalsize

On the other hand,
as a Cayley transform,
the input-output relation is written as
\small\begin{align}
 \Phi\drm{out}
=
 \frac{I+\lie_1}{I-\lie_1}
 \frac{I+\lie_2}{I-\lie_2}
 \Phi\drm{in}.
\end{align}\normalsize
This can be expressed as
\small\begin{align}
  (\Phi\dd{\mathrm{in}} - \Phi\dd{\mathrm{out}}) 
+ 
 (\lie\dd{\mathrm{in}} \Phi\dd{\mathrm{in}} 
+ \lie\dd{\mathrm{out}} \Phi\dd{\mathrm{out}})
&=0.
\label{tfXY}
\end{align}\normalsize
To obtain \en{ \lie },
we need to rewrite it as
\small\begin{align}
  (\Phi\dd{\mathrm{in}} - \Phi\dd{\mathrm{out}}) 
+ 
 \lie (\Phi\dd{\mathrm{in}} + \Phi\dd{\mathrm{out}})
&=0.
\label{gbcona-1}
\end{align}\normalsize
There are different ways to do this.
For example,
by equating two transfer functions
\small\begin{align}
 (1-\lie\dd{\mathrm{out}})\inv (1+\lie\dd{\mathrm{in}})
=
 \frac{1+\lie}{1-\lie},
\label{pcompare}
\end{align}\normalsize
we get
\small\begin{align}
 \lie
&=
 (2+\lie\dd{\mathrm{in}}-\lie\dd{\mathrm{out}})\inv 
 (\lie\dd{\mathrm{in}}+\lie\dd{\mathrm{out}}).
\label{pcompare1}
\end{align}\normalsize
In general,
this is different from (\ref{ggaugelie}).
These expressions will be used 
to find the interaction Lagrangian of quantum circuits 
in Chapter \ref{chap:circuit}.

\subsection{$\lie$ as a Lie algebra, reactance matrix, and self-energy}

The matrix \en{ \lie } has 
different interpretations in different fields.
For example,
in the gauge theory,
\en{ \lie } is considered as a generator or a Lie algebra
because the transfer function is expressed as
\small\begin{align}
 \tf
=
 \ex\uu{2\lie}.
\label{tfcay2}
\end{align}\normalsize

In scattering theory,
a propagator is sometimes expressed as
\small\begin{align}
 \ipg
=
 \frac{1+\im R}{1-\im R},
\end{align}\normalsize
where \en{ R } is called a reactance matrix.\index{reactance matrix}
As explained in Remark \ref{sec:rem-sys},
\en{ \ipg \sim \tf }.
Compared to the Cayley transform\index{Cayley transform} 
\small\begin{align}
 \tf
=
 \frac{I+\lie}{I-\lie},
\label{tfcay}
\end{align}\normalsize
\en{ \lie } corresponds to the reactance matrix.

It is also possible to express the transfer function \en{ \tf } as
\small\begin{align}
 \tf
&=
 \frac{I+\lie}{I-\lie}
\sim
 I +  2\lie ( I+\lie+\lie^2+\cdots).
\end{align}\normalsize
The first term is the direct-through term of the scattering process
and the second term corresponds to the Dyson equation
(Chapter \ref{chap:additional}).
In this case,
\en{ \lie } is related to self-energy.\index{self-energy}

\newpage

\section{Examples of quantum gates}
\label{sec:exgate}

\subsection{SU(2) gate}
\label{subsec:su2}

\begin{wrapfigure}[0]{r}[53mm]{49mm}
\centering
\vspace{-5mm}
\includegraphics[keepaspectratio,width=40mm]{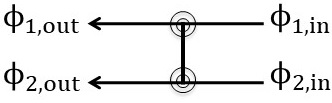}
\caption{
\small
SU(2) gate.
\normalsize
}
\label{fig-su2}
\end{wrapfigure}

Consider a gate with two inputs and two outputs 
as in Figure \ref{fig-su2}:
\begin{subequations}
\small\begin{align}
 \phi\drm{in}
&\equiv
 \AV{\phi_1}{\phi_2}\drm{in},
\\ 
 \phi\drm{out}
&\equiv
 \AV{\phi_1}{\phi_2}\drm{out}.
\end{align}\normalsize
\end{subequations}
Assume that 
the inputs and the outputs are related to each other 
by SU(2).\index{SU(2) gate (forward traveling)}
In general, 
SU(2) is represented by matrices of the form
\small\begin{align}
 \textrm{SU(2)}
=
 \left\{ \
 \A{\alpha}{\beta}{-\beta\uu{*}}{\alpha\uu{*}} 
\ \ \middle|  \ \
 \alpha,\beta\in \mathbb{C}, \ 
 |\alpha|^2+|\beta|^2=1
\right\}.
\end{align}\normalsize
If \en{ \alpha } is real,
\en{ \alpha } and \en{ \beta } can be parameterized as
\small\begin{align}
\hspace{15mm}
 \alpha &=\frac{1-|g|^2}{1+|g|^2}, 
\quad
 \beta = \frac{2g}{1+|g|^2}.
\qquad 
 (g\in \mathbb{C})
\end{align}\normalsize
The input-output relation of this gate is written as
\small\begin{align}
 \AV{\phi_1}{\phi_2}\drm{out}
&=
 \frac{1}{1+|g|^2}
 \A{1-|g|^2}{2g}{-2g\uu{*}}{1-|g|^2}
 \AV{\phi_1}{\phi_2}\drm{in}.
\label{sincos}
\end{align}\normalsize

This is rewritten as
\small\begin{align}
\label{fboundary}
 (\phi\drm{in}-\phi\drm{out}) + \lie (\phi\drm{in}+\phi\drm{out}) &= 0, 
\end{align}\normalsize
where
\small\begin{align}
 \lie
\equiv
 \A{}{g}{-g\uu{*}}{}.
\end{align}\normalsize
In Lagrange's method,
the interaction Lagrangian is given by (\ref{unilag}):
\small\begin{align}
 \lag\urm{SU(2)}
&=
-2 (\im\partial\dd{+}\win) 
  \Bigl[
 g \mm{\phi}_1\dgg  \mm{\phi}_2 
- 
 g\uu{*} \mm{\phi}_2\dgg  \mm{\phi}_1
\Bigr].
\label{su2lag}
\end{align}\normalsize
In the gauge theoretical method,
it is given by (\ref{gateLintgauge}):
\small\begin{align}
 \lag\urm{SU(2)}
&=
 - 2\im \gel
\left[
 g \phi\dgg_1 \phi_2
-
  g\uu{*} \phi\dgg_2   \phi_1
\right].
\label{lagv}
\end{align}\normalsize
This is the same as the rotating wave approximation 
in Example \ref{ex:su2gauge}.

\begin{remark}
This gate is parameterized only by \en{ \win }.
As explained in Remark \ref{rem:u2u1},
it is essentially equivalent to \en{ \mathrm{U(1)} }.
However, 
we still call it an \en{ \mathrm{SU(2)} } gate for convenience.
A general \en{ \mathrm{SU(2)} } transformation 
can be constructed through the cascade connection
of \en{ \mathrm{SU(2)} } gates
(Section \ref{sec:generalsu2}).
\end{remark}

\newpage

\subsection{Displacement gate}
\label{subsec:dis}

\begin{wrapfigure}[0]{r}[53mm]{49mm}
\vspace{5mm}
\centering
\includegraphics[keepaspectratio,width=35mm]{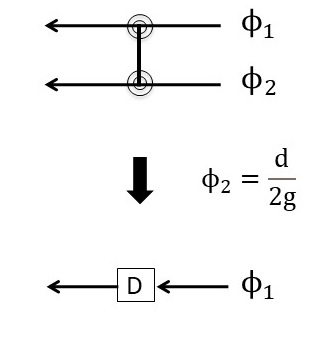}
\caption{
\small
Displacement gate.
\normalsize
}
\label{fig-displace}
\end{wrapfigure}

A displacement gate\index{displacement gate} 
is an operation to displace a field in the phase space.
It is approximately described through the SU(2) gate
by replacing the fields \en{ \phi_2 } to a scalar
\small\begin{align}
\hspace{10mm}
 \phi_2
= 
 \frac{d}{2g},
\qquad
 (d\in \mathbb{C})
\end{align}\normalsize
which is written as
\small\begin{align}
\hspace{33mm}
 \mm{\phi}_2
=
 \frac{d}{2g}.
\qquad
 \left(
\mm{\phi}_2 \equiv \frac{\phi\drm{2,in}+\phi\drm{2,out}}{2}
\right)
\label{dissc}
\end{align}\normalsize
The resulting gate is single-input and single-output 
as in Figure \ref{fig-displace}.
Substituting (\ref{dissc}) into (\ref{su2lag}),
we get
\begin{subequations}
\label{lagdis}
\small\begin{align}
 \lag\urm{f}
&= 
 \lag_1\urm{f},
\\
 \lag\urm{D}
&=
 - (\im\partial\dd{+}\win) 
 \Bigl[ \mm{\phi}_1\dgg d
- 
 d\uu{*}\mm{\phi}_1
 \Bigr],
\end{align}\normalsize
\end{subequations}
where \en{ \lag_1\urm{f} } is the \en{ \phi_1 }-component of 
the free field Lagrangian
(Remark \ref{rem:freeindep})
\small\begin{align}
\hspace{15mm}
 \lag_1\urm{f}
\equiv
  \lag_1^{\win}
-
 \mm{\phi}_1\dgg 
 (\im\partial\dd{+}\win) 
 (\phi\dd{1,\textrm{in}}-\phi\dd{1,\textrm{out}}).
\label{partfree}
\end{align}\normalsize

\marginpar{\vspace{-17mm}
\footnotesize
 Note that $\lag_2\urm{f}=0$ 
if we plug $\phi\drm{2,in}=\phi\drm{2,out}=d/2g$.
\normalsize
 }

Let us check to see if this Lagrangian is correct.
Assume that 
the weight function is a step function: \en{ \win(x)=\step(z) }.
Then the Euler-Lagrange equation yields
\begin{subequations}
\small\begin{align}
 z<0: & \quad \im\partial\dd{+}\phi\drm{1,in}=0, 
\\
 z=0: & \quad \phi\drm{1,out} = \phi\drm{1,in} + d,
\\
 z>0: & \quad \im\partial\dd{+}\phi\drm{1,out}=0,
\end{align}\normalsize
\end{subequations}
which shows that 
the input \en{ \phi\drm{1,in} } is displaced at \en{ z=0 }, 
as expected.

\subsection{Time-varying SU(2) gate}
\label{sec:timevarying}

Let us consider an SU(2) gate\index{SU(2) gate (time-varying)} 
with a time-varying parameter \en{ g=g(t) }.
For later use,
we introduce the following operations:
\begin{subequations}
\label{conv}
\small\begin{align}
 g\ast \phi&\equiv \int_0\uu{t} ds \ g(t-s)\phi(s),
\\
 \psi\dgg \circ g&\equiv \int_0^{\infty} ds \ \psi\dgg(t+s) g(s).
\end{align}\normalsize
\end{subequations}
These are related to each other as
\small\begin{align}
 \int_0^{\infty} dt \ \psi\dgg (g\ast \phi) 
&=
 \int_0^{\infty} dt \ (\psi\dgg \circ g) \phi.
\label{dync}
\end{align}\normalsize
The time-varying SU(2) gate is defined 
by replacing \en{ \lie \to \lie\ast } in (\ref{fboundary}):
\small\begin{align}
 (\phi\drm{in} - \phi\drm{out}) + \lie \ast(\phi\drm{in} + \phi\drm{out})
&=
 0.
\label{tvinout}
\end{align}\normalsize

Let us derive the interaction Lagrangian of this gate 
using Lagrange's method
(because a gauge transformation is not well-defined in this case).
A total Lagrangian is defined as
\small\begin{align}
 L
&=
 \int dz \Bigl[
 \lag^{\win} 
+ 
 \bm{a}\trans \Bigl\{ (\phi\drm{in} - \phi\drm{out}) + \lie \ast(\phi\drm{in} + \phi\drm{out})
\Bigr\}\Bigr],
\label{tvlagconb}
\end{align}\normalsize
where \en{ \bm{a} } is a Lagrange multiplier 
to be determined from the Euler-Lagrange equation
in the same way as before.
However, 
the Lagrangian (\ref{tvlagconb}) is not convenient 
to differentiate with respect to 
\en{ \phi\drm{in} } and \en{ \phi\drm{out} }
because of convolution.
Note that
the Euler-Lagrange equation is the equation of motion 
that minimizes an action 
\small\begin{align}
 S 
&= 
 \int dt \ L(t,z)
=
 \int dt dz \ \lag(t,z).
\end{align}\normalsize
The action \en{ S } is invariant 
if we replace \en{ \ast } to \en{ \circ } in \en{ L } 
because of (\ref{dync}).
Then (\ref{tvlagconb}) is rewritten as
\small\begin{align}
 L
&=
 \int dz \Bigl[
 \lag^{\win}
+ 
 \bm{a}\trans
 \Bigl\{(\phi\drm{in}-\phi\drm{out}) 
+ 
 \circ \ \lie (\phi\drm{in} 
+ 
 \phi\drm{out})
\Bigr\}\Bigr],
\end{align}\normalsize
Now the Euler-Lagrange equation yields
\begin{subequations}
 \small\begin{align}
 \phi\drm{out}\dgg \im\partial\dd{+}\win 
&+ 
 \bm{a}\trans(1-\circ \ \lie) = 0.
\\
 \phi\drm{in}\dgg \im\partial\dd{+}\win  
&+ 
 \bm{a}\trans(1+\circ \ \lie) = 0.
 \end{align}\normalsize
\end{subequations}
from which we get a relevant multiplier as
\small\begin{align}
\hspace{20mm}
 \bm{a}\trans
&=
 -\mm{\phi}\dgg \im\partial\dd{+}\win.
\qquad
\left(
 \mm{\phi}\equiv \frac{\phi\drm{in}+\phi\drm{out}}{2}
\right)
\end{align}\normalsize
A free field Lagrangian \en{ \lag\urm{f} }
and an interaction Lagrangian \en{ \lag\urm{TV} } 
are given as
\begin{subequations}
\label{tvsu2lag}
 \small\begin{align}
 \lag\urm{f}
&=
 \lag^{\win}
-
 \mm{\phi}\dgg (\im\partial\dd{+}\win) (\phi\drm{in}-\phi\drm{out}),
\\
 \lag\urm{TV}
&=
 -2 (\im\partial\dd{+}\win)
 \mm{\phi}\dgg \circ \lie \mm{\phi} 
\\ &=
 -2 (\im\partial\dd{+}\win)
 \mm{\phi}\dgg \lie\ast \mm{\phi}.
 \end{align}\normalsize
\end{subequations}

Let us see this Lagrangian reproduce the input-output relation.
Assume that \en{ \win(x) = \step(z) }.
Then 
the Euler-Lagrange equation at \en{ z=0 } yields
\begin{subequations}
\small\begin{align}
 (\phi\drm{in}-\phi\drm{out}) + \lie \ast (\phi\drm{in}+\phi\drm{out}) &= 0,
\\
 (\phi\drm{in}\dgg-\phi\drm{out}\dgg) - (\phi\drm{in}\dgg+\phi\drm{out}\dgg)\circ \lie &= 0.
\end{align}\normalsize
\end{subequations}
After the Laplace transform, these are written as
\begin{subequations}
\label{tvinoutf}
\small\begin{align}
 \phi\drm{out} 
& = 
 \ffrac{1 + \lie(s)}{1 - \lie(s)} \phi\drm{in},
\\
 \phi\drm{out}\dgg 
& = 
 \phi\drm{in}\dgg \ffrac{1 - \lie(-s)}{1 + \lie(-s)}.
\end{align}\normalsize
\end{subequations}
This is rewritten as
\small\begin{align}
 \Phi\drm{out}
&=
 \tf \Phi\drm{in},
\end{align}\normalsize
where 
\small\begin{align}
 \tf
=
 \A{\tf_{11}}{\tf_{12}}
   {\tf_{21}}{\tf_{22}}
=
 \A{\ffrac{1+\lie}{1-\lie}}{}
   {}{\ffrac{1-\lie\simm}{1+\lie\simm}},
\qquad
 \Phi
=
  \left[
\renewcommand{\arraystretch}{1.2}
    \begin{array}{c} 
        \phi_1 \\   
        \phi_2 \\  \hdashline
        \phi_1\dgg \\      
        \phi_2\dgg
    \end{array}
\renewcommand{\arraystretch}{1}
  \right],
\label{tvgatep}
\end{align}\normalsize
and \en{ \lie\simm(s) \equiv \lie\trans(-s) } 
as defined in (\ref{simm}).
Obviously, 
\en{ \tf } satisfies
\small\begin{align}
 \tf_{11} \tf_{22}\simm
=
 I.
\label{tvpps}
\end{align}\normalsize

Let us examine system theoretical properties of \en{ \tf }.
We first note that \en{ \tf } is expressed as
\small\begin{align}
 \tf
=
 \frac{I+\lie_T}{I-\lie_T},
\end{align}\normalsize
where the reactance matrix \en{ \lie_T } is given as
\small\begin{align}
 \lie_T
\equiv
 \A{\lie}{}
   {}{-\lie\simm}.
\end{align}\normalsize
It follows from (\ref{tvpps}) that 
\small\begin{align}
 \tf \Pi \tf\simm = \Pi,
\label{pz-tv}
\end{align}\normalsize
where
\small\begin{align}
 \Pi 
\equiv 
 \A{}{I}{\pm I}{}.
\end{align}\normalsize
This indicates that 
\en{ \tf } satisfies a certain type of orthogonality
(Section \ref{sec:pzinfreq}).
For the reactance matrix,
it can be written as
\small\begin{align}
 \lie_T \Pi + \Pi \lie_T\simm 
=
 0.
\label{lie-reactance}
\end{align}\normalsize

To see what (\ref{pz-tv}) means,
suppose that \en{ \tf_{11} } has a transmission zero at \en{ s=z }.
According to Definition \ref{def:zero},
there exists a vector \en{ \eta } such that
\small\begin{align}
 \tf_{11}\trans(z)\eta
=
 0.
\end{align}\normalsize
Then it follows from (\ref{tvpps}) that
\small\begin{align}
 \lim\dd{s\to z} \tf_{22}(-s)\eta =\infty,
\end{align}\normalsize
which means that \en{ \tf_{22} } has a pole at \en{ s=-z }.
As a result, 
we have
\small\begin{align}
 \pole(\tf)
=
 -\zero(\tf).
\label{tvpzsym}
\end{align}\normalsize
The poles and transmission zeros are symmetrically placed
in the complex plane,
which is called \textit{pole-zero symmetry}.\index{pole-zero symmetry}
This property is recognized as allpass in systems theory.\index{allpass}
As shown in Chapter \ref{chap:sym},
this is also related to 
the symplectic structure of quantum systems.\index{symplectic}

\newpage

\subsection{Squeezing gate}
\label{sec:sfsq}

\begin{wrapfigure}[0]{r}[53mm]{49mm}
\vspace{-5mm}
\centering
\includegraphics[keepaspectratio,width=30mm]{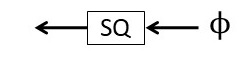}
\caption{
\small
Squeezing gate.
\normalsize
}
\label{fig-sq}
\end{wrapfigure}

Consider a gate with a single input \en{ \phi\drm{in} } 
and single output \en{ \phi\drm{out} }
such that
\small\begin{align}
\hspace{15mm}
 (\phi\drm{in}-\phi\drm{out}) + g(\phi\drm{in}\dgg+\phi\drm{out}\dgg)&=0.
\qquad
 (g\in\mathbb{R})
\label{ssb}
\end{align}\normalsize
Let us define a vector
\small\begin{align}
 \Phi
&\equiv
   \left[
    \begin{array}{l} 
        \phi\\
        \phi\dgg 
    \end{array}
  \right].
\end{align}\normalsize
Then (\ref{ssb}) is written as
\small\begin{align}
 (\Phi\drm{in}-\Phi\drm{out}) 
+ 
 \lie (\Phi\drm{in} + \Phi\drm{out})
&=
 0,
\label{ssb2}
\end{align}\normalsize
where
\small\begin{align}
 \lie = g\sigma\dd{x}= \A{}{g}{g}{}. 
\label{sq-reactance}
\end{align}\normalsize
This gate is non-unitary because \en{ \lie\dgg \not= -\lie }.
Using \en{ \{\sigma\dd{z},\sigma\dd{x}\}=0 },
it is easy to show that 
the non-unitary gauge condition (\ref{nonunicond})
is satisfied. 
If we parameterize \en{ g } as
\small\begin{align}
 g&=\frac{\ex\uu{r}-1}{\ex\uu{r}+1},
\label{gpara}
\end{align}\normalsize
(\ref{ssb2}) is rewritten as
\small\begin{align}
 \Phi\drm{out}
&=
 \frac{1}{1-g^2}
 \A{1+g^2}{2g}{2g}{1+g^2}
 \Phi\drm{in}
=
 \A{\cosh r}{\sinh r}{\sinh r}{\cosh r} 
 \Phi\drm{in}.
\label{ssmx} 
\end{align}\normalsize
This transfer function describes 
squeezing\index{squeezing gate (single field)}.
In fact, 
in the quadrature basis (\ref{quadrature})
\small\begin{align}
\hspace{25mm}
 \qu
\equiv
 \AV{\xi}{\eta}
&=
 \toqu
 \Phi,
\hspace{7mm}
 \toqu
\equiv
 \frac{1}{\sqrt{2}}
 \A{1}{1}{-\im}{\im}.
\label{quadrature-1}
\end{align}\normalsize
(\ref{ssmx} ) is written as
\small\begin{align}
 \qu\drm{out}
&=
 \A{\ex\uu{r}}{}{}{\ex\uu{-r}}
 \qu\drm{in}.
\end{align}\normalsize

In Lagrange's method (\ref{nonunilag}), 
the interaction Lagrangian is given by 
\small\begin{align}
\hspace{42mm}
 \lag\urm{SQ} 
&=
 (\im \partial\dd{+}\win) g
 \left[
 \mm{\phi} \mm{\phi} 
-
 \mm{\phi}\dgg \mm{\phi}\dgg 
 \right].
\quad
 \left(
 \mm{\phi}\equiv \frac{\phi\drm{in}+\phi\drm{out}}{2}
\right)
\label{squeezer}
\end{align}\normalsize
In the gauge theoretical approach (\ref{nonintgauge}),
\small\begin{align}
 \lag\urm{SQ}
&=
 \im \gel g
 \left[ \phi\phi - \phi\dgg\phi\dgg\right].
\end{align}\normalsize
This is the same as 
the degenerate parametric amplifier 
in Example \ref{ex:nonuni1}.

\newpage

\subsection{Cross-squeezing gate}
\label{subsec:twosq}

\begin{wrapfigure}[0]{r}[53mm]{49mm}
\centering
\vspace{1mm}
\includegraphics[keepaspectratio,width=35mm]{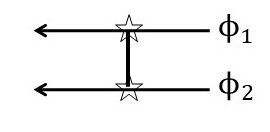}
\caption{
\small
Cross-squeezing gate.
\normalsize
}
\label{fig-ts}
\end{wrapfigure}

Consider a gate with two inputs and two outputs 
as in Figure \ref{fig-ts}.\index{cross-squeezing gate}
Assume that 
the inputs and the outputs are related to each other as 
\begin{subequations}
\label{tfsbc}
\small\begin{align}
 (\phi\drm{1,in} - \phi\drm{1,out}) 
+ 
 g(\phi\drm{2,in}\dgg + \phi\drm{2,out}\dgg) &=0,
\\ 
 (\phi\drm{2,in}\dgg - \phi\drm{2,out}\dgg) 
+ 
 g(\phi\drm{1,in} + \phi\drm{1,out}) &=0.
\qquad
 (g\in\mathbb{R})
\end{align}\normalsize
\end{subequations}
Let us introduce the following notation:
\small\begin{align}
 \Phi\dd{\alpha}
\equiv
 \AV{\phi\dd{\alpha}}{\phi\dd{\alpha}\dgg},
\quad
 \Phi
&\equiv
 \AV{\Phi\drm{1}}{\Phi\drm{2}}
=
 \left[
\begin{array}{l}
 \phi_1 \\
 \phi_1\dgg \\ \hdashline
 \phi_2 \\
 \phi_2\dgg 
\end{array}
 \right].
\end{align}\normalsize
Then (\ref{tfsbc}) is rewritten as
\small\begin{align}
 (\Phi\drm{in} - \Phi\drm{out}) 
+ 
 \lie (\Phi\drm{in} + \Phi\drm{out}) 
&=
 0,
\label{tsbounday}
\end{align}\normalsize
where
\small\begin{align}
 \lie
&\equiv
 \A{}{g\sigma\dd{x}}{g\sigma\dd{x}}{}.
\end{align}\normalsize
This satisfies the non-unitary gauge condition (\ref{nonunicond}).
Using the same parametrization as (\ref{gpara}),
we can further rewrite the input-output relations as
\small\begin{align}
 \left[
\begin{array}{c}
 \xi\drm{1} + \xi\drm{2}\\
 \eta\drm{1} - \eta\drm{2}\\ \hdashline
 \xi\drm{1} - \xi\drm{2}\\
 \eta\drm{1} + \eta\drm{2}
\end{array}
\right]\drm{out}
&=
 \left[
\begin{array}{cc:cc}
 \ex\uu{-r} & & &\\
 & \ex\uu{-r} & &\\ \hdashline
 & & \ex\uu{r} &\\
 & & & \ex\uu{r}
\end{array}
\right]
 \left[
\begin{array}{c}
 \xi\drm{1} + \xi\drm{2}\\
 \eta\drm{1} - \eta\drm{2}\\ \hdashline
 \xi\drm{1} - \xi\drm{2}\\
 \eta\drm{1} + \eta\drm{2}
\end{array}
\right]\drm{in}.
\label{tfsqquad}
\end{align}\normalsize
In this case,
the interaction Lagrangian is written as
\small\begin{align}
 \lag\urm{CQ} 
&= 
 2 (\im\partial\dd{+}\win) g
 \left[
 \mm{\phi}_1 \mm{\phi}_2
 -
 \mm{\phi}_1\dgg \mm{\phi}_2\dgg 
\right].
\label{twosqlag}
\end{align}\normalsize
This form is known as a non-degenerate parametric amplifier.


\subsection{QND gate}
\label{subsec:sum}

\begin{wrapfigure}[0]{r}[53mm]{49mm}
\centering
\vspace{1mm}
\includegraphics[keepaspectratio,width=37mm]{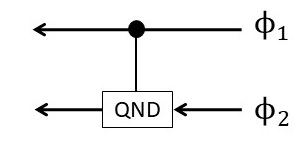}
\caption{
\small
QND gate.
\normalsize
}
\label{fig-sum1}
\end{wrapfigure}

Let us consider a QND gate\index{QND gate} with
a control \en{ \phi_1 } and target \en{ \phi_2 }
as in Figure \ref{fig-sum1}.
This gate is defined by the following input-output relation:
\small\begin{align}
 \left[
\begin{array}{c}
 \xi\drm{1} \\
 \eta\drm{1} \\ \hdashline
 \xi\drm{2} \\
 \eta\drm{2}
\end{array}
\right]\drm{out}
&=
\left[
\begin{array}{cc:cc}
 1& &  & \\
  & 1 & & g\\ \hdashline
 -g & & 1 & \\
  &  & & 1
\end{array}
\right]
\left[
\begin{array}{c}
 \xi\drm{1}\\
 \eta\drm{1}\\ \hdashline
 \xi\drm{2}\\
 \eta\drm{2}
 \end{array}
\right]\drm{in}.
\label{sumbounary} 
\end{align}\normalsize
This can be rewritten as
\small\begin{align}
 \Phi\drm{out}
&=
 \A{I}{\ffrac{g}{2} Q\dd{-} }{-\ffrac{g}{2} Q\dd{+}}{I}
 \Phi\drm{in},
\label{subb2}
\end{align}\normalsize
where we have defined 
\small\begin{align}
 Q\dd{\pm} \equiv I \pm \sigma\dd{x} = \A{1}{\pm 1}{\pm 1}{1}.
\end{align}\normalsize
These matrices satisfy
\small\begin{align}
 Q\dd{-}Q\dd{+} = Q\dd{+}Q\dd{-} =0,
\qquad
 Q\dd{\pm}^2=2Q\dd{\pm}.
\label{abpro}
\end{align}\normalsize

The input-output relation can be rewritten as
\small\begin{align}
 (\Phi\drm{in}-\Phi\drm{out}) 
+ 
 \lie (\Phi\drm{in}+\Phi\drm{out}) 
&=
 0,
\label{sumbc}
\end{align}\normalsize
where
\small\begin{align}
 \lie
&\equiv
 \frac{g}{4} \A{}{Q\dd{-}}{-Q\dd{+}}{}.
\label{qndg}
\end{align}\normalsize
This matrix satisfies 
the non-unitary gauge condition (\ref{nonunicond}).
The interaction Lagrangian is given as
\small\begin{align}
 \lag\urm{QND} 
&= 
 (\partial\dd{+}\win) g
 \mm{\xi}_1 \mm{\eta}_2.
\label{sumlag}
\end{align}\normalsize

\subsection{XX gate}

Let us consider a gate defined by 
the following input-output relation:
\small\begin{align}
\left[
\begin{array}{c}
 \xi\drm{1} \\
 \eta\drm{1}\\ \hdashline
 \xi\drm{2} \\
 \eta\drm{2}
\end{array}
\right]\drm{out}
 &=
\left[
 \begin{array}{cc:cc}
 1 & & & \\
 & 1 & g & \\  \hdashline
 & & 1 & \\
 g & & & 1
 \end{array}
\right]
\left[
\begin{array}{c}
 \xi\drm{1} \\
 \eta\drm{1}\\ \hdashline
 \xi\drm{2} \\
 \eta\drm{2}
\end{array}
\right]\drm{in}.
\label{xxboundary}
\end{align}\normalsize
This gate is called an XX gate\index{XX gate}.
The input-output relation is rewritten as
\small\begin{align}
 (\Phi\drm{in}-\Phi\drm{out})
+
 \lie (\Phi\drm{in}+\Phi\drm{out}) &=0,
\end{align}\normalsize
where 
\small\begin{align}
 \lie
&=
 \frac{\im g}{4} 
 \A{}{Q}
 {Q}{},
\quad
 Q \equiv \sigma\dd{z} +\im\sigma_y = \A{1}{1}{-1}{-1}.
\end{align}\normalsize
Note that
\small\begin{align}
 Q^2=0.
\end{align}\normalsize
The interaction Lagrangian of this gate is given as
\small\begin{align}
 \lag\urm{XX}
&=
 (\partial\dd{+}\win) g \mm{\xi}_1 \mm{\xi}_2.
\end{align}\normalsize

\begin{remark}
For the QND and XX gates,
the reactance matrix satisfies \en{ \lie^2=0 }.
In this case, 
the interaction Lagrangian is not uniquely determined.
For example,
another forms are given as
\begin{subequations}
\small\begin{align}
 \lag^{\mathrm{QND}}
&=
 \frac{1}{2}(\partial\dd{+}\win) g
\bigl[
   \xi_{\mathrm{1,in}} \eta_{\mathrm{2,in}} 
 +  \xi_{\mathrm{1,out}} \eta_{\mathrm{2,out}} 
\bigr],
\\
 \lag^{\mathrm{XX}} 
&=
\frac{1}{2}(\partial\dd{+}\win) g 
\bigl[
 \xi_{\mathrm{2,in}} \xi_{\mathrm{1,out}} 
+
 \xi_{\mathrm{1,in}} \xi_{\mathrm{2,out}}
\bigr].
\end{align}\normalsize
\end{subequations}
\end{remark}

\newpage

\subsection{SU(2) gate and chiral symmetry breaking}
\label{sec:su2chiral}

\begin{wrapfigure}[0]{r}[53mm]{49mm}
\centering
\vspace{-5mm}
\includegraphics[keepaspectratio,width=49mm]{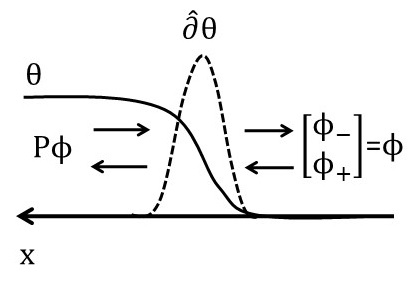}
\caption{
\small
Forward and backward traveling fields interact with each other
through an SU(2) gate.
\normalsize
}
\label{fig-chiral}
\end{wrapfigure}

Here we consider a toy model similar to 
chiral symmetry breaking.\index{chiral symmetry breaking}
As seen in Section \ref{rem:weyl},
the Weyl equation can be decomposed to orthogonal components:
the left- and right-chiral fields \en{ \{\psi_L,\psi_R\} }.
Due to the orthogonality,
the Weyl equation is invariant
under global gauge transformations applied to
\en{ \psi_L } and \en{ \psi_R } independently.
This is called \textit{chiral symmetry}.\index{chiral symmetry}
On the other hand, 
this symmetry does not hold for the Dirac equation
because of the mass term.
Hence chiral symmetry breaking is considered to be
a part of a mass generation mechanism.

We show a similar effect with an SU(2) gate
by choosing suitable parameters.
This toy model is different from 
a well-known model of spontaneous symmetry breaking,
but it is interesting to see 
how massless particles gain masses in the SU(2) gate.

In this model,
the left- and right-chiral fields are replaced to
forward and backward traveling fields \en{ \{\phi\dd{+},\phi\dd{-}\} }
as in Figure \ref{fig-chiral}.
It is described by a Lagrangian
\small\begin{align}
 \lag(\phi,\hat{\partial}\phi)
=
 \phi\dgg \im \hat{\partial}\phi,
\label{beforesu2}
\end{align}\normalsize
where
\small\begin{align}
 \hat{\partial} = 
 \A{\partial\dd{-}}{}
   {}{\partial\dd{+}},
\qquad
 \phi 
&= 
 \AV{\phi\dd{-}}{\phi\dd{+}}.
\end{align}\normalsize
Note that 
this is symmetric under a global gauge transformation
\small\begin{align}
\hspace{20mm}
 \phi\to \phi'
&=
 \A{\ex\uu{\im \alpha}}{}{}{\ex\uu{-\im \alpha}}\phi.
\qquad
 (\alpha\in \mathbb{R})
\label{globalt}
\end{align}\normalsize

Suppose that \en{ \phi } goes through an SU(2) gate.
The Lagrangian is given as 
\small\begin{align}
 \lag
&=
 \phi\dgg \im \hat{\partial} \phi
-
 \phi\dgg 2(\im \hat{\partial} \win) \lie \phi,
\label{lagchiral}
\end{align}\normalsize
where 
\small\begin{align}
\hspace{20mm}
 \lie
&\equiv
 \A{0}{g}{-g\uu{*}}{0}.
\qquad
 (g\in\mathbb{C})
\end{align}\normalsize
This Lagrangian is still symmetric under (\ref{globalt})
if \en{ g\to \ex\uu{2\im \alpha}g }.
However, 
if we choose a specific value of \en{ g },
this symmetry is broken.

Let us introduce \en{ 2\times 2 } matrices
\small\begin{align}
 \gamma^0
=
 \A{}{1}{1}{},
\qquad
 \gamma^3
=
 \A{}{1}{-1}{},
\end{align}\normalsize
and
\small\begin{align}
 \widebar{\phi}
\equiv
 \phi\gamma^0,
\qquad
 \fsh{\partial} 
\equiv 
 \gamma^0\partial\dd{t} + \gamma^3\partial\dd{z}
=
 \gamma^0\hat{\partial}.
\end{align}\normalsize
Then (\ref{lagchiral}) is rewritten as
\small\begin{align}
 \lag
&=
 \widebar{\phi} \im \fsh{\partial} \phi
-
 \widebar{\phi} 2(\im \fsh{\partial}\win) \lie \phi.
\end{align}\normalsize

Assume that the SU(2) gate is designed so that 
the coupling constant is purely imaginary 
\en{g=-\im m/2 \ (m\in \mathbb{R})} 
and 
\en{\win(t,z)=ct \ (c\in\mathbb{R})}
in a certain interval.
Then 
an effective Lagrangian inside the SU(2) gate is written as
\small\begin{align}
 \lag
&=
 \widebar{\phi} 
 \bigl[ \im \fsh{\partial} 
-
 cm
\bigr]\phi,
\end{align}\normalsize
which is the same form as 
the Dirac field with a mass parameter \en{cm}.

\newpage

\subsection{SU(2) gate for the Dirac field}
\label{sec:su2dirac}

Quantum gates can be defined for the Dirac field 
in the same way as the forward traveling field.
Note that only unitary gates are well defined for fermions.
Here we consider an SU(2) gate as an example.
\index{SU(2) gate (Dirac)}

In the gauge theoretical approach,
this has been examined in Example \ref{ex:su2gauge}.
Suppose that 
\en{ \psi_1 } and \en{ \psi_2 } go through the SU(2) gate.
The interaction Lagrangian is given by (\ref{su2gaugeintl}):
\small\begin{align}
 \lag\urm{SU(2)}
&=
 -2\im g
\left[
 \widebar{\psi}_1 \fsh{A} \psi_2
-
 \widebar{\psi}_2 \fsh{A} \psi_1
\right].
\label{dsu2l}
\end{align}\normalsize
In the classical limit \en{ A\dd{\mu}\to V(x) },
this interaction Lagrangian is of the same form as the forward traveling case:
\small\begin{align}
 \lag\urm{SU(2)}
&=
 -2\im g
\left[
 \psi\dgg_1\psi_2
-
 \psi\dgg_2\psi_1
\right]V(x).
\label{dsu2l-1}
\end{align}\normalsize

\begin{wrapfigure}[0]{r}[53mm]{49mm}
\centering
\vspace{-6mm}
\includegraphics[keepaspectratio,width=49mm]{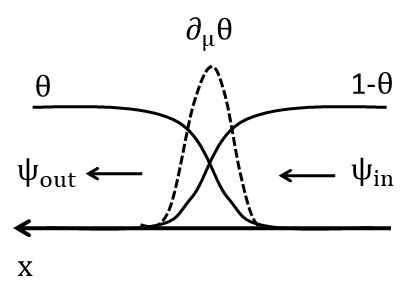}
\caption{
\small
SU(2) gate for the Dirac field.
\normalsize
}
\label{fig-weight2-1}
\end{wrapfigure}

Let us derive this Lagrangian
using Lagrange's method for later use.
A weight function \en{ \win(x) }
is defined as in Figure \ref{fig-weight2-1}.
The two Dirac fields \en{ \{\psi_1,\psi_2\} }
are labeled as 
\en{ \{\psi\drm{1,in},\psi\drm{2,in}\} } in the domain of \en{ 1-\win },
and as 
\en{ \{\psi\drm{1,out},\psi\drm{2,out}\} } in the domain of \en{ \win },
respectively.
Let us introduce the following vector form:
\small\begin{align}
 \psi\drm{in}
\equiv
 \AV{\psi_1}{\psi_2}\drm{in},
\quad
 \psi\drm{out}
\equiv
 \AV{\psi_1}{\psi_2}\drm{out},
\end{align}\normalsize
Then the input-output relation is expressed as
\small\begin{align}
 (\psi\drm{in}-\psi\drm{out})
+
 \lie (\psi\drm{in}+\psi\drm{out}) 
&= 0,
\label{dbound}
\end{align}\normalsize
where
\small\begin{align}
 \lie
&\equiv
 \A{0}{g}{-g}{0}.
\end{align}\normalsize
The Lagrangian is given in the same way as 
the forward traveling field (\ref{unilag}):
\begin{subequations}
\label{diracsu22}
\small\begin{align}
 \lag\urm{SU(2)}
&=
 -2 \widebar{\mm{\psi}} (\im\fsh{\partial}\win) \lie \mm{\psi}.
\\ &=
 -2\im g \left[
 \widebar{\mm{\psi}}_1 (\fsh{\partial}\win) \mm{\psi}_2
-
 \widebar{\mm{\psi}}_2 (\fsh{\partial}\win) \mm{\psi}_1
\right],
\end{align}\normalsize
\end{subequations}
which is the same form as (\ref{dsu2l-1}).

\chapter{Quantum circuits}
\label{chap:circuit}
\thispagestyle{fancy}

We introduce circuits 
by connecting multiple gates (gauge transformations)
through their inputs and outputs in a concatenated way.
The operation of a circuit depends on 
what order to place the gates in.
In general, 
if we swap two gates in a circuit,
it operates differently.
This means that 
the Lagrangian of the circuit is not simply 
the sum of individual gates.
There are extra interactions in the circuit
due to the noncommutativity of the gates.
We investigate this effect in this chapter.

\section{Cascaded SU(2) circuit}
\label{sec:generalsu2}

In this section, 
we consider circuits of two SU(2) gates.
There are two possible cases:
The two gates do commute and do not commute.

\subsection{SU(2)+SU(2): a commutative case}

\begin{wrapfigure}[0]{r}[53mm]{49mm} 
\vspace{-0mm}
\centering
\includegraphics[keepaspectratio,width=49mm]{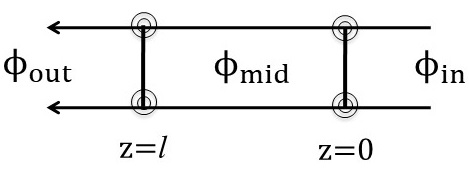}
\caption{
\small
Circuit of two SU(2) gates.
\normalsize
}
\label{fig-2su2}
\end{wrapfigure}

The first example is Figure \ref{fig-2su2}
in which two SU(2) gates are placed at \en{ z=0 } and \en{ z=l }.
In the gauge theoretical approach,
the two gates are described as
\begin{subequations}
\small\begin{align}
z=0&: \hspace{5mm}
 \phi\drm{mid} 
=
 \exp\Bigl(\win_0 2\lie_0\Bigr) 
 \phi\drm{in},
\\
z=l&: \hspace{6mm}
 \phi\drm{out} 
=
 \exp\Bigl(\win\dd{l} 2\lie\dd{l}\Bigr) 
 \phi\drm{mid},
\end{align}\normalsize
\end{subequations}
where \en{ \win\dd{\alpha} } are weight functions and 
\small\begin{align}
\hspace{10mm}
 \lie\dd{\alpha}
=
 \A{}{g\dd{\alpha}}{-g\dd{\alpha}}{}.
\hspace{5mm}
 (g\dd{\alpha}\in\mathbb{R})
\end{align}\normalsize

Note that \en{ [\lie_0, \lie\dd{l}]=0 }.
This means that 
the input-output relation of the circuit is invariant
if we swap the two SU(2) gates.
In fact, 
\begin{subequations}
\label{su2cascom}
\small\begin{align}
 \phi\drm{out}
&=
 \exp\Bigl(\win\dd{l} 2\lie\dd{l}\Bigr) 
 \exp\Bigl(\win_0 2\lie_0\Bigr) 
 \ \phi\drm{in}
\\ &=
 \exp\Bigl(\win\dd{l} 2\lie\dd{l} + \win_0 2\lie_0\Bigr) 
 \ \phi\drm{in}.
\end{align}\normalsize
\end{subequations}
The output is defined where \en{ \win\dd{l}=\win_0=1 }
for which a reactance matrix is given as
\small\begin{align}
 \lie
=
 \lie_0+\lie\dd{l}.
\end{align}\normalsize
This is simply the sum of the two reactance matrices.

\subsection{SU(2)+SU(2): a noncommutative case}

\begin{wrapfigure}[0]{r}[53mm]{49mm} 
\vspace{0mm}
\centering
\includegraphics[keepaspectratio,width=45mm]{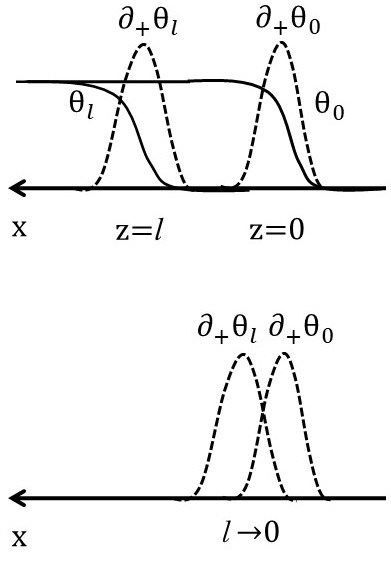}
\caption{
\small
Two weight functions $\win_0$ and $\win_{l}$ 
corresponding to Figure \ref{fig-2su2}.
As $l\to 0$, 
$\partial_+\win_0$ and $\partial_+\win_l$ overlap each other.
If $[\lie_0,\lie_l]\not= 0$,
the extra interaction arises there.
\normalsize
}
\label{fig-cas-over}
\end{wrapfigure}

Let us consider the same configuration as Figure \ref{fig-2su2},
but this time two reactance matrices do not commute:
\begin{subequations}
\small\begin{align}
 \lie\dd{l}
&=
 \frac{1}{4} 
 \A{}{\im g}
   {\im g}{}
=
 \frac{\im g}{2} T\dd{x},
\hspace{6mm}
 (g \in \mathbb{R})
\\
 \lie_0
&=
 \frac{1}{4}
 \A{}{g}
   {- g}{}
=
 \frac{\im g}{2} T\dd{y},
\end{align}\normalsize
\end{subequations}
where 
\en{ T\dd{a}\equiv \sigma\dd{a}/2 }
satisfying 
\en{[ T\dd{a},T\dd{b}]=\im \epsilon\dd{abc}T\dd{c} }.
As in (\ref{su2cascom}), 
the circuit is expressed as
\begin{subequations}
\small\begin{align}
 \phi\drm{out}
&=
 \exp\Bigl( \win\dd{l} \im g T\dd{x} \Bigr) 
 \exp\Bigl( \win_0     \im g T\dd{y} \Bigr) 
 \ \phi\drm{in}
\\ &\sim
 \exp\Bigl( \win\uu{a} \im g T\dd{a} \Bigr) 
 \ \phi\drm{in},
\end{align}\normalsize
\end{subequations}
where
\small\begin{align}
 \win^1=\win\dd{l},
\quad
 \win^2=\win_0,
\quad
 \win^3=-\frac{1}{2}\win\dd{l}\win_0 g.
\end{align}\normalsize
Now we have an extra term due to 
\en{ [\lie_0,\lie\dd{l}]\not= 0 }
that results in an extra interaction in the circuit.
This is the same as the general form (\ref{su2ex}).
The output is defined where \en{ \win_0=\win\dd{l}=1 } 
for which a reactance matrix is given as
\small\begin{align}
 \lie
&=
 -\frac{\im g}{8}
 \A{g}{-2(1-\im)}
   {-2(1+\im)}{-g}.
\end{align}\normalsize

\subsection{SU(2)+SU(2)=SU(3): a noncommutative case}
\label{sec:su2su2noncom}

\begin{wrapfigure}[0]{r}[53mm]{49mm} 
\centering
\vspace{-5mm}
\includegraphics[keepaspectratio,width=43mm]{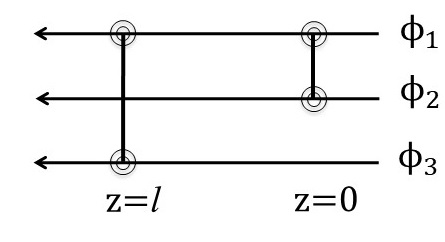}
\caption{
\small
Circuit of two SU(2) gates.
\normalsize
}
\label{fig-su2su2}
\end{wrapfigure}

Another example of noncommutative circuits is Figure \ref{fig-su2su2}
in which two SU(2) gates are concatenated through a single field.
The input-output relation is given by
\small\begin{align}
 \phi\drm{out}
&=
 \B{\ffrac{1-g\dd{l}^2}{1+g\dd{l}^2}}{}{\ffrac{2g\dd{l}}{1+g\dd{l}^2}}
   {}{1}{\rule[0mm]{0mm}{5mm}}
   {\ffrac{-2g\dd{l}}{1+g\dd{l}^2}}{}{\ffrac{1-g\dd{l}^2}{1+g\dd{l}^2}}
 \B{\ffrac{1-g\dd{0}^2}{1+g\dd{0}^2}}{\ffrac{2g\dd{0}}{1+g\dd{0}^2}}{}
   {\ffrac{-2g\dd{0}}{1+g\dd{0}^2}}{\ffrac{1-g\dd{0}^2}{1+g\dd{0}^2}}{\rule[0mm]{0mm}{8mm}}
   {}{}{1}
 \phi\drm{in}.
\quad 
 (g_0,g\dd{l}\in\mathbb{R})
\label{casmat}
\end{align}\normalsize

To obtain \en{ \lie },
let us introduce the basis of su(3) called 
Gell-Mann matrices:\index{Gell-Mann matrices}
\small\begin{align}
 \lambda_1
&\equiv 
 \B{0}{1}{}
   {1}{0}{}
   {}{}{0},
\
 \lambda_2
\equiv 
 \B{0}{-\im}{}
   {\im}{0}{}
   {}{}{0},
\
 \lambda_3
\equiv 
 \B{1}{}{}
   {}{-1}{}
   {}{}{0},
\
 \lambda_4
\equiv 
 \B{0}{}{1}
   {}{0}{}
   {1}{}{0}, 
\nn\\
 \lambda_5
&\equiv 
 \B{0}{}{-\im}
   {}{0}{}
   {\im}{}{0},
\
 \lambda_6
\equiv 
 \B{0}{}{}
   {}{0}{1}
   {}{1}{0},
\
 \lambda_7
\equiv 
 \B{0}{}{}
   {}{0}{-\im}
   {}{\im}{0},
\
 \lambda_8
\equiv 
\frac{1}{\sqrt{3}}
 \B{1}{}{}
   {}{1}{}
   {}{}{-2}.
\end{align}\normalsize
Then the first gate of (\ref{casmat}) is approximately written as
\small\begin{align}
 \B{1}{2g_0}{}
   {-2g_0}{1}{}
   {}{}{1}
\sim
 \exp\Bigl(\im 2g_0 \win_0 \lambda_2\Bigr),
\label{firstgate}
\end{align}\normalsize
where we have introduced a weight function \en{ \win_0 }.
Likewise, the second gate is given as
\small\begin{align}
 \B{1}{}{2g\dd{l}}
   {}{1}{}
   {-2g\dd{l}}{}{1}
\sim
 \exp\Bigl(\im 2g\dd{l} \win\dd{l} \lambda_5\Bigr).
\label{secondgate}
\end{align}\normalsize
The input-output relation of the circuit is then written as
\begin{subequations}
\label{su3}
\small\begin{align}
 \phi\drm{out}
&=
 \exp\Bigl(\im 2g\dd{l} \win\dd{l} \lambda_5\Bigr) 
 \exp\Bigl(\im 2g_0 \win_0 \lambda_2\Bigr) \ \phi\drm{in}
\\ &\sim
 \exp\Bigl(\im 2g\dd{l} \win\dd{l} \lambda_5 + \im 2g_0 \win_0 \lambda_2
     + \im 2g\dd{l}g_0 \win\dd{l}\win_0 \lambda_7 \Bigr) \ \phi\drm{in}.
\label{su3-1}
\end{align}\normalsize
\end{subequations}
The third term is an extra interaction
resulting from the noncommutativity of the two gates.
In a domain where \en{ \win_0=\win\dd{l}=1 },
this is written as
\small\begin{align}
 \phi\drm{out}
&=
 \exp\bigl(2\lie \bigr) \phi\drm{in},
\end{align}\normalsize
where 
\small\begin{align}
  \lie
&\equiv 
 \B{0}{g_0}{g\dd{l}}
   {-g_0}{0}{g_0g\dd{l}}
   {-g\dd{l}}{-g_0g\dd{l}}{0}.
\end{align}\normalsize

The same result can be obtained 
from the Cayley transform.
The input-output relation of each gate is written as
\begin{subequations}
\label{casb}
\small\begin{align}
 z=0: \quad&
 (\phi\drm{0,in} - \phi\drm{0,out})
+
 \lie_0(\phi\drm{0,in} + \phi\drm{0,out}) =0,
\label{casb1} \\
 z=l: \quad&
 (\phi\drm{\ \textit{l},in} - \phi\drm{\ \textit{l},out})
+
 \lie\dd{l}(\phi\drm{\ \textit{l},in} + \phi\drm{\ \textit{l},out}) =0,
\label{casb2}
\end{align}\normalsize
\end{subequations}
where 
\begin{subequations}
\small\begin{align}
 \phi\drm{0,out} 
&= 
 \AV{\phi_1}{\phi_2}\dd{z=0+},
\qquad
 \phi\drm{0,in} 
= 
 \AV{\phi_1}{\phi_2}\dd{z=0-},
\\
 \phi\drm{\ \textit{l},out} 
&= 
 \AV{\phi_1}{\phi_3}\dd{z=l+},
\qquad
 \phi\drm{\ \textit{l},in} 
= 
 \AVl{\phi_1}{\phi_3}\dd{z=l-}.
\end{align}\normalsize
\end{subequations}
If \en{ l\to 0 },
the output from the first gate is immediately 
fed into the second gate.
Eliminating the field between the two gate \en{ \phi_1(l>z>0) },
we get the input-output relation of the circuit 
as
\small\begin{align}
 (\phi\drm{in} - \phi\drm{out})
+
 \lie(\phi\drm{in} + \phi\drm{out})
&=0,
\label{casl0}
\end{align}\normalsize
where
\small\begin{align}
 \lie
&\equiv 
 \B{0}{g_0}{g\dd{l}}
   {-g_0}{0}{g_0g\dd{l}}
   {-g\dd{l}}{-g_0g\dd{l}}{0}.
\label{casl0-1}
\end{align}\normalsize

It is worth noting that 
this gate is regarded as an SU(3) gate.
The same configuration of \en{ N }-SU(2) gates 
results in a single SU(N+1) gate.
Furthermore, 
combining with the result of Section \ref{sec:su2su2noncom},
we can construct general SU(N+1) transformations
from SU(2) gates.

\newpage

\section{XX + SU(2) circuit}
\label{sec:xxsu2}

\begin{wrapfigure}[0]{r}[53mm]{49mm} 
\centering
\vspace{-0mm}
\includegraphics[keepaspectratio,width=49mm]{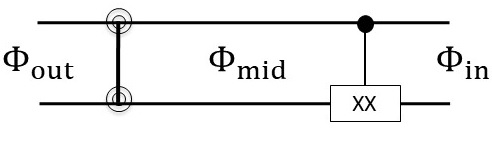}
\caption{
\small
XX + SU(2) circuit
\normalsize
}
\label{fig-xxsu2}
\end{wrapfigure}

Let us consider a circuit in Figure \ref{fig-xxsu2}.
The XX and SU(2) gates are, respectively, defined by
reactance matrices
\begin{subequations}
\small\begin{align}
\hspace{20mm}
 \lie\urm{XX} &= \frac{\im g_0}{4}\A{}{Q}{Q}{},
\qquad
\left( Q = \A{1}{1}{-1}{-1} \right) 
\\
\hspace{20mm}
 \lie\urm{SU(2)} &= g\dd{l}\A{}{I }{-I }{}.
\end{align}\normalsize
\end{subequations}

The cascade configuration is expressed as
\begin{subequations}
\small\begin{align}
 \exp\bigl( 2\lie \bigr)
&=
 \exp \bigl( 2\lie\urm{SU(2)} \bigr)
 \exp \bigl( 2\lie\urm{XX}    \bigr)
\\ &\sim
 \exp \bigl( 2 \lie\urm{SU(2)} + 2 \lie\urm{XX} 
           + 2 [\lie\urm{SU(2)},\lie\urm{XX}] \bigr),
\end{align}\normalsize
\end{subequations}
from which the reactance matrix of the circuit is given as
\small\begin{align}
 \lie
&=
 \lie\urm{SU(2)}
+
 \lie\urm{XX}
+
 \lie\urm{int},
\end{align}\normalsize
where 
\small\begin{align}
 \lie\urm{int}
\equiv
 \frac{\im g_0 g_l}{2}
 \A{Q}{}{}{-Q}.
\end{align}\normalsize

We can obtain the same \en{ \lie\urm{int} } using the Cayley transform.
The two gates are expressed as
\begin{subequations}
\small\begin{align}
 \mbox{SU(2) gate:} \hspace{5mm}
 \bigl( 1-\lie\urm{SU(2)} \bigr) \Phi\drm{out}
&=
 \bigl( 1+\lie\urm{SU(2)} \bigr) \Phi\drm{mid},
\\
 \mbox{XX gate:} \hspace{24mm}
 \Phi\drm{mid}
&=
 \bigl( 1+2\lie\urm{XX} \bigr) \Phi\drm{in}.
\end{align}\normalsize
\end{subequations}
Eliminating \en{ \Phi\drm{mid} }, we get
\small\begin{align}
 (\Phi\drm{in}-\Phi\drm{out})
+
 \Bigl\{ 
\bigl( \lie\dd{l} + 2\lie_0 - 2\lie_0 \lie\dd{l} \bigr) 
\Phi\drm{in} 
+ 
 \lie\dd{l} \Phi\drm{out}\Bigr\}
&=0.
\end{align}\normalsize
This is the same form as (\ref{tfXY}), so 
it follows from (\ref{pcompare1}) that up to second order
\small\begin{align}
 \lie\urm{int}
=
 \A{ \ffrac{\im g_0 g\dd{l}}{2}Q}
   { \ffrac{\im g_0 g\dd{l}^2}{4}Q}
   { \ffrac{\im g_0 g\dd{l}^2}{4}Q}
   {-\ffrac{\im g_0 g\dd{l}}{2}Q}
\sim
 \frac{\im g_0 g\dd{l}}{2}
 \A{Q}{}{}{-Q}.
\end{align}\normalsize
The resulting interaction Lagrangian is given as
\begin{subequations}
\small\begin{align}
 \lag\urm{XX+SU(2)}
&=
 \lag\urm{XX}
+
 \lag\urm{SU(2)}
+
 \lag\urm{int}, 
\end{align}\normalsize
\end{subequations}
where \en{ \lag\urm{int} } is an extra interaction given as
\small\begin{align}
 \lag\urm{int}
&=
 g_0g\dd{l} \bigl( \mm{\xi}_1^2 - \mm{\xi}_2^2 \bigr).
\label{lagintsu2xx-1}
\end{align}\normalsize

\newpage

\section{D-feedforward and d-feedback}
\label{sec:fffb}

\begin{wrapfigure}[0]{r}[53mm]{49mm} 
\centering
\vspace{-10mm}
\includegraphics[keepaspectratio,width=40mm]{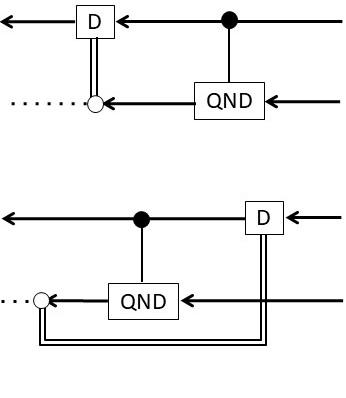}
\caption{
\small
D-feedforward and d-feedback.
\normalsize
}
\label{fig-d-ff-fb}
\end{wrapfigure}

In this section,
we consider two types of circuits 
depicted in Figure \ref{fig-d-ff-fb}.
We assume that 
the displacement parameter is associated with 
the output of the QND gate.
In this case, 
the operations of the circuits depend critically on 
which gate comes first.
If the QND gate is first,
the flow of information is in the forward direction.
This is called \textit{d-feedforward}.\index{d-feedforward}
On the other hand,
the flow of information is in the backward direction
if the displacement gate comes first.
This is called \textit{d-feedback}.\index{d-feedback}
These are simple toy models 
that simulate classical feedforward and feedback processes.

\subsection{QND + SU(2) circuit}

\begin{wrapfigure}[0]{r}[53mm]{49mm} 
\centering
\vspace{20mm}
\includegraphics[keepaspectratio,width=45mm]{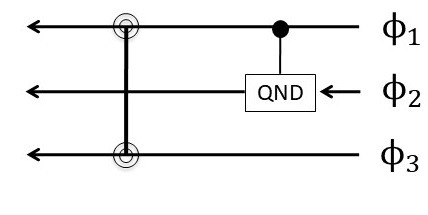}
\caption{
\small
QND+SU(2) circuit.
\normalsize
}
\label{fig-sumsu2}
\end{wrapfigure}

To examine the d-feedforward process,
we start with the cascade of QND and SU(2) gates
in Figure \ref{fig-sumsu2}.
The QND and SU(2) gates are,
respectively, defined by
\small\begin{align}
 \lie\urm{QND}
&\equiv
 \frac{g_0}{4}
 \B{}{Q\dd{-}}{}
   {-Q\dd{+}}{}{}
   {}{}{0},
\qquad
 \lie\urm{SU(2)}
\equiv \
 g\dd{l}
 \B{}{}{I}
   {}{0}{}
   {-I}{}{}.
\end{align}\normalsize
The cascade of these two gates are written as
\begin{subequations}
\small\begin{align}
 \exp\bigl( 2\lie \bigr)
&=
 \exp\bigl( 2\lie\urm{SU(2)} \bigr)
 \exp\bigl( 2\lie\urm{QND}   \bigr)
\\ & \sim 
 \exp
 \bigl(
   2\lie\urm{SU(2)} + 2\lie\urm{QND} + 2\lie\urm{int}
 \bigr),
\end{align}\normalsize
\end{subequations}
where 
\small\begin{align}
 \lie\urm{int}
=
 [\lie\urm{SU(2)}, \lie\urm{QND}]
=
 \frac{g_0g\dd{l}}{4}
 \B{0}{}{}
   {}{}{Q\dd{+}}
   {}{-Q\dd{-}}{}.
\end{align}\normalsize
The resulting interaction Lagrangian is given by
\small\begin{align}
 \lag\urm{QND+SU(2)}
= 
 \lag\urm{QND}
+
 \lag\urm{SU(2)}
+
 \lag\urm{int},
\label{sumsu2}
\end{align}\normalsize
where
\small\begin{align}
 \lag\urm{int}
=
 -\im
 \frac{g_0g\dd{l}}{2} (\mm{\phi}_2\dgg-\mm{\phi}_2)
                      (\mm{\phi}_3\dgg+\mm{\phi}_3).
\end{align}\normalsize


\subsection{QND + displacement circuit}
\label{sec:sumsu2}

\begin{wrapfigure}[0]{r}[53mm]{49mm}
\vspace{-7mm}
\centering
\includegraphics[keepaspectratio,width=45mm]{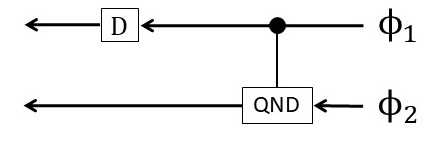}
\caption{
\small
QND + displacement circuit.
\normalsize
}
\label{fig-sumd}
\end{wrapfigure}

The next step toward d-feedforward
is converting the SU(2) gate to a displacement gate
as in Figure \ref{fig-sumd}.
As shown in Section \ref{subsec:dis},
this is done by setting the displacement parameter as
\small\begin{align}
 \mm{\phi}_3 
= 
 \frac{d}{2g\dd{l}}. 
\end{align}\normalsize
Substituting this into (\ref{sumsu2}) yields
\small\begin{align}
 \lag\urm{QND+SU(2)}
\Rightarrow \ 
 \lag\urm{QND+D}
&=
 \lag\urm{QND} 
+
 \lag\urm{D}
- \im 
 \frac{g_0 }{4} (\mm{\phi}_2\dgg-\mm{\phi}_2) (d + d\uu{*} ).
\label{lag-qndd}
\end{align}\normalsize

Let us check to see if this Lagrangian leads to 
a correct input-output relation.
First we express the interaction Lagrangian as
\small\begin{align}
 \lag\urm{QND+D}
=
 -\im  
 \AH{\mm{\Phi}_1\dgg}{\mm{\Phi}_2\dgg}
 \Sigma\dd{z}
\left\{
 \frac{g_0}{4}
 \A{}{Q\dd{-}}{-Q\dd{+}}{}
 \AV{\mm{\Phi}_1}{\mm{\Phi}_2}
+
 \left[
\begin{array}{c}
 d\\
 d\uu{*}\\
 \ffrac{g_0}{4}(d+d\uu{*})\\
 \ffrac{g_0}{4}(d+d\uu{*})
\end{array}
 \right]
 \right\},
\label{lag-qndd2}
\end{align}\normalsize
for which the Euler-Lagrange equation is written as
\small\begin{align}
 \A{1}{ \hspace{-4mm} -\ffrac{g_0}{4}Q\dd{-}}
   {\ffrac{g_0}{4}Q\dd{+}}{ \hspace{-4mm} 1}
 \AV{\Phi_1}{\Phi_2}\drm{out}
&=
 \A{1}{ \hspace{-4mm} \ffrac{g_0}{4}Q\dd{-}}
   {-\ffrac{g_0}{4}Q\dd{+}}{ \hspace{-4mm} 1}
 \AV{\Phi_1}{\Phi_2}\drm{in}
+
 \left[
\begin{array}{c}
 d \\
 d\uu{*} \\
 \ffrac{g_0}{4}(d+d\uu{*}) \\
 \ffrac{g_0}{4}(d+d\uu{*}) 
\end{array}
\right].
\end{align}\normalsize
In the quadrature basis,
this is rewritten as
\small\begin{align}
 \left[
\begin{array}{c}
 \xi_1 \\
 \eta_1 \\
 \xi_2 \\
 \eta_2
\end{array}
\right]\drm{out}
&=
\left[
\begin{array}{cccc}
 1 & & & \\
 & 1 & & g_0 \\
 -g_0 & & 1 & \\
 & & & 1
\end{array}
\right]
 \left[
\begin{array}{c}
 \xi_1\\
 \eta_1\\
 \xi_2\\
 \eta_2
\end{array}
\right]\drm{in}
+
 \left[
\begin{array}{c}
 \sqrt{2}\rea(d) \\
 \sqrt{2}\ima(d) \\
 0 \\
 0
\end{array}
\right],
\label{summx2}
\end{align}\normalsize
in which the QND gate operates first on the input (the first term)
and the displacement gate second (the second term),
as expected.

\begin{wrapfigure}[0]{r}[53mm]{49mm} 
\vspace{-0mm}
\centering
\includegraphics[keepaspectratio,width=45mm,]{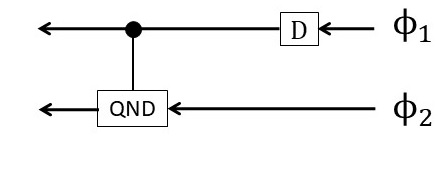}
\caption{
\small
Displacement + QND circuit.
\normalsize
}
\label{fig-dsum}
\end{wrapfigure}

Likewise, 
for Figure \ref{fig-dsum},
an interaction Lagrangian is given by
\small\begin{align}
 \lag\urm{D+QND}
&=
 \lag\urm{QND} 
+
 \lag\urm{D}
+ \im 
 \frac{g_l }{4} (\mm{\phi}_2\dgg-\mm{\phi}_2) (d + d\uu{*} ).
\label{lagsu2sum}
\end{align}\normalsize
Compared to (\ref{lag-qndd}),
a difference is only the sign of the third term,
which results from swapping the two gates
i.e.,
\en{ [\lie\urm{QND},\lie\urm{SU(2)}] = - [\lie\urm{SU(2)},\lie\urm{QND}] }.
In this case, 
the interaction Lagrangian is expressed as
\small\begin{align}
 \lag\urm{D+QND}
=
 -\im  
 \AH{\mm{\Phi}_1\dgg}{\mm{\Phi}_2\dgg}
 \Sigma\dd{z}
\left\{
 \frac{g\dd{l}}{4}
 \A{}{Q\dd{-}}{-Q\dd{+}}{}
 \AV{\mm{\Phi}_1}{\mm{\Phi}_2}
+
 \left[
\begin{array}{c}
 d\\
 d\uu{*}\\
 -\ffrac{g\dd{l}}{4}(d+d\uu{*})\\
 -\ffrac{g\dd{l}}{4}(d+d\uu{*})
\end{array}
 \right]
 \right\},
\label{lag-dqnd}
\end{align}\normalsize
for which 
the Euler-Lagrange equation is written as
\small\begin{align}
 \left[
\begin{array}{c}
 \xi_1 \\
 \eta_1\\
 \xi_2\\
 \eta_2
\end{array}
\right]\drm{out}
&=
\left[
\begin{array}{cccc}
 1 & & & \\
 & 1 & & g\dd{l} \\
 -g\dd{l} & & 1 & \\
 & & & 1
\end{array}
\right]
\left(
 \left[
\begin{array}{c}
 \xi_1   \\
 \eta_1  \\
 \xi_2   \\
 \eta_2
\end{array}
\right]\drm{in}
+
 \left[
\begin{array}{c}
 \sqrt{2}\rea(d) \\
 \sqrt{2}\ima(d) \\
 0 \\
 0 
\end{array}
\right]
\right).
\end{align}\normalsize
This indicates that 
\en{ \phi_1 } is displaced by the first gate
and then 
interacts with \en{ \phi_2 } at the second (QND) gate, 
as expected.

\newpage

\subsection{D-feedforward}
\label{sec:ff}

\begin{wrapfigure}[0]{r}[53mm]{49mm} 
\centering
\vspace{-0mm}
\includegraphics[keepaspectratio,width=45mm]{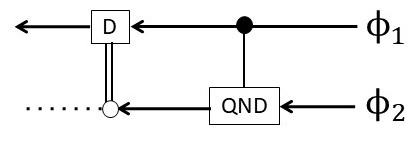}
\caption{
\small
D-feedforward.
\normalsize
}
\label{fig-feedforward}
\end{wrapfigure}

Now let us consider d-feedforward.
Assume that the displacement gate is coupled to
the output of the QND gate so that 
the displacement parameter \en{ d } is written as
\small\begin{align}
\hspace{10mm}
 d
&=
 \frac{k}{2}(\phi\drm{2,out}\dgg + \phi\drm{2,out})
=
 \frac{k}{\sqrt{2}} \xi\drm{2,out}.
\hspace{5mm}
 (k\in\mathbb{R})
\label{ffp}
\end{align}\normalsize

Our purpose 
is to derive the input-output relation of this process,
The first thing we need is to find an interaction Lagrangian.
Substituting (\ref{ffp}) into (\ref{lag-qndd2}),
we get 
\small\begin{align}
 \lag\urm{FF}
=
 -\im  
 \AH{\mm{\Phi}_1\dgg}{\mm{\Phi}_2\dgg}
 \Sigma\dd{z}
\left\{
 \frac{g_0}{4}
 \A{}{Q\dd{-}}{-Q\dd{+}}{}
 \AV{\mm{\Phi}_1}{\mm{\Phi}_2}
+
 \frac{k}{4}
 \A{0}{2 Q\dd{+}}
   {0}{g_0 Q\dd{+}}
 \AV{\Phi_1}{\Phi_2}\drm{out}
 \right\}.
\label{fflag}
\end{align}\normalsize
The last term breaks the input-output symmetry of the Lagrangian,
which reflects the effect of d-feedforward.
The Euler-Lagrange equation is written as
\small\begin{align}
 (\Phi\drm{in}-\Phi\drm{out})
+
 (\lie\drm{in} \Phi\drm{in}+ \lie\drm{out} \Phi\drm{out})
&=0,
\end{align}\normalsize
where
\begin{subequations}
\label{boundff}
\small\begin{align}
 \lie\drm{in}
&=
 \frac{g_0}{4}
 \A{}{ Q\dd{-}}{- Q\dd{+}}{},
\\
 \lie\drm{out}
&=
 \frac{g_0}{4}
 \A{}{ Q\dd{-}}{- Q\dd{+}}{}
+
 \frac{k}{4}
 \A{0}{2 Q\dd{+}}
   {0}{g_0 Q\dd{+}}.
\end{align}\normalsize
\end{subequations}
In the quadrature basis,
this is rewritten as
\begin{subequations}
\small\begin{align}
 \A{1}{-k}
   {\ffrac{g_0}{2}}{1-\ffrac{g_0k}{2}}
 \AV{\xi_1}{\xi_2}\drm{out}
&=
 \A{1}{0}
   {-\ffrac{g_0}{2}}{1}
 \AV{\xi_1}{\xi_2}\drm{in},
\label{ffquinout1}
\\
 \A{1}{-\ffrac{g_0}{2}}
   {0}{1}
 \AV{\eta_1}{\eta_2}\drm{out}
&=
 \A{1}{\ffrac{g_0}{2}}
   {0}{1}
 \AV{\eta_1}{\eta_2}\drm{in}.
\end{align}\normalsize 
\end{subequations}
It follows from (\ref{ffquinout1}) that 
\small\begin{align}
 \xi\drm{2,out}
=
 \xi\drm{2,in}-g_0\xi\drm{1,in}.
\end{align}\normalsize
Then the output of the displacement gate 
(the upper line in Figure \ref{fig-feedforward}) is written as
\begin{subequations}
\label{ffio}
\small\begin{align}
 \xi\drm{1,out} &= \xi\drm{1,in} + k \xi\drm{2,out},
\label{ffio1}
\\
 \eta\drm{1,out}&= \eta\drm{1,in} +g_0 \eta\drm{2,in}.
\label{ffio2}
\end{align}\normalsize
\end{subequations}
The \en{ x }-quadrature (real part) \en{ \xi_1 } is displaced 
in proportion to the output \en{ \xi\drm{2,out} }.
This is what we expect from the feedforward configuration
in Figure \ref{fig-feedforward}.
On the other hand,
the \en{ y }-quadrature (imaginary part) \en{ \eta\drm{1,out} } 
is not influenced by d-feedforward.
This is also expected 
because the displacement parameter \en{ d } is 
defined to be real in (\ref{ffp}).

\newpage

\subsection{D-feedback}
\label{sec:fbclassical}

\begin{wrapfigure}[0]{r}[53mm]{49mm} 
\centering
\vspace{-0mm}
\includegraphics[keepaspectratio,width=45mm]{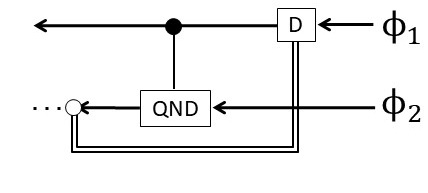}
\caption{
\small
D-feedback.
\normalsize
}
\label{fig-feedback}
\end{wrapfigure}

D-feedback is defined by 
the same displacement parameter as d-feedforward:
\small\begin{align}
\hspace{10mm}
 d
&=
 \frac{k}{2}(\phi\drm{2,out}\dgg + \phi\drm{2,out})
=
 \frac{k}{\sqrt{2}} \xi\drm{2,out}.
\hspace{5mm}
 (k\in\mathbb{R})
\label{ffp2}
\end{align}\normalsize
A difference is that the displacement gate operates first,
as in Figure \ref{fig-feedback}.
This configuration can be regarded as feedback.

We follow the same procedure as d-feedforward
to calculate the input-output relation of this circuit.
The interaction Lagrangian is obtained by
substituting (\ref{ffp2}) into (\ref{lagsu2sum}):
\small\begin{align}
 \lag\urm{FB}
=
 -\im  
 \AH{\mm{\Phi}_1\dgg}{\mm{\Phi}_2\dgg}
 \Sigma\dd{z}
\left\{
 \frac{g\dd{l}}{4}
 \A{}{Q\dd{-}}{-Q\dd{+}}{}
 \AV{\mm{\Phi}_1}{\mm{\Phi}_2}
+
 \frac{k}{4}
 \A{0}{2 Q\dd{+}}
   {0}{-g\dd{l} Q\dd{+}}
 \AV{\Phi_1}{\Phi_2}\drm{out}
 \right\},
\label{fblag}
\end{align}\normalsize
for which
the Euler-Lagrange equation
is written as
\small\begin{align}
 (\Phi\drm{in}-\Phi\drm{out})
+
 (\lie\drm{in} \Phi\drm{in}+ \lie\drm{out} \Phi\drm{out})
&=0,
\end{align}\normalsize
where
\begin{subequations}
\small\begin{align}
 \lie\drm{in}
&=
 \frac{g\dd{l}}{4}
 \A{}{ Q\dd{-}}{- Q\dd{+}}{},
\\ 
 \lie\drm{out}
&=
 \frac{g\dd{l}}{4}
 \A{}{ Q\dd{-}}{- Q\dd{+}}{}
+
 \frac{k}{4}
 \A{0}{ 2 Q\dd{+}}
   {0}{- g\dd{l} Q\dd{+}}.
\label{dfbgint}
\end{align}\normalsize
\end{subequations}
Again,
a difference from the d-feedforward (\ref{boundff})
is only the sign in the (2,2)-element of (\ref{dfbgint}).
This minor difference is critical.
In fact, using an identity
\small\begin{align}
(1-\lie\drm{out})\inv 
&=
 \frac{1}{1+g\dd{l} k}
 \A{1+\ffrac{g\dd{l} k}{2} Q\dd{-}
  + \ffrac{g\dd{l} k}{4}Q\dd{+}}{\ffrac{g\dd{l}}{4}(1+g\dd{l} k)Q\dd{-} 
  + \ffrac{k}{2} Q\dd{+}} 
   {-\ffrac{g\dd{l}}{4}Q\dd{+}}{1+\ffrac{g\dd{l} k}{2} Q\dd{-}},
\end{align}\normalsize
we can rewrite the input-output relation as
\begin{subequations}
\label{fbtfc}
\small\begin{align}
 \xi\drm{1,out}
&=
 \xi\drm{1,in} + \frac{k}{1+g\dd{l} k}(\xi\drm{2,in} - g\dd{l} \xi\drm{1,in}),
\\
 \eta\drm{1,out}
&=
 \eta\drm{1,in} + g\dd{l} \eta\drm{2,in}.
\end{align}\normalsize
\end{subequations}
This is basically the same form as the feedforward (\ref{ffio}).
The difference is the fractional coefficient
that represents
the self-consistent structure of the feedback.
In other words,
the signal travels in the feedback loop infinite times,
which results in
\small\begin{align}
 1+(-g\dd{l}k)+(-g\dd{l}k)^2+\cdots
=
 \frac{1}{1+g\dd{l}k}.
\end{align}\normalsize

This circuit is revisited in Section \ref{sec:circuitss}
where the input-output relations are derived from \textit{S}-matrices.
It will be shown that 
d-feedforward is obtained from
a finite series in the expansion of an \textit{S}-matrix,
whereas d-feedback is expressed by an infinite series
with self-energy \en{ \isel\equiv -g\dd{l}k }.

\newpage

\section{Cross-controlled QND gate}
\label{chap:flipflop}

\begin{wrapfigure}[1]{r}[53mm]{49mm} 
\vspace{-10mm}
\centering
\includegraphics[keepaspectratio,width=50mm]{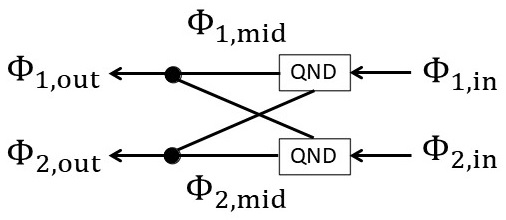}
\caption{Cross-controlled QND gate}
\label{fig-flipflop}
\end{wrapfigure}

Here we introduce another example of feedback.
It is a flip-flop type of circuit
in which two QND gates are connected to each other
as in 
Figure \ref{fig-flipflop}.\index{cross-controlled QND gate}
This is considered as feedback in the sense that 
the states of the QND gates depend on their outputs.

To describe the behavior of this circuit,
all we need is to calculate a reactance matrix \en{\lie}.
The input-output relation is given by
\begin{subequations}
\small\begin{align}
 \AV{\Phi\drm{1,out}}{\Phi\drm{2,mid}}
&=
 \A{1}{\ffrac{g_2}{2}Q\dd{-}}
   {-\ffrac{g_2}{2}Q\dd{+}}{1}
 \AV{\Phi\drm{1,mid}}{\Phi\drm{2,in}},
\\
 \AV{\Phi\drm{2,out}}{\Phi\drm{1,mid}}
&=
 \A{1}{\ffrac{g_1}{2}Q\dd{-}}
   {-\ffrac{g_1}{2}Q\dd{+}}{1}
 \AV{\Phi\drm{2,mid}}{\Phi\drm{1,in}},
\end{align}\normalsize
\end{subequations}
where \en{ Q\dd{\pm} \equiv I\pm \sigma\dd{x} }.
Eliminating \en{ \Phi\drm{1,mid}} and \en{\Phi\drm{2,mid} } yields
\small\begin{align}
 (\Phi\drm{in}-\Phi\drm{out})
+
 (\lie\drm{in} \Phi\drm{in} + \lie\drm{out} \Phi\drm{out})
&=0,
\label{crossbc}
\end{align}\normalsize
where
\begin{subequations}
\small\begin{align}
 \lie\drm{in}
&=
 \A{}{\ffrac{g_2}{2}Q\dd{-}}
   {\ffrac{g_1}{2}Q\dd{-}}{},
\\
 \lie\drm{out}
&=
 \A{}{-\ffrac{g_1}{2}Q\dd{+}}
   {-\ffrac{g_2}{2}Q\dd{+}}{}.
\end{align}\normalsize
\end{subequations}
These satisfy
\small\begin{align}
 \lie\drm{in} \lie\drm{out}=0.
\end{align}\normalsize
For simplicity, 
assume that the circuit is symmetric, i.e., \en{ g_1=g_2=g }.
The reactance matrix is given by (\ref{pcompare1}):
\small\begin{align}
 G
&=
 \frac{g}{4-g^2}
 \A{g\sigma\dd{x}}{-2\sigma\dd{x}}
   {-2\sigma\dd{x}}{g\sigma\dd{x}}.
\end{align}\normalsize
This also has a fractional coefficient 
due to the feedback configuration.

\chapter{Spin 1/2  field and spin gate}
\label{chap:spin}
\thispagestyle{fancy}

Spin \en{ 1/2 }\index{spin $1/2$} 
is introduced as a nonrelativistic field
at a fixed point in space
\en{ \psi(t)=\psi(t,\bm{x}=0) }.
We first consider a relationship 
between a spin Hamiltonian and a gauge transformation.
Then quantum gates are introduced to the spin field.
We also introduce some unconventional applications 
of gauge transformations
to describe noise on qubits such as bit and phase flips.

\section{Spin 1/2  field}

Spin \en{ 1/2 } particles have SU(2) symmetry.
As is well known,
the spin up and down states are invariant 
under rotation around the spin axis.
It is important to note that
the spin rotation symmetry is parameterized 
by only a single variable, i.e.,
the rotation angle around the axis.
From a gauge theoretical perspective,
this is regarded as U(1) gauge symmetry,
as explained in Remark \ref{rem:u2u1}.
In this section,
we consider a spin Hamiltonian and spin rotation gate 
from this point of view.

\subsection{Spin rotation gate}
\label{sec:spingate}

We first note that 
spin energy levels are degenerate in free space.
If we set this level to zero, 
the spin field is described as
\small\begin{align}
 \im \partial\dd{t} \psi 
= 
 0, 
\end{align}\normalsize
where \en{ \psi } is a two-component vector.
The corresponding Lagrangian is written as 
\small\begin{align}
 \lag(\psi,\partial\dd{t}\psi)
=
 \psi\dgg (\im \partial\dd{t}) \psi.
\label{nislag}
\end{align}\normalsize

The spin rotation around its axis is expressed as
\small\begin{align}
 \psi
\to 
 \psi'
=
 \uni(t)\psi,
\label{spingt}
\end{align}\normalsize
where
\small\begin{align}
 \uni(t)
=
 \exp[\win(t) \im g T].
\end{align}\normalsize
\en{ T\in \textrm{su(2)} } is a generator of the rotation.
A gauge field is introduced as
\small\begin{align}
 \ga = \gel \im g T,
\end{align}\normalsize
and the corresponding gauge covariant derivative is given as
\small\begin{align}
 D\dd{t}
=
 \partial\dd{t} - \gel \im g T.
\end{align}\normalsize
In response to the spin rotation (\ref{spingt}),
the gauge field transforms as 
\small\begin{align}
 \gel
\to
 \gel\uu{\prime}
&=
 \gel
 +
 \partial\dd{t}\win,
\end{align}\normalsize
which indicates that the gauge field is 
implemented by the electromagnetic interaction.
A symmetric spin Lagrangian is then given as
\small\begin{align}
 \lag(\psi,D\dd{t}\psi)
=
 \psi\dgg \left(\im\partial\dd{t} + \gel g T \right)\psi.
\end{align}\normalsize

Canonical momentum\index{canonical momentum (spin)} is defined as
\small\begin{align}
 \pi
&\equiv
 \frac{\partial \lag}{\partial(\partial\dd{t}\psi)}
=
 \im\psi\dgg.
\end{align}\normalsize
The Hamiltonian (density) of the spin field is then written as
\small\begin{align}
 \ham
&=
 \pi(\partial\dd{t} \psi) - \lag
=
 -\psi\dgg \left( \gel g T \right) \psi.
\end{align}\normalsize
If we define a \en{ 2\times 2 } Hermitian matrix \en{ H\equiv -\gel g T }, 
the Lagrangian is expressed as
\small\begin{align}
 \lag(\psi,D\dd{t}\psi)
=
 \psi\dgg \left( \im\partial\dd{t} -  H \right)\psi.
\label{rotationgate}
\end{align}\normalsize
This is simply the Schr\"{o}dinger equation.

\subsection{Spin transfer function}

Let us briefly review basic properties of (\ref{rotationgate})
from a field theoretical point of view.
Canonical quantization\index{canonical quantization (spin)} 
is written as
\small\begin{align}
 \{ \psi\dgg_{\alpha}(t),\psi_{\beta}(t') \}
&=
 \delta_{\alpha\beta} \delta(t-t').
\label{spq}
\end{align}\normalsize
Assume that \en{ H } is a constant Hermitian matrix
and \en{ \chi\dd{s} \ (s=\pm 1) } are eigenvectors
\small\begin{align}
\hspace{10mm}
 H\chi\dd{s}&= E\dd{s}\chi\dd{s},
\qquad
 (\chi\dd{r}\dgg \chi\dd{s}=\delta\dd{rs}.)
\end{align}\normalsize
Let us express a spin field as
\small\begin{align}
 \psi
&= 
\sum\dd{s=\pm 1} c\dd{s} \chi\dd{s} \ex\uu{-\im E\dd{s} t}.
\label{spinfield}
\end{align}\normalsize
Using the orthonormality of \en{ \chi\dd{s} },
the coefficients \en{ c\dd{s} } is given by
\small\begin{align}
 c\dd{s}
&= 
 \ex\uu{\im E\dd{s} t} \chi\dd{s}\dgg \psi.
\label{cs}
\end{align}\normalsize
Then the quantization (\ref{spq}) is expressed as
\small\begin{align}
 \{c\dd{r}\dgg, c\dd{s} \}&= \delta\dd{rs}.
\label{c}
\end{align}\normalsize

\newpage

\begin{example}
\label{ex:spin}
\rm
Let us consider a case where \en{ H } is given as
\small\begin{align}
 H&=\A{E\dd{+}}{}{}{E\dd{-}},
\label{exsham}
\end{align}\normalsize
The corresponding eigenstates represent spin up and down modes.
A vacuum state is defined as a state satisfying
the following relations for both modes:
\begin{subequations}
\small\begin{align}
 c\dd{+}\ket{0}&=0,
\\
 c\dd{-}\ket{0}&=0.
\end{align}\normalsize
\end{subequations}
For each mode,
a number operator\index{number operator (spin)} is defined as
\begin{subequations}
\small\begin{align}
 N\dd{+} &\equiv c\dd{+}\dgg c\dd{+},
\\
 N\dd{-} &\equiv c\dd{-}\dgg c\dd{-}.
\end{align}\normalsize
\end{subequations}
Then we have
\begin{subequations}
\small\begin{align}
 N\dd{+} c\dd{+}\dgg\ket{0} &= c\dd{+} \dgg\ket{0},
\\
 N\dd{-} c\dd{-}\dgg\ket{0} &= c\dd{-} \dgg\ket{0}.
\end{align}\normalsize
\end{subequations}
This means that \en{ c\dd{+}\dgg \ (c\dd{-}\dgg) } is
the creation operator of the spin up (down) state:
\begin{subequations}
\small\begin{align}
 c\dd{+}\dgg\ket{0}
&=
 \ket{+},
\\
 c\dd{-}\dgg\ket{0}
&=
 \ket{-}.
\end{align}\normalsize
\end{subequations}
For example, using (\ref{cs}), 
the spin up state is written as
\small\begin{align}
 \ex\uu{\im E\dd{+} t} \ket{+}
=
 \psi\dgg\ket{0} \, \chi\dd{+}.
\end{align}\normalsize

Given a matrix \en{ X }, the corresponding operator is defined by
\small\begin{align}
 \psi\dgg X\psi.
\label{mattoop}
\end{align}\normalsize
For instance, the Hamiltonian is of this form:
\small\begin{align}
 \ham
&\equiv
 \pi(\partial\dd{t}\psi) + (\partial\dd{t}\psi\dgg)\pi\dgg - \lag
=
 \psi\dgg H \psi.
\end{align}\normalsize
It is not difficult to see that \en{ \ket{+} } is actually an eigenstate of 
the Hamiltonian:
\begin{subequations}
\small\begin{align}
 \psi\dgg H\psi\ket{+}
&=
 \sum\dd{s} \chi\dd{s} \dgg H \chi\dd{s} N\dd{s} \ket{+}
\\ &=
 \chi\dd{+}\dgg H \chi\dd{+}
 \ket{+}
\\ &=
 E\dd{+} \ket{+}.
\end{align}\normalsize
\end{subequations}
Likewise, 
the lowering operator is defined as 
\en{ \psi\dgg \sigma\dd{-} \psi },
where \en{ \sigma\dd{-} } is 
the lowering matrix defined in Remark \ref{rem:pauli}.
This operator converts \en{ \ket{+} } to
\begin{subequations}
\small\begin{align}
 \bigl(\psi\dgg \sigma\dd{-} \psi \bigr)
 \ex\uu{\im E\dd{+} t} \ket{+} 
&=
 \sum\dd{s,r} 
 \ex\uu{\im E\dd{+} t}
 \ex\uu{-\im (E\dd{r}-E\dd{s})t} \
 \chi\dd{s}\dgg \sigma\dd{-} \chi\dd{r}
\
 c\dd{s} \dgg c\dd{r} \ket{+}
\\ &=
 \sum\dd{s} \ex\uu{ \im E\dd{s} t} \
 \chi\dd{s} \dgg \sigma\dd{-} \chi\dd{+}
\
 c\dd{s} \dgg \ket{0}
\\ &=
 \ex\uu{ \im E\dd{-} t} \ket{-},
\end{align}\normalsize
\end{subequations}
which is the spin down state, 
as expected.
\qed
\end{example}

\newpage

\begin{definition}
\label{def:spintf}
A spin transfer function\index{transfer function (spin)} \en{ Y } is defined as
\small\begin{align}
 \ipg\dd{\psi|\psi}(t)
&\equiv
 \wick{\psi(t)}{\psi\dgg(0)}.
\end{align}\normalsize
\end{definition}

\begin{theorem}
\label{thm:spint}
The spin transfer function is expressed in the frequency domain as
\small\begin{align}
 \ipg\dd{\psi|\psi}(\omega) 
= 
 \wick{\psi}{\psi\dgg}(\omega)
&=
 \lim\dd{\epsilon\to 0\dd{+}}\frac{\im}{\omega-H+\im\epsilon}
\label{spintf} 
\end{align}\normalsize
\end{theorem}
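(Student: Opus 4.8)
The plan is to compute the contraction $\wick{\psi(t)}{\psi\dgg(0)}$ directly from the mode expansion of the spin field established in the preceding example, and then take the Laplace (or Fourier) transform term by term, exactly as was done for the Dirac field in the proof of the momentum-space Feynman propagator (Theorem, eqns (\ref{fprop1})--(\ref{fprop2})) and for the zero-momentum case (\ref{fprop4})--(\ref{fourier-1}). First I would write $\psi(t)=\sum_{s}c\dd{s}\chi\dd{s}\ex\uu{-\im E\dd{s}t}$ and $\psi\dgg(0)=\sum_{r}c\dd{r}\dgg\chi\dd{r}\dgg$, so that by Definition of the contraction (\ref{def:contraction}) and the fact that the spin field is fermionic (hence we use the anticommutator and the time-ordering sign convention of the Dirac field, cf.\ (\ref{def:contractiond})),
\small\begin{align}
 \wick{\psi(t)}{\psi\dgg(0)}
&=
 \mzero{\T\,\psi(t)\psi\dgg(0)}
=
 \step(t)\sum_{s}\chi\dd{s}\chi\dd{s}\dgg\ex\uu{-\im E\dd{s}t},
\nn
\end{align}\normalsize
where the anticommutation relation (\ref{c}), i.e.\ $\{c\dd{r}\dgg,c\dd{s}\}=\delta\dd{rs}$, together with $c\dd{s}\ket{0}=0$ and $\bra{0}c\dd{s}\dgg=0$, kills the $\step(-t)$ piece and forces $r=s$ in the surviving term. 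Here I use $H\chi\dd{s}=E\dd{s}\chi\dd{s}$ and the completeness $\sum_{s}\chi\dd{s}\chi\dd{s}\dgg=I$ in the two-dimensional spin space, so that $\sum_{s}\chi\dd{s}\chi\dd{s}\dgg\ex\uu{-\im E\dd{s}t}=\ex\uu{-\im Ht}$, giving $\ipg\dd{\psi|\psi}(t)=\step(t)\ex\uu{-\im Ht}$.

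Next I would pass to the frequency domain. Using the integral representation of the step function (\ref{deltak}), namely $\step(t)=\lim_{\epsilon\to0\dd{+}}\int\frac{dk}{2\pi\im}\frac{\ex\uu{\im kt}}{k-\im\epsilon}$, applied componentwise in the eigenbasis of $H$ (replace $k$ by $\omega-E\dd{s}$), one gets for the Fourier transform
\small\begin{align}
 \ipg\dd{\psi|\psi}(\omega)
&=
 \int dt\,\ex\uu{\im\omega t}\step(t)\ex\uu{-\im Ht}
=
 \lim\dd{\epsilon\to0\dd{+}}\sum_{s}\chi\dd{s}\chi\dd{s}\dgg\frac{\im}{\omega-E\dd{s}+\im\epsilon}
=
 \lim\dd{\epsilon\to0\dd{+}}\frac{\im}{\omega-H+\im\epsilon},
\nn
\end{align}\normalsize
the last equality again by functional calculus on the Hermitian matrix $H$, and the $+\im\epsilon$ prescription pushing all poles $\omega=E\dd{s}$ into the lower half-plane (consistent with causality, as in Figure~\ref{fig-roc-2} and Remark~\ref{sec:rem-sys}). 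This is exactly (\ref{spintf}), completing the proof.

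I do not expect a serious obstacle here; the argument is a direct specialization of the Dirac-field computation with $\bm p=0$, $m\to H$. The only points requiring a little care are: (i) getting the sign right in the time-ordered product for a fermionic field so that the $\step(-t)$ term vanishes against the vacuum rather than contributing an ``advanced'' piece — this is dictated by (\ref{def:contractiond}) and the fact that in Example~\ref{ex:spin} all modes are treated as particle (positive-frequency) modes with $c\dd{s}\ket0=0$, so there is no antiparticle sector and hence no anticausal component, unlike the full Dirac case (\ref{fprop4}); and (ii) justifying the interchange of the (finite) sum over $s$ with the $\omega$-integral, which is trivial since the spin space is two-dimensional. The only mild subtlety worth a sentence in the write-up is that the spin transfer function here is purely retarded (all of $H$'s spectrum sits in the ``positive-energy'' sector once the free degeneracy is set to zero), which is why a single $+\im\epsilon$ suffices rather than the mixed prescription needed in (\ref{fourier-1}).
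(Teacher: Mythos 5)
Your proposal is correct and follows essentially the same route as the paper: compute the contraction from the mode expansion (\ref{spinfield}), observe that only the causal piece $\step(t)\sum_s\chi\dd{s}\chi\dd{s}\dgg\ex\uu{-\im E\dd{s}t}$ survives, and pass to the frequency domain with the $+\im\epsilon$ prescription. The only cosmetic difference is that the paper first takes the Laplace transform $\bigl(s+\im H\bigr)\inv$ with region of convergence $\rea(s)>0$ and then sets $s=-\im\omega$, whereas you Fourier-transform directly; your extra remarks on why the anticausal term vanishes (no antiparticle sector, $c\dd{s}\ket{0}=0$) are consistent with, and slightly more explicit than, the paper's argument.
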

\begin{proof}
It follows from the expression (\ref{spinfield}) that
\small\begin{align}
 \wick{\psi(t)}{\psi\dgg(0)}
&=
 \sum\dd{s} \step(t) \ex\uu{-\im E\dd{s} t }\chi\dd{s} \chi\dd{s}\dgg.
\label{stf1}
\end{align}\normalsize
This is a causal function.
The Laplace transform yields
\small\begin{align}
 \ipg\dd{\psi|\psi}(s)
&=
 \frac{1}{s+\im H},
\end{align}\normalsize
where the region of convergence is \en{ \rea(s)>0 }.
As we did in Section \ref{sec:p=0},
we can rewrite it as a stable function 
\small\begin{align}
\hspace{15mm}
 \ipg\dd{\psi|\psi}(s)
&=
 \frac{1}{s+\im H + \epsilon}.
\qquad
 (\epsilon >0)
\end{align}\normalsize
By setting \en{ s=-\im\omega },
we get (\ref{spintf}).
\end{proof}


\subsection{Controlled unitary gate}

\begin{wrapfigure}[0]{r}[53mm]{49mm} 
\vspace{-0mm}
\centering
\includegraphics[keepaspectratio,width=37mm]{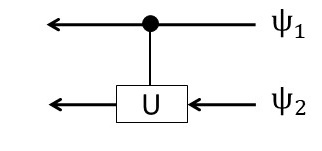}
\caption{
\small
Controlled unitary gate.
\normalsize
}
\label{fig-cU}
\end{wrapfigure}

Let us consider a controlled unitary gate\index{controlled unitary gate}
in Figure \ref{fig-cU}.
A unitary operator \en{ \uni } acts on the lower line \en{ \psi_2 } (a target)
only when the upper line \en{ \psi_1 } (a control) is in \en{ \chi\dd{+} }.
The input-output relation is written as
\small\begin{align}
 \psi\drm{out}
&=
\Bigl(
 \chi\dd{+}\chi\dd{+}\dgg \otimes \uni
\ + \ 
 \chi\dd{-}\chi\dd{-}\dgg \otimes I 
\Bigr) \
 \psi\drm{in},
\qquad
\left(
 \psi \equiv \psi_1 \otimes \psi_2.
\right)
\label{cuinout}
\end{align}\normalsize

We need to find a reactance matrix \en{ \lie }
to calculate the interaction Lagrangian of this gate.
Let us introduce \en{ \lie_T } 
through the Cayley transform\index{Cayley transform}
as
\small\begin{align}
 \uni =  \frac{1+\lie_T}{1-\lie_T}.
\end{align}\normalsize
Then (\ref{cuinout}) is rewritten as
\small\begin{align}
 \left(
  \psi\drm{in} - \psi\drm{out}
 \right)
+
 \left(
  \lie\drm{in} \psi\drm{in} + \lie\drm{out} \psi\drm{out}
 \right)
&=
 0,
\quad
\kakkon{ \hspace{1mm} \lie\drm{in}= (\chi\dd{+}\chi\dd{+}\dgg 
  - \chi\dd{-}\chi\dd{-}\dgg)\otimes \lie_T,}
{\lie\drm{out}= I\otimes \lie_T.}
\end{align}\normalsize
Note that this is the same form as (\ref{tfXY}).
Using (\ref{pcompare1}),
we have
\small\begin{align}
 \lie
=
 \chi\dd{+}\chi\dd{+}\dgg \otimes \lie_T.
\end{align}\normalsize
Substituting \en{ \lie } into (\ref{unilag}),
we get
\small\begin{align}
 L\urm{cU}
&=
 - 2
 \mm{\psi}\dgg  (\im\partial\dd{t} \win) 
 \Bigl(\chi\dd{+}\chi\dd{+}\dgg \otimes \lie_T \Bigr)
 \mm{\psi}.
\end{align}\normalsize

\newpage

\section{Stochastic gauge transformations}

In this section, 
we introduce a different interpretation to the gauge field
and develop unconventional applications of the gauge theory.

Suppose that spin \en{ 1/2 } particles \en{ \psi } are placed 
in a noisy environment.
In quantum theory,
the evolution of \en{ \psi } is described by 
completely positive maps.\index{completely positive map}
There are two representations for it.
One is unitary evolution in an extended Hilbert space
\en{ \hil\dd{\gel} \otimes \hil\dd{\psi} }.
The other is the Kraus operators in \en{ \hil\dd{\psi} } alone.

The two representations are equivalent.
To see how they are related to each other,
consider a unitary operator \en{ \uni } 
on a density matrix \en{\ket{\zeta}\bra{\zeta}\otimes \rho} :
\small\begin{align}
 \ket{\zeta}\bra{\zeta}\otimes \rho
\to
 \uni\left(\ket{\zeta}\bra{\zeta}\otimes \rho\right)\uni\dgg.
\label{extunitary}
\end{align}\normalsize
Suppose that 
\en{\{ \, \ket{n} \mid  n=1,\cdots,N \}} 
is a basis of \en{ \hil\dd{\gel}  }. 
Taking a partial trace over \en{ \hil\dd{\gel} },
we get
\begin{subequations}
\label{kraus}
\small\begin{align}
 \rho
\to
 \rho'
&\equiv
 \pTr{\gel} \uni
 \left(\ket{\zeta}\bra{\zeta}\otimes \rho\right)
 \uni\dgg
\\ &
=
 \sum_{n=1}^N V\dd{n} \, \rho \, V\dd{n}\dgg,
\end{align}\normalsize
\end{subequations}
where \en{ V\dd{n} } are given by
\small\begin{align}
  V\dd{n} \equiv \bra{n}\uni\ket{\zeta}.
\end{align}\normalsize
These are operators on \en{ \hil\dd{\psi} }
and called \textit{Kraus operators}.\index{Kraus operator}
In the Heisenberg picture,
the map (\ref{kraus}) is expressed as
\small\begin{align}
X \to X' =
 \sum_{n=1}^{N} V\dd{n}\dgg X V\dd{n}
\equiv
 \maths{K}(X).
\label{krausum}
\end{align}\normalsize
This satisfies a \textit{unitary condition}\index{unitary condition}
\small\begin{align}
 \maths{K}(I) =  I. 
\label{kunicon}
\end{align}\normalsize

Conversely,
given Kraus operators \en{ V\dd{n} },
there exists a pair \en{ \{\uni,\ket{\zeta}\} } 
satisfying (\ref{kraus}).
For example, 
given \en{ \{V_1,V_2\} },
one can formally construct \en{ \{\uni,\ket{\zeta}\} } as
\small\begin{align}
 \pTr{\gel}
 \A{V_1}{\star}{V_2}{\star}
\left(
 \A{1}{0}{0}{0}\otimes \rho
\right)
 \A{V_1\dgg}{V_2}{\star}{\star}
=
 \sum_{n=1}^2 V\dd{n} \, \rho \, V\dd{n}\dgg,
\end{align}\normalsize
where the matrix elements \en{ \star } are 
determined to satisfy the unitary condition (\ref{kunicon}).
Obviously, 
there are many choices of \en{ \{\uni,\ket{\zeta}\} }.
We need to analyze actual physical systems 
to find a feasible realization.

In this section,
we regard a gauge field as a source of noise
and 
develop a gauge theoretical method to find \en{ \{\uni,\ket{\zeta}\} }
for a specific type of noise channels.
As a first step,
we reconsider the case of \en{ N=1 } in (\ref{krausum}) 
from a gauge theoretical perspective.
However, 
there is a problem.
In general, 
the gauge field is not quantized in a standard way.

\subsection{Gauge symmetry revisited}
\label{sec:gsrevisit}

Suppose that a spin field \en{ \psi } goes through a gate 
represented by a gauge field \en{ \gel }.
In general, 
the quantization of a gauge field is not straightforward
because 
Yang-Mills theory is a system with primary and secondary constraints.
In the case of U(1),
for example,
a gauge field has four coordinates
described by Maxwell's four equations.
However, 
these equations are not independent due to 
the gauge invariance
and 
we can define only three momenta for the four coordinates.
Then canonical quantization is not well defined.

Here we do not quantize \en{ \gel } explicitly.
Instead,
we assume eigenvectors of the `operator' \en{ \gel }
in a Hilbert space \en{ \hil\dd{\gel} }.
Recall that 
the weight function \en{ \win } serves as a switch
to turn the gate on and off.
\en{ \partial\dd{t}\win } is identically zero
if the gate is off,
Denoted by \en{ \ket{0} }
is an eigenvector corresponding to this situation:
\en{ \gel dt\ket{0}=0 }.
Likewise, 
there is a state \en{ \ket{\win} }
such that 
\en{ \gel dt \ket{\win}=-d\win\ket{\win} }.

Let us review the gauge theory in a simple setting
where both spin and gauge fields are static
without interactions.
The spin Lagrangian is written as
\small\begin{align}
 \lag(\psi,\partial\dd{t}\psi)
=
 \psi\dgg (\im \partial\dd{t}) \psi.
\label{nislag-1}
\end{align}\normalsize
Suppose a (unitary) gauge transformation of the form
\small\begin{align}
\hspace{25mm}
 \psi
\to
 \psi'
=
 V \psi,
\qquad
\left( V \equiv \exp\left[\win \lie \right],
\quad
 \lie\dgg = -\lie.\right)
\label{spgtrans}
\end{align}\normalsize
Then the spin Lagrangian transforms as
\small\begin{align}
 \psi\dgg (\im \partial\dd{t}) \psi
\to
 \psi\dgg \maths{K} (\im \partial\dd{t}) \psi
=
 \psi\dgg \im \left[\partial\dd{t} 
+ 
 (\partial\dd{t} \win) \lie \right] \psi,
\label{spintrans-1}
\end{align}\normalsize
where 
\en{ \maths{K}(\bullet) \equiv V\dgg \bullet V }.
As in Section \ref{sec:gsym},
a gauge field is introduced to cancel the second term 
of (\ref{spintrans-1}) out:
\small\begin{align}
 D\dd{t}
\equiv
 \partial\dd{t} - \gel \lie.
\label{spingcd}
\end{align}\normalsize
The spin Lagrangian is rewritten as 
\small\begin{align}
 \psi\dgg (\im D\dd{t}) \psi
=
 \psi\dgg (\im \partial\dd{t}) \psi
+
 \lag\urm{int},
\end{align}\normalsize
where \en{ \lag\urm{int} } is an interaction Lagrangian
given as
\small\begin{align}
 \lag\urm{int}
=
 -\im \gel \otimes \psi\dgg \lie \psi.
\end{align}\normalsize

Let us consider the time evolution of \en{ \psi } 
under \en{ \lag\urm{int} }.
This is usually done in the interaction picture,
but it is not necessary here
because \en{ \psi } is static in free space as in (\ref{nislag-1}).
An infinitesimal time evolution operator is written as
\small\begin{align}
 \uni
=
 \exp\left[- \im  \lag\urm{int} dt \right]
\sim
 I 
- \gel dt \otimes \psi\dgg \lie \psi.
\end{align}\normalsize
Assume that a density matrix is initially 
prepared in \en{\ket{\win}\bra{\win}\otimes \rho}.
Using the anticommutation relation (\ref{spq}),
we have
\begin{subequations}
\small\begin{align}
 \Tr \uni \Bigl( \ket{\win}\bra{\win}\otimes \rho \Bigr) \uni\dgg
     \Bigl(I \otimes \psi \Bigr)
&= 
 \Tr \rho \Bigl(\psi - \bra{\win}\gel dt\ket{\win} \, \lie\psi \Bigr) 
\\ &\sim 
 \Tr \rho \, \exp\Bigl[(d\win) \, \lie\Bigr] \psi.
\label{spgtrans-2}
\end{align}\normalsize
\end{subequations}
This is the same form as the gauge transformation (\ref{spgtrans}).
It is also easy to see that 
\small\begin{align}
 \pTr{\gel} \uni \Bigl( \ket{\win}\bra{\win}\otimes \rho \Bigr) \uni\dgg
&\sim
 \exp\Bigl[(d\win) \, \psi\dgg \lie \psi\Bigr] \,
 \rho 
 \exp\Bigl[(d\win) \, \psi\dgg \lie\dgg \psi\Bigr],
\end{align}\normalsize
which corresponds to \en{ N=1 } in (\ref{kraus}).

Unlike \en{ \psi },
the gauge field \en{ A } is static under the interaction Lagrangian.
Its transformation follows from the gauge principle.
In our notation, 
it is written as
\small\begin{align}
 \psi\dgg(\im D\dd{t})\psi
=
 \psi\dgg\maths{K}(\im D\dd{t}')\psi.
\label{gpsto-0}
\end{align}\normalsize
This is satisfied if 
\en{ D\dd{t} = \maths{K}(D\dd{t}') },
which can be rewritten as
\small\begin{align}
 \gel dt
\to
 \gel' dt
&=
 \gel dt + d\win.
\label{gpsto-1}
\end{align}\normalsize
It is easy to see that 
\small\begin{align}
 \gel' \ket{\win}=0.
\label{reset}
\end{align}\normalsize
This means that 
gauge particles (or the quantum gate) 
initially prepared in \en{ \ket{\win} } 
to generate the interaction
are reset (absorbed) after the interaction.

\subsection{Stochastic gauge transformation}
\label{sec:stogtrans}

Let us consider a transformation of the form
\small\begin{align}
 \lag(\psi, \partial\dd{t} \psi)
\to
 \sum\dd{n} p\dd{n} \lag(\uni\dd{n}\psi,\partial\dd{t} \uni\dd{n}\psi),
\label{probgt}
\end{align}\normalsize
where \en{ \sum\dd{n} p\dd{n}=1 } 
and \en{ \uni\dd{n} } are unitary operators given as
\small\begin{align}
\hspace{27mm}
 \uni\dd{n}(t)
=
 \exp[\win\dd{n}(t) \lie\dd{n}],
\qquad
 \left(\lie\dd{n}\dgg = -\lie\dd{n}.\right)
\label{stoun}
\end{align}\normalsize
This is a process in which 
\en{ \uni\dd{n} } acts on \en{ \psi } 
with a probability of \en{ p\dd{n} }.
Let us define 
\small\begin{align}
 V\dd{n} \equiv \sqrt{p\dd{n}}\uni\dd{n}.
\end{align}\normalsize
These are regarded as Kraus operators on \en{ \hil\dd{\psi} }.
Note that the Lagrangian is globally symmetric
if all \en{ \win\dd{n} } are constant.
Let us introduce 
\small\begin{align}
 \maths{K}(\bullet) \equiv \sum\dd{n} V\dd{n}\dgg \bullet V\dd{n}.
\end{align}\normalsize
Then (\ref{probgt}) is written as
\small\begin{align}
 \psi\dgg (\im \partial\dd{t}) \psi
\to 
 \psi\dgg \maths{K}(\im \partial\dd{t}) \psi.
\label{probgt2}
\end{align}\normalsize

Our purpose is to find \en{ \{\uni, \ \ket{\zeta}\} } 
satisfying (\ref{kraus}) for the Kraus operators \en{ V\dd{n} }.
To find \en{ \uni }, 
we rewrite (\ref{probgt2}) as
\small\begin{align}
 \psi\dgg \maths{K}(\im \partial\dd{t}) \psi.
&=
 \psi\dgg \im \Bigl[\partial\dd{t} 
         + \sum\dd{n} p\dd{n}(\partial\dd{t} \win\dd{n}) \lie\dd{n} \Bigr] \psi.
\end{align}\normalsize
As in (\ref{spingcd}), 
we introduce gauge fields \en{ \gel\dd{n} } 
and a covariant derivative as
\small\begin{align}
 D\dd{t}
\equiv
 \partial\dd{t} - \sum\dd{n} p\dd{n} \gel\dd{n} \lie\dd{n}.
\label{spingcd-1}
\end{align}\normalsize
Then the Lagrangian is written as
\small\begin{align}
 \lag(\psi, D\dd{t} \psi)
=
 \lag(\psi, \partial\dd{t} \psi)
+
  \lag\urm{int},
\end{align}\normalsize
where 
\small\begin{align}
 \lag\urm{int}
=
 - \im \sum\dd{n} p\dd{n} \gel\dd{n} \otimes \psi\dgg \lie\dd{n} \psi.
\end{align}\normalsize
Note that 
the gauge fields \en{ \{\gel\dd{n} \mid n=1,2,\cdots\} } 
are understood as
\en{ \{\cdots I \otimes \gel\dd{n} \otimes I \cdots\} }.
The transformation of the gauge fields \en{ \gel\dd{n} \to \gel'\dd{n} }
follows from the gauge principle
\small\begin{align}
 \psi\dgg(\im D\dd{t})\psi
=
 \psi\dgg \maths{K}(\im D\dd{t}') \psi.
\label{gpsto}
\end{align}\normalsize
This is satisfied if
\small\begin{align}
 \gel\dd{n}' \maths{K}(\lie\dd{n}) dt
=
 \gel\dd{n} \lie\dd{n} dt
+
 \left(d \win\dd{n} \right) \lie\dd{n}. 
\label{probgft}
\end{align}\normalsize
This form is different from a standard expression
because \en{ \maths{K}\inv  } cannot be defined.
However, it can be simplified in a specific case.
Assume that 
nothing happens to the spin with a probability of \en{ p_1 },
and a transformation occurs with a probability of \en{ p_2 }.
In this case, \en{ \lie_1=0 } 
and \en{ \maths{K}(\lie_2)= \lie_2 }.
Then (\ref{probgft}) is expressed as
\small\begin{align}
 \gel_2 dt
\to
 \gel_2' dt
=
 \gel_2 dt
+
 d \win_2.
\label{probgft2}
\end{align}\normalsize

For the other component of the pair, \en{ \ket{\zeta} },
we assume the following form:
\small\begin{align}
 \ket{\zeta}
=
 \sum\dd{n}\sqrt{p\dd{n}}\Bigl(
 \cdots \ket{0}\otimes \ket{\win\dd{n}} \otimes \ket{0} \cdots
 \Bigr),
\label{zeta}
\end{align}\normalsize
where \en{ \ket{0} } and \en{ \ket{\win\dd{n}} } are defined 
in the same way as Section \ref{sec:gsrevisit}.
Assume that 
\en{ \ket{0} } and \en{ \ket{\win\dd{n}} } are orthogonal.
In general, 
\en{ \ket{\zeta} } is an entangled state.
For an infinitesimal time evolution operator
\small\begin{align}
 \uni
=
 \exp\left( 
 - \sum\dd{n} p\dd{n} \gel\dd{n}dt \otimes \psi\dgg \lie\dd{n} \psi 
 \right),
\end{align}\normalsize
we have
\small\begin{align}
 \pTr{\gel}  \uni 
 \Bigl( \ket{\zeta}\bra{\zeta}\otimes \rho \Bigr) 
 \uni\dgg
&=
 \sum\dd{n} p\dd{n} \uni\dd{n} \, \rho \, \uni\dd{n}\dgg,
\label{stoU}
\end{align}\normalsize
where
\small\begin{align}
 \uni\dd{n} 
\equiv 
 \exp\left[  (d\win\dd{n}) \psi\dgg \lie\dd{n} \psi  \right],
\end{align}\normalsize
which corresponds to (\ref{stoun}), 
as expected.
It also follows from (\ref{probgft}) that 
\small\begin{align}
\hspace{20mm}
 \gel\dd{n}' \ket{\zeta} \ \maths{K}(\lie\dd{n}) 
&=
 0,
\qquad
 \forall n.
\label{gelzero}
\end{align}\normalsize
This corresponds to (\ref{reset}), 
i.e.,
the gauge field is reset after the interaction.

\begin{example}
\label{ex:stogauge}
 \rm
There are two well-known types 
of noise on qubits:\index{bit flip}\index{phase flip}
\begin{subequations}
\small\begin{align}
\mbox{\normalsize bit flips: \small}& \quad
 V_1 = \sqrt{p_1} \A{1}{}{}{1},
\quad
 V_2 = \sqrt{p_2} \A{}{1}{1}{},
\\
\mbox{\normalsize phase flips: \small}& \quad
 V_1 = \sqrt{p_1} \A{1}{}{}{1},
\quad
 V_2 = \sqrt{p_2} \A{1}{}{}{-1}.
\end{align}\normalsize
\end{subequations}
In both cases, Kraus operators are of the form
\small\begin{align}
 V_1 
=
 \sqrt{p_1} I,
\qquad
 V_2
=
 \sqrt{p_2}\exp\left[\win_2 \lie_2\right],
\end{align}\normalsize
where \en{ \win_2 } is a scalar 
and \en{ \lie_2\in \mathrm{u(2)} }.
One possible choice 
of the pair \en{ \{\lag\urm{int}, \ket{\zeta}\} }
is given as
\begin{subequations}
\small\begin{align}
 \lag\urm{int}
&=
 -\im I \otimes p_2 \gel_2 \otimes \psi\dgg \lie_2 \psi,
\\
 \ket{\zeta}
&=
 \sqrt{p_1} \, \ket{\win_1}\otimes \ket{0} 
 + \sqrt{p_2} \, \ket{0} \otimes \ket{\win_2}.
\end{align}\normalsize
\end{subequations}
It follows from (\ref{probgft2}) that
\small\begin{align}
 \gel_2'dt
=
 \gel_2 dt + d\win_2,
\end{align}\normalsize
which satisfies 
\en{ \gel_2'\ket{\zeta}=0 }
as expected from (\ref{gelzero}).
\qed
\end{example}

\chapter{Quantum systems via feedback}
\label{chap:feedback}
\thispagestyle{fancy}

In Chapter \ref{chap:1},
we have seen that 
locality is the key to distinguishing between systems and signals.
For example,
quantum gates are regarded as (static) systems
defined by local gauge transformations.
Another example is 
a closed-loop field \en{ \mas } (Section \ref{sec:ltf})
that is a field defined locally in a finite interval 
under periodic boundary conditions.
In this chapter, 
we combine these two to make dynamical systems.

\section{Unitary systems}
\label{sec:sysfor}

\begin{wrapfigure}[0]{r}[53mm]{49mm} 
\centering
\vspace{-7mm}
\includegraphics[keepaspectratio,width=38mm]{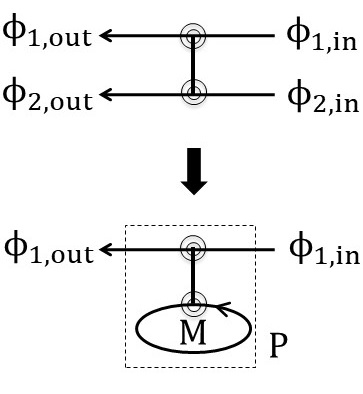}
\caption{
\small
SU(2) gate (upper)
and 
SU(2) system (lower).
\normalsize
}
\label{fig-ssu2-1}
\end{wrapfigure}

In this section, 
we illustrate a basic procedure 
to make a dynamical system from a quantum gate.
Let us consider a unitary gate as in Figure \ref{fig-ssu2-1}.
If one of the outputs, say \en{ \phi\drm{2,out} },
is coherently fed back to the corresponding input \en{ \phi\drm{2,in} },
a closed-loop field \en{ \mas } is created as
\small\begin{align}
\phi\drm{2,out} = \phi\drm{2,in} &\equiv \mas.
\label{feedback}
\end{align}\normalsize
The resulting system \en{ \tf } is called 
a unitary system\index{unitary system}.

Let us calculate the input-output relation of \en{ \tf }.
Our first step is to find a Lagrangian.
Assume that the unitary gate is placed at \en{ z=0 }.
Before closing the loop, the Lagrangian 
is given by (\ref{unilag}):
\begin{subequations}
\small\begin{align}
 \lag\urm{f}
&=
 \lag_1\urm{f} +  \lag_2\urm{f} 
\quad
\\
 \lag\urm{int}
&=
 - 
  2 \im\delta(z) \mm{\phi}\dgg \, \lie \, \mm{\phi},
\end{align}\normalsize
\end{subequations}
If we substitute (\ref{feedback}) into this Lagrangian,
it is modified as
\small\begin{align}
 \mm{\phi}
=
 \AV{\mm{\phi}_1}
    {\mm{\phi}_2}
&\Rightarrow
 \AV{\mm{\phi}_1}
    {\mas}
\equiv
 \phi\dd{\mas},
\label{fieldvector}
\end{align}\normalsize
and
\small\begin{align}
 \lag_2\urm{f}
 \  &\Rightarrow \
 \im \delta(z)  \mas\dgg\dot{\mas}.
\end{align}\normalsize
As a result, 
the free field and interaction Lagrangians of \en{ \tf } are,
respectively,  given as
\begin{subequations}
\label{fham}
\small\begin{align}
 \lag\urm{f}_{\mas}
= & \,
 \lag_1\urm{f}
+ 
 \im \delta(z)
 \mas\dgg \dot{\mas},
\\
 \lag\urm{SU(2)}_{\mas}
= & 
 -2\im \delta(z)
 \phi\dd{\mas}\dgg \, \lie \, \phi\dd{\mas}.
\end{align}\normalsize
\end{subequations}

\marginpar{\vspace{-57mm}
\footnotesize
$\mm{\phi}_2\equiv\ffrac{\phi\drm{2,out}+\phi\drm{2,in}}{2}
\xrightarrow{(\ref{feedback})}\mas$.
\normalsize
 }

\subsection{SU(2) system}
\label{onoffshell}

Let us consider the input-output relation of the unitary system
for some examples.
We start with the SU(2) gate.\index{SU(2) system (forward traveling)}
The reactance matrix is given as
\small\begin{align}
 \lie
=
 \A{}{g}{-g\uu{*}}{}.
\end{align}\normalsize
From the Lagrangian (\ref{fham}) and the Euler-Lagrange equation,
the input-output relation is written as
\begin{subequations}
\label{stateq}
\small\begin{align}
 \frac{d}{dt}{\AV{\mas}{\mas\dgg}}
&=
 \A{-2|g|^2}{}
   {}{-2|g|^2}
 \AV{\mas}{\mas\dgg}
+
 \A{-2g\uu{*}}{}{}{-2g} 
 \AV{\phi_1}{\phi_1\dgg}\drm{in},
\\
 \AV{\phi_1}{\phi_1\dgg}\drm{out}
&= 
\hspace{11mm}
 \A{2g}{}{}{2g\uu{*}}
 \AV{\mas}{\mas\dgg}
\hspace{12mm}
+
 \A{1}{}{}{1}
 \AV{\phi_1}{\phi_1\dgg}\drm{in}.
\end{align}\normalsize
\end{subequations}
In the frequency domain,
this is expressed as (Section \ref{sec:tf-sys})
\small\begin{align}
 P(s)
&= 
 \dtf{\nA{-2|g|^2}{}{}{-2|g|^2}}
    {\nA{-2g\uu{*}}{}{}{-2g}}
    {\nA{2g}{}{}{2g\uu{*}}}
    {\nA{1}{}{}{1}}
=
\A{\ffrac{s-2|g|^2}{s+2|g|^2}}{}
  {}{\ffrac{s-2|g|^2}{s+2|g|^2}}.
\label{su2d}
\end{align}\normalsize

Note that 
the SU(2) system is a special case of the time-varying quantum gate
introduced in Section \ref{sec:timevarying}.
The forward traveling field propagates
in the closed loop \en{ \mas } infinite times,
which results in a memory effect
described as a time-varying reactance matrix \en{ \lie=\lie(t) }.
In fact, 
(\ref{su2d}) is also obtained by substituting an `integrator'
\small\begin{align}
 \lie(s) = -\frac{2|g|^2}{s}
\end{align}\normalsize
into the time-varying quantum gate (\ref{tvgatep})
\small\begin{align}
 \tf
=
 \A{\ffrac{1+\lie}{1-\lie}}{}
   {}{\ffrac{1-\lie\simm}{1+\lie\simm}}.
\end{align}\normalsize

\marginpar{\vspace{5mm}
\small
Note that 
$ \pole=-\zero $ holds only for transfer functions
between signals \textit{on shell}
such as $\phi\drm{in}$ and $\phi\drm{out}$.
If we consider a transfer function for $\mas$,
$ \pole=-\zero $ is not satisfied
because $\mas$ is an internal field and \textit{off shell}.
This is also related to canonical quantization.
See Chapter \ref{chap:sym} for details.
\normalsize
}

As shown in Section \ref{sec:timevarying},
the time-varying quantum gate has 
the pole-zero symmetry\index{pole-zero symmetry}
\small\begin{align}
 \pole(P)=-\zero(P).
\label{pzsym}
\end{align}\normalsize
It is easy to see this from (\ref{su2d}):
\begin{subequations}
\label{su2po}
\small\begin{align}
 \mbox{\normalsize poles: \small}& 
\quad \pole(P) = -2|g|^2,
\\
 \mbox{\normalsize zeros: \small}& 
\quad \zero(P) = +2|g|^2.
\end{align}\normalsize 
\end{subequations}

\begin{remark}
The state equation (\ref{stateq}) is 
the same as the quantum stochastic differential equation
(\ref{qsdecav}) in Appendix \ref{app:qsde}.
This system is known as an optical cavity
in quantum optics.
In this case, the forward traveling field describes
optical lasers and the $\mathrm{SU(2)}$ gate corresponds to 
a beam splitter.
\end{remark}

\newpage

\subsection{Time-varying SU(2) system}
\label{sec:dyn}

As another example of the unitary system,
let us consider 
the time-varying gate\index{SU(2) system (time-varying)}
introduced in Section \ref{sec:timevarying}.
Before closing the loop,
the Lagrangian is given by (\ref{tvsu2lag}):
\begin{subequations}
 \small\begin{align}
 \lag\urm{f}
&=
 \lag_1\urm{f} +  \lag_2\urm{f},
\\
 \lag\urm{TV}
&=
 -2 (\im\partial\dd{+}\win)
 \mm{\phi}\dgg \circ \lie \mm{\phi} 
\\ &=
 -2 (\im\partial\dd{+}\win)
 \mm{\phi}\dgg \lie\ast \mm{\phi}.
 \end{align}\normalsize
\end{subequations}
After closing the loop,
the Lagrangian is written as
\begin{subequations}
\label{tvsystem}
 \small\begin{align}
 \lag_{\mas}\urm{f}
&=
 \lag_1\urm{f}
+
 \im \delta(z)  \mas\dgg \dot{\mas},
\\
 \lag_{\mas}\urm{TV}
&=
 -2 (\im\partial\dd{+}\win)
 \phi\dd{\mas}\dgg \circ \lie \phi\dd{\mas} 
\\ &=
 -2 (\im\partial\dd{+}\win)
 \phi\dd{\mas}\dgg \lie\ast \phi\dd{\mas}.
 \end{align}\normalsize
\end{subequations}

Let us calculate the input-output relation of this system.
The reactance matrix is given by
\small\begin{align}
 \lie(t)
=
 \A{}{g(t)}{-g\uu{*}(t)}{}.
\end{align}\normalsize
The interaction Lagrangian is then written as
\begin{subequations}
 \small\begin{align}
 \lag\urm{SU(2)}_{\mas}
&=
-2\im \delta(z) \left[
 \phi\dd{\mas}\dgg  g\ast \mas 
- 
 \mas\dgg g\uu{*}\ast\phi\dd{\mas}
\right]
\\ &=
-2\im \delta(z) \left[
 \phi\dd{\mas}\dgg \circ g \ \mas 
- 
 \mas\dgg \circ g\uu{*} \ \phi\dd{\mas}
\right],
 \end{align}\normalsize
\end{subequations}
from which we have
\begin{subequations}
\small\begin{align}
 \dot{\mas} 
&=
 -2g\ast \mas\circ g\uu{*} -2g\uu{*} \ast \phi\drm{1,in},
\\
 \phi\drm{1,out} 
&= 
 2\mas\circ g +\phi\drm{1,in},
\end{align}\normalsize
and
\small\begin{align}
 \dot{\mas}\dgg 
&= 
 -2g\uu{*} \ast \mas\dgg\circ g -2\phi\drm{1,in}\dgg\circ g,
\\
 \phi\drm{1,out}\dgg 
&= 
 2g\uu{*} \ast \mas\dgg+\phi\drm{1,in}\dgg.
\end{align}\normalsize
\end{subequations}
After the Laplace transform,
these are expressed as
\small\begin{align}
 \AV{\phi_1}{\phi_1\dgg}\drm{out}
&=
 \A{\ffrac{s-2g(s)g\uu{*}(-s)}{s+2g(s)g\uu{*}(-s)}}{}
   {}{\ffrac{s-2g\uu{*}(s)g(-s)}{s+2g\uu{*}(s)g(-s)}}
  \AV{\phi_1}{\phi_1\dgg}\drm{in}.
\label{tvtf}
\end{align}\normalsize
It is not difficult to see the pole-zero symmetry
\en{ \pole=-\zero },\index{pole-zero symmetry}
even though
\en{ \pole } and \en{ \zero }
are not obtained explicitly,
This will be revisited in Chapter \ref{chap:sym}.

\newpage

\section{Non-unitary systems}

Non-unitary systems\index{non-unitary system}
can be defined from non-unitary gates
in the same way as the unitary systems.
Before closing the loop,
the Lagrangian of a non-unitary gate is given by (\ref{nonunilag}):
\begin{subequations}
\small\begin{align}
 \lag\urm{f}
&=
 \lag_1\urm{f} +  \lag_2\urm{f},
\\
 \lag\urm{int}
&=
 - 
  \mm{\Phi}\dgg \Sigma\dd{z} (\im \partial\dd{+} \win) \lie \mm{\Phi}.
\end{align}\normalsize
\end{subequations}
After closing the loop, 
the field vector is modified as
\small\begin{align}
 \mm{\Phi}
=
   \left[
    \begin{array}{c} 
        \mm{\phi}_1 \\
        \mm{\phi}_1\dgg \\ \hdashline
        \mm{\phi}_2 \\
        \mm{\phi}_2\dgg
    \end{array}
  \right]
\Rightarrow
   \left[
    \begin{array}{c} 
        \mm{\phi}_1 \\
        \mm{\phi}_1\dgg \\ \hdashline
        \mas \\
        \mas\dgg
    \end{array}
  \right]
\equiv
 \Phi\dd{\mas}
\end{align}\normalsize
The resulting free field and interaction Lagrangians are,
respectively, 
given as
\begin{subequations}
\label{nonunisysint}
\small\begin{align}
 \lag\urm{f}_{\mas}
&=
 \lag_1\urm{f}
+
 \im \delta(z)  \mas\dgg \dot{\mas},
\\
 \lag\dd{\mas}\urm{int}
&=
 - 
 \Phi\dd{\mas}\dgg \Sigma\dd{z} 
 (\im \partial\dd{+} \win) 
 \lie \Phi\dd{\mas}.
\end{align}\normalsize
\end{subequations}

\subsection{System via cross-squeezing}
\label{sec:systfsq}

\begin{wrapfigure}[0]{r}[53mm]{49mm} 
\centering
\vspace{-0mm}
\includegraphics[keepaspectratio,width=31mm]{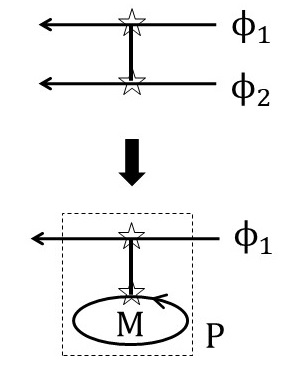}
\caption{
\small
(Upper) cross-squeezing gate.
(Lower) cross-squeezing system.
\normalsize
}
\label{fig-tsqsys-1}
\end{wrapfigure}

As an example,
let us consider the cross-squeezing gate\index{cross-squeezing gate}
\small\begin{align}
 \lie
=
 \A{}{g\sigma_x}
   {g\sigma_x}{}.
\end{align}\normalsize
In this case, the interaction Lagrangian is given by
\small\begin{align}
 \lag\urm{CQ}_{\mas}
&= 
 2 \im \delta(z) g \Bigl[
 \mm{\phi}_1 \mas - \mm{\phi}_1\dgg \mas\dgg 
\Bigr].
\end{align}\normalsize
The Euler-Lagrange equation yields a state equation
\begin{subequations}
\small\begin{align}
 \frac{d}{dt}\AV{\mas}{\mas\dgg}
&=
 \A{2g^2}{}{}{2g^2}\AV{\mas}{\mas\dgg}
+
 \A{}{2g}{2g}{}
 \AV{\phi_1}{\phi_1\dgg}\drm{in},
\\
 \AV{\phi_1}{\phi_1\dgg}\drm{out}
&=
 \hspace{3mm} \A{}{2g}{2g}{}\AV{\mas}{\mas\dgg}
+
 \hspace{5mm} \A{1}{}{}{1} 
 \AV{\phi_1}{\phi_1\dgg}\drm{in}.
\end{align}\normalsize
\end{subequations}

The transfer function of this system is given by
\small\begin{align}
 P(s)
&=
 \dtf{\nA{2g^2}{}{}{2g^2}}{\nA{}{2g}{2g}{}}
     {\nA{}{2g}{2g}{}}{\nA{1}{}{}{1}}
=
 \A{\ffrac{s+2g^2}{s-2g^2}}{}{}{\ffrac{s+2g^2}{s-2g^2}},
\label{sys2sq}
\end{align}\normalsize
which is the inverse of the SU(2) system (\ref{su2d}).
As a result,
\begin{subequations}
\small\begin{align}
\mbox{\normalsize poles: \small}& \quad \pole =2g^2,
\\
\mbox{\normalsize zeros: \small}& \quad \zero =-2g^2.
\end{align}\normalsize
\end{subequations}
Again the system possesses 
the pole-zero symmetry\index{pole-zero symmetry}
\en{ \pole = -\zero }.

\newpage

\subsection{QND and XX systems}
\label{sec:sumsys}

\begin{wrapfigure}[0]{r}[53mm]{49mm} 
\centering
\vspace{-5mm}
\includegraphics[keepaspectratio,width=35mm]{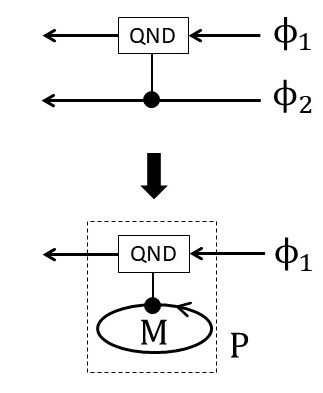}
\caption{
\small
(Upper) QND gate. 
(Lower) QND system.
\normalsize
}
\label{fig-sumsys-1}
\end{wrapfigure}

Let us consider the QND gate\index{QND gate}
defined by a reactance matrix
\small\begin{align}
\hspace{20mm}
 \lie
=
 \frac{g}{4}
 \A{}{-Q\dd{+}}
   {Q\dd{-}}{},
\hspace{5mm}
 \left(
 Q\dd{\pm}
=
 \A{1}{\pm 1}
   {\pm 1}{1}
\right)
\end{align}\normalsize
where we have swapped the control and target just 
for a computational purpose.
The interaction Lagrangian is given as
\begin{subequations}
\label{sumsyslag}
\small\begin{align}
 \lag\urm{QND}_{\mas} 
&= 
 \frac{1}{2} \im \delta(z) g
 (\mm{\phi}_1\dgg - \mm{\phi}_1) (\mas\dgg+\mas)
\\ &=
 \delta(z) \ g \mm{\eta}_1 \mas\dd{\xi}.
\end{align}\normalsize
\end{subequations}
for which the Euler-Lagrange equation is written as
\begin{subequations}
\small\begin{align}
 \AV{\dot{\mas}}{\dot{\mas}\dgg}
&=
\hspace{40mm}  
 \A{g/2}{-g/2}
   {-g/2}{g/2} 
 \AV{\phi_1}{\phi_1\dgg}\drm{in},
\\
 \AV{\phi_1}{\phi_1\dgg}\drm{out}
&=
 \A{-g/2}{-g/2}
   {-g/2}{-g/2}
 \AV{\mas}{\mas\dgg}
+ \hspace{23mm} 
 \AV{\phi_1}{\phi_1\dgg}\drm{in}.
\end{align}\normalsize
\end{subequations}
In the quadrature basis,
this is rewritten as
\begin{subequations}
\label{sumsysio}
\small\begin{align}
 \AV{\dot{\mas}\dd{\xi}}{\dot{\mas}\dd{\eta}}
&=
\hspace{37mm} \A{0}{0}{0}{g}
 \AV{\xi_1}{\eta_1}\drm{in},
\\
 \AV{\xi_1}{\eta_1}\drm{out}
&=
 \A{-g}{0}{0}{0}\AV{\mas\dd{\xi}}{\mas\dd{\eta}}
+ \hspace{17mm}
 \AV{\xi_1}{\eta_1}\drm{in},
\end{align}\normalsize
\end{subequations}
which clearly shows that the system is 
neither controllable nor observable.
\index{controllable (systems theory)}\index{observable (systems theory)}
In particular, 
\en{ \mas\dd{\xi} } is decoupled from the inputs.
This type of interaction 
is said to be \textit{non-demolition}.\index{non-demolition}


The same characteristics can be observed 
in the XX gate\index{XX gate} defined by a reactance matrix
\small\begin{align}
\hspace{20mm}
 \lie
=
 \frac{\im g}{4}
 \A{}{Q}
   {Q}{}.
\hspace{5mm}
 \left(Q
=
 \A{1}{1}
   {-1}{-1}\right)
\end{align}\normalsize
The interaction Lagrangian is written as
\small\begin{align}
 \lag\urm{XX}_{\mas} 
&= 
 \frac{1}{2}\delta(z) g
 (\mm{\phi}_1\dgg + \mm{\phi}_1) (\mas\dgg+\mas).
\end{align}\normalsize
In the quadrature basis,
the input-output relation is given as
\begin{subequations}
\label{xxsystem}
\small\begin{align}
 \AV{\dot{\mas}\dd{\xi}}{\dot{\mas}\dd{\eta}}
&=
\hspace{32mm} \A{0}{0}{g}{0}
  \AV{\xi_1}{\eta_1}\drm{in},
\\
 \AV{\xi_1}{\eta_1}\drm{out}
&=
 \A{0}{0}{g}{0}\AV{\mas\dd{\xi}}{\mas\dd{\eta}}
+ \hspace{15mm}
  \AV{\xi_1}{\eta_1}\drm{in}.
\end{align}\normalsize
\end{subequations}
Again,
this system is neither controllable nor observable.
\index{controllable (systems theory)}\index{observable (systems theory)}

\newpage

\section{SU(2) Dirac system}
\label{sec:diracsys}

Consider an SU(2) system 
for the Dirac field.\index{SU(2) system (Dirac)}
The procedure is the same as the forward traveling field.
Before closing a loop,
the Lagrangian of the SU(2) gate is given in (\ref{diracsu22}):
\begin{subequations}
\label{diracsu22-s}
\small\begin{align}
 \lag\urm{f}
&=
 \lag_1\urm{f} +  \lag_2\urm{f}
\\
 \lag\urm{SU(2)}
&=
 -2 \widebar{\mm{\psi}}(\im\fsh{\partial}\win) \, \lie \, \mm{\psi},
\end{align}\normalsize
\end{subequations}
where
\small\begin{align}
 \lie
=
 \A{}{g}{-g\uu{*}}{}.
\end{align}\normalsize
After closing the loop,
the field vector is modified as
\small\begin{align}
 \mm{\psi}
=
 \AV{\mm{\psi}_1}{\mm{\psi}_2}
\Rightarrow
 \AV{\mm{\psi}_1}{\mas}
\equiv
 \psi\dd{\mas}.
\label{dircdef}
\end{align}\normalsize
The free field Lagrangian 
\en{ \bigl(\phi_2 }-component of \en{ \lag^0\bigr) }
is also modified as
\small\begin{align}
 \lag_2\urm{f}
\Rightarrow
 \delta(z)[\im \mas\dgg \dot{\mas}- m \mas\dgg \beta \mas].
\end{align}\normalsize
As a result, 
we get
\begin{subequations}
\small\begin{align}
  \lag\urm{f}_{\mas}
&=
 \lag_1\urm{f}
+
 \im \delta(z)
 \Bigl[
 \mas\dgg \dot{\mas}
+ 
 \im m \mas\dgg \beta \mas
\Bigr],
\\
 \lag\urm{SU(2)}_{\mas}
&=
 -2\im\delta(z)
 \psi\dd{\mas}\dgg \alpha\uu{z} \lie \psi\dd{\mas}.
\\ &=
 -2\im\delta(z) 
\Bigl[
 \mm{\psi}\dgg \alpha\uu{z} g \mas
-
 \mas\dgg \alpha\uu{z} g\uu{*} \mm{\psi}
\Bigr].
\label{dsu2int}
\end{align}\normalsize
\end{subequations}

The input-output relation of this system is obtained 
via the Euler-Lagrange equation as
\begin{subequations}
\small\begin{align}
 \dot{\mas}
&=
 (-2|g|^2 \alpha\uu{z} -\im m \beta)\mas -2g\uu{*} \alpha^z \psi\drm{1,in},
\\
 \psi\drm{1,out}
&= 
 \hspace{25mm} 2g\mas+ \hspace{10mm} \psi\drm{1,in}.
\end{align}\normalsize
\end{subequations}
After the Laplace transform,
the system is expressed by a transfer function
\small\begin{align}
 P
=
 \dtf{-2|g|^2 \alpha\uu{z} -\im m \beta}{-2g\uu{*}\alpha\uu{z}}{2g}{1}.
\label{dtf} 
\end{align}\normalsize

To examine this transfer function,
assume that \en{ m=0 }.
In the Weyl basis (\ref{weylbasis}),
\small\begin{align}
 \alpha\uu{z}
&=
 \A{-1}{0}{0}{1}\otimes \sigma\uu{z},
\end{align}\normalsize
hence (\ref{dtf}) is written as
\small\begin{align}
 P(s)
&=
   \left[
    \begin{array}{cc:cc} 
        \ffrac{s+2|g|^2}{s-2|g|^2} & & & \\      
        & \ffrac{s-2|g|^2}{s+2|g|^2} & & \\ \hdashline
        & & \ffrac{s-2|g|^2}{s+2|g|^2} & \\
        & & & \ffrac{s+2|g|^2}{s-2|g|^2}
    \end{array}
  \right],
\label{weyltfd}
\end{align}\normalsize
where the top-left and bottom-right blocks are
the left- and right-chiral subspaces, 
respectively.
This transfer function is 
stable in the (2,2)- and (3,3)-elements,
and 
unstable in the (1,1)- and (4,4)-elements.
These elements correspond to
the forward and backward traveling components 
of the Weyl equation (\ref{chiralweyl}), 
respectively.

It is also interesting to examine \en{ \tf } 
in the Dirac basis.
The eigenvectors of the Dirac equation 
have been given in (\ref{eigenw}).
In the current setting, 
the positive-energy eigenvectors are written as
\small\begin{align}
 w\uu{(+)}\dd{\uparrow}
&\sim
 \AV{1}
    {\sigma\uu{z}}
 \otimes \AV{1}{0}
=
 \left[
 \begin{array}{c}
  1\\
  0\\
  1\\
  0
 \end{array}
\right],
\qquad
 w\uu{(+)}\dd{\downarrow}
\sim
 \AV{1}
    {\sigma\uu{z}}
 \otimes \AV{0}{1}
=
 \left[
 \begin{array}{c}
  0\\
  1\\
  0\\
  -1
 \end{array}
\right],
\end{align}\normalsize
where the suffixes \en{ \{\uparrow, \downarrow\} } represent 
positive and negative helicity.
Likewise,
the negative-energy eigenvectors are
\small\begin{align}
 w^{(-)}\dd{\uparrow}
&\sim
 \AV{-\sigma\uu{z}}
    {1}
 \otimes \AV{1}{0}
=
 \left[
 \begin{array}{c}
  -1\\
  0\\
  1\\
  0
 \end{array}
\right],
\qquad
 w\uu{(-)}\dd{\downarrow}
\sim
 \AV{-\sigma\uu{z}}
    {1}
 \otimes \AV{0}{1}
=
 \left[
 \begin{array}{c}
  0\\
  1\\
  0\\
  1
 \end{array}
\right].
\end{align}\normalsize
In the Dirac basis,
\small\begin{align}
 \alpha\uu{z}
&=
 \A{}{1}{1}{} \otimes \sigma\uu{z},
\end{align}\normalsize
hence 
\begin{subequations}
\small\begin{align}
 P(s)
&=
 \frac{s-2|g|^2\alpha\uu{z}}{s+2|g|^2\alpha\uu{z}}
\\ &=
 \frac{1}{s^2-(2|g|^2)^2}
   \left[
    \begin{array}{cc:cc} 
        s & & -2|g|^2 & \\      
        & s & & 2|g|^2 \\ \hdashline
        -2|g|^2 & & s & \\
        &  2|g|^2& & s
    \end{array}
  \right]\uu{2},
\end{align}\normalsize
\end{subequations}
from which it is easy to see that 
\begin{subequations}
\label{diractfd}
\small\begin{align}
 P \ w\uu{(+)}\dd{\uparrow(\downarrow)}
&=
 \frac{s-2|g|^2}{s+2|g|^2} \ w\uu{(+)}\dd{\uparrow(\downarrow)},
\\
 P \ w\uu{(-)}\dd{\uparrow(\downarrow)}
&=
 \frac{s+2|g|^2}{s-2|g|^2} \ w\uu{(-)}\dd{\uparrow(\downarrow)}.
\end{align}\normalsize
\end{subequations}
The system is stable in the positive-energy subspace,
whereas it is unstable in the negative-energy subspace.

This fermionic system has
the pole-zero symmetry as well.\index{pole-zero symmetry}
From (\ref{weyltfd}) or (\ref{diractfd}),
the poles and transmission zeros are given as
\begin{subequations}
\small\begin{align}
\mbox{\normalsize poles: \small}& \
 \pole = \{2|g|^2,-2|g|^2\},
\\
\mbox{\normalsize zeros: \small}& \
 \zero = \{2|g|^2,-2|g|^2\},
\end{align}\normalsize
\end{subequations}
which satisfy
\small\begin{align}
 \pole = -\zero.
\end{align}\normalsize

\chapter{Interconnections of systems}
\label{chap:inter}
\thispagestyle{fancy}

We introduce two types of interconnections of systems:
cascade and feedback connections.
In the cascade case, 
multiple systems interact with each other unidirectionally,
whereas
they interact bi-directionally in the feedback connections.

\section{Cascade connections}

\begin{wrapfigure}[0]{r}[53mm]{49mm} 
\centering
\vspace{-5mm}
\includegraphics[keepaspectratio,width=37mm]{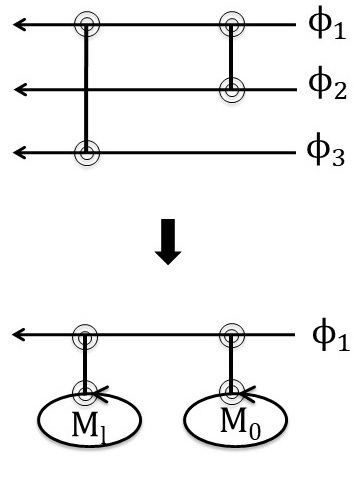}
\caption{
\small
Cascade of SU(2) systems.
\normalsize
}
\label{fig-cassys-1}
\end{wrapfigure}

The Lagrangians of cascaded systems are obtained 
in the same way as single systems.
As an example,
let us consider two SU(2) systems
as in Figure \ref{fig-cassys-1}.
The Lagrangian before closing the loops
has been given in Section \ref{sec:su2su2noncom}:
\begin{subequations}
\small\begin{align}
 \lag\urm{f}
&=
 \sum_{\alpha=1,2,3}\lag\dd{\alpha}\urm{f}
\\
 \lag\urm{int}
&=
 - 
  2 \im\delta(z) \mm{\phi}\dgg \, \lie \, \mm{\phi},
\end{align}\normalsize
\end{subequations}
where 
\small\begin{align}
 \mm{\phi}
=
 \BV{\mm{\phi}_1}
    {\mm{\phi}_2}
    {\mm{\phi}_3},
\qquad
  \lie
\equiv 
 \B{0}{g_0}{g\dd{l}}
   {-g_0}{0}{g_0g\dd{l}}
   {-g\dd{l}}{-g_0g\dd{l}}{0}.
\end{align}\normalsize
The two systems \en{ \mas_0 } and \en{ \mas\dd{l} } are 
defined as
\small\begin{align}
 \mm{\phi}
\Rightarrow
 \phi\dd{\mas}
\equiv
 \BV{\mm{\phi}_1}
    {\mas_0}
    {\mas\dd{l}}.
\end{align}\normalsize
The resulting Lagrangian is given as
\begin{subequations}
\label{ln0lag}
\small\begin{align}
 \lag\urm{f}_{\mas}
&= 
 \lag_1\urm{f}
+ 
 \im \delta(z)
\Bigl[
 \mas_0\dgg \dot{\mas}_0 
+
 \mas\dd{l}\dgg \dot{\mas}\dd{l}
\Bigr],
\\
 \lag\urm{SU(2)+SU(2)}\dd{\mas_0 + \mas\dd{l}} 
&=
 -2\im\delta(z)
 \phi\dd{\mas}\dgg \, \lie \, \phi\dd{\mas}
\\ &=
 -2\im\left[
 \mm{\phi}_1\dgg (g_0 \mas_0 + g\dd{l}\mas\dd{l})
+
 g_0g\dd{l} \mas_0\dgg  \mas\dd{l}
-\hc
\right].
\label{su2su2lag}
\end{align}\normalsize
\end{subequations}

\newpage

The input-output relation of this cascade system
is given by the Euler-Lagrange equation
\begin{subequations}
\small\begin{align}
 \frac{d}{dt}\AV{\mas_0}{\mas\dd{l}}
&=
 \A{-2g_0^2}{0}{-4g_0g\dd{l}}{-2g\dd{l}^2}
 \AV{\mas_0}{\mas\dd{l}}
+
 \AV{-2g_0}{-2g\dd{l}}
 \phi\drm{1,in},
\\
 \phi\drm{1,out}
&=
\hspace{9mm} \AH{2g_0}{2g\dd{l}} \AV{\mas_0}{\mas\dd{l}}
+
\hspace{16mm}  \phi\drm{1,in}.
\end{align}\normalsize
\end{subequations}
In the frequency domain,
we have
\begin{subequations}
\label{su2su2}
\small\begin{align}
 \phi\drm{1,out}
&=
 \dtf{\nA{-2g_0^2}{0}{-4g_0g\dd{l}}{-2g\dd{l}^2}}
    {\nAV{-2g_0}{-2g\dd{l}}}
    {\nAH{2g_0}{2g\dd{l}}}
    {1}
 \phi\drm{1,in}
\\ &=
 \dtf{-2g\dd{l}^2}{-2g\dd{l}}{2g\dd{l}}{1}
 \dtf{-2g_0^2}{-2g_0}{2g_0}{1}
 \phi\drm{1,in},
\end{align}\normalsize
\end{subequations}
where we have used (\ref{2.13}) in the second line.
This is the cascade of two SU(2) systems, 
as expected.

\section{Feedback connections}
\label{sec:fc}

\begin{wrapfigure}[0]{r}[53mm]{49mm} 
\centering
\vspace{-20mm}
\includegraphics[keepaspectratio,width=28mm]{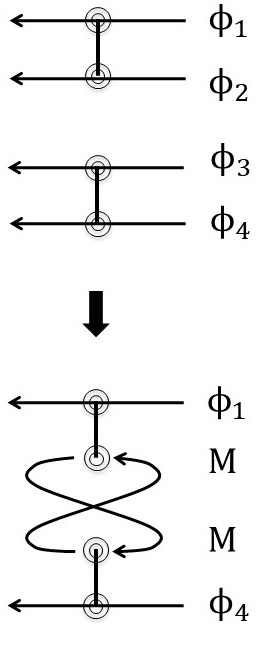}
\caption{
\small
(Upper) two SU(2) gates.
(Lower) feedback connection.
\normalsize
}
\label{fig-fbconnection1}
\end{wrapfigure}

In this section,
we introduce feedback connections of multiple quantum gates.
Unlike the cascade connection,
the flow of signals is bidirectional in this case.
We first consider a single feedback connection of two gates,
and then, 
examine the continuum limit of multiple feedback connections.

\subsection{Single feedback connection}
\label{sec:singlefb}

\begin{wrapfigure}[0]{r}[53mm]{49mm} 
\centering
\vspace{45mm}
\includegraphics[keepaspectratio,width=40mm]{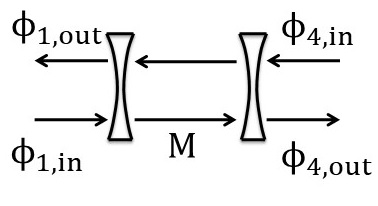}
\caption{
\small
Chain-scattering representation of the feedback connection
(see Appendix \ref{csr}.)
\normalsize
}
\label{fig-fbconnection1-c}
\end{wrapfigure}

Suppose that two SU(2) gates are located at \en{ z=0 }
as in Figure \ref{fig-fbconnection1}.
This is described by 
\begin{subequations}
\small\begin{align}
 \lag\urm{f}
&=
 \lag_1\urm{f} +  \lag_2\urm{f} +  \lag_3\urm{f} +  \lag_4\urm{f},
\\
 \lag\urm{int}
&=
 - 
  2 \im\delta(z) \mm{\phi}\dgg \, \lie \, \mm{\phi},
\end{align}\normalsize
\end{subequations}
where
\small\begin{align}
 \phi
=
 \left[
  \begin{array}{c} 
    \phi_1 \\
    \phi_2 \\ \hdashline
    \phi_3 \\
    \phi_4
    \end{array}
 \right],
\hspace{5mm}
 \lie 
=
   \left[
    \begin{array}{cc:cc} 
         &  g_0 &  & \\      
       -g_0\uu{*} &  & & \\ \hdashline
         & &  & g_1 \\
         & &  -g_1\uu{*}& 
    \end{array}
  \right]. 
\end{align}\normalsize

A feedback connection 
is defined by making a closed loop \en{ \mas } as
\small\begin{align}
 \mas
\equiv
 \phi\drm{2,in}  = \phi\drm{3,in} = 
 \phi\drm{2,out} = \phi\drm{3,out}.
\end{align}\normalsize
This leads to the modification of the field vector
\small\begin{align}
 \mm{\phi}
=
 \left[
  \begin{array}{c} 
    \mm{\phi}_1 \\
    \mm{\phi}_2 \\ \hdashline
    \mm{\phi}_3 \\
    \mm{\phi}_4
    \end{array}
 \right]
\Rightarrow
 \left[
  \begin{array}{c} 
    \mm{\phi}_1 \\
    \mas \\ \hdashline
    \mas \\
    \mm{\phi}_4
    \end{array}
 \right].
\label{fbcondef}
\end{align}\normalsize
Note that this vector is redundant.
In fact, 
\begin{subequations}
\small\begin{align}
\left[
  \begin{array}{cc:cc} 
    \mm{\phi}_1\dgg & \mas\dgg & \mas\dgg & \mm{\phi}_4\dgg
    \end{array}
 \right]
   \left[
    \begin{array}{cc:cc} 
         &  g_0 &  & \\      
       -g_0\uu{*} &  & & \\ \hdashline
         & &  & g_1 \\
         & &  -g_1\uu{*}& 
    \end{array}
  \right]
 \left[
  \begin{array}{c} 
    \mm{\phi}_1 \\
    \mas \\ \hdashline
    \mas \\
    \mm{\phi}_4
    \end{array}
 \right]&
\\ =
\left[
  \begin{array}{ccc} 
    \mm{\phi}_1\dgg & \mas\dgg & \mm{\phi}_4\dgg
    \end{array}
 \right]
   \left[
    \begin{array}{ccc} 
         &  g_0 &   \\      
       -g_0\uu{*} & 0  & g_1 \\ 
          &  -g_1\uu{*} & 
    \end{array}
  \right]
 \left[
  \begin{array}{c} 
    \mm{\phi}_1 \\
    \mas \\
    \mm{\phi}_4
    \end{array}
 \right]&.
\end{align}\normalsize
\end{subequations}
Let us define 
\small\begin{align}
 \phi\dd{\mas}
&\equiv
 \BV{\mm{\phi}_1}
    {\mas}
    {\mm{\phi}_4},
\quad
 \lie_T
\equiv
 \B{}{g_0}{}
   {-g_0\uu{*}}{0}{g_1}
   {}{-g_1\uu{*}}{}.
\end{align}\normalsize
Then the feedback connection is described by a Lagrangian
\begin{subequations}
\label{mimol}
\small\begin{align}
 \lag\urm{f}
&=
 \lag_1\urm{f} + \lag_4\urm{f}
+
 \im \delta(z)
 \mas\dgg \dot{\mas}
\\ 
 \lag\urm{SU(2)}\dd{\mas} 
&=
 -2 \im \delta(z)
 \phi\dd{\mas}\dgg \, \lie_T \, \phi\dd{\mas}
\\ &=
 -2 \im \delta(z)
\left\{
 \AH{\mm{\phi}_1\dgg}{\mm{\phi}_4\dgg} 
 \AV{g_0}{-g_1\uu{*}}
 \mas
+
 \mas\dgg
 \AH{-g_0\uu{*}}{g_1}
 \AV{\mm{\phi}_1}{\mm{\phi}_4} 
\right\}.
\end{align}\normalsize
\end{subequations}

\begin{wrapfigure}[0]{r}[53mm]{49mm} 
\centering
\vspace{-30mm}
\includegraphics[keepaspectratio,width=28mm]{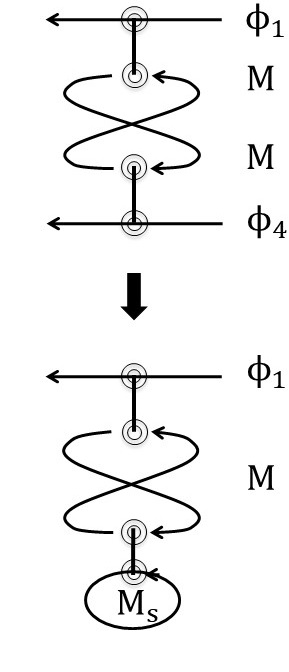}
\caption{
\small
Termination.
\normalsize
}
\label{fig-fbconnection2}
\end{wrapfigure}

The input-output relation of this system
is given by the Euler Lagrange equation as
\begin{subequations}
\small\begin{align}
 \dot{\mas}
&= 
 \left(-2|g_0|^2 -2|g_1|^2 \right)\mas
 + 
 \AH{-2g_0\uu{*}}{2g_1} 
 \AV{\phi_1}
    {\phi_4}\drm{in},
\\ 
 \AV{\phi_1}
    {\phi_4}\drm{out}
&=
\hspace{13mm}
 \AV{2g_0}
    {-2g_1\uu{*}}
 \mas
+ 
\hspace{26mm}
 \AV{\phi_1}
    {\phi_4}\drm{in}.
\end{align}\normalsize
\end{subequations}
In the frequency domain, this is expressed as
\small\begin{align}
 \AV{\phi_1}
    {\phi_4}\drm{out}
&=
 \dtf{-2|g_0|^2-2|g_1|^2}{\nAH{-2g_0\uu{*}}{2g_1}}
    {\nAV{2g_0}{-2g_1\uu{*}}}{\nA{1}{\quad 0}{0}{\quad 1}}
 \AV{\phi_1}
    {\phi_4}\drm{in}.
\label{mimotf}
\end{align}\normalsize

\begin{wrapfigure}[0]{r}[53mm]{49mm} 
\centering
\vspace{-5mm}
\includegraphics[keepaspectratio,width=40mm]{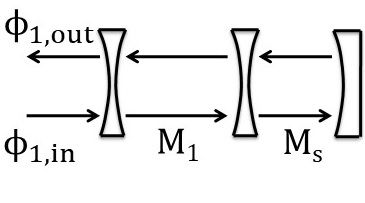}
\caption{
\small
Chain-scattering representation of the termination
(see Appendix \ref{csr}.)
\normalsize
}
\label{fig-fbconnection2-c}
\end{wrapfigure}

It is also possible to add an extra closed loop 
as in Figure \ref{fig-fbconnection2}.
It is defined by
\small\begin{align}
 \mas\drm{s}
\equiv
 \phi\drm{4,out} = \phi\drm{4,in},
\end{align}\normalsize
This is called a termination in systems theory.
In this case,
the interaction Lagrangian is obtained 
by defining \en{ \phi\dd{\mas} } as 
\small\begin{align}
 \phi\dd{\mas}
&\equiv
 \BV{\mm{\phi}_1}
    {\mas}
    {\mas\dd{s}}.
\end{align}\normalsize

\newpage

\subsection{A continuum limit of feedback connections}
\label{sec:casfeedback}

\begin{wrapfigure}[0]{r}[53mm]{49mm} 
\centering
\vspace{-5mm}
\includegraphics[keepaspectratio,width=27mm]{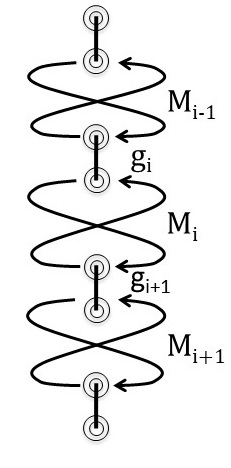}
\caption{
\small
Cascade of feedback connections.
\normalsize
}
\label{fig-fbconnection-lim}
\end{wrapfigure}

Suppose that 
\en{ (N+1) }-SU(2) gates are connected 
as in Figure \ref{fig-fbconnection-lim}.
Before closing loops,
the Lagrangian is given by 
\begin{subequations}
\small\begin{align}
 \lag\urm{f}
&=
 \sum\dd{\alpha=0}\uu{N}
 \lag\dd{\alpha}\urm{f}
\\
 \lag\urm{int}
&=
 - 
  2 \im \mm{\phi}\dgg \, \lie \, \mm{\phi},
\end{align}\normalsize
\end{subequations}
where
\small\begin{align}
 \phi
=
 \left[
  \begin{array}{c} 
    \phi_0 \\
    \phi_1 \\ \hdashline
    \vdots \\
    \vdots \\ \hdashline
    \phi\dd{2N} \\
    \phi\dd{2N+1}
    \end{array}
 \right],
\hspace{5mm}
 \lie 
=
   \left[
    \begin{array}{cc:cc:cc} 
         &  g_0 &  & & &\\      
       -g_0 &  & 0 & & & \\ \hdashline
         & 0 & \ddots & & & \\
         & & & \ddots & 0 & \\ \hdashline
         & & & 0 &  & g_N \\
         & & & &  -g_N & 
    \end{array}
  \right]. 
\end{align}\normalsize
where \en{ g\dd{n} } are real.
The feedback connection is defined 
in the same way as the single feedback connection:
\small\begin{align}
  \mm{\phi}
=
 \left[
  \begin{array}{c} 
    \mm{\phi}_0 \\
    \mm{\phi}_1 \\ \hdashline
    \mm{\phi}_2 \\
    \vdots \\
    \mm{\phi}_{2N-1} \\ \hdashline
    \mm{\phi}\dd{2N} \\
    \mm{\phi}\dd{2N+1}
    \end{array}
 \right]
\Rightarrow
 \left[
  \begin{array}{c} 
    \mm{\phi}_0 \\
    \mas_1 \\ \hdashline
    \mas_1 \\
    \vdots \\
    \mas_N \\ \hdashline
    \mas_N \\
    \mm{\phi}_{2N+1}
    \end{array}
 \right].
\end{align}\normalsize

\begin{wrapfigure}[0]{r}[53mm]{49mm} 
\centering
\vspace{-40mm}
\includegraphics[keepaspectratio,width=47mm]{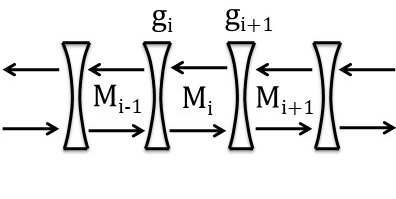}
\caption{
\small
Chain-scattering representation of 
the Cascade of feedback connections
(see Appendix \ref{csr}.)
\normalsize
}
\label{fig-fbconnection-lim-c}
\end{wrapfigure}

Again, this vector is redundant.
Let us define
\small\begin{align}
 \mas
\equiv
 \left[
 \begin{array}{c}
  \mas_1 \\
  \mas_2 \\
  \vdots \\
  \mas_N
 \end{array}
 \right].
\end{align}\normalsize
Then the feedback loops are described by a Lagrangian
\begin{subequations}
\small\begin{align}
 \lag\urm{f}
&=
 \mas\dgg \im\partial\dd{t}\mas,
\\
 \lag\urm{SU(2)}\dd{\mas}
&=
 -2\im \mas\dgg T \mas,
\label{casfeedlag}
\end{align}\normalsize
\end{subequations}
where \en{ T } is a Toeplitz matrix of the form
\small\begin{align}
 T
&=
\left[
  \begin{array}{ccccc}
     0 &  g_1  &       &        & \\ 
  -g_1 &  0    & g_2   &        & \\ 
       & -g_2  & 0     & \ddots & \\ 
       &       & \ddots& \ddots & g_{N-1}\\
       &       &       & -g_{N-1} & 0
  \end{array}
\right].
\end{align}\normalsize

Now let us consider a continuum limit.
The interaction Lagrangian (\ref{casfeedlag}) is rewritten as
\begin{subequations}
\small\begin{align}
  \lag\urm{SU(2)}\dd{\mas}
=&
 -2\im \sum\dd{n}
 g\dd{n}
(
 \mas\dd{n}\dgg \mas\dd{n+1}
-
 \mas\dd{n+1}\dgg \mas\dd{n}
)
\\ \to&
 -2\im \int
 g(z)
\left[
 \mas\dgg(z) \mas(z+dz)
-
 \mas\dgg(z+dz) \mas(z)
\right]
\\ =&
 -2\im \int
 g
\,
 W(\mas\dgg,\mas) \, dz
\end{align}\normalsize
\end{subequations}
where \en{ W } is a Wronskian defined as
\small\begin{align}
 W(a,b)
&=
 a(\partial\dd{z} b)-(\partial\dd{z} a) b.
\end{align}\normalsize
As a result,
the loops are described by a Lagrangian
\begin{subequations}
\label{splag}
\small\begin{align}
 \lag 
&=
 \int dz \
\Bigl[
 \mas\dgg \im\partial\dd{t} \mas - 2\im g W(\mas\dgg,\mas)
\Bigr]
\\ &=
 \int dz \
\Bigl[
 \mas\dgg \im(\partial\dd{t}-4g \partial\dd{z}) \mas - \mas\dgg 2(\im \partial\dd{z}g)  \mas 
\Bigr],
\end{align}\normalsize
\end{subequations}
where we have used integration by parts in the second line.
It follows from the Euler Lagrange equation that 
\small\begin{align}
 (\partial\dd{t}-4g\partial\dd{z})\mas
&= 2(\partial\dd{z} g) \mas,
\end{align}\normalsize
which describes the traveling field \en{ \mas } in a potential 
created by the non-uniform SU(2) coupling parameter \en{ g(z) }.
If \en{ g } is a negative (positive) constant,
this is simply a forward (backward) traveling equation.

\chapter*{}

\begin{center}
\begin{Large}
\begin{bfseries}
\textsf{Part III}
\\
\vspace{10mm}
\textsf{Quantum formulation of gates and systems}
\end{bfseries}
\end{Large}
\end{center}

\chapter{Quantum gates via \textit{S}-matrices}
\label{chap:fqf}
\thispagestyle{fancy}

In the classical approach,
the input-output relations 
have been obtained from the Euler-Lagrange equation.
In this chapter,
we show the same input-output relations
using \textit{S}-matrices. 
This approach is useful 
to calculate transfer functions between arbitrary fields.
For example,
we have calculated an input-output relation in a classical way as
\small\begin{align}
 \phi\drm{out} = \tf \phi\drm{in}.
\label{inoutintro}
\end{align}\normalsize
When we consider feedback,
we need a transfer function 
from \en{ \phi\drm{out} } to \en{ \phi\drm{in} } 
(\en{ \phi\drm{out} \to \phi\drm{in} }):
\small\begin{align}
 \phi\drm{in} = X \phi\drm{out}.
\end{align}\normalsize
How can we express \en{ X }?
It is not simply given by \en{ X=\tf\inv }.
It is important to note that 
\en { \tf } is a transfer function
\en{ \phi\drm{out} \gets \phi\drm{in} }
under the condition that 
a signal propagates 
\en{ \phi\drm{out} \gets \phi\drm{in} }.
Quantum systems are energy-preserving, or time-reversal,
hence if a signal propagates 
\en{ \phi\drm{out} \to \phi\drm{in} },
we have
\small\begin{align}
 \phi\drm{in} = \tf \phi\drm{out}.
\end{align}\normalsize
However,
what we want to know is 
a transfer function 
\en{ \phi\drm{out} \to \phi\drm{in} }
when a signal propagates 
\en{ \phi\drm{out} \gets \phi\drm{in} }.
A correct form of \en{ X } is obtained 
from \textit{S}-matrices
under an appropriate definition of transfer functions.

\section{Preliminaries}
\label{sec:pre}

We first introduce mathematical tools
such as Wick's theorem and the Feynman digram,
and then,
illustrate a procedure 
to calculate transfer functions.

\subsection{\textit{S}-matrices}

For quantum gates, circuits and systems, 
Lagrangians have been given as
\small\begin{align}
 L(\phi)
&=
 L\urm{f}(\phi) + L\urm{int}(\phi).
\end{align}\normalsize
\en{ L\urm{f} } is a free field Lagrangian 
that is always the same form 
for which  transfer functions are known.
\en{ L\urm{int}(\phi) } is an interaction Lagrangian 
that is treated as a perturbation.
To calculate transfer functions under \en{ L\urm{int} },
we introduce 
the \textit{interaction picture}\index{interaction picture}
that is a technique to express 
\en{ L\urm{int} } as a function of \en{ L\urm{f} }.

\newpage

Consider an operator of the form \en{ A(x)=A(\phi(x)) }.
Denoted by \en{ A\urm{X} } is 
its evolution under a Lagrangian \en{ L\urm{X} }.
For example,
\en{ \phi\urm{f} } is the field operator
evolving under the free field Lagrangian \en{ L\urm{f} }.
In the interaction picture, 
the evolution of \en{ A } is written as
\small\begin{align}
 A\urm{f + int}(x)
&=
 \uni\dgg(t,t_0) \ A\urm{f}(x) \ \uni(t,t_0),
\label{intuu-1}
\end{align}\normalsize
where \en{ \uni } obeys
\small\begin{align}
 \partial\dd{t} \uni(t,t_0)
&=
 \im L\urm{int}(\phi\urm{f}) \ \uni(t,t_0).
\label{intuu}
\end{align}\normalsize
Note that 
the RHS of (\ref{intuu-1}) is written only by \en{ \phi\urm{f} },
which means that 
\en{ A\urm{f + int} } can be expressed 
as a function of \en{ \phi\urm{f} } 
by expanding \en{ \uni }.
Since \en{ L\urm{int} } is a function of time
\en{ L\urm{int}(t) \equiv \int d\bm{x} \lag\urm{int}(\phi\urm{f}(x)) },
the evolution operator \en{ \uni } is expressed as
\small\begin{align}
 \uni(t,t_0)
=& \
 1 + \frac{\im}{1!}\int\dd{t_0}\uu{t} dt_1 \ L\urm{int}(t_1)
+ \frac{(\im)^2}{2!}\int\dd{t_0}\uu{t} dt_1 \int\dd{t_0}\uu{t_1} dt_2 \
 L\urm{int}(t_1) L\urm{int}(t_2)
+\cdots
\nn\\ =& \
 1+\frac{\im}{1!}\int\dd{t_0}\uu{t} dt_1 \ L\urm{int}(t_1)
+ \frac{(\im)^2}{2!}\int\dd{t_0}\uu{t} dt_1 \int\dd{t_0}\uu{t} dt_2 \
 \T \left[L\urm{int}(t_1) L\urm{int}(t_2) \right]
+\cdots
\nn\\ \equiv & \ 
 \T \exp \left[ \im \int\dd{t_0}\uu{t} dt' L\urm{int}(t') \right],
\label{spand}
\end{align}\normalsize
where \en{ \T } is 
the time-ordering operator\index{time-ordering operator}.
As \en{ t_0\to -\infty } and \en{ t\to \infty },
this operator is said to be 
an \textit{S}-matrix\index{\textit{S}-matrix}:
\small\begin{align}
 S
\equiv
 \uni(\infty,-\infty)
&=
 \T \exp \left[
 \im \int d^4x \ \lag\urm{int}
 \right].
\end{align}\normalsize

\subsection{Transfer function}

A transfer function under \en{ L\urm{f}+L\urm{int} }
is defined in the same way as Definition \ref{def:tf}:
\small\begin{align}
 \ipg\dd{A|B}\urm{f+int}(x_4,x_1)
\equiv
 \bra{\bm{0}} \T A\urm{f+int}(x_4) 
                 B\urm{f+int $\ddagger$}(x_1)\ket{\bm{0}}.
\label{deftflint}
\end{align}\normalsize
where \en{ \ket{\bm{0}} } is 
the vacuum state of \en{ L\urm{f}+L\urm{int} }.
Our purpose here is to rewrite (\ref{deftflint}) 
as a function of \en{ \phi\urm{f} }.

We first note that 
\en{ \ket{\bm{0}} } is different 
from the vacuum state \en{ \ket{0} } of \en{ L\urm{f} }
because different Lagrangians have different lowest energy states.
They are related to each other as
\small\begin{align}
 \ket{\bm{0}}
&\equiv
 \frac{\uni(0,-\infty)}{\sqrt{\mzero{S}}}\ket{0},
\label{vacint}
\end{align}\normalsize
where 
\en{ \uni(\infty,0)\uni(0,-\infty)=S }
has been used for normalization.
Using (\ref{intuu-1},\ \ref{vacint}), 
we can express the transfer function as
\begin{subequations}
\label{ptf}
\small\begin{align}
 \ipg\dd{A|B}\urm{f+int}(x_4,x_1)
&=
 \frac{
 \mzero{ \T \
 \uni(\infty,t_4) \ A\urm{f}(x_4) \ \uni(t_4,t_1) \ B\urm{f $\ddagger$}(x_1) \ \uni(t_1,-\infty)
 }}
 {\mzero{S}}
\\ &=
 \frac{\mzero{\T A\urm{f}_4 B\urm{f $\ddagger$}_1 S}}{\mzero{S}},
\end{align}\normalsize
\end{subequations}
which is a function of \en{ \phi\urm{f} }, 
as desired.
To express the RHS
with \en{ \mzero{ \T \phi\urm{f} \phi\urm{f $\ddagger$}} },
we need one more step, i.e., Wick's theorem.

\subsection{Wick's theorem}

Wick's theorem\index{Wick's theorem}
is used to extract contractions
from time-ordered products of operators.
For arbitrary operators \en{ A_1,\cdots, A\dd{n} },
we have the following identity:
\small\begin{align}
 \T (A_1A_2\cdots A\dd{n})
&= \
 :A_1A_2\cdots A\dd{n}: 
\nn\\ &\hspace{3.5mm} 
+
 \contraction{}{\phi}{\hspace{3mm}}{\hspace{0mm}}
 A_1A_2
:A_3\cdots A\dd{n}:
+ \
 \contraction{}{\phi}{\hspace{3mm}}{\hspace{0mm}}
 A_1A_3
:A_2\cdots A\dd{n}:
+\cdots
\nn\\ &\hspace{-5mm} +
 \contraction{}{\phi}{\hspace{3mm}}{\hspace{0mm}}
 A_1A_2
 \contraction{}{\phi}{\hspace{3mm}}{\hspace{0mm}}
 A_3A_4
:A_5\cdots A\dd{n}:
+ \ 
 \contraction{}{\phi}{\hspace{3mm}}{\hspace{0mm}}
 A_1A_3
 \contraction{}{\phi}{\hspace{3mm}}{\hspace{0mm}}
 A_2A_4
:A_5\cdots A\dd{n}: 
+\cdots,
\end{align}\normalsize
where \en{ : \ : } represents 
the normal ordering operation\index{normal ordering operation}
in which 
all \en{ \phi\dgg } are placed to the left of all \en{ \phi } 
in the product.
For example,
\en{  :\phi_1 \phi\dgg_2 \phi_3: \
=
 \phi\dgg_2 \phi_1 \phi_3,
 }
where 
the order of \en{ \phi_1=\phi(x_1) } and \en{ \phi_3=\phi(x_3) } 
does not matter because they commute with each other.

The vacuum expectation of the normal-ordered operators is always zero:
\small\begin{align}
 \mzero{: A_1A_2\cdots A\dd{n}:}&=0.
\end{align}\normalsize
Hence only completely contracted terms remain 
after taking the vacuum expectation of the time-ordered product:
\small\begin{align}
 \mzero{\T  (A_1\cdots A\dd{n}) }
&=
 \contraction{}{\phi}{\hspace{3mm}}{\hspace{0mm}}
 A_1A_2
 \contraction{}{\phi}{\hspace{3mm}}{\hspace{0mm}}
 A_3A_4
\cdots
 \contraction{}{\phi}{\hspace{6mm}}{\hspace{0mm}}
 A_{n-1}A\dd{n}
+
 \cdots. 
\end{align}\normalsize

\subsection{Feynman diagram}
\label{sec:feydiag}

Feynman diagrams\index{Feynman diagram} 
are used to visualize contractions.
First, 
a four-position \en{ x } is represented by a vertex.
For forward traveling fields,
\en{ \phi(x) } and \en{ \phi\dgg(x) } are, 
respectively, 
depicted by incoming and outgoing dashed arrows,
and they can be connected to each other,
which results in a contraction.
These are summarized as follows:
\begin{subequations}
\small\begin{align}
 \phi(x) \ &= \quad  \bullet \hspace{-1mm} \dashleftarrow
\\
 \phi\dgg(x) \ &= \quad \ \dashleftarrow \hspace{-1mm}\bullet
\\
 \ipg\dd{\phi | \phi}
=
 \wick{\phi}{\phi\dgg}
=
 \contraction{}{\phi}{\hspace{1mm}}{\hspace{0mm}}
 \phi\phi\dgg
\ &= \quad
 \bullet \hspace{-1mm}  \dashleftarrow \hspace{-1mm} \bullet
\end{align}\normalsize
\end{subequations}

We do not prove here,
but 
we need to consider only connected diagrams
of the numerator in (\ref{ptf})
because disconnected diagrams are canceled out 
by the denominator.

\subsection{Summary}
\label{sec:summary}

Eventually, 
\en{ \ipg\urm{f+int} } is obtained
as a function of \en{ \ipg\urm{f} } as follows:
\bee
\setlength{\itemsep}{0mm} 
\item Calculate all free field (unperturbed) transfer functions 
      \en{ \ipg\urm{f} }.
\item Expand the \textit{S}-matrix in the numerator of (\ref{ptf}).
\item Draw Feynman diagrams for each term of the expansion.
\item Select connected diagrams.
\item Plug \en{ \ipg\urm{f} } obtained in Step-1 
      into the selected diagrams.
\ee
In the next section,
we demonstrate this procedure for quantum gates in detail.

\newpage

\section{Transfer functions of unitary gates}

As an exercise of the preceding section,
quantum gates are examined.
We assume that all gates are \textit{lumped}
for which gauge fields are static.
In this case, 
the interaction Lagrangian is quadratic
and the input-output relation
results in the same form as the gauge transformation.
Recall that 
unitary gates are described by \en{ \phi } alone,
whereas \en{ \Phi } is necessary for non-unitary gates:
\begin{subequations}
\small\begin{align}
 \phi\ddgg
&=
 \phi\dgg,
\\
 \Phi\ddgg
&=
 \AV{\phi}{\phi\dgg}\ddgg
=
 \AH{\phi\dgg}{-\phi}
=
 \Phi\dgg \sigma\dd{z}.
\end{align}\normalsize
\end{subequations}
We consider the unitary case first.
According to Section \ref{sec:summary},
we start with the calculation of free field transfer functions.

\subsection{Free field transfer functions}
\label{sec:unp}

\begin{wrapfigure}[0]{r}[53mm]{49mm} 
\vspace{0mm}
\centering
\includegraphics[keepaspectratio,width=37mm]{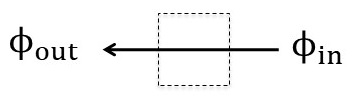}
\caption{
\small
Input and output of a quantum gate placed at $z$.
If the gate is empty, the input and the output are, respectively, defined by 
$\phi_{\textrm{in}}=\phi(z-0)$
and $\phi_{\textrm{out}}=\phi(z+0)$.
\normalsize
}
\label{fig-no-operation-1}
\end{wrapfigure}

Consider a lumped gate 
that is localized at a single point in space
as in Figure \ref{fig-no-operation-1}.
The input is a forward traveling field before 
entering the gate,
and the output is the same field
after coming out of the gate.

\begin{lemma}
\label{lem:freefield}
Assume that a single-input and single-output gate is empty.
The (free field) transfer functions of this gate are given by
\begin{subequations}
\small\begin{align}
 \ipg\dd{\phi\dd{\mathrm{in}}|\phi\dd{\mathrm{in}}} 
&=
 \wick{\phi\dd{\mathrm{in}}}{\phi\dd{\mathrm{in}}\dgg}
=
1,
\\
 \ipg\dd{\phi\dd{\mathrm{out}}|\phi\dd{\mathrm{in}}} 
&=
 \wick{\phi\dd{\mathrm{out}}}{\phi\dd{\mathrm{in}}\dgg}
=
1,
\\
 \ipg\dd{\phi\dd{\mathrm{in}}|\phi\dd{\mathrm{out}}} 
&=
 \wick{\phi\dd{\mathrm{in}}}{\phi\dd{\mathrm{out}}\dgg}
=
 -1,
\\
 \ipg\dd{\phi\dd{\mathrm{out}}|\phi\dd{\mathrm{out}}} 
&=
 \wick{\phi\dd{\mathrm{out}}}{\phi\dd{\mathrm{out}}\dgg}
=
1.
\end{align}\normalsize
\end{subequations}
\end{lemma}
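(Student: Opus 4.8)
The plan is to read off these four identities directly from the hyperfunction description of the static transfer function given in Section~\ref{sec:tf} (equations (\ref{fthrou-1})--(\ref{fthrou-2})), combined with the basic locality structure of the forward traveling field. First I would set up the spatial picture precisely: the empty gate sits at a point $z$, and by convention $\phi\drm{in} = \phi(z-0)$ is the field just upstream while $\phi\drm{out} = \phi(z+0)$ is the field just downstream; since the gate does nothing, $\phi\drm{in}$ and $\phi\drm{out}$ are the \emph{same} free traveling field evaluated at infinitesimally separated points. The transfer functions in question are, by Definition~\ref{def:tf}, contractions $\wick{A_2}{B\ddgg_1}$ with $\phi\ddgg=\phi\dgg$ (from (\ref{phisha})), so each of the four quantities is literally $\wick{\phi(z\pm 0)}{\phi\dgg(z\pm 0)}$, i.e.\ a static (time-independent, or more precisely equal-time) transfer function of the type analyzed in the ``Static transfer function'' subsection.

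The key computation is the $z\to z+0$ versus $z\to z-0$ case analysis. For $\ipg\dd{\phi\drm{out}|\phi\drm{in}}=\wick{\phi\drm{out}}{\phi\drm{in}\dgg}$ the probability amplitude is along the direction of propagation ($z\to z+0$), which is the causal solution; from (\ref{fthrou-1}) this equals $[\step\dd{\ima(\omega)>0}]$, whose inverse transform / time-domain value is $\delta(t)$, and evaluated as a static transfer function this is the constant $1$. The diagonal cases $\ipg\dd{\phi\drm{in}|\phi\drm{in}}$ and $\ipg\dd{\phi\drm{out}|\phi\drm{out}}$ are contractions of a field with its own conjugate at the same point, which is the degenerate $z\to z$ limit; appealing to the double-dagger/canonical commutation structure (Definition~\ref{def:dd}, equation (\ref{def:sha})) and the self-consistency of the static transfer function, these must also equal $1$. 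The remaining case $\ipg\dd{\phi\drm{in}|\phi\drm{out}}=\wick{\phi\drm{in}}{\phi\drm{out}\dgg}$ is the amplitude \emph{against} the direction of propagation ($z\to z-0$), which is the anticausal solution $[-\step\dd{\ima(\omega)<0}]$ from (\ref{fthrou-2}), giving the constant $-1$. So the sign flip in the third identity is exactly the statement, emphasized after (\ref{fthrou-2}), that ``the sign of the transfer function changes if we reverse the direction of propagation of the field.''

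An alternative, cleaner route I would keep in reserve is to derive all four from the operator ordering property (\ref{prop:ordersym}), $\wick{A_2}{B_1}=\wick{B_1}{A_2}$, together with the anticommutation-free (bosonic) structure: writing $\phi\drm{out}=\phi\drm{in}$ as operators (empty gate), the contraction $\wick{\phi\drm{out}}{\phi\drm{in}\dgg}$ reduces to $\mzero{\T \phi\phi\dgg}$ with the appropriate step-function ordering, and the normalization of the canonical commutator (\ref{fpcom}) fixes the value to $1$; the case $\wick{\phi\drm{in}}{\phi\drm{out}\dgg}$ then has the opposite time-ordering and picks up the minus sign, just as in the asymmetry relation (\ref{asym}). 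The main obstacle, such as it is, is not any computation but rather stating precisely \emph{why} the ``same point, same field'' contractions evaluate to the constant $1$ rather than to a nontrivial equal-time object: one must be careful about the $z\to z\pm 0$ regularization and make explicit that the static transfer function is defined as the boundary value in the appropriate half-plane (region of convergence), so I would devote the bulk of the write-up to pinning down that limiting convention and then the four identities follow by inspection.
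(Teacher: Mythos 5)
Your proposal is correct and follows essentially the same route as the paper: both read the four values off the causal/anticausal static transfer functions (\ref{fthrou-1})--(\ref{fthrou-2}), identify $\phi\drm{out}$ with $\phi\drm{in}$ across the empty gate (the paper phrases this as $\phi\drm{out}(t)=\lim_{\epsilon\to 0\dd{+}}\phi\drm{in}(t+\epsilon)$, you phrase it as the $z\pm 0$ limit, which is equivalent by unidirectionality), and obtain the sign flip in $\ipg\dd{\phi\drm{in}|\phi\drm{out}}$ from the anticausal solution or, equivalently, the asymmetry of Section \ref{lem:asymmetry}. Your reserve argument via (\ref{prop:ordersym}) and the commutator (\ref{fpcom}) is the same asymmetry mechanism the paper cites, so no essential difference remains.
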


\begin{proof}
Basically these relations result from Section \ref{sec:tf}.
It follows from (\ref{fthrou-1}) that 
the transfer function at a fixed point
can be written as
\small\begin{align}
 \ipg\dd{\phi\dd{\mathrm{in}}|\phi\dd{\mathrm{in}}} 
=
 \ipg\dd{\phi\dd{\mathrm{out}}|\phi\dd{\mathrm{out}}} =1.
\label{ngtf11}
\end{align}\normalsize
If no operations act on the field, 
the output \en{ \phi\drm{out} } is equivalent 
to the input \en{ \phi\drm{in} } 
in the forward direction of time:
\small\begin{align}
 \phi\drm{out}(t)
=
 \lim\dd{\epsilon\to 0\dd{+}} \phi\drm{in}(t+\epsilon).
\end{align}\normalsize
As a result, 
the transfer function 
from \en{ \phi\drm{in} } to \en{ \phi\drm{out} }
is the same as (\ref{ngtf11}):
\small\begin{align}
 \ipg\dd{\phi\drm{out} | \phi\drm{in}}
&=
 \ipg\dd{\phi\drm{in} | \phi\drm{in}}
=
 1.
\end{align}\normalsize
It also follows from (\ref{fthrou-2}),
or the asymmetry shown in Section \ref{lem:asymmetry}, 
that
\small\begin{align}
  \ipg\dd{\phi\drm{in}|\phi\drm{out}} 
=
 -1,
\label{unpas}
\end{align}\normalsize
which completes the assertion.
\end{proof}

\newpage

If we express an input-output relation as
\small\begin{align}
 \phi\dd{\alpha}
=
 \ipg\dd{\phi\dd{\alpha}|\phi\dd{\beta}} \,
 \phi\dd{\beta},
\label{inoutcont}
\end{align}\normalsize
we get two contradictory relations
from Lemma \ref{lem:freefield}:
\begin{subequations}
\small\begin{align}
 \phi\dd{\mathrm{out}}
&=
 \phi\dd{\mathrm{in}},
\label{cont-1} \\
 \phi\dd{\mathrm{in}}
&=
 -\phi\dd{\mathrm{out}}.
\label{cont-2}
\end{align}\normalsize
\end{subequations}
Note that 
\en{ \ipg\dd{\phi\dd{\alpha}|\phi\dd{\beta}} }
is not simply a transfer function 
from right to left: \en{ \phi\dd{\alpha}\gets\phi\dd{\beta} }.
It is a `conditional' transfer function.
For example,
\begin{subequations}
\begin{align}
 (\ref{cont-1}):
 \mbox{transfer function }
 \phi\drm{out} \gets \phi\drm{in}
 \mbox{ as the signal propagates }
 \phi\drm{out} \gets \phi\drm{in}.
\nn \\
 (\ref{cont-2}):
 \mbox{transfer function }
\phi\drm{in} \gets \phi\drm{out} 
 \mbox{ as the signal propagates }
 \phi\drm{out} \gets \phi\drm{in}.
\nn
\end{align}
\end{subequations}
Accordingly,
a transfer function \en{ \phi\dd{\beta}\gets\phi\dd{\alpha} }
is not simply given by 
\en{ (\ipg\dd{\phi\dd{\alpha}|\phi\dd{\beta}})\inv }.
This will be important when we consider feedback.

\begin{lemma}
\label{lem:mmtrans}
For 
\small\begin{align}
\mm{\phi}
\equiv
 \frac{\phi\dd{\mathrm{in}} + \phi\dd{\mathrm{out}}}{2},
\end{align}\normalsize
free field transfer functions are given as
\begin{subequations}
\label{outmmin}
\small\begin{align}
 \ipg\dd{\mm{\phi} |\phi\dd{\mathrm{in}}} 
&=
 \wick{\mm{\phi}}{\phi\dd{\mathrm{in}}\dgg}
=
1,
\\
 \ipg\dd{\phi\dd{\mathrm{out}}|\mm{\phi}} 
&=
 \wick{\phi\dd{\mathrm{out}}}{\mm{\phi}\dgg}
=
1,
\\
\ipg\dd{\mm{\phi} | \mm{\phi}}
&=
 \wick{\mm{\phi}}{\mm{\phi}\dgg}
=
 \frac{1}{2},
\\
 \ipg\dd{\mm{\phi} |\phi\dd{\mathrm{out}}} 
&=
 \wick{\mm{\phi}}{\phi\dd{\mathrm{out}}\dgg}
=0,
\\
 \ipg\dd{\phi\dd{\mathrm{in}}|\mm{\phi}} 
&=
 \wick{\phi\dd{\mathrm{in}}}{\mm{\phi}\dgg}
=
0.
\end{align}\normalsize
\end{subequations}
It follows from the asymmetry of the forward traveling field that 
\begin{subequations}
\label{outmmin2}
\small\begin{align}
 \ipg\dd{\mm{\phi}\dgg |\phi\dd{\mathrm{in}}\dgg } 
&=
 \wick{\mm{\phi}\dgg}{-\phi\dd{\mathrm{in}}}
=
1,
\\
 \ipg\dd{\phi\dd{\mathrm{out}}\dgg | \mm{\phi}\dgg} 
&=
 \wick{\phi\dd{\mathrm{out}}\dgg}{-\mm{\phi}}
=
1,
\\
\ipg\dd{\mm{\phi}\dgg | \mm{\phi}\dgg}
&=
 \wick{\mm{\phi}\dgg}{-\mm{\phi}}
=
 \frac{1}{2},
\\
 \ipg\dd{\mm{\phi}\dgg |\phi\dd{\mathrm{out}}\dgg } 
&=
 \wick{\mm{\phi}\dgg}{ -\phi\dd{\mathrm{out}}}
=0,
\\
 \ipg\dd{\phi\dd{\mathrm{in}}\dgg | \mm{\phi}\dgg } 
&=
 \wick{\phi\dd{\mathrm{in}}\dgg}{-\mm{\phi}}
=
0.
\end{align}\normalsize
\end{subequations}
A relationship between (\ref{outmmin}) and (\ref{outmmin2})
is expressed as
\small\begin{align}
 \wick{A\dgg}{B}
=
 -\wick{A}{B\dgg}\uu{*}.
\end{align}\normalsize
\end{lemma}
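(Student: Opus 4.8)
The plan is to establish the five identities of (\ref{outmmin}) directly from the definition \en{ \mm{\phi} = (\phi\drm{in}+\phi\drm{out})/2 } by linearity of the contraction, then do the same for (\ref{outmmin2}) using the asymmetry of the forward traveling field established in Section \ref{lem:asymmetry}, and finally read off the stated relationship \en{ \wick{A\dgg}{B} = -\wick{A}{B\dgg}\uu{*} } from comparing the two groups. Since the contraction \en{ \wick{\cdot}{\cdot} } is bilinear in its two arguments (it is built from vacuum expectations of products, see Definition \ref{def:tf} and (\ref{def:contraction})), every entry in (\ref{outmmin}) is a half-sum of the corresponding entries in Lemma \ref{lem:freefield}.

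First I would compute the five entries of (\ref{outmmin}). For \en{ \ipg\dd{\mm{\phi}|\phi\drm{in}} }, expand \en{ \mm{\phi} } to get \en{ \frac{1}{2}(\wick{\phi\drm{in}}{\phi\drm{in}\dgg} + \wick{\phi\drm{out}}{\phi\drm{in}\dgg}) = \frac{1}{2}(1+1) = 1 } by Lemma \ref{lem:freefield}. Similarly \en{ \ipg\dd{\phi\drm{out}|\mm{\phi}} = \frac{1}{2}(\wick{\phi\drm{out}}{\phi\drm{in}\dgg} + \wick{\phi\drm{out}}{\phi\drm{out}\dgg}) = \frac{1}{2}(1+1) = 1 }, and \en{ \ipg\dd{\mm{\phi}|\mm{\phi}} = \frac{1}{4}(1 + 1 + (-1) + 1) = \frac{1}{2} } where the \en{ -1 } comes from \en{ \wick{\phi\drm{in}}{\phi\drm{out}\dgg} }. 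The two vanishing entries, \en{ \ipg\dd{\mm{\phi}|\phi\drm{out}} = \frac{1}{2}(\wick{\phi\drm{in}}{\phi\drm{out}\dgg} + \wick{\phi\drm{out}}{\phi\drm{out}\dgg}) = \frac{1}{2}(-1 + 1) = 0 } and \en{ \ipg\dd{\phi\drm{in}|\mm{\phi}} = \frac{1}{2}(\wick{\phi\drm{in}}{\phi\drm{in}\dgg} + \wick{\phi\drm{in}}{\phi\drm{out}\dgg}) = \frac{1}{2}(1 + (-1)) = 0 }, are the conceptually important ones: they encode that \en{ \mm{\phi} } has no backward-conditional correlation with the output (resp. no forward-conditional correlation with the input past the gate). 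These are the relations that make the midpoint field the natural object in the Lagrangians of Chapter \ref{chap:boundary}.

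Next I would obtain (\ref{outmmin2}). The cleanest route is to note that, by the asymmetry relation (\ref{asym}) and the definition of the double dagger (\ref{phisha}), the daggered contractions \en{ \wick{A\dgg}{B} } (with \en{ B } a field, so \en{ B\ddgg = B\dgg } up to sign) are governed by the anticausal solution where the forward traveling field's sign flips; concretely, for each entry one uses \en{ \wick{A\dgg}{(B\dgg)\ddgg} = \wick{A\dgg}{-B} } from (\ref{phisha}), and then applies the conversion rule (\ref{asym}) together with Lemma \ref{lem:freefield} to get the same numerical values \en{ 1,1,\tfrac12,0,0 }. Alternatively — and this is probably what I would actually write — I would derive the master identity \en{ \wick{A\dgg}{B} = -\wick{A}{B\dgg}\uu{*} } first, by taking the Hermitian conjugate inside the vacuum expectation in (\ref{def:contraction}) (using that the step functions are real, \en{ (\T A\dgg B)\dgg = \T B\dgg A } up to the bosonic ordering sign, and \en{ \mzero{\cdot}\uu{*} = \mzero{(\cdot)\dgg} }), and then (\ref{outmmin2}) follows from (\ref{outmmin}) term by term since all the RHS values \en{ 1, \tfrac12, 0 } are real and the sign is absorbed by the explicit \en{ - } and the \en{ \ddagger } sign in \en{ \phi\ddgg = \phi\dgg }.

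The main obstacle I anticipate is bookkeeping the sign conventions: the double dagger on the forward traveling field satisfies \en{ (\phi\dgg)\ddgg = -\phi } (see (\ref{phisha})), and the transfer function is defined with a \en{ \ddagger } on the second slot (Definition \ref{def:tf}), so the entries of (\ref{outmmin2}) as written already have the \en{ \ddagger } unwound into \en{ \wick{\mm{\phi}\dgg}{-\phi\drm{in}} } etc.; I must make sure the extra minus signs from \en{ \ddagger } and from the asymmetry (\ref{asym}) combine to give exactly \en{ +1, +1, +\tfrac12, 0, 0 } rather than flipped signs, and that the final relation \en{ \wick{A\dgg}{B} = -\wick{A}{B\dgg}\uu{*} } is consistent with all ten computed values. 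Once the sign conventions are pinned down, the rest is a direct half-sum computation and the lemma follows.
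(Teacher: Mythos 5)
Your computation of (\ref{outmmin}) is exactly the intended argument: the contraction is bilinear, so each entry is a half- (or quarter-) sum of the four free-field values of Lemma \ref{lem:freefield}, and your arithmetic \en{1,1,\tfrac12,0,0} is correct. For (\ref{outmmin2}), your route (a) — unwinding the \en{\ddagger} via (\ref{phisha}) and then invoking the causal/anticausal conversion rule (\ref{asym}) of Section \ref{lem:asymmetry} together with Lemma \ref{lem:freefield} — is the paper's route, and it goes through.

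Your preferred route (b), however, has a concrete sign problem. Taking the Hermitian conjugate inside the vacuum expectation of (\ref{def:contraction}) does not give \en{ \wick{A\dgg}{B} = -\wick{A}{B\dgg}\uu{*} }: writing out \en{ \wick{A_2}{B_1\dgg} = \mzero{\step(t_2-t_1)A_2B_1\dgg + \step(t_1-t_2)B_1\dgg A_2} } and conjugating, the two step functions get swapped, so \en{ \wick{A}{B\dgg}\uu{*} } is the contraction with the \emph{opposite} time ordering, i.e. the opposite (retarded vs.\ advanced) branch. At coincident points these branches are exactly the two hyperfunction solutions (\ref{fthrou}), which differ by a sign; the minus sign in \en{ \wick{A\dgg}{B} = -\wick{A}{B\dgg}\uu{*} } is precisely that branch conversion — the asymmetry (\ref{asym}) — and cannot be produced by conjugation plus the bosonic symmetry (\ref{prop:ordersym}) alone (that naive computation lands you on \en{ +\wick{A}{B\dgg}\uu{*} }, contradicting e.g.\ \en{ \wick{\mm{\phi}\dgg}{\phi\drm{in}} = -1 } versus \en{ \wick{\mm{\phi}}{\phi\drm{in}\dgg} = +1 }). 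So keep route (a): derive each entry of (\ref{outmmin2}) by linearity from the daggered free-field values fixed by (\ref{asym}), and then read the displayed relation off as a summary of the two tables, rather than trying to prove it first by conjugation.
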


\begin{wrapfigure}[0]{r}[53mm]{49mm} 
\vspace{-110mm}
\centering
\includegraphics[keepaspectratio,width=48mm]{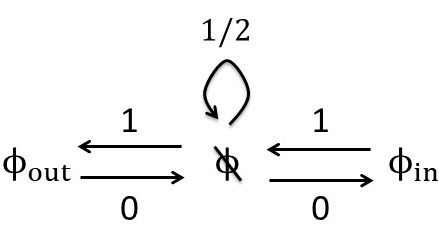}
\caption{
\small
Free field transfer functions.
\normalsize
}
\label{fig-interpretation}
\end{wrapfigure}

This lemma is interpreted as Figure \ref{fig-interpretation}.
Note that 
\en{ \mm{\phi} } is regarded as a `mean field' of 
the input \en{ \phi\dd{\mathrm{in}} } and 
the output \en{ \phi\dd{\mathrm{out}} }.
The free field transfer functions are unity 
along the forward direction,
whereas they are zero in the backward direction.
This is a consequence of the unidirectionality 
of the forward traveling field.

\newpage

\subsection{The input-output relations of unitary gates}
\label{sec:qsu2}

For unitary gates,
the interaction Lagrangian is given by (\ref{unilag-2}):
\small\begin{align}
 \im\lag\urm{int}
&=
 \mm{\phi}\dgg  (2\lie) \, \mm{\phi}.
\label{unintlag}
\end{align}\normalsize
Our purpose is to calculate a contraction
\small\begin{align}
 \wick{A}{B\dgg}\urm{int}
\equiv
 \frac{\mzero{\T A B\dgg S}}{\mzero{S}}.
\label{tf41}
\end{align}\normalsize
Assume that \en{ A } and \en{ B } are functions of \en{ \phi }.
According to Section \ref{sec:summary},
we expand the \textit{S}-matrix in the numerator and select 
connected diagrams.
Since (\ref{unintlag}) is quadratic, 
the expansion is simply given by
\begin{subequations}
\label{ft}
\small\begin{align}
 &
 \wick{A}{B\dgg} 
\label{sut0}\\ +&
 \frac{1}{1!}\int dx_2 \ 
 \mzero{\T \, A \  \im\lag\urm{int}(x_2) \ B\dgg }
\label{sut1}\\ +&
 \frac{1}{2!}\int dx_3 dx_2 \ 
 \mzero{\T \, A \,
 \im \lag\urm{int}(x_3) \ \im\lag\urm{int}(x_2) \
 B\dgg }
\label{sut2} +\cdots.
\end{align}\normalsize
\end{subequations}

Let us take a look at each term in detail.
The first term (\ref{sut0}) is the zeroth order term,
which is a direct-through effect
\en{ A\gets B }.
The next term (\ref{sut1}) is the first order correction  
that is given by 
\small\begin{align}
 \int dx_2 \ 
\contraction{}{\phi}{\hspace{2mm}}{} 
 A \
\contraction{}{\phi\hspace{2mm}}{\hspace{10mm}}{} 
 \im \lag\urm{int}(x_2)
 B\dgg
=
 \wick{A}{\mm{\phi}\dgg} 
 (2 \lie) 
 \wick{\mm{\phi}}{B\dgg}.
\label{su2gfirst}
\end{align}\normalsize
The next correction term is (\ref{sut2})
in which there are two possible contractions:
\begin{subequations}
\small\begin{align}
& \contraction{}{\phi}{\hspace{2mm}}{} 
 A \
\contraction{}{\phi\hspace{2mm}}{\hspace{11mm}}{} 
 \im\lag\urm{int}(x_3) \
\contraction{}{\phi\hspace{3mm}}{\hspace{9mm}}{} 
 \im\lag\urm{int}(x_2)
 B\dgg,
\\
& \contraction[2ex]{}{\phi}{\hspace{19mm}}{} 
 A \
\contraction{}{\phi\hspace{3mm}}{\hspace{10mm}}{} 
\contraction[3ex]{}{\phi\hspace{0mm}}{\hspace{26mm}}{} 
 \im\lag\urm{int}(x_3) \
 \im\lag\urm{int}(x_2)
 B\dgg.
\end{align}\normalsize
\end{subequations}
This is expressed by a factor \en{ 2! }.
The second order correction is therefore given as
\small\begin{align}
 2!\frac{1}{2!}  \int  dx_3 & dx_2 \ 
\contraction{}{A}{\hspace{1mm}}{} 
 A \
\contraction{}{\im\lag}{\hspace{11mm}}{} 
 \im\lag\urm{int}(x_3) \
\contraction{}{\im\lag}{\hspace{11mm}}{} 
 \im\lag\urm{int}(x_2)
 B\dgg
=
 \wick{A}{\mm{\phi}\dgg} 
 (2 \lie) 
 \wick{\mm{\phi}}{\mm{\phi}\dgg} 
 (2 \lie) 
 \wick{\mm{\phi}}{B\dgg}.
\label{2ndf}
\end{align}\normalsize

Likewise, the third order correction is given by 
\small\begin{align}
\hspace{3mm}
 \wick{A}{\mm{\phi}\dgg} 
 (2 \lie) 
 \wick{\mm{\phi}}{\mm{\phi}\dgg} 
 (2 \lie) 
 \wick{\mm{\phi}}{\mm{\phi}\dgg} 
 (2 \lie) 
 \wick{\mm{\phi}}{B\dgg}.
\end{align}\normalsize
As a result,
we have
\begin{subequations}
\label{su2gatetf}
\small\begin{align}
 \wick{A}{B\dgg}\urm{int}
&=
 \wick{A}{B\dgg} 
+
 \wick{A}{\mm{\phi}\dgg}  (2\lie) 
\Bigl\{
 I
+
 \wick{\mm{\phi}}{\mm{\phi}\dgg} (2\lie)
+\cdots
\Bigr\}
 \wick{\mm{\phi}}{B\dgg}
\\ &\sim
 \wick{A}{B\dgg}  
+
 \wick{A}{\mm{\phi}\dgg}  (2\lie) 
\Bigl\{
 I
-
 \wick{\mm{\phi}}{\mm{\phi}\dgg}  (2\lie)
\Bigr\}\inv
 \wick{\mm{\phi}}{B\dgg}.
\end{align}\normalsize
\end{subequations}
Let us consider examples of this transfer function.

\newpage

\subsection{SU(2) gate}

\begin{wrapfigure}[0]{r}[53mm]{49mm}
\vspace{60mm}
\centering
\includegraphics[keepaspectratio,width=39mm]{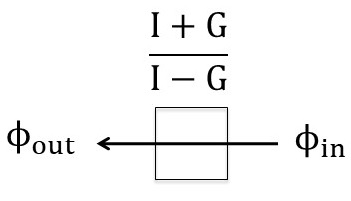}
\caption{
\small
Transfer function 
$\phi_{\mathrm{out}}\gets\phi_{\mathrm{in}}$.
\normalsize
}
\label{fig-tf1}
\end{wrapfigure}

As an example of the unitary gate,
let us consider an SU(2) gate for which 
\small\begin{align}
 \lie
=
 \A{}{g}{-g\uu{\ast}}{}.
\end{align}\normalsize
To calculate the input-output relation,
set 
\small\begin{align}
 A
=
 \phi\drm{out}
\equiv
 \AV{\phi_1}{\phi_2}\drm{out},
\quad
 B
=
 \phi\drm{in}
\equiv
 \AV{\phi_1}{\phi_2}\drm{in}.
\end{align}\normalsize
Using the free field transfer functions 
given in Lemma \ref{lem:mmtrans},
we get
\begin{subequations}
\label{su2iov}
\small\begin{align}
  \ipg\dd{\phi\drm{out}|\phi\drm{in}}\urm{SU(2)}
&=
 \wick{\phi\drm{out}}{\phi\drm{in}\ddgg}\urm{SU(2)}
\\ &=
 \wick{\phi\drm{out}}{\phi\drm{in}\dgg}\urm{SU(2)}
\\ &=
 I 
+ 
 2\lie
\left(
 I - \lie
\right)\inv \hspace{7mm} \because (\ref{su2gatetf}) 
\\ &=
 \frac{I+\lie}{I-\lie}
\\ &=
 \frac{1}{1+|g|^2}
 \A{1-|g|^2}{2g}{-2g\uu{*}}{1-|g|^2}.
\end{align}\normalsize
\end{subequations}
This is the definition of the SU(2) gate (\ref{sincos}).

\subsection{Other components of the SU(2) gate}

\begin{wrapfigure}[0]{r}[53mm]{49mm}
\vspace{-10mm}
\centering
\includegraphics[keepaspectratio,width=37mm]{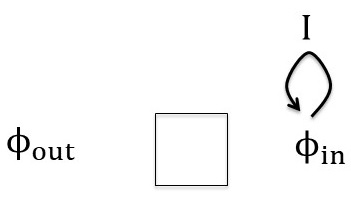}
\caption{
\small
Transfer function 
$\phi_{\mathrm{in}}\gets\phi_{\mathrm{in}}$.
\normalsize
}
\label{fig-tf2}
\end{wrapfigure}

The advantage of the \textit{S}-matrix approach 
is that we can calculate transfer functions
between arbitrary fields.
This cannot be done 
with the Euler-Lagrange equation in a classical fashion.

\begin{wrapfigure}[0]{r}[53mm]{49mm}
\vspace{30mm}
\centering
\includegraphics[keepaspectratio,width=37mm]{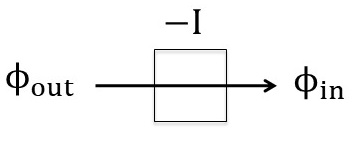}
\caption{
\small
Transfer function 
$\phi_{\mathrm{in}}\gets\phi_{\mathrm{out}}$.
\normalsize
}
\label{fig-tf3}
\end{wrapfigure}

Let us consider a transfer function 
from the input to itself (\en{ \phi\drm{in}\gets\phi\drm{in} }),
in which case
\small\begin{align}
 A=B=\phi\drm{in}.
\end{align}\normalsize
Obviously, the SU(2) gate does not involve in this process
as in Figure \ref{fig-tf2}.
It is expected that 
we simply get the free field transfer function
from the input to itself.
In fact, since 
\small\begin{align}
 \wick{\phi\drm{in}}{\mm{\phi}\dgg} =0,
\end{align}\normalsize
(\ref{su2gatetf}) is written as
\small\begin{align}
 \wick{\phi\drm{in}}{\phi\drm{in}\dgg}\urm{SU(2)}
&=
 \wick{\phi\drm{in}}{\phi\drm{in}\dgg}
=
 I.
\end{align}\normalsize

Likewise,
a transfer function 
from the output to the input 
(\en{ \phi\drm{in}\gets\phi\drm{out} })
is given as
\small\begin{align}
 \wick{\phi\drm{in}}{ \phi\drm{out}\dgg}\urm{SU(2)}
&=
 -I,
\end{align}\normalsize
in which \en{ \lie } does not appear.
(Compare Figure \ref{fig-tf1} to Figure \ref{fig-tf3}.)
This means that 
the SU(2) gate does not operate in the backward direction in time.

\newpage

It is also interesting to see 
a transfer function 
\en{ \phi\drm{out}\dgg \gets \phi\drm{in}\dgg }.
This is expressed by 
\small\begin{align}
 \ipg\dd{\phi\drm{out}\dgg | \phi\drm{in}\dgg}\urm{SU(2)}
=
 -\wick{\phi\drm{out}\dgg}{ \phi\drm{in}}\urm{SU(2)}.
\end{align}\normalsize
To avoid notational complexity,
let us calculate each element
\small\begin{align}
\hspace{27mm}
 \wick{\phi\dd{\alpha,\textrm{out}}\dgg}
      {\phi\dd{\beta,\textrm{in}}}\urm{SU(2)},
\hspace{5mm}
\left(
 \alpha,\beta = 1,2.
\right)
\end{align}\normalsize
The \textit{S}-matrix expansion is given as
\begin{subequations}
\small\begin{align}
 \wick{\phi\dd{\alpha,\textrm{out}}\dgg}
      {\phi\dd{\beta,\textrm{in}}}\urm{SU(2)}
\ &= \
\contraction{}{\phi\hspace{0mm}}{\hspace{9mm}}{} 
 \phi\dd{\alpha,\textrm{out}}\dgg
 \phi\dd{\beta,\textrm{in}}
\ + \ 
\contraction[2ex]{}{\phi\hspace{0mm}}{\hspace{26mm}}{} 
 \phi\dd{\alpha,\textrm{out}}\dgg \
\contraction{}{\phi\hspace{0mm}}{\hspace{21mm}}{} 
 \mm{\phi}\dd{\mu}\dgg 
 (2\lie\dd{\mu\nu}) 
 \mm{\phi}\dd{\nu} \ 
 \phi\dd{\beta,\textrm{in}}
\ + \cdots
\\ \ &= \
- \delta\dd{\alpha\beta}
\hspace{11mm} + \
 \Bigl(-\delta\dd{\alpha\mu}\Bigr) \
 2\lie\dd{\mu\nu}
 \Bigl(-\delta\dd{\mu\beta}\Bigr) 
\ + \ \cdots
\\ \ &= \
 - \, I + (2\lie\trans)  - (2\lie\trans) \lie\trans +\cdots
\\ \ &= \
 -\frac{I-\lie\trans}{I+\lie\trans}.
\end{align}\normalsize
\end{subequations}
Note that \en{ \lie\trans=-\lie\uu{*} } for unitary gates.
Compared to (\ref{su2iov}), we get
\begin{subequations}
\small\begin{align}
\hspace{-9mm}
 \wick{\phi\dd{\alpha,\textrm{out}}\dgg}
      {\phi\dd{\beta,\textrm{in}}}\urm{SU(2)}
\ &= \
 -\frac{I+\lie\uu{*}}{I-\lie\uu{*}}
\\ \ &= \
 -\left( \frac{I+\lie}{I-\lie} \right)\uu{*}
\ = \
 -\left( 
 \wick{\phi\dd{\alpha,\textrm{out}}}
      {\phi\dd{\beta,\textrm{in}}\dgg}\urm{SU(2)}
\right)\uu{*},
\end{align}\normalsize
\end{subequations}
which represents the asymmetry of the forward traveling field.
These results are summarized as follows:

\begin{theorem}
\label{prop:bsfull}
For the SU(2) gate, contractions are given as
\begin{subequations}
\small\begin{align}
 \wick{\phi\dd{\mathrm{in}}}{\phi\dd{\mathrm{in}}\dgg}^{\mathrm{SU(2)}}
&=
 I,
\\
 \wick{\phi\dd{\mathrm{out}}}{\phi\dd{\mathrm{in}}\dgg}^{\mathrm{SU(2)}}
&=
 \tf_2
\equiv
 \ffrac{1}{1+|g|^2}
 \A{1-|g|^2}{2g}
   {-2g\uu{*}}{1-|g|^2},
\\
 \wick{\phi\dd{\mathrm{in}}}{\phi\dd{\mathrm{out}}\dgg}^{\mathrm{SU(2)}}
&=
 -I,
\\
 \wick{\phi\dd{\mathrm{out}}}{\phi\dd{\mathrm{out}}\dgg}^{\mathrm{SU(2)}}
&=
 I.
\end{align}\normalsize
\end{subequations}
Also,
\begin{subequations}
\small\begin{align}
 \wick{\mm{\phi}}{\phi\dd{\mathrm{in}}\dgg}^{\mathrm{SU(2)}}
&=
 \frac{1}{2}(I+P_2)
=
 \ffrac{1}{1+|g|^2}
 \A{1}{g}
   {-g\uu{*}}{1},
\\
  \wick{\mm{\phi}}{\mm{\phi}\dgg}^{\mathrm{SU(2)}}
&=
 \frac{1}{4}(I+P_2),
\\
 \wick{\phi\dd{\mathrm{out}}}{\mm{\phi}\dgg}^{\mathrm{SU(2)}}
&=
 \frac{1}{2}(I+P_2).
\end{align}\normalsize
\end{subequations}
These contractions satisfy
\small\begin{align}
 \wick{A\dgg}{B}\uu{\mathrm{SU(2)}}
=
 -\left(\wick{A}{B\dgg}\uu{\mathrm{SU(2)}}\right)\uu{*}.
\end{align}\normalsize
\end{theorem}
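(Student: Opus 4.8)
The plan is to derive each contraction from the general formula \eqref{su2gatetf}, which was established for arbitrary unitary gates, by specializing $A$ and $B$ and plugging in the free-field contractions of Lemmas~\ref{lem:freefield} and~\ref{lem:mmtrans}. First I would record the relevant free-field data: $\wick{\phi\drm{in}}{\phi\drm{in}\dgg}=1$, $\wick{\phi\drm{out}}{\phi\drm{in}\dgg}=1$, $\wick{\phi\drm{in}}{\phi\drm{out}\dgg}=-1$, $\wick{\phi\drm{out}}{\phi\drm{out}\dgg}=1$, together with the mean-field values $\wick{\mm{\phi}}{\phi\drm{in}\dgg}=1$, $\wick{\phi\drm{out}}{\mm{\phi}\dgg}=1$, $\wick{\mm{\phi}}{\mm{\phi}\dgg}=\tfrac12$, $\wick{\mm{\phi}}{\phi\drm{out}\dgg}=0$, $\wick{\phi\drm{in}}{\mm{\phi}\dgg}=0$. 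These extend componentwise to the two-mode case (each scalar entry multiplied by $I_2$), and $\lie=\left[\begin{smallmatrix}0&g\\-g^*&0\end{smallmatrix}\right]$.

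Second I would run the four $\phi\drm{in}/\phi\drm{out}$ cases through \eqref{su2gatetf}. For $A=B=\phi\drm{in}$ the coupling term vanishes because $\wick{\phi\drm{in}}{\mm{\phi}\dgg}=0$, giving $I$; this reproduces the computation already sketched after Figure~\ref{fig-tf2}. For $A=\phi\drm{out}$, $B=\phi\drm{in}$ the coupling term has prefactor $\wick{\phi\drm{out}}{\mm{\phi}\dgg}=1$ and suffix $\wick{\mm{\phi}}{\phi\drm{in}\dgg}=1$, and with $\wick{\mm{\phi}}{\mm{\phi}\dgg}=\tfrac12$ the geometric series collapses to $I+2\lie(I-\lie)\inv=\frac{I+\lie}{I-\lie}=\tf_2$, exactly as in \eqref{su2iov}. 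For $A=\phi\drm{in}$, $B=\phi\drm{out}$ the suffix $\wick{\mm{\phi}}{\phi\drm{out}\dgg}=0$ kills the correction, leaving $\wick{\phi\drm{in}}{\phi\drm{out}\dgg}=-I$. For $A=B=\phi\drm{out}$ the prefactor $\wick{\phi\drm{out}}{\mm{\phi}\dgg}=1$ but the suffix $\wick{\mm{\phi}}{\phi\drm{out}\dgg}=0$, so again only the direct term $\wick{\phi\drm{out}}{\phi\drm{out}\dgg}=I$ survives. The three mean-field contractions follow the same way: e.g.\ $\wick{\mm{\phi}}{\phi\drm{in}\dgg}^{\mathrm{SU(2)}}=\tfrac12\cdot I+\tfrac12\cdot 2\lie(I-\lie)\inv\cdot 1=\tfrac12(I+\tf_2)$, and one checks $\tfrac12(I+\tf_2)=\frac{1}{1+|g|^2}\left[\begin{smallmatrix}1&g\\-g^*&1\end{smallmatrix}\right]$ by direct arithmetic; similarly $\wick{\mm{\phi}}{\mm{\phi}\dgg}^{\mathrm{SU(2)}}=\tfrac14(I+\tf_2)$ and $\wick{\phi\drm{out}}{\mm{\phi}\dgg}^{\mathrm{SU(2)}}=\tfrac12(I+\tf_2)$.

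Finally, for the daggered identity $\wick{A\dgg}{B}^{\mathrm{SU(2)}}=-\bigl(\wick{A}{B\dgg}^{\mathrm{SU(2)}}\bigr)^*$, I would rerun the $\textit{S}$-matrix expansion with $\phi\dgg$ in place of $\phi$, as done explicitly for $\wick{\phi\drm{out}\dgg}{\phi\drm{in}}$ in the text just before the theorem: the incoming/outgoing role of the arrows is swapped, each $\wick{\phi}{\phi\dgg}\mapsto-1$ in the backward direction and each vertex factor $2\lie\mapsto 2\lie\trans$, so the geometric series resums to $-\frac{I-\lie\trans}{I+\lie\trans}$; using $\lie\trans=-\lie^*$ (which holds because $\lie\dgg=-\lie$ for the unitary gate) this equals $-\bigl(\frac{I+\lie}{I-\lie}\bigr)^*$, and the same bookkeeping applies to every other entry. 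The main obstacle is not conceptual but organizational: one must be careful with the sign conventions of the bosonic contraction \eqref{prop:ordersym}, the asymmetry relation \eqref{asym}, and the componentwise promotion to $2\times2$ blocks, so that the minus sign and the complex conjugation land in precisely the right places; the termwise matching of $\tfrac12(I+\tf_2)$ against the explicit $2\times2$ matrix is routine but should be exhibited once to anchor the normalization.
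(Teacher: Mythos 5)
Your proposal is correct and is essentially the paper's own argument: Theorem \ref{prop:bsfull} is obtained there precisely by specializing the resummed expansion (\ref{su2gatetf}) with the free-field contractions of Lemmas \ref{lem:freefield} and \ref{lem:mmtrans} (giving $(I+\lie)(I-\lie)\inv$ for $\phi\drm{out}\gets\phi\drm{in}$ and killing the correction whenever $\wick{\mm{\phi}}{\phi\drm{out}\dgg}=0$ or $\wick{\phi\drm{in}}{\mm{\phi}\dgg}=0$ appears), together with the explicit daggered expansion using $\lie\trans=-\lie\uu{*}$ carried out just before the theorem. One small bookkeeping point: in your mean-field entry the zeroth-order term is $\wick{\mm{\phi}}{\phi\drm{in}\dgg}=1$, not $\tfrac{1}{2}$, so the intermediate expression should read $I+\tfrac{1}{2}(2\lie)(I-\lie)\inv=(I-\lie)\inv$, which indeed equals $\tfrac{1}{2}(I+\tf_2)$ as you state.
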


\newpage

\section{Transfer functions of non-unitary gates}

\begin{wrapfigure}[0]{r}[53mm]{49mm} 
\vspace{100mm}
\centering
\includegraphics[keepaspectratio,width=48mm]{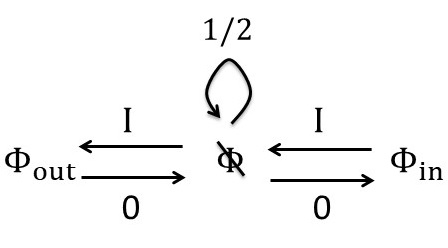}
\caption{
\small
Free field transfer functions.
\normalsize
}
\label{fig-interpretation2}
\end{wrapfigure}

In the case of the non-unitary gate,
we use the vector form
\small\begin{align}
 \Phi
\equiv
 \AV{\phi}{\phi\dgg}.
\label{vectorform}
\end{align}\normalsize
Note that 
\begin{subequations}
\label{phiddgg}
\small\begin{align}
 \Phi\dgg
&\equiv
 \AH{\phi\dgg}{(\phi\dgg)\dgg}
=
 \AH{\phi\dgg}{\phi},
\\
 \Phi\ddgg
&\equiv
 \AH{\phi\ddgg}{(\phi\dgg)\ddgg}
=
 \AH{\phi\dgg}{-\phi}
=
 \Phi\dgg \sigma_z.
\end{align}\normalsize
\end{subequations}
As in the unitary case,
we start with free field transfer functions.

\begin{lemma}
\label{lem:freevec}
The free field transfer functions of \en{ \Phi } are given by
\begin{subequations}
\label{vecinout}
\small\begin{align}
 \wick{\Phi\dd{\mathrm{out}}}{\Phi\dd{\mathrm{in}}\dgg}
& =
 \sigma_z,
\\
 \wick{\Phi\dd{\mathrm{in}}}{\Phi\dd{\mathrm{out}}\dgg}
& =
 -\sigma_z.
\end{align}\normalsize
\end{subequations}
We also have 
\begin{subequations}
\label{freevec}
\small\begin{align}
 \wick{\mm{\Phi}}{\Phi\dd{\mathrm{in}}\dgg}
&=
 \sigma_z,
\\
 \wick{\Phi\dd{\mathrm{out}}}{\mm{\Phi}\dgg}
&=
 \sigma_z,
\\
 \wick{\mm{\Phi}}{\mm{\Phi}\dgg}
&=
 \frac{1}{2}\sigma_z.
\\ 
 \wick{\mm{\Phi}}{\Phi\dd{\mathrm{out}}\dgg}
&=
 0,
\\
 \wick{\Phi\dd{\mathrm{in}}}{\mm{\Phi}\dgg}
&=
 0.
\end{align}\normalsize
\end{subequations}
\end{lemma}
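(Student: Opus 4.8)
The plan is to reduce everything to Lemma~\ref{lem:mmtrans} for the scalar forward traveling field, applied component-wise, together with the definition $\Phi\equiv\AV{\phi}{\phi\dgg}$ and the bosonic contraction rules. First I would unpack the $2\times 2$ block structure: for the vector $\Phi\dd\alpha = \AV{\phi\dd\alpha}{\phi\dd\alpha\dgg}$, a contraction $\wick{\Phi\dd{\mathrm{out}}}{\Phi\dd{\mathrm{in}}\dgg}$ is the matrix whose entries are the four contractions among $\{\phi\drm{out},\phi\drm{out}\dgg\}$ and $\{\phi\drm{in}\dgg,(\phi\drm{in}\dgg)\dgg\}=\{\phi\drm{in}\dgg,\phi\drm{in}\}$, since $\Phi\dgg = \AH{\phi\dgg}{\phi}$ by (\ref{phiddgg}). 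So the $(1,1)$ entry is $\wick{\phi\drm{out}}{\phi\drm{in}\dgg}$, the $(1,2)$ entry is $\wick{\phi\drm{out}}{\phi\drm{in}}$, the $(2,1)$ entry is $\wick{\phi\drm{out}\dgg}{\phi\drm{in}\dgg}$, and the $(2,2)$ entry is $\wick{\phi\drm{out}\dgg}{\phi\drm{in}}$.

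Next I would evaluate each entry. From Lemma~\ref{lem:freefield}, $\wick{\phi\drm{out}}{\phi\drm{in}\dgg}=1$. The off-diagonal entries $\wick{\phi\drm{out}}{\phi\drm{in}}$ and $\wick{\phi\drm{out}\dgg}{\phi\drm{in}\dgg}$ vanish: these are vacuum expectations of the form $\mzero{\T\phi\phi}$ and $\mzero{\T\phi\dgg\phi\dgg}$, which are zero because a single bosonic field (or its conjugate) has no nonzero contraction with a copy of itself of the same type (only $\phi$ with $\phi\dgg$ contracts nontrivially for the free field) --- this is the same mechanism noted after (\ref{mmatrix}) for the closed-loop field, where $\ipg\dd{\mas|\mas\dgg}=-\wick{\mas}{\mas}\sim\bra{0}\mas\mas\ket{0}=0$. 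For the $(2,2)$ entry I would use the conjugation/asymmetry relation stated at the end of Lemma~\ref{lem:mmtrans}, $\wick{A\dgg}{B}=-\wick{A}{B\dgg}\uu*$, giving $\wick{\phi\drm{out}\dgg}{\phi\drm{in}}=-\wick{\phi\drm{out}}{\phi\drm{in}\dgg}\uu* = -1$. Assembling, $\wick{\Phi\drm{out}}{\Phi\drm{in}\dgg}=\A{1}{0}{0}{-1}=\sigma_z$. The relation $\wick{\Phi\drm{in}}{\Phi\drm{out}\dgg}=-\sigma_z$ follows identically from the second half of Lemma~\ref{lem:freefield} ($\wick{\phi\drm{in}}{\phi\drm{out}\dgg}=-1$) by the same block expansion. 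The five relations in (\ref{freevec}) are obtained in exactly the same way, substituting the corresponding scalar values from Lemma~\ref{lem:mmtrans}: e.g. $\wick{\mm\phi}{\phi\drm{in}\dgg}=1$ and $\wick{\mm\phi\dgg}{\phi\drm{in}}=-1$ give the diagonal $\pm1$ pattern $\sigma_z$ for $\wick{\mm\Phi}{\Phi\drm{in}\dgg}$; the value $\tfrac12$ for $\wick{\mm\phi}{\mm\phi\dgg}$ gives $\tfrac12\sigma_z$; and the vanishing contractions $\wick{\mm\phi}{\phi\drm{out}\dgg}=0$, $\wick{\phi\drm{in}}{\mm\phi\dgg}=0$ give the zero matrices in the last two lines, with the off-diagonal $\phi$--$\phi$ and $\phi\dgg$--$\phi\dgg$ entries vanishing automatically as above.

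The only subtle point --- and what I expect to be the main obstacle to presenting this cleanly --- is bookkeeping the sign conventions consistently: the double-dagger $\ddagger$ differs from $\dagger$ for the composite object ($\Phi\ddgg=\Phi\dgg\sigma_z$, not $\Phi\dgg$), the second slot of a contraction in a transfer function always carries $\ddagger$ rather than $\dagger$, and the asymmetry of the forward traveling field inserts a minus sign on the conjugated row. One must be careful that the $\sigma_z$ appearing in $\Phi\ddgg$ is exactly reproduced by the $\mathrm{diag}(+1,-1)$ pattern coming from $\wick{\phi}{\phi\dgg}=+1$ versus $\wick{\phi\dgg}{\phi}=-1$, so that the final answers are genuinely $\pm\sigma_z$ and $\tfrac12\sigma_z$ with no stray factors. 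Once the dictionary between scalar contractions and the $2\times2$ blocks is fixed, the lemma is an immediate corollary of Lemmas~\ref{lem:freefield} and~\ref{lem:mmtrans}, so I would keep the proof short, stating the block expansion once and then listing the substitutions.
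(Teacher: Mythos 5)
Your proposal is correct and matches what the paper intends: the lemma is stated without an explicit proof precisely because it follows component-wise from Lemmas \ref{lem:freefield} and \ref{lem:mmtrans}, with the vanishing $\phi$--$\phi$ and $\phi\dgg$--$\phi\dgg$ contractions giving the off-diagonal zeros and the asymmetry relation $\wick{A\dgg}{B}=-\wick{A}{B\dgg}\uu{*}$ supplying the $-1$ on the conjugated row, exactly as you argue. Your sign bookkeeping (including $\Phi\ddgg=\Phi\dgg\sigma_z$) is consistent with the paper's conventions, so no changes are needed.
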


\begin{remark}
From (\ref{phiddgg}), the transfer function turns out to be
\begin{subequations}
\small\begin{align}
 \ipg\dd{\mm{\Phi} | \Phi\dd{\mathrm{in}}}
&=
 \wick{\mm{\Phi}}{\Phi\dd{\mathrm{in}}\ddgg}
\\ &=
 \wick{\mm{\Phi}}{\Phi\dd{\mathrm{in}}\dgg}\sigma_z
=
 I.
\end{align}\normalsize
\end{subequations}
 This is also interpreted as Figure \ref{fig-interpretation2}.
\end{remark}

\subsection{The input-output relations of non-unitary gates}

For non-unitary gates,
the interaction Lagrangian is given by (\ref{squeezer}):
\small\begin{align}
 \im \lag\urm{int}
=
 \mm{\Phi}\dgg \bigl(\Sigma\dd{z} \lie \bigr) \, \mm{\Phi}.
\end{align}\normalsize
A significant difference from the unitary case is that 
the interaction Lagrangian involves the square of \en{ \mm{\phi} }.
For example,
in the case of a single field,
\small\begin{align}
 \im \lag\urm{int}
=
 \AH{\mm{\phi}\dgg}{\mm{\phi}}
\left[
 \begin{array}{rr}
   \lie_{11} &  \lie_{12} \\
  -\lie_{21} & -\lie_{22}
 \end{array}
\right]
 \AV{\mm{\phi}}{\mm{\phi}\dgg}.
\end{align}\normalsize
For the diagonal elements,
the contraction is written as
\small\begin{align}
 \lie_{11}
 \contraction{\hspace{6mm}}{\phi}{\hspace{3mm}}{} 
 (\mm{\phi}\dgg \mm{\phi}) \
 \phi\drm{in}\dgg.
\end{align}\normalsize
However, for the off-diagonal elements,
there are two choices:
\small\begin{align}
 \lie_{21}
 \contraction{\hspace{4mm}}{\phi}{\hspace{4mm}}{} 
 (\mm{\phi} \mm{\phi}) \
 \phi\drm{in}\dgg
\qquad \mbox{and} \qquad
 \lie_{21}
 \contraction{\hspace{1mm}}{\phi}{\hspace{7mm}}{} 
 (\mm{\phi} \mm{\phi}) \
 \phi\drm{in}\dgg,
\end{align}\normalsize
which results in \en{ 2! }.
Accordingly,
\en{ \lie } is modified as
\begin{subequations}
\small\begin{align}
 \A{\lie_{11}}{\lie_{12}}
   {\lie_{21}}{\lie_{22}}
\to
 \left[
 \begin{array}{rr}
    \lie_{11} & 2! \lie_{12} \\
 2! \lie_{21} &    \lie_{22}
 \end{array}
 \right].
\end{align}\normalsize
\end{subequations}
This can be generalized to the multi-variable case.
The interaction Lagrangian is expressed as
\small\begin{align}
 \im \lag\urm{int}
=
 \mm{\Phi}\dgg \bigl(\Sigma\dd{z} \widetilde{\lie} \bigr) \, \mm{\Phi},
\end{align}\normalsize
where
\small\begin{align}
 \widetilde{\lie}
\equiv
 \diag(\lie) 
+ 
 2! \bigl[ \lie - \diag(\lie) \bigr].
\end{align}\normalsize

Let us consider a transfer function 
\en{ \Phi\drm{out}\gets \Phi\drm{in} }.
The expansion of the \textit{S}-matrix is 
the same as the unitary case:
\begin{subequations}
\small\begin{align}
 \wick{\Phi\drm{out}}{\Phi\drm{in}\dgg}\urm{int}
=&
 \wick{\Phi\drm{out}}{\Phi\drm{in}\dgg}
\\ & 
+
 \wick{\Phi\drm{out}}{\mm{\Phi}\dgg} 
 \bigl( \Sigma\dd{z}\widetilde{\lie} \bigr) 
\Bigl\{
 I
-
 \wick{\mm{\Phi}}{\mm{\Phi}\dgg} 
 \bigl( \Sigma\dd{z}\widetilde{\lie} \bigr)
\Bigr\}\inv
 \wick{\mm{\Phi}}{\Phi\drm{in}\dgg}.
\label{cont-nonuni}
\end{align}\normalsize
\end{subequations}
Using Lemma \ref{lem:freevec},
we have
\begin{subequations}
\label{cont-nonuni-2}
\small\begin{align}
 \ipg\dd{\Phi\drm{out} | \Phi\drm{in}}\urm{int}
&=
 \wick{\Phi\drm{out}}{\Phi\drm{in}\ddgg}\urm{int}
\\ &=
 \wick{\Phi\drm{out}}{\Phi\drm{in}\dgg}\urm{int} \ \Sigma\dd{z}
=
 \frac{I+\widetilde{\lie}/2}{I-\widetilde{\lie}/2}.
\end{align}\normalsize
\end{subequations}

\subsection{Squeezing gate}

For the squeezing gate,\index{squeezing gate (single field)}
\en{ \lie } is given by (\ref{sq-reactance}):
\small\begin{align}
  \lie =  \A{}{g}{g}{}.
\end{align}\normalsize
In this case,
\small\begin{align}
 \widetilde{\lie}
=
 \A{}{2g}{2g}{}
=
 2\lie.
\end{align}\normalsize
From (\ref{cont-nonuni-2}),
the transfer function \en{ \Phi\drm{out}\gets\Phi\drm{in} }
is written as
\small\begin{align}
  \ipg\dd{\Phi\drm{out}|\Phi\drm{in}}\urm{SQ}
&=
 \frac{1+\lie}{1-\lie}
=
  \frac{1}{1-g^2}
 \A{1+g^2}{2g}{2g}{1+g^2}.
\label{ssmx-s}
\end{align}\normalsize
This is the same as (\ref{ssmx}).

\newpage

\subsection{QND gate and XX gate}

The reactance matrix of the QND gate is given by (\ref{qndg}):
\small\begin{align}
\hspace{27mm}
 \lie
&\equiv
 \frac{g}{4} \A{}{Q\dd{-}}{-Q\dd{+}}{},
\quad
 (Q\dd{\pm}\equiv
 I\pm\sigma\dd{x})
\end{align}\normalsize
for which
\small\begin{align}
\hspace{-9mm}
 \widetilde{\lie}
=
 2\lie.
\end{align}\normalsize

Let us consider a transfer function
\small\begin{align}
 \ipg\dd{\qu\drm{out}|\qu\drm{in}}\urm{QND}
=
 \wick{\qu\drm{out}}{\qu\drm{in}\ddgg}\urm{QND},
\end{align}\normalsize
where \en{ \qu } is the quadrature basis defined as
\small\begin{align}
\hspace{10mm}
 \qu
\equiv
 \AV{\xi}{\eta}
=
 \toqu
 \Phi,
\qquad
 \toqu
 \equiv
 \frac{1}{\sqrt{2}}
 \A{1}{1}
   {-\im}{\im}.
\label{defqu}
\end{align}\normalsize
Note that 
the \en{ \qu\dgg } and \en{ \qu\ddgg } 
are different (Section \ref{sec:tf}):
\begin{subequations}
\label{qudif}
\small\begin{align}
 \qu\dgg
&=
 \AH{\xi}{\eta}.
\\ 
 \qu\ddgg
&=
 \AH{-\im\eta}{\im\xi}
=
 \qu\dgg (-\sigma_y).
\end{align}\normalsize
\end{subequations}
The transfer function is therefore rewritten as
\small\begin{align}
 \ipg\dd{\qu\drm{out}|\qu\drm{in}}\urm{QND}
=
 \toqu \
 \wick{\Phi\drm{out}}{\Phi\drm{in}\dgg}\urm{QND} \
 \toqu\dgg 
 \left(-\Sigma\dd{y}\right),
\end{align}\normalsize
which can be calculated from (\ref{cont-nonuni}).
It is worth noting that 
the second and higher order corrections are zero 
for the QND gate because 
\small\begin{align}
 \lie^2
=
 0.
\end{align}\normalsize
Hence we have
\begin{subequations}
\small\begin{align}
 \wick{\Phi\drm{out}}{\Phi\drm{in}\dgg}\urm{QND}
&= 
 \wick{\Phi\drm{out}}{\Phi\drm{in}\dgg}
+
 \wick{\Phi\drm{out}}{\mm{\Phi}\dgg} 
 \bigl( \Sigma\dd{z}\widetilde{\lie} \bigr)
 \wick{\mm{\Phi}}{\Phi\drm{in}\dgg}
\\ &=
 \A{I}{\ffrac{g}{2}Q\dd{-}}
   {-\ffrac{g}{2}Q\dd{+}}{I}
 \Sigma\dd{z},
\\
 \therefore
  \wick{\qu\dd{\mathrm{out}} }{\qu\dd{\mathrm{in}}\dgg}\uu{\mathrm{QND}}
&=
   \left[
    \begin{array}{c:c} 
       -\sigma_y &  \nA {}{0}{-\im g}{} \\ \hdashline
       \nA{}{-\im g}{0}{} & -\sigma_y
    \end{array}
  \right],
\\ 
 \therefore
 \ipg\dd{\qu\dd{\mathrm{out}} | \qu\dd{\mathrm{in}}}\urm{QND}
&=
\left[
\begin{array}{cc:cc}
 1& & & \\
  & 1 & 0 & g\\ \hdashline
 -g & 0 & 1 & \\
 &  & & 1
\end{array}
\right].
\end{align}\normalsize
\end{subequations}
This transfer function is the same as (\ref{sumbounary}).

\marginpar{\vspace{-45mm}
\footnotesize
 The Pauli matrices: \\ \\
 $\sigma_x=\A{}{1}{1}{},$ \\
 $\sigma_y=\A{}{-\im}{\im}{},$ \\
 $\sigma_z=\A{1}{}{}{-1}.$ \\
\normalsize
 }

\newpage

For later use,
we summarize other components of transfer functions.

\begin{theorem}
\label{thm:sumlist}
For the QND gate, 
the contractions are given as
\begin{subequations}
\small\begin{align}
  \wick{\qu\dd{\mathrm{in}} }{\qu\dd{\mathrm{in}}\dgg}\uu{\mathrm{QND}}
&=
   \left[
    \begin{array}{c:c} 
        -\sigma_y &  \\ \hdashline
                  & -\sigma_y
    \end{array}
  \right],
\\
  \wick{\qu\dd{\mathrm{out}} }{\qu\dd{\mathrm{in}}\dgg}\uu{\mathrm{QND}}
&=
   \left[
    \begin{array}{c:c} 
       -\sigma_y &  \nA {}{0}{-\im g}{} \\ \hdashline
       \nA{}{-\im g}{0}{} & -\sigma_y
    \end{array}
  \right],
\\ \nn\\
 \wick{\qu\dd{\mathrm{out}}}{\mm{\qu}\dgg}\uu{\mathrm{QND}}
&=
 \wick{\mm{\qu}}{\qu\dd{\mathrm{in}}\dgg}\uu{\mathrm{QND}}
=
   \left[
    \begin{array}{c:c} 
       -\sigma_y &  \nA {}{0}{-\im \ffrac{g}{2}}{}\rule[0mm]{0mm}{8mm} \\ \hdashline
       \nA{}{-\im \ffrac{g}{2}}{0}{} \rule[0mm]{0mm}{8mm} & -\sigma_y
    \end{array}
  \right],
\\ \nn\\
  \wick{\qu\dd{\mathrm{in}} }{\qu\dd{\mathrm{out}}\dgg}\uu{\mathrm{QND}}
&=
   \left[
    \begin{array}{c:c} 
        \sigma_y &  \\ \hdashline
                  & \sigma_y
    \end{array}
  \right],
\\
  \wick{\qu\dd{\mathrm{out}} }{\qu\dd{\mathrm{out}}\dgg}\uu{\mathrm{QND}}
&=
   \left[
    \begin{array}{c:c} 
        -\sigma_y &  \\ \hdashline
                  & -\sigma_y
    \end{array}
  \right].
\end{align}\normalsize
\end{subequations}
Likewise,
for the XX gate,
\small\begin{align}
  \wick{\qu\dd{\mathrm{in}} }{\qu\dd{\mathrm{in}}\dgg}\uu{\mathrm{XX}}
&=
   \left[
    \begin{array}{c:c} 
        -\sigma_y &  \\ \hdashline
                  & -\sigma_y
    \end{array}
  \right],
\\
  \wick{\qu\dd{\mathrm{out}} }{\qu\dd{\mathrm{in}}\dgg}\uu{\mathrm{XX}}
&=
   \left[
    \begin{array}{c:c} 
       -\sigma_y &  \nA {0}{}{}{\im g} \\ \hdashline
       \nA{0}{}{}{\im g} & -\sigma_y
    \end{array}
  \right],
\\ \nn\\
 \wick{\qu\dd{\mathrm{out}}}{\mm{\qu}\dgg}\uu{\mathrm{XX}}
&=
 \wick{\mm{\qu}}{\qu\dd{\mathrm{in}}\dgg}\uu{\mathrm{XX}}
=
   \left[
    \begin{array}{c:c} 
       -\sigma_y &  \nA {0}{}{}{\im \ffrac{g}{2}}\rule[0mm]{0mm}{8mm} \\ \hdashline
       \nA{0}{}{}{\im \ffrac{g}{2}} \rule[0mm]{0mm}{8mm} & -\sigma_y
    \end{array}
  \right].
\\ \nn\\
  \wick{\qu\dd{\mathrm{in}} }{\qu\dd{\mathrm{out}}\dgg}\uu{\mathrm{XX}}
&=
   \left[
    \begin{array}{c:c} 
        \sigma_y &  \\ \hdashline
                  & \sigma_y
    \end{array}
  \right],
\\
  \wick{\qu\dd{\mathrm{out}} }{\qu\dd{\mathrm{out}}\dgg}\uu{\mathrm{XX}}
&=
   \left[
    \begin{array}{c:c} 
        -\sigma_y &  \\ \hdashline
                  & -\sigma_y
    \end{array}
  \right].
\end{align}\normalsize
\end{theorem}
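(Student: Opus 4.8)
\textbf{Proof plan for Theorem \ref{thm:sumlist}.}
The plan is to mirror exactly the computation already performed for the $\qu\dd{\mathrm{out}}\gets\qu\dd{\mathrm{in}}$ component of the QND gate, applying it to each of the remaining eight contractions (five for QND, four for XX) in turn. The key structural fact that makes this routine is $\lie^2 = 0$ for both gates, which truncates the $S$-matrix expansion \eqref{cont-nonuni} after the first-order correction. Concretely, for any pair of fields $A,B$ built from $\Phi$ the formula reads
\begin{align}
 \wick{A}{B\dgg}\urm{int}
&=
 \wick{A}{B\dgg}
+
 \wick{A}{\mm{\Phi}\dgg}
 \bigl( \Sigma\dd{z}\widetilde{\lie} \bigr)
 \wick{\mm{\Phi}}{B\dgg},
\end{align}
with $\widetilde{\lie} = 2\lie$ since the reactance matrices $\lie\urm{QND} = \frac{g}{4}\A{}{Q\dd{-}}{-Q\dd{+}}{}$ and $\lie\urm{XX} = \frac{\im g}{4}\A{}{Q}{Q}{}$ are purely off-diagonal.

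First I would record the inputs: the free-field contractions of $\Phi$ from Lemma \ref{lem:freevec}, namely $\wick{\Phi\drm{in}}{\Phi\drm{in}\dgg} = \wick{\Phi\drm{out}}{\Phi\drm{out}\dgg} = \sigma_z$, $\wick{\Phi\drm{out}}{\Phi\drm{in}\dgg} = \sigma_z$, $\wick{\Phi\drm{in}}{\Phi\drm{out}\dgg} = -\sigma_z$, and the mean-field versions $\wick{\mm{\Phi}}{\Phi\drm{in}\dgg} = \wick{\Phi\drm{out}}{\mm{\Phi}\dgg} = \sigma_z$, $\wick{\mm{\Phi}}{\mm{\Phi}\dgg} = \frac12\sigma_z$, $\wick{\mm{\Phi}}{\Phi\drm{out}\dgg} = \wick{\Phi\drm{in}}{\mm{\Phi}\dgg} = 0$. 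Then for each of the nine target contractions I substitute these into the truncated expansion. The ``in--in'', ``in--out'', ``out--out'' cases have vanishing first-order correction (because one of the mean-field factors is zero), so they reduce to the free-field value conjugated by $\toqu$ — in the quadrature basis this produces the blocks $\mp\sigma_y$ as stated. The ``out--in'' case reproduces the full gate transfer function computed in the preceding subsection. The two genuinely new cases, $\wick{\qu\drm{out}}{\mm{\qu}\dgg}$ and $\wick{\mm{\qu}}{\qu\drm{in}\dgg}$, pick up exactly half the off-diagonal coupling, because one of the two external legs is a mean field contributing a factor $\tfrac12$ via $\wick{\mm{\Phi}}{\mm{\Phi}\dgg} = \tfrac12\sigma_z$ or directly through $\wick{\Phi\drm{out}}{\mm{\Phi}\dgg}$ versus $\wick{\mm{\Phi}}{\mm{\Phi}\dgg}$; tracking this bookkeeping carefully gives the $\mp\im g/2$ entries.

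The final step is the change of basis from $\Phi$ to $\qu = \toqu\,\Phi$, using $\qu\ddgg = \qu\dgg(-\sigma_y)$ from \eqref{qudif} so that $\ipg\dd{\qu_a|\qu_b} = \toqu\,\wick{\Phi_a}{\Phi_b\dgg}\,\toqu\dgg(-\Sigma\dd{y})$; one checks $\toqu\sigma_z\toqu\dgg = -\sigma_y$ and reads off the quoted $2\times 2$ or $4\times 4$ matrices. For the XX gate the only change is $Q\dd{\pm} \to Q = \sigma_z + \im\sigma_y$, which in the quadrature basis lands in the $(\eta,\eta)$ rather than the $(\eta,\xi)$ slot, flipping the off-diagonal block from $\nA{}{0}{-\im g}{}$ to $\nA{0}{}{}{\im g}$; the argument is otherwise identical. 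I expect no real obstacle here — the content is entirely in organizing the nine cases and being scrupulous about the $\tfrac12$ factors and the $\toqu$-conjugation signs; the main risk is a sign slip in relating $\qu\dgg$ to $\qu\ddgg$, which I would double-check against the already-verified $\ipg\dd{\qu\drm{out}|\qu\drm{in}}$ result as a consistency anchor.
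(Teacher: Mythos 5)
Your plan is correct and is essentially the paper's own route: since $\lie^2=0$ the expansion (\ref{cont-nonuni}) truncates at first order, the free-field contractions of Lemma \ref{lem:freevec} make the correction vanish for the in--in, in--out and out--out cases and halve the off-diagonal coupling whenever a mean field $\mm{\Phi}$ sits on an external leg (via $\wick{\mm{\Phi}}{\mm{\Phi}\dgg}=\tfrac12\Sigma\dd{z}$), and block-wise conjugation by $\toqu$ (with $\toqu\sigma\dd{z}\toqu\dgg=-\sigma\dd{y}$) yields the stated matrices, with $Q\dd{\pm}\to Q$ accounting for the XX entries. One small bookkeeping point: the tabulated entries are the contractions $\wick{\qu\dd{a}}{\qu\dd{b}\dgg}=\toqu\,\wick{\Phi\dd{a}}{\Phi\dd{b}\dgg}\,\toqu\dgg$, so the extra factor $(-\Sigma\dd{y})$ you invoke via $\qu\ddgg=\qu\dgg(-\sigma\dd{y})$ enters only when passing to the transfer functions $\ipg\dd{\qu\dd{a}|\qu\dd{b}}$, not in the matrices of the theorem itself.
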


\newpage

\section{Transfer functions of circuits}
\label{sec:circuitss}

For d-feedforward and d-feedback,
(\ref{cont-nonuni}) is not directly applicable.
Here we split the interaction Lagrangians into
two parts
and apply Theorem \ref{thm:sumlist}.

\subsection{D-feedforward}
\label{sec:ffq}

\begin{wrapfigure}[0]{r}[53mm]{49mm} 
\vspace{-12mm}
\centering
\includegraphics[keepaspectratio,width=49mm]{fig-feedforward.jpg}
\caption{
\small
D-feedforward.
\normalsize
}
\label{fig-feedforward-2}
\end{wrapfigure}

The interaction Lagrangian of d-feedforward 
has been given in (\ref{fflag}):
\begin{subequations}
\label{dffinlag2}
\small\begin{align}
 \lag\urm{FF}
&=
 \lag\urm{QND} + \lag\urm{int},
\\
 \lag\urm{int}
&= \
:
( 2 \mm{\eta}_1 + g \mm{\eta}_2 )
\left(-\frac{k}{2}\right)
 \xi\drm{2,out} : ,
\end{align}\normalsize
\end{subequations}
where \en{ \lag\urm{QND} } is 
the interaction Lagrangian of the QND gate,
for which the transfer functions have been obtained 
in Theorem \ref{thm:sumlist}.
Hence 
we 
regard \en{ \lag\urm{int} } as a perturbation
and expand the corresponding \textit{S}-matrix.

Here we consider 
how the output of the upper line in Figure \ref{fig-feedforward-2}
\small\begin{align}
 \AVl{\xi\drm{1,out}}{\eta\drm{1,out}}
\end{align}\normalsize
is related to the inputs.
Let us start with 
a transfer function \en{ \xi\drm{1,out}\gets\xi\drm{1,in} }:
\begin{subequations}
\small\begin{align}
 \ipg\dd{\xi\drm{1,out} | \xi\drm{1,in}}\urm{FF}
&=
 \wick{\xi\drm{1,out}}{\xi\drm{1,in}\ddgg}\urm{FF}
\\ &=
 \wick{\xi\drm{1,out}}{-\im \eta\drm{1,in}}\urm{FF}.
\end{align}\normalsize
\end{subequations}
Expanding the \textit{S}-matrix 
and using Theorem \ref{thm:sumlist}, 
we have 
\begin{subequations}
\small\begin{align}
\left(
 0\urm{th}\mbox{ order of } 
 \ipg\dd{\xi\drm{1,out} | \xi\drm{1,in}}\urm{FF}\right)
&=
 \wick{\xi\drm{1,out}}{-\im \eta\drm{1,in}}\urm{QND}
=
 1,
\\ 
\left(
 1\urm{st}\mbox{ order of } 
 \ipg\dd{\xi\drm{1,out} | \xi\drm{1,in}}\urm{FF}\right)
&=
\contraction{}{\phi a}{\hspace{16mm}}{\hspace{0mm}}
 \xi\drm{1,out} : (2\mm{\eta}_1 + g\mm{\eta}_2 )
 \left(-\im\frac{k}{2}\right)
\contraction{}{\phi a}{\hspace{12mm}}{\hspace{0mm}}
 \xi\drm{2,out} : -\im \eta\drm{1,in}
\nn \\ &= 
\underbrace{
 \wick{\xi\drm{1,out}}{2\mm{\eta}_1 }\urm{QND}
}\dd{\hspace{5mm} =2\im}
 \left(-\im\frac{k}{2}\right)
\underbrace{
 \wick{\xi\drm{2,out}}{-\im \eta\drm{1,in}}\urm{QND}
}\dd{\hspace{5mm} =-g}
\nn\\ &=
 -gk.
\\
\left(
 2\urm{nd} \mbox{ order of } 
 \ipg\dd{\xi\drm{1,out} | \xi\drm{1,in}}\urm{FF}\right)
& 
 \nn\\ &  \hspace{-30mm} =
 \wick{\xi\drm{1,out}}{2\mm{\eta}_1 }\urm{QND}
 \left(-\im\frac{k}{2}\right)
\underbrace{
 \wick{\xi\drm{2,out}}{2\mm{\eta}_1 + g\mm{\eta}_2}\urm{QND}
}\dd{ \hspace{5mm} =-\im g + \im g }
 \left(-\im\frac{k}{2}\right)
 \wick{\xi\drm{2,out}}{-\im \eta\drm{1,in}}\urm{QND}
 \nn\\ &  \hspace{-30mm} =
 0.
\end{align}\normalsize
\end{subequations}
Likewise, higher order corrections are zero.
As a result, we get
\small\begin{align}
 \ipg\dd{\xi\drm{1,out} |\xi\drm{1,in}}\urm{FF}
&=
 1-gk.
\end{align}\normalsize

In the case of 
a transfer function \en{ \xi\drm{1,out}\gets \xi\drm{2,in} },
only the first order correction is nonzero:
\small\begin{align}
\ipg\dd{\xi\drm{1,out} |\xi\drm{2,in}}\urm{FF}
&=
\underbrace{
 \wick{\xi\drm{1,out}}{2\mm{\eta}_1 }\urm{QND}
}\dd{\hspace{5mm} =2\im}
 \left(-\im\frac{k}{2}\right)
\underbrace{
 \wick{\xi\drm{2,out}}{-\im\eta\drm{2,in}}\urm{QND}
}\dd{\hspace{5mm} =1}
=
 k.
\end{align}\normalsize
It is also easy to show that 
\begin{subequations}
\small\begin{align}
 \ipg\dd{\eta\drm{1,out} | \eta\drm{1,in}}\urm{FF} 
&= 
 \ipg\dd{\eta\drm{1,out} | \eta\drm{1,in}}\urm{QND} = 1,
\\
 \ipg\dd{\eta\drm{1,out} | \eta\drm{2,in}}\urm{FF} 
&= 
 \ipg\dd{\eta\drm{1,out} | \eta\drm{2,in}}\urm{QND} = g.
\end{align}\normalsize
\end{subequations}
Using these results, 
we can express the input-output relation as
\small\begin{align}
\AV{\xi_1}{\eta_1}\drm{out} 
&=
\left[
 \begin{array}{cc:cc}
  (1-gk) & 0 & k & 0 \\ \hdashline
  0 & 1 & g & 0
 \end{array}
\right]
\left[
\begin{array}{c}
 \xi_1 \\
 \eta_1 \\ \hdashline
 \xi_2 \\
 \eta_2
\end{array}
\right]\drm{in},
\label{sffio}
\end{align}\normalsize
which is the same as the classical result (\ref{ffio}).

\subsection{D-feedback}
\label{sec:fbq}

\begin{wrapfigure}[0]{r}[53mm]{49mm} 
\vspace{-12mm}
\centering
\includegraphics[keepaspectratio,width=49mm]{fig-feedback.jpg}
\caption{
\small
D-feedback.
\normalsize
}
\end{wrapfigure}

The interaction Lagrangian of d-feedback 
has been given in (\ref{fblag}):
\begin{subequations}
\label{dfbinlag2}
\small\begin{align}
 \lag\urm{FB}
&=
 \lag\urm{QND} + \lag\urm{int},
\\
 \lag\urm{int}
&= \
:
( 2 \mm{\eta}_1 - g \mm{\eta}_2 )
\left(-\frac{k}{2}\right)
 \xi\drm{2,out} : ,
\label{dfbinlag2-2}
\end{align}\normalsize
\end{subequations}
Compared to (\ref{dffinlag2}),
a difference is only the sign of the second term in (\ref{dfbinlag2-2}).
This leads to a significant difference 
in the expansion of the \textit{S}-matrix.

Again, we consider 
how the output of the upper line in Figure \ref{fig-feedforward-2}
\small\begin{align}
 \AVl{\xi\drm{1,out}}{\eta\drm{1,out}}
\end{align}\normalsize
is related to the inputs.
Let us start with 
a transfer function \en{ \xi\drm{1,out}\gets\xi\drm{1,in} }:
\begin{subequations}
\small\begin{align}
\left(
 0\urm{th}\mbox{ order of } 
 \ipg\dd{\xi\drm{1,out} | \xi\drm{1,in}}\urm{FB}\right)
&=
 \wick{\xi\drm{1,out}}{-\im \eta\drm{1,in}}\urm{QND}
=
 1,
\\ 
\left(
 1\urm{st}\mbox{ order of } 
 \ipg\dd{\xi\drm{1,out} | \xi\drm{1,in}}\urm{FB}\right)
&=
\underbrace{
 \wick{\xi\drm{1,out}}{2\mm{\eta}_1 }\urm{QND}
}\dd{\hspace{5mm} =2\im}
 \left(-\im\frac{k}{2}\right)
\underbrace{
 \wick{\xi\drm{2,out}}{-\im \eta\drm{1,in}}\urm{QND}
}\dd{\hspace{5mm} =-g}
\nn\\ &=
 k(-g),
\\
\left(
 2\urm{nd}\mbox{ order of } 
 \ipg\dd{\xi\drm{1,out} | \xi\drm{1,in}}\urm{FB}\right)
& 
 \nn\\ &  \hspace{-30mm} =
 \wick{\xi\drm{1,out}}{2\mm{\eta}_1 }\urm{QND}
 \left(-\im\frac{k}{2}\right)
\underbrace{
 \wick{\xi\drm{2,out}}{2\mm{\eta}_1 - g\mm{\eta}_2}\urm{QND}
 \left(-\im\frac{k}{2}\right)
}\dd{ \hspace{9mm} \equiv \isel }
 \wick{\xi\drm{2,out}}{-\im \eta\drm{1,in}}\urm{QND}
 \nn\\ &  \hspace{-30mm} =
 k  (\isel) (-g),
\end{align}\normalsize
\end{subequations}
where 
\small\begin{align}
 \isel
\equiv
\underbrace{
 \wick{\xi\drm{2,out}}{2\mm{\eta}_1 - g\mm{\eta}_2}\urm{QND}
}\dd{ \hspace{5mm} =-\im g - \im g }
 \left(-\im\frac{k}{2}\right)
=
 -gk.
\end{align}\normalsize
Note that 
\en{ \isel=0 } for d-feedforward 
because of the different sign.
Likewise, 
the third order correction is given as \en{ k(\isel)^2(-g) }.
As a result, 
we have
\begin{subequations}
\small\begin{align}
 \ipg\dd{\xi\drm{1,out} | \xi\drm{1,in}}\urm{FB}
&=
 1 + k\left\{ 1+ (\isel) + (\isel)^2 + \cdots \right\}(-g)
\\ &\sim
 1 + k \frac{1}{1-\isel} (-g).
\end{align}\normalsize
\end{subequations}
A transfer function \en{ \xi\drm{1,out}\gets \xi\drm{2,in} }
is obtained in the same way:
\begin{subequations}
\small\begin{align}
\left(
 1\urm{st}\mbox{ order of } 
 \ipg\dd{\xi\drm{1,out} | \xi\drm{1,in}}\urm{FB}\right)
&=
\underbrace{
 \wick{\xi\drm{1,out}}{2\mm{\eta}_1 }\urm{QND}
}\dd{\hspace{5mm} =2\im}
 \left(-\im\frac{k}{2}\right)
\underbrace{
 \wick{\xi\drm{2,out}}{-\im\eta\drm{2,in}}\urm{QND}
}\dd{\hspace{5mm} =1}.
\\
\left(
 2\urm{nd}\mbox{ order of } 
 \ipg\dd{\xi\drm{1,out} | \xi\drm{1,in}}\urm{FB}\right)
& 
 \nn\\ &  \hspace{-30mm} =
 \wick{\xi\drm{1,out}}{2\mm{\eta}_1 }\urm{QND}
 \left(-\im\frac{k}{2}\right)
\underbrace{
 \wick{\xi\drm{2,out}}{2\mm{\eta}_1 - g\mm{\eta}_2}\urm{QND}
 \left(-\im\frac{k}{2}\right)
}\dd{ \hspace{9mm} \equiv \isel }
 \wick{\xi\drm{2,out}}{-\im\eta\drm{2,in}}\urm{QND}.
\end{align}\normalsize
\end{subequations}
This results in
\small\begin{align}
 \ipg\dd{\xi\drm{1,out} | \xi\drm{2,in} }\urm{FB}
&=
 k\frac{1}{1-\isel}.
\end{align}\normalsize

Other transfer functions do not involve \en{ \isel } 
hence 
they are the same as d-feedforward.
As a result,
the input-output relation of the d-feedback is written as
\small\begin{align}
\AV{\xi_1}{\eta_1}\drm{out} 
&=
\left[
 \begin{array}{cc:cc}
  \ffrac{1}{1+gk} & 0 & \ffrac{k}{1+gk} & 0 \\ \hdashline
  0 & 1 & g & 0
 \end{array}
\right]
\left[
\begin{array}{c}
 \xi_1 \\
 \eta_1 \\ \hdashline
 \xi_2 \\
 \eta_2
\end{array}
\right]\drm{in},
\label{sfbio}
\end{align}\normalsize
which is the same as the classical result (\ref{fbtfc}).

\begin{remark}
The difference between feedforward and feedback
is the fractional function
\small\begin{align}
 \frac{1}{1+\isel}.
\end{align}\normalsize
In fact, 
the results of d-feedforward is obtained by 
setting \en{ \isel=0 } in d-feedback.
This factor represents the effect that 
the signal propagates in the feedback loop infinite times.
In the case of d-feedforward,
there are no loops and hence \en{ \isel=0 }.
We will examine feedback in detail 
in Chapter \ref{chap:additional}
where \en{ \isel } turns out to be self-energy.
\end{remark}

\newpage

\section{Concluding remarks: Feynman diagrams}

So far, 
we have not drawn any Feynman diagrams
because all the examples we have considered were linear
and 
it is not beneficial to draw diagrams for linear interactions.
However, 
it is still educational to show it 
because there is a unique rule for the forward traveling field
due to its unidirectionality.

\begin{wrapfigure}[0]{r}[53mm]{49mm} 
\vspace{-17mm}
\centering
\includegraphics[keepaspectratio,width=40mm]{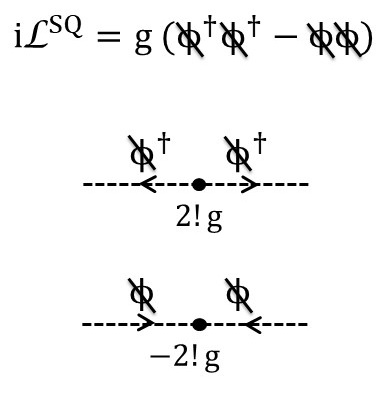}
\caption{
\small
Elementary diagrams corresponding to
the first and second term of (\ref{sstfdiag}).
\normalsize
}
\label{fig-ss-ele}
\end{wrapfigure}

Here 
we consider the squeezing gate.\index{squeezing gate (single field)}
The interaction Lagrangian is given as
\small\begin{align}
 \im \lag\urm{SQ}
&=
 (2!g) \left[ \mm{\phi}\dgg \mm{\phi}\dgg  - \mm{\phi} \mm{\phi} \right].
\label{sstfdiag}
\end{align}\normalsize
The corresponding diagrams are depicted in Figure \ref{fig-ss-ele}.
\en{ \mm{\phi} } and \en{ \mm{\phi}\dgg } are
represented by incoming and outgoing arrows, respectively.
Let us consider a transfer function 
\small\begin{align}
 \ipg\dd{\phi\drm{out} | \phi\drm{in}}\urm{SQ}
=
 \wick{\phi\drm{out}}{\phi\drm{in}\dgg}\urm{SQ}.
\end{align}\normalsize
The zeroth order term is 
a direct-through diagram \en{ \phi\drm{out}\gets \phi\drm{in}\dgg },
which is depicted as

\vspace{-1mm}
\begin{figure}[H]
\centering
\includegraphics[keepaspectratio,width=49mm]{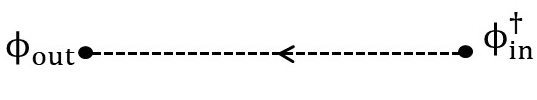}
\end{figure}
\vspace{-1mm}

\noindent
This is a free field transfer function
\en{ \wick{\phi\drm{out}}{\phi\drm{in}\dgg }= 1 }.
The first order correction is depicted as

\vspace{-2mm}
\begin{figure}[H]
\centering
\includegraphics[keepaspectratio,width=49mm]{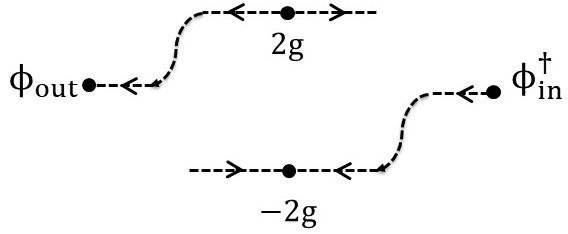}
\end{figure}
\vspace{-3mm}

\noindent
\en{ \phi\drm{out} } is not connected to \en{ \phi\drm{in}\dgg }
through the interaction Lagrangian.
This diagram is disconnected and hence ignored.
The second order correction is depicted as

\vspace{0mm}
\begin{figure}[H]
\centering
\includegraphics[keepaspectratio,width=75mm]{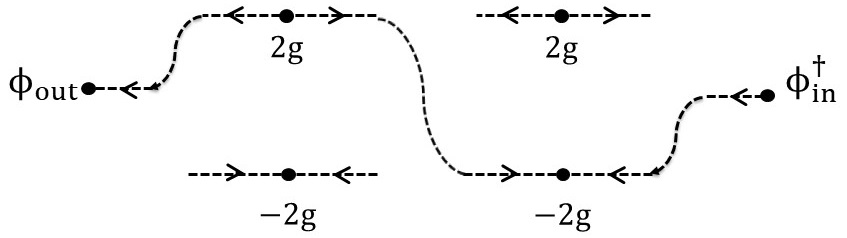}
\end{figure}
\vspace{-3mm}

\begin{wrapfigure}[0]{r}[53mm]{49mm} 
\vspace{9mm}
\centering
\includegraphics[keepaspectratio,width=48mm]{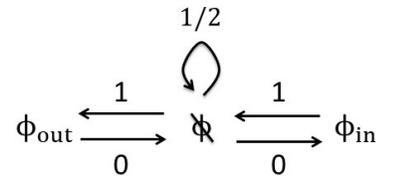}
\caption{
\small
Free field transfer functions.
\normalsize
}
\label{fig-interpretation-2}
\end{wrapfigure}

\noindent
This is a connected diagram.
Note that the middle arrow is pointing backward.
As explained in Section \ref{lem:asymmetry},
we need to reverse it 
using the asymmetry of the forward traveling field.
This is re-depicted as

\vspace{-3mm}
\begin{figure}[H]
\centering
\includegraphics[keepaspectratio,width=50mm]{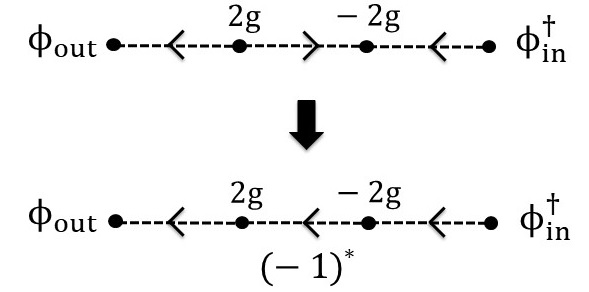}
\end{figure}
\vspace{-2mm}

\noindent
where \en{ (-1)\uu{*} } represents 
reversing the the arrow.
Recall that 
the free field transfer functions (Lemma \ref{lem:mmtrans}) 
are given as Figure \ref{fig-interpretation-2}.
Then this diagram is given as
\small\begin{align}
 (1) \ (2g) \ \left(-\frac{1}{2}\right)\uu{*} (-2g) \ (1) = 2g^2. 
\end{align}\normalsize
This is equivalent to the following calculation:
\begin{subequations}
\small\begin{align}
 \wick{\phi\drm{out}}{\mm{\phi}}
 \ (2g) 
\underbrace{
 \wick{\mm{\phi}\dgg}{ \mm{\phi} }
}\dd{= - \wick{\mm{\phi}}{\mm{\phi}\dgg}\uu{*}
}
 (-2g) \
 \wick{\mm{\phi}}{\phi\drm{in}\dgg}
 =
 2g^2.
\end{align}\normalsize
\end{subequations}
Likewise, the fourth order correction is given by \en{ 2g^4 }.
As a result, 
\small\begin{align}
 \ipg\dd{\phi\drm{out} | \phi\drm{in}}\urm{SQ}
&=
 1 + 2g^2(1+g^2+\cdots)
=
 \frac{1+g^2}{1-g^2},
\end{align}\normalsize
which is the same as the (1,1)-element of (\ref{ssmx-s}),
as expected.
It is important to note that 
the correct result is obtained 
by putting all arrows in the forward direction.

\chapter{Quantum systems via \textit{S}-matrices}
\label{sec:tfs}
\thispagestyle{fancy}

In this chapter,
we derive the transfer functions of quantum systems 
using \textit{S}-matrices 
as we have done for quantum gates in the preceding chapter.
The advantage of the \textit{S}-matrix approach is that 
various techniques developed in field theory 
can be employed to investigate nonlinear dynamics.
For example, 
third-order nonlinear interactions are examined 
with the \en{ \phi^4 } model of the scalar field 
in Chapter \ref{chap:nonlinear}.
Fermion-boson interactions are analyzed 
in the same way as Yukawa's interaction
in Chapter \ref{chap:intspin}.
Gravitational wave detection are considered
using SU(2) systems in Chapter \ref{chap:gravity}.
All of these applications are based on the formulation of this chapter.

\section{SU(2) system}
\label{sec:su2system}

\subsection{Free field transfer functions}

\begin{wrapfigure}[0]{r}[53mm]{49mm} 
\vspace{-60mm}
\centering
\includegraphics[keepaspectratio,width=27mm]{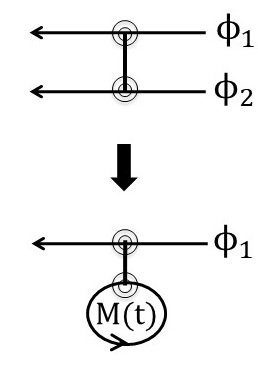}
\caption{
\small
Classical formulation of the SU(2) system.
\normalsize
}
\label{fig-cavity-cl}
\end{wrapfigure}

In the classical formulation (Chapter \ref{chap:feedback}),
the SU(2) system was defined by connecting 
the output \en{ \phi\drm{2,out} } and the input \en{ \phi\drm{2,in} } 
across the SU(2) gate as in Figure \ref{fig-cavity-cl}.
In the \textit{S}-matrix approach,
we interpret it in a different way.
The system is defined by the interaction between
a closed-loop field \en{ \mas } and 
a free field \en{ \phi } through the SU(2) gate,
as in Figure \ref{fig-cavity}.

\begin{wrapfigure}[0]{r}[53mm]{49mm} 
\vspace{-15mm}
\centering
\includegraphics[keepaspectratio,width=25mm]{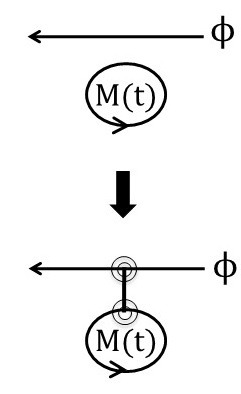}
\caption{
\small
\textit{S}-matrix approach to the SU(2) system.
\normalsize
}
\label{fig-cavity}
\end{wrapfigure}

To calculate the input-output relation of the system,
we need free field transfer functions (contractions) for a field vector
\small\begin{align}
 \phi\dd{\mas}
\equiv
 \AV{\mm{\phi}}{\mas}.
\end{align}\normalsize

\begin{lemma}
\label{lem:tfbefore}
 For \en{ \phi\dd{\mas} },
 free field contractions are given as
\begin{subequations}
\small\begin{align}
 \wick{\phi\dd{\mas}}{\phi\dd{\mathrm{in}}\dgg}
&=
 \AV{1}{0},
& 
 \wick{\phi\dd{\mas}\dgg}{\phi\dd{\mathrm{in}}}
&=
 \AH{-1}{0},
\\
 \wick{\phi\dd{\mathrm{out}}}{\phi\dd{\mas}\dgg}
&=
 \AH{1}{0},
& 
 \wick{\phi\dd{\mathrm{out}}\dgg}{\phi\dd{\mas}}
&=
 \AV{-1}{0},
\\
 \wick{\phi\dd{\mas}}{\phi\dd{\mas}\dgg}
&=
 \A{\ffrac{1}{2}}{}
   {}{\wick{\mas}{\mas\dgg}},
& 
 \wick{\phi\dd{\mas}\dgg}{\phi\dd{\mas}}
&=
 -\A{\ffrac{1}{2}}{}
   {}{\wick{\mas}{\mas\dgg}},
\\
 \wick{\phi\dd{\mas}}{\phi\dd{\mathrm{out}}\dgg}
&=
 0,
& 
 \wick{\phi\dd{\mas}\dgg}{\phi\dd{\mathrm{out}}}
&=
 0,
\\
 \wick{\phi\dd{\mathrm{in}}}{\phi\dd{\mas}\dgg}
&=
 0,
& 
 \wick{\phi\dd{\mathrm{in}}\dgg}{\phi\dd{\mas}}
&=
 0,
\end{align}\normalsize
\end{subequations}
where \en{ \wick{\mas}{\mas\dgg} } is given in Section \ref{sec:ltf} as
\small\begin{align}
 \wick{\mas}{\mas\dgg} 
=
\kakkon{\ffrac{1}{s}, \hspace{19mm} 
        \mbox{\normalsize(single mode)\small}}
       {\ffrac{1}{2}\ffrac{1+\ex\uu{-sl}}{1-\ex\uu{-sl}}. \hspace{4mm}
        \mbox{\normalsize(infinite mode)\small}}
\label{ftf}
\end{align}\normalsize
\end{lemma}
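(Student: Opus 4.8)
The plan is to compute each contraction in Lemma~\ref{lem:tfbefore} by expanding the definition of a contraction (\ref{def:contraction}) into vacuum expectation values, and then use two separate pieces of structure: the free-field relations already established for the forward traveling field $\phi$ (Lemma~\ref{lem:mmtrans}), and the closed-loop contraction $\wick{\mas}{\mas\dgg}$ computed in Section~\ref{sec:ltf}. The key observation that makes this work is that, before the interaction is switched on, the free field $\phi$ (hence $\mm{\phi}$, $\phi\drm{in}$, $\phi\drm{out}$) and the closed-loop field $\mas$ are \emph{independent} degrees of freedom: they are prepared separately, as emphasized in the discussion around Figure~\ref{fig-cavity}. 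Therefore any contraction of a $\phi$-type operator with an $\mas$-type operator vanishes, and the $2\times 2$ (or $2\times 1$) matrices decompose blockwise — the $\mm{\phi}$-row/column entries come straight from Lemma~\ref{lem:mmtrans}, the $\mas$-entries come straight from (\ref{ftf}).

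Concretely, the steps I would carry out are: (1) Write $\phi\dd{\mas}=\AV{\mm{\phi}}{\mas}$ and expand each $\wick{\cdot}{\cdot}$ using $\phi\dd{\mas}\dgg = \AH{\mm{\phi}\dgg}{\mas\dgg}$, so that e.g.\ $\wick{\phi\dd{\mas}}{\phi\dd{\mathrm{in}}\dgg} = \AV{\wick{\mm{\phi}}{\phi\dd{\mathrm{in}}\dgg}}{\wick{\mas}{\phi\dd{\mathrm{in}}\dgg}}$. (2) Invoke independence of $\mas$ and the free traveling field to kill the mixed contractions $\wick{\mas}{\phi\dd{\mathrm{in}}\dgg}$, $\wick{\mas}{\phi\dd{\mathrm{out}}\dgg}$, $\wick{\phi\dd{\mathrm{in}}}{\mas\dgg}$, etc. (3) Read off $\wick{\mm{\phi}}{\phi\dd{\mathrm{in}}\dgg}=1$, $\wick{\phi\dd{\mathrm{out}}}{\mm{\phi}\dgg}=1$, $\wick{\mm{\phi}}{\mm{\phi}\dgg}=\tfrac12$, $\wick{\mm{\phi}}{\phi\dd{\mathrm{out}}\dgg}=0$, $\wick{\phi\dd{\mathrm{in}}}{\mm{\phi}\dgg}=0$ from Lemma~\ref{lem:mmtrans}, and assemble the matrices. (4) For the daggered identities in the right column, apply the asymmetry relation $\wick{A\dgg}{B} = -\wick{A}{B\dgg}\uu{*}$ of the forward traveling field (Lemma~\ref{lem:mmtrans}); since all the relevant coefficients here are real, this just flips the sign, giving the stated $-1$, $-\tfrac12$, etc. (5) Finally, quote $\wick{\mas}{\mas\dgg}$ directly: the single-mode value $1/s$ is the $n_0=0$ case of (\ref{singlen}) / (\ref{mmatrix}) with $l\ll 1$, and the infinite-mode value $\tfrac12\frac{1+\ex\uu{-sl}}{1-\ex\uu{-sl}}$ is the $(1,1)$-entry of (\ref{mmatrix}) in Section~\ref{sec:ltf}; both already carry their regions of convergence from that section.

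The only genuinely delicate point — the ``hard part'' — is bookkeeping the signs and the $\ddagger$ versus $\dagger$ distinction consistently, together with justifying the independence claim rigorously: one should note that in the free theory the total Lagrangian splits as $\lag\urm{f}_1 + \im\delta(z)\mas\dgg\dot{\mas}$ (cf.\ (\ref{fham})), so the vacuum $\ket{0}$ factorizes and $\bra{0}\T(\phi\text{-op})(\mas\text{-op})\ket{0}$ factorizes into a product containing $\bra{0}\mas\ket{0}=0$ or $\bra{0}\mas\dgg\ket{0}=0$. Everything else is a short, mechanical verification; there is no analytic obstacle, only the risk of a sign slip in the asymmetry step, which is why I would treat the right-hand column of contractions with particular care.
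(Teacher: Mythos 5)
Your proposal is correct and follows essentially the same route as the paper: the paper's proof likewise decomposes $\phi\dd{\mas}$ componentwise, reads the $\mm{\phi}$-entries off Lemma \ref{lem:mmtrans}, sets the mixed contractions to zero because $\phi$ and $\mas$ are independent in free space, and quotes $\wick{\mas}{\mas\dgg}$ from Section \ref{sec:ltf}. Your extra care with the asymmetry relation for the daggered column and the factorization of the free vacuum only makes explicit what the paper leaves implicit.
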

\begin{proof}
 All of these can be obtained from Lemma \ref{lem:mmtrans}.
For example,
\begin{subequations}
\small\begin{align}
 \wick{\phi\dd{\mathrm{out}}}{\phi\dd{\mas}\dgg}
&=
 \AH{ \wick{\phi\drm{out}} {\mm{\phi}\dgg}}
    { \wick{\phi\drm{out}} {\mas\dgg}}
\\ &=
 \AH{1}{0},
\end{align}\normalsize
\end{subequations}
where \en{ 0 } in the second element results from 
the fact that 
\en{ \phi } and \en{ \mas } are independent in free space.
\end{proof}

\subsection{Transfer function $\phi_{\mathrm{out}} \gets \phi_{\mathrm{in}}$}
\label{sec:1to4}

For unitary systems,
the interaction Lagrangian is given by (\ref{fham}):
\small\begin{align}
 \im \lag\urm{int}_{\mas}
&=
  \phi\dd{\mas}\dgg (2\lie) \phi\dd{\mas},
\label{fham-1-2}
\end{align}\normalsize
The transfer functions of quantum systems are 
obtained in the same way as quantum gates (\ref{su2gatetf}).
In the case of a transfer function \en{ \phi\drm{out} \gets \phi\drm{in} },
we have
\begin{subequations}
\label{su2sysgen}
\small\begin{align}
 \ipg\dd{\phi\drm{out} | \phi\drm{in}}\urm{int}
&=
 \wick{\phi\drm{out}}{\phi\drm{in}\dgg}\urm{int}
\\ &=
 \wick{\phi\drm{out}}{\phi\drm{in}\dgg}
+
 \wick{\phi\drm{out}}{\phi\dd{\mas}\dgg} (2\lie) 
 \Bigl\{
 1- \wick{\phi\dd{\mas}}{\phi\dd{\mas}\dgg}  (2\lie) 
\Bigr\}\inv
 \wick{\phi\dd{\mas}}{\phi\drm{in}\dgg}.
\end{align}\normalsize
\end{subequations}

For the SU(2) system,
\small\begin{align}
 \lie
=
 \A{}{g}
   {-g\uu{*}}{}.
\end{align}\normalsize
Using Lemma \ref{lem:tfbefore}, 
we get
\begin{subequations}
\label{41ms-2}
\small\begin{align}
 \wick{\phi\drm{out}}{\phi\drm{in}\dgg}\urm{SU(2)}
& =
 \frac{1-2|g|^2\wick{\mas}{\mas\dgg}}{1+2|g|^2\wick{\mas}{\mas\dgg}}
\label{41ms-2-1} \\ &=
\kakkon{
 \ffrac{s-2|g|^2}{s+2|g|^2}
=
 \dtf{-2|g|^2}{-2g\uu{*}}{2g}{1},
\hspace{3mm} \mbox{(single mode)}}
 {
 \ffrac{1-|g|^2 - \ex\uu{-sl}(1+|g|^2)}
       {1+|g|^2 - \ex\uu{-sl}(1-|g|^2)}.
\hspace{15mm} \mbox{(infinite mode)}}
\end{align}\normalsize
\end{subequations}
The single-mode transfer function is 
the same as the classical result (\ref{su2d}).
Note that 
the single-mode system is also obtained 
as a special case of the infinite mode system
\small\begin{align}
 \ffrac{1-|g|^2 - \ex\uu{-sl}(1+|g|^2)}
       {1+|g|^2 - \ex\uu{-sl}(1-|g|^2)}
\ \sim \
 \frac{s - (2|g|^2/l)}{s + (2|g|^2/l)},
\hspace{10mm}
 (l \ll 1).
\end{align}\normalsize

\subsection{Transfer function $\phi_{\mathrm{out}}\dgg \gets \phi_{\mathrm{in}}\dgg$}

Let us consider a transfer function
\begin{subequations}
\label{su2sdag}
\small\begin{align}
\ipg\dd{\phi\drm{out}\dgg|\phi\drm{in}\dgg}\urm{int}
&=
 \wick{\phi\drm{out}\dgg}{(\phi\drm{in}\dgg)\ddgg}\urm{int}
\\ &=
 - \wick{\phi\drm{out}\dgg}{\phi\drm{in}}\urm{int}.
\end{align}\normalsize
\end{subequations}
Basically this is calculated in the same way as the preceding case:
\begin{subequations}
\small\begin{align}
\left(
 0\urm{th}\mbox{order of }
 \wick{\phi\drm{out}\dgg}{\phi\drm{in}}\urm{int}
\right)
&= 
 \wick{\phi\drm{out}\dgg}{\phi\drm{in}}
=
 -1.
\\ 
 \left(
 1\urm{st}\mbox{ order of }
 \wick{\phi\drm{out}\dgg}{\phi\drm{in}}\urm{int}
\right)
&=
\contraction{}{\phi}{\hspace{21mm}}{\hspace{0mm}}
 \phi\drm{out}\dgg \ (
\contraction[2ex]{}{\phi}{\hspace{19mm}}{\hspace{0mm}}
\phi\dd{\mas}\dgg \ 2\lie \ \phi\dd{\mas}) \
 \phi\drm{in}
\\ &=
 \wick{\phi\drm{out}\dgg}{\phi\dd{\mas}}\trans  (2\lie) \
 \wick{\phi\dd{\mas}\dgg}{\phi\drm{in}}\trans,
\\ 
 \left(
 2\urm{nd}\mbox{ order of }
 \wick{\phi\drm{out}\dgg}{\phi\drm{in}}\urm{int}
\right)
\nn \\ & \hspace{-15mm} =
 2!\frac{1}{2!} 
\contraction{}{\phi}{\hspace{21mm}}{\hspace{0mm}}
 \phi\drm{out}\dgg \ ( 
\contraction[2ex]{}{\phi}{\hspace{35mm}}{\hspace{0mm}}
\phi\dd{\mas}\dgg \ 2\lie \ \phi\dd{\mas}) \ (
\contraction{}{\phi}{\hspace{21mm}}{\hspace{0mm}}
 \phi\dd{\mas}\dgg (2\lie) \phi\dd{\mas} ) \
 \phi\drm{in}
\\ & \hspace{-15mm} =
 \wick{\phi\drm{out}\dgg}{\phi\dd{\mas}}\trans  (2\lie) \
 \wick{\phi\dd{\mas}\dgg}{\phi\dd{\mas}}  \ (2\lie) \ 
 \wick{\phi\dd{\mas}\dgg}{\phi\drm{in}}\trans.
\end{align}\normalsize
\end{subequations}
As a result,
\small\begin{align}
 \wick{\phi\drm{out}\dgg}{\phi\drm{in}}\urm{int}
 = &
 \wick{\phi\drm{out}\dgg}{\phi\drm{in}}
\nn \\ & +   
 \wick{\phi\drm{out}\dgg}{\phi\dd{\mas}}\trans  (2\lie) 
\Bigl\{
 1-\wick{\phi\dd{\mas}\dgg}{\phi\dd{\mas}} (2G)
\Bigr\}\inv
  \wick{\phi\dd{\mas}\dgg}{\phi\drm{in}}\trans.
\label{su2tfdag}
\end{align}\normalsize

For the SU(2) system
\small\begin{align}
 \lie
=
 \A{}{g}
   {-g\uu{*}}{},
\end{align}\normalsize
(\ref{su2tfdag}) is written as
\small\begin{align}
 \wick{\phi\drm{out}\dgg}{\phi\drm{in}}\urm{SU(2)}
& =
  - \, \frac{1-2|g|^2\wick{\mas}{\mas\dgg}}{1+2|g|^2\wick{\mas}{\mas\dgg}}.
\end{align}\normalsize
Compared to (\ref{41ms-2-1}),
we get the same asymmetry as the quantum gates:
\small\begin{align}
 \wick{\phi\drm{out}\dgg}{\phi\drm{in}}\urm{SU(2)}
=
  - \,  \wick{\phi\drm{out}}{\phi\drm{in}\dgg}\urm{SU(2)}.
\end{align}\normalsize
The transfer function (\ref{su2sdag}) is therefore given as
\begin{subequations}
\small\begin{align}
\ipg\dd{\phi\drm{out}\dgg|\phi\drm{in}\dgg}\urm{SU(2)}
&=
 - \wick{\phi\drm{out}\dgg}{\phi\drm{in}}\urm{SU(2)}
\\ &=
  \wick{\phi\drm{out}}{\phi\drm{in}\dgg}\urm{SU(2)}
\\ &=
 \ipg\dd{\phi\drm{out}|\phi\drm{in}}\urm{SU(2)}.
\end{align}\normalsize
\end{subequations}
For unitary systems,
the transfer functions 
\en{ (\phi\drm{out}     \gets \phi\drm{in}) } and 
\en{ (\phi\drm{out}\dgg \gets \phi\drm{in}\dgg) } 
are always equivalent due to this asymmetry.

\newpage

\subsection{Summary}
\label{sec:su2summ}

For the single-mode SU(2) system,
transfer functions are given as follows:

\begin{theorem}
\label{thm:tfbs}
The contractions of the single-mode SU(2) system are given as
\small\begin{align}
\renewcommand{\arraystretch}{1.3}
\begin{array}{ll}
 \wick{\phi\dd{\mathrm{out}}}{\phi\dd{\mathrm{in}}\dgg}^{\mathrm{SU(2)}}
=
 \dtf{-2|g|^2}{-2g\uu{*}}{2g}{1},
&
 \wick{\phi\dd{\mathrm{out}}\dgg}{\phi\dd{\mathrm{in}}}^{\mathrm{SU(2)}}
=
 -\dtf{-2|g|^2}{-2g\uu{*}}{2g}{1},
\\
 \wick{\mas}{\phi\dd{\mathrm{in}}\dgg}^{\mathrm{SU(2)}}
=
 \dtf{-2|g|^2}{-2g\uu{*}}{1}{0},
&
 \wick{\mas\dgg}{\phi\dd{\mathrm{in}}}^{\mathrm{SU(2)}}
 =
 \dtf{-2|g|^2}{1}{2g}{0},
\rule[0mm]{0mm}{9mm}
\\
 \wick{\phi\dd{\mathrm{in}}}{\phi\dd{\mathrm{in}}\dgg}^{\mathrm{SU(2)}}
=
 1,
&
 \wick{\phi\dd{\mathrm{in}}\dgg}{\phi\dd{\mathrm{in}}}^{\mathrm{SU(2)}}
=
-1,
\\ \\
 \wick{\phi\dd{\mathrm{out}}}{\mas\dgg}^{\mathrm{SU(2)}}
=
 \dtf{-2|g|^2}{1}{2g}{0},
&
 \wick{\phi\dd{\mathrm{out}}\dgg}{\mas}^{\mathrm{SU(2)}}
=
 \dtf{-2|g|^2}{-2g\uu{*}}{1}{0},
\\
 \wick{\mas}{\mas\dgg}^{\mathrm{SU(2)}}
=
 \dtf{-2|g|^2}{1}{1}{0},
&
 \wick{\mas\dgg}{\mas}^{\mathrm{SU(2)}}
=
-\dtf{-2|g|^2}{1}{1}{0},
\rule[0mm]{0mm}{9mm}
\\
 \wick{\phi\dd{\mathrm{in}}}{\mas\dgg}^{\mathrm{SU(2)}}
=
 0,
&
 \wick{\phi\dd{\mathrm{in}}\dgg}{\mas}^{\mathrm{SU(2)}}
=
 0,
\\ \\
 \wick{\phi\dd{\mathrm{out}}}{\phi\dd{\mathrm{out}}\dgg}^{\mathrm{SU(2)}}
=
 1,
&
 \wick{\phi\dd{\mathrm{out}}\dgg}{\phi\dd{\mathrm{out}}}^{\mathrm{SU(2)}}
=
 -1,
\\
 \wick{\mas}{\phi\dd{\mathrm{out}}\dgg}^{\mathrm{SU(2)}}
=
 0,
&
 \wick{\mas\dgg}{\phi\dd{\mathrm{out}}}^{\mathrm{SU(2)}}
=
 0,
\\
 \wick{\phi\dd{\mathrm{in}}}{\phi\dd{\mathrm{out}}\dgg}^{\mathrm{SU(2)}}
=
 -1,
&
 \wick{\phi\dd{\mathrm{in}}\dgg}{\phi\dd{\mathrm{out}}}^{\mathrm{SU(2)}}
=
 1.
\end{array}
\renewcommand{\arraystretch}{1}
\label{tfbs}
\end{align}\normalsize
These satisfy
\small\begin{align}
 \wick{A\dgg}{B}\uu{\mathrm{SU(2)}}
=
 -\left(\wick{A}{B\dgg}\uu{\mathrm{SU(2)}}\right)\uu{*}.
\end{align}\normalsize
\end{theorem}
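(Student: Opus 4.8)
The plan is to compute all the contractions in Theorem \ref{thm:tfbs} by the same \textit{S}-matrix perturbation machinery already used for the transfer function $\phi_{\mathrm{out}}\gets\phi_{\mathrm{in}}$ in (\ref{su2sysgen})--(\ref{41ms-2-1}), specializing the general resummation formula
\small\begin{align}
 \wick{A}{B\dgg}\urm{int}
&=
 \wick{A}{B\dgg}
+
 \wick{A}{\phi\dd{\mas}\dgg} (2\lie)
 \Bigl\{ 1- \wick{\phi\dd{\mas}}{\phi\dd{\mas}\dgg}(2\lie) \Bigr\}\inv
 \wick{\phi\dd{\mas}}{B\dgg}
\label{resumgen}
\end{align}\normalsize
to each choice of $A,B$ drawn from $\{\phi\drm{in},\phi\drm{out},\mas\}$ (and their daggers). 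The only inputs needed are the free-field contractions of $\phi\dd{\mas}$ collected in Lemma \ref{lem:tfbefore}, the expression $\wick{\mas}{\mas\dgg}=1/s$ for the single mode from (\ref{ftf}), and the reactance matrix $\lie=\A{}{g}{-g\uu{*}}{}$.

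First I would organize the twelve $A|B$ pairs into four groups according to where $A$ and $B$ sit relative to the gate. The pairs $(\phi\drm{in},\phi\drm{in})$, $(\phi\drm{in},\mas)$, $(\mas,\phi\drm{out})$, $(\phi\drm{in},\phi\drm{out})$ and their daggered counterparts all have at least one vanishing free-field factor $\wick{A}{\phi\dd{\mas}\dgg}=0$ or $\wick{\phi\dd{\mas}}{B\dgg}=0$ (Lemma \ref{lem:tfbefore}), so the correction term in (\ref{resumgen}) dies and the answer is just the free-field contraction: this immediately yields $\wick{\phi\drm{in}}{\phi\drm{in}\dgg}=1$, $\wick{\phi\drm{in}}{\mas\dgg}=0$, $\wick{\mas}{\phi\drm{out}\dgg}=0$, $\wick{\phi\drm{in}}{\phi\drm{out}\dgg}=-1$, $\wick{\phi\drm{out}}{\phi\drm{out}\dgg}=1$, and their daggered forms by the asymmetry relation. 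The nontrivial pairs are the six with both $A$ and $B$ coupling through the loop: $(\phi\drm{out},\phi\drm{in})$ (already done), $(\mas,\phi\drm{in})$, $(\phi\drm{out},\mas)$, $(\mas,\mas)$. For each I would plug the relevant entries of Lemma \ref{lem:tfbefore} into (\ref{resumgen}), carry out the $2\times2$ geometric-series resummation $\{1-\wick{\phi\dd{\mas}}{\phi\dd{\mas}\dgg}(2\lie)\}\inv$, and then recognize the scalar result as a transfer function in the $\dtf{A}{B}{C}{D}$ form — e.g. $\wick{\mas}{\phi\drm{in}\dgg}=\dfrac{-2g\uu{*}\wick{\mas}{\mas\dgg}}{1+2|g|^2\wick{\mas}{\mas\dgg}} = \dfrac{-2g\uu{*}}{s+2|g|^2}=\dtf{-2|g|^2}{-2g\uu{*}}{1}{0}$, and similarly for $\wick{\phi\drm{out}}{\mas\dgg}$, $\wick{\mas}{\mas\dgg}\urm{SU(2)}$, $\wick{\mas\dgg}{\phi\drm{in}}$.

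The daggered contractions are then obtained for free from the asymmetry identity $\wick{A\dgg}{B}\uu{\mathrm{SU(2)}}=-(\wick{A}{B\dgg}\uu{\mathrm{SU(2)}})\uu{*}$, which itself is proved exactly as in the $(\phi\drm{out}\dgg,\phi\drm{in}\dgg)$ computation of (\ref{su2tfdag}): the daggered expansion replaces $\wick{A}{\phi\dd{\mas}\dgg}$ by $\wick{A\dgg}{\phi\dd{\mas}}\trans$ and $2\lie$ by $2\lie$ acting on the transposed side, and for the SU(2) reactance matrix one checks $(2\lie)\trans = -(2\lie)\uu{*}$ together with $\wick{\mas\dgg}{\mas}=-\wick{\mas}{\mas\dgg}$ (the forward-traveling asymmetry, Section \ref{lem:asymmetry}), so every factor picks up a complex conjugate and an overall sign. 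I would state this once and apply it uniformly rather than redo each case. The main obstacle I anticipate is purely bookkeeping rather than conceptual: keeping the $2\times2$ matrix structure of $\phi\dd{\mas}=\AV{\mm{\phi}}{\mas}$ straight — which slot is the free field and which is the loop field — and correctly tracking the signs and conjugates coming from the three distinct involutions in play ($\dgg$, $\ddgg$, and the time-reversal asymmetry of the forward-traveling field). Getting $\wick{\mas}{\mas\dgg}\urm{SU(2)}=\dtf{-2|g|^2}{1}{1}{0}$ versus its daggered partner with the extra minus sign is the kind of place a sign error would hide, so I would double-check it against the state-equation form (\ref{stateq}) of the classical SU(2) system, where $\mas$ has $A=-2|g|^2$, $B=-2g\uu{*}$, $C=2g$, consistent with the claimed $\mas$-row and $\mas$-column transfer functions.
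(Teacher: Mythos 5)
Your proposal is correct and follows essentially the same route as the paper: Theorem \ref{thm:tfbs} is presented there as a summary of applying the resummation formula (\ref{su2sysgen}) with the free-field contractions of Lemma \ref{lem:tfbefore} and the single-mode $\wick{\mas}{\mas\dgg}=1/s$ to each choice of $A,B$, with the daggered entries obtained from the asymmetry established in the $(\phi\drm{out}\dgg,\phi\drm{in}\dgg)$ computation (\ref{su2tfdag}). Your grouping by vanishing free-field factors and the cross-check against the state equation (\ref{stateq}) are exactly the bookkeeping the paper relies on.
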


\begin{remark}
In this section, 
we have demonstrated single-mode and infinite-mode systems.
A finite-multimode system can be obtained in the same way.
Consider 
\small\begin{align}
 \wick{\mas}{\mas\dgg}
&=
 \frac{1}{l}\sum_{n=1}^N \frac{1}{s+\im \omega_n},
\end{align}\normalsize
where \en{ \omega_n=2\pi n/l }.
Substituting this into (\ref{41ms-2}),
we get the transfer function of the finite-multimode system as
\small\begin{align}
\wick{\phi\dd{\mathrm{out}}}{\phi\dd{\mathrm{in}}}\uu{\mathrm{SU(2)}}
&=
 \dtf{-2|g|^2\bm{F}\bm{F}\dgg-\im\Omega}{-2g\uu{*}\bm{F}}
    {2g\bm{F}\dgg}{1},
\label{2mode}
\end{align}\normalsize
where
\small\begin{align}
 \bm{F}&=\BV{1/l}{\vdots}{1/l},
\qquad
 \Omega=\B{\omega_1}{}{}{}{\ddots}{}{}{}{\omega_N}.
\end{align}\normalsize
\end{remark}

\newpage

\section{Time-varying SU(2) system}

The interaction Lagrangian of 
the time-varying SU(2) system\index{SU(2) system (time-varying)}
is given as (\ref{tvsystem}):
\begin{subequations}
 \small\begin{align}
 \im \lag_{\mas}\urm{TV}
&=
 \mm{\phi}_{\mas}\dgg \circ (2\lie) \ \mm{\phi}_{\mas} 
\\ &=
 \mm{\phi}_{\mas}\dgg \ (2\lie)\ast \mm{\phi}_{\mas},
 \end{align}\normalsize
\end{subequations}
where
\small\begin{align}
 \lie(t)
\equiv
 \A{}{g(t)}
   {-g\uu{*}(t)}{}.
\end{align}\normalsize
Here we rewrite this as
 \small\begin{align}
\im \lag_{\mas}\urm{TV}
&=
 \mm{\phi}_{\mas}\dgg  (2\lie) \ \mm{\phi}_{\mas},
 \end{align}\normalsize
where
\small\begin{align}
 \lie
\equiv
 \A{}{g \, \ast}
   {\circ (-g\uu{*})}{}.
\end{align}\normalsize
Then the transfer function is expressed 
in the same form as the SU(2) system:
\begin{subequations}
\small\begin{align}
 \ipg\dd{\phi\drm{out} | \phi\drm{in}}\urm{int}
&=
  \wick{\phi\drm{out}}{\phi\drm{in}\dgg}
+
 \wick{\phi\drm{out}}{\phi\dd{\mas}\dgg}  (2\lie) 
 \wick{\phi\dd{\mas}}{\phi\drm{in}\dgg}
\\ & \hspace{23mm} +
 \wick{\phi\drm{out}}{\phi\dd{\mas}\dgg}  (2\lie) 
 \wick{\phi\dd{\mas}}{\phi\dd{\mas}\dgg}  (2\lie) 
 \wick{\phi\dd{\mas}}{\phi\drm{in}\dgg}
\label{tvsu22nd}
\\ & \hspace{23mm} + \cdots.
\end{align}\normalsize
\end{subequations}

Let us calculate each term carefully.
The second order correction (\ref{tvsu22nd}) is written as
\small\begin{align}
 & \AH{1}{0}
 \A{}{2g \, \ast}
   {\circ (-2g\uu{*})}{}
 \A{\ffrac{1}{2}}{}
   {}{\wick{\mas}{\mas\dgg}}
 \A{}{2g \, \ast}
   {\circ (-2g\uu{*})}{}
 \AV{1}{0},
\end{align}\normalsize
where we have used Lemma \ref{lem:tfbefore}.
Higher order terms are given as
\begin{subequations}
\small\begin{align}
 2\urm{nd} \mbox{\normalsize order: \small}&
 2 \left\{ g\ast \wick{\mas}{\mas\dgg} \circ (-2g\uu{*})\right\},
\\
 4\urm{th} \mbox{\normalsize order: \small}&
 2 \left\{ g\ast \wick{\mas}{\mas\dgg} \circ (-2g\uu{*})\right\}^2.
\end{align}\normalsize
\end{subequations}
As a result, 
the transfer function is given by
\small\begin{align}
 \ipg\dd{\phi\drm{out} | \phi\drm{in}}\urm{TV}
&=
 \frac{1+  g\ast \wick{\mas}{\mas\dgg} \circ (-2g\uu{*}) }
      {1-  g\ast \wick{\mas}{\mas\dgg} \circ (-2g\uu{*}) }
\end{align}\normalsize
In the frequency domain, this is expressed as
\small\begin{align}
 \ipg\dd{\phi\drm{out} | \phi\drm{in}}\urm{TV}(s)
&=
\kakkon{\ffrac{s-2g(s)g\uu{*}(-s)}{s+2g(s)g\uu{*}(-s)}, 
 \hspace{42mm} (\mbox{single mode})}
 {\ffrac{1 - g(s)g\uu{*}(-s) - \ex\uu{-sl}\bigl[1 + g(s)g\uu{*}(-s)\bigr]}
 {1 + g(s)g\uu{*}(-s) - \ex\uu{-sl}\bigl[1 - g(s)g\uu{*}(-s)\bigr]}, 
        \quad (\mbox{infinite mode})\rule[0mm]{0mm}{8mm}}
\end{align}\normalsize
which is consistent with the classical result (\ref{tvtf}).

\newpage

\section{Feedback connection}

\begin{wrapfigure}[0]{r}[53mm]{49mm} 
\vspace{-5mm}
\centering
\includegraphics[keepaspectratio,width=35mm]{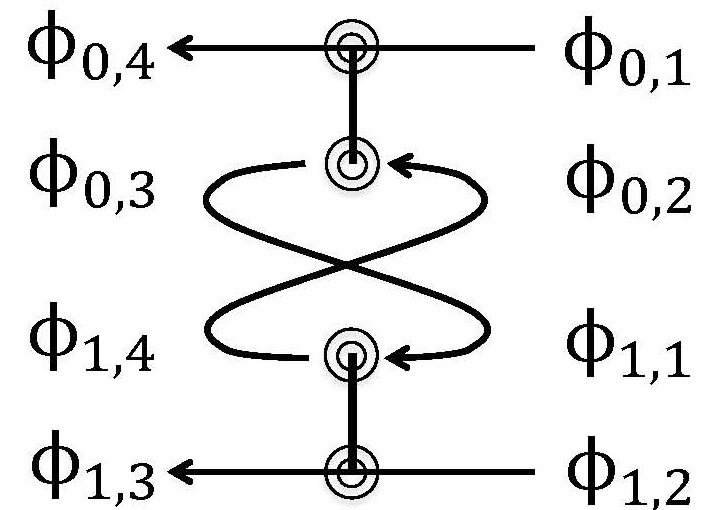}
\caption{
\small
Feedback connection.
\normalsize
}
\label{fig-cassys1-1}
\end{wrapfigure}

As another example of our approach,
let us consider the feedback connection
in Section \ref{sec:singlefb}.
The interaction Lagrangian is given by
\small\begin{align}
 \im \lag\urm{SU(2)}\dd{\mas} 
&=
 \phi\dd{\mas}\dgg \, 2\lie_T \, \phi\dd{\mas},
\end{align}\normalsize
where
\small\begin{align}
 \phi\dd{\mas}
&\equiv
 \BV{\mm{\phi}_1}
    {\mas}
    {\mm{\phi}_4},
\quad
 \lie_T
\equiv
 \B{}{g_0}{}
   {-g_0\uu{*}}{0}{g_1}
   {}{-g_1\uu{*}}{}.
\end{align}\normalsize
In this case, 
the input and the output are defined as
\small\begin{align}
 \phi\drm{out}=\AV{\phi_1}{\phi_4}\drm{out}
\quad
 \phi\drm{in}=\AV{\phi_1}{\phi_4}\drm{in},
\end{align}\normalsize
for which
\begin{subequations}
\small\begin{align}
 \wick{\phi\drm{out}}{\phi\dd{\mas}\dgg}
&=
 \left[
\begin{array}{ccc}
 1 & 0 & 0 \\
 0 & 0 & 1
\end{array}
\right],
\\
 \wick{\phi\dd{\mas}}{\phi\drm{in}\dgg}
&=
 \left[
\begin{array}{cc}
 1 & 0  \\
 0 & 0 \\ 
 0 & 1
\end{array}
\right],
\\
 \wick{\phi\dd{\mas}}{\phi\dd{\mas}\dgg}
&=
 \B{\ffrac{1}{2}}{}{}
   {}{\hspace{-2mm} \wick{\mas}{\mas\dgg}}{}
   {}{}{\hspace{-2mm} \ffrac{1}{2}}.
\end{align}\normalsize
\end{subequations}
Then (\ref{su2sysgen}) is written as
\small\begin{align}
 \wick{\phi\drm{out}}{\phi\drm{in}\dgg}
&=
 \frac{1}{1 + 2(|g_0|^2 + |g_1|^2) \wick{\mas}{\mas\dgg}}
\nn\\ & \hspace{5mm} \times
\renewcommand{\arraystretch}{1.5}
 \A{1 - 2 (|g_0|^2 + |g_1|^2) \wick{\mas}{\mas\dgg} }
   {4g_0 g_1 \wick{\mas}{\mas\dgg} }
   {4g_0\uu{*}g_1\uu{*} \wick{\mas}{\mas\dgg}}
   {1 - 2(|g_0|^2 + |g_1|^2) \wick{\mas}{\mas\dgg}}.
\renewcommand{\arraystretch}{1}
\end{align}\normalsize

\begin{wrapfigure}[0]{r}[53mm]{49mm} 
\centering
\vspace{-30mm}
\includegraphics[keepaspectratio,width=28mm]{fig-fbconnection2.jpg}
\caption{
\small
Termination
\normalsize
}
\label{fig-fbconnection2-2}
\end{wrapfigure}

If there is a termination as in Figure \ref{fig-fbconnection2-2},
the transfer function is obtained by replacing
\small\begin{align}
 \phi\dd{\mas}
&\equiv
 \BV{\mm{\phi}_1}
    {\mas}
    {\mas\dd{s}},
\quad
 \wick{\phi\dd{\mas}}{\phi\dd{\mas}\dgg}
=
 \B{\ffrac{1}{2}}{}{}
   {}{ \wick{\mas}{\mas\dgg}}{}
   {}{}{ \wick{\mas\dd{s}}{\mas\dd{s}}},
\end{align}\normalsize
which results in
\small\begin{align}
 \wick{\phi\drm{out}}{\phi\drm{in}\dgg}
&=
 \frac{1 
    - 2|g_0|^2\wick{\mas}{\mas\dgg} 
    + 4|g_1|^2\wick{\mas}{\mas\dgg}\wick{\mas\dd{s}}{\mas\dd{s}\dgg}}
      {1 
    + 2|g_0|^2\wick{\mas}{\mas\dgg} 
    + 4|g_1|^2\wick{\mas}{\mas\dgg}\wick{\mas\dd{s}}{\mas\dd{s}\dgg}}.
\end{align}\normalsize


\section{SU(2) Dirac system}
\label{diracsyssm}

Our approach can be applied to 
the SU(2) Dirac system as well\index{SU(2) system (Dirac)}.
The interaction Lagrangian has been given in (\ref{dsu2int}):
\small\begin{align}
 \im \lag\urm{SU(2)}_{\mas}
&=
 \widebar{\psi}_{\mas} \gamma\uu{z} (2\lie) \ \psi_{\mas},
\end{align}\normalsize
where
\small\begin{align}
 \psi_{\mas}
=
 \AV{\mm{\psi}}{\mas},
\quad
 \lie
=
 \A{}{g}{-g\uu{*}}{}.
\end{align}\normalsize

\begin{lemma}
 The free field transfer functions of the Dirac field
are given by
\begin{subequations}
\label{lem:ffdirac}
\small\begin{align}
 \wick{\psi\dd{\mathrm{in}}}{\widebar{\psi}\dd{\mathrm{in}}}
&=
 -\gamma\uu{z},
\\
 \wick{\psi\dd{\mathrm{out}}}{\widebar{\psi}\dd{\mathrm{in}}}
&=
 -\gamma\uu{z},
\label{lem:ffdirac-2}
\\
 \wick{\psi\dd{\mathrm{in}}}{\widebar{\psi}\dd{\mathrm{out}}}
&=
 \gamma\uu{z},
\label{lem:ffdirac-3}
\\
 \wick{\psi\dd{\mathrm{out}}}{\widebar{\psi}\dd{\mathrm{out}}}
&=
 -\gamma\uu{z},
\\
 \wick{\mas}{\widebar{\mas}}
&=
 (s\gamma^0 + \im m)\inv.
\end{align}\normalsize
\end{subequations}
\end{lemma}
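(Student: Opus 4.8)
The plan is to compute the five contractions directly from the free-field mode expansions, in exact parallel with the forward-traveling field lemmas (Lemma \ref{lem:freefield} and Lemma \ref{lem:mmtrans}) but now carrying the Dirac-adjoint structure through. Recall that a transfer function for the Dirac field is $\ipg\dd{A|B} = \wick{A}{B\ddgg}$ with $\psi\ddgg = \widebar{\psi}$, so the objects in the statement are the raw contractions $\wick{A}{B}$ with $B \in \{\widebar{\psi}\dd{\mathrm{in}}, \widebar{\psi}\dd{\mathrm{out}}, \widebar{\mas}\}$, not yet the transfer functions themselves. The key algebraic fact I would lean on is the equal-time anticommutator $\{\psi\dd{\alpha}, \widebar{\psi}\dd{\beta}\} = \gamma^0\dd{\alpha\beta}\delta(\cdots)$ obtained from (\ref{def:shadirac}) together with $\psi\ddgg = \widebar{\psi}$, which is the fermionic analogue of $[\phi, \phi\ddgg] = \delta$; the extra $\gamma^0$ (or $\gamma^z$, once the quantization surface for the traveling field is chosen along $z$) is precisely what produces the $-\gamma\uu{z}$ appearing throughout.

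First I would establish the $\mas$ line. Since $\mas$ is a closed-loop Dirac field, $\wick{\mas}{\widebar{\mas}} = \wick{\mas}{\mas\ddgg} = \ipg\dd{\mas|\mas}$, which is exactly the $\bm{p}=0$ Feynman propagator computed in Section \ref{sec:p=0}: from (\ref{fourier-1}) we have $\ipg\dd{\mas|\mas}(s) = [s\gamma^0 + \im m + \epsilon]\inv$, so dropping the regulator gives $\wick{\mas}{\widebar{\mas}} = (s\gamma^0 + \im m)\inv$. This is essentially a citation of an already-proved theorem, so it costs nothing.

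Next I would handle the four $\{\psi\dd{\mathrm{in}}, \psi\dd{\mathrm{out}}\}$ entries. The physical content is that a lumped SU(2) gate sits at a point in $z$, the input is the traveling Dirac field just before the gate and the output just after, and in free space (empty gate) the output equals the input propagated an infinitesimal distance forward. Following the argument of Lemma \ref{lem:freefield}: the contraction at a fixed point along the causal (forward) direction is $+1$ times the anticommutator structure, i.e. $\wick{\psi\dd{\mathrm{in}}}{\widebar{\psi}\dd{\mathrm{in}}} = -\gamma\uu{z}$, and since $\psi\dd{\mathrm{out}}(t) = \lim_{\epsilon\to 0_+}\psi\dd{\mathrm{in}}(t+\epsilon)$ the same value carries over to $\wick{\psi\dd{\mathrm{out}}}{\widebar{\psi}\dd{\mathrm{in}}}$ and to $\wick{\psi\dd{\mathrm{out}}}{\widebar{\psi}\dd{\mathrm{out}}}$. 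The one sign flip, $\wick{\psi\dd{\mathrm{in}}}{\widebar{\psi}\dd{\mathrm{out}}} = +\gamma\uu{z}$, comes from the asymmetry / anticausal solution, exactly as (\ref{unpas}) gives $\ipg\dd{\phi\dd{\mathrm{in}}|\phi\dd{\mathrm{out}}} = -1$ for the boson; the overall minus on $\gamma\uu{z}$ versus the boson's $+1$ is the $\{\psi,\widebar\psi\}=\gamma^0$ normalization convention (with $\gamma^0$ on the $z$-quantization slice written as $\gamma\uu{z}$ in the one-dimensional reduction relevant here). I should be careful to state explicitly why $\gamma^0$ is replaced by $\gamma\uu{z}$ — this is the choice of quantization surface for a field that propagates along $z$ rather than in time, paralleling how (\ref{dsu2int}) carries $\alpha\uu{z}$ rather than $\beta$.

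The main obstacle I anticipate is bookkeeping of the signs and of exactly which $\gamma$-matrix appears: the fermionic time-ordering carries the extra minus sign relative to bosons (cf. (\ref{abba}) versus (\ref{prop:ordersym})), the double-dagger relation (\ref{def:shadirac}) inserts a $\gamma^0$, and the reduction to the one-dimensional traveling setup trades $\gamma^0$ for $\gamma\uu{z}$; getting the composite sign right in $\wick{\psi\dd{\mathrm{in}}}{\widebar{\psi}\dd{\mathrm{out}}} = +\gamma\uu{z}$ while all three diagonal/forward entries are $-\gamma\uu{z}$ is where an error would most likely creep in. Everything else is a transcription of the boson lemmas with $1 \mapsto -\gamma\uu{z}$ and an appeal to the already-established $\bm p = 0$ propagator for the $\mas$ entry, so once the conventions are pinned down the proof is short.
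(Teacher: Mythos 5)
Your proposal is correct and follows essentially the same route as the paper, which offers no more detail than that the entries "follow from the definition in Section \ref{sec:dfde}" (your citation of the $\bm{p}=0$ propagator (\ref{fourier-1}) for the $\wick{\mas}{\widebar{\mas}}$ entry is exactly that) together with the remark that (\ref{lem:ffdirac-2}) and (\ref{lem:ffdirac-3}) exhibit the same causal/anticausal asymmetry as the forward traveling field. Your sketch is, if anything, more explicit than the paper's, correctly paralleling Lemma \ref{lem:freefield} and Section \ref{lem:asymmetry} and isolating the $\gamma^0\to\gamma\uu{z}$ bookkeeping as the only point needing care.
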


These follow from the definition in Section \ref{sec:dfde}.
Note that (\ref{lem:ffdirac-2}, \ref{lem:ffdirac-3})
indicate the same asymmetry as the forward traveling field.
This lemma leads to
\begin{subequations}
\label{skk}
\small\begin{align}
 \wick{\psi\dd{\mas}}{\widebar{\psi}\drm{in}}
&=
 \AV{-\gamma\uu{z}}{0}, 
\\
 \wick{\psi\drm{out}}{\widebar{\psi}\dd{\mas}}
&=
 \AH{-\gamma\uu{z}}{0},
\\
 \wick{\psi\dd{\mas}}{\widebar{\psi}\dd{\mas}}
&=
 \A{-\ffrac{\gamma\uu{z}}{2}}{}
   {}{\wick{\mas}{\widebar{\mas}}},
\\
 \wick{\psi\dd{\mas}}{\widebar{\psi}\drm{out}}
&=
 0,
\\
 \wick{\psi\drm{in}}{\widebar{\psi}\dd{\mas}}
&=
 0.
\end{align}\normalsize
\end{subequations}

The input-output relation of a Dirac system
is given in the same way as (\ref{su2sysgen}):
\begin{subequations}
\small\begin{align}
 \ipg\dd{\psi\drm{out} | \psi\drm{in}}\urm{int}
&=
 \wick{\psi\drm{out}}{\widebar{\psi}\drm{in}}\urm{int}
\\ &=
 \wick{\psi\drm{out}}{\widebar{\psi}\drm{in}}
+
 \wick{\psi\drm{out}}{\widebar{\psi}\dd{\mas}}  \gamma\uu{z} (2\lie) 
 \left\{
 1- \wick{\psi\dd{\mas}}{\widebar{\psi}\dd{\mas}}  \gamma\uu{z} (2\lie) 
\right\}\inv
 \wick{\psi\dd{\mas}}{\widebar{\psi}\drm{in}}.
\end{align}\normalsize
\end{subequations}
Using (\ref{skk}), 
we have
\begin{subequations}
 \small\begin{align}
 \wick{\psi\drm{out}}{\widebar{\psi}\drm{in}}
&=
 \frac{1-2|g|^2\wick{\mas}{\widebar{\mas}}\gamma\uu{z}}
      {1+2|g|^2\wick{\mas}{\widebar{\mas}}\gamma\uu{z}}
 (-\gamma\uu{z})
\\ &=
 \dtf{-2|g|^2 \alpha\uu{z} -\im m\beta}{-2g\uu{*}\alpha\uu{z}}
    {2g}{1}
(-\gamma\uu{z}),
\end{align}\normalsize
\end{subequations}
which is the same form as the classical result (\ref{dtf}).

\chapter{Extra interactions in SU(2) systems}
\label{chap:additional}
\thispagestyle{fancy}

Here we consider 
linear/nonlinear interactions
placed in the SU(2) system.
We first discuss a relationship between
feedback and the Dyson equation.
(This has been briefly examined for the example of the d-feedback
in Section \ref{sec:circuitss}.)
Then two linear examples are demonstrated.
Nonlinear examples will be considered in Chapter \ref{chap:nonlinear}.

\section{Feedback and the Dyson equation}

Suppose a Lagrangian of the form
\small\begin{align}
 \lag\urm{f}+ \lag\urm{SU(2)}_{\mas} + \lag(\mas,\mas\dgg),
\end{align}\normalsize
in which each term describes
\begin{subequations}
\small\begin{align}
 \lag\urm{f} :& \quad
 \mbox{\normalsize free field,\small}
\\
 \lag\urm{SU(2)}_{\mas} :& \quad
 \mbox{\normalsize SU(2) system,\small}
\\ 
 \lag(\mas,\mas\dgg) :& \quad
 \mbox{\normalsize additional linear/nonlinear interaction.\small} 
\end{align}\normalsize
\end{subequations}
We have considered 
the input-output relation of the `empty' SU(2) system 
by regarding \en{ \lag\urm{SU(2)}_{\mas} } as a perturbation
in Chapter \ref{sec:tfs}.
Now we know all transfer functions for the SU(2) system.
Here \en{ \lag(\mas,\mas\dgg) } is regarded as a perturbation
to examine the effect of the additional interactions.

\subsection{Transfer function \en{ \mas \gets \mas } revisited}
\label{sec:dyson}

Before proceeding to deal with \en{ \lag(\mas,\mas\dgg) },
let us reconsider the transfer function \en{ \mas \gets \mas }
of the empty SU(2) system in detail.
The interaction Lagrangian is of the form 
\small\begin{align}
 \im \lag\urm{SU(2)}_{\mas}
&=
 \phi\dd{\mas}\dgg (2\lie) \ \phi\dd{\mas},
\end{align}\normalsize
where
\small\begin{align}
 \phi\dd{\mas}
\equiv
 \AV{\mm{\phi}}{\mas}.
\end{align}\normalsize

\newpage

\noindent
As in the preceding chapter,
the transfer function \en{ \mas \gets \mas } is expressed as
\begin{subequations}
\small\begin{align}
\hspace{-5mm}
 \wick{\mas}{\mas\dgg}\urm{SU(2)}
=
  \wick{\mas}{\mas\dgg}
&+
 \wick{\mas}{\phi\dd{\mas}\dgg} (2\lie) 
 \wick{\phi\dd{\mas}}{\mas\dgg}
\\ & +
 \wick{\mas}{\phi\dd{\mas}\dgg} (2\lie) 
 \wick{\phi\dd{\mas}}{\phi\dd{\mas}\dgg}  (2\lie) 
 \wick{\phi\dd{\mas}}{\mas\dgg}
\label{su2dyson-2} 
\\ &  + \cdots,
\end{align}\normalsize
\end{subequations}
where
\small\begin{align}
 \wick{\mas}{\phi\dd{\mas}\dgg}
=
 \AH{0}{\wick{\mas}{\mas\dgg}},
\qquad
 \wick{\phi\dd{\mas}}{\mas\dgg}
=
 \AV{0}{\wick{\mas}{\mas\dgg}}.
\end{align}\normalsize
In this case, 
the first order correction is zero.
The second order term (\ref{su2dyson-2}) is written as
\begin{subequations}
\small\begin{align}
 \wick{\mas}{\mas\dgg} 
(
\underbrace{ 
 -2|g|^2
 }\dd{\hspace{5mm} \equiv -\isel}
)
\wick{\mas}{\mas\dgg}
&=
 \ipg\dd{\mas|\mas} \ (-\isel) \ \ipg\dd{\mas|\mas}. 
\end{align}\normalsize
\end{subequations}
The fourth order term is obtained in the same way.
These non-zero terms are expressed as the following diagrams:

\vspace{0mm}
\begin{figure}[H]
\centering
\includegraphics[keepaspectratio,width=115mm]{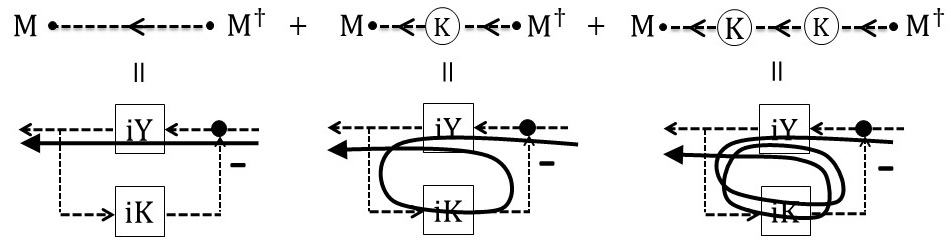}
\end{figure}
\vspace{0mm}

As discussed in Chapter \ref{chap:1},
the expansion of the \textit{S}-matrix is expressed 
as (negative) feedback.
The zeroth order term is the direct through effect
from the input to the output as in the left diagram above.
The second order correction is depicted in the middle diagram
where the signal travels in the feedback loop once.
In the case of the fourth order term,
the signal travels twice.
As a result,
the transfer function \en{ \mas\gets\mas } is given by 
\small\begin{align}
 \ipg\urm{SU(2)}_{\mas|\mas} 
&=
 \Bigl[ 1+ \ipg_{\mas|\mas}\isel \Bigr]\inv  \ipg_{\mas|\mas}.
\label{pdyson2}
\end{align}\normalsize
This is called a feedback transfer function in systems theory.
Alternatively, it is expressed as
\small\begin{align}
 \pg\urm{SU(2)}_{\mas|\mas} 
&=
 \pg_{\mas|\mas} 
+ 
 \pg_{\mas|\mas} \ K \ \pg\urm{SU(2)}_{\mas|\mas}.
\label{pdyson3}
\end{align}\normalsize
In physics, 
this is known as the Dyson equation in which\index{Dyson equation}
\en{ \sel } is called \textit{self-energy}.\index{self-energy}
From (\ref{pdyson2}) and (\ref{pdyson3}),
we have the following correspondence:
\index{bare propagator}\index{dressed propagator}\index{self-energy}

\vspace{0mm}
\begin{table}[H]
\centering
\renewcommand{\arraystretch}{1.8}
 \begin{tabular}{r|l|l} 
  & in physics & in systems theory \\ \hline \hline
    \en{ \pg_{\mas|\mas} } &  bare propagator & plant 
\\ \hline
    \en{ \sel } & self-energy & controller 
\\ \hline
    \en{ \pg\urm{SU(2)}_{\mas|\mas} } & dressed propagator 
                              & feedback transfer function 
 \end{tabular}
\renewcommand{\arraystretch}{1}
\end{table}

\if0
\begin{subequations}
\label{sedef}
\small\begin{align}
 \pg\dd{\mas|\mas}: & \quad \mbox{bare propagator \ (plant)}
\\
 \pg^{SU(2)}\dd{\mas|\mas}: & \quad \mbox{dressed propagator \ (feedback transfer function)}
\\
 \sel: & \quad \mbox{self-energy \ (controller)}
\end{align}\normalsize
\end{subequations}
\fi

\newpage

\subsection{Second perturbation: $\lag(\mas,\mas\dgg)$}
\label{sec:dyson2}

\marginpar{\vspace{8mm}
\footnotesize
 From Theorem \ref{thm:tfbs}, \\ \\
 $ \ipg\dd{\phi\drm{out} |\phi\drm{in} }\urm{SU(2)}
   = \ffrac{s-2|g|^2}{s+2|g|^2},$ \\
 $ \ \ \ \ipg\dd{\mas | \mas }\urm{SU(2)}
   = \ffrac{1}{s+2|g|^2}.$ 
\normalsize
 }

Now let us consider a transfer function
\en{ \phi\drm{out} \gets\phi\drm{in} }
under the additional Lagrangian \en{ \lag }.
The zeroth order term is the direct-through effect,
which is the empty SU(2) system.
For later use, 
we express it as 
\begin{subequations}
\label{kzeroth-l}
\small\begin{align}
 \Bigl( 0\urm{th} \mbox{ order of } 
 \ipg\dd{\phi\drm{out} |\phi\drm{in} }\urm{SU(2)+L}
\Bigr) 
&=  
 \ipg\dd{\phi\drm{out} |\phi\drm{in} }\urm{SU(2)}
\\ &=
 1-4|g|^2  \ipg_{\mas|\mas}\urm{SU(2)},
\end{align}\normalsize
\end{subequations}
where we have used Theorem \ref{thm:tfbs} 
in the second line.
The next order correction involves the following three processes:
\bee
\setlength{\itemsep}{-1mm} 
\item 
 The particles propagate from the input field \en{ \phi\drm{in} } 
 to the SU(2) system \en{ \mas }.
 This is described by a transfer function 
 \en{ \ipg\dd{\mas |\phi\drm{in}}\urm{SU(2)} };
\item 
 The particles interact with each other through the interaction \en{ \lag },
 which is represented by self-energy \en{ \sel^\lag };
\item 
 After the interaction,
 the particles propagate away from the system \en{ \mas } 
 to the output field \en{ \phi\drm{out} },
 which is described by \en{ \ipg\dd{\phi\drm{out} |\mas}\urm{SU(2)} }.
\ee

In higher order corrections,
the second process repeats in the same way as the negative feedback
described in the preceding section.
Accordingly,
the transfer function \en{ \phi\drm{out} \gets \phi\drm{in} } 
is expressed as

\vspace{0mm}
\begin{figure}[H]
\hspace{0mm}
\centering
\includegraphics[keepaspectratio,width=139mm]{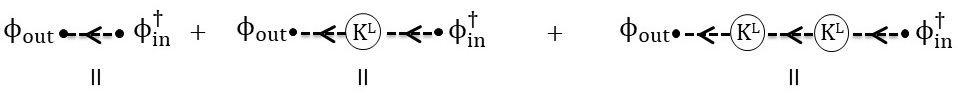}
\end{figure}
\vspace{-13mm}

\small\begin{align}
\hspace{7mm}
 \ipg\dd{\phi\drm{out} | \phi\drm{in} }\urm{SU(2)}
\hspace{10mm}
 \ipg\dd{\phi\drm{out} | \mas}\urm{SU(2)} \, 
 \left(-\isel^L\right) \, 
 \ipg\dd{\mas|\phi\drm{in}}\urm{SU(2)}
\hspace{10mm}
 \ipg\dd{\phi\drm{out} |\mas}\urm{SU(2)} \, 
 \left(-\isel^L\right) \,
 \ipg_{\mas|\mas}\urm{SU(2)} \, 
 \left(-\isel^L\right) \, \ipg\dd{\mas|\phi\drm{in}}^{SU(2)}
\nn\\
\nn\\
\label{dysonext}
\end{align}\normalsize

\vspace{0mm}

\noindent
Using Theorem \ref{thm:tfbs} again,
we can express this series as
\begin{subequations}
 \label{41dyson0} 
\small\begin{align}
 \ipg\dd{\phi\drm{out} | \phi\drm{in} }\urm{SU(2)+L}
&=
1+
(-4|g|^2)
\underbrace{
\left[
 1- \ipg_{\mas|\mas}\urm{SU(2)} \ \isel\urm{L}
 +\cdots
\right]
 \ipg_{\mas|\mas}\urm{SU(2)} }
\dd{\textrm{the Dyson equation (\ref{pdyson2})}}
\\ &=
\left[
 1+ \ipg_{\mas|\mas}\urm{SU(2)} \ \isel\urm{L}
\right]\inv 
\left[
 \ipg\dd{\phi\drm{out} | \phi\drm{in}}\urm{SU(2)} 
+ 
 \ipg_{\mas|\mas}\urm{SU(2)} \ \isel\urm{L}
\right]
\\ &=
\kakkon{\ffrac{s-2|g|^2+\isel^L(s)}{s+2|g|^2+\isel^\lag(s)}, 
 \hspace{52mm}
 (\mbox{single mode})}
{\ffrac{(1+\ex\uu{-sl})\isel^L(s) - 2(|g|^2-1)-2\ex\uu{-sl}(|g|^2+1)}
       {(1+\ex\uu{-sl})\isel^L(s) + 2(|g|^2+1)+2\ex\uu{-sl}(|g|^2-1)}.
\hspace{5mm}
 (\mbox{infinite mode})
\rule[0mm]{0mm}{9mm}}
 \label{41dyson} 
\end{align}\normalsize
\end{subequations}
where we have used (\ref{ftf}) in the last line.

\begin{remark}
 Note that there are examples to which the Dyson equation 
is not applicable.
For example,
self-energy was zero 
for d-feedforward in Section \ref{sec:ffq}.
\end{remark}

\newpage

\section{Linear examples}
\label{sec:exlinear}

If \en{ \lag } is linear, \en{ \isel\urm{L} } is constant.
Then the single-mode transfer function is simplified as
\small\begin{align}
 \ipg\dd{\phi\drm{out} | \phi\drm{in} }\urm{SU(2)+L}
&=
 \dtf{-2|g|^2-\isel^L}{-2g\uu{*}}{2g}{1}.
\label{constl}
\end{align}\normalsize
We shall show two examples of this case.

\subsection{Detuning}
\label{detuning}

Consider a case where 
the additional Lagrangian is given as\index{detuning}
\small\begin{align}
 \im \lag\urm{DT}
&= 
 - \im \Omega \mas\dgg \mas.
\qquad
 (\Omega\in\mathbb{R})
\end{align}\normalsize
Here we calculate 
a transfer function \en{ \Phi\drm{out} \gets \Phi\drm{in} }
\small\begin{align}
 \ipg\dd{\Phi\drm{out}|\Phi\drm{in}}\urm{SU(2)+DT} 
&=
 \wick{\Phi\drm{out}}{\Phi\drm{in}\ddgg}\urm{SU(2)+DT},
\end{align}\normalsize
where
\small\begin{align}
 \Phi\drm{out}
=
 \AV{\phi\drm{out}}{\phi\drm{out}\dgg},
\quad
 \Phi\drm{in}\ddgg 
=
 \AH{\phi\drm{in}\dgg}{-\phi\drm{in}}.
\end{align}\normalsize

The quadratic Lagrangian \en{ \lag\urm{DT} } 
results in linear dynamics. 
The corresponding self-energy
can be found from the first order correction,
which is written as
\begin{subequations}
\small\begin{align}
\contraction{}{\Phi}{\hspace{6mm}}{\hspace{0mm}}
 \Phi\drm{out}
 \  \im  
\contraction{}{\phi}{\hspace{7mm}}{\hspace{0mm}}
 \lag\urm{DT} \
 \Phi\drm{in}\ddgg
&=
 \A{-\im \Omega \
    \contraction{}{\phi}{\hspace{6mm}}{\hspace{0mm}}
     \phi\drm{out} \mas\dgg
    \contraction{}{\phi}{\hspace{3mm}}{\hspace{0mm}}
      \mas \phi\drm{in}\dgg}
   {0}
   {0} 
   {-\im \Omega \
    \contraction[2ex]{}{\phi}{\hspace{11mm}}{\hspace{0mm}}
     \phi\drm{out}\dgg 
    \contraction{}{\mas}{\hspace{10mm}}{\hspace{0mm}}
     \mas\dgg  \mas (-\phi\drm{in})}
\\ &=
 \ipg\dd{\phi\drm{out}|\mas}\urm{SU(2)} 
 \A{-\im \Omega}{}
   {}{\im\Omega}
 \ipg\dd{\mas| \phi\drm{in}}\urm{SU(2)}.
\end{align}\normalsize
\end{subequations}
Compared to the first order correction in (\ref{dysonext}),
the self-energy is defined as
\small\begin{align}
 \isel\urm{DT}
&=
 \im \Omega \sigma\dd{z}.
\end{align}\normalsize
As a result, 
the transfer function is given as
\begin{subequations}
\small\begin{align}
 \ipg_{\Phi\drm{out}|\Phi\drm{in}}\urm{SU(2)+DT}
&=
 \dtf{-2|g|^2-\im\Omega \sigma\dd{z}}{-2g\uu{*}}{2g}{I}.
\end{align}\normalsize
\end{subequations}

It is easy to see that 
the poles and transmission zeros of this system are given as
\begin{subequations}
\small\begin{align}
 \pole 
&= 
 \{-2|g|^2-\im\Omega, \ -2|g|^2+\im\Omega \},
\\
 \zero 
&= 
 \{ \hspace{3mm} 2|g|^2+\im\Omega, \hspace{3mm} 2|g|^2-\im\Omega \},
\end{align}\normalsize
\end{subequations}
which satisfy the pole-zero symmetry\index{pole-zero symmetry}
\small\begin{align}
 \pole = -\zero.
\end{align}\normalsize

\subsection{Squeezing in the SU(2) system}
\label{sec:sq}

Consider an interaction Lagrangian
\small\begin{align}
 \hspace{10mm}
 \im \lag\urm{SQ}
&= 
 \frac{r}{2} (\mas\dgg \mas\dgg-\mas\mas).
\qquad
 (r\in\mathbb{R})
\end{align}\normalsize
Let us calculate 
a transfer function \en{ \Phi\drm{out}\gets\Phi\drm{in} }
in the same way as the preceding example.
The first order correction is written as
\small\begin{align}
\contraction{}{\phi}{\hspace{6mm}}{\hspace{0mm}}
 \Phi\drm{out}
 \  \im  
\contraction{}{\phi}{\hspace{7mm}}{\hspace{0mm}}
 \lag\urm{SQ} \
 \Phi\drm{in}\ddgg
&=
 \ipg\dd{\phi\drm{out}|\mas}\urm{SU(2)} \
 \frac{r}{|g|^2}
 \A{0}{g^2}
   {g^{\ast 2}}{0}
 \ipg\dd{\mas| \phi\drm{in}}\urm{SU(2)},
\end{align}\normalsize
where we have used Theorem \ref{thm:tfbs}.
Compared to the first order correction in (\ref{dysonext}),
the self-energy is defined as
\small\begin{align}
 \isel\urm{SQ} 
&= 
 -\frac{r}{|g|^2}
 \A{0}{g^2}
   {g^{\ast 2}}{0}.
\end{align}\normalsize
Then (\ref{constl}) is written as
\small\begin{align}
 \ipg\dd{\Phi\drm{out} | \Phi\drm{in}}\urm{SU(2)+SQ}
&=
 \dtf{-2|g|^2-\isel\urm{SQ}}{-2g\uu{*}}{2g}{I}. 
\end{align}\normalsize


In the quadrature basis
\small\begin{align}
 \qu
=
 \AV{\xi}{\eta}
= 
 \toqu\Phi,
\quad
\left(
 \toqu
 \equiv
 \frac{1}{\sqrt{2}}
 \A{1}{1}
   {-\im}{\im}
\right)
\end{align}\normalsize
this transfer function is expressed as
\small\begin{align}
 \toqu \  \ipg\dd{\Phi\drm{out}|\Phi\drm{in}}\urm{SU(2)+SQ} \ \toqu\inv 
&=
 \A{\ffrac{s^2-(2g^2+r)(2g^{*2}+r)}
         {(s+2|g|^2+r)(s+2|g|^2-r)}}
   {\ffrac{2\im r(g^2-g^{*2})}
         {(s+2|g|^2+r)(s+2|g|^2-r)}}
   {\ffrac{-2\im r(g^2-g^{*2})}
         {(s+2|g|^2+r)(s+2|g|^2-r)}}
   {\ffrac{s^2-(2g^2-r)(2g^{*2}-r)}
         {(s+2|g|^2+r)(s+2|g|^2-r)} \rule[0mm]{0mm}{9mm}}.
\end{align}\normalsize
If \en{ g } is real,
\small\begin{align}
 \toqu \  \ipg\dd{\Phi\drm{out}|\Phi\drm{in}}\urm{SU(2)+SQ} \ \toqu\inv 
&=
 \A{\ffrac{s-2g^2-r}{s+2g^2-r}}{}
   {}{\ffrac{s-2g^2+r}{s+2g^2+r}},
\end{align}\normalsize
for which poles and transmission zeros are given as
\begin{subequations}
 \small\begin{align}
 \pole&=\left\{-2g^2-r, \ -2g^2+r \right\},
\\
 \zero&=\left\{\hspace{3mm} 2g^2+r, \hspace{3mm} 2g^2-r\right\}.
 \end{align}\normalsize
\end{subequations}
This satisfies the pole-zero symmetry\index{pole-zero symmetry}
\small\begin{align}
 \pole=-\zero.
\end{align}\normalsize

\newpage

\section{Concluding remarks: Feynman diagrams}

\begin{wrapfigure}[0]{r}[53mm]{49mm} 
\vspace{-0mm}
\centering
\includegraphics[keepaspectratio,width=30mm]{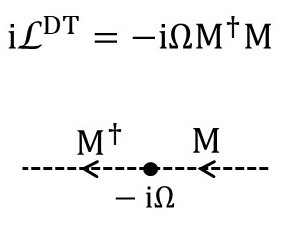}
\caption{
\small
$\im\lag^{\textrm{DT}}$ 
and the corresponding diagram.
$\mas$ and $\mas\dgg$ are 
the incoming and outgoing arrows, respectively.
\normalsize
}
\label{fig-dt-lag}
\end{wrapfigure}

It is a good exercise 
to show the same transfer functions 
using Feynman diagrams again.
Let us consider the detuning.
The Lagrangian \en{ \im\lag\urm{DT} } is 
depicted in Figure \ref{fig-dt-lag}.
To find self-energy,
we calculate the first order correction of the transfer function
\small\begin{align}
 \ipg\dd{\Phi\drm{out}|\Phi\drm{in}}\urm{SU(2)+DT} 
&=
 \A{\ipg\dd{\phi\drm{out} |\phi\drm{in} }\urm{SU(2)+DT}}
   {\ipg\dd{\phi\drm{out} |\phi\drm{in}\dgg}\urm{SU(2)+DT}}
   {\ipg\dd{\phi\drm{out}\dgg |\phi\drm{in}}\urm{SU(2)+DT}}
   {\ipg\dd{\phi\drm{out}\dgg |\phi\drm{in}\dgg}\urm{SU(2)+DT} 
    \rule[0mm]{0mm}{8mm}}.
\label{dt-tf}
\end{align}\normalsize

The off-diagonal elements are disconnected diagrams and hence zero.
For instance,
the (1,2)-element is written as
\small\begin{align}
 \ipg\dd{\phi\drm{out} |\phi\drm{in}\dgg}\urm{SU(2)+DT}
=
 \wick{\phi\drm{out}}{-\phi\drm{in}}\urm{SU(2)+DT},
\end{align}\normalsize
which is depicted as

\vspace{-3mm}
\begin{figure}[H]
\centering
\includegraphics[keepaspectratio,width=47mm]{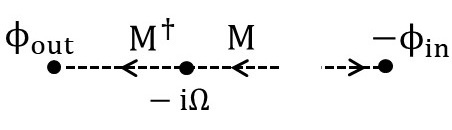}
\end{figure}
\vspace{-3mm}

\noindent
This is disconnected because 
\en{ \mas } and \en{ -\phi\drm{in} } are pointing 
in opposite directions.

For the (1,1)-element,
the first order correction is given as

\vspace{-2mm}
\begin{figure}[H]
\centering
\includegraphics[keepaspectratio,width=41mm]{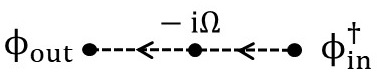}
\end{figure}
\vspace{-10mm}

\small\begin{align}
 =\ipg\dd{\phi\drm{out} | \mas}\urm{SU(2)} \
  (-\im\Omega) \
  \ipg\dd{\mas|\phi\drm{in} }\urm{SU(2)}.
\end{align}\normalsize
The (2,2)-element is written as
\en{ \wick{\phi\drm{out}\dgg}{-\phi\drm{in}}\urm{SU(2)+DT} }
to which 
the first order correction is given as

\vspace{-4mm}
\begin{figure}[H]
\centering
\includegraphics[keepaspectratio,width=41mm]{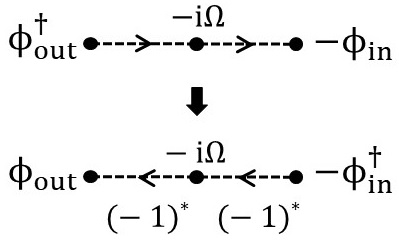}
\end{figure}
\vspace{-9mm}

\small\begin{align}
= 
 \left(-\ipg\dd{\phi\drm{out} |\mas}\urm{SU(2)}\right)\uu{*} 
\ (-\im \Omega)
\ \left(\ipg\dd{\mas|\phi\drm{in} }\urm{SU(2)}\right)\uu{*},
\end{align}\normalsize
where 
we have used the asymmetric property (Theorem \ref{thm:tfbs})
to reverse the arrows to the forward direction. 
This is represented by \en{ (-1)\uu{*} } in the diagram above.

As a result,
the first order correction is written as
\small\begin{align}
  \ipg\dd{\phi\drm{out} | \mas}\urm{SU(2)}
  \A{-\im\Omega}{}{}{\im\Omega} 
  \ipg\dd{\mas | \phi\drm{in}}\urm{SU(2)}.
\end{align}\normalsize
Compared to the same order term of (\ref{dysonext}),
the self-energy is given by
\small\begin{align}
 \isel^{DT}&=\im \Omega \sigma\dd{z}.
\end{align}\normalsize

\chapter{Pole-zero symmetry}
\label{chap:sym}
\thispagestyle{fancy}

\marginpar{\vspace{1mm}
\small
As mentioned in Section \ref{onoffshell},
$ \pole=-\zero $ holds only for transfer functions
between signals \textit{on shell}.
\normalsize
}

This chapter is dedicated to investigating 
the pole-zero symmetry \en{ \pole=-\zero }.\index{pole-zero symmetry}
This has been demonstrated in some examples.
We first show that 
the symmetry results from canonical quantization
for bosons and fermions.
Then its system theoretical characterizations 
are given both in the time domain and in the frequency domain.
We also examine how self-energy is influenced by  
the symmetry.
This is used to analyze nonlinear interactions 
in Chapter \ref{chap:nonlinear}.

\section{Characterization in the frequency-domain}
\label{sec:pzinfreq}

Let us consider a system \en{ \tf } 
with an input \en{ u } and output \en{ z }:
\small\begin{align}
 z= \tf \, u,
\label{frequa}
\end{align}\normalsize
where
\small\begin{align}
 z=\AV{z_1}{z_2},
\quad
 \tf = \A{\tf_{11}}{\tf_{12}}{\tf_{21}}{\tf_{22}},
\quad
 u= \AV{u_1}{u_2}.
\end{align}\normalsize
There are two things to note in this expression.
First, to describe quantum systems, 
the vectors \en{ u } and \en{ z } have to include 
all inputs and outputs so that \en{ \tf } is square.
Otherwise 
gauge transformations are not well-defined.
In particular, 
\en{ u } and \en{ z } have to include 
all conjugate variables 
for non-unitary gauge transformations.

Second, 
the components of \en{ u } and \en{ z } can be in any order,
but it is convenient 
to choose \en{ u_{2}= u_{1}\ddgg } and \en{ z_{2}= z_{1}\ddgg },
where the double dagger is elementwise.
This form is said to be
\textit{canonical}\index{canonical}.
(This is also known as 
\textit{symplectic} for bosons.)\index{symplectic}
All systems that we have considered so far 
are expressed in this way.
For example, 
in most cases, 
we have used the following form:
\small\begin{align}
 u_1=\BV{\phi_1}
        {\phi_2\rule[0mm]{0mm}{4mm} }
        {\vdots}\drm{in},
\quad
 u_2=\BV{\phi_1\dgg}
        {\phi_2\dgg\rule[0mm]{0mm}{4mm} }
        {\vdots}\drm{in},
\end{align}\normalsize
which is canonical.
Another example is in the quadrature basis as
\small\begin{align}
 u_1=\BV{\xi_1}
        {\xi_2\rule[0mm]{0mm}{4mm}}
        {\vdots}\drm{in},
\quad
 u_2=\BV{-\im\eta_1\dgg}
        {-\im\eta_2\dgg\rule[0mm]{0mm}{4mm}}
        {\vdots}\drm{in}.
\end{align}\normalsize


A reason for introducing the canonical form
is to simplify the commutation and anticommutation relations
defined as
\small\begin{align}
 [u\dd{1,\alpha},u\dd{2,\beta}]\dd{\pm}
&\equiv 
 u\dd{1,\alpha} \, u\dd{2,\beta} 
\pm 
 u\dd{2,\beta}  \, u\dd{1,\alpha},
\end{align}\normalsize 
which is understood as a matrix.
If \en{ u } is canonical, 
this is expressed as
\small\begin{align}
 [u,u\simm]\dd{\pm}
\equiv
\renewcommand{\arraystretch}{1.5}
 \A{\bigl[u_1,u_1\simm\bigr]\dd{\pm}}
   {\bigl[u_1,u_2\simm\bigr]\dd{\pm}}
   {\bigl[u_2,u_1\simm\bigr]\dd{\pm}}
   {\bigl[u_2,u_2\simm\bigr]\dd{\pm}}
\renewcommand{\arraystretch}{1}
=
 \A{}{I}{\pm I}{},
\hspace{3mm}
 \kakkon{+: \mbox{fermion,}}
        {-: \mbox{boson,}}
\label{canonicalpi}
\end{align}\normalsize
where \en{ u\simm(s)=u\trans(-s) } 
as defined in (\ref{simm}).

\begin{definition}
 For a nonsingular matrix \en{ X },
\en{ \tf } is said to be \en{ X }-orthogonal
if 
\small\begin{align}
  \tf X \tf\simm = X.
\end{align}\normalsize
\end{definition}

\begin{lemma}
\label{thm1}
Consider a system \en{\tf} of the form (\ref{frequa}).
The input \en{ u } is not necessarily canonical.
Define a matrix \en{ \Pi } as 
\small\begin{align}
 \Pi
\equiv
 [u,u\simm]\dd{\pm}.
\label{geguu}
\end{align}\normalsize
The output \en{ z } satisfies 
the same relation as (\ref{geguu})
if and only if 
\en{ \tf } is \en{ \Pi }-orthogonal:\index{\en{ \Pi }-orthogonal}
\small\begin{align}
 \tf \Pi \tf\simm&=\Pi.
\label{gqc}
\end{align}\normalsize
 \end{lemma}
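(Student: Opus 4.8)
The statement is an "if and only if" about the propagation of a (anti)commutator structure through a linear system $z=\tf u$, so the proof is essentially a direct computation with $\tf\simm(s)=\tf\trans(-s)$ playing the role of the "conjugate transpose" for time-reversed frequency arguments. First I would recall why $\tf\simm$ is the right object here: the contraction/transfer function is defined so that $\wick{A}{B\ddgg}$ captures the amplitude $A\gets B$, and the asymmetry results of Section~\ref{lem:asymmetry} together with the definition (\ref{simm}) mean that Hermitian conjugation of operators corresponds, at the level of transfer functions, to $P\mapsto P\simm$ rather than $P\mapsto P\dgg$. Concretely, if $z=\tf u$ then $z\simm = u\simm \tf\simm$ (transpose of a product, with the sign bookkeeping of $s\to -s$ absorbed into $\simm$), so that $[z,z\simm]\dd{\pm}$ can be rewritten by substituting these two expressions.

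The key computation, which I would carry out in the display, is
\begin{align}
 [z,z\simm]\dd{\pm}
&=
 [\tf u,\ u\simm\tf\simm]\dd{\pm}
=
 \tf\,[u,u\simm]\dd{\pm}\,\tf\simm
=
 \tf\,\Pi\,\tf\simm,
\label{pzproof-main}
\end{align}
where the middle equality uses that $\tf$ is a (matrix of) c-number transfer functions and hence commutes through the bracket — it can be pulled out on the left of the first slot and on the right (as $\tf\simm$) of the second slot, exactly as one does with a linear map acting on a bilinear form. Given (\ref{pzproof-main}), the claimed equivalence is immediate: the output satisfies $[z,z\simm]\dd{\pm}=\Pi$ if and only if $\tf\Pi\tf\simm=\Pi$, i.e.\ if and only if $\tf$ is $\Pi$-orthogonal in the sense of the preceding Definition. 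I would also note explicitly that for the canonical choice $u_2=u_1\ddgg$, $z_2=z_1\ddgg$ one has $\Pi=\AV{\,}{\,}$-form $\A{}{I}{\pm I}{}$ as in (\ref{canonicalpi}), recovering the "symplectic/orthogonal" statements (\ref{psymplec2}), (\ref{pz-tv}) used earlier.

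The main obstacle is not the algebra but getting the sign and time-reversal bookkeeping in "$z\simm=u\simm\tf\simm$" airtight, since $\simm$ involves both a transpose and the substitution $s\to -s$, and the bracket $[\cdot,\cdot]\dd{\pm}$ is itself $s$-dependent in a nontrivial way (it is really the Laplace transform of an equal-time (anti)commutator). I would address this by working componentwise once: write $z\dd{i}(s)=\sum_j \tf\dd{ij}(s)u\dd{j}(s)$, form $[z\dd{i},z\dd{k}\simm]\dd{\pm}$, use bilinearity to pull the scalar factors $\tf\dd{ij}(s)$ and $\tf\dd{k\ell}(-s)=(\tf\simm)\dd{\ell k}(s)$ outside, and recognize the remaining $[u\dd{j},u\dd{\ell}\simm]\dd{\pm}=\Pi\dd{j\ell}$; reassembling gives $(\tf\Pi\tf\simm)\dd{ik}$. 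Once that one index computation is displayed, both directions of the "iff" follow at once, and no minimality or realization hypotheses on $\tf$ are needed — only that $\tf$ is a square matrix of rational transfer functions, which is guaranteed because $u,z$ include all conjugate variables as emphasized before the statement.
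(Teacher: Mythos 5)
Your proposal is correct and follows essentially the same route as the paper's proof: write $[z,z\simm]\dd{\pm}=[\tf u,(\tf u)\simm]\dd{\pm}=\tf\,[u,u\simm]\dd{\pm}\,\tf\simm=\tf\Pi\tf\simm$ and read off the equivalence. Your extra componentwise check of the $\simm$ bookkeeping is a harmless elaboration of the same one-line computation.
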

\begin{proof}
The matrix \en{ [z,z\simm]\dd{\pm} } can be written as
\small\begin{align}
 [z,z\simm]\dd{\pm}
&=
 [\tf u,(\tf u)\simm]
\nn\\ &=
 \tf [u,u\simm] \tf\simm 
\nn\\ &=
 \tf \Pi \tf\simm.
\end{align}\normalsize
As a result,
\small\begin{align}
 [z,z\simm]\dd{\pm}
=
 \Pi \quad \Leftrightarrow  
\quad  
 \tf \Pi \tf\simm=\Pi,
\end{align}\normalsize
which establishes the assertion.
\end{proof}

\begin{corollary}
\label{cor:diag}
Consider a diagonal system 
\small\begin{align}
 \AV{z_1}{z_2}
&=
 \A{\tf_{11}}{}{}{\tf_{22}}
 \AV{u_1}{u_2}.
\label{diag}
\end{align}\normalsize
Assume that the input \en{ u } is canonical.
Then the output \en{ z } is canonical if and only if 
\small\begin{align}
 \tf_{11} \tf_{22}\simm &= I,
\qquad
\mbox{(for both fermions and bosons.)}
\label{qc}
\end{align}\normalsize
\end{corollary}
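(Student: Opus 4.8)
The plan is to obtain this as a direct specialization of Lemma \ref{thm1}. Since the input $u$ is assumed canonical, the matrix $\Pi = [u,u\simm]\dd{\pm}$ is exactly the one displayed in (\ref{canonicalpi}), namely $\Pi = \A{}{I}{\pm I}{}$, with $+$ for fermions and $-$ for bosons. By Lemma \ref{thm1}, the output $z$ satisfies the same canonical relation $[z,z\simm]\dd{\pm} = \Pi$ if and only if $\tf$ is $\Pi$-orthogonal, i.e. $\tf \Pi \tf\simm = \Pi$. So the entire task reduces to unpacking this orthogonality condition for a block-diagonal $\tf = \A{\tf_{11}}{}{}{\tf_{22}}$.

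Next I would compute $\tf \Pi \tf\simm$ explicitly. Using that $\simm$ acts blockwise on a block-diagonal matrix, so that $\tf\simm = \A{\tf_{11}\simm}{}{}{\tf_{22}\simm}$, a direct multiplication gives
\begin{align}
 \tf \Pi \tf\simm
&=
 \A{\tf_{11}}{}{}{\tf_{22}}
 \A{}{I}{\pm I}{}
 \A{\tf_{11}\simm}{}{}{\tf_{22}\simm}
=
 \A{}{\tf_{11}\tf_{22}\simm}{\pm \tf_{22}\tf_{11}\simm}{}.
\end{align}
Comparing this with $\Pi = \A{}{I}{\pm I}{}$, the diagonal blocks agree automatically (both vanish — and this is precisely where the diagonal structure of $\tf$ is used), while the off-diagonal blocks yield the two conditions $\tf_{11}\tf_{22}\simm = I$ and $\tf_{22}\tf_{11}\simm = I$. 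The factor $\pm$ appears identically on both sides and cancels, which is exactly why the conclusion is the same for fermions and bosons.

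Finally I would show that these two conditions are equivalent, so that retaining the single one stated in the corollary loses nothing. From $\tf_{11}\tf_{22}\simm = I$ we get $\tf_{22}\simm = \tf_{11}\inv$; applying $\simm$ to both sides and using that $\simm$ is an involution commuting with matrix inversion (which follows at once from $A\simm(s) = A\trans(-s)$), we obtain $\tf_{22} = (\tf_{11}\simm)\inv$, hence $\tf_{22}\tf_{11}\simm = (\tf_{11}\simm)\inv \tf_{11}\simm = I$; the converse implication is symmetric. Thus $[z,z\simm]\dd{\pm} = \Pi$ holds if and only if $\tf_{11}\tf_{22}\simm = I$, which is the assertion. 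I do not expect a genuine obstacle here: the only points requiring a little care are the blockwise behaviour of $\simm$, the automatic vanishing of the diagonal blocks of $\tf \Pi \tf\simm$ for diagonal $\tf$, and the elementary involution/commutation properties of $\simm$; everything else is bookkeeping layered on top of Lemma \ref{thm1}.
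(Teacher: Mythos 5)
Your proposal is correct and follows essentially the same route as the paper: specialize the $\Pi$-orthogonality of Lemma \ref{thm1} to the block-diagonal $\tf$ and read off the condition from the off-diagonal blocks. The only difference is that you spell out the matrix multiplication and the equivalence of $\tf_{11}\tf_{22}\simm = I$ with $\tf_{22}\tf_{11}\simm = I$ via the involution property of $\simm$, steps the paper leaves implicit in the phrase ``which is equivalent to (\ref{qc}).''
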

\begin{proof}
For a canonical form,
the \en{ \Pi }-orthogonality is written as
\small\begin{align}
 \A{\tf_{11}}{}{}{\tf_{22}}
 \A{}{I}{\pm I}{}  
 \A{\tf_{11}\simm}{}{}{\tf_{22}\simm}
&=
 \A{}{I}{\pm I}{},
\end{align}\normalsize 
which is equivalent to (\ref{qc}).
\end{proof}

\newpage

\begin{example}
 \rm
Let us consider the QND gate described in Section \ref{subsec:sum}:
\small\begin{align}
 \left[
\begin{array}{c}
 \xi_1 \\
 \eta_1 \\ \hdashline
 \xi_2 \\
 \eta_2
\end{array}
\right]\drm{out}
&=
\left[
\begin{array}{cc:cc}
 1& &  & \\
  & 1 & & g\\ \hdashline
 -g & & 1 & \\
  &  & & 1
\end{array}
\right]
\left[
\begin{array}{c}
 \xi_1\\
 \eta_1\\ \hdashline
 \xi_2\\
 \eta_2
 \end{array}
\right]\drm{in}.
\label{exnonconju}
\end{align}\normalsize
It is easy to see that this system is \en{ \Pi }-orthogonal, 
\en{ \tf \Pi \tf\simm = \Pi }, 
where
\small\begin{align}
 \Pi
&=
\left[
\begin{array}{cc:cc}
 & \im &  & \\
 -\im &  & & \\ \hdashline
  & &  & \im \\
  &  & -\im & 
\end{array}
\right].
\end{align}\normalsize
Note that (\ref{exnonconju}) is not canonical.
Let us rewrite it in a canonical form.
In general, 
there are many expressions for it.
For example,
\small\begin{align}
 \left[
\begin{array}{c}
 \xi_1 \\
 \xi_2 \\ \hdashline
-\im \eta_1\\
-\im \eta_2
\end{array}
\right]\drm{out}
=
\left[
\begin{array}{cc:cc}
 1& &  & \\
 -g & 1 & & \\ \hdashline
  & & 1 & g\\
  &  & & 1
\end{array}
\right]
\left[
\begin{array}{c}
 \xi_1 \\
 \xi_2 \\ \hdashline
-\im \eta_1\\
-\im \eta_2
 \end{array}
\right]\drm{in}.
\end{align}\normalsize
This is block diagonal
and hence \en{ \tf_{11} \tf_{22}\simm = I }.
\qed
\end{example}

\begin{example}
\rm
Let us consider the SU(2) system 
with detuning in Section \ref{detuning}:
\small\begin{align}
 \AVl{\phi}{\phi\dgg}\drm{out}
&=
 \A{\ffrac{s-2|g|^2-\im\Omega}{s+2|g|^2-\im\Omega}}{}
   {}{\ffrac{s-2|g|^2+\im\Omega}{s+2|g|^2+\im\Omega}}
 \AVl{\phi}{\phi\dgg}\drm{in}.
\end{align}\normalsize
Since this is also canonical and diagonal,
\en{ \tf_{11} \tf_{22}\simm = I }.
\qed
\end{example}

\begin{example}
\label{ex:timesysex}
 \rm
The time-varying SU(2) system is given by (\ref{tvtf}):
\small\begin{align}
 \AV{\phi}{\phi\dgg}\drm{out}
&=
 \A{\ffrac{s-2g(s)g\uu{*}(-s)}{s+2g(s)g\uu{*}(-s)}}{}
   {}{\ffrac{s-2g\uu{*}(s)g(-s)}{s+2g\uu{*}(s)g(-s)}}
 \AV{\phi}{\phi\dgg}\drm{in}.
\label{tvtf-1}
\end{align}\normalsize
This is canonical and satisfies \en{ \tf_{11} \tf_{22}\simm = I }.
Since this system is diagonal,
poles \en{ p } and transmission zeros \en{ z } are easily found from 
\begin{subequations}
\small\begin{align}
[\mbox{Denominator = 0}] \Rightarrow
 &\kakkon{p+2g(p)g\uu{*}(-p)=0,}
         {p+2g\uu{*}(p)g(-p)=0,}
\\ 
[\mbox{Numerator = 0}] \Rightarrow
 &\kakkon{z-2g(z)g\uu{*}(-z)=0,}
         {z-2g\uu{*}(z)g(-z)=0,}
\end{align}\normalsize
\end{subequations}
which indicates that the poles \en{ \pole } and transmission zeros \en{ \zero }
are related to each other as \en{ \pole=-\zero }.
\qed
\end{example}

\newpage

\section{Characterization in the time-domain}
\label{sec:ss}

Example \ref{ex:timesysex}
indicates a certain connection between
\en{ \pole=-\zero } and \en{ \tf \Pi \tf\simm = \Pi }.
To investigate this,
we consider the time-domain characterization of 
\en{ \tf\Pi \tf\simm = \Pi }
in this section.
Assume that the input \en{ u } and the output \en{ z } are canonical 
and \en{ \tf } has a minimal realization
\small\begin{align}
 \tf = \dtf{A}{B}{C}{D}.
\label{gsr}
\end{align}\normalsize

\begin{lemma}
If a system of the form (\ref{gsr}) is \en{ \Pi }-orthogonal 
\en{ \tf \Pi \tf\simm=\Pi },
there exist non-singular matrices \en{ V } and \en{ W } such that
\begin{subequations}
\small\begin{align}
 A V + VA\trans + B\Pi B\trans &=0,
\\
 W A + A\trans W + C\trans \Pi\trans C &= 0,
\\
 VW &= I.
\end{align}\normalsize
\end{subequations}
\end{lemma}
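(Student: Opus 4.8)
The plan is to translate the frequency-domain identity $\tf\Pi\tf\simm=\Pi$ into three Lyapunov-type matrix equations by exploiting the realization $\tf=\dtf{A}{B}{C}{D}$ together with the state-space formulas for products and for the $\simm$-operation already recorded in Section~\ref{sec:circuits}. First I would write $\tf\simm(s)=\dtf{-A\trans}{C\trans}{-B\trans}{D\trans}$ using (\ref{simm}), and form the cascade $\tf\Pi\tf\simm$. Here $\Pi$ is a constant invertible matrix, so $\Pi\tf\simm$ has realization $\dtf{-A\trans}{C\trans}{-\Pi B\trans}{\Pi D\trans}$, and then the product $\tf\cdot(\Pi\tf\simm)$ is computed by (\ref{2.13}). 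The result is a (generally non-minimal) realization of $\tf\Pi\tf\simm$ on the stacked state space; the hypothesis says this transfer function equals the constant $\Pi$, i.e.\ its strictly-proper part vanishes identically.

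The key step is then the standard argument that a realization whose transfer function is constant must have its reachable-and-observable part be zero, which forces the existence of a similarity-type intertwining. Concretely, I would look for an invertible $V$ so that the change of coordinates $\begin{bmatrix} I & -V \\ 0 & I\end{bmatrix}$ (or its analogue) block-triangularizes the stacked $A$-matrix $\A{A}{BC'}{0}{-A\trans}$-type object and simultaneously kills the coupling into the output, where $C'=-B\trans$ comes from the $\Pi\tf\simm$ block. Matching the off-diagonal block in the transformed state matrix to zero yields an equation of Sylvester form; matching the transformed $B$ and $C$ data to the requirement ``strictly proper part $=0$'' produces the two remaining relations. Carrying this through, the off-diagonal bookkeeping gives $AV+VA\trans+B\Pi B\trans=0$; the dual bookkeeping (done on the observability side, or equivalently by applying the same argument to $\tf\simm\Pi\tf=\Pi$, which holds since $\Pi\tf\simm$ is a two-sided inverse) gives $WA+A\trans W+C\trans\Pi\trans C=0$; and consistency of the two coordinate changes — i.e.\ that the intertwining is invertible both ways — gives $VW=I$. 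Minimality of the realization (\ref{gsr}) is what guarantees $V$ and $W$ are genuinely nonsingular rather than merely some solution: a singular $V$ would exhibit a nontrivial uncontrollable or unobservable mode, contradicting the assumed minimality, as in the discussion around Lemma~\ref{lem:twomin} and Example~\ref{ex:similarity}.

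I expect the main obstacle to be the careful sign/transpose bookkeeping in identifying exactly which Sylvester equation the vanishing off-diagonal block produces, since $\Pi$ is not symmetric ($\Pi\trans=\pm\Pi$ depending on the boson/fermion sign) and the $\simm$-operation already inserts its own minus signs and transposes; getting $B\Pi B\trans$ rather than $B\Pi\trans B\trans$ (they differ by the overall $\pm$) in the first equation, and the matching $C\trans\Pi\trans C$ in the second, is the delicate point. A clean way to control this is to use the $D$-invertibility consequence of Corollary~\ref{cor:diag}-type reasoning (here $D$ is the direct-through term and $\tf\Pi\tf\simm=\Pi$ forces $D\Pi D\trans=\Pi$, hence $D$ invertible) together with the inverse formula (\ref{2.15}), so that $\tf\simm=\Pi\inv\tf\inv\Pi$ can be written out explicitly and compared term-by-term with $\dtf{-A\trans}{C\trans}{-B\trans}{D\trans}$; equating the $A$-, $B$-, $C$-blocks of the two realizations of the same minimal system (Lemma~\ref{lem:twomin}) then delivers $V$, $W$ and the three equations directly, with the transposes and signs pinned down by that comparison rather than guessed.
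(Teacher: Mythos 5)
Your second, ``clean way'' route is in substance the paper's own proof: the paper rewrites \en{ \tf\Pi\tf\simm=\Pi } as \en{ \Pi\tf\simm\Pi\trans=\tf\inv } (using \en{ \Pi\Pi\trans=I }, i.e.\ \en{ \Pi\trans=\Pi\inv }), writes out the realization of \en{ \Pi\tf\simm\Pi\trans } from (\ref{simm}) and of \en{ \tf\inv } from (\ref{2.15}), and then invokes Lemma \ref{lem:twomin} on these two minimal realizations of the same transfer function to obtain a similarity transformation \en{ T }; eliminating the \en{ B }- and \en{ C }-block relations from the \en{ A }-block relation gives exactly the two Lyapunov/Sylvester equations, with \en{ V=-T }, \en{ W=-T\inv }, whence \en{ VW=I }. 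Your observation that \en{ D\Pi D\trans=\Pi } (from \en{ s\to\infty }) makes \en{ D } invertible is the right justification for using (\ref{2.15}), and minimality is indeed what licenses Lemma \ref{lem:twomin} and hence the nonsingularity of \en{ V,W }. Your primary route --- forming the cascade realization of \en{ \tf\Pi\tf\simm } via (\ref{2.13}) and arguing that a constant transfer function forces a block-triangularizing intertwining --- is closer to how the paper later proves the \emph{sufficiency} half of Lemma \ref{thm:gss}; run in the necessity direction it can be made to work, but as sketched it leaves the existence of the specific \en{ V } (and the identity \en{ VW=I }) under-derived: extracting it from ``strictly proper part vanishes'' still requires the controllability/observability and realization-equivalence machinery, i.e.\ essentially Lemma \ref{lem:twomin} again, so the inverse-comparison argument you end with is the more economical one and is the one the paper uses.
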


\begin{proof}
Note that \en{ \Pi\Pi\trans=I }.
The \en{ \Pi }-orthogonality is rewritten as
\small\begin{align}
 \Pi \tf\simm \Pi\trans = \tf\inv .
\label{gegei}
\end{align}\normalsize
Using Section \ref{sec:circuits},
we can express the both sides as
\small\begin{align}
 \Pi P\simm \Pi\trans
=&
 \dtf{-A\trans}{C\trans\Pi\trans}{-\Pi B\trans}{\Pi D\trans\Pi\trans},
\renewcommand{\arraystretch}{1}
\\
 P\inv =&
 \dtf{A-BD\inv C}
    {BD\inv }
    {-D\inv C}
    {D\inv }.
\end{align}\normalsize
These two minimal realizations are equivalent.
From Lemma \ref{lem:twomin},
there exists a similarity transformation \en{ T } such that
\begin{subequations} 
\label{gsimilarity}
\small\begin{align}
 -A\trans &= T\inv  (A-BD\inv  C)T,
\label{gsimilarity1}\\
 C\trans\Pi\trans &= T\inv  B D\inv ,
\label{gsimilarity2}\\
 \Pi B\trans &= D\inv  CT.
\label{gsimilarity3}
\end{align}\normalsize
\end{subequations}
In particular, 
(\ref{gsimilarity2},\ \ref{gsimilarity3}) are rewritten as
\begin{subequations}
\small\begin{align}
 BD\inv  
&=
 TC\trans\Pi\trans,
\\
 CT
&=
 D\Pi B\trans.
\end{align}\normalsize
\end{subequations}
Substituting these into (\ref{gsimilarity1}) yields
\begin{subequations}
\small\begin{align}
 AT + TA\trans - B\Pi B\trans  &= 0,
\\
 T\inv  A+A\trans T\inv  - C\trans \Pi\trans C &= 0.
\end{align}\normalsize
\end{subequations}
Setting \en{ T=-V } and \en{ T\inv =-W } establishes the assertion.
\end{proof}

\newpage

\begin{lemma}
\label{thm:gss}
A minimal realization (\ref{gsr}) is \en{ \Pi }-orthogonal 
if and only if there exists \en{ V } such that
\begin{subequations}
\label{gns}
 \small\begin{align}
 AV + VA\trans + B\Pi B\trans &=0,
\label{gns1} \\
 CV + D\Pi B\trans &=0,
\label{gns2} \\
  D\Pi D\trans &= \Pi,
\label{gns3}
 \end{align}\normalsize
\end{subequations}
and 
\small\begin{align}
 V\trans = \pm V 
\qquad 
 \kakkon{+: \mbox{fermion,}}
        {-: \mbox{boson.}}
\end{align}\normalsize
\end{lemma}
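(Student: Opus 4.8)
\emph{Proof proposal.} The plan is to treat the two implications separately: the ``if'' direction is a direct realization computation, while the ``only if'' direction leans on the preceding lemma and on the uniqueness of minimal realizations (Lemma \ref{lem:twomin}).

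For the ``only if'' direction, suppose \en{ \tf\Pi\tf\simm = \Pi }. First I would rewrite this as \en{ \tf\simm = \Pi\inv\tf\inv\Pi }, using \en{ \Pi\Pi\trans = I } and \en{ \Pi\inv = \pm\Pi }. By the realization algebra of Section \ref{sec:circuits} (formulas (\ref{2.13}), (\ref{2.15})), both sides are minimal realizations of the same transfer function, so Lemma \ref{lem:twomin} supplies a unique similarity \en{ S } relating \en{ \dtf{-A\trans}{C\trans}{-B\trans}{D\trans} } to \en{ \dtf{A-BD\inv C}{BD\inv\Pi}{-\Pi\inv D\inv C}{\Pi\inv D\inv\Pi} }. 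Matching the three similarity identities and setting \en{ V=-S } would produce (\ref{gns1}) and (\ref{gns2}) after eliminating \en{ BD\inv C } (equivalently, one may invoke the preceding lemma directly for (\ref{gns1}) and read off (\ref{gns2}) from its proof). Equation (\ref{gns3}) then falls out by evaluating \en{ \tf\Pi\tf\simm = \Pi } at \en{ s\to\infty }, since \en{ \tf(\infty)=D } and \en{ \tf\simm(\infty)=D\trans }.

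The delicate point is the symmetry \en{ V\trans = \pm V }, and I expect it to be the main obstacle. My plan is first to record that \en{ V } is the \emph{unique} matrix satisfying (\ref{gns1}) and (\ref{gns2}) simultaneously: the difference \en{ N } of two solutions obeys \en{ AN+NA\trans=0 } and \en{ CN=0 }, hence \en{ CA^kN=0 } for all \en{ k }, and observability of \en{ (C,A) } (from minimality) forces \en{ N=0 }. Transposing (\ref{gns1}) and using \en{ \Pi\trans=\pm\Pi } shows that \en{ \pm V\trans } again solves (\ref{gns1}); the goal is then to show it also solves (\ref{gns2}), so that uniqueness gives \en{ V=\pm V\trans }. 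This last step is the crux. One route is to run the same construction on \en{ \tf\simm } (which is also \en{ \Pi }-orthogonal, since \en{ \tf\simm\Pi\tf=\Pi } follows from the hypothesis) and identify the resulting certificate with both \en{ V\inv } and \en{ \pm(V\inv)\trans }, again via the uniqueness of the similarity in Lemma \ref{lem:twomin}. A complementary route, valid when \en{ A } is stable, is to exhibit the manifestly (skew-)symmetric solution \en{ V=-\int_0^\infty e^{At}B\Pi B\trans e^{A\trans t}\,dt } of (\ref{gns1}), verify (\ref{gns2}) for it, and then drop the stability assumption by analytic continuation in the parameters.

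For the ``if'' direction, assume \en{ V } satisfies (\ref{gns1})--(\ref{gns3}) and \en{ V\trans=\pm V }. I would form a realization of \en{ \tf\Pi\tf\simm } from the cascade formula (\ref{2.13}),
\begin{align}
 \tf\Pi\tf\simm
&=
 \left[
  \begin{array}{cc|c}
   A & -B\Pi B\trans & B\Pi D\trans \\
   0 & -A\trans & C\trans \\ \hline
   C & -D\Pi B\trans & D\Pi D\trans
  \end{array}
 \right],
\end{align}
and then apply the similarity transformation \en{ T=\A{I}{V}{0}{I} }. By (\ref{gns1}) the off-diagonal block of the new \en{ A }-matrix vanishes; from \en{ V\trans=\pm V } and (\ref{gns2}) one obtains \en{ VC\trans=-B\Pi D\trans }, which annihilates the top block of the new \en{ B }-matrix; and (\ref{gns2}) itself annihilates the second block of the new \en{ C }-matrix. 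The transformed realization thus has no state path from input to output, so after minimization only the feedthrough term remains, giving \en{ \tf\Pi\tf\simm=D\Pi D\trans=\Pi } by (\ref{gns3}). This direction is essentially bookkeeping once the three identities are in hand.
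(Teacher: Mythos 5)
Your ``if'' direction is exactly the paper's argument: the cascade realization of \en{ \tf\Pi\tf\simm } from (\ref{2.13}), the similarity \en{ T=\A{I}{V}{0}{I} }, the use of (\ref{gns1}) to kill the off-diagonal \en{ A }-block, of \en{ V\trans=\pm V } together with \en{ \Pi\trans=\pm\Pi } to turn (\ref{gns2}) into \en{ VC\trans+B\Pi D\trans=0 } and kill the input path, and of (\ref{gns3}) to leave \en{ \Pi } as the feedthrough. Your ``only if'' derivation of (\ref{gns1})--(\ref{gns3}) also coincides with the paper's, which simply invokes the similarity identities (\ref{gsimilarity}) from the preceding lemma (your limit-at-infinity argument for (\ref{gns3}) is a harmless variant of reading off the \en{ D }-matrix equality in Lemma \ref{lem:twomin}).

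The one genuine gap is the step you yourself flag: proving \en{ V\trans=\pm V } in the necessity direction. Your uniqueness argument (two solutions of (\ref{gns1})--(\ref{gns2}) differ by \en{ N } with \en{ AN+NA\trans=0 }, \en{ CN=0 }, hence \en{ CA^kN=0 } and \en{ N=0 } by observability) is correct and is the right tool, and transposing (\ref{gns1}) does show \en{ \pm V\trans } solves (\ref{gns1}); but the crux --- that \en{ \pm V\trans } also solves (\ref{gns2}) --- is left to two sketched routes, and the Gramian route is unsound here: \en{ A } need not be stable for \en{ \Pi }-orthogonal systems (e.g.\ the SU(2) Dirac system has poles at \en{ \pm 2|g|^2 }, so the Lyapunov operator \en{ X\mapsto AX+XA\trans } is singular and the integral diverges), and ``analytic continuation in the parameters'' is not a proof since a perturbation making \en{ A } stable need not preserve \en{ \Pi }-orthogonality. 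The missing step can be closed with what you already have: the similarity identities give not only \en{ CV+D\Pi B\trans=0 } but also, from (\ref{gsimilarity2}) combined with \en{ D\Pi D\trans=\Pi } (so \en{ D\inv=\Pi D\trans\Pi\inv }), the dual relation \en{ VC\trans+B\Pi D\trans=0 }; transposing this and using \en{ \Pi\trans=\pm\Pi } shows \en{ \pm V\trans } satisfies (\ref{gns2}), and your observability-based uniqueness then forces \en{ V\trans=\pm V }. For what it is worth, the paper itself dispatches necessity as ``obvious from (\ref{gsimilarity})'' without spelling this out, so you are not missing an idea the paper supplies --- but as a self-contained proof your write-up does not yet establish the symmetry of \en{ V }.
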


\begin{proof}
The necessity is obvious from (\ref{gsimilarity}). 
To show the sufficiency, 
define \en{ T } as
\small\begin{align}
 T=\A{1}{V}{}{1}, 
\qquad
 T\inv =\A{1}{-V}{}{1}.
\end{align}\normalsize
Then we have
\begin{subequations}
\small\begin{align}
 P\Pi P\simm
 &=
 \dtf{A}{B}{C}{D}\Pi
 \dtf{-A\trans}{C\trans}{-B\trans}{D\trans}
\\ &=
 \dtf{\nA{A}{-B\Pi B\trans}{0}{-A\trans}}
     {\nAV{B\Pi D\trans}{C\trans}}
     {\nAH{C}{-D\Pi B\trans}}
     {D\Pi D\trans}
\hspace{18mm}
\because \mbox{Section }\ref{sec:circuits}
\\ &=
\left[
  \begin{array}{c|c}
    T\A{A}{-B\Pi B\trans}{0}{-A\trans}T\inv 
&  
   T\AV{B\Pi D\trans}{C\trans} 
  \\ \hline
     \AH{C}{-D\Pi B\trans}T\inv 
& 
   D\Pi D\trans
 \rule[0mm]{0mm}{5mm}
  \end{array}
\right]
\qquad
\because (\ref{simtrans})
\\ &=
\left[
  \begin{array}{cc|c}
     A & -AV-VA\trans -B\Pi B\trans & VC\trans +B\Pi D\trans \\
     0 & -A\trans & C\trans \\ \hline
     C & -CV-D\Pi B\trans & D\Pi D\trans
 \rule[0mm]{0mm}{4mm}
  \end{array}
\right].
\end{align}\normalsize
\end{subequations}
Note that 
\small\begin{align}
 \Pi\trans = \pm \Pi 
\qquad 
 \kakkon{+: \mbox{fermion,}}
        {-: \mbox{boson.}}
\end{align}\normalsize
Thus, (\ref{gns2}) is rewritten as
\small\begin{align}
  VC\trans + B\Pi D\trans &=0,
\end{align}\normalsize
As a result, 
we have
\small\begin{align}
 P\Pi P\simm
 &=
\left[
  \begin{array}{cc|c}
      A & 0  & 0 \\
      0 & -A\trans & C\trans \\ \hline
     C & 0 & \Pi \rule[0mm]{0mm}{4mm}
  \end{array}
\right]
=
 \Pi,
\end{align}\normalsize
which establishes the assertion.
\end{proof}

\newpage

\begin{corollary}
Consider a diagonal system 
\small\begin{align}
 \tf
=
 \A{\tf_{11}}{}{}{\tf_{22}}
\end{align}\normalsize
with minimal realizations
\begin{subequations}
\label{sp}
\small\begin{align}
 \tf_{11}=&\dtf{A_{11}}{B_{11}}{C_{11}}{D_{11}},
\\
 \tf_{22}=&\dtf{A_{22}}{B_{22}}{C_{22}}{D_{22}}.
\end{align}\normalsize
\end{subequations}
This is \en{ \Pi }-orthogonal \en{ \tf \Pi \tf\simm=\Pi }
if and only if
there exists \en{ V } such that
\begin{subequations}
\label{ns}
 \small\begin{align}
 A_{11} V + V A_{22}\trans + B_{11} B_{22}\trans &=0,
\\
 V C_{22}\trans + B_{11} D_{22}\trans &=0,
\\
 C_{11} V + D_{11}B_{22}\trans &=0,
\\
 D_{11} D_{22}\trans &= I.
 \end{align}\normalsize
\end{subequations}
\end{corollary}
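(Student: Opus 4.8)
The plan is to apply the preceding Lemma \ref{thm:gss} to the block-diagonal system and then translate the single-matrix conditions (\ref{gns}) into the four scalar (matrix-block) conditions (\ref{ns}). First I would form a joint realization of $\tf=\diag(\tf_{11},\tf_{22})$ by stacking the realizations (\ref{sp}): that is, set
\begin{align}
 A = \A{A_{11}}{}{}{A_{22}}, \quad
 B = \A{B_{11}}{}{}{B_{22}}, \quad
 C = \A{C_{11}}{}{}{C_{22}}, \quad
 D = \A{D_{11}}{}{}{D_{22}},
\end{align}
which is again minimal since the two subsystems act on orthogonal input/output components. Because $u=\AVl{u_1}{u_2}$ is canonical, $\Pi=\A{}{I}{\pm I}{}$ (the $+$ for fermions, $-$ for bosons), so in this context the statement specializes to $\Pi=\A{}{I}{I}{}$ for the scalar QND-type gate; in any case I would carry the $\pm$ along and note that both signs reduce to $I$ in the $(1,1)$/$(2,2)$ coupling block.

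Next I would substitute this block structure into the three equations of Lemma \ref{thm:gss}. Write $V$ in $2\times2$ block form, $V=\A{V_{11}}{V_{12}}{V_{21}}{V_{22}}$, and impose $V\trans=\pm V$. The Lyapunov-type equation (\ref{gns1}), $AV+VA\trans+B\Pi B\trans=0$, with $\Pi$ off-diagonal, sends $B\Pi B\trans$ to the off-diagonal block $\A{}{B_{11}B_{22}\trans}{B_{22}B_{11}\trans}{}$; matching diagonal blocks forces $A_{11}V_{11}+V_{11}A_{11}\trans=0$ and similarly for $V_{22}$, and — invoking minimality and the stability/eigenvalue structure so that the homogeneous Lyapunov equation has only the trivial solution — I would conclude $V_{11}=V_{22}=0$. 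The surviving off-diagonal block then gives exactly $A_{11}V_{12}+V_{12}A_{22}\trans+B_{11}B_{22}\trans=0$, which is the first equation of (\ref{ns}) with $V\equiv V_{12}$; the condition $V\trans=\pm V$ becomes $V_{21}=\pm V_{12}\trans$, so $V_{21}$ is determined and carries no extra content. Feeding the same block decomposition into (\ref{gns2}) ($CV+D\Pi B\trans=0$) and (\ref{gns3}) ($D\Pi D\trans=\Pi$) and reading off the two nontrivial blocks of each yields $C_{11}V_{12}+D_{11}B_{22}\trans=0$, $V_{12}C_{22}\trans+B_{11}D_{22}\trans=0$ (the second from the $V_{21}$ block together with $V\trans=\pm V$ and $\Pi\trans=\pm\Pi$, the signs cancelling), and $D_{11}D_{22}\trans=I$. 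These are precisely the four equations of (\ref{ns}). The converse direction is immediate: given $V$ satisfying (\ref{ns}), set $V_{12}=V$, $V_{21}=\pm V\trans$, $V_{11}=V_{22}=0$, and verify it satisfies (\ref{gns}), so Lemma \ref{thm:gss} gives $\Pi$-orthogonality.

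The main obstacle I anticipate is the step forcing $V_{11}=V_{22}=0$: this is where I need to be careful, because a general solution of $A_{11}V_{11}+V_{11}A_{11}\trans=0$ need not vanish unless $A_{11}$ and $-A_{11}\trans$ share no eigenvalue, i.e. $A_{11}$ has no pair of eigenvalues summing to zero. For the $\Pi$-orthogonal systems of interest this is exactly the pole-zero symmetry situation, and I would argue either (i) that one may always choose the similarity transform in Lemma \ref{lem:twomin} so that the off-diagonal ansatz works — mirroring the $T=\A{1}{V}{}{1}$ construction already used in the proof of Lemma \ref{thm:gss} — thereby sidestepping uniqueness entirely, or (ii) restrict to the canonical/diagonal setting where the relevant $A_{22}=-A_{11}\trans$-type pairing is built in. Option (i) is cleaner and parallels the existing proof, so I would take that route: assume the block form of $V$ from the start rather than deriving it, and simply check necessity and sufficiency against (\ref{gns}). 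Everything else is routine block-matrix bookkeeping plus tracking the fermion/boson sign, which collapses because the cross-coupling block of $\Pi$ is $\pm I$ and the two signs enter (\ref{gns2}) and the symmetry of $V$ in a compensating way.
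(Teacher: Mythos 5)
Your final route—assuming the off-diagonal ansatz $V \to \A{0}{V}{\pm V\trans}{0}$ and checking it directly against Lemma \ref{thm:gss}—is exactly the paper's own one-line proof, so the proposal is correct and takes essentially the same approach. Your worry about forcing $V_{11}=V_{22}=0$ is in fact moot: for necessity you only need to read off the $(1,2)$ blocks of (\ref{gns1})--(\ref{gns3}) for a general symmetric/antisymmetric $V$, and the diagonal blocks of $V$ never enter the conditions (\ref{ns}).
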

\begin{proof}
The assertion is readily proved by replacing \en{ V } to 
\en{ \A{0}{V}{\pm V\trans}{0} }
in (\ref{gns}).
\end{proof}

\begin{example}
\rm
Lemma \ref{thm:gss}
indicates that 
there do not exist \textit{minimal} bosonic systems with only a single variable.
To see this, suppose a system of the form 
\small\begin{align}
 \AV{z_1}{z_2}
&=
 \tf \AV{u_1}{u_2},
\end{align}\normalsize
where 
\small\begin{align}
\hspace{10mm}
 \tf
=
 \dtf{a}
    {\nAH{b_1}{b_2}}
    {\nAV{c_1}{c_2}}
    {D},
\quad
 (a\not= 0.)
\end{align}\normalsize
The \en{ D }-matrix is a direct-through term 
from the input to the output,
which corresponds to 
the zeroth order term of the \textit{S}-matrix.
Hence \en{ D } is non-singular: \en{ \det D\not= 0 }.
We first note that 
\small\begin{align}
 B\Pi B\trans 
=
 \AH{b_1}{b_2}
 \A{}{1}{-1}{}
 \AV{b_1}{b_2}
=0,
\end{align}\normalsize
which implies \en{ V=0 } in (\ref{gns1}).
Then it follows from (\ref{gns2}) that 
\small\begin{align}
 \AV{b_1}{b_2} = 0.
\end{align}\normalsize
As a result, 
the system is not controllable,
i.e.,
it is not minimal.
We have seen this situation 
in the QND and XX (non-unitary) systems
in Section \ref{sec:sumsys}.
\qed
\end{example}

\begin{remark}
 Unlike bosons,
there can be nontrivial single-variable systems for fermions.
In fact, 
\small\begin{align}
 B\Pi B\trans 
=
 2 b_1b_2.
\end{align}\normalsize
If \en{ b_1b_2\not= 0 },
then \en{ V=-b_1b_2/a } in (\ref{gns1}).
It turns out from (\ref{gns2}) that
\small\begin{align}
 C=\frac{a}{b_1b_2}D \Pi B\trans.
\end{align}\normalsize
In this case, the system can be minimal.
However, 
it is unphysical
because there are no non-unitary systems for fermions.
\end{remark}

\begin{example}
\rm
\label{exp:2va}
Let us show a parameterization of stable bosonic systems
with two variables
using Lemma \ref{thm:gss}.
Suppose that the system has a minimal realization as
\small\begin{align}
  \AV{z_1}{z_2}&=
 \dtf{A}{B}{C}{D}
 \AV{u_1}{u_2},
\label{ss2va}
\end{align}\normalsize
where all components are \en{ 2\times 2 } matrices.
Assume that \en{ A } and \en{ D } are diagonalized as 
\begin{subequations}
\small\begin{align}
 T\inv AT &=\mbox{diag}(a_1,a_2),
\\
 S\inv DS &=\mbox{diag}(d_1,d_2).
\end{align}\normalsize
\end{subequations}
Then the system is expressed as
\small\begin{align}
 S\AV{z_2}{z_2}&=
 \dtf{\nA{a_1}{}{}{a_2}}
    {TBS\inv }
    {SCT\inv }
    {\nA{d_1}{}{}{d_2}}
  S\AV{u_1}{u_2}.
\end{align}\normalsize
Note that 
\en{ S\AV{u_1}{u_2} } and \en{ S\AV{z_1}{z_2} }
satisfy the same commutation relation
as \en{ \AV{u_1}{u_2} } and \en{ \AV{z_1}{z_2} },
respectively.
Redefining \en{ TBS\inv \to B } and \en{ SCT\inv \to C },
we have
\small\begin{align}
\tf&=  \dtf{\nA{a_1}{}{}{a_2}}
    {B}
    {C}
    {\nA{d_1}{}{}{d_2}}.
\label{tfs}
\end{align}\normalsize
Then it follows from (\ref{gns}) that
\small\begin{align}
 C=\frac{a_1+a_2}{-\det B} D \Pi B\trans \Pi, \qquad
 D=\A{d_1}{}{}{1/d_1}.
\end{align}\normalsize
This shows a relationship between 
the diffusion and drift coefficients.
For example, the SU(2) system (\ref{su2d}) is of this form.
\qed
\end{example}

\newpage

\section{Pole-zero symmetry}
\label{sec:symmetric}

Here we prove the pole-zero symmetry.
Denoted by \en{ \pole(\tf) } and \en{ \zero(\tf) } are
the poles and transmission zeros of \en{ \tf },
respectively.

\begin{theorem}
If \en{ \tf } is \en{ \Pi }-orthogonal \en{ \tf\Pi \tf\simm=\Pi }, 
then
\en{ \pole(\tf) = -\zero(\tf) }.
\end{theorem}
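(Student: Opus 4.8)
The plan is to pass to a minimal state-space realization $\tf = \dtf{A}{B}{C}{D}$ (as in Lemma~\ref{thm:gss}) and read off both invariants algebraically. By Definition~\ref{def:pole}, $\pole(\tf) = \lambda(A)$. For the transmission zeros, unwind Definition~\ref{def:zero}: $z \in \zero(\tf)$ iff there is a nonzero pair $(\xi,\eta)$ with $\xi\trans(zI-A) = \eta\trans C$ and $\xi\trans B = -\eta\trans D$. Since $D\Pi D\trans = \Pi$ (from \eqref{gns3}) forces $\det D \neq 0$, I can eliminate $\eta\trans = -\xi\trans BD\inv$ and obtain $\xi\trans\bigl(zI - (A - BD\inv C)\bigr) = 0$ with $\xi \neq 0$; conversely any left eigenvector of $A - BD\inv C$ produces a transmission zero. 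Hence $\zero(\tf) = \lambda(A - BD\inv C)$, and the theorem reduces to the spectral identity $\lambda(A - BD\inv C) = -\lambda(A)$.

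The $\Pi$-orthogonality hypothesis delivers exactly this. Rearranging $\tf\Pi\tf\simm = \Pi$ with $\Pi\Pi\trans = I$ gives $\tf\inv = \Pi\tf\simm\Pi\trans$, i.e.\ \eqref{gegei}. The left side has the realization $\dtf{A - BD\inv C}{BD\inv}{-D\inv C}{D\inv}$ by the inversion formula \eqref{2.15} (hence $A$-matrix $A - BD\inv C$), while the right side equals $\dtf{-A\trans}{C\trans\Pi\trans}{-\Pi B\trans}{\Pi D\trans\Pi\trans}$ by \eqref{simm} followed by the constant conjugation by $\Pi$ (hence $A$-matrix $-A\trans$). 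Both realizations are minimal — inversion preserves minimality when $D$ is nonsingular, transposition interchanges controllability and observability, and pre/post-multiplying by the nonsingular constant $\Pi$ changes nothing — so Lemma~\ref{lem:twomin} produces a similarity $T$ with $T\inv(A - BD\inv C)T = -A\trans$; this is precisely the relation already recorded in \eqref{gsimilarity1}, so one may simply cite it. Therefore $\zero(\tf) = \lambda(A - BD\inv C) = \lambda(-A\trans) = -\lambda(A) = -\pole(\tf)$.

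The only real obstacle is the minimality bookkeeping together with the invertibility of $D$: one must be sure that the realization of $\tf$ one starts from is minimal and that minimality survives both the inversion and the $\Pi$-conjugated transposition, since this is what licenses Lemma~\ref{lem:twomin} and hence the equality (not merely inclusion) of spectra; and one must invoke \eqref{gns3} — equivalently, the fact that the $D$-term is the nonvanishing direct-through term of the scattering process — to get $\det D \neq 0$ so that $A - BD\inv C$ is even defined. Beyond that the argument is a two-line rearrangement of formulas established earlier in the chapter; no new analytic content enters, and $\pole = -\zero$ is seen to be purely a shadow of the symmetry $\tf\Pi\tf\simm = \Pi$ under transfer-function inversion.
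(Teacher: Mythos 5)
Your proof is correct, and it is fully supported by material already in the chapter, but it follows a genuinely different route from the paper's own argument. The paper proves the theorem directly from the time-domain conditions (\ref{gns}) of Lemma \ref{thm:gss}: it takes the defining pencil equation for a transmission zero, premultiplies by $B\Pi$, and uses the matrix $V$ to show $-z\in\lambda(A)$, then runs an analogous manipulation with a left eigenvector of $A$ to get the reverse inclusion, so $\pole=-\zero$ is obtained by two explicit inclusions without ever naming $A-BD\inv C$. You instead identify $\zero(\tf)=\lambda(A-BD\inv C)$ (legitimate once $\det D\neq 0$, which follows from (\ref{gns3}), or directly by letting $s\to\infty$ in $\tf\Pi\tf\simm=\Pi$), and then read the theorem off the identity (\ref{gegei}), $\tf\inv=\Pi\tf\simm\Pi\trans$: realizing both sides via (\ref{2.15}) and (\ref{simm}) and invoking Lemma \ref{lem:twomin} gives exactly the similarity (\ref{gsimilarity1}), $-A\trans=T\inv(A-BD\inv C)T$, hence $\zero(\tf)=\lambda(A-BD\inv C)=-\lambda(A)=-\pole(\tf)$. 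In effect you promote the paper's preparatory lemma (the one establishing (\ref{gsimilarity})) to be the whole proof, which makes the conceptual content transparent: $\pole=-\zero$ is the statement that the zeros of $\tf$ are the poles of $\tf\inv$, and $\Pi$-orthogonality says $\tf\inv$ is a constant conjugation of $\tf\simm(s)=\tf\trans(-s)$. What the paper's route buys is that it needs only the existence of $V$ in (\ref{gns}) and avoids the bookkeeping you rightly flag as the delicate part of your argument, namely that minimality survives inversion (state feedback/output injection) and the $\Pi$-conjugated transposition so that Lemma \ref{lem:twomin} applies and eigenvalues genuinely coincide with poles and transmission zeros; what your route buys is brevity and a cleaner structural explanation, at the price of making those minimality and invertibility checks explicit. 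Since the paper treats $\pole$ and $\zero$ as sets, the multiplicity question does not arise, and your nontriviality argument ($\xi=0$ forces $\eta=0$ when $D$ is invertible) correctly handles the implicit nondegeneracy in Definition \ref{def:zero}.
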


\begin{proof}
Assume that \en{ \tf } has a minimal realization
\small\begin{align}
 \tf=\dtf{A}{B}{C}{D}. 
\end{align}\normalsize
 Let \en{ z } be a transmission zero of \en{ \tf }.
 From Definition \ref{def:zero}, there exist \en{ \xi,\eta } such that
\begin{subequations}
\small\begin{align}
 (-z+A\trans)\xi + C\trans\eta &=0,
\label{zr1}\\
 B\trans \xi + D\trans \eta &=0.
\label{zr2}
\end{align}\normalsize 
\end{subequations}
Premultiplication by \en{ B\Pi } yields
\small\begin{align}
 B\Pi B\trans\xi + B\Pi D\trans\eta &= 0.
\end{align}\normalsize
From (\ref{gns}),
this turns out to be 
\small\begin{align}
 \left[(z+A)V+V(-z+A\trans)\right]\xi + VC\trans\eta &=0.
\end{align}\normalsize
Using (\ref{zr1}), we get 
\small\begin{align}
 (z+A)V\xi &=0.
\end{align}\normalsize
Since \en{ V } is nonsingular,
\en{ -z } is an eigenvalue of \en{ A }.
Hence \en{ \pole\supset -\zero }.

Let \en{ p } be a pole of \en{ \tf }, 
i.e., 
there exists a vector \en{ \zeta }
 such that 
\small\begin{align}
 A\zeta&=p\zeta.
\end{align}\normalsize
Note that \en{ D } is invertible because of (\ref{gns3}).
For an arbitrary vector \en{ \xi }, 
define \en{ \eta } as
\small\begin{align}
 B\trans\xi + D\trans\eta = 0.
\label{pz1}
\end{align}\normalsize
Premultiplying by \en{ B\Pi } and using (\ref{gns}), 
we can rewrite this as
\small\begin{align}
 (AV+VA\trans)\xi + VC\trans\eta &=0.
\label{pz2}
\end{align}\normalsize
Since \en{ \xi } is arbitrary,
we take \en{ \xi=V\inv \zeta }.
Then (\ref{pz1}) and (\ref{pz2}) are rewritten as
\begin{subequations}
\small\begin{align}
 (p + VA\trans V\inv )\zeta + VC\trans\eta&= 0,
\\
 B\trans V\inv \zeta + D\trans \eta &=0,
\end{align}\normalsize
\end{subequations}
from which we get
\small\begin{align}
 \AH{\zeta\trans}{\eta\trans}
 \A{-p-V\transinv AV\trans}{-V\transinv B}{-CV\trans}{-D}
&=
 0.
\end{align}\normalsize
By Definition \ref{def:zero},
\en{ -p } is a transmission zero of the 
following transfer function:
\small\begin{align}
 \dtf{V\transinv AV\trans}{V\transinv B}{CV\trans}{D}
&=
 \dtf{A}{B}{C}{D},
\end{align}\normalsize
where we have used (\ref{simtrans}).
Hence \en{ \pole \subset -\zero }.
This establishes \en{ \pole = -\zero }.
\end{proof}


\section{Pole-zero symmetry for nonlinear interactions}

In this section, 
we consider the pole-zero symmetry 
for the SU(2) system with additional interactions
both in the time domain and in the frequency domain.
As in Section \ref{sec:dyson},
the effect of additional interactions is represented by 
self-energy \en{ \isel }.
The pole-zero symmetry is then expressed as conditions on \en{ \isel }.

\subsection{Frequency-domain characterization}
\label{sec:fredchark}

Let us consider an SU(2) system with additional
interactions described by a Lagrangian \en{ L }.
Denoted by \en{ \isel } is self-energy resulting from \en{ L }.
The transfer function of this system is given 
as in (\ref{41dyson0}):
\begin{subequations}
\label{const}
\small\begin{align}
 \tf 
\equiv
 \ipg\dd{\Phi\drm{out} | \Phi\drm{in}}\urm{SU(2)+L}
&=
 I 
+
 \left(-4|g|^2\right)
\left[
 I+ \ipg_{\mas|\mas}\urm{SU(2)} \isel
\right]\inv 
 \ipg_{\mas|\mas}\urm{SU(2)}
\label{omm} \\ &=
 \left[s+2|g|^2+\isel \right]\inv 
 \left[s-2|g|^2+\isel \right],
\label{const-3}
\end{align}\normalsize
\end{subequations}

\begin{lemma}
 \label{thm:pzsk}
A system of the form (\ref{const}) is \en{ \Pi }-orthogonal
if and only if 
\small\begin{align}
 (\isel)\Pi
+ 
 \Pi (\isel)\simm   
&=
 0.
\label{ek}
\end{align}\normalsize
\end{lemma}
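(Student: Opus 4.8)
The plan is to prove the equivalence by direct algebraic manipulation of the closed-form transfer function (\ref{const-3}) together with the definition of $\Pi$-orthogonality. First I would recall that for the SU(2) system with additional interaction we have
\small\begin{align}
 \tf
=
 \bigl[ s + 2|g|^2 + \isel \bigr]\inv
 \bigl[ s - 2|g|^2 + \isel \bigr],
\nonumber
\end{align}\normalsize
and that $\tf\simm(s) = \tf\trans(-s)$ by the definition (\ref{simm}). The key observation is that the numerator and denominator factors, $N(s) \equiv s - 2|g|^2 + \isel(s)$ and $M(s) \equiv s + 2|g|^2 + \isel(s)$, commute with each other as rational matrices (they differ only by the scalar $4|g|^2 I$), so $\tf = M\inv N = N M\inv$, and moreover $N\simm = -s - 2|g|^2 + \isel\simm$, $M\simm = -s + 2|g|^2 + \isel\simm$. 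Then $\tf \Pi \tf\simm = M\inv N \Pi N\simm (M\simm)\inv$, and the condition $\tf \Pi \tf\simm = \Pi$ becomes $N \Pi N\simm = M \Pi M\simm$ after premultiplying by $M$ and postmultiplying by $M\simm$ (using that $M$ is nonsingular as a rational matrix, since its direct-through term is $I$, cf.\ the zeroth order of the $S$-matrix).

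\textbf{Expanding both sides.} Next I would write out $N \Pi N\simm$ and $M \Pi M\simm$ explicitly. Abbreviating $K \equiv \isel(s)$, $\tilde K \equiv \isel\simm(s)$, and noting $s$ is a scalar commuting with everything,
\small\begin{align}
 N \Pi N\simm
&=
 \bigl( sI - 2|g|^2 I + K \bigr) \Pi
 \bigl( -sI - 2|g|^2 I + \tilde K \bigr)
\nonumber \\
&=
 \bigl( - s^2 + 4|g|^4 \bigr)\Pi
 + s\bigl( -K\Pi + \Pi \tilde K \bigr)
 - 2|g|^2 \bigl( K\Pi + \Pi \tilde K \bigr)
 + K\Pi \tilde K,
\nonumber
\end{align}\normalsize
and similarly $M \Pi M\simm = (-s^2 + 4|g|^4)\Pi + s(-K\Pi + \Pi\tilde K) + 2|g|^2(K\Pi + \Pi\tilde K) + K\Pi\tilde K$. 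Subtracting, the difference is $M \Pi M\simm - N \Pi N\simm = 4|g|^2 (K\Pi + \Pi\tilde K)$. Hence $\tf \Pi \tf\simm = \Pi$ holds if and only if $K\Pi + \Pi\tilde K = 0$, i.e.\ $\isel\,\Pi + \Pi\,(\isel)\simm = 0$, which is exactly (\ref{ek}). Both directions follow simultaneously from this since $4|g|^2 \ne 0$.

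\textbf{Main obstacle.} The only genuinely delicate point is the manipulation of rational-matrix inverses: I need to be sure that premultiplying $\tf\Pi\tf\simm = \Pi$ by $M$ and postmultiplying by $M\simm$ is a reversible operation, which requires $M(s)$ and $M\simm(s)$ to be invertible as rational matrices (not identically singular). This is guaranteed because the $D$-matrix of the realization is nonsingular --- the direct-through term corresponds to the zeroth-order term of the $S$-matrix and equals $I$ here --- so $M(s) = s + 2|g|^2 + \isel(s)$ has leading behavior $\sim sI$ and is invertible except at isolated poles. A secondary subtlety is that $\isel(s)$ itself is a proper rational matrix (for nonlinear interactions) rather than a constant, but since $s$ commutes with all matrix entries, the factoring $\tf = M\inv N = N M\inv$ and the commutativity of $N$ with $M$ used above remain valid verbatim; no ordering issues arise. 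I would close by remarking that this lemma, combined with the results of Section \ref{sec:symmetric}, immediately yields the pole-zero symmetry $\pole(\tf) = -\zero(\tf)$ whenever $\isel$ satisfies (\ref{ek}).
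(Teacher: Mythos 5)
Your proposal is correct and takes essentially the same route as the paper: the paper's proof simply computes $\tf\simm$ from (\ref{const-3}) and substitutes it into $\tf\Pi\tf\simm=\Pi$, leaving the algebra implicit. Your explicit expansion showing $M\Pi M\simm - N\Pi N\simm = 4|g|^2\bigl(\isel\,\Pi+\Pi\,(\isel)\simm\bigr)$, together with the invertibility remark on $M$ and $M\simm$, is just that omitted calculation spelled out, so no further changes are needed.
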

\begin{proof}
It follows from (\ref{const}) that
\small\begin{align}
 \tf\simm
&=
 \left[-s-2|g|^2+(\isel)\simm \right]\left[-s+2|g|^2 +(\isel)\simm\right]\inv .
\end{align}\normalsize
Substituting this into \en{  \tf \Pi \tf\simm = \Pi }
completes the assertion.
\end{proof}
\begin{remark}
 Compared to (\ref{lie-reactance}),
\en{ \isel } corresponds to a reactance matrix. 
\end{remark}
\begin{corollary}
\label{cor:ek}
Suppose that \en{ \sel } is of the form
\small\begin{align}
\isel&=\A{\isel_{11}}{\isel_{12}}
         {\isel_{21}}{\isel_{22}}.
\end{align}\normalsize
 The system (\ref{const}) 
possesses the pole-zero symmetry
if \en{ \sel } satisfies
\begin{subequations}
\label{ekdyn}
\small\begin{align}
 \sel_{11}(s)&=-\sel_{22}(-s),
\label{ekdyn1}\\
 \sel_{12}(s)&=\sel_{12}(-s),
\label{ekdyn2}\\
 \sel_{21}(s)&=\sel_{21}(-s).
\label{ekdyn3}
\end{align}\normalsize
\end{subequations}
\end{corollary}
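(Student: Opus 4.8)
The plan is to read this off from Lemma~\ref{thm:pzsk} together with the pole-zero symmetry theorem of Section~\ref{sec:symmetric}. By Lemma~\ref{thm:pzsk}, the transfer function $\tf \equiv \ipg\dd{\Phi\drm{out}|\Phi\drm{in}}\urm{SU(2)+L}$ of the form (\ref{const}) is $\Pi$-orthogonal if and only if $(\isel)\Pi + \Pi(\isel)\simm = 0$; and once $\Pi$-orthogonality is in hand, the theorem $\tf\Pi\tf\simm = \Pi \Rightarrow \pole(\tf) = -\zero(\tf)$ delivers the pole-zero symmetry at once. So the whole task reduces to translating the matrix identity (\ref{ek}) into the three scalar conditions (\ref{ekdyn}).

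First I would record the relevant $\Pi$: the SU(2) system is built on the single forward traveling field $\Phi = \AVl{\phi}{\phi\dgg}$, so by the canonical form (\ref{canonicalpi}) one may take $\Pi = \A{}{1}{-1}{}$ (any nonzero scalar prefactor cancels out of (\ref{ek})). Since $\isel = \im\sel$ and (\ref{ek}) is homogeneous linear in the self-energy, it is equivalent to $\sel\Pi + \Pi\sel\simm = 0$. Writing $\sel = \A{\sel_{11}}{\sel_{12}}{\sel_{21}}{\sel_{22}}$ and using $\sel\simm(s) = \sel\trans(-s)$ from (\ref{simm}), I would compute
\begin{align}
 \sel\,\Pi
&=
 \A{-\sel_{12}(s)}{\sel_{11}(s)}
   {-\sel_{22}(s)}{\sel_{21}(s)},
\qquad
 \Pi\,\sel\simm
=
 \A{\sel_{12}(-s)}{\sel_{22}(-s)}
   {-\sel_{11}(-s)}{-\sel_{21}(-s)}.
\end{align}
Setting the sum to zero entrywise gives $\sel_{12}(s) = \sel_{12}(-s)$ from the $(1,1)$ entry, $\sel_{21}(s) = \sel_{21}(-s)$ from the $(2,2)$ entry, and $\sel_{11}(s) = -\sel_{22}(-s)$ from the $(1,2)$ entry; the $(2,1)$ entry yields $\sel_{22}(s) = -\sel_{11}(-s)$, which is the $(1,2)$ relation with $s\mapsto -s$ and hence imposes nothing new. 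These are exactly (\ref{ekdyn1})--(\ref{ekdyn3}), so (\ref{ekdyn}) implies (\ref{ek}), and by Lemma~\ref{thm:pzsk} and the pole-zero symmetry theorem the corollary follows.

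There is no genuine obstacle here: the corollary is bookkeeping layered on top of Lemma~\ref{thm:pzsk}. The two points that want a word of care are fixing $\Pi$ for the two-component bosonic field and checking that the $(2,1)$ equation is not an additional requirement but merely the $(1,2)$ equation re-expressed -- both are settled by the entrywise computation above. It is also worth remarking, in the spirit of the Remark following Lemma~\ref{thm:pzsk}, that (\ref{ekdyn}) says precisely that the self-energy $\isel$ behaves as a dynamical reactance matrix in the sense of (\ref{lie-reactance}), which is the structural content of the result.
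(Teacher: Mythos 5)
Your proposal is correct and follows exactly the route the paper intends: the corollary is stated without a separate proof precisely because it is the entrywise reading of condition (\ref{ek}) from Lemma \ref{thm:pzsk} (with $\Pi=\A{}{1}{-1}{}$ for the bosonic pair $\Phi$, the factor $\im$ and any scaling of $\Pi$ dropping out), combined with the theorem of Section \ref{sec:symmetric} that $\Pi$-orthogonality implies $\pole=-\zero$. Your entrywise computation, including the observation that the $(2,1)$ entry duplicates the $(1,2)$ one, matches the three conditions (\ref{ekdyn1})--(\ref{ekdyn3}), so nothing is missing.
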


\begin{example}
\rm
Let us check to see if (\ref{ekdyn}) holds for linear interactions.
In the case of detuning in Section \ref{detuning},
\small\begin{align}
 \isel
&= 
 \im \Omega \sigma\dd{z}
= 
 \A{\im\Omega}{}{}{-\im\Omega}
\end{align}\normalsize
obviously satisfies (\ref{ekdyn}).
In the case of squeezing in Section \ref{sec:sq},
\small\begin{align}
 \isel
&= 
 -r\sigma\dd{x}
=
 \A{0}{-r}{-r}{0},
\end{align}\normalsize
which also satisfies (\ref{ekdyn}).
\qed
\end{example}

\subsection{Time-domain characterization}

To see the pole-zero symmetry for nonlinear interactions
in the time domain,
we first consider the state space realization of
the transfer function \en{ \Phi\drm{out}\gets\Phi\drm{in} }.

\begin{lemma}
 \label{lem:step}
Consider an SU(2) system
\small\begin{align}
 \ipg\dd{\Phi\dd{\mathrm{out}}|\Phi\dd{\mathrm{in}}}^{\mathrm{SU(2)}}
&=
 \dtf{-2|g|^2 }{-2g\uu{*} }{2g }{I}.
\end{align}\normalsize
If an additional interaction \en{ L } represented by self-energy
\small\begin{align}
\isel&=\dtf{A}{B}{C}{D}
\end{align}\normalsize
is placed in the SU(2) system,
then
\small\begin{align}
 \tf
\equiv
 \ipg\dd{\Phi\dd{\mathrm{out}}|\Phi\dd{\mathrm{in}}}^{\mathrm{SU(2)+L}}
&=
 \dtf{
\begin{array}{cc}
 -D-2|g|^2  & C  \\
 -B      & A 
\end{array}
}{
\begin{array}{c}
 -2g\uu{*}  \\
 0
\end{array}
}{
 \hspace{3mm} 2g \hspace{9mm} 0 
}{
 I
}.
\label{thm1315}
\end{align}\normalsize
\end{lemma}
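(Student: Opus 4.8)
The plan is to combine the Dyson-equation structure already established in Section~\ref{sec:dyson} with the realization formula~(\ref{constl}) and the formula~(\ref{41dyson0}) for the dressed transfer function, and then read off the state-space data. First I would recall from Theorem~\ref{thm:tfbs} that the empty SU(2) system has the minimal realization
\small\begin{align}
 \ipg\dd{\Phi\dd{\mathrm{out}}|\Phi\dd{\mathrm{in}}}^{\mathrm{SU(2)}}
 &=
 \dtf{-2|g|^2}{-2g\uu{*}}{2g}{I},
\end{align}\normalsize
together with the closed-loop transfer functions $\ipg_{\mas|\mas}^{\mathrm{SU(2)}}=(s+2|g|^2)\inv$ and the ``leg'' functions $\ipg\dd{\Phi\dd{\mathrm{out}}|\mas}^{\mathrm{SU(2)}}$, $\ipg\dd{\mas|\Phi\dd{\mathrm{in}}}^{\mathrm{SU(2)}}$ of the realizations listed in Theorem~\ref{thm:tfbs}. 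The additional interaction $L$ enters only through its self-energy $\isel = \dtf{A}{B}{C}{D}$, and by the feedback expansion~(\ref{dysonext})--(\ref{41dyson0}) the dressed transfer function is
\small\begin{align}
 \tf
 &=
 \bigl[\,I+\ipg_{\mas|\mas}^{\mathrm{SU(2)}}\isel\,\bigr]\inv
 \bigl[\,\ipg\dd{\Phi\dd{\mathrm{out}}|\Phi\dd{\mathrm{in}}}^{\mathrm{SU(2)}}
 + \ipg_{\mas|\mas}^{\mathrm{SU(2)}}\isel\,\bigr].
\end{align}\normalsize

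Next I would convert this algebraic identity into a single state-space realization. The cleanest route is to view $\tf$ as a feedback interconnection of two known realizations: the SU(2) block (internal state $\mas$, matrices $-2|g|^2$, $-2g\uu{*}$, $2g$, $I$) wrapped around the self-energy block (internal state of dimension equal to that of $A$, matrices $A,B,C,D$). Writing the two state equations
\small\begin{align}
 \dot{\mas} &= -2|g|^2\mas - 2g\uu{*}\Phi\dd{\mathrm{in}} + (\text{feedback from } \isel),
 \qquad
 \dot{x} = Ax + B(2g\,\mas),
\end{align}\normalsize
and closing the loop so that the output of $\isel$, namely $Cx+D(2g\,\mas)$, is fed back with the sign dictated by the negative-feedback (Dyson) structure, I would collect the joint state $\AV{\mas}{x}$ and assemble the combined $A$-, $B$-, $C$-, $D$-matrices. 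The bookkeeping should produce exactly the block matrices displayed in~(\ref{thm1315}): the $(1,1)$ corner $-D-2|g|^2$ (the direct SU(2) decay plus the instantaneous through-term of the self-energy, with sign from the feedback), the off-diagonal couplings $C$ and $-B$ between the two states, the lower-right corner $A$, the input column $\AVl{-2g\uu{*}}{0}$ (the self-energy state has no direct drive from $\Phi\dd{\mathrm{in}}$), the output row $\AH{2g}{0}$, and the feedthrough $I$. A sanity check is to take $\isel$ constant, i.e.\ $\isel=-\sel$ with $A,B,C$ absent and $D=\sel$; then~(\ref{thm1315}) collapses to $\dtf{-2|g|^2-\sel}{-2g\uu{*}}{2g}{I}$, matching~(\ref{constl}).

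The main obstacle I anticipate is getting the signs and the placement of the self-energy's feedthrough term $D$ exactly right in the feedback composition. The Dyson series is an alternating geometric series in $\ipg_{\mas|\mas}\isel$, and the sign conventions for self-energy (Section~\ref{sec:dyson} defines $\isel$ with a particular sign relative to $-2|g|^2$) must be threaded consistently through the state-space cascade/feedback formulas of Section~\ref{sec:circuits}; it is easy to land on $+D$ instead of $-D-2|g|^2$ or to attach $D$ to the wrong block. I would resolve this by first doing the scalar/constant case carefully against~(\ref{constl}), then promoting $D\to D$, $-2|g|^2\to -2|g|^2 I$ and inserting the dynamic part $A,B,C$ as a dilation of the constant case, checking that eliminating the $x$-state (i.e.\ substituting $x=(sI-A)\inv B(2g\,\mas)$ and hence replacing $D$ by $D+C(sI-A)\inv B = \isel$) recovers~(\ref{const-3}). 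Once that consistency check passes, the realization~(\ref{thm1315}) follows, and one notes it need not be minimal in general, just as in Example~\ref{ex:similarity}.
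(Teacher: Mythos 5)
Your route is sound and mechanically different from the paper's. The paper also starts from the Dyson formula (\ref{41dyson0})/(\ref{omm}), but then stays entirely inside the state-space calculus of Section \ref{sec:circuits}: it realizes \en{ \ipg\dd{\mas|\mas}\urm{SU(2)}\isel } with the cascade formula (\ref{2.13}), inverts \en{ 1+\ipg\dd{\mas|\mas}\urm{SU(2)}\isel } with (\ref{2.15}), multiplies by the bare SU(2) realization, and only then removes an unobservable mode by an explicit similarity transformation to land on (\ref{thm1315}). You instead synthesize the interconnection directly—joint state \en{ (\mas,x) }, negative feedback of the self-energy output into \en{ \dot{\mas} }—and verify by eliminating \en{ x }. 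This avoids the non-minimal intermediate realization and the similarity transformation altogether, at the price that all the sign and feedthrough bookkeeping lands on you at the assembly stage, exactly as you anticipate; your assembled state matrix would differ from (\ref{thm1315}) only by the trivial sign change \en{ x\to -x }, which is harmless.

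One concrete slip in the sketch: you drive the self-energy block with \en{ 2g\,\mas } and feed back \en{ Cx+D(2g\,\mas) }. In the Dyson series (\ref{dysonext}) the self-energy insertion sits between two copies of \en{ \ipg\dd{\mas|\mas}\urm{SU(2)} }, so it acts on \en{ \mas } itself; the factors \en{ 2g } and \en{ -2g\uu{*} } belong exclusively to the external legs \en{ \ipg\dd{\phi\drm{out}|\mas}\urm{SU(2)} } and \en{ \ipg\dd{\mas|\phi\drm{in}}\urm{SU(2)} }. With your \en{ 2g } in the coupling, eliminating \en{ x } gives
\small\begin{align}
 \bigl[\, s+2|g|^2+2g\,\isel(s)\,\bigr]\mas
&=
 -2g\uu{*}\,\Phi\drm{in},
\nonumber
\end{align}\normalsize
i.e. \en{ 2g\,\isel } where (\ref{const-3}) requires \en{ \isel }. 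The correct coupled equations are \en{ \dot{x}=Ax+B\mas } together with the feedback term \en{ -(Cx+D\mas) } in \en{ \dot{\mas} }, which then reproduces (\ref{thm1315}) up to \en{ x\to -x }. Your own consistency checks against (\ref{constl}) and (\ref{const-3}) would expose and fix this, so it is a correctable bookkeeping error rather than a flaw in the approach—but as written the interconnection does not yet realize the stated transfer function.
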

\begin{proof}
We prove the assertion using (\ref{omm}):
\small\begin{align}
  \ipg_{\Phi\drm{out}|\Phi\drm{in}}\urm{SU(2)+L} 
&=
 I 
+
\left(-4|g|^2\right)
\left[
 I+ \ipg_{\mas|\mas}\urm{SU(2)} \isel
\right]\inv 
 \ipg_{\mas|\mas}\urm{SU(2)}.
\label{omm-1}
\end{align}\normalsize
Let us rewrite this in the time domain 
using Section \ref{sec:circuits}.
It follows from (\ref{2.13}) that 
\begin{subequations}
\small\begin{align}
 \ipg_{\mas|\mas}\urm{SU(2)} \isel &=
 \dtf{-2|g|^2}{I}{I}{0}\dtf{A}{B}{C}{D}
\\ &=
 \dtf{\nA{-2|g|^2}{C}{0}{A}}
    {\nAV{D}{B}}
    {\hspace{2mm} I\hspace{7mm}0}{0}.
\end{align}\normalsize
\end{subequations}
Furthermore, using (\ref{2.15}), we get
\begin{subequations}
\small\begin{align}
 \Bigl[1+\ipg\urm{SU(2)}_{\mas|\mas} \isel \Bigr]\inv  
&=
 \dtf{\A{-2|g|^2}{C}{0}{A}-\AV{D}{B}\AH{I}{0}}
    {\nAV{D}{B}}
    {\hspace{4mm}-1 \hspace{9mm}0}
    {I}
\\ &=
 \dtf{\nA{-D-2|g|^2}{C}{-B}{A}}
    {\nAV{D}{B}}
    {\hspace{4mm}-I \hspace{9mm}0}
    {I}.
\end{align}\normalsize
\end{subequations}
Hence (\ref{omm-1}) is expressed as
\begin{subequations}
\small\begin{align}
 \ipg_{\Phi\drm{out}|\Phi\drm{in}}\urm{SU(2)+L} 
&=
 I+
 \dtf{\nA{-D-2|g|^2}{C}{-B}{A}}
    {\nAV{D}{B}}
    {\hspace{4mm}-I \hspace{9mm}0}
    {I}
 \dtf{-2|g|^2}{-2g\uu{*}}{2g}{0}
\\&=
 \dtf{
 \begin{array}{ccc}
  -D-2|g|^2  & C & 2gD \\
  -B       & A & 2gB \\
  0        & 0 & -2|g|^2
 \end{array}
}{
 \begin{array}{c}
  0 \\
  0 \\
  -2g\uu{*}
 \end{array}
}{
 \hspace{3mm} -I \hspace{10mm} 0 \hspace{6mm} 2g
}{
 I
}.
\end{align}\normalsize
\end{subequations}
Note that 
there is an unobservable mode that can be eliminated
in the same way as Example \ref{ex:similarity}.
To see this,
recall that 
the transfer function is invariant under a similarity
transformation, see (\ref{simtrans}).
Choose \en{ T } in (\ref{simtrans}) as
\small\begin{align}
 T&=
 \B{I}{0}{2g}
   {0}{I}{0}
   {0}{0}{I},
\qquad
 T\inv =
 \B{I}{0}{-2g}
   {0}{I}{0}
   {0}{0}{I}.
\end{align}\normalsize
Then we get
\begin{subequations}
\small\begin{align}
\ipg_{\Phi\drm{out}|\Phi\drm{in}}\urm{SU(2)+L} 
&=
 \dtf{
T\inv \left[
 \begin{array}{ccc}
  -D-2g^2  & C & 2gD \\
  -B       & A & 2gB \\
  0        & 0 & -2|g|^2
 \end{array}
\right]T
}{
T\inv \left[
 \begin{array}{c}
  0 \\
  0 \\
  -2g\uu{*}
 \end{array}
\right]
}{ \hspace{6mm}
\left[
 \hspace{3mm} -I \hspace{12mm} 0 \hspace{9mm} 2g
\right]T
\rule[0mm]{0mm}{5mm}
}{
 I
}
\\ &=
 \dtf{
\begin{array}{ccc}
 -D-2|g|^2 & C & 0 \\
 -B      & A & 0 \\
 0       & 0 & -2|g|^2
\end{array}
}{
\begin{array}{c}
 4|g|^2 \\
 0 \\
 -2g\uu{*}
\end{array}
}{
 -I \hspace{11mm} 0 \hspace{7mm} 0
}{
 I
}
\\ &=
 \dtf{
\begin{array}{cc}
 -D-2|g|^2 & C  \\
 -B      & A 
\end{array}
}{
\begin{array}{c}
 -2g\uu{*} \\
 0
\end{array}
}{
 \hspace{3mm} 2g \hspace{9mm} 0 
}{
 I
},
\label{kstate}
\end{align}\normalsize
\end{subequations}
where an unobservable mode has been eliminated 
in the last line.
\end{proof}

\begin{theorem}
\label{thm:stateonk}
Suppose that \en{ \sel } has a minimal realization
\small\begin{align}
\isel&=\dtf{A}{B}{C}{D}.
\end{align}\normalsize
A system of the form (\ref{const}) possesses the pole-zero symmetry 
if and only if
\begin{subequations}
\label{pzk}
\small\begin{align}
 A\Pi +\Pi A\trans &=0,
\\
 B\Pi -\Pi C\trans &=0,
\\
 D\Pi +\Pi D\trans &=0.
\end{align}\normalsize
\end{subequations}
\end{theorem}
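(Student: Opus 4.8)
The plan is to shuttle between the frequency domain and the state space. On the frequency side, Lemma~\ref{thm:pzsk} together with the \en{\Pi}-orthogonality argument of Section~\ref{sec:symmetric} already tells us that a system of the form~(\ref{const}) is \en{\Pi}-orthogonal exactly when the self-energy obeys \en{\isel\,\Pi+\Pi\,\isel\simm=0}, equation~(\ref{ek}), and that \en{\Pi}-orthogonality forces the pole-zero symmetry; the diagonal instance of this is Corollary~\ref{cor:ek}. On the state-space side, Lemma~\ref{lem:step} furnishes the minimal realization~(\ref{thm1315}) of \en{\tf\equiv\ipg\dd{\Phi\drm{out}|\Phi\drm{in}}\urm{SU(2)+L}} built from the minimal realization \en{\isel=\dtf{A}{B}{C}{D}}; writing the coefficient matrices of~(\ref{thm1315}) as \en{\maths{A},\maths{B},\maths{C},\maths{D}} with \en{\maths{D}=I}, one reads off \en{\pole(\tf)=\lambda(\maths{A})} and, since \en{\maths{D}} is invertible, \en{\zero(\tf)=\lambda(\maths{A}-\maths{B}\maths{C})}. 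The theorem then amounts to translating the single functional identity~(\ref{ek}) into the three matrix identities~(\ref{pzk}).

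First I would dispose of the easy implication, that~(\ref{pzk}) yields the pole-zero symmetry. Substitute \en{\isel(s)=C(sI-A)\inv B+D} and \en{\isel\simm(s)=-B\trans(sI+A\trans)\inv C\trans+D\trans}. The \en{s}-independent part of \en{\isel\,\Pi+\Pi\,\isel\simm} is \en{D\Pi+\Pi D\trans}, which vanishes by the third equation of~(\ref{pzk}). For the strictly proper part, the first equation \en{A\Pi+\Pi A\trans=0} gives the commutation identity \en{(sI-A)\inv\Pi=\Pi(sI+A\trans)\inv}; inserting this together with \en{B\Pi=\Pi C\trans} and its transposed form (a similarly signed relation, using the statistics-dependent signs \en{\Pi\trans=\pm\Pi}, \en{\Pi^2=\pm I}) makes the two remaining terms coincide and cancel. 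Hence~(\ref{ek}) holds, so by Lemma~\ref{thm:pzsk} the realized \en{\tf} is \en{\Pi}-orthogonal and by Section~\ref{sec:symmetric} it satisfies \en{\pole=-\zero}. As a check, for \en{\isel=\diag(\isel_{11},\isel_{22})} the conditions~(\ref{pzk}) reduce to \en{\isel_{11}(s)=-\isel_{22}(-s)}, which is exactly~(\ref{ekdyn1}), in agreement with Corollary~\ref{cor:ek}.

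The converse — pole-zero symmetry implies~(\ref{pzk}) — is where the work lies. From the realization~(\ref{thm1315}), \en{\pole=-\zero} is the equality of characteristic polynomials \en{\det(sI-\maths{A})=\det\bigl(sI+(\maths{A}-\maths{B}\maths{C})\bigr)}, i.e.\ \en{\maths{A}} and \en{-(\maths{A}-\maths{B}\maths{C})} are isospectral. I would then run the argument from the proof of the Section~\ref{sec:symmetric} theorem in reverse: isospectrality, combined with minimality of~(\ref{thm1315}) and the block structure of its coefficients, should force a similarity between the two minimal realizations \en{\Pi\,\tf\simm\,\Pi\trans} and \en{\tf\inv}, i.e.\ the \en{\Pi}-orthogonality conditions~(\ref{gns}) for \en{\tf}; peeling off the SU(2) part of~(\ref{thm1315}) (whose \en{\Pi}-orthogonality, with \en{V=\Pi}, is already known) block by block then leaves precisely~(\ref{pzk}) for \en{(A,B,C,D)}. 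Alternatively one can first try to show that pole-zero symmetry implies~(\ref{ek}) and then apply the first step's computation backwards, again invoking minimality of \en{(A,B,C,D)} to extract the three identities from one functional equation. Either way the crux — and the main obstacle — is that equality of spectra is strictly weaker than similarity of realizations, so closing the gap forces one to exploit the minimality of the combined realization carefully rather than reasoning with \en{\det} of the (generally non-coprime) numerator and denominator of~(\ref{const}).
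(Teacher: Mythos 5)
Your sufficiency direction is sound and coincides with what the paper itself relegates to a remark: plugging \en{ \isel(s)=C(sI-A)\inv B+D } into \en{ \isel\,\Pi+\Pi\,\isel\simm }, the third identity kills the constant part, and \en{ A\Pi+\Pi A\trans=0 } gives \en{ (sI-A)\inv\Pi=\Pi(sI+A\trans)\inv }, which together with \en{ B\Pi=\Pi C\trans } (and its transpose, using \en{ \Pi\trans=\pm\Pi }) cancels the strictly proper part; then Lemma \ref{thm:pzsk} and the theorem of Section \ref{sec:symmetric} give \en{ \pole=-\zero }.

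The genuine gap is the converse, and it is exactly the obstacle you name but do not resolve: from the purely spectral statement that \en{ \maths{A} } and \en{ -(\maths{A}-\maths{B}\maths{C}) } are isospectral you cannot manufacture a similarity, and minimality of the combined realization does not close that gap, because pole-zero symmetry as a set condition on eigenvalues is strictly weaker than the transfer-function identity \en{ \tf\Pi\tf\simm=\Pi }. The paper never attempts that passage. Its proof reads the symmetry hypothesis through \en{ \Pi }-orthogonality (preservation of canonical quantization), for which the necessity half of Lemma \ref{thm:gss} is legitimate precisely because \en{ \Pi\tf\simm\Pi\trans=\tf\inv } is an equality of transfer functions, so Lemma \ref{lem:twomin} supplies an honest similarity matrix \en{ V }. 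Concretely: rewrite (\ref{thm1315}) with input and output matrices \en{ \AV{-2|g|}{0} } and \en{ \AH{2|g|}{0} } and \en{ D }-matrix \en{ I }, apply Lemma \ref{thm:gss}; the output condition of (\ref{gns}) becomes \en{ \AH{2|g|}{0}V+\Pi\AH{-2|g|}{0}=0 }, which pins \en{ V=\Pi }; substituting into the Lyapunov-type condition, the (1,1) block gives \en{ D\Pi+\Pi D\trans=0 } (the \en{ 4|g|^2 } contributions cancel), the off-diagonal blocks give \en{ B\Pi-\Pi C\trans=0 }, and the (2,2) block gives \en{ A\Pi+\Pi A\trans=0 } — exactly (\ref{pzk}). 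So the missing idea is not a cleverer exploitation of isospectrality but (i) taking the \en{ \Pi }-orthogonality identity as the operative form of the hypothesis and (ii) noticing that the SU(2) input/output coupling forces \en{ V=\Pi }, which is what makes the single state-space equation split into the three conditions. As written, your converse is a plan whose first step would fail, so the proof is incomplete.
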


\begin{proof}
We first note that 
(\ref{thm1315}) can be rewritten as
\small\begin{align}
\ipg_{\Phi\drm{out}|\Phi\drm{in}}\urm{SU(2)+L}
&=
  \dtf{
\begin{array}{cc}
 -D-2|g|^2 & C  \\
 -B      & A 
\end{array}
}{
\begin{array}{c}
 -2|g| \\
 0
\end{array}
}{
 \hspace{3mm} 2|g| \hspace{9mm} 0 
}{
 I
}.
\end{align}\normalsize
Hence Lemma \ref{thm:gss} can be written as
\begin{subequations}
\small\begin{align}
& \hspace{-3mm} 
 \A{-D-2|g|^2}{C}{-B}{A} V + V \A{-D\trans -2|g|^2}{-B\trans}{C\trans}{A\trans}
 + \AV{-2|g|}{0}\Pi \AH{-2|g|}{0}
=0,
 \\
& \hspace{-3mm} 
\AH{2|g|}{0} V + \Pi\AH{-2|g|}{0}=0.
\end{align}\normalsize
\end{subequations}
The second equation indicates that \en{ V=\Pi }.
The assertion is achieved by substituting it into the first equation.
\end{proof}

\begin{remark}
The sufficiency can be easily shown from 
Lemma \ref{thm:pzsk}.
\end{remark}

\chapter{Nonlinear interactions}
\label{chap:nonlinear}
\thispagestyle{fancy}

As an example of the pole-zero symmetry,
we consider 
a third-order nonlinear interaction.\index{third-order nonlinearity}
This is similar to a well-known self-interacting 
\en{ \phi^4 } theory of a scalar field.
A difference is that 
the nonlinear interaction is now placed in an SU(2) system.
We examine self-energy \en{ \isel } 
up to second order.
\en{ \isel } is a constant matrix 
in the first order (linear) approximation.
In the second order approximation,
\en{ \isel } exhibits instability,
as expected from the pole-zero symmetry
of the preceding chapter.
This produces bistability in the SU(2) system.

\section{Third-order nonlinear interaction}

\subsection{Interaction Lagrangian}

Let us consider a Lagrangian of the form
\small\begin{align}
  \lag = \lag\urm{f} +\lag\urm{int},
\end{align}\normalsize
where 
\begin{subequations}
\small\begin{align}
 \lag\urm{f} 
&\equiv 
 \lag\urm{SU(2)}_{\mas}+\lag\urm{DT},
\\
 \lag\urm{int} 
&\equiv 
 \ffrac{\lambda}{2} \mas\dgg \mas\dgg \mas \mas.
\end{align}\normalsize
\end{subequations}
\en{ \lag\urm{f} } describes linear interactions for 
an SU(2) system with detuning (Section \ref{detuning}).
We have already obtained 
the transfer functions of this linear component.
\en{ \lag\urm{int} } is a nonlinear interaction
placed in the SU(2) system.
Our purpose is to calculate a transfer function
\en{ \Phi\drm{out}\gets\Phi\drm{in} }:
\small\begin{align}
 \AVl{\phi}{\phi\dgg}\drm{out}
=
 \ipg\dd{\Phi\drm{out} | \Phi\drm{in}}\urm{f+int}
 \AVl{\phi}{\phi\dgg}\drm{in},
\end{align}\normalsize
where
\small\begin{align}
 \ipg\dd{\Phi\drm{out} | \Phi\drm{in}}\urm{f+int}
\equiv
 \A{\ipg\dd{\phi\drm{out} | \phi\drm{in}}\urm{f+int}}
   {\ipg\dd{\phi\drm{out} | \phi\drm{in}\dgg}\urm{f+int}}
   {\ipg\dd{\phi\drm{out}\dgg | \phi\drm{in}}\urm{f+int}}
   {\ipg\dd{\phi\drm{out}\dgg | \phi\drm{in}\dgg}\urm{f+int}
   \rule[0mm]{0mm}{5mm}}.
\label{1stmat}
\end{align}\normalsize
This is done by finding self-energy 
corresponding to the nonlinear interaction
in the first and second order approximations.

\subsection{Self-energy}

For nonlinear interactions,
self-energy depends on the order of approximation.
The expansion of the \textit{S}-matrix is written as
\begin{subequations}
\small\begin{align}
& \wick{\Phi\drm{out}}{\Phi\drm{in}\ddgg}\urm{f}
\\ & + 
 \frac{1}{1!}\int dx_2 \ 
 \mzero{\T \, \Phi\drm{out} \  \im\lag\urm{int}(x_2) \ \Phi\drm{in}\ddgg }
\\ & + 
 \frac{1}{2!}\int dx_3 dx_2 \ 
 \mzero{\T \, \Phi\drm{out} \,
 \im \lag\urm{int}(x_3) \ \im\lag\urm{int}(x_2) \
 \Phi\drm{in}\ddgg } 
\ + \ \cdots.
\end{align}\normalsize
\end{subequations}
The \en{ n }th order term is expressed as
\en{ \ipg\dd{ \phi\drm{out} | \mas}\urm{\ f} \,
  \bigl(\isel\dd{n} \bigr) \, \ipg\dd{\mas|\phi\drm{in}}\urm{\ f} }.
Then up to second order, 
\small\begin{align}
 \ipg\dd{\Phi\drm{out} | \Phi\drm{in}}\urm{f+int}
&\sim
 \ipg\dd{\phi\drm{out} | \phi\drm{in} }\urm{\ f}
-
 \ipg\dd{\phi\drm{out} | \mas}\urm{\ f} \
\Bigl[ \isel_1 + \isel_2 \Bigr] \
 \ipg\dd{\mas| \phi\drm{in}}\urm{\ f},
\end{align}\normalsize
from which self-energy \en{ \isel } is defined as
\small\begin{align}
 \isel
\equiv
 \isel_1 + \isel_2.
\end{align}\normalsize
This is used for the expression (\ref{41dyson0}):
\small\begin{align}
 \ipg\dd{\Phi\drm{out} | \Phi\drm{in}}\urm{f+int}
&=
\left[
 1+ \ipg_{\mas|\mas}\urm{\ f} \ \isel
\right]\inv 
\left[
 \ipg\dd{\phi\drm{out} | \phi\drm{in}}\urm{\ f} 
+ 
 \ipg_{\mas|\mas}\urm{\ f} \ \isel
\right].
\label{41dyson-s}
\end{align}\normalsize

\subsection{Feynman rules}

Each \en{ \isel_n } is calculated
according to the following Feynman rules:\index{Feynman rule}
\bee
\setlength{\itemsep}{0mm} 
\item
 Draw connected diagrams 
 and define momentum to be conserved at each vertex;
\item 
 All arrows on the outside of loops need to be put 
 in the forward (causal) direction using the asymmetry of
 Theorem \ref{thm:tfbs}
 to assure a conversion between the Laplace and Fourier transforms;
\item 
 Integrate with respect to the momentum at the vertices.
\ee
Note that Rule 3 was not necessary for linear interactions.

\subsection{Nonzero vacuum expectation}

We assume that the system \en{ \mas } has a nonzero vacuum expectation
because 
\en{ \lag\urm{int} } is a `wine-bottle' type of potential.
In other words, 
\en{ \mas } is displaced in the phase space as
\small\begin{align}
\hspace{20mm}
 \mas
&\to 
 \ves + \mas, \qquad \ves \in \mathbb{C}.
\end{align}\normalsize
In the frequency domain, 
\begin{subequations}
\small\begin{align}
 \contraction{}{\phi\hspace{1mm}}{\hspace{2mm}}{}
 \mas \mas\dgg (\omega)
&=
 2\pi |\ves|^2 \delta(\omega) + \ipg_{\mas|\mas}\urm{\ f}(\omega),
\label{vacmmd}\\
 \contraction{}{\phi\hspace{1mm}}{\hspace{2mm}}{}
 \mas \mas (\omega)
&=
 2\pi \ves^2 \delta(\omega).
\label{vacmm}
\end{align}\normalsize
\end{subequations}
Note that 
\en{ \delta(\omega) } is regarded as a disconnected edge.

\newpage

\section{First order approximation}
\label{sec:1stnon}

\begin{wrapfigure}[0]{r}[53mm]{49mm}
\vspace{-0mm}
\centering
\includegraphics[keepaspectratio,width=47mm]{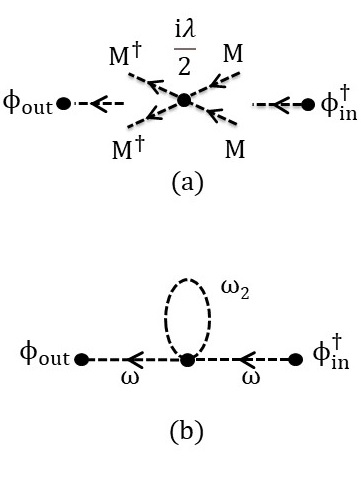}
\caption{
\small
(a) Three elementary diagrams for
 $\phi_{\mathrm{out}}$, 
 $(\im\lambda/2)\mas\dgg \mas\dgg \mas \mas$ and 
 $\phi_{\mathrm{in}}\dgg$.
(b) First order correction 
 of $\ipg_{\phi_{\mathrm{out}}|\phi_{\mathrm{in}}}^{\mathrm{f+int}}$.
 The input $\phi_{\mathrm{in}}\dgg$ is connected to one of $\mas$'s, 
 which results in $\ipg_{\mas|\phi_{\mathrm{in}}}^{\mathrm{f}}(\omega)$. 
 There are two choices of $\mas$ 
 and hence 
 we get the factor $2!$ in (\ref{1phi1}). 
 Likewise, 
 $\phi_{\mathrm{out}}$ is connected to one of $\mas\dgg$'s, 
 which results in $\ipg_{\phi_{\mathrm{out}}|\mas}^{\mathrm{f}}(\omega)$.
 The remaining $\mas$ and $\mas\dgg$ are connected to each other as 
 $\contraction{}{\phi\hspace{1mm}}{\hspace{2mm}}{}
 \mas \mas\dgg (\omega_2)$, 
 which is expressed by the loop in the Feynman diagram.
 The momentum is conserving at the vertex, i.e., 
 $\omega+\omega_2-\omega-\omega_2=0$.
\normalsize
}
\label{fig-phi-1st}
\end{wrapfigure}

Let us start with the (1.1)-element 
of the transfer function (\ref{1stmat}).
In the first order approximation,
the corresponding Feynman diagram is depicted 
in Figure \ref{fig-phi-1st}(b).
According to the Feynman rules,
it is written in the frequency domain as
\small\begin{align}
& (2!)^2 \ \im\frac{\lambda}{2} 
 \int \frac{d\omega_2}{2\pi} \
 \ipg\dd{\phi\drm{out} | \mas}\urm{\ f}(\omega) 
\
\contraction{}{\phi\hspace{1mm}}{a}{}
 \mas\mas\dgg
(\omega_2)
\
 \ipg\dd{\mas|\phi\drm{in}}\urm{\ f}(\omega)
\nn\\ &\hspace{23mm} =
 \ipg\dd{\phi\drm{out} | \mas}\urm{\ f}(\omega) 
\left[
 \im 2\lambda \left(|\ves|^2+\frac{1}{2}\right)
\right]
 \ipg\dd{\mas| \phi\drm{in}}\urm{\ f}(\omega),
\label{1phi1}
\end{align}\normalsize
where we have used (\ref{vacmmd}).
The other matrix elements are given
in the same way:
\begin{subequations}
\label{1phi1other}
\small\begin{align}
& \hspace{-7mm}
\ipg\dd{\phi\drm{out} |\phi\drm{in}\dgg}\urm{f+int} 
: 
 \im\lambda
\contraction{}{\phi\hspace{0mm}}{\hspace{9mm}}{}
 \phi\drm{out} : \mas\dgg 
\contraction{}{\mas}{\hspace{16mm}}{a}
 \mas\dgg 
\contraction[2ex]{}{\mas}{\hspace{1mm}}{a}
\mas\mas : (-\phi\drm{in}) \cdot 2!
=
  \ipg\dd{\phi\drm{out} | \mas}\urm{\ f}
  \left[\im \lambda \ves^2 \right]
  \ipg\dd{\mas|\phi\drm{in}}\urm{\ f},
\\
& \hspace{-7mm}
\ipg\dd{\phi\drm{out}\dgg | \phi\drm{in}}\urm{f+int}  
: 
 \im\lambda
\contraction{}{\phi}{\hspace{20mm}}{}
 \phi\drm{out}\dgg : 
\contraction[2ex]{}{\mas}{\hspace{2mm}}{a}
 \mas\dgg \mas\dgg \mas 
\contraction{}{\mas}{\hspace{5mm}}{}
 \mas : \phi\drm{in}\dgg \cdot 2!
=
 \ipg\dd{\phi\drm{out} | \mas}\urm{\ f} 
 \left[-\im \lambda \ves^{*2} \right]
 \ipg\dd{\mas|\phi\drm{in}}\urm{\ f},
\\
& \hspace{-7mm}
\ipg\dd{\phi\drm{out}\dgg | \phi\drm{in}\dgg}\urm{f+int}  
: 
 \im \lambda
\contraction{}{\phi}{\hspace{22mm}}{}
 \phi_4\dgg :  
\contraction[2ex]{}{\mas}{\hspace{23mm}}{}
 \mas\dgg 
\contraction[3ex]{}{\mas}{\hspace{2mm}}{a}
 \mas\dgg \mas \mas : (-\phi\drm{in}) \cdot (2!)^2 
=
 \ipg\dd{\phi\drm{out} | \mas}\urm{\ f}
 \left[-\im 2\lambda 
 \left(|\ves|^2+\frac{1}{2}\right) \right] 
 \ipg\dd{\mas|\phi\drm{in} }\urm{\ f}.
\end{align}\normalsize
\end{subequations}
Compared to (\ref{dysonext}),
the first order correction is written as
\small\begin{align}
 \ipg\dd{\phi\drm{out} | \mas}\urm{\ f} \
 \bigl(-\isel_1 \bigr) \
 \ipg\dd{\mas|\phi\drm{in} }\urm{\ f},
\end{align}\normalsize
where 
\small\begin{align}
 \isel_1&=
 \A{-\im 2\lambda \left(|\ves|^2+\ffrac{1}{2}\right)}
   {-\im\lambda \ves^2}
   {\im \lambda \ves^{*2}}
   {\im 2\lambda\left(|\ves|^2+\ffrac{1}{2}\right)}.
\label{k1}
\end{align}\normalsize

\begin{remark}
The same result can be obtained 
in the time domain as we did before.
For example, 
the LHS of (\ref{1phi1}) is written as
\begin{subequations}
\small\begin{align}
& \im 2 \lambda
 \int dt_2 \
\contraction{}{\phi}{\hspace{9mm}}{}
 \phi\dd{\mathrm{out}} :\mas\dgg_2 
\contraction{}{\mas}{\hspace{4mm}}{}
 \mas_2\dgg \mas_2
\contraction{}{\mas}{\hspace{5mm}}{}
 \mas_2 :\phi\dd{\mathrm{in}}\dgg
\\ = \ &
 \im 2 \lambda
 \int dt_2
 \int \frac{d\omega}{2\pi} 
 \ex\uu{-\im \omega(t\dd{\mathrm{out}} - t\dd{2})} \
 \ipg\dd{\phi\dd{\mathrm{out}} |\mas}\uu{\mathrm{f}}(\omega)
\\ &\hspace{17mm}
 \int \frac{d\omega_2}{2\pi} 
 \ex\uu{-\im \omega\dd{2} (t\dd{2} - t\dd{2})} 
\left[
 2\pi |\ves|^2 \delta(\omega_2) + \ipg\dd{\mas|\mas}\uu{\mathrm{f}}(\omega_2)
\right]
\\ &\hspace{17mm}
 \int \frac{d\omega_3}{2\pi} 
 \ex\uu{-\im \omega\dd{3} (t\dd{2} - t\dd{\mathrm{in}})} \
 \ipg\dd{\mas|\phi\dd{\mathrm{in}} }\uu{\mathrm{f}}(\omega_3)
\\ = \ &
\im 2\lambda
 \int \frac{d\omega}{2\pi}
 \ex\uu{-\im \omega(t\dd{\mathrm{out}} - t\dd{\mathrm{in}})} \
 \ipg\dd{\phi\dd{\mathrm{out}} |M}\uu{\mathrm{f}}(\omega) 
 \left[ |\ves|^2+\frac{1}{2} \right]
 \ipg\dd{\mas|\phi\dd{\mathrm{in}} }\uu{\mathrm{f}}(\omega).
\end{align}\normalsize
\end{subequations}
This is equivalent to the RHS of (\ref{1phi1}) after the Fourier transform.
\end{remark}

\newpage

In the first order approximation,
the self-energy \en{ \isel_1 } is a constant (static) matrix.
The transfer function is then given in the same form as (\ref{constl}):
\begin{subequations}
\label{1stdyson}
\small\begin{align}
 \ipg_{\Phi\drm{out} | \Phi\drm{in}}\urm{f+int}
&= 
\left[
 s+2|g|^2+\im\Omega \sigma\dd{z} + \isel_1
\right]\inv 
 \left[
 s-2|g|^2+\im\Omega \sigma\dd{z} + \isel_1
\right] 
\\ &=
 \dtf{-2|g|^2-\im\Omega \sigma\dd{z} -\isel_1}{-2g\uu{*}}{2g}{I},
\label{1stdyson2}
\end{align}\normalsize
\end{subequations}

It is not difficult to see 
the pole-zero symmetry \en{ \pole=-\zero }\index{pole-zero symmetry}
in this transfer function.
Let us set
\small\begin{align}
 \isel
&= 
 \im \Omega\sigma\dd{z}+\isel_1.
\end{align}\normalsize
Then this satisfies Lemma \ref{thm:pzsk}:
\small\begin{align}
 (\isel)\Pi + \Pi(\isel)\simm &=0.
\end{align}\normalsize

It is also possible to calculate
the poles \en{ \pole } and transmission zeros \en{ \zero } explicitly.
The pole \en{ p } is defined by 
the eigenvalues of the \en{ A }-matrix (the upper-left corner)
of (\ref{1stdyson2}):
\small\begin{align}
&\hspace{-4mm}
\det \A{p+2|g|^2+\im\Omega-\im 2\lambda \Bigl(|\ves|^2+\ffrac{1}{2}\Bigr)}
       {-\im \lambda \ves^2}
       {\im \lambda \ves^{\ast 2}}
       {p+2|g|^2-\im\Omega+\im 2\lambda \Bigl(|\ves|^2+\ffrac{1}{2}\Bigr)}
=0,
\end{align}\normalsize
from which 
\small\begin{align}
 p&=
 -2|g|^2 
\pm
 \sqrt{-(3\lambda|\ves|^2
        + \lambda-\Omega)(\lambda|\ves|^2+\lambda-\Omega)}.
\end{align}\normalsize
According to Definition \ref{def:zero},
the transmission zero \en{ z} is given by
\small\begin{align}
& \hspace{-4mm}
\det 
\left[
\begin{array}{cccc}
 z+2|g|^2+\im\Omega-\im 2\lambda\Bigl(|\ves|^2+\ffrac{1}{2}\Bigr) 
& 
 -\im \lambda \ves^2 & 2g\uu{*} & 0 \\
 \im \lambda \ves^{\ast 2} 
& \hspace{-9mm}
 z+2|g|^2-\im\Omega+\im 2\lambda\Bigl(|\ves|^2+\ffrac{1}{2}\Bigr) 
  & 0 & 2g\uu{*} \\
 -2g & 0 & -1 & 0 \\
 0 & -2g & 0 & -1 
\end{array}
 \right]
=0,
\end{align}\normalsize
which yields 
\small\begin{align}
 z
&=
 2|g|^2\pm
 \sqrt{-(3\lambda|\ves|^2+\lambda-\Omega)(\lambda|\ves|^2+\lambda-\Omega)}.
\end{align}\normalsize
As a result, \en{ \pole = -\zero }.

\begin{remark}
If \en{ \Omega } satisfies
\small\begin{align}
 (3\lambda|\ves|^2+\lambda-\Omega)(\lambda|\ves|^2+\lambda-\Omega)<0,
\end{align}\normalsize
it is possible to locate a pole near \en{ s=0 }.
Then the system exhibits the same effect as squeezing.
\end{remark}

\newpage

\section{Second order approximation}
\label{sec:secorapp}

\begin{wrapfigure}[0]{r}[53mm]{49mm} 
\vspace{-15mm}
\centering
\includegraphics[keepaspectratio,width=47mm]{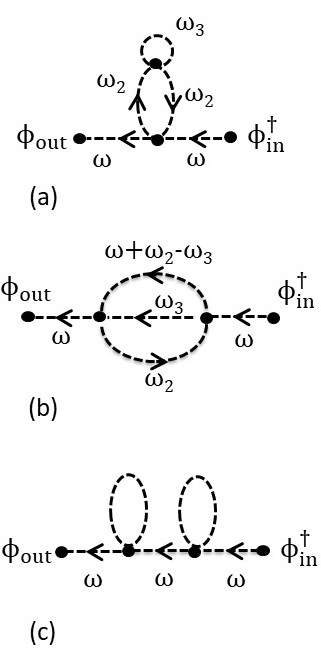}
\caption{
\small
Three diagrams for the second order correction.
\normalsize
}
\label{fig-2phi1-1}
\end{wrapfigure}

There are three different diagrams 
for the second order correction,
as in Figure \ref{fig-2phi1-1}.
Note that (c) is not necessary
because it is the cascade of the first order correction
and included in (\ref{1stdyson}) 
as a part of the Dyson series.
For simplicity, 
we assume that 
\en{ \Omega=0 } and \en{ g\in\mathbb{R} }.

\subsection{(1,1)-element of the transfer function (\ref{1stmat})}

\begin{wrapfigure}[0]{r}[53mm]{49mm} 
\vspace{80mm}
\centering
\includegraphics[keepaspectratio,width=42mm]{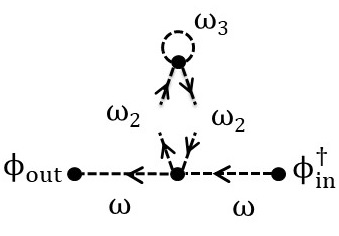}
\caption{
\small
Disconnected diagram corresponding to 
Figure \ref{fig-2phi1-1}(a).
\normalsize
}
\label{fig-disconnect-1}
\end{wrapfigure}

The (1,1)-element, 
\en{ \ipg\dd{\phi\drm{out} | \phi\drm{in}}\urm{f+int} },
is written as
\small\begin{align}
 \ipg\dd{\phi\drm{out} | \mas}\urm{\ f}(s)
\
\left[
- \isel_{11}\uu{(a)} - \isel_{11}\uu{(b)}(s)
\right]
\
 \ipg\dd{\mas|\phi\drm{in} }\urm{\ f}(s).
\end{align}\normalsize
where
\en{ K_{11}\uu{(a)} } and \en{ K_{11}\uu{(b)} } are 
self-energy corresponding to
Figure \ref{fig-2phi1-1}(a) and (b), 
respectively.
Applying the Feynman rules to Figure \ref{fig-2phi1-1}(a), 
we get \en{ \isel_{11}\uu{(a)}  } as
\begin{subequations}
\label{dis1}
\small\begin{align}
- \isel_{11}\uu{(a)} 
&= 
 \frac{2^5}{2!} \Bigl(\frac{\im\lambda}{2}\Bigr)^2
 \int \frac{d\omega_3}{2\pi} \frac{d\omega_2}{2\pi}
\contraction{}{\mas}{\hspace{2mm}}{}
 \mas \mas\dgg (\omega_3)
\left[
\contraction{}{\mas}{\hspace{2mm}}{}
 \mas \mas\dgg (\omega_2)
\right]^2
\\ &=
 (2\im\lambda)^2 
 \int \frac{d\omega_3}{2\pi}
\left[
 2\pi |\ves|^2 \delta(\omega_3) + \frac{\im}{\omega_3+\im 2g^2}
\right]
\\ & \hspace{15mm}
 \int \frac{d\omega_2}{2\pi}
\left[
 2\pi |m|^2 \delta(\omega_2) + \frac{\im}{\omega_2+\im 2g^2}
\right]^2
\\ &=
 (2\im\lambda)^2 
\left(
 |\ves|^2+\frac{1}{2}
\right)
 \frac{|\ves|^2}{g^2}.
\end{align}\normalsize
\end{subequations}

\begin{wrapfigure}[0]{r}[53mm]{49mm} 
\vspace{50mm}
\centering
\includegraphics[keepaspectratio,width=47mm]{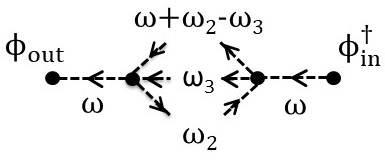}
\caption{
\small
Disconnected diagram corresponding to 
Figure \ref{fig-2phi1-1}(b).
\normalsize
}
\label{fig-disconnect-2}
\end{wrapfigure}

\noindent
We have ignored 
\en{ \delta(\omega_3)[\delta(\omega_2)]^2 } 
in the last line
because 
the \en{ \delta }-function is regarded as a disconnected edge,
as in Figure \ref{fig-disconnect-1}.
Likewise, 
it follows from Figure \ref{fig-2phi1-1}(b) that 
\small\begin{align}
- \isel_{11}\uu{(b)}(\omega)
&=
 \frac{2^5}{2!} \Bigl(\frac{\im\lambda}{2}\Bigr)^2 
 \int \frac{d\omega_2}{2\pi}\frac{d\omega_3}{2\pi} \
\contraction{}{\mas}{\hspace{2mm}}{}
 \mas \mas\dgg (\omega+\omega_2-\omega_3)
\contraction{}{\mas}{\hspace{2mm}}{}
 \mas \mas\dgg (\omega_3)
\contraction{}{\mas}{\hspace{2mm}}{}
 \mas \mas\dgg (\omega_2)
\nn\\ &=
 (2\im\lambda)^2
 \int \frac{d\omega_2}{2\pi}\frac{d\omega_3}{2\pi}
\left[
 2\pi |\ves|^2 \delta(\omega+\omega_2-\omega_3) 
   + \frac{\im}{\omega+\omega_2-\omega_3+\im 2 g^2}
\right]
\nn\\ &\hspace{24mm}
\left[
 2\pi |\ves|^2 \delta(\omega_3) + \frac{\im}{\omega_3+\im 2g^2}
\right]
\left[
 2\pi |m|^2 \delta(\omega_2) + \frac{\im}{\omega_2+\im 2g^2}
\right]
\nn\\ &=
 (2\im\lambda)^2
\left(
 \frac{2\im |\ves|^4}{\omega+\im 2g^2}
-
 \frac{\im |\ves|^4}{\omega-\im 2g^2}
+
 \frac{\im |\ves|^2}{\omega+\im 4g^2}
\right).
\label{dis2}
\end{align}\normalsize
Again 
\en{ \delta(\omega+\omega_2-\omega_3) \, \delta(\omega_2) \, \delta(\omega_3) } 
has been ignored 
because it is disconnected as in Figure \ref{fig-disconnect-2}.
As a result,
we get 
\begin{subequations}
\label{sel1120}
\small\begin{align}
- \isel_{11}\uu{(a)} 
&= 
 (2\im\lambda)^2 
\left(
 |\ves|^2+\frac{1}{2}
\right)
 \frac{|\ves|^2}{g^2},
\\
- \isel_{11}\uu{(b)}(s)
&=
 (2\im\lambda)^2
\left(
 \frac{2 |m|^4}{s+ 2g^2}
-
 \frac{|m|^4}{s- 2g^2}
+
 \frac{ |m|^2}{s+ 4g^2}
\right).
\label{sel112}
\end{align}\normalsize
\end{subequations}

\subsection{(2,2)-element of the transfer function (\ref{1stmat})}

The (2,2)-element,
\en{ \ipg\dd{\phi\drm{out}\dgg | \phi\drm{in}\dgg}\urm{f+int} },
is calculated in the same way as the (1,1)-element:
\small\begin{align}
 \ipg\dd{\phi\drm{out} |\mas}\urm{\ f} (s)
\Bigl[
 \isel_{11}\uu{(a)} + \isel_{11}\uu{(b)}(-s)
\Bigr]
\ipg\dd{\mas|\phi\drm{in} }\urm{\ f} (s),
\end{align}\normalsize
where 
\en{ \isel_{11}\uu{(a)} } and 
\en{ \isel_{11}\uu{(b)} } are 
the same self-energy as (\ref{sel1120}).

\subsection{Off-diagonal elements of the transfer function (\ref{1stmat})}

\begin{wrapfigure}[0]{r}[53mm]{49mm} 
\vspace{-20mm}
\centering
\includegraphics[keepaspectratio,width=47mm]{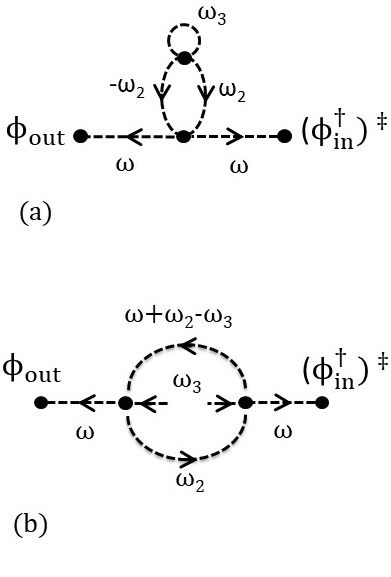}
\caption{
\small
(a)
The backward arrow (left to right) 
needs to be reversed using Theorem \ref{thm:tfbs}, i.e.,
$\ipg_{\mas\dgg | \phi_{\mathrm{in}}\dgg}^{\mathrm{f}}
=\wick{\mas\dgg}{(\phi_{\mathrm{in}}\dgg)\ddgg}^{\mathrm{f}}
=(\wick{\mas}{\phi_{\mathrm{in}}\dgg}^{\mathrm{f}})^{*}
=\ipg_{\mas|\phi_{\mathrm{in}}}^{\mathrm{f}}$,
where we have used the fact that $g$ is real in the last equality.
The loop on the top is a vacuum expectation
$\contraction{}{\mas}{\hspace{2mm}}{}
\mas\mas(\omega_3)=2\pi \ves^2\delta(\omega_3)$
as in (\ref{vacmm}).
(b)
There are two arrows in the middle loop that cannot be connected
to each other,
which corresponds to a disconnected edge 
$\contraction{}{\mas}{\hspace{2mm}}{}
\mas\mas(\omega_3)=2\pi \ves^2\delta(\omega)$.
\normalsize
}
\label{fig-2phi2-2}
\end{wrapfigure}

The (1,2)-element,
\en{ \ipg\dd{\phi\drm{out} | \phi\drm{in}\dgg}\urm{f+int} },
is expressed as
\small\begin{align}
 \ipg\dd{\phi\drm{out} | \mas}\urm{\ f}(s)
\
\Bigl[
- \isel_{12}\uu{(a)}
- \isel_{12}\uu{(b)}(s)
\Bigr]
\
\ipg\dd{\mas|\phi\drm{in} }\urm{\ f}(s),
\end{align}\normalsize
where
\en{ \isel_{12}\uu{(a)} } and \en{ \isel_{12}\uu{(b)} } are from 
Figure \ref{fig-2phi2-2}(a) and (b), respectively.

It is straightforward to calculate \en{ \isel_{12}\uu{(a)} }
from Figure \ref{fig-2phi2-2}(a):
\begin{subequations}
\small\begin{align}
- \isel_{12}\uu{(a)}
&=
 \frac{2^3}{2!} \Bigl( \frac{\im\lambda}{2}\Bigr)^2
 \int \frac{d\omega_3}{2\pi}\frac{d\omega_2}{2\pi}
\contraction{}{\phi}{\hspace{3mm}}{}
 \mas\mas(\omega_3)
\contraction{}{\phi}{\hspace{3mm}}{}
 \mas \mas\dgg (\omega_2)
\contraction{}{\phi}{\hspace{3mm}}{}
 \mas \mas\dgg (-\omega_2)
\\ &=
 (\im \lambda \ves)^2
\left(
 \frac{|\ves|^2}{g^2} + \frac{1}{4g^2}
\right).
\end{align}\normalsize
\end{subequations}
Figure \ref{fig-2phi2-2}(b) is a bit tricky because 
there are two arrows that cannot be connected to each other.
This part has to be replaced to the vacuum expectation (\ref{vacmm}):
\small\begin{align}
 \contraction{}{\mas}{\hspace{2mm}}{}
 \mas \mas (\omega_3)
=
 2\pi \ves^2 \delta(\omega_3).
\end{align}\normalsize
Then we get
\small\begin{align}
- \isel_{12}\uu{(b)}(\omega)
&=
 \frac{2^5}{2!} \Bigl( \frac{\im\lambda}{2}\Bigr)^2 
 \int \frac{d\omega_2}{2\pi}\frac{d\omega_3}{2\pi}
\contraction{}{\mas}{\hspace{2mm}}{}
 \mas \mas\dgg (\omega+\omega_2-\omega_3)
\contraction{}{\mas}{\hspace{2mm}}{}
 \mas \mas (\omega_3)
\contraction{}{\mas}{\hspace{2mm}}{}
 \mas \mas\dgg (\omega_2)
\nn\\ &=
 (2\im \lambda \ves)^2 
 \int \frac{d\omega_2}{2\pi} 
\contraction{}{\mas}{\hspace{2mm}}{}
 \mas \mas\dgg (\omega+\omega_2)
\contraction{}{\mas}{\hspace{2mm}}{}
 \mas \mas\dgg (\omega_2)
\nn\\ &=
 (2\im \lambda \ves)^2 
\left(
 \frac{\im |\ves|^2}{\omega+\im 2g^2} 
-
 \frac{\im |\ves|^2}{\omega-\im 2g^2}
\right),
\label{int12}
\end{align}\normalsize
As a result,
we have 
\begin{subequations}
\label{selofab}
\small\begin{align}
- \isel_{12}\uu{(a)}
&=
 (\im \lambda \ves)^2
\left(
 \frac{|\ves|^2}{g^2} + \frac{1}{4g^2}
\right),
\\
- \isel_{12}\uu{(b)}(s)
&=
 (2\im \lambda \ves)^2 
\left(
 \frac{|\ves|^2}{s+ 2g^2} 
-
 \frac{|\ves|^2}{s- 2g^2}
\right).
\end{align}\normalsize
\end{subequations}

Likewise, 
the (2,1)-element is obtained as
\small\begin{align}
 \ipg\dd{\phi\drm{out} | \mas}\urm{\ f} (s)
\frac{\ves^{*2}}{\ves^2}
\Bigl[
 \isel_{12}\uu{(a)}
+\isel_{12}\uu{(b)}(s)
\Bigr]
\ipg\dd{\mas|\phi\drm{in} }\urm{\ f}(s),
\end{align}\normalsize
where \en{ \isel_{12}\uu{(a)} } and \en{ \isel_{12}\uu{(b)}(s) }
are the same self-energy as (\ref{selofab}).

\section{Pole-zero symmetry}

Eventually,
the self-energy is given by 
\small\begin{align}
 \isel(s) = \isel_1 + \isel_2(s),
\end{align}\normalsize
where \en{ \isel_1 } is the first order correction 
given in (\ref{k1}):
\small\begin{align}
 \isel_1&=
 \A{-\im 2\lambda \left(|\ves|^2+\ffrac{1}{2}\right)}
   {-\im\lambda \ves^2}
   {\im \lambda \ves^{*2}}
   {\im 2\lambda\left(|\ves|^2+\ffrac{1}{2}\right)}.
\end{align}\normalsize
\en{ \isel_2 } is the second order correction 
that has been calculated in the preceding section:
\small\begin{align}
 \isel_2 (s)
&=
 \A{ \isel_{11}\uu{(a)}+\isel_{11}\uu{(b)}(s)}
   { \isel_{12}\uu{(a)}+\isel_{12}\uu{(b)}(s)}
   {-\ffrac{\ves\uu{*2}}{\ves^2}\left(\isel_{12}\uu{(a)}
                               +\isel_{12}\uu{(b)}(s)\right)}
   {-\isel_{11}\uu{(a)}-\isel_{11}\uu{(b)}(-s)}.
\label{se2nd}
\end{align}\normalsize
The transfer function \en{ \Phi\drm{out}\gets\Phi\drm{in} }
is obtained by substituting this self-energy
into (\ref{41dyson-s}).
Using Theorem \ref{thm:tfbs},
it is expressed as
\small\begin{align}
 \ipg\dd{\Phi\drm{out} | \Phi\drm{in}}\urm{f+int}
&=
 \left[ s+2g^2 +\isel(s) \right]\inv 
 \left[ s-2g^2 +\isel(s) \right].
\label{pzchecktf}
\end{align}\normalsize

This transfer function is well-defined as a quantum system
only when 
it satisfies the pole-zero symmetry.\index{pole-zero symmetry}
To see this,
we use Corollary \ref{cor:ek}.
Note that (\ref{pzchecktf}) is the same form as (\ref{const-3}).
Let us express the self-energy as
\small\begin{align}
 \isel
= \isel_1 + \isel_2
&=
 \A{\isel_{11}}{\isel_{12}}
   {\isel_{21}}{\isel_{22}}.
\end{align}\normalsize
From Corollary \ref{cor:ek},
the pole-zero symmetry is satisfied if
\begin{subequations}
\label{eknon}
\small\begin{align}
 \sel_{11}(s)&=-\sel_{22}(-s),
\label{eknon1}\\
 \sel_{12}(s)&= \sel_{12}(-s),
\label{eknon2}\\
 \sel_{21}(s)&= \sel_{21}(-s).
\label{eknon3}
\end{align}\normalsize
\end{subequations}
The first order correction \en{ \isel_1 } satisfies these conditions.
The diagonal elements of the second order correction \en{ \isel_2 }
satisfy (\ref{eknon1}).
For the off-diagonal elements,
we note that they are of the form
\small\begin{align}
\left(
 \frac{1}{s+2g^2}-\frac{1}{s-2g^2}
\right).
\end{align}\normalsize
This satisfies (\ref{eknon2}) and (\ref{eknon3}).
As a result, 
\en{ \isel_2 } also satisfies the all conditions of (\ref{eknon}),
which completes the assertion.

Obviously, 
the unstable modes are essential in this proof.
In other words,
self-energy is inevitably unstable for nonlinear interactions.

\newpage

\section{Second order correlation function}

In this section,
we consider the following type of correlation function: 
\small\begin{align}
 \langle 
  \phi(x\drm{in}) \phi(x\drm{out})   
  \phi\dgg(x\drm{out}) \phi\dgg(x\drm{in})
 \rangle
=
 \langle 
  \phi\drm{in} \phi\drm{out}   
  \phi\drm{out}\dgg \phi\drm{in}\dgg
 \rangle,
\label{2correlation}
\end{align}\normalsize
where \en{ \phi(x\dd{\alpha})=\phi\dd{\alpha} }.
This is a special case of 
a four-point transfer function\index{transfer function (four-point)}
\small\begin{align}
 \ipg\dd{\phi_4 \phi_3\dgg|\phi_2 \phi_1\dgg}\urm{f+int}
&=
 \wick{\phi_4 \phi_3\dgg}{\phi_2\dgg \{-\phi_1\}}\urm{f+int}.
\label{4ptf}
\end{align}\normalsize
This is calculated 
by expanding the \textit{S}-matrix in the same way
as the (two-point) transfer functions.
Here we consider first and second order approximations.

\subsection{First order approximation}

\begin{wrapfigure}[0]{r}[53mm]{49mm}
\vspace{-5mm}
\centering
\includegraphics[keepaspectratio,width=48mm]{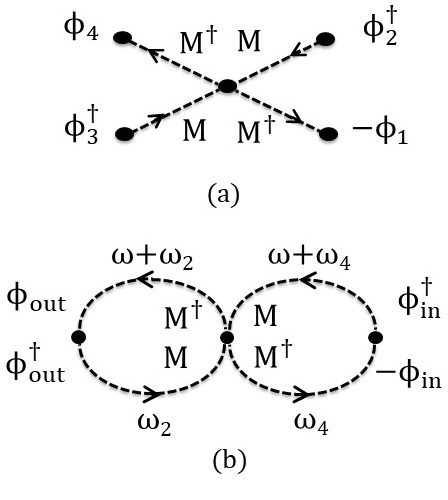}
\caption{
\small
First order correction.
\normalsize
}
\label{fig-correlation-1st}
\end{wrapfigure}

The first order approximation of (\ref{4ptf})
is depicted in Figure \ref{fig-correlation-1st}(a).
In the case of (\ref{2correlation}),
\begin{subequations}
\small\begin{align}
 x_1 & =x_2=x\drm{in}, 
\\
 x_3 & =x_4=x\drm{out},
\end{align}\normalsize
\end{subequations}
for which the diagram is re-depicted 
as Figure \ref{fig-correlation-1st}(b).
In the frequency domain,
this is written as
\begin{subequations}
\label{cor1}
\small\begin{align}
 &
 (2!)^2 \, \im \frac{\lambda}{2}
 \int\frac{d\omega_2}{2\pi} 
\contraction{}{\phi}{\hspace{6mm}}{}
 \phi\drm{out} \mas\dgg(\omega+\omega_2) \
\contraction{}{\mas}{\hspace{2mm}}{}
  \mas \phi\drm{out}\dgg (\omega_2)
\\  & \hspace{12mm}
\times 
 \int\frac{d\omega_4}{2\pi} 
\contraction{}{\mas}{\hspace{2mm}}{}
 \mas \phi\drm{in}\dgg(\omega+\omega_4) \
 (- \contraction{}{\mas}{\hspace{4mm}}{}
 \phi\drm{in}) \mas\dgg (\omega_4).
\end{align}\normalsize
\end{subequations}
The first and second integrals correspond to
the left and right loops in Figure \ref{fig-correlation-1st}(b),
respectively.
Each integral is calculated as 
\begin{subequations}
\small\begin{align}
&
 \int\frac{d\omega_2}{2\pi} \
\contraction{}{\phi}{\hspace{6mm}}{}
 \phi\drm{out} \mas\dgg(\omega+\omega_2) \
\contraction{}{\mas}{\hspace{2mm}}{}
 \mas \phi\drm{out}\dgg (\omega_2)
\\ 
= \ &
 \int \frac{d\omega_2}{2\pi} \
\contraction{}{\phi}{\hspace{6mm}}{}
 \phi\drm{out} \mas\dgg(\omega+\omega_2) \
\Bigl[
  2\pi|\ves|^2\delta(\omega_2) 
\Bigr]
 = 
 |\ves|^2 \ipg\dd{\phi\drm{out} | \mas}\urm{\ f}(\omega),
\\ \nn \\ &
 \int\frac{d\omega_4}{2\pi} \
\contraction{}{\phi}{\hspace{3mm}}{}
 \mas \phi\drm{in}\dgg(\omega+\omega_4) \
 (- \contraction{}{\phi}{\hspace{4mm}}{}
 \phi\drm{in}) \mas\dgg(\omega_4) \
\\
= \ &
 \int \frac{d\omega_2}{2\pi} \
\contraction{}{\phi}{\hspace{3mm}}{}
 \mas \phi\drm{in}\dgg(\omega+\omega_4) \
\Bigl[
 - 2\pi|\ves|^2 \delta(\omega_4) 
\Bigr]
= 
 -|\ves|^2 \ipg\dd{\mas|\phi\drm{in}}\urm{\ f}(\omega).
\end{align}\normalsize
\end{subequations}
As a result,
the first order approximation (\ref{cor1}) is given by
\small\begin{align}
\hspace{15mm}
 \ipg\dd{\phi\drm{out} | \mas}\urm{\ f} \ 
 \bigl( -\isel\drm{cor1} \bigr) \ 
 \ipg\dd{\mas|\phi\drm{in} }\urm{\ f},
\end{align}\normalsize
where 
\small\begin{align}
 \isel\drm{cor1} \equiv \im 2\lambda |\ves|^4.
\end{align}\normalsize

\newpage

\subsection{Second order approximation}

\begin{wrapfigure}[5]{r}[57mm]{73mm} 
\vspace{0mm}
\centering
\includegraphics[keepaspectratio,width=69mm]{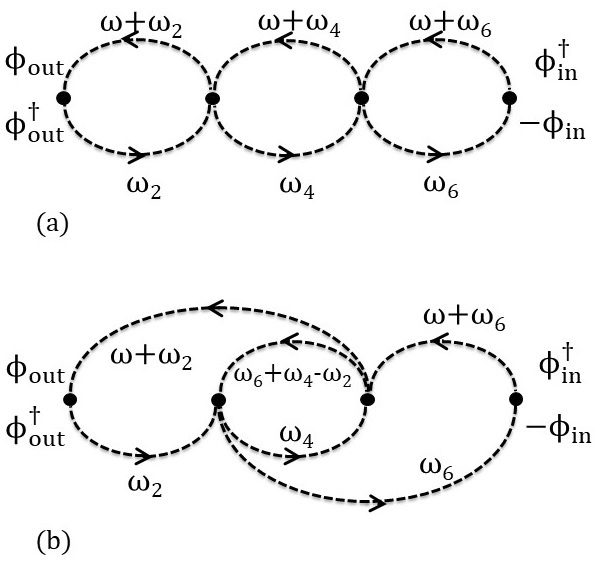}
\caption{
\small
Second order correction.
\normalsize
}
\label{fig-correlation2}
\end{wrapfigure}

The second order correction of the four-point transfer function 
is depicted as Figure \ref{fig-correlation2}.
These two diagrams are expressed as
\begin{subequations}
\label{circ2-2}
\small\begin{align}
\mbox{(a)}
=
 \frac{1}{2!}2^4 2! \left(\frac{\im\lambda}{2}\right)^2
 &\int \frac{d\omega_2}{2\pi} \
\contraction{}{\phi}{\hspace{6mm}}{}
 \phi\drm{out} \mas\dgg (\omega+\omega_2) \
\contraction{}{\mas}{\hspace{1mm}}{}
 \mas \phi\drm{out}\dgg(\omega_2)
\label{cir31}\\  
\times 
 & \int \frac{d\omega_4}{2\pi} \
\contraction{}{\mas}{\hspace{2mm}}{}
 \mas \mas\dgg(\omega+\omega_4) \
\contraction{}{\mas}{\hspace{2mm}}{}
 \mas \mas\dgg(\omega_4)
\label{cir3}\\  
\times 
 & \int \frac{d\omega_6}{2\pi} \
\contraction{}{\mas}{\hspace{1mm}}{}
 \mas \phi\drm{in}\dgg(\omega+\omega_6) \
 (-\contraction{}{\phi}{\hspace{6mm}}{}
 \phi\drm{in}) \mas\dgg(\omega_6),
\label{cir32} \\ 
\nn\\ 
 \mbox{(b)}
= 
 \frac{1}{2!}2^4 2! \left(\frac{\im\lambda}{2}\right)^2
 & \int \frac{d\omega_2}{2\pi} \
\contraction{}{\phi}{\hspace{6mm}}{}
 \phi\drm{out} \mas\dgg (\omega+\omega_2) \
\contraction{}{\mas}{\hspace{1mm}}{}
 \mas \phi\drm{out}\dgg(\omega_2)
\\  
\times 
 & \int \frac{d\omega_4}{2\pi} \
\contraction{}{\mas}{\hspace{2mm}}{}
 \mas \mas\dgg (\omega_4+\omega_6-\omega_2) \
\contraction{}{\mas}{\hspace{2mm}}{}
 \mas \mas\dgg (\omega_4)
\\ 
\times 
 & \int \frac{d\omega_6}{2\pi} \
\contraction{}{\mas}{\hspace{1mm}}{}
 \mas \phi\drm{in}\dgg (\omega+\omega_6) \
 (-\contraction{}{\phi}{\hspace{6mm}}{}
 \phi\drm{in}) \mas\dgg (\omega_6).
\end{align}\normalsize
\end{subequations}
The integrals (\ref{cir31},\ \ref{cir32}) 
have been given in the first order approximation (\ref{cor1}),
and (\ref{cir3}) is the same form as (\ref{int12}).
The remaining integrals are calculated in the same way.
As a result, (\ref{circ2-2}) is given as
\small\begin{align}
 \ipg\dd{\phi\drm{out} | \mas}\urm{\ f} \  
 \bigl(- \isel\drm{cor2} \bigr) \  
 \ipg\dd{\mas | \phi\drm{in} }\urm{\ f}.
\end{align}\normalsize
In the second order approximation,
the self-energy is not a constant:
\small\begin{align}
 \isel\drm{cor2}(s)
&\equiv
 (\im 2\lambda)^2 |\ves|^6 
\Bigl(
 \frac{1}{s+2g^2}-\frac{1}{s-2g^2}
+ \frac{1}{g^2}
\Bigr).
\end{align}\normalsize

Let us define a total self-energy as 
\small\begin{align}
 \isel\drm{cor}
&\equiv
 \isel\drm{cor1} + \isel\drm{cor2}.
\end{align}\normalsize
Note that the zeroth order correction 
is a direct connection of the two fields:
\begin{subequations}
\small\begin{align}
 \contraction[2ex]{}{\phi}{\hspace{23mm}}{}
 \phi\drm{out}
 \contraction{}{\phi}{\hspace{10mm}}{}
 \phi\drm{out}\dgg \
 (-\phi\drm{in})\phi\drm{in}\dgg
&=
 -\underbrace{\ipg\dd{\phi\drm{in} | \phi\drm{out}}\urm{\ f}}\dd{=-1} 
  \ipg\dd{\phi\drm{out} | \phi\drm{in} }\urm{\ f}
\\ &=
 \ipg\dd{\phi\drm{out} | \phi\drm{in} }\urm{\ f}.
\end{align}\normalsize
\end{subequations}
Consequently, 
up to second order,
the four-point transfer function is given
in the same form as the two-point transfer function (\ref{pzchecktf}):
\begin{subequations}
\small\begin{align}
  \ipg\dd{\phi\drm{out} \phi\drm{out}\dgg | 
          \phi\drm{in} \phi\drm{in}\dgg}\urm{f+int}
&=
 \ipg\dd{\phi\drm{out} | \phi\drm{in} }\urm{\ f}
-
 \ipg\dd{\phi\drm{out} | \mas}\urm{\ f} \
 \bigl( \isel\drm{cor} \bigr)  \
 \ipg\dd{\mas | \phi\drm{in}}\urm{\ f}
\\ &\sim
 \left[  s+2g^2+\isel\drm{cor}(s)  \right]\inv 
 \left[  s-2g^2+\isel\drm{cor}(s)  \right].
\end{align}\normalsize
\end{subequations}

\chapter{Spin $1/2$ in SU(2) systems}
\label{chap:intspin}
\thispagestyle{fancy}

We consider interactions between 
spins \en{ 1/2 } particles and \textrm{SU(2)} systems.
The decay processes of excited spins are examined.
Spin-spin correlations are also investigated 
with a four-point transfer function
for the creation of spin entanglement.

\section{Interaction with a spin field}

\begin{wrapfigure}[0]{r}[53mm]{49mm} 
\vspace{-0mm}
\centering
\includegraphics[keepaspectratio,width=30mm]{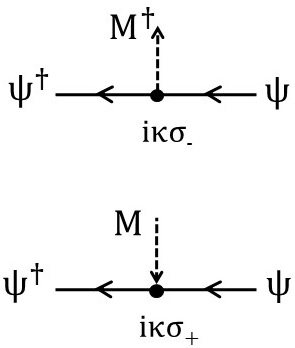}
\caption{
\small
Elementary diagrams corresponding to 
the first and second terms of $ \lag^{\mathrm{JC}}$.
\normalsize
}
\label{fig-jc-2}
\end{wrapfigure}

Assume that 
nonrelativistic spin \en{ 1/2 } particles are 
placed in an SU(2) system
and they interact with each other through a Lagrangian 
\small\begin{align}
 \lag &= \lag\urm{SU(2)} + \lag\urm{1/2} + \lag\urm{JC},
\end{align}\normalsize
where \en{ \lag\urm{SU(2)} }  and \en{ \lag\urm{1/2} } 
describe the SU(2) system  \en{ \mas } and 
the spin \en{ 1/2 } field \en{ \psi }, 
respectively.
\en{ \lag\urm{JC} } is the interaction Lagrangian 
given as
\small\begin{align}
 \lag\urm{JC}
&=
 \kappa 
 : (\psi\dgg\sigma\dd{-}\psi)\mas\dgg
+
 (\psi\dgg\sigma\dd{+}\psi)\mas :,
\label{ljc}
\end{align}\normalsize
where \en{ \kappa } is a coupling constant and 
\en{ \sigma\dd{\pm} } are raising and lowering matrices
defined in (\ref{rlop}).
The first and second terms of \en{ \lag\urm{JC} } 
are depicted in Figure \ref{fig-jc-2}.

\subsection{Preliminary remarks}

\begin{figure}[t]
\centering
\includegraphics[keepaspectratio,width=85mm]{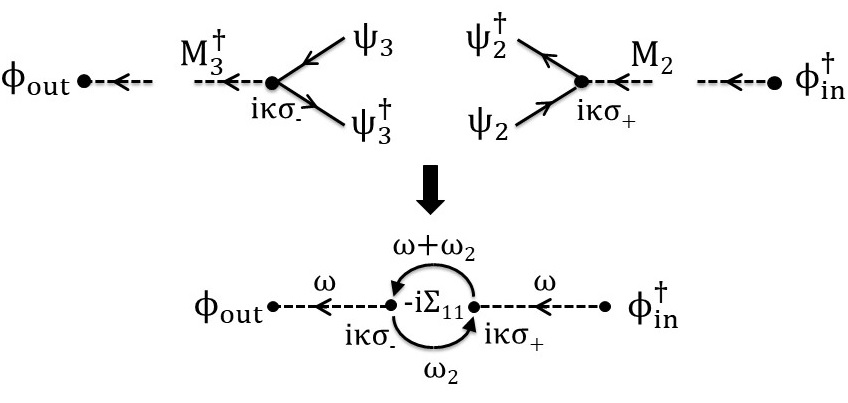}
\caption{
\small
Diagram of second order correction for the transfer function 
 $\phi_{\mathrm{out}} \gets \phi_{\mathrm{in}}$.
The input
$\phi_{\mathrm{in}}\dgg$ is connected to $\mas_2$, which results in 
$\ipg_{\mas | \phi_{\mathrm{in}}}^{\mathrm{SU(2)}}$.
The loop represents vacuum polarization
and the corresponding self-energy is $-\im \Sigma_{11}$.
$\mas_3\dgg$ is connected to $\phi_{\mathrm{out}}$,
which results in $\ipg_{\phi_{\mathrm{out}}|\mas}^{\mathrm{SU(2)}}$.
The entire process is therefore expressed as (\ref{jc11}).
\normalsize
}
\label{fig-jc11}
\end{figure}

The 
spin field is similar to the Dirac field to some degrees,
but there is a critical difference between them.
First, 
there are no antiparticles for the spin field.
Second, 
particles cannot be created and eliminated in the spin field.

For example, 
if we consider a transfer function 
\en{ \phi\drm{out}\gets\phi\drm{in} },
it seems possible to draw a diagram such as Figure \ref{fig-jc11}.
The loop is known as vacuum polarization\index{vacuum polarization}
for the creation of virtual particle-antiparticle pairs.
According to the Feynman-Stueckelberg interpretation,
the backward (time-reversal) arrow corresponds to 
the propagation of antiparticles in the Dirac field.
Unfortunately, 
this interpretation is not applicable to the spin field 
because 
there are no antiparticles in nonrelativistic fields.
In fact, the corresponding self-energy is always zero.


To see this,
let us calculate the transfer function of Figure \ref{fig-jc11}:
\small\begin{align}
\ipg\dd{\phi\drm{out} | \mas}\urm{SU(2)}(\omega) \
\left(
 -\im \Sigma_{11}
\right) \
\ipg\dd{\mas|\phi\drm{in}}\urm{SU(2)}(\omega).
\label{jc11}
\end{align}\normalsize
\en{ \im\Sigma_{11} } is self-energy 
corresponding to the vacuum polarization
given as
\small\begin{align}
- \im \Sigma_{11}(\omega)
&=
 \int \frac{d\omega_2}{2\pi} \,
\Tr \Bigl[
 \ipg\dd{\psi|\psi}(\omega+\omega_2)
\
 \im \kappa\sigma\dd{+}
\
 \ipg\dd{\psi|\psi}(\omega_2)
\
 \im \kappa\sigma\dd{-}
\Bigr],
\label{antispin}
\end{align}\normalsize
where \en{ \ipg\dd{\psi|\psi} } 
is the spin transfer function given in Theorem \ref{thm:spint}.
Note that we have taken the trace 
because the spin transfer functions are matrices.
Since the spin Hamiltonian can be always diagonalized 
with an adequate unitary matrix as
\small\begin{align}
 H
&=
 \uni\dgg \A{E_1}{}{}{E_2} \uni,
\end{align}\normalsize
the spin transfer function is given by
\begin{subequations}
\small\begin{align}
 \ipg\dd{\psi|\psi}(\omega+\omega_2)
&=
 \uni\dgg
 \A{\ffrac{\im}{\omega+\omega_2-E_1+\im\epsilon}}{}{}
   {\ffrac{\im}{\omega+\omega_2-E_2+\im\epsilon}}
 \uni,
\\
 \ipg\dd{\psi|\psi}(\omega_2)
&=
 \uni\dgg
 \A{\ffrac{\im}{\omega_2-E_1+\im\epsilon}}{}{}
   {\ffrac{\im}{\omega_2-E_2+\im\epsilon}}
 \uni.
\end{align}\normalsize
\end{subequations}
Both functions have poles only in the lower half plane of \en{ \omega_2 }.
According to the residue theorem,
\en{ \Sigma_{11}=0 } 
after integrating with respect to \en{ \omega_2 } in (\ref{antispin}).
Consequently,
this process is irrelevant to the nonrelativistic spin field.

The problem here is that
spin \en{ 1/2 } particles can go up and down 
between the ground and excited states,
but they cannot be created from (eliminated into) the vacuum state
in the nonrelativistic treatment.
This is the difference from the Dirac field.
In other words,
massive fermions can change their spin direction 
through the interaction with the SU(2) system \en{ \mas },
but they cannot be created or eliminated in the system.
(On the other hand, massless forward traveling bosons can be created
and eliminated in the system \en{ \mas }.)
This means that 
when we consider a scattering process involving nonrelativistic spins,
the Feynman diagram always has to starts from \en{ \psi\dgg } 
and ends at \en{ \psi }, unlike Figure \ref{fig-jc11}.
We consider examples of this in the next subsections.

\newpage

\subsection{Transfer function from spin to spin}
\label{spin-spin-tf}

\begin{wrapfigure}[0]{r}[53mm]{49mm} 
\vspace{-0mm}
\centering
\includegraphics[keepaspectratio,width=47mm]{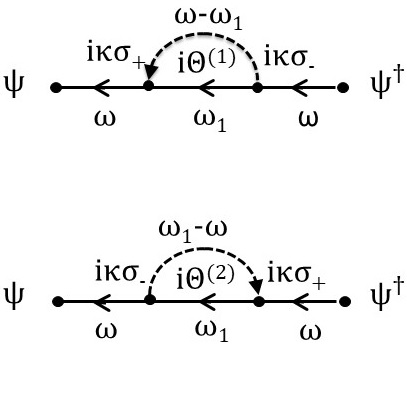}
\caption{
\small
Two possible processes 
for the second order correction. 
In the lower diagram,
the the forward traveling field (dashed arrow) is 
in the backward direction.
The corresponding self-energy $\im\Theta^{(2)}$ is zero.
\normalsize
}
\label{fig-jcss-1}
\end{wrapfigure}

The simplest example is a spin-spin transfer function 
\en{ (\psi\gets \psi) } under \en{ \lag\urm{JC} }
\small\begin{align}
 \ipg\urm{JC}\dd{\psi|\psi}
&= 
 \wick{\psi}{\psi\dgg}\urm{JC}.
\end{align}\normalsize
In the zeroth order term, 
the spin does not interact with \en{ \mas }:
\small\begin{align}
\left( 0\urm{th}\mbox{ order of } \ \ipg\dd{\psi|\psi}\urm{JC}\right)
&=
 \ipg\dd{\psi|\psi},
\end{align}\normalsize
where \en{ \ipg\dd{\psi|\psi} } is the free field transfer function 
given in Theorem \ref{thm:spint}.

For the second order correction,
there are two possible diagrams as in Figure \ref{fig-jcss-1}.
Both diagrams are of the form
\small\begin{align}
\hspace{5mm}
\left( 2\urm{nd}\mbox{ order of } \ \ipg\dd{\psi|\psi}\urm{JC}\right)
&=
 \ipg\dd{\psi|\psi}
\Bigl[
 -\im \Theta\uu{(a)}
\Bigr]
 \ipg\dd{\psi|\psi},
\hspace{5mm}
(a=1,2)
\end{align}\normalsize
where the upper and lower diagrams are, respectively, given as
\begin{subequations}
\label{dd}
\small\begin{align}
 \im \Theta^{(1)}(\omega)
&=
 \kappa^2 \int d\omega_1 \
 \ipg\dd{\mas|\mas}\urm{SU(2)} (\omega-\omega_1) \
 \sigma\dd{+} \ipg\dd{\psi|\psi}(\omega_1) \sigma\dd{-},
\label{dd1}\\
 \im \Theta^{(2)}(\omega)
&=
 \kappa^2 \int d\omega_1 \
 \ipg\dd{\mas|\mas}\urm{SU(2)} (\omega_1-\omega) \
 \sigma\dd{-} \ipg\dd{\psi|\psi}(\omega_1) \sigma\dd{+}.
\label{dd2}
\end{align}\normalsize
\end{subequations}
The integrand of \en{ \im \Theta^{(2)} } 
has poles only in the lower half plane:
\begin{subequations}
\small\begin{align}
 \ipg\dd{\mas|\mas}\urm{SU(2)}(\omega_1-\omega)
&=
 \frac{\im}{\omega_1-\omega+\im 2g^2},
\\
 \ipg\dd{\psi|\psi}(\omega_1)
&=
 \frac{\im}{\omega_1-H+\im\epsilon}.
\end{align}\normalsize
\end{subequations}
Again, 
according to the residue theorem, 
\en{ \im \Theta^{(2)}=0 }.

Let us calculate \en{ \im \Theta^{(1)} }.
To perform the integral, 
we need to specify the spin Hamiltonian.
Here we assume a diagonal matrix
\small\begin{align}
\label{hdiag}
 H&=\A{E_1}{}{}{E_2}.
\end{align}\normalsize
In this case, 
\small\begin{align}
 \im\Theta^{(1)}
&=
 \A{\ffrac{\im\kappa^2}{\omega+\im 2g^2-E_2+\im \epsilon}}{}
   {}{0}.
\end{align}\normalsize

\begin{figure}[t]
\centering
\includegraphics[keepaspectratio,width=130mm]{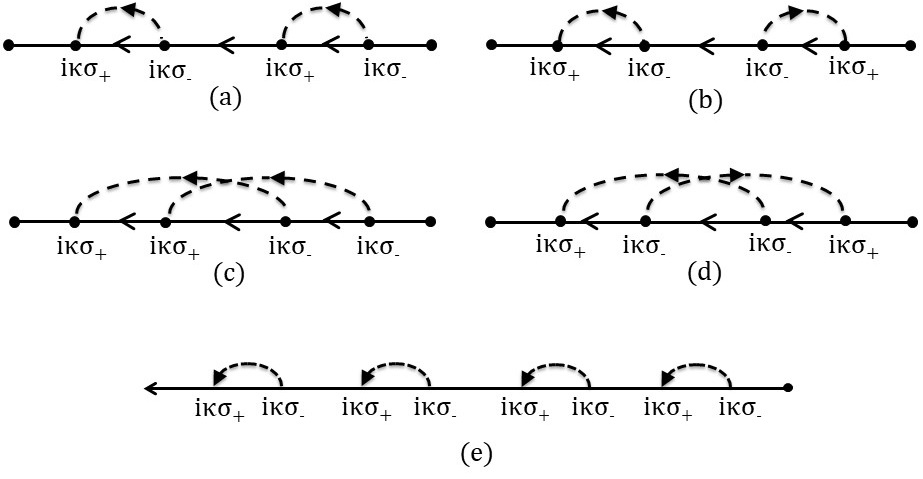}
\caption{
\small
(a $\sim$ d) Diagrams for 
the fourth order correction of $\ipg_{\psi|\psi}^{\mathrm{JC}}$.
To get nonzero self-energy,
the forward traveling fields (dashed arrows) should point
in the forward direction (right to left) for the same reason as the second order correction
(Figure \ref{fig-jcss-1}).
In addition, 
if the spin Hamiltonian is diagonal,
$\im\kappa\sigma_{+}$ and $\im\kappa\sigma_{-}$
 should be alternatively located on the spin propagation (solid line)
because $(\sigma_+)^2=(\sigma_-)^2=0$.
As a result, (a) is the only possible diagram for the fourth order correction.
Likewise,
higher order corrections can be obtained from the diagram (e).
\normalsize
}
\label{fig-jcss2}
\end{figure}

The fourth order correction is shown 
in Figure \ref{fig-jcss2} (a$\sim$d).
First, 
note that all dashed arrows (forward traveling field) 
should point from right to left 
in the forward direction of time. 
Otherwise, 
the integral results in zero, 
as in the case of \en{ \im\Theta^{(2)} }.
Secondly, 
\en{ \im\kappa\sigma\dd{+} } and \en{ \im\kappa\sigma\dd{-} } 
should be located one after the other 
on spin's propagation line (straight solid line).
Otherwise, 
the correction term involves
\en{ (\im \kappa\sigma\dd{-}) 
     \ipg\dd{\psi|\psi}(\omega) 
    (\im\kappa\sigma\dd{-})=0 }
because 
\en{ \ipg\dd{\psi|\psi} } is a diagonal matrix.
Consequently, 
only Figure \ref{fig-jcss2}(a) is 
relevant for the fourth order correction.
Likewise, 
higher order corrections can be given 
from the diagram of Figure \ref{fig-jcss2}(e).
The transfer function is therefore given by 
the following infinite series:
\begin{subequations}
\label{su2-s-dyson}
\small\begin{align}
 \ipg\dd{\psi|\psi}\urm{JC}
&=
 \ipg\dd{\psi|\psi}
-
 \bigl[\ipg\dd{\psi|\psi} \ 
       \im \Theta^{(1)}\bigr] \ 
       \ipg\dd{\psi|\psi}
+
 \bigl[\ipg\dd{\psi|\psi} \ 
       \im \Theta^{(1)}\bigr]^2 \ 
       \ipg\dd{\psi|\psi}
+\cdots
\\ &=
\Bigl[
 1
+
 \ipg\dd{\psi|\psi} \ \im \Theta^{(1)} 
\Bigr]\inv 
 \ipg\dd{\psi|\psi}
\\ &=
\A{\ffrac{s+2g^2+\im E_2}{(s+2g^2+\im E_2)(s+\im E_1)+\kappa^2}}{}
   {}{\ffrac{1}{s+\im E_2}}.
\end{align}\normalsize
\end{subequations}

This transfer function describes two possible processes.
Since the SU(2) system is initially in the vacuum state,
if the spin is in the ground state \en{ \ket{-} },
no interactions occur and the spin remains in \en{ \ket{-} }.
This results in the free evolution of the spin, 
which is the (2,2)-element.

The (1,1)-element is the  transition of the spin
from the excited state \en{ \ket{+} } to \en{ \ket{+} }.
Figure \ref{fig-jcss2} (e) shows 
that the spin simply repeats the emission and absorption of 
bosons in the SU(2) system.
Every time the spin emits bosons,
the dashed arrows appear in the diagram.
The arrows come back to the solid line
when the bosons are absorbed.
Then the spin returns to \en{ \ket{+} }.


The transfer function 
has no off-diagonal elements.
This does not mean that 
no transitions occur between \en{ \ket{+} } and \en{ \ket{-} }.
To describe a transition,
we need to include the SU(2) system \en{ \mas } 
in the initial or final state explicitly.
For example, 
in the case of a transition from \en{ \ket{+} } to \en{ \ket{-} },
the spin emits bosons in the system \en{ \mas }.
This is regarded as a decay process \en{ \psi+\mas \gets \psi }.
Let us consider this case next.

\newpage

\subsection{Decay process $\psi+\mas \gets \psi$}
\label{sec:spindecay}

\begin{wrapfigure}[0]{r}[53mm]{49mm} 
\vspace{-0mm}
\centering
\includegraphics[keepaspectratio,width=40mm]{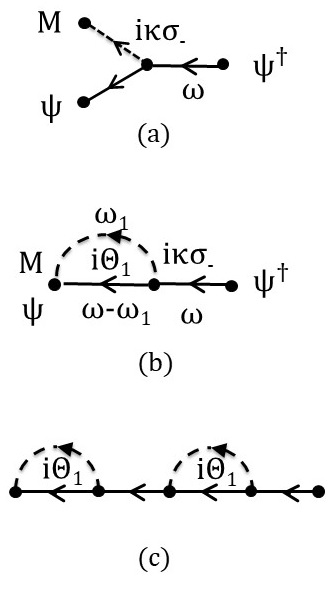}
\caption{
\small
(a) First order correction of the decay process. 
(b) The diagram is modified to end at the same point in time 
because we consider $t_4=t_3$ in (\ref{tf1}).
(c) Third order correction.
\normalsize
}
\label{fig-decay-1}
\end{wrapfigure}

Suppose that the spin particles are initially in the excited state
\en{ \ket{+} }
and the SU(2) system is in the vacuum state \en{ \ket{0} }.
If the spin particles decay to the ground state \en{ \ket{-} },
bosons (of frequency \en{ \Omega }) are created in the
SU(2) system. 
The probability amplitude of this process
is given by the overlap of the following two states
(see Example \ref{ex:spin}):
\begin{subequations}
\label{ifstate}
\small\begin{align}
 \mbox{\normalsize initial state: \small}&
 \ket{+,0} 
\hspace{1mm}
\sim
\hspace{2mm}
 \Bigl( \psi\dgg \otimes I \ket{0}\Bigr) \chi\dd{+},
\\
 \mbox{\normalsize final state: \small}&
 \ket{-,\Omega}
\sim
\Bigl( \psi\dgg \otimes \mas\dgg \ket{0}\Bigr) \chi\dd{-}.
\end{align}\normalsize
\end{subequations}
This is described by the off-diagonal element of 
a three-point transfer function\index{transfer function (three-point)}
\small\begin{align}
 \ipg\dd{\psi \mas|\psi}\urm{SU(2)+JC}(t_4,t_3;t_1)
&=
 \wick{\psi_4 \mas_3}{\psi_1\dgg}\urm{SU(2)+JC},
\label{tf1}
\end{align}\normalsize
which corresponds to a decay process
\en{ \psi + \mas \leftarrow \psi  }.

The lowest order correction of the decay process is 
depicted in Figure \ref{fig-decay-1} (a).
We are specifically interested in the case of \en{ t_4=t_3 },
for which the diagram ends at the same point 
as in Figure \ref{fig-decay-1} (b).
This is written as
\small\begin{align}
\underbrace{
 \int \frac{d\omega_1}{2\pi} 
\
 \ipg\dd{\mas|\mas}\urm{SU(2)}(\omega_1)
\
 \ipg\dd{\psi|\psi}(\omega-\omega_1)
\
 \im \kappa \sigma\dd{-}}\dd{\hspace{20mm}\equiv\im\Theta\dd{1}}
\
 \ipg\dd{\psi|\psi}(\omega).
\label{theta1}
\end{align}\normalsize
Higher order corrections can be obtained in the same way.
For example, 
the third order correction is in
Figure \ref{fig-decay-1} (c),
which results in
\en{  \im \Theta_1
\
 \ipg\dd{\psi|\psi}
\
 \im \kappa \sigma\dd{+} 
\
 \im\Theta_1
\
 \ipg\dd{\psi|\psi}. }
The transfer function is therefore written as
\begin{subequations}
\small\begin{align}
 \ipg\dd{\psi \mas|\psi}\urm{SU(2)+JC}
&=
 \Bigl(
 1 + 
 \im\Theta_1
\
 \ipg\dd{\psi|\psi}
\
 \im \kappa \sigma\dd{+} + \cdots
\Bigr)
 \im\Theta_1
\
 \ipg\dd{\psi|\psi}
\\ &=
\left(
 1 -
 \im\Theta_1
\
 \ipg\dd{\psi|\psi}
\
 \im \kappa \sigma\dd{+} 
\right)\inv
 \im\Theta_1
\
 \ipg\dd{\psi|\psi}.
\end{align}\normalsize
\end{subequations}
If the spin Hamiltonian is diagonal as in (\ref{hdiag}),
the self-energy is given as
\small\begin{align}
 \im\Theta_1(s)
&=
 \A{\ffrac{1}{s+2g^2+\im E_1}}{}
   {}{\ffrac{1}{s+2g^2+\im E_2}}\im \kappa \sigma\dd{-}.
\end{align}\normalsize
Then we have
\small\begin{align}
 \ipg\dd{\psi \mas| \psi}\urm{SU(2)+JC}
&=
 \A{0}{0}{\ffrac{\im\kappa}{ (s+\im E_1)(s+2g^2+\im E_2)+\kappa^2}}{0}.
\end{align}\normalsize

From (\ref{ifstate}),
the probability amplitude from \en{ \ket{+} } to \en{ \ket{-} } 
is given by
\small\begin{align}
 \chi\dd{-}\dgg
\
 \ipg\dd{\psi \mas|\psi}\urm{SU(2)+JC}
\
 \chi\dd{+} 
&=
 \frac{ \im\kappa}{(s+\im E_1)(s+2g^2+\im E_2)+\kappa^2}.
\end{align}\normalsize
If \en{ E_1=E_2=g=0 }, this is written as (via the inverse Laplace transform)
\small\begin{align}
\hspace{15mm}
 \frac{\im\kappa}{s^2 +\kappa^2} 
\ \to \
 \im \sin\kappa t,
\end{align}\normalsize
which simply describes energy exchange 
between the spin \en{ \psi } and the system \en{ \mas }.

\newpage

\subsection{Decay process $\psi + \mas^n \gets \psi$}

\begin{wrapfigure}[0]{r}[53mm]{49mm} 
\vspace{-14mm}
\centering
\includegraphics[keepaspectratio,width=42mm]{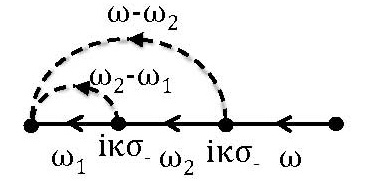}
\caption{
\small
Decay process for $n=2$.
\normalsize
}
\label{fig-ndecay}
\end{wrapfigure}

The decay process can be generalized as 
\small\begin{align}
 \psi 
+ 
 \underbrace{\mas+\mas+\cdots +\mas}\dd{\hspace{9mm}n} 
\leftarrow 
 \psi.
\label{decayn}
\end{align}\normalsize
For \en{ n=2 },
the lowest order correction is depicted in Figure \ref{fig-ndecay}
and given as
\small\begin{align}
&
\underbrace{
 \int \frac{d\omega_2}{2\pi} 
\
 \ipg\dd{\mas|\mas}\urm{SU(2)}(\omega-\omega_2)
\
\underbrace{
 \int \frac{d\omega_1}{2\pi}
\
 \ipg\dd{\mas|\mas}\urm{SU(2)}(\omega_2-\omega_1)
\
 \ipg\dd{\psi|\psi}(\omega_1) 
\
 \im\kappa\sigma\dd{-} }\dd{\hspace{10mm}\equiv\im\Theta_1(\omega_2)}
\
 \ipg\dd{\psi|\psi}(\omega_2) \im\kappa\sigma\dd{-} 
}\dd{ \hspace{30mm}
\equiv \im\Theta_2(\omega)
}
\ \
 \ipg\dd{\psi|\psi}(\omega).
\end{align}\normalsize
Likewise, 
the \en{ n }-decay process (\ref{decayn}) is given by
\small\begin{align}
 \im \Theta_n  
\
 \ipg\dd{\psi|\psi},
\end{align}\normalsize
where \en{ \im\Theta_n } is defined recursively as
\begin{subequations}
\small\begin{align}
 \im\Theta_n(\omega)&=
 \int \frac{d\omega'}{2\pi}
\
 \ipg\dd{\mas|\mas}\urm{SU(2)} (\omega-\omega')
\
 \im \Theta_{n-1}(\omega')
\
 \ipg\dd{\psi|\psi}(\omega') 
\
 \im\kappa\sigma\dd{-},
\\ 
 \im\Theta_0 &= 1. 
\end{align}\normalsize
\end{subequations}
Note that \en{ \im\Theta_2=0 } if the spin Hamiltonian is diagonal.
The off-diagonal elements of the Hamiltonian are essential 
to create more than one particle in the SU(2) system.


\section{Spin-spin scattering}

In this section,
we consider a correlation between spins.
Assume that 
distinguishable spin particles 
\en{ \psi_A } and \en{ \psi_B } 
are placed in an SU(2) system \en{ \mas },
and they individually
interact with \en{ \mas } through a Lagrangian
\begin{subequations}
\small\begin{align}
 \lag\urm{SC}
= \
 \kappa_A &:(\psi_A\dgg \sigma\dd{-}\psi_A)\mas\dgg
          +(\psi_A\dgg \sigma\dd{+}\psi_A)\mas:
\\ +
 \kappa_B & :(\psi_B\dgg \sigma\dd{-}\psi_B)\mas\dgg
          +(\psi_B\dgg \sigma\dd{+}\psi_B)\mas:.
\end{align}\normalsize
\end{subequations}

Let us consider a transfer function \en{ \psi\dd{A} \gets \psi\dd{B} }
to see a relationship between the two spins.
Note that they do not interact with each other directly.
The zeroth order term is written as
\small\begin{align}
\left(0\urm{th}\mbox{ order of } \ 
 \ipg\dd{\psi\dd{A} | \psi\dd{B}}\urm{SU(2)+SC}\right)
=
 \wick{\psi_{A}}{\psi_{B}\dgg}
=0.
\end{align}\normalsize
Higher order corrections also involve the same term
because we cannot eliminate \en{ \psi_B } and create \en{ \psi_A }.
Consequently,
the transfer functions between the two spins are zero:
\small\begin{align}
 \ipg\dd{\psi\dd{A} | \psi\dd{B}}\urm{SU(2)+SC}
=
 \ipg\dd{\psi\dd{B} | \psi\dd{A}}\urm{SU(2)+SC}
&=0.
\end{align}\normalsize
However, this does not necessarily mean no correlation between them.
The correlation is actually evaluated 
by four-point transfer functions.\index{transfer function (four-point)}

\newpage

\subsection{Four-point transfer function}

\begin{wrapfigure}[0]{r}[53mm]{49mm} 
\vspace{-5mm}
\centering
\includegraphics[keepaspectratio,width=30mm]{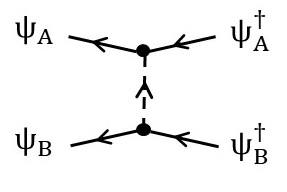}
\caption{
\small
Spin-spin scattering.
\normalsize
}
\label{fig-scatg}
\end{wrapfigure}

A four-point transfer function\index{transfer function (four-point)}
is defined as
\small\begin{align}
 \ipg\dd{\psi\dd{A 4} \psi\dd{B 3} | 
         \psi\dd{B 2} \psi\dd{A 1}}\urm{SU(2)+SC}
&=
 \wick{\psi_A(x_4)\psi_B(x_3)}
     {\psi\ddgg_B(x_2)\psi\ddgg_A(x_1)}\urm{SU(2)+SC}.
\end{align}\normalsize
The lowest order approximation of this transfer function 
is depicted as Figure \ref{fig-scatg}
in which the two spins interact with each other
via the SU(2) system \en{ \mas }.
Here we consider a case where 
\en{ t_4=t_3 } and \en{ t_2=t_1 }, 
to which
the Feynman diagrams are depicted in Figure \ref{fig-scattering-1}.

\begin{wrapfigure}[0]{r}[53mm]{49mm} 
\vspace{15mm}
\centering
\includegraphics[keepaspectratio,width=48mm]{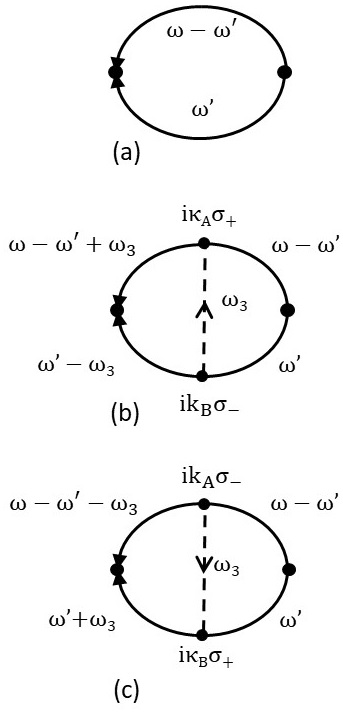}
\caption{
\small
(a) Zeroth order approximation of the four-point transfer function.
 The two arrowed lines start from $t_2=t_1$ and end at $t_4=t_3$.
 The momentum is defined to be conserved at each vertex.
(b,c) Two diagrams for the second order correction. 
\normalsize
}
\label{fig-scattering-1}
\end{wrapfigure}

The zeroth order approximation is 
a direct propagation of the two spins, 
as in Figure \ref{fig-scattering-1}(a).
This is expressed in the frequency domain as
\small\begin{align}
 \int \frac{d\omega'}{2\pi} \
 \ipg\dd{\psi\dd{A} | \psi\dd{A}}(\omega-\omega')
\
 \ipg\dd{\psi\dd{B} | \psi\dd{B}}(\omega').
\label{ss0}
\end{align}\normalsize
The second order corrections is given by 
the sum of Figure \ref{fig-scattering-1}(b) and (c):
\begin{subequations}
\label{ss2}
\small\begin{align}
\int\frac{d\omega_3}{2\pi}\frac{d\omega'}{2\pi}
\Bigl\{ 
\ipg\dd{\psi\dd{A} | \psi\dd{A}}(\omega-\omega'+\omega_3)
\
& \bigl[\im \kappa_A \sigma\dd{+}\bigr]
\
 \ipg\dd{\psi\dd{A} | \psi\dd{A}}(\omega-\omega')
\\  \times \,
 \ipg\dd{\psi\dd{B} | \psi\dd{B}}(\omega'-\omega_3)
\
&\bigl[ \im \kappa_B \sigma\dd{-}\bigr]
\
 \ipg\dd{\psi\dd{B} | \psi\dd{B}}(\omega')
\
 \ipg\dd{\mas|\mas}\urm{SU(2)}(\omega_3)
\\ 
+ \ 
 \ipg\dd{\psi\dd{A} | \psi\dd{A}}(\omega-\omega'-\omega_3)
\
&\bigl[ \im \kappa_A \sigma\dd{-} \bigr]
\
 \ipg\dd{\psi\dd{A} | \psi\dd{A}}(\omega-\omega')
\\ \times \,
 \ipg\dd{\psi\dd{B} | \psi\dd{B}}(\omega'+\omega_3)
\
&\bigl[ \im \kappa_B \sigma\dd{+}\bigr]
\
 \ipg\dd{\psi\dd{B} | \psi\dd{B}}(\omega')
\
 \ipg\dd{\mas|\mas}\urm{SU(2)}(\omega_3)
\Bigr\}.
\end{align}\normalsize
\end{subequations}


\subsection{Four-point transfer function - element-wise calculation}

We need to specify the spin Hamiltonians
to calculate the four-point transfer function explicitly.
Here we assume that they are diagonal:
\small\begin{align}
 H_A=H_B=\A{E_1}{}{}{E_2}.
\end{align}\normalsize
The zeroth order term (\ref{ss0}) is given as
\small\begin{align}
 \left[
\begin{array}{cccc}
 \ffrac{1}{ s +\im 2E_1}& & & \\
 & \ffrac{1}{s + \im (E_1+E_2) } & & \\
 & & \ffrac{1}{s + \im (E_1+E_2) } & \\
 & & & \ffrac{1}{s+2E_2}
\end{array}
\right].
\end{align}\normalsize
The second order correction (\ref{ss2}) is 
given as
\small\begin{align}
 \left[
\begin{array}{cccc}
 0& 0 & 0 & 0\\
 0 & 0& \hspace{-9mm} \left(\ffrac{1}{s+\im(E_1+E_2)}\right)^2
      \ffrac{-\kappa_A\kappa_B}{s+2g^2+2\im E_2} & 0 \\
 0 & \left(\ffrac{1}{s+\im(E_1+E_2)}\right)^2
      \ffrac{-\kappa_A\kappa_B}{s+2g^2+2\im E_1} & 0& 0 \\
 0& 0 & 0 & 0
\end{array}
\right].
\end{align}\normalsize

The four-point transfer function
describes the transition between spin states.
For example, for a transition process
\small\begin{align}
 \ket{+ -}
= \left[
\begin{array}{c}
 0\\
 1\\
 0\\
 0\\
\end{array}
\right] 
\quad
\longrightarrow
\quad
 \frac{1}{\sqrt{2}}\left(
 \ket{+ -}+\ket{- +}
\right)
= \frac{1}{\sqrt{2}}
 \left[
\begin{array}{c}
 0\\
 1\\
 1\\
 0\\
\end{array}
\right],
\end{align}\normalsize
the probability amplitude is given by
\small\begin{align}
&
 \frac{1}{\sqrt{2}}\frac{1}{s+\im(E_1+E_2)}
\left[
 1-
 \frac{1}{s+\im(E_1+E_2)}
 \frac{\kappa_A\kappa_B}{s+2g^2+2\im E_2}
\right].
\label{entanglepa}
\end{align}\normalsize
If the SU(2) system is not dissipative \en{ g=0 } 
and 
the lower energy level is set to be zero \en{ E_2=0 },
it follows from the final value theorem that 
this probability amplitude converges to
\small\begin{align}
 \frac{\kappa_A\kappa_B}{\sqrt{2} E_1^2}.
\end{align}\normalsize

\chapter{Gravitational wave detection}
\label{chap:gravity}
\thispagestyle{fancy}

A model of 
gravitational wave detection\index{gravitational wave detection}
is examined with \textit{S}-matrices.
A gravitational wave detector consists of optical cavities
in which one of mirrors is fluctuated by 
gravitational waves and thermal noise.
The change of the mirror position is reflected in 
the optical field traveling in the cavities.
The signal triggered by gravitational waves is detected through 
homodyne measurement on the output of the cavities.
We first develop a model of the mirror under thermal noise
using the technique of the SU(2) system.
It turns out to be a well-known form in systems theory, called 
a second-order system.
Then we consider a perturbation series for the interaction between 
the gravitational force and the cavity field.
This is also modeled using the SU(2) system.
To investigate how the gravitational force influences the cavity field,
we consider a transfer function.
However, 
we cannot define a transfer function 
from the gravitational force to the output of the cavities
because the gravitational force is not a quantum field.
Instead,
we introduce a sensitivity function.

\section{A harmonic oscillator and noise}

Before introducing gravitational wave detection,
we first model a mirror fluctuated by thermal noise.
Consider a harmonic oscillator \en{ p }
described by
\small\begin{align}
\hspace{20mm}
 (\partial\dd{t}^2+\Omega^2)p
&=
 0, \qquad \Omega\not=0.
\end{align}\normalsize
In the frequency domain, 
this can be expressed as
\begin{subequations}
\label{ho}
\small\begin{align}
 \im\Delta(s)
&=
 \frac{-1}{s^2+\Omega^2}
\\ &=
 \Bigl(
 \frac{1}{s+\im \Omega}-\frac{1}{s-\im \Omega}
 \Bigr)
 \frac{1}{2\im \Omega},
\label{ho-2}
\end{align}\normalsize
\end{subequations}

Let us introduce noise into \en{ p }.
We expect that noise causes the radiation of energy 
and hence damping of oscillations.
In a classical model of dumping, 
this is usually described by 
introducing frictional force into equations of motion.

Here we model damping in a different way.
First note that 
each term of (\ref{ho-2}) is the same as 
a single-mode closed-loop field \en{ \mas } in (\ref{singlen}):
\small\begin{align}
 \ipg\dd{\mas|\mas}=\frac{1}{s+\im \Omega},
\label{nise1}
\end{align}\normalsize
Recall that 
the stability (damping) of the SU(2) system \en{ \mas }
is created from 
the interaction with a free field \en{ \phi } 
through the SU(2) gate,
as in Figure \ref{fig-cavity-2}.
It is described by the Dyson equation (\ref{pdyson2}):
\begin{subequations}
\label{nise2}
\small\begin{align}
 \ipg\dd{\mas|\mas}\urm{SU(2)} 
&= 
 \left(1+2h^2 \ipg\dd{\mas|\mas}\right)\inv \ipg\dd{\mas|\mas} 
\\ &=
 \frac{1}{s+2h^2+\im\Omega}, 
\end{align}\normalsize
\end{subequations}
where \en{ h } is the coupling constant of the SU(2) gate.
This system is stable.
The change from (\ref{nise1}) to (\ref{nise2}) 
is caused by the interaction with \en{ \phi }.
In other words,
\en{ \phi } can be regarded as noise to the system \en{ \mas }.

\begin{wrapfigure}[0]{r}[53mm]{49mm} 
\vspace{-50mm}
\centering
\includegraphics[keepaspectratio,width=30mm]{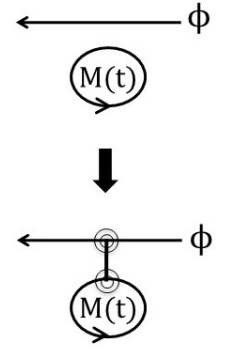}
\caption{
\small
(Upper) Independent closed-loop field $\mas$ and free field $\phi$.
(Lower) SU(2) system created from $\mas$ and $\phi$.
\normalsize
}
\label{fig-cavity-2}
\end{wrapfigure}

We apply this idea to the harmonic oscillator \en{ p }, 
i.e.,
we regard \en{  \im\Delta(s) } 
as two closed loops 
and introduce noise into each of them \textit{independently}:
\begin{subequations}
\small\begin{align}
 \im\Delta(s) 
&=
 \Bigl(
 \frac{1}{s+\im \Omega}-\frac{1}{s-\im \Omega}
 \Bigr)
 \frac{1}{2\im \Omega},
\label{hon-1} \\ \Rightarrow \quad 
 \im\Delta\urm{DO}(s) &=
 \Bigl(
 \frac{1}{s+2h^2+\im \Omega}-\frac{1}{s+2h^2-\im \Omega}
 \Bigr)
 \frac{1}{2\im \Omega}
\\ &=
 \frac{-1}{(s+2h^2)^2+\Omega^2}
\\ &\equiv
 \frac{-1}{s^2+2\zeta \omega_n s+\omega_n^2}.
\label{hon} 
\end{align}\normalsize
\end{subequations}
This is a well-known form of 
the \textit{second-order system}
in systems theory.\index{second-order system (systems theory)}
\en{ \zeta } and \en{ \omega_n } are, 
respectively, 
called 
a \textit{system damping ratio}\index{system damping ratio} 
and \textit{system natural frequency},\index{system natural frequency}
and defined as
\begin{subequations}
\small\begin{align}
 \omega_n&\equiv \sqrt{4h^4+\Omega^2},
\\
 \zeta&\equiv \frac{2h^2}{\sqrt{4h^4+\Omega^2}} < 1.
\end{align}\normalsize
\end{subequations}
The transfer function \en{ \im \Delta\urm{DO} } has two poles at
\small\begin{align}
 \pole(\im\Delta\urm{DO})
=
 -\zeta\omega_n\pm\im\omega_n\sqrt{1-\zeta^2}.
\end{align}\normalsize
Since \en{ 0<\zeta<1 }, 
these poles are in the left half of the complex (\en{ s }) plane
and hence \en{ \im\Delta\urm{DO} } is stable.
As a result, 
\en{ p } can be regarded as a damped harmonic oscillator.

\begin{remark}
It is critical to introduce two noise sources
and 
let them interact with the first and second terms of (\ref{hon-1}) 
independently.
If we introduce a single noise source,
the Dyson equation reads
\small\begin{align}
 \bigl(1+2h^2 \im\Delta \bigr)\inv \im\Delta
=
 \frac{1}{s^2+\Omega^2-2h^2},
\end{align}\normalsize
which is irrelevant to damping of oscillations.
In fact, 
if \en{ \Omega^2-2h^2 \ge 0 },
the system is still oscillating and no damping occurs.
If \en{ \Omega^2-2h^2 < 0 },
the system has both stable and unstable poles,
which means that the state of the system is amplified by the noise.
This is obviously incorrect as a model of damping.
\end{remark}

\newpage

\section{Lagrangian of gravitational wave detection}

Optical cavities are indispensable in detection of gravitational waves.
One of optical mirrors is designed to oscillate at a certain frequency
so that the mirror can resonate with gravitational waves.
Meanwhile, 
the mirror is fluctuated by disturbance such as thermal noise.
Its position is therefore described by the damped harmonic oscillator
shown in the preceding section.
Once the resonance is triggered by gravitational force,
the signal appears in the cavity.

The SU(2) system \en{ \mas } is a good approximate model of
the optical cavity.
The Lagrangian of the detection process is given as
\small\begin{align}
 \lag
&=
 \lag\urm{SU(2)}\dd{\mas} + \lag\dd{p}\urm{DO}
+ 
 \kappa p \mas\dgg \mas + \lambda p u,
\label{gravl}
\end{align}\normalsize
The first and second terms describe the cavity field \en{ \mas }
and damped harmonic oscillator \en{ p }, 
respectively.
The third term is the coupling of \en{ p } to \en{ \mas },
whereas 
the fourth term is 
the coupling of \en{ p } to the gravitational force \en{ u }.

The explicit form of \en{ \lag\urm{DO} } is not necessary.
All we need is 
the corresponding transfer function \en{ \im \Delta\urm{DO} }.
For the cavity \en{ \lag\urm{SU(2)}\dd{\mas} },
its transfer function 
\en{ \ipg\dd{\mas|\mas}\urm{SU(2)} } is known as well.
Accordingly, 
we express the Lagrangian \en{ \lag } as
\small\begin{align}
 \lag
=
  \lag\urm{f} + \lag\urm{GD},
\end{align}\normalsize
where 
\begin{subequations}
\label{gde}
\small\begin{align}
 \lag\urm{f} &\equiv \lag\urm{SU(2)}\dd{\mas} + \lag\dd{p}\urm{DO},
\\
 \lag\urm{GD} &\equiv \kappa p \mas\dgg \mas + \lambda p u.
\end{align}\normalsize
\end{subequations}
\en{ \lag\urm{GD} } is regarded as a perturbation.
In the Feynman diagram,
\en{ \lag\urm{GD} } is expressed by the following elementary diagrams:

\vspace{0mm}

\begin{figure}[H]
\centering
\includegraphics[keepaspectratio,width=64mm]{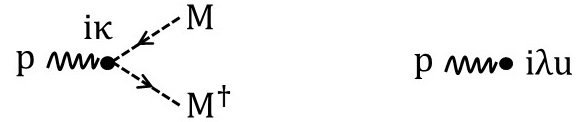}
\end{figure}

\vspace{0mm}

\noindent
where the dashed arrow and wiggly solid line
correspond to 
\en{ \ipg\dd{\mas|\mas}\urm{SU(2)} }
and 
\en{ \im \Delta\urm{DO} }, 
respectively.
It is worth noting that 
the gravitational force \en{ u } 
is not a quantum field.
In the right diagram above, 
\en{ u } is regarded as a \en{ c }-number attached to the vertex.
In other words,
\en{ \mas } and \en{ p } are treated as quantum fields, 
whereas
\en{ u } is a classical parameter.

We are interested in seeing how \en{ \mas } is influenced by \en{ u }
because gravitational waves are detected through the optical signal.
This is usually done by 
calculating a transfer function \en{ \mas \gets u }.
In a conventional approach,
we regard all quantities as \en{ c }-numbers,
write down the Euler-Lagrange (nonlinear) equation
from \en{ \lag=\lag\urm{f} + \lag\urm{GD} },
and then, linearize it to derive a transfer function \en{ \mas \gets u } 
in a classical way.

Here
we take a different approach.
Given \en{ \lag=\lag\urm{f} + \lag\urm{GD} },
we regard \en{ \lag\urm{GD} } as a perturbation
and expand the corresponding \textit{S}-matrix.
However, 
a transfer function \en{ \mas \gets u } is not well defined 
in a field theoretical sense
because \en{ u } is not a quantum field.
To investigate how \en{ \mas } is influenced by \en{ u },
we introduce a sensitivity function.

\section{Sensitivity function}

Consider a transfer function \en{ \mas\gets\mas }
under \en{ \lag=\lag\urm{f} + \lag\urm{GD} }.
This transfer function depends on \en{ u }.
According to the discussion in Section \ref{sec:dyson2},
it can be expressed as
\small\begin{align}
 \ipg\dd{\mas|\mas}\urm{f+GD}(s)
&=
 \ipg\dd{\mas|\mas}\urm{SU(2)} \
\Bigl[
 \im f_0(s)
+
 u \im f_1(s)
+
 u^2 \im f_2(s) +\cdots
\Bigr] \
 \ipg\dd{\mas|\mas}\urm{SU(2)}.
\label{gdseries}
\end{align}\normalsize
The gravitational force \en{ u } is so weak 
that we can ignore \en{ u^2 } and higher order.
Then \en{ \im f_1 } represents \en{ \mas }'s dependency on \en{ u }.
This is regarded as a sensitivity function.
We evaluate \en{ \im f_1 }
by calculating the second and forth order corrections 
of the \textit{S}-matrix.

\subsection{Second order correction}

For the second order correction of 
\en{ \ipg\dd{\mas|\mas}\urm{f+GD} },
there are two possible diagrams:

\vspace{0mm}
\begin{figure}[H]
\centering
\includegraphics[keepaspectratio,width=104mm]{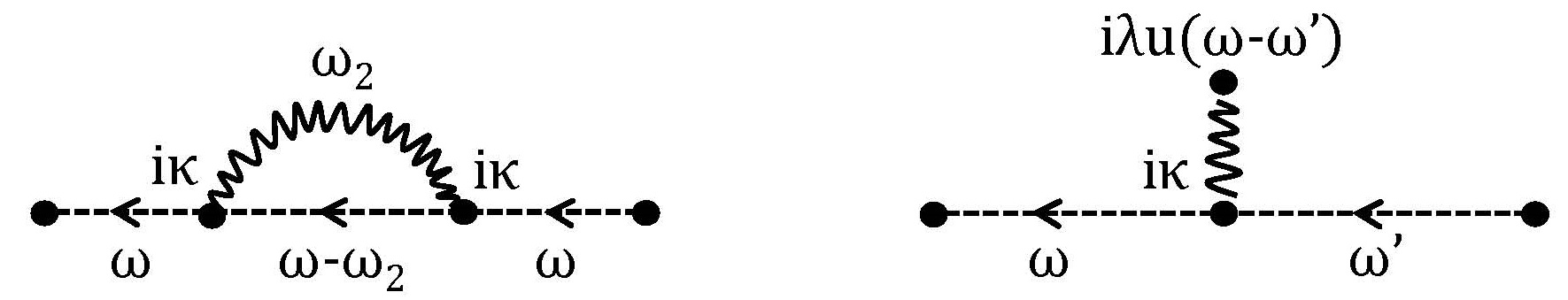}
\end{figure}
\vspace{0mm}

\noindent
It is not necessary to consider the left diagram 
because \en{ u } is not involved.
The right one is written as
\small\begin{align}
&
\int dt_4 dt_1 \ex\uu{\im \omega t\dd{4} - \im\omega' t\dd{1}}
\int dt_3dt_2 \
 \contraction{}{\phi\hspace{1mm}}{\hspace{15mm}}{}
 \mas_4 : \im \kappa
 \contraction[2ex]{}{\phi}{\hspace{25mm}}{}
 p_3 \mas_3\dgg 
 \contraction{}{\phi\hspace{1mm}}{\hspace{23mm}}{}
 \mas_3 : : \im \lambda u_2 p_2 
 : \mas_1\dgg
\nn\\ =&
\int dt_4 dt_1 \ex\uu{\im \omega t\dd{4} - \im\omega' t\dd{1}}
\int dt_3dt_2 \
 \ipg\dd{\mas|\mas}\urm{SU(2)}(t_4;t_3)
\ \Bigl[
 \im \kappa  \im\Delta\urm{DO}(t_3;t_2) 
\
 \im \lambda u(t_2)
 \Bigr] \
 \ipg\dd{\mas|\mas}\urm{SU(2)}(t_3;t_1)
\nn\\ =& \
 \ipg\dd{\mas|\mas}\urm{SU(2)}(\omega) 
\
\Bigl[ 
 \im \kappa \im \Delta\urm{DO}(\omega-\omega')
\
 \im \lambda u(\omega-\omega')
\Bigr] \
 \ipg\dd{\mas|\mas}\urm{SU(2)}(\omega').
\end{align}\normalsize

If the gravitational force is slowly varying 
compared to the cavity field,
and 
passing through the mirror for a very short period of time,
then the process can be approximated to adiabatic scattering.
In this case, 
there is no energy loss
so that the incoming and outgoing momenta are the same
\en{ \omega\sim\omega' } approximately.
The second order correction is then rewritten as
\small\begin{align}
 \ipg\dd{\mas|\mas}\urm{SU(2)}(\omega) 
\
\Bigl[ -\isel \, u\Bigr]
\
 \ipg\dd{\mas|\mas}\urm{SU(2)}(\omega),
\label{gwd-se1}
\end{align}\normalsize
where 
\small\begin{align}
- \isel
\equiv
- \kappa \lambda  \im \Delta\urm{DO}(0)
=
 \frac{\kappa \lambda}{\omega_n^2}
\end{align}\normalsize
is self-energy. 
Compared to (\ref{gdseries}),
\small\begin{align}
 \im f_1
&=
 \frac{\kappa \lambda}{\omega_n^2}
\end{align}\normalsize
is a sensitivity function in the lowest order approximation.
Note that 
the influence of the thermal noise is static.

\newpage

\subsection{Fourth order correction}

\begin{wrapfigure}[0]{r}[53mm]{49mm} 
\vspace{10mm}
\centering
\includegraphics[keepaspectratio,width=48mm]{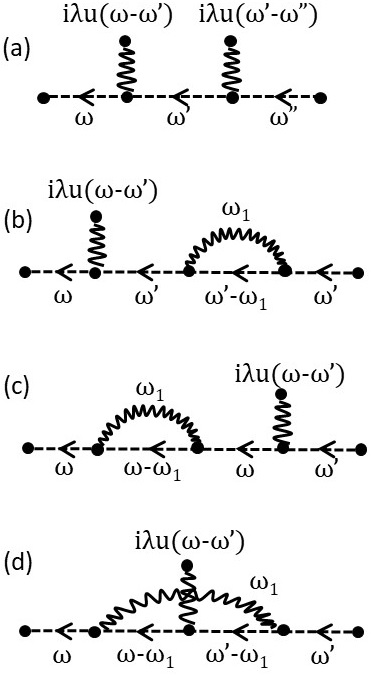}
\caption{
\small
Fourth order corrections.
\normalsize
}
\label{fig-gravity2-1}
\end{wrapfigure}

There are four possible diagrams for the fourth order correction,
as in Figure \ref{fig-gravity2-1}.
Note that Figure \ref{fig-gravity2-1}(a) can be ignored 
because of \en{ u^2 }.
The fourth order correction is given by the sum of the 
remaining three diagrams:
\begin{subequations}
\small\begin{align}
 \ipg\dd{\mas|\mas}\urm{SU(2)}(\omega)&
 \Bigl[-\im \Xi_1(\omega)\Bigr]
\Bigl[
 \im \kappa \im \Delta\urm{DO}(\omega-\omega')
 \im \lambda u(\omega-\omega')
\Bigr]
 \ipg\dd{\mas|\mas}\urm{SU(2)}(\omega')
\\  
+ \ipg\dd{\mas|\mas}\urm{SU(2)}(\omega)&
\Bigl[
 \im \kappa \im \Delta\urm{DO}(\omega-\omega')
 \im \lambda u(\omega-\omega')
\Bigr]
\Bigl[- \im \Xi_1(\omega')\Bigr]
 \ipg\dd{\mas|\mas}\urm{SU(2)}(\omega')
\\  
+  \ipg\dd{\mas|\mas}\urm{SU(2)}(\omega)&
\Bigl[-\im \Xi_2(\omega,\omega')\Bigr] 
\Bigl[
 \im \kappa \im\Delta\urm{DO}(\omega-\omega')\ \im \lambda u(\omega-\omega')
\Bigr]
 \ipg\dd{\mas|\mas}\urm{SU(2)}(\omega'),
\end{align}\normalsize
\end{subequations}
where
\begin{subequations}
\label{resint}
\small\begin{align}
- \im \Xi_1(\omega) 
&=
 (\im\kappa)^2 
\
\ipg\dd{\mas|\mas}\urm{SU(2)}(\omega)
 \int \frac{d\omega_1}{2\pi} \
 \ipg\dd{\mas|\mas}\urm{SU(2)}(\omega-\omega_1)
\
 \im \Delta\urm{DO}(\omega_1),
\\
- \im \Xi_2(\omega,\omega') 
&=
 (\im\kappa)^2 \int \frac{d\omega_1}{2\pi} \
 \ipg\dd{\mas|\mas}\urm{SU(2)}(\omega-\omega_1)
\
 \ipg\dd{\mas|\mas}\urm{SU(2)}(\omega'-\omega_1)
\
 \im \Delta\urm{DO}(\omega_1).
\end{align}\normalsize
\end{subequations}

If the scattering process is adiabatic 
as assumed in the second order correction,
the fourth order correction is expressed as
\small\begin{align}
 \ipg\dd{\mas|\mas}\urm{SU(2)}(s)
\
\Bigl[-\im \Xi (s) \Bigr]
\
\Bigl[
-\isel \, u
\Bigr]
\
 \ipg\dd{\mas|\mas}\urm{SU(2)}(s),
\label{gwd-se1-2}
\end{align}\normalsize
where 
\begin{subequations}
\small\begin{align}
- \im \Xi
&\equiv
- 2 \im \Xi_1 - \im \Xi_2.
\\&=
 (\im\kappa)^2
 \left[2\ipg\dd{\mas|\mas}\urm{SU(2)}(s)-\frac{d}{ds}\right]
 \im \Delta\urm{DO}(s+2g^2)
\\ &=
 \frac{-\kappa^2 \bigl[ 2\zeta\omega_n(s+2g^2)+\omega_n^2 \bigr]}
      {\bigl[ (s+2g^2)^2+2\zeta\omega_n(s+2g^2)+\omega_n^2 \bigr]^2(s+2g^2)}.
\end{align}\normalsize
\end{subequations}
In this case, 
the self-energy \en{ \im\Xi } depends on the thermal noise dynamically.

\subsection{Sensitivity function}

Up to fourth order,
the transfer function is given as
\begin{subequations}
\label{mmf1}
\small\begin{align}
 \ipg\dd{\mas|\mas}\urm{f+GD}(s)
&=
 \ipg\dd{\mas|\mas}\urm{SU(2)} \
\Bigl[ 1-\im\Xi(s) \Bigr] 
\Bigl[ -\isel \, u\Bigr] \
 \ipg\dd{\mas|\mas}\urm{SU(2)} 
\\ &\sim
 \ipg\dd{\mas|\mas}\urm{SU(2)} \
 u \, \frac{ -\isel }{1+\im\Xi(s)} \
 \ipg\dd{\mas|\mas}\urm{SU(2)}.
\end{align}\normalsize
\end{subequations}
Compared to (\ref{gdseries}),
the sensitivity function is given as
\small\begin{align}
 \im f_1
&=
\frac{ -\isel }{1+\im\Xi(s)}.
\label{f1}
\end{align}\normalsize

\chapter{Nonlinear gates and systems}
\label{chap:nfb}
\thispagestyle{fancy}

D-feedback is a process in which
the parameter of a displacement gate is defined 
as being proportional to one of the output quadratures.
This has been examined 
classically in Section \ref{sec:fffb} 
and
with \textit{S}-matrices in Sections \ref{sec:circuitss}.
In this chapter,
we apply the same feedback to the XX and QND gates.
The input-output relations of the resulting gates 
are investigated through multi-point transfer functions.

\section{XX+SU(2) gate and feedback coupling}

Consider the XX+SU(2) gate introduced in Section \ref{sec:xxsu2}.
Assume that the parameter (reactance matrix) of the XX gate
is defined as
\small\begin{align}
 g_0 
&= 
 \frac{k}{\sqrt{2}} \xi\drm{2,out}.
\label{xxsu2p}
\end{align}\normalsize
The Lagrangian is then expressed as
\small\begin{align}
 \lag
&=
 \lag\urm{SU(2)} 
+
 \lag\urm{int},
\end{align}\normalsize
where \en{ \lag\urm{SU(2)} } is from the SU(2) gate and 
\small\begin{align}
  \lag\urm{int}
&= \
\frac{k}{\sqrt{2}}
 : 
 \xi\drm{2,out} 
\Bigl[
\mm{\xi}_1 \mm{\xi}_2
+
 g\dd{l} 
\bigl(
 \mm{\xi}_1^2 - \mm{\xi}_2^2
\bigr) \Bigr]:.
\end{align}\normalsize

In this section,
we examine a decay process
\small\begin{align}
 \phi\drm{1,out} + \phi\drm{1,out} \gets \phi\drm{1,in}
\end{align}\normalsize
with a three-point transfer function
\small\begin{align}
 \ipg\dd{\phi\drm{1,out}^2 | \phi\drm{1,in} }\urm{SU(2)+int}
&=
 \wick{\phi\drm{1,out}^2}{\phi\drm{1,in}\dgg}\urm{SU(2)+int}.
\label{tptf1}
\end{align}\normalsize
The transfer functions of the SU(2) gate have been obtained 
in Theorem \ref{prop:bsfull}.
The three-point transfer function is calculated by
expanding an \textit{S}-matrix corresponding to
\en{ \lag\urm{int} }.

\newpage

We first note that 
\en{ \lag\urm{int} } is a polynomial function of 
the quadrature \en{ \xi }.
The second and higher order corrections involve 
the following contraction:
\small\begin{align}
\contraction{}{\phi\hspace{2mm}}{\hspace{3mm}}{}
 \cdots \ \xi\dd{i}
 \contraction{}{\phi\hspace{2mm}}{\hspace{4mm}}{}
 \xi\dd{j} : : \xi\dd{k}
\contraction{}{\phi\hspace{1mm}}{\hspace{3mm}}{}
 \xi\dd{l} \ \cdots.
\label{middlecont}
\end{align}\normalsize
Since the SU(2) gate is unitary, 
\en{ \xi } and \en{ \eta } are always independent
and hence 
all transfer functions between \en{ \xi } and \en{ \eta } 
are zero (Theorem \ref{prop:bsfull}):
\small\begin{align}
 \ipg\dd{\xi\dd{j} | \eta\dd{k}}\urm{SU(2)}
=
 \im \wick{\xi\dd{j}}{\xi\dd{k}}\urm{SU(2)}
=
 0.
\end{align}\normalsize
As a result, 
(\ref{middlecont}) is zero and 
only the first order term remains
in the expansion of the \textit{S}-matrix.

\begin{wrapfigure}[0]{r}[53mm]{49mm} 
\vspace{0mm}
\centering
\includegraphics[keepaspectratio,width=48mm]{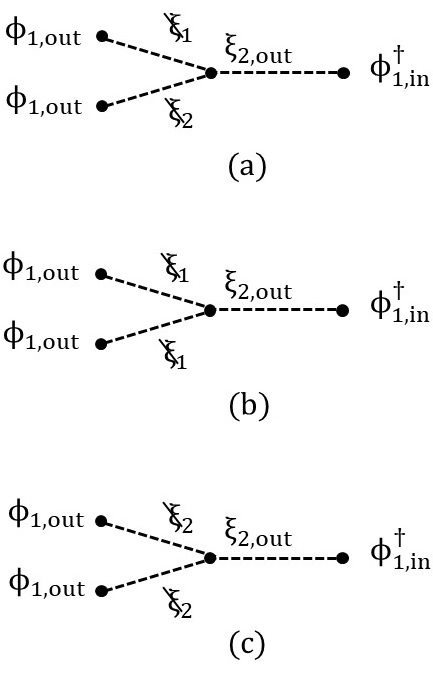}
\caption{
\small
Three diagrams for the first order correction
\normalsize
}
\label{fig-xxfb-1}
\end{wrapfigure}

For the first order correction,
there are three possible diagrams as in Figure \ref{fig-xxfb-1}.
The diagram (a) is expressed as
\begin{subequations}
\small\begin{align}
 (a)
&=
 2! \times \im \frac{k}{\sqrt{2}} 
 \wick{\phi\drm{1,out}}{\mm{\xi}_1} \
 \wick{\phi\drm{1,out}}{\mm{\xi}_2} \
 \wick{\xi\drm{2,out}}{\phi\drm{1,in}\dgg},
\end{align}\normalsize
\end{subequations}
where \en{ 2! } comes from \en{ (\phi\drm{1,out})^2 }.
It follows from Theorem \ref{prop:bsfull} that
\begin{subequations}
\small\begin{align}
 \wick{\phi\drm{1,out}}{\mm{\xi}_1}
&=
 \frac{1}{\sqrt{2}} \frac{1}{1+g\dd{l}^2},
\\
 \wick{\phi\drm{1,out}}{\mm{\xi}_2} 
&=
 \frac{1}{\sqrt{2}} \frac{g\dd{l}}{1+g\dd{l}^2},
\\
 \wick{\xi\drm{2,out}}{\phi\drm{1,in}\dgg} 
&=
 \frac{1}{\sqrt{2}} \frac{-2g\dd{l}}{1+g\dd{l}^2}.
\end{align}\normalsize
\end{subequations}
As a result, we have
\begin{subequations}
\small\begin{align}
 (a)
&=
 \im k \frac{-g\dd{l}^2}{(1+g\dd{l}^2)^3},
\\
 (b)
&=
 \im k \frac{-g\dd{l}^2}{(1+g\dd{l}^2)^3},
\\
 (c)
&=
 \im k \frac{g\dd{l}^4}{(1+g\dd{l}^2)^3}.
\end{align}\normalsize
\end{subequations}
The three-point transfer function is given by 
the sum of these diagrams:
\small\begin{align}
 \ipg\dd{\phi\drm{1,out}^2 | \phi\drm{1,in}}\urm{SU(2)+int}
&=
 \im k
 \frac{g\dd{l}^2(g\dd{l}^2-2)}{(1+g\dd{l}^2)^3}.
\end{align}\normalsize

Likewise, 
the three-point transfer function for a decay process
\small\begin{align}
 \phi\drm{1,out} + \phi\drm{1,out} \gets \phi\drm{2,in} 
\end{align}\normalsize
is given as
\small\begin{align}
 \ipg\dd{\phi\drm{1,out}^2 | \phi\drm{2,in}}\urm{SU(2)+int}
&=
 \frac{\im k}{2}
 \frac{g\dd{l}(1-g\dd{l}^2)^2}{(1+g\dd{l}^2)^3}.
\end{align}\normalsize

\newpage

\section{QND system and feedback coupling}

\begin{wrapfigure}[0]{r}[53mm]{49mm} 
\vspace{27mm}
\centering
\includegraphics[keepaspectratio,width=30mm]{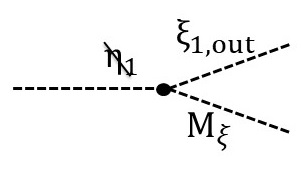}
\caption{
\small
Diagram for $\lag^{\mathrm{int}}$
\normalsize
}
\label{fig-lag-qnd-int}
\end{wrapfigure}

Consider the QND system introduced in Section \ref{sec:sumsys}.
As in the preceding section,
the coupling constant \en{ g } is defined as
\small\begin{align}
 g
&=
 g_0 + k \xi\drm{1,out},
\end{align}\normalsize
where \en{ g_0 } is constant.
The resulting Lagrangian is written as
\small\begin{align}
 \lag
&=
 \lag_{\mas}\urm{QND} + \lag\urm{int},
\end{align}\normalsize
where \en{ \lag_{\mas}\urm{QND} } is 
the Lagrangian of the QND system and 
\small\begin{align}
 \lag\urm{int}
&=
 k \xi\drm{1,out}  \ \mm{\eta}_1 \mas_{\xi}.
\end{align}\normalsize
This is depicted in Figure \ref{fig-lag-qnd-int}.
Before examining this interaction,
let us revisit the QND system.
As in (\ref{sumsysio}), it is expressed as a state equation
\begin{subequations}
\small\begin{align}
 \AV{\dot{\mas}\dd{\xi}}{\dot{\mas}\dd{\eta}}
&=
\hspace{37mm} \A{0}{0}{0}{g}
 \AV{\xi_1}{\eta_1}\drm{in},
\label{qndstate-1}\\
 \AV{\xi_1}{\eta_1}\drm{out}
&=
 \A{-g}{0}{0}{0}\AV{\mas\dd{\xi}}{\mas\dd{\eta}}
+ \hspace{17mm}
 \AV{\xi_1}{\eta_1}\drm{in}.
\label{qndstate-2}
\end{align}\normalsize
\end{subequations}
Note that \en{ \mas\dd{\xi} } is completely decoupled 
from the input \en{ \phi\drm{1,in}=(\xi\drm{1,in} + \im \eta\drm{1,in})/\sqrt{2} }
in (\ref{qndstate-1}),
whereas 
it is coupled to the output \en{ \xi\drm{1,out} } in (\ref{qndstate-2}).
This state equation is also expressed as the following contractions:

\begin{lemma}
\label{s0s1}
In the quadrature basis,
contractions under \en{ \lag^{\mathrm{QND}} } are given as
\begin{subequations}
\small\begin{align}
 \wick{\xi\dd{1,\mathrm{in}}}{\eta\dd{1,\mathrm{in}}}^{\mathrm{QND}}&=\im,
\hspace{18mm}
 \wick{\eta\dd{1,\mathrm{in}}}{\xi\dd{1,\mathrm{in}}}^{\mathrm{QND}}=-\im,
\\
 \wick{\xi\dd{1,\mathrm{out}}}{\eta\dd{1,\mathrm{in}}}^{\mathrm{QND}}&=\im,
\hspace{17mm}
 \wick{\eta\dd{1,\mathrm{out}}}{\xi\dd{1,\mathrm{in}}}^{\mathrm{QND}}=-\im,
\\
 \wick{\xi\dd{1,\mathrm{in}}}{\eta\dd{1,\mathrm{out}}}^{\mathrm{QND}}&=-\im,
\hspace{15mm}
 \wick{\eta\dd{1,\mathrm{in}}}{\xi\dd{1,\mathrm{out}}}^{\mathrm{QND}}=\im,
\\
 \wick{\xi\dd{1,\mathrm{out}}}{\eta\dd{1,\mathrm{out}}}^{\mathrm{QND}}&=\im,
\hspace{16mm}
 \wick{\eta\dd{1,\mathrm{out}}}{\xi\dd{1,\mathrm{out}}}^{\mathrm{QND}}=-\im,
\\ \nn\\
 & \hspace{-2mm}
\wick{\mas_{\eta}}{\xi\dd{1,\mathrm{in}}}^{\mathrm{QND}}=-\im\ffrac{g_0}{s},
\\
 & \wick{\mas_{\xi}}{\mas_{\eta}}^{\mathrm{QND}}=\ffrac{\im}{s},
\\
 & \wick{\mas_{\eta}}{\mas_{\xi}}^{\mathrm{QND}}=-\ffrac{\im}{s},
\\
 & \hspace{-3mm}
\wick{\xi\dd{1,\mathrm{out}}}{\mas_{\eta}}^{\mathrm{QND}}=-\im\ffrac{g_0}{s}.
\end{align}\normalsize
\end{subequations}
All other contractions are zero.
In particular, 
\en{ \mas\dd{\xi} } is decoupled from the input \en{ \phi\dd{\mathrm{1,in}} }
\small\begin{align}
 \ipg\dd{\mas\dd{\xi} | \phi\dd{\mathrm{1,in}}}^{\mathrm{QND}}
=
 \wick{\mas_{\xi}}{\phi\dd{\mathrm{1,in}}\dgg}^{\mathrm{QND}} 
= 
 0.
\label{s0s1-1}
\end{align}\normalsize
\end{lemma}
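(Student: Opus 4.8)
The plan is to compute all the listed contractions directly from the interaction Lagrangian $\lag^{\mathrm{QND}}_{\mas}$ of the QND system, using the $S$-matrix expansion machinery of Chapter \ref{sec:tfs}. Recall from (\ref{sumsyslag}) that after closing the loop the QND system has interaction Lagrangian $\im\lag^{\mathrm{QND}}_{\mas} = \delta(z)\, g_0\,\mm{\eta}_1 \mas_{\xi}$ (with the constant coupling $g_0$ here, since the feedback term is being split off into $\lag^{\mathrm{int}}$). The first step is to assemble the free-field contractions needed: for the travelling field these are the quadrature versions of Lemma \ref{lem:mmtrans}, converted via $\xi = (\phi+\phi\dgg)/\sqrt{2}$, $\eta = \im(\phi\dgg-\phi)/\sqrt{2}$, together with $\xi\ddgg = -\im\eta$, $\eta\ddgg = \im\xi$ from Section \ref{sec:tf}; for the closed loop $\mas$ these are $\wick{\mas}{\mas\dgg} = 1/s$ from (\ref{ftf}), and hence $\wick{\mas_{\xi}}{\mas_{\eta}\dgg}$, etc., obtained from the same linear change of basis. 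Since $\mas_\xi$ and $\mas_\eta$ are Hermitian combinations, one must be careful to track which of $\wick{\mas_\xi}{\mas_\eta}$ is causal/anticausal using the asymmetry relation (\ref{asym}).

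Next I would expand the $S$-matrix and use Wick's theorem exactly as in Section \ref{sec:qsu2}. Because $\im\lag^{\mathrm{QND}}_{\mas}$ is quadratic (bilinear in $\mm{\eta}_1$ and $\mas_\xi$), the geometric series (\ref{su2gatetf})-type resummation applies; but here the key structural fact is that $\mas_\xi$ appears only once in the vertex and $\mm{\eta}_1$ only once, and $\mm{\eta}_1$ contracts only with $\eta$-type operators while $\mas_\xi$ contracts only with $\mas_\eta$-type operators. I expect the series to truncate after one or two terms — indeed the QND reactance matrix satisfies $\lie^2 = 0$ (Remark after the XX gate; cf. Section \ref{sec:sumsys} where the QND system is shown to be neither controllable nor observable), so higher-order corrections vanish. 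Carrying out the first-order correction for each pair $(A,B)$ in the list then produces the stated values: the $\xi$–$\eta$ block among $\{$in, out$\}$ reproduces the Pauli-$\sigma_y$-type structure of Theorem \ref{thm:sumlist} (shifted to account for the closed-loop propagator $1/s$ replacing the static $g$), and the $\mas$ entries come out as $\pm\im g_0/s$ and $\pm\im/s$ from the single vertex insertion times $\wick{\mas}{\mas\dgg}=1/s$.

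The crucial step — and the one I would present most carefully — is (\ref{s0s1-1}): $\ipg_{\mas_\xi|\phi_{1,\mathrm{in}}}^{\mathrm{QND}} = \wick{\mas_\xi}{\phi_{1,\mathrm{in}}\dgg}^{\mathrm{QND}} = 0$. This is the "non-demolition" property: $\mas_\xi$ can only enter a contraction through the single vertex $g_0\mm{\eta}_1\mas_\xi$, which pairs it with $\mm{\eta}_1$; to reach $\phi_{1,\mathrm{in}}$ one then needs $\wick{\mm{\eta}_1}{\phi_{1,\mathrm{in}}\dgg}^{\mathrm{QND}}$, but $\phi_{1,\mathrm{in}}\ddgg$ in the quadrature basis is proportional to a combination of $\xi_{1,\mathrm{in}}$ and $\eta_{1,\mathrm{in}}$, and the $\eta$–$\xi$ free contractions together with the unitarity of the underlying gate (all $\xi$–$\eta$ cross terms controlled, all higher corrections killed by $\lie^2=0$) leave no surviving path. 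I expect the main obstacle to be purely bookkeeping: getting all the signs and factors of $\im$ consistent across the $\phi\leftrightarrow\{\xi,\eta\}$ and $\mas\leftrightarrow\{\mas_\xi,\mas_\eta\}$ basis changes, and correctly invoking the forward-travelling-field asymmetry (\ref{asym1-1})–(\ref{asym1-2}) so that each contraction is assigned its causal or anticausal value. Once the basis-change dictionary is fixed, every entry follows mechanically, and the decoupling (\ref{s0s1-1}) is then immediate from the vertex structure.
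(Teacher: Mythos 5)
Your overall route is legitimate: the paper itself does not really prove this lemma but simply reads it off the state equation (\ref{sumsysio}) of the QND system derived in Section \ref{sec:sumsys}, so a direct contraction computation from the single quadratic vertex $\lag^{\mathrm{QND}}_{\mas}=\delta(z)\,g_0\,\mm{\eta}_1\mas_{\xi}$, using the quadrature form of Lemma \ref{lem:mmtrans} together with $\wick{\mas}{\mas\dgg}=1/s$, is a sound alternative in the spirit of Chapter \ref{sec:tfs}, and it does reproduce the listed values; for instance the first-order chain $\wick{\mas_{\eta}}{\mas_{\xi}}\,(\im g_0)\,\wick{\mm{\eta}_1}{\xi\drm{1,in}}=(-\im/s)(\im g_0)(-\im)=-\im g_0/s$.

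However, your justification of the crucial decoupling (\ref{s0s1-1}) — the step you say you would present most carefully — is wrong as written. You claim the chain starting from the external $\mas_{\xi}$ dies at the link $\wick{\mm{\eta}_1}{\phi\drm{1,in}\dgg}$ because of the $\xi$--$\eta$ structure; but that contraction is nonzero: $\phi\drm{1,in}\dgg=(\xi\drm{1,in}-\im\eta\drm{1,in})/\sqrt{2}$ and $\wick{\mm{\eta}_1}{\xi\drm{1,in}}=-\im$, and indeed this is exactly the nonzero link that produces $\wick{\mas_{\eta}}{\xi\drm{1,in}}^{\mathrm{QND}}=-\im g_0/s$. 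The vanishing occurs at the other end of the chain: the external $\mas_{\xi}$ can attach to a vertex only through $\wick{\mas_{\xi}}{\mm{\eta}_1}=0$ (independent free fields) or through $\wick{\mas_{\xi}}{\mas_{\xi}}=\tfrac12\bigl(\wick{\mas}{\mas\dgg}+\wick{\mas\dgg}{\mas}\bigr)=0$ by the asymmetry (\ref{asym}); since the vertex contains $\mas_{\xi}$ but not $\mas_{\eta}$, every term of the expansion vanishes identically, which is precisely the non-demolition statement $\dot{\mas}_{\xi}=0$. Two related slips: your claim that ``$\mm{\eta}_1$ contracts only with $\eta$-type operators'' is backwards (the nonzero quadrature contractions are the $\xi$--$\eta$ cross terms, same-quadrature ones vanish), and the truncation of the series is not literally a consequence of $\lie^2=0$ but of $\wick{\mm{\eta}_1}{\mm{\eta}_1}=\wick{\mas_{\xi}}{\mas_{\xi}}=0$. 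With these corrections the remaining bookkeeping goes through and yields all the stated entries.
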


\newpage

The input process is not affected by the feedback.
Hence
\en{ \mas\dd{\xi} } is still decoupled from \en{ \phi\drm{1,in} } 
under \en{ \lag\urm{int} }.
To see this, note that 
an operator \en{ \maths{O} }
such that \en{ \wick{\mas_{\xi}}{\maths{O}}\urm{QND}\not= 0 }
is only \en{ \maths{O}=\mas_{\eta} }
in Lemma \ref{s0s1}.
However, 
\en{ \mas_{\eta} } is not in \en{ \lag\urm{int} }.
As a result, 
the following \en{ (l+m+n) }-point transfer function is zero
and 
\en{ \mas\dd{\xi} } remains decoupled from \en{ \phi\drm{1,in} }:
\small\begin{align}
 \ipg\dd{\mas\dd{\xi}\uu{l} | 
         \phi\drm{1,in}\uu{m} \phi\drm{1,in}\uu{\dag n}}\urm{QND+int} 
=
 \wick{\mas_{\xi}^{l}}
      {(\phi\drm{1,in}\dgg)^{m} (-\phi\drm{1,in})^{n}}\urm{QND+int}
= 
 0.
\qquad 
 \forall l+m+n\in\mathbb{N}
\end{align}\normalsize

\begin{wrapfigure}[0]{r}[53mm]{49mm} 
\vspace{27mm}
\centering
\includegraphics[keepaspectratio,width=48mm]{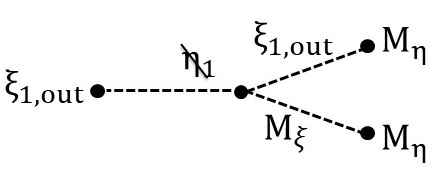}
\caption{
\small
Nonzero term of the transfer function (\ref{fu2-tf}).
\normalsize
}
\label{fig-fusion1}
\end{wrapfigure}

On the other hand,
the output process is influenced by the feedback.
To see this,
we examine a fusion process
\small\begin{align}
 \xi\drm{1,out} \gets \mas_{\xi}+\mas_{\xi}
\label{fu2}
\end{align}\normalsize
with a three-point transfer function
\small\begin{align}
 \ipg\dd{\xi_1 | \mas\dd{\xi} \mas\dd{\xi}}\urm{QND+int}
 (\{t\drm{out}\},\{t, t'\})
&=
 \wick{\xi_1(t\drm{out})\, }
      { \mas\dd{\xi}\ddgg(t)\mas\dd{\xi}\ddgg(t')}\urm{QND+int}
\nn\\ &=
 (-\im)^2 
 \wick{\xi_1(t\drm{out})\ }
      {\mas\dd{\eta}(t)\mas\dd{\eta}(t')}\urm{QND+int}.
\label{fu2-tf}
\end{align}\normalsize
For this process,
only the first order correction depicted in Figure \ref{fig-fusion1}
is nonzero.
Each edge of this diagram is given as
\begin{subequations}
\small\begin{align}
 F_1&\equiv\wick{\xi\drm{1,out}}{\mas\dd{\eta}}\urm{QND} 
= 
 -\im \frac{g_0}{s},
\\
 F_2&\equiv\wick{\mas\dd{\xi}}{\mas\dd{\eta}}\urm{QND} 
= 
 \frac{\im}{s},
\\
 F_3&\equiv\wick{\xi\drm{1,out}}{\mm{\eta}_1}\urm{QND} 
= 
 2\im.
\end{align}\normalsize
\end{subequations}
The three-point transfer function is given as
\small\begin{align}
& \ipg\dd{\xi_1 |\mas\dd{\xi} \mas\dd{\xi}}\urm{QND+int}
       (\{t\drm{out}\},\{t,t'\})
\nn\\ 
&= 2! (-\im)^2  \frac{\im k}{2}
   \int dt_2 \ F_3(t\drm{out},t_2) F_2(t_2,t) F_1(t_2,t')
\nn \\ &=
 2! (-\im)^2  \frac{\im k}{2}
 \int \frac{d\omega_2}{2\pi}\frac{d\omega_1}{2\pi}
 \ex\uu{-\im\omega\dd{2}(t\drm{out} -t)}
 \ex\uu{-\im\omega\dd{1}(t\drm{out} -t')}
 F_2(\omega_2)
 F_1(\omega_1)
 F_3(\omega_2+\omega_1)
\nn \\ &=
 2! g_0 k
 \int \frac{d\omega_2}{2\pi} \ex\uu{-\im\omega\dd{2}(t\drm{out} -t)}
 \frac{\im}{\omega_2+\im\epsilon}
 \int \frac{d\omega_1}{2\pi} \ex\uu{-\im\omega\dd{1}(t\drm{out} -t')}
 \frac{\im}{\omega_1+\im\epsilon}
\nn \\ &=
 2! g_0 k \
 \step(t\drm{out} -t) \ 
 \step(t\drm{out} -t'),
\label{fusionstep}
\end{align}\normalsize
where \en{ 2! } comes from the permutation of 
\en{ \mas_{\eta}(t) } and \en{ \mas_{\eta}(t') }.
We have also used the integral representation of the step function
\small\begin{align}
 \step(t)
&=
 \int \frac{d\omega}{2\pi} 
 \ex\uu{-\im\omega t}\frac{\im}{\omega+\im\epsilon}.
\end{align}\normalsize
In (\ref{fusionstep}), 
the step functions indicate that
the output \en{ \xi\dd{1}(t\drm{out}) } is correlated with 
the system \en{ \mas\dd{\xi}(t)\mas\dd{\xi}(t') }
only when \en{ t\drm{out} > t } and \en{ t\drm{out} > t' }.
This is expected from 
the unidirectionality of the forward traveling field.

\newpage

We can consider the following process in the same way:
\small\begin{align}
 \xi\drm{1,out} \gets \mas_{\xi} + \mas_{\xi} + \mas_{\xi}.
\end{align}\normalsize
The corresponding probability amplitude is given by
a \en{ (1+3)}-point transfer function
\small\begin{align}
 \ipg\dd{\xi_1 |\mas\dd{\xi} \mas\dd{\xi} \mas\dd{\xi}}\urm{QND+int}
 (\{t\drm{out}\},\{t,t',t''\})
&=
 (-\im)^3 
 \wick{\xi_1(t\drm{out})}
      {\mas\dd{\eta}(t) \mas\dd{\eta}(t') \mas\dd{\eta}(t'')}\urm{QND+int}.
\end{align}\normalsize
In this case, only the following  second order correction is nonzero:

\vspace{1mm}
\begin{figure}[H]
\centering
\includegraphics[keepaspectratio,width=63mm]{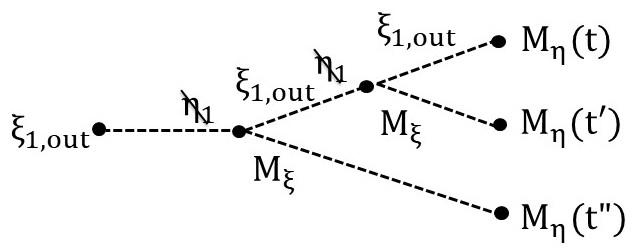}
\end{figure}
\vspace{-1mm}

\noindent
The \en{ (1+3)}-point transfer function is therefore given as
\small\begin{align}
& \ipg\dd{\xi_1 |\mas\dd{\xi} \mas\dd{\xi} \mas\dd{\xi}}\urm{QND+int}
 (\{t\drm{out}\},\{t,t',t''\})
\nn\\
&= 
 3!(-\im)^3 \left(\frac{\im k}{2}\right)^2
 \int dt_3 dt_2\ 
 F_3(t\drm{out},t_3) F_2(t_3,t'') F_3(t_3,t_2) F_2(t_2,t') F_1(t_2,t)
\nn\\ &=
 -3! g_0 k^2 \
 \step(t\drm{out} -t) \
 \step(t\drm{out} -t') \
 \step(t\drm{out} -t'') .
\end{align}\normalsize
Likewise,
for a process
\small\begin{align}
 \xi\drm{1,out}
\gets 
 \underbrace{\mas_{\xi}+\cdots +\mas_{\xi}}\dd{\hspace{6mm}n},
\end{align}\normalsize
a \en{ (1+n)}-point transfer function is given by
(without step functions)
\small\begin{align}
 \ipg\dd{\xi\drm{1,out} |\mas\dd{\xi} \cdots \mas\dd{\xi}}\urm{QND+int}
=
 (-1)^n n! g_0 k^{n-1}.
\label{dffn}
\end{align}\normalsize

Let us consider a fusion process
\small\begin{align}
 \xi\drm{1,out}
\gets
 \exp\left[\mu \mas_{\xi}\right],
\end{align}\normalsize
where \en{ \mu } is a constant such that \en{ |\mu k|<1 }.
Note that 
\small\begin{align}
 \ipg\dd{\xi\drm{1,out} |\mas\dd{\xi}}\urm{QND+int}
=
 -g_0.
\end{align}\normalsize
It follows from (\ref{dffn}) that 
\small\begin{align}
 \ipg\dd{\xi\drm{1,out} |\exp(\mu \mas\dd{\xi})}\urm{QND+int}
&=
 \frac{-\mu g_0}{1+\mu k}.
\end{align}\normalsize

\chapter{Afterword: Gauge field as a system}
\label{epilogue}
\thispagestyle{fancy}

In Chapter \ref{chap:1},
we mentioned that 
the gauge field\index{gauge field}
was regarded as a system
(the wiggly line of Figure \ref{fig-pointlike}).
However,
we have not clearly explained this 
because 
quantum gates are usually designed 
to operate instanteneously
in quantum computing,
and accordingly,
the corresponding gauge fields
are effectively static.
In this case,
quantum gates are described 
by pointlike interactions\index{pointlike interaction}.
In other words,
the wiggly line shrinks as in Figure \ref{fig-pointlike}.

\begin{wrapfigure}[0]{r}[53mm]{49mm} 
\vspace{-40mm}
\centering
\includegraphics[keepaspectratio,width=23mm]{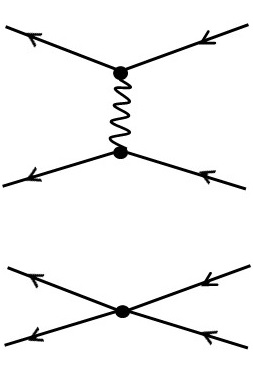}
\caption{
\small
(Upper)
Interaction mediated by a gauge field.
(Lower)
It can be approximated 
by a pointlike interaction
when 
the dynamics of the gauge field is negligible.
\normalsize
}
\label{fig-pointlike}
\end{wrapfigure}

This simplified model is useful as an approximation
unless
quantum gates operate on high-energy particles
or the length scale of the interactions is large
(effective field theory).\index{effective field theory}
In fact,
we have reconstructed existing theories of quantum computing and control 
using this model.
However,
it is not satisfactory in some sense 
because
considering the wiggly line explicitly 
is the essence of the gauge theory.
Here we briefly discuss this issue.

\begin{wrapfigure}[0]{r}[53mm]{49mm} 
\vspace{7mm}
\centering
\includegraphics[keepaspectratio,width=43mm]{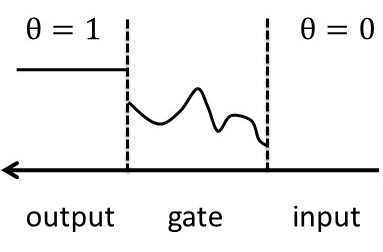}
\caption{
\small
The gauge function is $\win=0$ 
in the domain of the input
and 
$\win=1$
in the domain of the output.
In the domain of the gate,
we can choose other conditions.
\normalsize
}
\label{fig-lorenz}
\end{wrapfigure}

To define the propagator of the wiggly line,
we need to quantize the gauge field.
This is usually done 
by choosing a specific gauge (gauge fixing).\index{gauge fixing}
For example,
in the case of U(1) symmetry,
a standard technique of second quantization
can be applied to the gauge field by choosing 
the Feynman gauge\index{Feynman gauge} 
(or simply the Lorenz gauge).\index{Lorenz gauge}
However, 
this is not straightforward in our case
because
the gauge function serves as a weight function.\index{weight function}

As shown in Section \ref{subsec:passive},
a weight function is \en{ \win=0 } for the input
and \en{ \win=1 } for the output (Figure \ref{fig-lorenz}).
In a domain where the gate operates
(the middle area of Figure \ref{fig-lorenz}),
the input and the output are mixed 
and 
there are degrees of freedom for the choice of the gauge.
The weight function is not necessarily continuous or smooth.
If we choose the Lorenz gauge in this domain,
the quantization of the gauge field is well defined.

\begin{wrapfigure}[0]{r}[53mm]{49mm} 
\vspace{19mm}
\centering
\includegraphics[keepaspectratio,width=30mm]{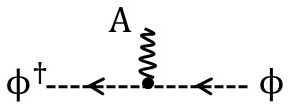}
\caption{
\small
Feynman diagram corresponding to (\ref{u1g-int}).
\normalsize
}
\label{fig-u1g}
\end{wrapfigure}

Let us perform a quick calculation 
for a forward traveling field \en{ \phi }.
Consider a Lagrangian 
\small\begin{align}
 \lag
&=
 \lag\uu{\phi}
+
 \lag\urm{int}
+
 \lag\urm{\gel}.
\end{align}\normalsize
For a U(1) symmetry,
the interaction Lagrangian \en{ \lag\urm{int}}
is written as
\small\begin{align}
 \lag\urm{int}
&=
 g \phi\dgg \gel \phi,
\label{u1g-int}
\end{align}\normalsize
where \en{\gel} is a gauge field and 
\en{ g\in\mathbb{R} } is a coupling constant.
The corresponding diagram is depicted in 
Figure \ref{fig-u1g}.

\newpage

\begin{wrapfigure}[0]{r}[53mm]{49mm} 
\vspace{-0mm}
\centering
\includegraphics[keepaspectratio,width=40mm]{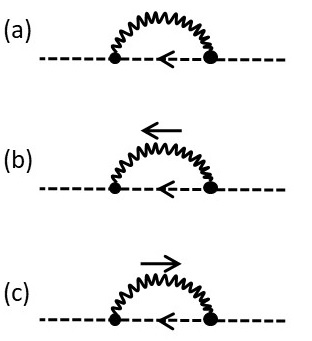}
\caption{
\small
(a) Lowest order correction.
The transfer function of the gauge field
is decomposed into (b) forward and (c) backward traveling parts.
\normalsize
}
\label{fig-u1g-1}
\end{wrapfigure}

We are interested in a transfer function \en{ \phi\gets \phi }
under \en{ \lag }.
The lowest order term 
is depicted in Figure \ref{fig-u1g-1}(a).
The dotted line is 
the forward traveling field \en{ \phi } in free space.
On the other hand,
the wiggly line is the gauge field \en{ \gel }
defined only in a finite interval (of a distance \en{l>0})
and hence described by discrete modes.
It is also represented as in Figure \ref{fig-u1g-1}(b) and (c)
where the wiggly line is decomposed to 
forward and backward components,
respectively.

These diagrams are similar to Section \ref{spin-spin-tf}
where
we have considered interactions between spin 1/2 and SU(2) systems.
If \en{ l \ll 1 },
we can actually calculate the corresponding self-energy in the same way.
According to Section \ref{sec:dfde},
free field transfer functions for \en{ \phi } and \en{ \gel } are,
respectively, 
written as
\begin{subequations}
\small\begin{align}
\hspace{25mm}
 \ipg\dd{\phi|\phi}(s)
&=
 \frac{1}{s + \im k\dd{\phi} + \epsilon},
\qquad
(k\dd{\phi}:\mbox{\normalsize constant \small})
\\
 \ipg\dd{\gel|\gel}(s)
&=
 \frac{1}{sl + \epsilon} 
- 
 \frac{1}{sl - \epsilon}.
\end{align}\normalsize
\end{subequations}
Then it is easy to see
that Figure \ref{fig-u1g-1}(b) and (c), 
respectively, 
correspond to 
\en{ \im \Theta^{(1)} } and \en{ \im \Theta^{(2)} } in (\ref{dd}).
As a result,
the self-energy of this process is given by
\small\begin{align}
 \im \Theta(s)
&=
 \frac{2g^2/l}{s + \im k\dd{\phi} + \epsilon}.
\end{align}\normalsize
The transfer function \en{ \phi\gets \phi } under \en{ \lag }
(inside the gate)
is expressed as  (\ref{su2-s-dyson}):
\small\begin{align}
 \ipg\dd{\phi|\phi}\uu{\lag}
&=
\Bigl[
 1
+
 \ipg\dd{\phi|\phi} \ \im \Theta
\Bigr]\inv 
 \ipg\dd{\phi|\phi}.
\end{align}\normalsize
If the coupling is very weak (\en{ g\ll 1}),
this can be simplified as
\small\begin{align}
 \ipg\dd{\phi|\phi}\uu{\lag}
&\sim
\frac{1}
     {s+\im k\dd{\phi} + 2g^2/l}.
\label{u1-dynamic}
\end{align}\normalsize

\begin{wrapfigure}[0]{r}[53mm]{49mm} 
\vspace{-43mm}
\centering
\includegraphics[keepaspectratio,width=37mm]{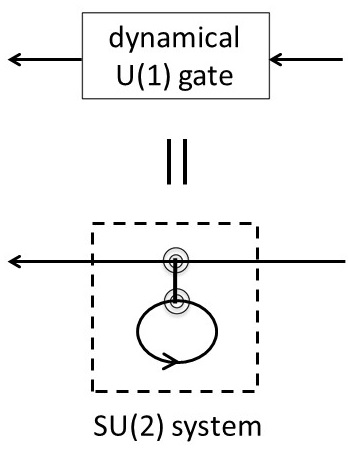}
\caption{
\small
A dynamical U(1) gate (upper)
and
an SU(2) system (lower).
These two are equivalent because
the dyncamics of the gauge field creates 
the same effect as the closed loop.
\normalsize
}
\label{fig-u1-su2}
\end{wrapfigure}

This is the same form as the SU(2) system
(Theorem \ref{thm:tfbs})
\small\begin{align}
\ipg\dd{\mas|\mas}\urm{SU(2)}
&=
\frac{1}
     {s + 2g^2}
=\dtf{-2g^2}{1}{1}{0},
\end{align}\normalsize
with a detuning parameter \en{k\dd{\phi}}. 
In Chapter \ref{chap:feedback},
the SU(2) system has been defined 
by forming a closed loop 
across a static SU(2) gate
(Figure \ref{fig-u1-su2}).
Hence (\ref{u1-dynamic})
indicates that 
the dynamics of the gauge field
creates 
the same effect as the closed loop,
and consequently,
the dynamical U(1) gate ends up being equivalent to 
the SU(2) system.


\chapter*{}

\begin{center}
\begin{Large}
\begin{bfseries}
\textsf{Appendix}
\end{bfseries}
\end{Large}
\end{center}

\appendix

\addcontentsline{toc}{part}{Appendix}

\setcounter{equation}{0} 
\renewcommand{\theequation}{\Alph{chapter}.\arabic{equation}}
\setcounter{figure}{0} 
\renewcommand{\thefigure}{\Alph{chapter}.\arabic{figure}}

\chapter{Chain-scattering representation}
\label{csr}
\thispagestyle{fancy}

A chain-scattering representation is commonly used in circuit theory
because it has the advantage of being able to 
express 
the feedback connection of 
multiple-input and multiple-output (MIMO) systems 
as a simple cascade connection.
In this appendix,
we introduce basic properties of this representation
both in the time domain and in the frequency domain.

\section{In the frequency domain}

\begin{wrapfigure}[0]{r}[53mm]{49mm}
\vspace{-5mm}
\centering
\includegraphics[keepaspectratio,width=45mm]{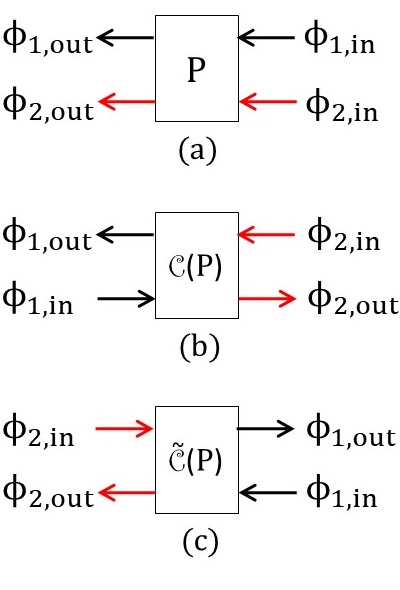}
\caption{
\small
(a) Multi-input and multi-output system.
(b) Chain-scattering representation.
(c) Dual chain-scattering representation.
\normalsize
}
\label{fig-chain1-1}
\end{wrapfigure}

Consider a system \en{ \tf } 
as shown in Figure \ref{fig-chain1-1}(a)
\small\begin{align}
 \AVl{\phi\drm{1,out}}{\phi\drm{2,out}}
=
 \tf
 \AVl{\phi\drm{1,in}}{\phi\drm{2,in}}
\equiv
 \A{\tf_{11}}{\tf_{12}}{\tf_{21}}{\tf_{22}}
 \AVl{\phi\drm{1,in}}{\phi\drm{2,in}}.
\label{sec4.1-1}
\end{align}\normalsize
Let us rearrange the inputs and the outputs
as in Figure \ref{fig-chain1-1}(b)
\small\begin{align}
 \AVl{\phi\drm{1,out}}{\phi\drm{1,in}}
= 
 \chain{\tf}
 \AVl{\phi\drm{2,in}}{\phi\drm{2,out}},
\label{sec4.1-3}
\end{align}\normalsize
where \en{ \chain{\tf} } is called a 
\textit{chain-scattering representation}\index{chain-scattering representation}.
To find the expression of \en{ \chain{\tf} },
assume that \en{ \tf_{21} } is invertible.
Then 
\small\begin{align}
 \phi\drm{1,in}& = \tf_{21}\inv (\phi\drm{2,out} - \tf_{22}\phi\drm{2,in}),
\label{csp-1}
\end{align}\normalsize
from which we have
\small\begin{align}
 \phi\drm{1,out}
&=
 (\tf_{12} - \tf_{11} \tf_{21}\inv \tf_{22})\phi\drm{2,in} 
+ 
 \tf_{11} \tf_{21}\inv \phi\drm{2,out}.
\label{csp-2}
\end{align}\normalsize
It follows from (\ref{csp-1}, \ref{csp-2}) that
\begin{subequations}
\label{sec4.1-2}
\small\begin{align}
\chain{\tf} 
&\equiv
\A{\tf_{12} - \tf_{11} \tf_{21}\inv  \tf_{22}}
  {\tf_{11} \tf_{21}\inv }
  {- \tf_{21}\inv  \tf_{22}}
  {\tf_{21}\inv \rule[0mm]{0mm}{5mm}}
\\ &=
 \A{\tf_{12}}{\tf_{11}}{0}{I}
 \A{I}{0}{\tf_{22}}{\tf_{21}}\inv 
\\ &=
 \A{I}{-\tf_{11}}{0}{\tf_{21}}\inv 
 \A{\tf_{12}}{0}{-\tf_{22}}{I},
\end{align}\normalsize
\end{subequations}
where we have used
\begin{subequations}
\small\begin{align}
 \A{I}{0}{\tf_{22}}{\tf_{21}}\inv 
&=
 \A{I}{0}
   {-\tf_{21}\inv  \tf_{22}}{\tf_{12}\inv },
\\
 \A{I}{-\tf_{11}}{0}{\tf_{21}}\inv 
&=
 \A{I}{\tf_{11}\tf_{21}\inv }
   {0}{\tf_{21}\inv }.
\end{align}\normalsize
\end{subequations}
Conversely,, 
given \en{ \chain{\tf} },
the system \en{ \tf } is expressed as
\begin{subequations}
\label{chaininv}
\small\begin{align}
 \tf&=
 \A{\chain{\tf}_{11}}{\chain{\tf}_{12}}{0}{I}
 \A{\chain{\tf}_{21}}{\chain{\tf}_{22}}{I}{0}\inv
\\ &=
 \A{I}{-\chain{\tf}_{12}}{0}{\chain{\tf}_{22}}\inv
 \A{0}{ \chain{\tf}_{11}}{I}{-\chain{\tf}_{21}}.
\end{align}\normalsize
\end{subequations}

\begin{wrapfigure}[0]{r}[53mm]{49mm} 
\vspace{40mm}
\centering
\includegraphics[keepaspectratio,width=42mm]{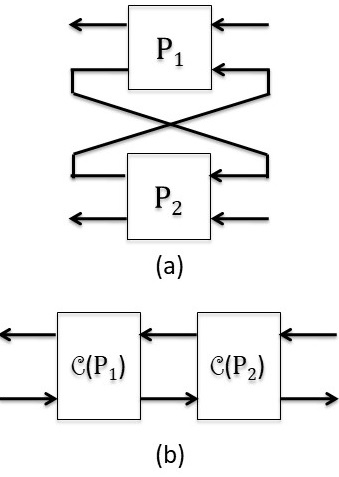}
\caption{
\small
(a) Feedback connection. 
(b) Corresponding chain-scattering representation.
\normalsize
}
\label{fig-chain2-1}
\end{wrapfigure}

Likewise, assuming that \en{ \tf_{12} } is invertible,
we get an alternative form 
shown in Figure \ref{fig-chain1-1}(c):
\small\begin{align}
 \AVl{\phi\drm{2,in}}{\phi\drm{2,out}}
&=
 \dchain{\tf}
 \AVl{\phi\drm{1,out}}{\phi\drm{1,in}},
\label{dualc}
\end{align}\normalsize
where \en{ \dchain{\tf} } is called 
a \textit{dual chain-scattering representation},\index{dual chain-scattering representation}
expressed as
\begin{subequations}
\label{dualcs}
\small\begin{align}
\dchain{\tf}
&\equiv
\A{\tf_{12}\inv }
  {-\tf_{12}\inv \tf_{11}}
  {\tf_{22}\tf_{12}\inv }
  {\tf_{21}-\tf_{22}\tf_{12}\inv \tf_{11}}
\\ &=
 \A{I}{0}{\tf_{22}}{\tf_{21}}
 \A{\tf_{12}}{\tf_{11}}{0}{I}\inv 
\\ &=
 \A{\tf_{12}}{0}{-\tf_{22}}{I}\inv 
 \A{I}{-\tf_{11}}{0}{\tf_{21}}.
\end{align}\normalsize
\end{subequations}
Conversely,
given \en{ \dchain{\tf} },
the system \en{ P } is expressed as
\begin{subequations}
\label{dchaininv}
\small\begin{align}
 P
&=
 \A{I}{0}{\dchain{\tf}_{21}}{\dchain{\tf}_{22}}
 \A{0}{I}{\dchain{\tf}_{11}}{\dchain{\tf}_{12}}\inv 
\\ &=
 \A{ \dchain{\tf}_{11}}{0}{-\dchain{\tf}_{21}}{I}\inv 
 \A{-\dchain{\tf}_{12}}{I}{ \dchain{\tf}_{22}}{0}.
\end{align}\normalsize
\end{subequations}

\begin{remark}
If neither \en{ \tf_{12} } nor \en{ \tf_{21} } is invertible,
the input(output) is extended with virtual inputs(outputs)
so that \en{ \tf_{12}(\tf_{21}) } becomes invertible.
Then a (dual) chain-scattering representation
is defined.
This procedure is called \textit{augmentation}.
\end{remark}

We have considered a feedback connection of two systems 
as shown in Figure \ref{fig-chain2-1}(a)
in Section \ref{sec:fc}.
In the chain-scattering representation,
it can be expressed 
as a simple cascade connection as in Figure \ref{fig-chain2-1}(b).
This property is quite useful.

\newpage

\section{In the time domain}

The state space realization of the chain-scattering representation
 can be readily obtained from a direct calculation.
Assume that the system \en{ \tf } has a state space realization 
\small\begin{align}
 \tf
=
    \left[  \begin{array}{cc}
          \tf_{11} & \tf_{12} \\
          \tf_{21} & \tf_{22} 
        \end{array} \right]
=
    \left[\begin{array}{c|cc}
          A   & B_1    & B_2 \\ \hline
          C_1 & D_{11} & D_{12} \\
          C_2 & D_{21} & D_{22}
    \end{array} \right].
\label{4.17}
\end{align}\normalsize
Then we have
\begin{subequations}
\small\begin{align}
  \chain{\tf}
=
  \left[  
\renewcommand{\arraystretch}{1.2}
  \begin{array}{c|cc}
  A-B_1 D_{21}\inv C_2 & B_2 -B_1 D_{21}\inv D_{22} & B_1 D_{21}\inv  \\
  \hline
  C_1 - D_{11}D_{21}\inv C_2 & D_{12}-D_{11}D_{21}\inv D_{22} &
  D_{11}D_{21}\inv  \\
 -D_{21}\inv C_2 & -D_{21}\inv D_{22} & D_{21}\inv 
 \end{array}     
\renewcommand{\arraystretch}{1}
 \right],
\\
  \dchain{\tf}
=
  \left[  
\renewcommand{\arraystretch}{1.2}
 \begin{array}{c|cc}
  A-B_2 D_{12}\inv C_1 & B_2 D_{12}\inv  & B_1 -B_2 D_{12}\inv D_{11} \\
  \hline
  -D_{12}\inv C_1 & D_{12}\inv  & -D_{12}\inv D_{11} \\
  C_2 - D_{22}D_{12}\inv C_1 & D_{22}D_{12}\inv  & D_{21}-D_{22}D_{12}\inv D_{11}
 \end{array}     
\renewcommand{\arraystretch}{1}
\right].
\end{align}\normalsize
\end{subequations}


\section{Feedback in the chain-scattering representation}

\begin{wrapfigure}[0]{r}[53mm]{49mm} 
\vspace{-0mm}
\centering
\includegraphics[keepaspectratio,width=45mm]{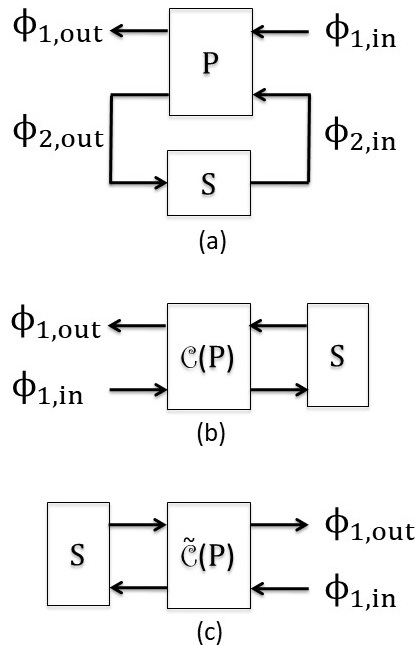}
\caption{
\small
(a) Feedback connection,
(b) Homographic transform,
(c) Dual homographic transform.
\normalsize
}
\label{fig-chain3-1}
\end{wrapfigure}

In this section, 
we consider a feedback process
in which \en{ \phi\drm{2,out} } is fed back to \en{ \phi\drm{2,in} } through another system \en{ S },
as shown in Figure \ref{fig-chain3-1}(a):
\small\begin{align}
 \phi\drm{2,in}&=S\phi\drm{2,out}.
\label{fbc}
\end{align}\normalsize
In the chain-scattering representation,
this corresponds to a homographic transformation.

\subsection{Homographic transformation}

Applying the feedback process (\ref{fbc}) to \en{ \tf },
we get the feedback transfer function as 
\begin{subequations}
\small\begin{align}
 \phi\drm{1,out}
&= 
 [\tf_{11} + \tf_{12} S (1-\tf_{22}S)\inv \tf_{21}]\phi\drm{1,in}
\\ &=
 [\tf_{11} + \tf_{12}(1-S \tf_{22})\inv S \tf_{21}]\phi\drm{1,in}.
\label{fb22}
\end{align}\normalsize
\end{subequations}
In the chain-scattering representation,
(\ref{fbc}) is written as
\begin{subequations}
\small\begin{align}
 \AV{\phi\drm{1,out}}{\phi\drm{1,in}}
&=
 \A{\chain{\tf}_{11}}{\chain{\tf}_{12}}
   {\chain{\tf}_{21}}{\chain{\tf}_{22}} \AV{\phi\drm{2,in}}{\phi\drm{2,out}}
 \\ 
&=
 \A{\chain{\tf}_{11}}{\chain{\tf}_{12}}
   {\chain{\tf}_{21}}{\chain{\tf}_{22}} 
 \AV{S}{1}\phi\drm{2,out}
\equiv
 \AV{Q}{R} \phi\drm{2,out}.
\end{align}\normalsize
\end{subequations}
Eliminating \en{ \phi\drm{2,out} } yields
\small\begin{align}
 \phi\drm{1,out}
&= 
 \homo(\chain{\tf};S) \phi\drm{1,in},
\end{align}\normalsize
where 
\small\begin{align}
 \homo(\chain{\tf};S) 
\equiv
 QR\inv
\label{homog}
\end{align}\normalsize
is called a \textit{homographic transformation}.\index{homographic transformation}

\newpage

Likewise, 
the feedback transfer function is obtained 
in the dual chain-scattering representation as
\small\begin{align}
 \phi\drm{1,out}
&= 
 \dhomo(\dchain{\tf};S) \phi\drm{1,in},
\end{align}\normalsize
where
a \textit{dual homographic transformation}\index{dual homographic transformation}
\en{ \dhomo(\dchain{\tf};S) }
is defined as
\small\begin{align}
 \dhomo(\dchain{\tf};S) 
&= -Q\inv R
\label{homo}
\end{align}\normalsize
with
\small\begin{align}
 \AH{Q}{R}= \AH{I}{-S} \dchain{\tf}.
\end{align}\normalsize
The two transformations are related to each other as
\begin{subequations}
\small\begin{align}
  \phi\drm{1,out} 
&= 
 \homo(\chain{\tf};S) \phi\drm{1,in}
\\ &=
 \dhomo(\dchain{\tf};S) \phi\drm{1,in}.
  \label{4.76}
\end{align}\normalsize
\end{subequations}

\begin{wrapfigure}[0]{r}[53mm]{49mm} 
\vspace{-6mm}
\centering
\includegraphics[keepaspectratio,width=48mm]{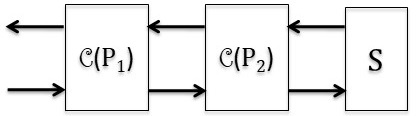}
\caption{
\small
Cascade of feedback connections in 
the chain-scattering representation.
\normalsize
}
\label{fig-chain4}
\end{wrapfigure}

Again, 
the advantage of the homographic transformation is that 
the feedback connection can be visually simplified.
as shown in Figure \ref{fig-chain4}.
In this case, 
the concatenation of two feedback connections
is written as a simple cascade:
\small\begin{align}
 \homo\bigl( \chain{\tf_1}; \homo(\chain{\tf_2};S)\bigr)
&=
 \homo\bigl(\chain{\tf_1} \chain{\tf_2};S\bigr).
\label{lem:4.13.3} 
\end{align}\normalsize

\subsection{State space form of homographic transformation}

The state space realization of the homographic transformation is also 
obtained from a direct calculation.
Assume that 
\en{ \chain{\tf} } and \en{ S } are, respectively, given by
\begin{subequations}
\small\begin{align}
   \chain{\tf}
&=
    \left[  
\renewcommand{\arraystretch}{1.2}
 \begin{array}{c|cc}
            A & B_1 & B_2 \\ \hline
            C_1 & D_{11} & D_{12} \\
            C_2 & D_{21} & D_{22}
        \end{array}
\renewcommand{\arraystretch}{1}
     \right],
\qquad
 S 
= 
 \dtf{A\dd{s}}{B\dd{s}}{C\dd{s}}{D\dd{s}}.
\end{align}\normalsize
\end{subequations}
If \en{ D_2 \equiv D_{21}D\dd{s} +D_{22} } is invertible,
we get
\small\begin{align}
\label{4.83}
    \homo(\chain{\tf};S) =
     \left[  
\renewcommand{\arraystretch}{1.2}
 \begin{array}{c|c}
 \A{A}{B_1 C\dd{s}}
   {0}{A\dd{s}}
- 
 \AV{\hat{B}}
    {B\dd{s}}
 D_{2}\inv 
 \AH{C_2}{D_{21}C\dd{s}}
&        
 \AV{\hat{B}}
    {B\dd{s}}
 D_{2}\inv  
\\ \hline
 C_1-D\dd{c} C_2 \hspace{13mm} (D_{11}-D\dd{c} D_{21})C\dd{s}
&
 D_1 D_{2}\inv 
\end{array}  
\renewcommand{\arraystretch}{1}
 \right],
\end{align}\normalsize
where
\small\begin{align}
 \BV{\hat{B}}{D_1}{D_2}
&\equiv
 \left[
 \begin{array}{cc}
  B_1& B_2 \\
  D_{11}& D_{12}\\
  D_{21}& D_{22}
 \end{array}
\right]
 \AV{D\dd{s}}{I}.
\end{align}\normalsize

\chapter{Pole-zero symmetry for interconnections}
\label{sec:connection}
\thispagestyle{fancy}

In this appendix,
we show that 
when two systems possess the pole-zero symmetry,
the feedforward and feedback connections of them
also possess the same symmetry.
We first consider the cascade connection.

\begin{theorem}
 \label{thm2}
For \en{ \tf\dd{\alpha} \ (\alpha=1,2) } such that
\en{ \tf\dd{\alpha}\Pi \tf\dd{\alpha}\simm = \Pi },
a cascade system \en{ \tf=\tf_1 \tf_2 }
satisfies \en{ \tf\Pi \tf\simm = \Pi }.
\end{theorem}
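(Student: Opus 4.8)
The plan is to reduce the claim to the single‑system hypothesis by exploiting the fact that the operation $\simm$ is anti‑multiplicative with respect to composition, exactly as an ordinary transpose is. First I would record the elementary identity that, for any two conformable transfer matrices $A$ and $B$, the definition $\tf\simm(s)=\tf\trans(-s)$ of (\ref{simm}) gives
\begin{align}
 (AB)\simm(s)
=
 \bigl(A(s)B(s)\bigr)\trans(-s)
=
 B\trans(-s)\,A\trans(-s)
=
 B\simm(s)\,A\simm(s),
\end{align}
so that $\simm$ reverses the order of products.

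Next I would simply compute. Writing $\tf=\tf_1\tf_2$ and using the identity above,
\begin{align}
 \tf\,\Pi\,\tf\simm
&=
 \tf_1\tf_2\,\Pi\,(\tf_1\tf_2)\simm
=
 \tf_1\tf_2\,\Pi\,\tf_2\simm\,\tf_1\simm
\nn\\ &=
 \tf_1\bigl(\tf_2\,\Pi\,\tf_2\simm\bigr)\tf_1\simm
=
 \tf_1\,\Pi\,\tf_1\simm
=
 \Pi,
\end{align}
where the hypothesis $\tf_2\,\Pi\,\tf_2\simm=\Pi$ is used on the middle three factors and then $\tf_1\,\Pi\,\tf_1\simm=\Pi$ on the remaining product. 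This is the assertion.

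There is essentially no obstacle here beyond bookkeeping; the one point that deserves care is that $(AB)\simm=B\simm A\simm$ must be read at the level of the frequency‑domain representations, so the reflection $s\mapsto-s$ is performed once, globally, rather than factor by factor. I would also note that the computation never transposes or conjugates $\Pi$, so the result is insensitive to whether $\Pi=\A{}{I}{+I}{}$ (fermion) or $\Pi=\A{}{I}{-I}{}$ (boson), and that, by the pole‑zero symmetry theorem of Section \ref{sec:symmetric}, $\Pi$‑orthogonality is equivalent to $\pole=-\zero$, so the cascade $\tf_1\tf_2$ again exhibits the pole‑zero symmetry. The same scheme will cover the feedforward and feedback interconnections: one writes the interconnection transfer matrix through the circuit formulas of Section \ref{sec:circuits}, and again moves $\Pi$ through one $\Pi$‑orthogonal factor at a time, using in addition $\Pi\Pi\trans=I$ to handle inverses via $\Pi\,\tf\simm\,\Pi\trans=\tf\inv$ as in (\ref{gegei}).
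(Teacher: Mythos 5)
Your proof is correct and is essentially identical to the paper's own argument: the paper likewise writes $\tf\Pi\tf\simm=\tf_1\tf_2\Pi\tf_2\simm\tf_1\simm$ and applies the hypothesis to $\tf_2$ and then to $\tf_1$, with the anti-multiplicativity of $\simm$ left implicit. Your added remarks about the single global reflection $s\mapsto-s$ and independence of the sign in $\Pi$ are sound but not needed beyond what the paper states.
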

\begin{proof}
By definition, we have
\small\begin{align}
 \tf \Pi \tf\simm 
&=
 \tf_1 \tf_2 \Pi \tf_2\simm \tf_1\simm
=
 \tf_1 \Pi \tf_1\simm
= \Pi.
\end{align}\normalsize
This establishes the assertion.
\end{proof}

It is not straightforward to show it for the feedback connection.
However, 
it can be simplified 
in the (dual) chain-scattering 
representation\index{chain-scattering representation}\index{chain-scattering representation}
because 
the feedback connection can be expressed as a cascade connection.

\begin{lemma}
\label{lem:cs}
In the dual chain-scattering representation, 
\en{ \tf \Pi \tf\simm=\Pi } is expressed as
\small\begin{align}
 \dchain{\tf} J \Pi \dchain{\tf}\simm =&J \Pi,
\end{align}\normalsize
where
\small\begin{align}
 J&=
\A{I}{}{}{-I}.
\end{align}\normalsize
\end{lemma}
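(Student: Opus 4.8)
\textbf{Proof proposal for Lemma \ref{lem:cs}.}

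The plan is to translate the $\Pi$-orthogonality relation $\tf\Pi\tf\simm=\Pi$ through the algebraic definition of the dual chain-scattering representation given in (\ref{dualcs}), and verify that it becomes $\dchain{\tf}J\Pi\dchain{\tf}\simm=J\Pi$. The key observation is that $\dchain{\tf}$ is obtained from $\tf$ by a \emph{rearrangement} of inputs and outputs (moving $\phi\drm{1,\mathrm{in}},\phi\drm{1,\mathrm{out}}$ to one side), not by a genuine transformation, so the underlying bilinear form is preserved up to a signature change encoded by $J$. First I would recall from (\ref{dualcs}) the identity
\begin{align}
\dchain{\tf}
\A{\tf_{12}}{\tf_{11}}{0}{I}
=
\A{I}{0}{\tf_{22}}{\tf_{21}},
\end{align}
which I will abbreviate as $\dchain{\tf}\,M_1 = M_2$, where $M_1,M_2$ are the two block matrices built from the entries of $\tf$. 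A parallel identity $\dchain{\tf}\simm M_1\simm = M_2\simm$ follows by applying the $\simm$ operation (which reverses order and negates the argument, hence behaves contravariantly on products).

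Next I would re-express the hypothesis $\tf\Pi\tf\simm=\Pi$ in terms of the $2\times 2$ blocks $\tf_{ij}$: writing $\Pi=\A{}{I}{\pm I}{}$ in the canonical case (or whatever constant form (\ref{geguu}) dictates), this is a set of four block equations relating $\tf_{11},\tf_{12},\tf_{21},\tf_{22}$. The goal is to show these are equivalent, after the rearrangement, to the statement that $M_2 (\text{something}) M_2\simm$ equals $M_1(\text{same something})M_1\simm$, from which $\dchain{\tf}(J\Pi)\dchain{\tf}\simm = J\Pi$ drops out by cancelling $M_1$ from both sides using its invertibility (guaranteed by the assumption that $\tf_{12}$ is invertible, the standing hypothesis for the dual chain-scattering form; otherwise one augments as in the Remark after (\ref{dchaininv})). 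Concretely, I expect to verify directly that
\begin{align}
M_1 (J\Pi) M_1\simm = M_2 (J\Pi) M_2\simm
\end{align}
is \emph{equivalent} to $\tf\Pi\tf\simm=\Pi$, by expanding both sides blockwise — both reduce to the same four relations among the $\tf_{ij}$. Then $\dchain{\tf}M_1(J\Pi)M_1\simm\dchain{\tf}\simm = M_2(J\Pi)M_2\simm$ combined with the boxed identity gives $\dchain{\tf}(J\Pi)\dchain{\tf}\simm$ on the left once the $M_1$'s are peeled off, completing the proof.

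The main obstacle will be bookkeeping: getting the signature matrix $J$ to appear in exactly the right place. The point is that when we swap $\phi\drm{1,\mathrm{out}}$ and $\phi\drm{1,\mathrm{in}}$ into the ``input'' slot of $\dchain{\tf}$, the quadratic form $[\,\cdot\,,\,\cdot\,]_\pm$ that was diagonal-off-diagonal in $\Pi$ picks up a relative sign between the two groups of variables, and this is precisely $J$; making that assertion rigorous requires carefully tracking how $\Pi$ transforms under the permutation-with-sign that relates (\ref{sec4.1-1}) to (\ref{dualc}). I would handle this by checking the smallest nontrivial case (the $1$-in/$1$-out augmented system, i.e. $2\times 2$ blocks) explicitly and then noting that the block structure is uniform, so the general case follows by the same computation carried out symbolically. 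I do not anticipate any analytic difficulty — everything is finite-dimensional linear algebra with the $\simm$ operation treated formally as in (\ref{simm}).
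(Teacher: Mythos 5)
There is a genuine gap, and it sits in the last step of your plan. From the identity $\dchain{\tf}M_1=M_2$ (with $M_1=\A{\tf_{12}}{\tf_{11}}{0}{I}$, $M_2=\A{I}{0}{\tf_{22}}{\tf_{21}}$) you get, identically, $\dchain{\tf}\bigl[M_1(J\Pi)M_1\simm\bigr]\dchain{\tf}\simm=M_2(J\Pi)M_2\simm$; if in addition $M_1(J\Pi)M_1\simm=M_2(J\Pi)M_2\simm=:X$, the only conclusion is $\dchain{\tf}\,X\,\dchain{\tf}\simm=X$. Inner factors of a congruence cannot be ``peeled off,'' so this does not give $\dchain{\tf}(J\Pi)\dchain{\tf}\simm=J\Pi$ unless $X=J\Pi$, which fails in general. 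Worse, the intermediate equivalence you propose to verify is itself false: with $J\Pi=\A{\Pi}{0}{0}{-\Pi}$ a direct expansion gives
\begin{align}
 M_1(J\Pi)M_1\simm
=
 \A{\tf_{12}\Pi\tf_{12}\simm-\tf_{11}\Pi\tf_{11}\simm}{-\tf_{11}\Pi}
   {-\Pi\tf_{11}\simm}{-\Pi},
\qquad
 M_2(J\Pi)M_2\simm
=
 \A{\Pi}{\Pi\tf_{22}\simm}
   {\tf_{22}\Pi}{\tf_{22}\Pi\tf_{22}\simm-\tf_{21}\Pi\tf_{21}\simm},
\end{align}
so equality would require $\tf_{12}\Pi\tf_{12}\simm-\tf_{11}\Pi\tf_{11}\simm=\Pi$ and $\tf_{11}\Pi+\Pi\tf_{22}\simm=0$, neither of which is among the block relations of $\tf\Pi\tf\simm=\Pi$ (those read $\tf_{11}\Pi\tf_{11}\simm+\tf_{12}\Pi\tf_{12}\simm=\Pi$, $\tf_{11}\Pi\tf_{21}\simm+\tf_{12}\Pi\tf_{22}\simm=0$, etc.). You have taken the congruence in the wrong orientation.

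The factored-form idea can be repaired. Since $\dchain{\tf}=M_2M_1\inv$, the target $\dchain{\tf}(J\Pi)\dchain{\tf}\simm=J\Pi$ is equivalent to $M_1\inv(J\Pi)(M_1\simm)\inv=M_2\inv(J\Pi)(M_2\simm)\inv$, i.e.\ to $M_1\simm(J\Pi)\inv M_1=M_2\simm(J\Pi)\inv M_2$; expanding \emph{this} blockwise reproduces exactly the block relations of $\tf\simm\Pi\inv\tf=\Pi\inv$, which is equivalent to $\tf\Pi\tf\simm=\Pi$ because $\Pi$ is nonsingular (hence so is $\tf$). Alternatively, follow the paper's route, which avoids inverses on the form: use the inverse factorization (\ref{dchaininv}), $\tf=N\inv M$ with $N=\A{\dchain{\tf}_{11}}{0}{-\dchain{\tf}_{21}}{I}$ and $M=\A{-\dchain{\tf}_{12}}{I}{\dchain{\tf}_{22}}{0}$ built from the blocks of $\dchain{\tf}$; then $\tf\Pi\tf\simm=\Pi$ becomes $M\Pi M\simm=N\Pi N\simm$, whose blockwise form has vanishing off-diagonal entries, and flipping their sign reassembles the four equations into $\dchain{\tf}\A{\Pi}{}{}{-\Pi}\dchain{\tf}\simm=\A{\Pi}{}{}{-\Pi}$. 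Either way the bookkeeping is elementary, but the congruence orientation (or, equivalently, passing to the dual relation $\tf\simm\Pi\inv\tf=\Pi\inv$) must be fixed before it goes through.
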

\begin{proof}
Recall that the inverse relation of the dual chain-scattering
 representation has been given in (\ref{dchaininv}):
\small\begin{align}
 \tf &= N\inv M,
\end{align}\normalsize
where
\small\begin{align}
 N &= 
 \A{\dchain{\tf}_{11}}{0}{-\dchain{\tf}_{21}}{1},
\qquad
 M =  
 \A{-\dchain{\tf}_{12}}{1}{\dchain{\tf}_{22}}{0}.
\end{align}\normalsize
It follows from this expression that 
\small\begin{align}
 \tf \Pi \tf\simm=\Pi
\hspace{5mm} \Rightarrow \hspace{5mm}
 M \Pi M\simm 
=  
 N \Pi N\simm,
\end{align}\normalsize
which is rewritten as
\small\begin{align}
& 
\renewcommand{\arraystretch}{1.3}
\A{\dchain{\tf}_{11}\Pi \dchain{\tf}_{11}\simm
   -\dchain{\tf}_{12}\Pi \dchain{\tf}_{12}\simm }
   {\dchain{\tf}_{12}\Pi \dchain{\tf}_{22}\simm
   -\dchain{\tf}_{11}\Pi \dchain{\tf}_{21}\simm }
   {\dchain{\tf}_{22}\Pi \dchain{\tf}_{12}\simm
   -\dchain{\tf}_{21}\Pi \dchain{\tf}_{11}\simm }
   {\dchain{\tf}_{21}\Pi \dchain{\tf}_{21}\simm
   -\dchain{\tf}_{22}\Pi \dchain{\tf}_{12}\simm }
\renewcommand{\arraystretch}{1}
=
 \A{\Pi}{}{}{-\Pi}.
\end{align}\normalsize
Note that the off-diagonal elements are zero.
Changing their sign, 
we get
\small\begin{align}
\A{\dchain{\tf}_{11}}
  {\dchain{\tf}_{12}}
  {\dchain{\tf}_{21}}
  {\dchain{\tf}_{22}}
\A{\Pi}{}{}{-\Pi}
\A{\dchain{\tf}_{11}}
  {\dchain{\tf}_{12}}
  {\dchain{\tf}_{21}}
  {\dchain{\tf}_{22}}\simm
&=
 \A{\Pi}{}{}{-\Pi},
\end{align}\normalsize
which establishes the assertion.
\end{proof}

\begin{wrapfigure}[0]{r}[53mm]{49mm}
\vspace{-10mm}
\centering
\includegraphics[keepaspectratio,width=30mm]{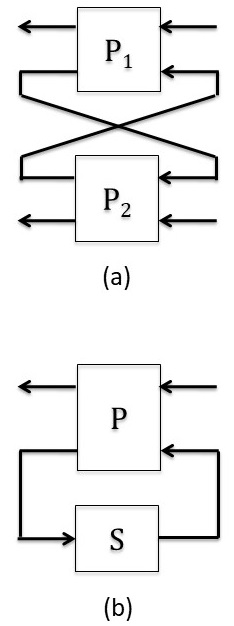}
\caption{
\small
Feedback connections.
\normalsize
}
\label{fig-chain5-1}
\end{wrapfigure}

Now we show the pole-zero symmetry for the feedback connections.

\begin{theorem}
Assume that \en{ \tf\dd{\alpha} \ (\alpha=1,2) } possess
the pole-zero symmetry.
Then the feedback connection of Figure \ref{fig-chain5-1}(a)
also possesses the same symmetry.
\end{theorem}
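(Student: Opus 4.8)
The plan is to deduce the feedback case from the cascade case (Theorem~\ref{thm2}) by moving to the dual chain-scattering representation of Appendix~\ref{csr}, in which a feedback interconnection turns into a plain cascade.

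First I would rewrite the feedback connection of \en{ \tf_1 } and \en{ \tf_2 } in Figure~\ref{fig-chain5-1}(a) as a (possibly iterated) homographic transformation; the composition law (\ref{lem:4.13.3}) then shows that its dual chain-scattering representation is the bare product \en{ \dchain{\tf} = \dchain{\tf_1}\,\dchain{\tf_2} }, after augmenting the ports if \en{ \tf_{12} } or \en{ \tf_{21} } happens not to be invertible. Next I would apply Lemma~\ref{lem:cs}: the hypothesis \en{ \tf_\alpha\Pi \tf_\alpha\simm = \Pi } for \en{ \alpha=1,2 } is equivalent to \en{ \dchain{\tf_\alpha}\,(J\Pi)\,\dchain{\tf_\alpha}\simm = J\Pi }, with \en{ J } as in Lemma~\ref{lem:cs}. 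Since \en{ \simm } reverses the order of a product, the very computation used in Theorem~\ref{thm2}, now carried out with \en{ J\Pi } in place of \en{ \Pi }, gives \en{ \dchain{\tf}\,(J\Pi)\,\dchain{\tf}\simm = \dchain{\tf_1}\,(J\Pi)\,\dchain{\tf_1}\simm = J\Pi }. Reading Lemma~\ref{lem:cs} backwards then yields \en{ \tf\Pi \tf\simm = \Pi }, and finally the pole-zero symmetry theorem of Section~\ref{sec:symmetric} (Chapter~\ref{chap:sym}) gives \en{ \pole(\tf) = -\zero(\tf) }, which is the claim.

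The hard part will be the first step rather than the algebra: I must check that the exact wiring in Figure~\ref{fig-chain5-1}(a) collapses to the undistorted product \en{ \dchain{\tf_1}\dchain{\tf_2} } — with no residual feedback term — and that the augmentation used when \en{ \tf_{12} } or \en{ \tf_{21} } is singular preserves the canonical (symplectic) ordering of the external input and output, since it is that ordering which makes \en{ \Pi } the correct commutation/anticommutation matrix entering Lemma~\ref{lem:cs}. I would also verify well-posedness of the loop (invertibility of the relevant \en{ I-\tf_{22}S }) and minimality of \en{ \tf }, so that the pole-zero symmetry theorem is genuinely applicable; once these bookkeeping items are in place, the \en{ (J\Pi) }-orthogonality of the cascade follows mechanically exactly as in Theorem~\ref{thm2}.
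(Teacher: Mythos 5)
Your proposal is correct and follows essentially the same route as the paper: the paper likewise expresses Figure \ref{fig-chain5-1}(a) as the cascade \en{\dchain{\tf_1}\dchain{\tf_2}} in the dual chain-scattering representation, converts the hypotheses via Lemma \ref{lem:cs} into \en{J\Pi}-orthogonality, and concludes by the same product computation as in Theorem \ref{thm2}. The extra bookkeeping you flag (augmentation, well-posedness, minimality) is not treated explicitly in the paper's proof but does not change the argument.
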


\begin{proof}
Figure \ref{fig-chain5-1}(a) is expressed as
\en{ \dchain{\tf_1}\dchain{\tf_2} }
in the dual chain-scattering representation.
Each system satisfies
\small\begin{align}
 \dchain{\tf\dd{\alpha}} J\Pi \dchain{\tf\dd{\alpha}}\simm 
&= 
 J\Pi,
\end{align}\normalsize
from which it follows that 
\small\begin{align}
 [\dchain{\tf_1} \dchain{\tf_2}] J \Pi  
 [\dchain{\tf_1} \dchain{\tf_2}]\simm
&= 
 \dchain{\tf_1} \dchain{\tf_2}   J \Pi  
 \dchain{\tf_2}\simm \dchain{\tf_1}\simm
=
 J \Pi.
\end{align}\normalsize
This establishes the assertion.
\end{proof}

\begin{theorem}
\label{thm:fbc}
Assume that \en{ \tf } and \en{ S } possess the pole-zero symmetry.
Then the feedback connection of Figure \ref{fig-chain5-1}(b)
also possesses the same symmetry.
\end{theorem}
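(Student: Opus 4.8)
The plan is to mimic the proof of the preceding theorem (the feedback connection of Figure \ref{fig-chain5-1}(a)) by passing to the dual chain-scattering representation, in which the feedback loop becomes a plain cascade, and then invoking the multiplicativity of $\Pi$-orthogonality. Concretely, I would first recall from Appendix \ref{csr} that the feedback connection of Figure \ref{fig-chain5-1}(b) is the homographic transformation $\homo(\chain{\tf};S)$ obtained from $\phi\drm{2,in}=S\phi\drm{2,out}$, and that, via the composition law (\ref{lem:4.13.3}) together with the correspondence between homographic transformations and products of (dual) chain-scattering representations, the closed-loop transfer function $\tf\drm{fb}$ can be written as a cascade $\dchain{\tf}\,\dchain{S}$. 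If some of the relevant blocks of $\tf$ or $S$ fail to be invertible, I would first augment the signal vectors with virtual ports so that the (dual) chain-scattering representations are defined; augmentation only enlarges $\Pi$ by trivial canonical blocks, so $\tf\Pi\tf\simm=\Pi$ and $S\Pi S\simm=\Pi$ are preserved.

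Next, by Lemma \ref{lem:cs} the hypothesis that $\tf$ possesses the pole-zero symmetry, i.e.\ $\tf\Pi\tf\simm=\Pi$, is equivalent to $\dchain{\tf}\,(J\Pi)\,\dchain{\tf}\simm=J\Pi$ with $J=\A{I}{}{}{-I}$, and likewise $S\Pi S\simm=\Pi$ translates to $\dchain{S}\,(J\Pi)\,\dchain{S}\simm=J\Pi$. The cascade then inherits the relation exactly as in Theorem \ref{thm2} and in the proof for Figure \ref{fig-chain5-1}(a):
\begin{align*}
 (\dchain{\tf}\dchain{S})\,(J\Pi)\,(\dchain{\tf}\dchain{S})\simm
&= \dchain{\tf}\bigl(\dchain{S}\,(J\Pi)\,\dchain{S}\simm\bigr)\dchain{\tf}\simm \\
&= \dchain{\tf}\,(J\Pi)\,\dchain{\tf}\simm = J\Pi.
\end{align*}
Translating back through Lemma \ref{lem:cs} gives $\tf\drm{fb}\,\Pi\,\tf\drm{fb}\simm=\Pi$, and the characterization of Chapter \ref{chap:sym} ($\Pi$-orthogonality implies $\pole=-\zero$) yields the pole-zero symmetry of the feedback connection.

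The main obstacle I expect is not the algebra but the port bookkeeping: one must verify that $\tf$ and $S$ can be brought into compatible dual chain-scattering form with a \emph{common} matrix $J\Pi$ — so that the closed loop is literally the product $\dchain{\tf}\dchain{S}$ of two $J\Pi$-orthogonal matrices — and that the augmentation used to achieve invertibility of the relevant blocks neither changes the conjugate-variable pairing nor spoils $\tf\Pi\tf\simm=\Pi$ or $S\Pi S\simm=\Pi$. Once this structural identification is in place, the remainder is the same two-line multiplicativity argument already used for Figure \ref{fig-chain5-1}(a).
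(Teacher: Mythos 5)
Your overall strategy --- pass to the dual chain-scattering representation and invoke Lemma \ref{lem:cs} --- is the right setting, but the structural claim your argument rests on is false. The feedback of Figure \ref{fig-chain5-1}(b) is not a cascade of two chain elements: here \en{ S } \emph{terminates} the ports via \en{ \phi\drm{2,in}=S\phi\drm{2,out} }, so the closed loop is the dual homographic transformation \en{ \dhomo(\dchain{\tf};S)=-Q\inv R } with \en{ \AH{Q}{R}=\AH{I}{-S}\dchain{\tf} }, a map acting only on the \en{ \phi_1 } ports. There is no object \en{ \dchain{S} } to cascade with: \en{ S } is a one-sided load, not a two-port chain element, and dimensionally the closed-loop transfer function has half the size of any product \en{ \dchain{\tf}\,\dchain{S} }. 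The composition law (\ref{lem:4.13.3}) only composes chain elements \emph{inside} a homographic transformation; it cannot convert the termination itself into a cascade, and augmentation does not help, since after any augmentation you are still left with a termination to handle. Hence the two-line multiplicativity argument that settles Figure \ref{fig-chain5-1}(a) (and Theorem \ref{thm2}) does not apply to case (b).

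The missing idea is the step that converts the termination into a \en{ \Pi }-orthogonality statement, and this is where the paper's proof differs from your outline. Since \en{ S\Pi S\simm=\Pi }, one has the annihilation identity \en{ \AH{I}{-S}\, J\Pi \,\AV{I}{-S\simm} = \Pi - S\Pi S\simm = 0 }; sandwiching \en{ \dchain{\tf}\, J\Pi\, \dchain{\tf}\simm = J\Pi } (Lemma \ref{lem:cs}) between \en{ \AH{I}{-S} } and its \en{ \simm }-transpose then yields \en{ \AH{Q}{R} J\Pi \AV{Q\simm}{R\simm} = Q\Pi Q\simm - R\Pi R\simm = 0 }, and therefore \en{ (-Q\inv R)\,\Pi\,(-Q\inv R)\simm = Q\inv \bigl(Q\Pi Q\simm\bigr)\left(Q\simm\right)\inv = \Pi }, which by the results of Chapter \ref{chap:sym} gives \en{ \pole=-\zero } for the closed loop. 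Replacing your cascade factorization by this sandwich argument is what is needed; the remaining ingredients of your proposal (Lemma \ref{lem:cs} and the final appeal to \en{ \Pi }-orthogonality implying pole-zero symmetry) are fine.
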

\begin{proof}
In the dual chain-scattering representation,
Figure \ref{fig-chain5-1}(b) is expressed as
\small\begin{align}
 \dhomo(\dchain{\tf};S)
&=
 -Q\inv R,
\end{align}\normalsize
where
\small\begin{align}
 \AH{Q}{R}
&=
 \AH{I}{-S}\dchain{\tf}.
\end{align}\normalsize
Since \en{ P } and \en{ S } satisfy
\begin{subequations}
\small\begin{align}
 \dchain{\tf} J\Pi \dchain{\tf}\simm &= J\Pi,
\qquad
 S\Pi S\simm = \Pi,
\end{align}\normalsize
\end{subequations}
we have 
\small\begin{align}
 0=&\AH{I}{-S} J\Pi \AV{I}{-S\simm}
\nonumber \\ =&
 \AH{I}{-S} \dchain{\tf} J\Pi
 \dchain{\tf}\simm \AV{I}{-S\simm}
\nonumber \\ =&
 \AH{Q}{R} J \Pi \AV{Q\simm }{R\simm}
\nonumber \\ =&
 Q\Pi Q\simm - R\Pi R\simm.
\end{align}\normalsize
It follows from (\ref{homo}) that 
\small\begin{align}
 \dhomo(\dchain{\tf};S) \ \Pi \ 
 \dhomo(\dchain{\tf};S)\simm
= \Pi,
\end{align}\normalsize
which completes the assertion.
\end{proof}

\chapter{Quantum stochastic differential equation}
\label{app:qsde}
\thispagestyle{fancy}

For the SU(2) system,
the input-output relation has been considered 
with the Euler-Lagrange equation in Section \ref{sec:sysfor} 
and with \textit{S}-matrices in Section \ref{sec:su2system}.
The same input-output relation can be obtained in another way: 
a quantum stochastic differential equation.
This method is often used in quantum optics
because coherent states are 
approximately described as quantum Wiener processes.
The stochastic approach is less powerful for some problems,
especially when feedback is involved,
but 
it has the advantage of being able to describe a system under measurement
using conditional expectations such as a Kalman filter.
Here we briefly review quantum It\^o calculus and 
show the input-output relation of the SU(2) system.

\section{Quantum Wiener process and It\^o calculus}

Consider a forward traveling field \en{ \phi(t,z) }.
As shown in Section \ref{sec:fpfbasics}, it satisfies
\small\begin{align}
 [\phi(t,z),\phi\dgg(t,z')]
&=
 \delta(z-z').
\end{align}\normalsize
For massless particles, the commutation relation is always rewritten as
\small\begin{align}
 [\phi(t,z),\phi\dgg(t',z)]
&=
 \delta(t-t'),
\label{stocomm}
\end{align}\normalsize
hence we omit the spatial parameter \en{ z } in what follows.
For example, a vacuum state is expressed as
\small\begin{align}
 \phi(t,z)\ket{0}
=
 \phi(t)\ket{0}
&=
 0.
\end{align}\normalsize

Let us define an operator \en{ \Phi(t) } as
\small\begin{align}
 \Phi(t)
&\equiv
 \int\dd{0}\uu{t}d\tau \, \phi(\tau).
\end{align}\normalsize
It follows from (\ref{stocomm}) that 
\small\begin{align}
 [d\Phi(t),d\Phi\dgg(t)]
&=
 \int\dd{t}\uu{t+dt} \hspace{-5mm} d\tau 
 \int\dd{t}\uu{t+dt} \hspace{-5mm} d\sigma
 \ \delta(\tau-\sigma)
=
 dt.
\label{phicom}
\end{align}\normalsize
Usually the properties of fields
are examined through a vacuum expectation.
For example, the Feynman propagator is defined by 
the vacuum expectation of two field operators 
at different points in spacetime.
Keeping this in mind,
we express (\ref{phicom}) as
\begin{subequations}
\label{itope}
\small\begin{align}
 d\Phi(t) d\Phi\dgg(t)
&=
 dt,
\label{itope-1} \\
 d\Phi\dgg(t) d\Phi(t)
&=
 0,
\label{itope-2}
\end{align}\normalsize
\end{subequations}
where \en{ \bra{0}\cdot\ket{0} } is implicitly taken.
This implies that \en{ d\Phi } is of order \en{ \maths{O}(d\Phi)\sim \sqrt{dt} },
which 
corresponds to It\^o's rule for the Wiener process.
In quantum optics,
\en{ \Phi } is considered as a good approximation of vacuum noise
in optical lasers.

\begin{remark}
Note that 
time symmetry is broken in (\ref{itope}).
If there are antiparticles in the field \en{ \phi },
(\ref{itope-2}) is not zero.
Defining \en{ \Phi } as in (\ref{itope}),
we get rid of the negative energy component of the field
and extract the forward traveling component.
Conversely,
if we focus on the negative energy component,
(\ref{itope-2}) is rewritten as
\small\begin{align}
 d\Phi\dgg(t)d\Phi(t)=-dt.
\end{align}\normalsize
For antiparticles, 
the creation operator is in \en{ \phi }.
Redefining \en{ \phi\to \widetilde{\phi}\dgg }.
we have 
\small\begin{align}
 d\widetilde{\Phi}(t)d\widetilde{\Phi}\dgg(t)=-dt.
\end{align}\normalsize
Compared to (\ref{itope}),
the direction of time is reversed. 
\end{remark}

\section{Quantum stochastic differential equation}
\label{sec:qsde}

Let us consider a mode operator \en{ \mas } satisfying 
\small\begin{align}
 [\mas,\mas\dgg]
&=
 1.
\end{align}\normalsize
We regard this as a system
and let it interact with \en{ \Phi } through an interaction Hamiltonian 
that is the same form as the SU(2) system (\ref{su2gaugelag}):
\small\begin{align}
 \ham
&=
 \im 2(g \mas\Phi\dgg - g^* \mas\dgg \Phi).
\end{align}\normalsize
In this case, 
\en{ \Phi } is considered to be an input to the system \en{ \mas }.
The infinitesimal time evolution of the system 
is described by a unitary operator
\begin{subequations}
\label{infunit}
\small\begin{align}
 \uni(dt)
&=
 \exp\left[ -\im \ham dt \right]
\\ &\sim
 1
+
 \left[ 2g \mas d\Phi\dgg - 2g^* \mas\dgg d\Phi \right]
-
 2|g|^2 \mas\dgg \mas dt,
\end{align}\normalsize
\end{subequations}
where we have used (\ref{itope}).
Denoted by \en{ f } is a function of \en{ \mas } and \en{ \mas\dgg }.
Its differential is written as
\begin{subequations}
\small\begin{align}
 df
&=
 \uni\dgg(dt) \, f \, \uni(dt) -f
\\ &=
-
 \left[ 2g \mas d\Phi\dgg - 2g^* \mas\dgg d\Phi , f \right]
\\ & \hspace{4mm} 
+
 2|g|^2
\Bigl(
 2\mas\dgg f \mas - \mas\dgg \mas f - f \mas\dgg \mas
\Bigr)dt.
\end{align}\normalsize
\end{subequations}
For \en{ f=\mas }, this is written as
\small\begin{align}
 d\mas
&=
 -2|g|^2 \mas dt - 2g^* d\Phi.
\label{apsu2in}
\end{align}\normalsize

It is also possible to consider 
the infinitesimal time evolution of \en{ d\Phi }.
The resulting operator \en{ d\Phi' } is considered to be an output:
\begin{subequations}
\label{apsu2out}
\small\begin{align}
 d\Phi'
&=
 \uni\dgg(dt) \, d\Phi \, \uni(dt) 
\\ &=
 d\Phi + 2g\mas dt.
\end{align}\normalsize
\end{subequations}

Consequently,
the input-output relation of the system 
is given by (\ref{apsu2in},\ \ref{apsu2out}):
\small\begin{align}
\kakkon{\dot{M}=-2|g|^2 M  - 2g^* \phi,}
       {\phi'= 2gM + \phi.} 
\end{align}\normalsize
In the frequency domain, this is expressed as
\small\begin{align}
 \phi'
&=
 \dtf{-2|g|^2 M}{- 2g^*}
     {2g}{1}
 \phi
=
 \frac{s-2|g|^2}{s+2|g|^2}\,  \phi,
\label{qsdecav}
\end{align}\normalsize
which is the same as the transfer function of 
the SU(2) system (\ref{su2d}) and (\ref{41ms-2}).

\chapter{Magnetic monopole, Berry phase and gauge transformations}
\label{app:berry}
\thispagestyle{fancy}

Here 
we review the singularity of a vector potential
created by a magnetic monopole.
In Section \ref{subsec:uniLgauge},
we have seen that the quantum gates are activated only when
there is a gap in the gauge field.
The same type of singularity can be seen for a magnetic monopole
through the Berry phase.

\section{Magnetic monopole}
\label{sec:monopole}

Suppose that a magnetic monopole is placed at the origin \en{ \bm{r}=0 }
in \en{ \mathbb{R}^3 }.
By analogy with electric charge,
a magnetic field is assumed to be 
\small\begin{align}
 \bm{B}(\bm{r})
&=
 g\frac{\bm{r}}{r^3}.
\label{monob}
\end{align}\normalsize
A total magnetic flux is given as
\small\begin{align}
 \int \bm{B} \cdot d\bm{S}
=
 4\pi g.
\label{monoflux}
\end{align}\normalsize

\begin{wrapfigure}[0]{r}[53mm]{49mm}
\vspace{-0mm}
\centering
\includegraphics[keepaspectratio,width=45mm]{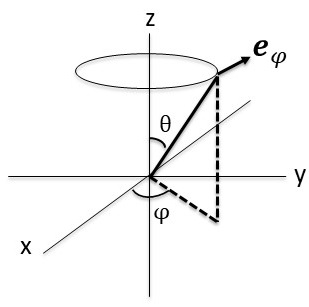}
\caption{
\small
Spherical coordinate system.
\normalsize
}
\label{fig-spherical}
\end{wrapfigure}

We are interested in a vector potential \en{ \bm{A} } 
corresponding to this magnetic field \en{ \bm{B} = \nabla\times \bm{\gel} }.
However, such a potential \en{ \bm{\gel} } is inevitably singular
because if it is well-defined in the entire space,
\en{ \nabla\cdot\bm{B}=\nabla\cdot(\nabla\times\bm{\gel})=0 },
which contradicts (\ref{monob}).
In fact, there are two different solutions for \en{ \bm{\gel} }.
In a spherical coordinate system (Figure \ref{fig-spherical}),
they are written as
\begin{subequations}
\label{mono-2}
\small\begin{align}
 \bm{A}^N(r,\theta,\varphi)
&=
 +\frac{g}{r}\frac{1-\cos \theta}{\sin \theta} \bm{e}\dd{\varphi},
\label{monon}\\
 \bm{A}^S(r,\theta,\varphi)
&=
 -\frac{g}{r}\frac{1+\cos \theta}{\sin \theta} \bm{e}\dd{\varphi}.
\label{monos}
\end{align}\normalsize
\end{subequations}
Note that \en{ \bm{A}^N \ (\bm{A}^S) } is not well-defined on 
the negative (positive) \en{ z }-axis.
To avoid these singularities, 
we regard \en{ \bm{A}^N } and \en{ \bm{A}^S } as solutions
on the north and south hemispheres, respectively.

Note that 
there is a gap in the vector potentials (\ref{mono-2}).
For example, on the equator 
\begin{subequations}
\label{monogap}
\small\begin{align}
 \bm{A}^N(\theta=\frac{\pi}{2}\Bigr)
&=
 +\frac{g}{r} \bm{e}\dd{\varphi},
\\
 \bm{A}^S\Bigl(\theta=\frac{\pi}{2}\Bigr)
&=
 -\frac{g}{r} \bm{e}\dd{\varphi},
\end{align}\normalsize
\end{subequations}
which means that when we cross the equator,
the vector potential `jumps'.
We have seen a similar situation when we 
described quantum gates through the gauge theory
in Section \ref{subsec:uniLgauge},
where 
the gauge field jumped when a quantum gate was turned on.
In fact, 
the two solutions are related to each other 
through a gauge transformation
\small\begin{align}
 \bm{A}^N
=
 \bm{A}^S + \nabla (2g\varphi),
\label{monogauge}
\end{align}\normalsize
which corresponds to a transformation for the gauge field of quantum gates
\small\begin{align}
 \gel'
=
 \gel + \partial\dd{+} \win.
\end{align}\normalsize

The gap \en{ 2g \nabla \varphi } is critical for the magnetic monopole.
To see this,
let us calculate the magnetic flux from the vector potential.
It follows from Stokes' theorem that 
\begin{subequations}
\label{totfluxmono}
\small\begin{align}
 \int_{N+S} d\bm{S} \cdot \bm{B}
&=
 \oint\dd{\textrm{Equator}} d\bm{r} \cdot (\bm{A}^N - \bm{A}^S)
\\ &=
 \oint\dd{\textrm{Equator}} 
 d\bm{r} \cdot ( 2g \nabla \varphi)
\\ &=
 4\pi g,
\end{align}\normalsize
\end{subequations}
which is the same as (\ref{monoflux}).
If there is not a gap in the vector potential,
this is identically zero.

No magnetic monopoles exist in the real world,
but the discussion above is still useful 
when we consider the Berry phase.
In the next sections,
We briefly review 
the Berry phase and discuss a relationship with gauge transformations.

\section{Berry phase}

Consider a set of parameters \en{ \bm{R}=(R_1,R_2,\cdots) }
and a Hamiltonian of the form \en{ \ham=\ham(\bm{R}) }.
Let \en{ E_n(\bm{R}) } be the non-degenerate \en{ n }th eigenvalue:
\small\begin{align}
 \ham(\bm{R})  \ket{n,\bm{R}}
&=
 E_n(\bm{R}) \ket{n, \bm{R}}.
\end{align}\normalsize

Assume that \en{ \bm{R} } is a function of time, \en{ \bm{R}=\bm{R}\dd{t} },
and the initial state is \en{ \ket{n,  \bm{R}\dd{t=0}} }.
If \en{ \bm{R}\dd{t} } changes adiabatically, 
the corresponding energy fluctuation is so small \en{ \Delta E \ll 1 }
that \en{ \Delta E < |E_n-E_{n\pm 1}| }.
In this case,
no particles jump between different energy levels
and 
\en{ \ket{n,\bm{R}\dd{t=0}} } remains the \en{ n }th eigenvector 
of \en{ \ham(\bm{R}\dd{t}) }.
Then the effect of time evolution appears only in the phase factor of the eigenstate:
\small\begin{align}
 \ket{n,\bm{R}\dd{t=0}}
&\to
 \ex\uu{\im\alpha(t)}\ket{n,\bm{R}\dd{t}},
\end{align}\normalsize
where \en{ \alpha(0)=0 }.
Substituting this into the Schr\"{o}dinger equation
\small\begin{align}
 \im \partial\dd{t}\ket{\psi(t)}
&=
 \ham(\bm{R}\dd{t})\ket{\psi(t)},
\label{besh}
\end{align}\normalsize
we have
\small\begin{align}
 \left(
 -\frac{d\alpha}{dt} + \im\partial\dd{t}
 \right) \ket{n,\bm{R}\dd{t}}
&=
 E_n(\bm{R}\dd{t}) \  \ket{n,\bm{R}\dd{t}}
\end{align}\normalsize
Premultiplying by \en{ \bra{n,\bm{R}\dd{t}} } yields
\small\begin{align}
 \frac{d\alpha}{dt} 
&=
- E_n(\bm{R}\dd{t})
+
 \bra{n,\bm{R}\dd{t}} \im \partial\dd{t} \ket{n,\bm{R}\dd{t}}
\end{align}\normalsize
Integrating both sides, we get
\small\begin{align}
 \alpha(t)
&=
 -\int\dd{0}\uu{t} d\sigma \, E_n(\bm{R}\dd{\sigma})
+
 \gamma(t),
\end{align}\normalsize
where 
\small\begin{align}
 \gamma(t)
\equiv
 \int\dd{0}\uu{t} d\sigma 
 \bra{n,\bm{R}\dd{\sigma}} \im \partial\dd{\sigma} \ket{n,\bm{R}\dd{\sigma}}.
\label{berryphase}
\end{align}\normalsize
This is called the \textit{Berry phase}.\index{Berry phase}
Introducing a path \en{ C } from \en{ \bm{R}\dd{t=0} } to \en{ \bm{R}\dd{t} } 
in the \en{ \bm{R} } space,
we can rewrite it as 
\small\begin{align}
 \gamma(t)
=
  \int\dd{C} d\bm{R} \cdot \bm{\gel}\dd{n}(\bm{R}),
\label{berryphase-1}
\end{align}\normalsize
where we have introduced a vector field,
called the \textit{Berry connection},\index{Berry connection} as
\small\begin{align}
 \bm{\gel}\dd{n}(\bm{R})
\equiv
 \bra{n,\bm{R}} \im \partial\dd{\bm{R}} \ket{n,\bm{R}}
\label{def:berrya}
\end{align}\normalsize
Note that the Berry phase (\ref{berryphase-1})
is similar to the magnetic flux (\ref{totfluxmono}).

\section{Berry phase and U(1) symmetry}

The Schr\"{o}dinger equation is always invariant 
under global phase transformations:
\small\begin{align}
 \ket{n,\bm{R}}
\to
 \ket{n,\bm{R}}'
=
 \ex\uu{\im\win\dd{n}}\ket{n,\bm{R}}.
\end{align}\normalsize
If the Hamiltonian \en{ \ham(\bm{R}) } does not involve any derivative 
with respect to \en{ \bm{R} },
it is also invariant under local phase transformations:
\small\begin{align}
 \ket{n,\bm{R}}
\to
 \ket{n,\bm{R}}'
=
 \ex\uu{\im\win\dd{n}(\bm{R})}\ket{n,\bm{R}}.
\label{berryg}
\end{align}\normalsize
In this case, 
substituting (\ref{berryg}) into (\ref{def:berrya}) yields
\small\begin{align}
 \bm{\gel}\dd{n}
\to
 \bm{\gel}\dd{n}'
=
 \bm{\gel}\dd{n}
-
 \partial\dd{\bm{R}}\win\dd{n},
\label{berrya}
\end{align}\normalsize
which can be regarded as a U(1) gauge transformation
in the \en{ \bm{R} } space.
If the \en{ \bm{R} } space is three dimensional,
we define the \textit{Berry curvature}\index{Berry curvature}
in the same way as a magnetic field
\small\begin{align}
\hspace{20mm}
 \bm{B}\dd{n}
\equiv
 \nabla\dd{\bm{R}}\times \bm{\gel}\dd{n},
\qquad
 (\nabla\dd{\bm{R}} \equiv \partial\dd{\bm{R}}.)
\end{align}\normalsize
Obviously, \en{ \bm{B}\dd{n} } is invariant under the local gauge transformation,
i.e.,
\en{ \bm{B}\dd{n}'=\bm{B}\dd{n} }.

Now let us see how the Berry phase transforms.
Substituting (\ref{berrya}) into (\ref{berryphase-1}) yields
\begin{subequations}
\small\begin{align}
 \gamma(t)
\to 
 \gamma'(t)
&=
 \int\dd{C} d\bm{R} \cdot \bm{\gel}\dd{n}'
\\&=
 \int\dd{C} d\bm{R} \cdot (\bm{\gel}\dd{n} - \nabla\dd{\bm{R}}\win\dd{n})
\\ &=
 \gamma(t) -\bigl[ \win(\bm{R}\dd{t}) - \win(\bm{R}\dd{t=0}) \bigr],
\end{align}\normalsize
\end{subequations}
which is not invariant because of the second term.
However, if we choose a closed path \en{ C } so that 
\en{ \bm{R}\dd{t=T}=\bm{R}\dd{t=0} },
it becomes invariant.
As mentioned in the preceding section,
the Berry phase corresponds to the magnetic flux,
hence this is equivalent to the invariance of the magnetic flux
under gauge transformations.
In fact, for a closed path \en{ C },
it follows from Stokes' theorem that
\begin{subequations}
\small\begin{align}
 \gamma(t)
\to 
 \gamma'(T)
=
 \oint\dd{C} d\bm{R} \cdot \bm{\gel}\dd{n}'
&=
 \int\dd{S} d\bm{S} \cdot \bm{B}\dd{n}'
\\ &=
 \int\dd{S} d\bm{S} \cdot \bm{B}\dd{n}
= \gamma(T).
\end{align}\normalsize
\end{subequations}

\section{A relation to the magnetic monopole}

Let us consider an example in a three-dimensional \en{ \bm{R}=(X,Y,Z) } space.
Suppose a Hamiltonian of the form
\small\begin{align}
 \ham(\bm{R})
=
 \bm{\sigma}\cdot \bm{R}
=
 \A{Z}{X-\im Y}{X+\im Y}{-Z},
\end{align}\normalsize
where \en{ \bm{\sigma} } are the Pauli matrices.
The eigenvalues of this Hamiltonian are given by
\small\begin{align}
 \hspace{20mm}
 E\dd{\pm} = \pm R,  \qquad \bigl(R = \sqrt{X^2+Y^2+Z^2}.\bigr)
\end{align}\normalsize
The corresponding eigenvectors are expressed as
\begin{subequations}
\label{set1}
\small\begin{align}
 \ket{\phi\dd{+}^N}
&=
 \frac{1}{\sqrt{2R(R+Z)}} \AV{Z+R}{X+\im Y},
\\
 \ket{\phi\dd{-}^N}
&=
 \frac{1}{\sqrt{2R(R+Z)}} \AV{-X+\im Y}{Z+R},
\end{align}\normalsize
\end{subequations}
which are singular at \en{ Z=-R }.
(We accept a singularity at the origin \en{ R=0 }.)
Another set of eigenvectors is given as
\begin{subequations}
\label{set2}
\small\begin{align}
 \ket{\phi\dd{+}^S}
&=
 \frac{1}{\sqrt{2R(R-Z)}} \AV{X-\im Y}{R-Z},
\\
 \ket{\phi\dd{-}^S}
&=
 \frac{1}{\sqrt{2R(R-Z)}} \AV{-R+Z}{X+\im Y},
\end{align}\normalsize
\end{subequations}
which are singular at \en{ Z=R }.
To avoid these singularities,
we regard \en{ \ket{\phi\dd{\pm}^N} } and \en{ \ket{\phi\dd{\pm}^S} } 
as eigenvectors on the north and south hemispheres, respectively.

This situation is the same as the magnetic monopole 
in Section \ref{sec:monopole}.
In fact, 
(\ref{set1}) and (\ref{set2}) are related to each other 
through a U(1) gauge transformation
\small\begin{align}
\hspace{20mm}
 \ket{\phi\dd{\pm}^S}
=
 \ex\uu{\mp \im \varphi}\ket{\phi\dd{\pm}^N}, 
\qquad
 \left(
 \ex\uu{\mp \im \varphi} = \frac{X \mp \im Y}{\sqrt{X^2+Y^2}}.
\right)
\end{align}\normalsize
It follows from the definition of the Berry connection (\ref{def:berrya})
that
\small\begin{align}
 \bm{\gel}\dd{\pm}^N
=
 \bm{\gel}\dd{\pm}^S \mp \nabla\dd{\bm{R}} \varphi,
\end{align}\normalsize
which is the same form as 
the gauge transformation for the magnetic monopole (\ref{monogauge}).
Accordingly, a total magnetic flux is given in the same way as 
(\ref{totfluxmono}):
\small\begin{align}
 \oint\dd{\textrm{Equator}} d\bm{R} \cdot (\bm{A}\dd{\pm}^N - \bm{A}\dd{\pm}^S)
&=
 \oint\dd{\textrm{Equator}} d\bm{R} \cdot (\mp \nabla\dd{\bm{R}} \varphi) 
=
 \mp 2\pi,
\end{align}\normalsize
which means that 
there exists negative (positive) magnetic charge at the origin
in the \en{ \bm{R} } space.


\printindex

\end{document}